\newcommand{\Prover}{\mathsf{P}\xspace}
\newcommand{\Verifier}{\mathsf{V}\xspace}
\newcommand{\View}{\mathsf{View}\xspace}
\newcommand{\negl}{\mathsf{negl}}
\let\origsubsection\subsection
\renewcommand{\subsection}[1]{\origsubsection*{#1}}
\let\origsubsubsection\subsubsection
\renewcommand{\subsubsection}[1]{\origsubsubsection*{#1}}
\let\origparagraph\paragraph
\renewcommand{\paragraph}[1]{\origparagraph*{#1}}
\newcommand{\Mat}{\mathrm{Mat}} 
\DeclareMathOperator{\codim}{codim}
\newcommand{\PP}{\mathbb{P}} 
\renewcommand{\part}{%
	\PackageError{Manuscript}{Do not use \string\part\space in amsart. Use \string\Part\space instead.}{}%
}
\newcommand{\Part}[1]{%
	\refstepcounter{part}%
	\clearpage
	\thispagestyle{plain}%
	\phantomsection%
	\addcontentsline{toc}{section}{Part \thepart.\ #1}%
	\begin{center}
		\vspace*{2.0em}
		{\Large\bfseries Part \thepart.\ #1\par}
		\vspace*{1.5em}
	\end{center}
	\markboth{Part \thepart.\ #1}{Part \thepart.\ #1}%
}
\newtheoremstyle{artifactstyle}
{0.6\baselineskip}{0.6\baselineskip}
{\itshape}{}{\bfseries}{.}{0.5em}{}
\theoremstyle{artifactstyle}
\newcommand{\R}{\mathbb{R}}
\newcommand{\C}{\mathbb{C}}
\newcommand{\Z}{\mathbb{Z}}
\newcommand{\CP}{\mathbb{CP}}
\DeclareMathOperator{\Tr}{Tr}
\DeclareMathOperator{\Var}{Var}
\DeclareMathOperator{\Cov}{Cov}
\DeclareMathOperator{\rank}{rank}
\DeclareMathOperator{\SU}{SU}
\DeclareMathOperator{\SO}{SO}
\newcommand{\Sim}{\mathsf{Sim}}
\newcommand{\bits}{\{0,1\}}
\theoremstyle{plain}
\newtheorem{thm}{Theorem}[section]
\newtheorem{prop}[thm]{Proposition}
\theoremstyle{definition}
\newtheorem{defn}[thm]{Definition}
\newtheorem{ex}[thm]{Example}
\newtheorem{exercise}[thm]{Exercise}
\theoremstyle{remark}
\newtheorem{rem}[thm]{Remark}
\newenvironment{solution}{\begin{proof}[Solution]}{\end{proof}}
	\def\ket#1{(#1)}%
	\def\bra#1{(#1)}%
	\def\braket#1#2{(#1|#2)}%
	\def\Tr{Tr}%
	\def\CP{CP}%
	\def\C{C}%
	\def\R{R}%
\title[Geometry- and Topology-Informed QC: From States to Real-Time Control]{
	Geometry- and Topology-Informed Quantum Computing: From States to Real-Time Control with FPGA Prototypes
}
\author{Gunhee Cho}
\address{Department of Mathematics, Texas State University}
\email{wvx17@txstate.edu}
\date{}
\begin{document}
	\maketitle
	
	\begin{abstract}
		This book presents a geometry-first and hardware-aware path through modern quantum information workflows, with a concrete goal: connect \emph{states, circuits, and measurement} to the \emph{deterministic classical pipelines} that make hybrid quantum systems actually run. Part~1 builds the mathematical backbone---the linear algebra you truly need, the Bloch-sphere viewpoint, differential-geometric intuition, and quantum Fisher information geometry (QFIM)---so that quantum evolution can be read as motion on curved spaces and measurement as statistics.
		
		Part~2 translates this viewpoint into an FPGA lens: circuits become dataflow graphs, and hybrid loops become streaming pipelines where measurement outcomes are parsed, aggregated, and reduced to small linear-algebra updates that schedule the next pulses. Part~3 develops multi-qubit structure and entanglement as geometry and computation, including teleportation, superdense coding, entanglement detection, and Shor's algorithm through quantum phase estimation.
		
		Part~4 focuses on topological error correction and real-time decoding (Track~A): stabilizer codes, surface-code decoding as \emph{topology $\rightarrow$ graph $\rightarrow$ algorithm}, and the Union--Find decoder down to microarchitectural and RTL-level design constraints. It also treats verification and testbenches (correctness before speed), fault injection, regression methodology, and host/control-stack integration (syndrome in, correction out) with product-level metrics such as bounded latency, p99 behavior, fail-closed policies, and observability.
		
		Optional Track~C studies quantum cryptography and streaming post-processing---BB84/E91, QBER and abort rules, privacy amplification, and zero-knowledge/post-quantum themes---emphasizing that protocol logic is naturally expressed as FSMs, counters, and hash pipelines. The appendices provide hands-on, visualization-driven FPGA examples on the Lattice iCEstick (switch-to-bit conditioning, fixed-point phase arithmetic, experiment sequencing by FSM, and minimal control ISAs), enabling readers to move from geometric principles to implementable systems.
	\end{abstract}
	
	\tableofcontents

\section{Tracks and What This Book Enables}
\label{sec:tracks}

\subsection*{Objective}
This opening section explains \emph{what this book is for} and \emph{how to use it}.
The organizing idea is that modern quantum computing is not ``just circuits'':
it is a \textbf{hybrid stack} where quantum evolution is embedded in a classical
real-time pipeline. The book is therefore written as an \textbf{infrastructure manual}
with three tracks:
\begin{itemize}
	\item \textbf{Track A (Core)}: real-time QEC decoding as FPGA infrastructure,
	\item \textbf{Track B}: geometry/optimization tools (QFIM/QNG) for variational circuits,
	\item \textbf{Track C (Optional)}: cryptography and streaming post-processing.
\end{itemize}
You can follow one track end-to-end, or mix them to match a research/dev workflow.

\medskip
\noindent\textbf{What you should be able to do after this section.}
You should be able to (i) choose a track based on your goal, (ii) understand what artifacts
to produce at each stage (study $\rightarrow$ research $\rightarrow$ development),
and (iii) copy a small set of templates to start producing reproducible results immediately.

\subsection{Why this is an infrastructure book (not a ``quantum computer'' pitch)}

\subsubsection*{1. The central claim}
This book treats quantum computing as a \textbf{systems problem}:
\begin{quote}
	\emph{The useful unit of work is not ``a circuit,'' but a pipeline that converts
		measurement streams into decisions under deadlines.}
\end{quote}
In near-term and even early fault-tolerant regimes, the quantum device is embedded in a
classical control loop. QEC, calibration, adaptive compilation, variational optimization,
and even cryptographic post-processing all become \textbf{streaming computations} with
tight latency and correctness constraints.

\subsubsection*{2. Why ``pitch decks'' fail technically}
A pitch often says: ``Quantum will be exponential.'' An infrastructure view asks:
\begin{itemize}
	\item What is the \textbf{deadline} per control/QEC cycle?
	\item What is the \textbf{tail latency} requirement (p99/p999), not just the mean?
	\item What data is \textbf{streamed}, what is \textbf{buffered}, and what is \textbf{dropped} under overload?
	\item What are the \textbf{interfaces} (wire formats, versioning, host APIs)?
	\item What is the \textbf{verification contract} between a golden model and RTL?
\end{itemize}
This book is organized to answer these questions and to produce artifacts that survive review.

\subsubsection*{3. ``Infrastructure'' in one sentence}
\begin{quote}
	Infrastructure means: \textbf{specifications + benchmarks + test plans + interfaces}
	that make a component reusable across experiments, teams, and hardware platforms.
\end{quote}

\subsubsection*{4. How the math serves infrastructure}
The math is included only when it helps you build a reliable component:
\begin{itemize}
	\item Bloch sphere and circuits $\Rightarrow$ stable reasoning about control and measurement,
	\item QFIM/QNG $\Rightarrow$ geometry-aware updates and sensitivity diagnostics,
	\item topology of codes $\Rightarrow$ clean problem formulations for decoding,
	\item linear algebra $\Rightarrow$ the lingua franca for simulation, verification, and microkernels.
\end{itemize}

\subsection{Track A (Core): Real-Time QEC Decoding as FPGA Infrastructure}

\subsubsection*{1. What Track A builds}
Track A builds a decoder as a deployable infrastructure component:
\begin{quote}
	\emph{Syndrome stream in $\rightarrow$ bounded-time decode $\rightarrow$ correction decisions out.}
\end{quote}
The emphasis is on determinism, tail-latency control, and verification---exactly the reasons FPGAs appear.

\subsubsection*{2. Core objects (what the decoder really ``sees'')}
\begin{itemize}
	\item A stream of measurement bits (or detection events) indexed by space and time;
	\item a local update rule or a small number of bounded passes (to guarantee deadlines);
	\item state stored in structured memory (arrays, banks, FIFOs), not in ``dynamic'' software objects.
\end{itemize}

\subsubsection*{3. The decoding contract (what must be true)}
A decoder component must satisfy:
\begin{itemize}
	\item \textbf{Correctness}: match a golden model under a shared notion of success/failure;
	\item \textbf{Deadline}: worst-case time per window/cycle is bounded by design;
	\item \textbf{Backlog stability}: buffers do not grow without bound under expected load;
	\item \textbf{Observability}: it emits traces/metrics for postmortem and benchmarking;
	\item \textbf{Interfaces}: message formats and versioning are explicit.
\end{itemize}

\subsubsection*{4. Why FPGA is natural here}
\begin{itemize}
	\item Fine-grained parallelism and deterministic latency;
	\item predictable memory access under streaming workloads;
	\item hardware-enforced schedules (FSMs) for bounded passes;
	\item easy integration into control racks as a low-latency endpoint.
\end{itemize}

\subsubsection*{5. Minimum viable Track A outcome}
By the end of Track A, you should have:
\begin{itemize}
	\item a benchmarkable decoder pipeline (even if small-distance / toy-scale),
	\item a golden Python reference and a conformance test plan,
	\item a message schema for syndrome input and correction output,
	\item a latency/throughput budget worksheet with p99/p999 targets.
\end{itemize}

\subsection{Track B: Geometry/Optimization (QFIM, QNG, and Variational Circuits)}

\subsubsection*{1. What Track B builds}
Track B builds geometry-aware optimization tools that plug into hybrid workflows:
\begin{quote}
	\emph{parameters $\theta$ $\rightarrow$ circuit $U(\theta)$ $\rightarrow$ measurement statistics
		$\rightarrow$ QFIM/QNG update $\rightarrow$ new $\theta$.}
\end{quote}
The goal is to compute and use sensitivity information to improve stability and convergence.

\subsubsection*{2. Why geometry matters}
Plain gradient descent treats parameters as Euclidean coordinates.
But circuits live on curved manifolds (state space / projective space), and noise deforms that geometry.
QFIM/QNG provides:
\begin{itemize}
	\item a principled preconditioner (natural gradient),
	\item a diagnostic for barren plateaus and identifiability,
	\item a bridge between information geometry and hardware-aware calibration.
\end{itemize}

\subsubsection*{3. Minimum viable Track B outcome}
By the end of Track B, you should have:
\begin{itemize}
	\item reproducible scripts to compute QFIM for small ans\"atze (pure and basic mixed cases),
	\item worked examples where QNG outperforms Euclidean updates,
	\item a measurement plan explaining how QFIM entries are estimated from shots,
	\item a compact report template (metrics, plots, and failure modes).
\end{itemize}

\subsection{Track C (Optional): Cryptography and Streaming Post-Processing}

\subsubsection*{1. What Track C builds}
Track C treats quantum cryptography and post-processing as \textbf{streaming verification} tasks:
\begin{quote}
	\emph{raw key / outcomes $\rightarrow$ sifting $\rightarrow$ QBER estimation
		$\rightarrow$ error correction $\rightarrow$ privacy amplification.}
\end{quote}
This is a classical pipeline with strict correctness guarantees and clear security thresholds.

\subsubsection*{2. Why it belongs in the same book}
Because it uses the same infrastructure concepts:
\begin{itemize}
	\item streaming data, fixed message formats, versioned schemas,
	\item throughput and tail latency constraints,
	\item end-to-end test vectors and reproducible benchmarks.
\end{itemize}

\subsubsection*{3. Minimum viable Track C outcome}
By the end of Track C, you should have:
\begin{itemize}
	\item an end-to-end BB84 (or E91) simulation with logged metrics,
	\item a streaming post-processing design (sifting + QBER + hashing) suitable for hardware,
	\item a test suite with known-answer tests (KATs) for each stage.
\end{itemize}

\subsection{How to use tracks in a research/dev workflow}

\subsubsection*{1. Three usage patterns}
\begin{enumerate}
	\item \textbf{Track-first (deep)}: finish one track end-to-end (best for a semester project).
	\item \textbf{Artifact-first (fast)}: start from templates, generate artifacts early, then learn theory as needed.
	\item \textbf{Workflow-first (team)}: split tracks across collaborators and align on shared interfaces and benchmarks.
\end{enumerate}

\subsubsection*{2. Suggested cadence (practical)}
A reasonable cadence for a small team:
\begin{itemize}
	\item Weeks 1--2: choose a target pipeline and fill the latency budget worksheet.
	\item Weeks 3--5: produce a golden model + benchmark report skeleton (even for toy scale).
	\item Weeks 6--8: implement a bounded-pass hardware-friendly version (prototype level).
	\item Weeks 9--10: verification + regressions + documentation polish (artifact readiness).
\end{itemize}

\subsubsection*{3. The one rule that keeps projects from failing}
\begin{quote}
	\textbf{Decide interfaces and metrics before you optimize.}
\end{quote}
If you postpone message formats, versioning, and benchmark definitions, you will rewrite everything.

\subsection{Deliverables and artifacts (study $\rightarrow$ research $\rightarrow$ development)}
\label{subsec:deliverables}

\subsubsection*{1. The artifact ladder}
This book is designed so that every chapter can produce at least one artifact that is useful later.
The ladder is:
\[
\text{study artifacts} \;\rightarrow\; \text{research artifacts} \;\rightarrow\; \text{development artifacts}.
\]
The point is continuity: you should not throw away your early work.

\subsubsection{Study artifacts: notes, problem sets, and reproducible calculations}
\begin{itemize}
	\item Clean notes that capture definitions, identities, and standard computations.
	\item Problem sets with fully worked solutions and sanity-check tests.
	\item Reproducible calculations (scripts/notebooks) that regenerate figures and tables.
	\item Small ``unit tests'' for identities used later (unitarity checks, trace identities, etc.).
\end{itemize}

\subsubsection{Research artifacts: experiment templates, metrics, and benchmark reports}
\begin{itemize}
	\item Experiment templates: dataset generation, parameter sweeps, and logging format.
	\item Metrics definitions: latency, throughput, correctness, stability, and tail behavior.
	\item Benchmark report: a fixed structure so results are comparable over time.
	\item Failure-mode catalog: what breaks, how you detect it, and what you log.
\end{itemize}

\subsubsection{Development artifacts: RTL-facing specs, test plans, and host APIs}
\begin{itemize}
	\item RTL-facing specification: state machines, bounded passes, memory maps, and timing assumptions.
	\item Test plan: conformance tests against golden model, randomized tests, and fault injection.
	\item Host API: command protocol, message schema, versioning rules, and error handling.
	\item Regression harness: automated runs that catch performance/correctness drift.
\end{itemize}

\subsection{Quickstart: what to copy and reuse}
\label{subsec:quickstart}

\subsubsection*{0. Quickstart principle}
Copy templates first, then fill them with the smallest nontrivial example.
Do not wait for ``full understanding'' before producing artifacts.

\subsubsection{Latency budget worksheet}
\begin{itemize}
	\item Define the cycle deadline (per QEC round / control loop iteration).
	\item Separate \textbf{mean} and \textbf{tail} targets: p50, p99, p999.
	\item List stages: readout $\rightarrow$ preprocess $\rightarrow$ decode $\rightarrow$ emit correction.
	\item Assign budgets per stage and identify the dominant contributor (usually I/O or memory).
	\item Declare buffering/backpressure policy under overload.
\end{itemize}

\subsubsection{Benchmark report skeleton}
\begin{itemize}
	\item System-under-test: version, parameters, code distance, window size, and platform.
	\item Workload: syndrome model, noise model (if simulated), and trace format.
	\item Metrics: correctness, latency distribution, throughput, memory usage, and failure modes.
	\item Plots: latency CDF/CCDF, throughput vs. load, correctness vs. parameters.
	\item Reproducibility: seed handling, config files, and exact command lines.
\end{itemize}

\subsubsection{Message schema and versioning checklist}
\begin{itemize}
	\item Schema includes: timestamps, indices, payload, and optional metadata fields.
	\item Define endianness, field sizes, alignment, and framing rules.
	\item Add a version field and strict backward/forward compatibility rules.
	\item Define error responses (invalid packet, unsupported version, overflow).
	\item Provide known-answer test vectors for encoding/decoding.
\end{itemize}

	\Part{Foundations: States, Circuits, Geometry, and Measurement}
	
\section{How to Read These Notes (for Engineers and Mathematicians)}
\label{sec:how-to-read}

\subsection*{Objective}

These notes are designed to be used in two modes:
\begin{enumerate}
	\item \textbf{Engineer mode (build-first).}
	You want a concrete pipeline: \emph{state preparation $\to$ circuit $\to$ measurement stream $\to$ classical processing},
	with explicit latency/throughput constraints and a clear path to FPGA-friendly components.
	\item \textbf{Mathematician mode (structure-first).}
	You want the invariants: \emph{projective state spaces, metrics, curvature, tangent vectors, and topological protection},
	with proofs and reusable lemmas that explain why algorithms behave the way they do.
\end{enumerate}

The common language between these modes is \textbf{geometry}:
\[
\text{states live on curved spaces} \quad+\quad
\text{circuits are paths}
\quad\Longrightarrow\quad
\text{noise deforms geometry}
\]
\[
\quad\Longrightarrow\quad
\text{QEC enforces topological constraints}.
\]
By the end of the first part, you should be able to read a circuit not merely as a diagram,
but as a \emph{geometric procedure} whose sensitivity and robustness can be predicted.

\medskip
\noindent\textbf{How to use this section.}
Read it once, then return to it whenever you feel lost.
It is the ``map'' that tells you \emph{which object lives where} and \emph{what you are allowed to compute} at each layer.

\subsection*{Key Concepts: What You Should Expect to Learn}

\begin{itemize}
	\item \textbf{State space is not a vector space.}
	Pure states are rays (projective points), mixed states are density matrices.
	The correct geometry is \emph{projective} and \emph{metric}.
	\item \textbf{Gates are motions.}
	Unitary gates move states along isometries; parameterized gates trace paths.
	\item \textbf{Measurements create classical streams.}
	The output of a quantum device is not ``a state'' but a sequence of bits; the classical side must estimate, filter, and decide.
	\item \textbf{Distances and sensitivity matter.}
	The Quantum Fisher Information Metric (QFIM) and related notions quantify distinguishability and provide geometry-aware updates (QNG).
	\item \textbf{Errors are structured.}
	Noise is not an abstract annoyance; it has locality and symmetries.
	Topological QEC turns local errors into global constraints.
	\item \textbf{Real-time is a physics constraint.}
	Decoding and feedback must meet deadlines; this forces deterministic pipelines (FPGA/ASIC).
\end{itemize}

\medskip
\noindent\textbf{What you will repeatedly do:}
\begin{enumerate}
	\item write a state model (vector or density matrix),
	\item apply a circuit model (matrix/unitary or parameterized family),
	\item compute a statistic (Born probabilities / expectations),
	\item interpret the result geometrically (point, path, distance, curvature),
	\item connect it to a hardware pipeline (shots $\to$ streaming estimates $\to$ decisions).
\end{enumerate}

\subsection*{How engineers should think about the core objects}

This subsection is a translation table. If you read it like a specification, it will save you months.

\subsubsection*{(A) ``State''}
\begin{itemize}
	\item \textbf{Math view:} $\ket{\psi}\in\C^d$ with $\|\psi\|=1$ up to global phase; or $\rho\succeq 0$, $\Tr\rho=1$.
	\item \textbf{Engineering view:} a state is \emph{not directly observable}.
	What you can access are repeated shots producing classical bits that allow you to estimate expectation values.
	\item \textbf{Operational interface:} choose a measurement setting (basis / POVM), run $N$ shots, get a bitstring stream.
\end{itemize}

\subsubsection*{(B) ``Circuit''}
\begin{itemize}
	\item \textbf{Math view:} a unitary $U\in U(d)$; parameterized circuit $U(\theta)$ is a smooth map from parameters to unitaries.
	\item \textbf{Engineering view:} a circuit is a \emph{control program}:
	pulses, timing, calibrations, and compilation constraints.
	\item \textbf{Operational interface:} upload a pulse schedule or gate list; the device executes it, then measures.
\end{itemize}

\subsubsection*{(C) ``Measurement''}
\begin{itemize}
	\item \textbf{Math view:} POVM $\{M_x\}$ with $p(x)=\Tr(\rho M_x)$.
	\item \textbf{Engineering view:} measurement is an analog process producing a voltage trace,
	which is digitized into bits with thresholds and classifiers (introducing readout error).
	\item \textbf{Operational interface:} you get bit outcomes (and maybe confidence scores), not wavefunctions.
\end{itemize}

\subsubsection*{(D) ``Noise''}
\begin{itemize}
	\item \textbf{Math view:} CPTP map $\mathcal{E}$; Markovian channels; Kraus operators; Lindbladians.
	\item \textbf{Engineering view:} drift, dephasing, leakage, cross-talk, calibration error, and readout error.
	\item \textbf{Pipeline implication:} noise changes the geometry (distances/sensitivity) and forces error correction.
\end{itemize}

\subsubsection*{(E) ``Real-time classical processing''}
\begin{itemize}
	\item \textbf{Math view:} estimation, filtering, optimization, decoding; often online / iterative.
	\item \textbf{Engineering view:} deterministic bounded-latency computation, often near the experiment.
	\item \textbf{Pipeline implication:} FPGA becomes a first-class component, not an optional accelerator.
\end{itemize}

\subsection*{Geometry as a bridge: why these notes are ``geometry-first''}

The same phenomenon appears in many disguises:
\begin{quote}
	\emph{If two quantum states are hard to distinguish, then learning/optimization is slow and error-prone.}
\end{quote}
Geometry packages this into distances and metrics.

\subsubsection*{A minimal geometric dictionary}
\begin{itemize}
	\item \textbf{Manifold:} the set of states (pure or mixed) forms a curved space.
	\item \textbf{Tangent vector:} an infinitesimal change in state induced by a small parameter change $\delta\theta$.
	\item \textbf{Metric:} a rule that turns tangent vectors into lengths; the QFIM is the metric that matters for distinguishability.
	\item \textbf{Geodesic:} the ``straightest'' path; many algorithms are best understood as controlled motion along geodesics.
	\item \textbf{Curvature:} how geometry deviates from flatness; curvature influences optimization landscapes and sensitivity.
\end{itemize}

\subsubsection*{Why geometry helps engineering decisions}
Geometry gives you:
\begin{itemize}
	\item \textbf{Sensitivity analysis:} which parameters matter most (large QFIM directions) vs.\ which are sloppy (small QFIM directions).
	\item \textbf{Stable update rules:} QNG rescales updates by the local geometry, reducing step-size tuning pain.
	\item \textbf{A unifying language:} Bloch sphere $\to$ $\CP^{d-1}$ $\to$ QFIM $\to$ decoding graphs/topology.
\end{itemize}

\subsection*{Geometry of states: Bloch sphere and beyond}

\subsubsection*{Single qubit: the Bloch sphere/ball}
For a qubit, pure states correspond to points on a sphere and mixed states fill the interior.
This is the first place where ``projective, not linear'' becomes visually undeniable.

\begin{figure}[t]
	\centering
	\begin{tikzpicture}[scale=2.3, line cap=round, line join=round]
		\draw (0,0) circle (1);
		\draw[dashed] (-1,0) arc (180:360:1 and 0.35);
		\draw (-1,0) arc (180:0:1 and 0.35);
		
		\draw[->] (0,0) -- (1.25,0) node[right] {$x$};
		\draw[->] (0,0) -- (0,1.25) node[above] {$z$};
		\draw[->] (0,0) -- (-0.75,-0.55) node[left] {$y$};
		
		\fill (0,1) circle (0.03) node[above] {\small $\ket{0}$};
		\fill (0,-1) circle (0.03) node[below] {\small $\ket{1}$};
		\fill (1,0) circle (0.03) node[right] {\small $\ket{+}$};
		\fill (-1,0) circle (0.03) node[left] {\small $\ket{-}$};
		
		\coordinate (M) at (0.45,0.2);
		\fill (M) circle (0.03) node[right] {\small mixed};
		\draw[thick] (0,0) -- (M);
		\node at (0.22,0.12) {\small $\vec r$};
		
		\node at (0,-1.25) {\small pure: $\|\vec r\|=1$ (surface), mixed: $\|\vec r\|<1$ (interior)};
	\end{tikzpicture}
	\caption{State geometry for one qubit. Pure states live on the Bloch sphere (surface),
		while mixed states fill the Bloch ball (interior).}
	\label{fig:howto-bloch}
\end{figure}
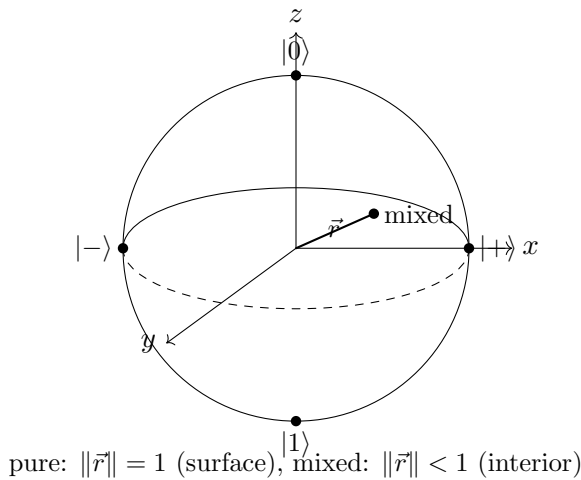

\subsubsection*{Many qubits: projective space and density matrices}
For $n$ qubits, the Hilbert space is $\C^{2^n}$, but the pure-state space is
\[
\CP^{2^n-1},
\]
which is curved and high-dimensional.
When you trace out subsystems (to model partial access), you land in mixed states:
density matrices on $2^k\times 2^k$ for subsystems of size $k$.

\subsubsection*{What you should keep in mind}
\begin{itemize}
	\item The \emph{dimension} grows exponentially, but many circuits explore a structured low-dimensional submanifold.
	\item Geometry tells you which directions are accessible and distinguishable under your measurement model.
\end{itemize}

\subsection*{Geometry of circuits: paths, distances, and robustness}

A parameterized circuit $\theta\mapsto U(\theta)$ induces a path in state space:
\[
\theta \mapsto \rho(\theta)=U(\theta)\rho_0 U(\theta)^\dagger.
\]
Small changes $\delta\theta$ correspond to tangent vectors $\partial_i \rho$.
A metric (QFIM) turns these into meaningful lengths and angles, i.e.\ a local notion of distance in \emph{state space}.

\subsubsection*{Why ``path length'' is not a metaphor}
If the path is long, you can distinguish many intermediate states with finite shots.
If the path is almost flat in some direction (small QFIM eigenvalue),
then changing that parameter barely changes observable statistics, making learning slow and fragile.

\begin{figure}[t]
	\centering
	\begin{tikzpicture}[scale=1.15, line cap=round, line join=round]
		\draw[thick] (-4,0) circle (1.2);
		\node at (-4,1.5) {\small parameter space};
		\node at (-4,-1.5) {\small $\theta$-updates};
		
		\coordinate (t0) at (-4,0.2);
		\coordinate (t1) at (-3.2,0.65);
		\fill (t0) circle (0.04) node[left] {\small $\theta$};
		\fill (t1) circle (0.04) node[right] {\small $\theta+\delta\theta$};
		\draw[->, thick] (t0) -- (t1);
		
		\draw[->, thick] (-2.6,0.4) -- (-1.2,0.4);
		\node at (-1.9,0.65) {\small $U(\theta)$};
		
		\draw[thick] (1,0) .. controls (2.4,1.1) and (3.6,0.9) .. (4.1,0)
		.. controls (3.6,-1.0) and (2.4,-1.2) .. (1,0);
		\node at (2.6,1.55) {\small state space};
		
		\coordinate (s0) at (2.0,0.2);
		\coordinate (s1) at (3.2,0.55);
		\fill (s0) circle (0.04) node[left] {\small $\rho(\theta)$};
		\fill (s1) circle (0.04) node[right] {\small $\rho(\theta+\delta\theta)$};
		\draw[->, thick] (s0) -- (s1);
		\node at (2.6,-1.55) {\small distances measured by QFIM / FS metric};
	\end{tikzpicture}
	\caption{A circuit $U(\theta)$ maps parameter updates into motion in state space.
		Geometry defines which motions are large (observable) or small (hard to learn).}
	\label{fig:howto-param-to-state}
\end{figure}
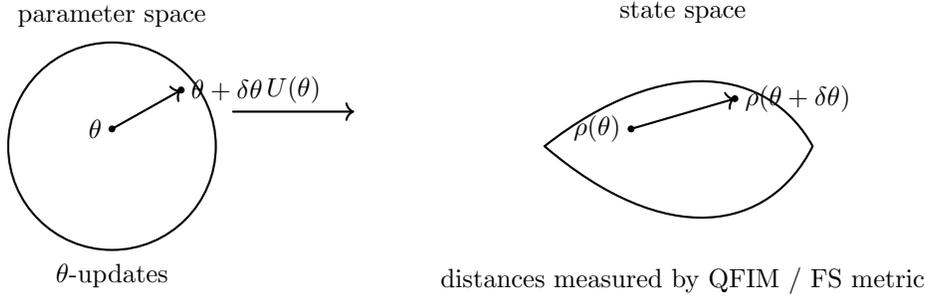

\subsubsection*{Robustness as ``small geometric change under perturbation''}
A circuit is robust if small hardware perturbations (noise, calibration drift, readout bias)
do not significantly change the induced state or the final decision statistic.
Geometry formalizes this: robustness is about \emph{Lipschitz behavior} of the map
\[
\theta \mapsto \text{observable statistics}.
\]

\subsection*{Geometry of errors: topology explains protection}

Topological quantum error correction (QEC) is the idea that local noise should not easily alter global information.
The geometric/topological story is:

\begin{itemize}
	\item Physical noise creates local error events.
	\item Syndrome measurements reveal \emph{boundaries} of error chains (not the chains themselves).
	\item Decoding chooses a consistent error chain with the same boundary.
	\item Logical failure corresponds to choosing a chain in the wrong homology class (a global/topological mistake).
\end{itemize}

\begin{figure}[t]
	\centering
	\begin{tikzpicture}[scale=0.9, line cap=round, line join=round]
		\foreach \x in {0,1,2,3,4,5,6} {
			\foreach \y in {0,1,2,3,4} {
				\fill (\x,\y) circle (0.04);
			}
		}
		\foreach \x in {0,1,2,3,4,5} {
			\foreach \y in {0,1,2,3,4} {
				\draw (\x,\y) -- (\x+1,\y);
			}
		}
		\foreach \x in {0,1,2,3,4,5,6} {
			\foreach \y in {0,1,2,3} {
				\draw (\x,\y) -- (\x,\y+1);
			}
		}
		
		\draw[very thick] (1,1) -- (2,1) -- (3,1) -- (3,2) -- (3,3) -- (4,3);
		\fill (1,1) circle (0.10);
		\fill (4,3) circle (0.10);
		\node[below left] at (1,1) {\small syndrome};
		\node[above right] at (4,3) {\small syndrome};
		
		\node at (3.2,-0.8) {\small decoder sees endpoints, must infer a plausible chain};
	\end{tikzpicture}
	\caption{Topology-flavored picture: an error chain produces syndrome at its boundary endpoints.
		The decoder reconstructs a chain consistent with the observed boundary.}
	\label{fig:howto-topology-errors}
\end{figure}
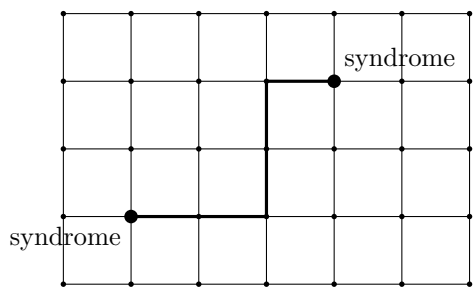

\subsubsection*{Why this matters for infrastructure}
The syndrome stream arrives continuously.
Decoding is not an offline graph problem: it is an online decision problem with deadlines.
This is where FPGA shows up naturally: bounded latency is part of the correctness criterion.

\subsection*{Where FPGA fits into quantum computing}

There are two recurring reasons:
\begin{enumerate}
	\item \textbf{Deadline constraints (feedback and QEC).}
	If you miss the cycle budget, your correction arrives too late and the physical qubits decohere.
	\item \textbf{Determinism and jitter control.}
	Software stacks can have tail latency spikes (p99/p999) that break stability.
	FPGAs offer bounded, clocked pipelines.
\end{enumerate}

\subsubsection*{A minimal ``hybrid loop'' template}
\[
\boxed{\text{QPU: prepare $\to$ evolve $\to$ measure}}
\;\Rightarrow\;
\boxed{\text{FPGA: parse stream $\to$ decode/estimate $\to$ decide}}
\]
\[
\boxed{\text{QPU: apply correction / next pulse}}
\;\Rightarrow\;
\boxed{\text{(back to next round)}}
\]
These notes will treat FPGA not as an afterthought but as an \emph{API-stable infrastructure component}.

\begin{figure}[t]
	\centering
	\begin{tikzpicture}[scale=1.05, line cap=round, line join=round]
		\draw[thick, rounded corners=3pt] (0,0) rectangle (4,1.4);
		\node at (2,1.05) {\textbf{QPU}};
		\node at (2,0.45) {\small prepare $\to$ evolve $\to$ measure};
		
		\draw[thick, rounded corners=3pt] (6,0) rectangle (10,1.4);
		\node at (8,1.05) {\textbf{FPGA}};
		\node at (8,0.75) {\small streaming parse / estimate};
		\node at (8,0.35) {\small decode / decide};
		
		\draw[->, thick] (4,0.7) -- (6,0.7) node[midway, above] {\small bits / timestamps};
		\draw[->, thick] (10,0.7) .. controls (11.2,0.7) and (11.2,2.0) .. (2,2.0)
		.. controls (-0.5,2.0) and (-0.5,0.7) .. (0,0.7)
		node[midway, above] {\small correction / next controls};
		
		\node at (5,2.35) {\small closed-loop control / QEC cycle};
	\end{tikzpicture}
	\caption{Hybrid pipeline viewpoint: the QPU produces a measurement stream; FPGA computes bounded-latency decisions and drives the next control step.}
	\label{fig:howto-hybrid-loop}
\end{figure}
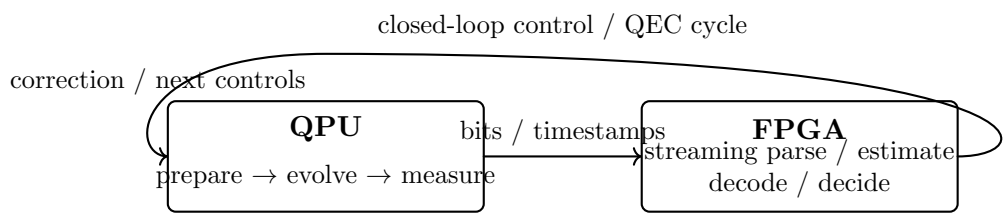

\subsection*{Exercises (warm-up and orientation)}

\begin{exercise}[State vs.\ statistics]
	In one sentence each, explain the difference between:
	(i) a quantum state $\rho$,
	(ii) a measurement setting (basis/POVM),
	(iii) the classical data you receive after $N$ shots.
\end{exercise}
\noindent\textbf{Solution sketch.}
$\rho$ is the (unobserved) mathematical object that generates probabilities;
a measurement setting defines which probabilities you are sampling;
the data are i.i.d.\ (approximately) samples from that distribution, possibly corrupted by readout noise.

\begin{exercise}[Projective sanity check]
	Show by counting real degrees of freedom that pure qubit states have 2 real parameters.
\end{exercise}
\noindent\textbf{Solution sketch.}
$\C^2$ has 4 real parameters.
Normalization removes 1.
Global phase removes 1.
Total $4-1-1=2$.

\begin{exercise}[Circuit as a path]
	Let $\rho(\theta)=U(\theta)\rho_0 U(\theta)^\dagger$ for a smooth one-parameter circuit.
	Write down the first-order variation $\frac{d}{d\theta}\rho(\theta)$ in terms of $U'(\theta)$ and $U(\theta)$.
\end{exercise}
\noindent\textbf{Solution sketch.}
Differentiate:
\[
\frac{d}{d\theta}\rho(\theta)=U'(\theta)\rho_0 U(\theta)^\dagger
+U(\theta)\rho_0 (U'(\theta))^\dagger.
\]
(If you introduce the generator $G(\theta)=i\,U(\theta)^\dagger U'(\theta)$, this becomes a commutator form later.)

\begin{exercise}[Topology vocabulary]
	In the decoder picture, you see syndrome bits corresponding to ``endpoints.''
	Explain in one sentence why decoding is not simply ``finding the unique error.''
\end{exercise}
\noindent\textbf{Solution sketch.}
Many different error chains have the same boundary (same syndrome); decoding chooses one consistent explanation,
and the dangerous ambiguity is the homology class (global/topological choice).

\begin{exercise}[Where p99/p999 matters]
	Give one concrete reason why an algorithm with good \emph{average} runtime might still fail in a real-time QEC loop.
\end{exercise}
\noindent\textbf{Solution sketch.}
Tail latency spikes can exceed the cycle deadline, making corrections arrive too late;
a rare spike can still cause logical failure, so p99/p999 (not mean) is a correctness-relevant metric.

\begin{exercise}[Reading plan]
	Pick one track (A/B/C). Write a 5-line plan:
	(1) which chapters you will read first,
	(2) what artifact you will produce (note, code, benchmark, RTL sketch),
	(3) how you will test correctness.
\end{exercise}
\noindent\textbf{Solution sketch.}
This is personalized; the minimal good answer must mention a deliverable and a test method.


	
\section{Crash Course: Linear Algebra You Actually Need}
\label{sec:la-crash}

\subsection*{Objective}
This chapter is a \emph{working} linear-algebra kit for quantum computing and
hardware-aware implementations.
We focus on the few objects you will use repeatedly---inner products, adjoints,
unitary/Hermitian matrices, eigenvalues/spectra, tensor products, and density matrices---and we practice
them with \emph{explicit calculations} and \emph{algorithm-facing interpretations}.

\medskip
\noindent\textbf{Implementation-aware note.}
Every time you compile a circuit, simulate a state, compute a QFIM entry,
or stream measurement statistics, you are executing linear algebra under constraints:
finite precision, memory layout, bandwidth, and latency. The goal here is not to be abstract,
but to train the ability to:
\begin{quote}
	\emph{turn quantum procedures into matrix identities and then into stable, testable computation.}
\end{quote}

\medskip
\noindent By the end of this section, you should be able to:
\begin{itemize}
	\item move fluently between $\ket{\psi}$, $\bra{\psi}$, $\braket{\phi}{\psi}$, and matrix forms;
	\item compute adjoints, check unitarity/Hermiticity, and interpret them operationally;
	\item diagonalize small matrices, compute spectra, and use spectral decompositions as algorithms;
	\item compute tensor products for multi-qubit bookkeeping and interpret them as wiring rules;
	\item form density matrices, take partial traces (basic examples), and read out mixedness;
	\item solve exercises in a ``how-to-do-it'' style suitable for code and verification.
\end{itemize}

\subsection*{Key Concepts: the ``physics dictionary'' (used every day)}

\subsubsection*{0. Minimal dictionary (keep this on your desk)}
\begin{center}
	\renewcommand{\arraystretch}{1.2}
	\begin{tabular}{ll}
		\textbf{Object} & \textbf{Meaning / how it is used} \\
		\hline
		$\ket{\psi}\in\C^n$ & state vector (column) \\
		$\bra{\psi}=\ket{\psi}^\dagger$ & conjugate transpose (row) \\
		$\braket{\phi}{\psi}=\phi^\dagger\psi$ & inner product; amplitude \\
		$\|\psi\|^2=\braket{\psi}{\psi}$ & norm / probability normalization \\
		$A^\dagger$ & adjoint; used for unitaries and observables \\
		$U^\dagger U=I$ & unitarity; norm-preserving evolution \\
		$H^\dagger=H$ & Hermitian; measurable observable \\
		$\rho\succeq 0,\ \Tr\rho=1$ & density matrix; mixed states \\
		$\Tr(\rho O)$ & expectation value of observable $O$ \\
		$A\otimes B$ & tensor product; multi-qubit wiring \\
	\end{tabular}
\end{center}

\subsection*{Implementation viewpoint (why this chapter also matters for FPGA)}

\subsubsection*{1. Two compute realities: bandwidth and stability}
\begin{itemize}
	\item \textbf{Bandwidth dominates arithmetic} in many hybrid pipelines:
	you are moving vectors/matrices (state, gradients, QFIM blocks) through memory and interfaces.
	\item \textbf{Stability matters}: probabilities are quadratic in amplitudes.
	Small numerical errors in amplitudes can become noticeable errors after squaring and averaging.
\end{itemize}

\subsubsection*{2. Verification viewpoint: golden identities}
In decoder/control stacks, correctness is usually ensured by a small set of linear-algebra invariants:
\begin{itemize}
	\item $\|\psi\|=1$ is preserved by unitaries;
	\item $\rho\succeq 0$ and $\Tr\rho=1$ are preserved by $U\rho U^\dagger$;
	\item projectors satisfy $\Pi^2=\Pi$ and $\Pi^\dagger=\Pi$;
	\item spectral decompositions reconstruct the original matrix: $A=\sum_k \lambda_k \ket{v_k}\bra{v_k}$.
\end{itemize}
These become \emph{unit tests} for simulation code and hardware microkernels.

\subsection*{Notation (Dirac notation) and why it is useful}

\subsubsection*{1. Column/row and outer products}
\begin{defn}[Bra-ket basics]
	For $\ket{\psi}\in\C^n$ (a column vector), define $\bra{\psi}:=\ket{\psi}^\dagger$.
	For two vectors $\ket{\phi},\ket{\psi}$:
	\[
	\braket{\phi}{\psi} := \bra{\phi}\ket{\psi} = \phi^\dagger \psi\in\C.
	\]
	The \emph{outer product} $\ket{\psi}\bra{\phi}$ is the $n\times n$ matrix
	\[
	\ket{\psi}\bra{\phi} : v \mapsto \ket{\psi}\braket{\phi}{v}.
	\]
\end{defn}

\begin{ex}[Explicit outer product computation]
	Let $\ket{\psi}=\binom{1}{i}/\sqrt{2}$ and $\ket{\phi}=\binom{1}{1}/\sqrt{2}$.
	Then
	\[
	\ket{\psi}\bra{\phi}
	=
	\frac{1}{2}
	\begin{pmatrix}
		1\\ i
	\end{pmatrix}
	\begin{pmatrix}
		1 & 1
	\end{pmatrix}
	=
	\frac{1}{2}
	\begin{pmatrix}
		1 & 1\\
		i & i
	\end{pmatrix}.
	\]
\end{ex}

\subsubsection*{2. Projectors as ``measurement'' matrices}
\begin{defn}[Rank-one projector]
	If $\|\psi\|=1$, define
	\[
	\Pi_\psi := \ket{\psi}\bra{\psi}.
	\]
	Then $\Pi_\psi^\dagger=\Pi_\psi$ and $\Pi_\psi^2=\Pi_\psi$.
\end{defn}

\begin{proof}[Idempotence in one line]
	\[
	\Pi_\psi^2=(\ket{\psi}\bra{\psi})(\ket{\psi}\bra{\psi})
	=\ket{\psi}\underbrace{\braket{\psi}{\psi}}_{=1}\bra{\psi}
	=\Pi_\psi.
	\]
\end{proof}

\subsection*{Complex vector spaces and inner products}

\subsubsection*{1. Inner product conventions}
\begin{defn}[Standard inner product on $\C^n$]
	For $\ket{\phi},\ket{\psi}\in\C^n$,
	\[
	\braket{\phi}{\psi}:=\sum_{j=1}^n \overline{\phi_j}\psi_j.
	\]
	It is conjugate-linear in the first slot and linear in the second.
\end{defn}

\begin{prop}[Cauchy--Schwarz]
	For all $\phi,\psi\in\C^n$,
	\[
	|\braket{\phi}{\psi}|\le \|\phi\|\cdot \|\psi\|.
	\]
\end{prop}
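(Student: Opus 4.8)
The plan is to give the standard one-variable optimization argument, which is the cleanest path to the bound and also the one whose equality case is transparent. First I would dispose of the trivial case $\psi = 0$ (equivalently $\phi = 0$): then both sides are zero and there is nothing to prove. So assume $\psi \neq 0$. The key idea is to exploit the single fact we are allowed to use, namely that the inner product on $\C^n$ is positive semidefinite: $\braket{v}{v} = \|v\|^2 \ge 0$ for every $v$.

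Next I would form the test vector $v = \phi - t\psi$ for a scalar $t \in \C$ to be chosen, and expand $0 \le \braket{v}{v}$ using conjugate-linearity in the first slot and linearity in the second:
\[
0 \le \braket{\phi - t\psi}{\phi - t\psi}
= \|\phi\|^2 - \bar t\,\braket{\psi}{\phi} - t\,\braket{\phi}{\psi} + |t|^2\,\|\psi\|^2.
\]
The natural choice is $t = \braket{\psi}{\phi}/\|\psi\|^2$, which is legitimate since $\|\psi\|^2 > 0$. Substituting and using $\braket{\psi}{\phi} = \overline{\braket{\phi}{\psi}}$ collapses the cross terms, leaving
\[
0 \le \|\phi\|^2 - \frac{|\braket{\phi}{\psi}|^2}{\|\psi\|^2},
\]
and rearranging gives $|\braket{\phi}{\psi}|^2 \le \|\phi\|^2 \|\psi\|^2$; taking square roots yields the claim.

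There is no real obstacle here — the only thing to be careful about is the bookkeeping with complex conjugates, since the inner product is conjugate-linear in the first argument (per the convention fixed in the preceding definition), so the cross terms are $\bar t \braket{\psi}{\phi}$ and $t\braket{\phi}{\psi}$ rather than both being $t$-linear. One could alternatively absorb a phase first by writing $\braket{\phi}{\psi} = e^{i\alpha}|\braket{\phi}{\psi}|$ and replacing $\phi$ by $e^{-i\alpha}\phi$ to reduce to the real case, but the direct substitution above is shorter. If desired, I would also record the equality condition as a remark: equality holds iff $\phi$ and $\psi$ are linearly dependent, which follows because equality forces $\braket{v}{v} = 0$, hence $v = \phi - t\psi = 0$ by positive-definiteness.
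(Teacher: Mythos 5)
Your proof is correct, and the bookkeeping with the conjugate-linear first slot is handled consistently with the convention fixed in the paper: the cross terms $-t\braket{\phi}{\psi}-\bar t\braket{\psi}{\phi}$ are right, and the choice $t=\braket{\psi}{\phi}/\|\psi\|^2$ collapses them as you claim, giving $0\le\|\phi\|^2-|\braket{\phi}{\psi}|^2/\|\psi\|^2$. The paper states this proposition without proof, so there is nothing to compare against; your argument is the standard one and would serve as a complete proof, with the equality-case remark a correct (optional) addition since the standard inner product on $\C^n$ is positive definite.
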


\begin{rem}[Operational meaning]
	Amplitudes are inner products; probabilities are squared magnitudes.
	Cauchy--Schwarz is the basic inequality behind ``probabilities are bounded by 1''.
\end{rem}

\subsubsection*{2. Worked probability computation (Born rule in coordinates)}
Let $\ket{\psi}=\alpha\ket{0}+\beta\ket{1}$ with $|\alpha|^2+|\beta|^2=1$.
Measurement in the computational basis gives:
\[
p(0)=|\braket{0}{\psi}|^2=|\alpha|^2,\qquad
p(1)=|\braket{1}{\psi}|^2=|\beta|^2.
\]
Nothing here is ``mystical'': it is inner products and squaring.

\subsection*{Linear maps, matrices, and the two special classes}

\subsubsection*{1. Adjoint and unitarity}
\begin{defn}[Adjoint]
	For a matrix $A\in\C^{n\times n}$, $A^\dagger$ is the conjugate transpose.
\end{defn}

\begin{defn}[Unitary and Hermitian]
	$U$ is \emph{unitary} if $U^\dagger U=I$.
	$H$ is \emph{Hermitian} if $H^\dagger=H$.
\end{defn}

\begin{prop}[Unitaries preserve norms and inner products]
	If $U$ is unitary, then for all $\phi,\psi$,
	\[
	\|U\psi\|=\|\psi\|,\qquad
	\braket{U\phi}{U\psi}=\braket{\phi}{\psi}.
	\]
\end{prop}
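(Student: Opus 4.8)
The plan is to derive both claims from the single algebraic identity $U^\dagger U = I$ together with the defining adjoint relation for the inner product. First I would record the adjoint identity: for any $A\in\C^{n\times n}$ and any $\phi,\psi$, we have $\braket{A\phi}{\psi} = (A\phi)^\dagger\psi = \phi^\dagger A^\dagger \psi = \braket{\phi}{A^\dagger\psi}$. This is immediate from the definition $\braket{x}{y}=x^\dagger y$ and the fact that $(A\phi)^\dagger = \phi^\dagger A^\dagger$, so it needs only one line.

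Next I would prove the inner-product statement, since the norm statement follows from it as a special case. Applying the adjoint identity with $A=U$ and then using unitarity,
\[
\braket{U\phi}{U\psi} = \braket{\phi}{U^\dagger U\psi} = \braket{\phi}{I\psi} = \braket{\phi}{\psi},
\]
which is exactly the claim. Then I would specialize to $\phi=\psi$ to get $\|U\psi\|^2 = \braket{U\psi}{U\psi} = \braket{\psi}{\psi} = \|\psi\|^2$, and since norms are nonnegative, take square roots to conclude $\|U\psi\| = \|\psi\|$.

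There is essentially no obstacle here; the only point requiring a moment's care is logical ordering — it is cleanest to establish inner-product preservation first and obtain norm preservation for free, rather than the reverse (which would require invoking the polarization identity to recover the full sesquilinear form from the quadratic form). I would also note in passing that the converse holds and that preserving norms alone already forces $U^\dagger U = I$ via polarization, but that is not needed for this proposition and I would not belabor it.
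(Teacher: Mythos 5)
Your proof is correct and follows essentially the same route as the paper's: both reduce to the single computation $\braket{U\phi}{U\psi}=\phi^\dagger U^\dagger U\psi=\phi^\dagger\psi$ using $U^\dagger U=I$. The only cosmetic difference is that you derive norm preservation as the special case $\phi=\psi$ of inner-product preservation, while the paper computes both directly; this is a matter of ordering, not substance.
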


\begin{proof}
	\[
	\|U\psi\|^2=\braket{U\psi}{U\psi}=\psi^\dagger U^\dagger U\psi=\psi^\dagger\psi=\|\psi\|^2,
	\]
	and similarly
	\[
	\braket{U\phi}{U\psi}=\phi^\dagger U^\dagger U\psi=\phi^\dagger\psi=\braket{\phi}{\psi}.
	\]
\end{proof}

\subsubsection*{2. Concrete gate checks (do this once, reuse forever)}
\begin{ex}[Pauli $X$ is unitary and Hermitian]
	\[
	X=\begin{pmatrix}0&1\\1&0\end{pmatrix},\quad X^\dagger=X,\quad X^2=I.
	\]
	Hence $X^\dagger X=I$ (unitary) and $X^\dagger=X$ (Hermitian).
\end{ex}

\begin{ex}[Hadamard is unitary]
	\[
	H=\frac{1}{\sqrt2}\begin{pmatrix}1&1\\1&-1\end{pmatrix}.
	\]
	Compute
	\[
	H^\dagger H = H^T H
	=
	\frac12
	\begin{pmatrix}1&1\\1&-1\end{pmatrix}
	\begin{pmatrix}1&1\\1&-1\end{pmatrix}
	=
	\frac12
	\begin{pmatrix}2&0\\0&2\end{pmatrix}
	=I.
	\]
\end{ex}

\subsection*{Eigenvalues, diagonalization, and spectra}

\subsubsection*{1. Why eigenvalues show up in QC}
Eigenvalues appear whenever you have:
\begin{itemize}
	\item \textbf{measurement} (observables are Hermitian; outcomes are eigenvalues);
	\item \textbf{time evolution} $e^{-itH}$ (spectral decomposition turns exponentials into scalars);
	\item \textbf{phase estimation / QPE} (eigenphases are what you measure);
	\item \textbf{stability / conditioning} (spectra control sensitivity).
\end{itemize}

\subsubsection*{2. Full worked diagonalization: a $2\times2$ Hermitian matrix}
\begin{ex}[Diagonalize explicitly]
	Let
	\[
	A=\begin{pmatrix}2&1+i\\1-i&0\end{pmatrix}.
	\]
	\textbf{Step 1: eigenvalues.}
	Solve $\det(A-\lambda I)=0$:
	\[
	\det\begin{pmatrix}2-\lambda&1+i\\1-i&-\lambda\end{pmatrix}
	=(2-\lambda)(-\lambda)-(1+i)(1-i).
	\]
	Compute $(1+i)(1-i)=1- i^2=2$. Hence
	\[
	-\lambda(2-\lambda)-2 = -2\lambda+\lambda^2-2.
	\]
	Set to zero:
	\[
	\lambda^2-2\lambda-2=0
	\quad\Rightarrow\quad
	\lambda=1\pm\sqrt3.
	\]
	
	\smallskip
	\noindent\textbf{Step 2: eigenvectors.}
	For $\lambda_+=1+\sqrt3$, solve $(A-\lambda_+ I)v=0$:
	\[
	\begin{pmatrix}1-\sqrt3&1+i\\1-i&-(1+\sqrt3)\end{pmatrix}
	\binom{v_1}{v_2}=0.
	\]
	From the first row: $(1-\sqrt3)v_1+(1+i)v_2=0$, so
	\[
	v_1=\frac{1+i}{\sqrt3-1}v_2.
	\]
	Choose $v_2=1$. Then $v^{(+)}=\binom{\frac{1+i}{\sqrt3-1}}{1}$.
	Normalize: set $\ket{v_+}=v^{(+)}/\|v^{(+)}\|$.
	
	Similarly for $\lambda_-=1-\sqrt3$, solve $(A-\lambda_-I)v=0$ to obtain an orthonormal eigenvector $\ket{v_-}$.
	
	\smallskip
	\noindent\textbf{Step 3: spectral decomposition.}
	Once $\{\ket{v_+},\ket{v_-}\}$ is orthonormal, you have
	\[
	A=(1+\sqrt3)\ket{v_+}\bra{v_+}+(1-\sqrt3)\ket{v_-}\bra{v_-}.
	\]
	This is the form used for exponentials and measurements.
\end{ex}

\subsubsection*{3. Visualization: eigenvectors as axes (2D picture)}
\begin{figure}[t]
	\centering
	\begin{tikzpicture}[scale=2.0, line cap=round, line join=round]
		\draw[->] (-1.2,0) -- (1.2,0) node[right] {$x$};
		\draw[->] (0,-1.2) -- (0,1.2) node[above] {$y$};
		\draw[thick] (0,0) -- (0.9,0.35) node[right] {$\ket{v_+}$};
		\draw[thick] (0,0) -- (-0.35,0.9) node[above] {$\ket{v_-}$};
		\draw[dashed] (0,0) ellipse (0.95 and 0.55);
		\node at (0,-1.35) {\small Diagonalization picks orthogonal axes (eigenvectors) for the quadratic form.};
	\end{tikzpicture}
	\caption{Eigenvectors define preferred axes: in the eigenbasis, a Hermitian matrix acts by stretching along orthogonal directions.}
	\label{fig:eigen-axes}
\end{figure}

\subsection*{Tensor products (multi-qubit systems)}

\subsubsection*{1. Definition and the ``wiring rule''}
\begin{defn}[Tensor product on basis vectors]
	For basis vectors $\ket{i}\in\C^m$ and $\ket{j}\in\C^n$,
	the tensor product space $\C^m\otimes\C^n$ has basis $\{\ket{i}\otimes\ket{j}\}$.
	For vectors $\ket{a}=\sum_i a_i\ket{i}$ and $\ket{b}=\sum_j b_j\ket{j}$:
	\[
	\ket{a}\otimes\ket{b}=\sum_{i,j} a_i b_j\,(\ket{i}\otimes\ket{j}).
	\]
\end{defn}

\begin{defn}[Kronecker product of matrices]
	If $A\in\C^{m\times m}$ and $B\in\C^{n\times n}$, define $A\otimes B\in\C^{mn\times mn}$ by blocks:
	\[
	A\otimes B=
	\begin{pmatrix}
		a_{11}B & \cdots & a_{1m}B\\
		\vdots & \ddots & \vdots\\
		a_{m1}B & \cdots & a_{mm}B
	\end{pmatrix}.
	\]
\end{defn}

\begin{prop}[Action on product states (the wiring rule)]
	\[
	(A\otimes B)(\ket{a}\otimes\ket{b})=(A\ket{a})\otimes(B\ket{b}).
	\]
\end{prop}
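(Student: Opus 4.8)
The plan is to reduce to basis product vectors and then read off both sides blockwise from the definitions already set up in the excerpt. Write $\ket{a}=\sum_i a_i\ket{i}$ and $\ket{b}=\sum_j b_j\ket{j}$, so that by the definition of the tensor product of vectors, $\ket{a}\otimes\ket{b}=\sum_{i,j} a_i b_j\,(\ket{i}\otimes\ket{j})$. Since $A\otimes B$ is linear and both sides of the claimed identity are bilinear in the pair $(\ket{a},\ket{b})$, it is enough to establish $(A\otimes B)(\ket{i}\otimes\ket{j})=(A\ket{i})\otimes(B\ket{j})$ for all basis vectors; everything else follows by expanding and collecting coefficients.

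The key step is to use the block (Kronecker) form of $A\otimes B$ directly. Identify $\C^{mn}$ with $m$ stacked blocks of $\C^n$, ordered so that the first tensor factor indexes the blocks and the second indexes entries within a block — this is exactly the ordering implicit in the basis $\{\ket{i}\otimes\ket{j}\}$ and in the block matrix displayed in the definition of $A\otimes B$. Under this identification the $s$-th block of $\ket{a}\otimes\ket{b}$ is $a_s\ket{b}\in\C^n$, and the $(s,t)$ block of $A\otimes B$ is $a_{st}B$. Hence the $s$-th block of $(A\otimes B)(\ket{a}\otimes\ket{b})$ is $\sum_t a_{st} B(a_t\ket{b}) = \big(\sum_t a_{st} a_t\big)\,B\ket{b} = (A\ket{a})_s\,(B\ket{b})$, which is precisely the $s$-th block of $(A\ket{a})\otimes(B\ket{b})$. (Notably this argument already works for general $\ket{a},\ket{b}$, so the bilinearity reduction above is a convenience rather than a necessity.) Equivalently, one can give a one-line entrywise check: $[A\otimes B]_{(ik),(jl)}=a_{ij}b_{kl}$, so $[(A\otimes B)(\ket{a}\otimes\ket{b})]_{(ik)}=\sum_{j,l}a_{ij}b_{kl}a_j b_l=(A\ket{a})_i(B\ket{b})_k$.

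There is no real obstacle here; the only thing requiring care is the index-ordering convention, so that the decomposition of $\C^{mn}$ into blocks matches the basis ordering $\{\ket{i}\otimes\ket{j}\}$ used to define $\ket{a}\otimes\ket{b}$. With the row-major convention already fixed by the block formula in the excerpt, the two conventions agree and the computation closes immediately. I would present the blockwise version as the main proof (it dovetails with the "wiring rule" intuition: the first register feeds $A$, the second feeds $B$) and mention the entrywise identity as a remark.
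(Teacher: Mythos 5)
Your proof is correct, but it takes a more computational route than the paper does. The paper's proof is a two-line basis expansion: write $\ket{a}=\sum_i a_i\ket{i}$, $\ket{b}=\sum_j b_j\ket{j}$, apply linearity, and invoke $(A\otimes B)(\ket{i}\otimes\ket{j})=(A\ket{i})\otimes(B\ket{j})$ as if it were part of the definition, then collect coefficients. You instead go back to the Kronecker block formula that the paper actually gives as the definition of $A\otimes B$, and verify the identity blockwise (and entrywise), with the bilinearity reduction as an optional convenience. This is a genuine difference in what is taken as primitive: the paper's argument silently assumes the basis-tensor action, which is not literally derived from its block definition, whereas your blockwise computation supplies exactly that missing link and also makes the row-major indexing convention explicit --- a real plus, since index-ordering is the only place this statement can go wrong. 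The paper's version buys brevity and matches the ``wiring rule'' intuition immediately; yours buys completeness relative to the stated definitions. One presentational caution: you use $a_{st}$ for the entries of the matrix $A$ and $a_t$, $a_s$ for the coordinates of the vector $\ket{a}$ in the same displayed computation; the mathematics is fine, but you should rename one of them (e.g.\ write $A_{st}$ for matrix entries) to avoid an apparent collision.
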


\begin{proof}[Basis expansion]
	Write $\ket{a}=\sum_i a_i\ket{i}$ and $\ket{b}=\sum_j b_j\ket{j}$.
	Then
	\[
	(A\otimes B)(\ket{a}\otimes\ket{b})
	=\sum_{i,j}a_i b_j\,(A\ket{i})\otimes(B\ket{j})
	=(A\sum_i a_i\ket{i})\otimes(B\sum_j b_j\ket{j})
	=(A\ket{a})\otimes(B\ket{b}).
	\]
\end{proof}

\subsubsection*{2. Concrete computation: $H\otimes H$}
\begin{ex}[Compute $H\otimes H$ explicitly]
	\[
	H=\frac{1}{\sqrt2}\begin{pmatrix}1&1\\1&-1\end{pmatrix}
	\quad\Rightarrow\quad
	H\otimes H=\frac{1}{2}
	\begin{pmatrix}
		1&1&1&1\\
		1&-1&1&-1\\
		1&1&-1&-1\\
		1&-1&-1&1
	\end{pmatrix}.
	\]
	This matrix performs a basis change on two qubits simultaneously.
\end{ex}

\subsubsection*{3. Visualization: tensor products as parallel wires}
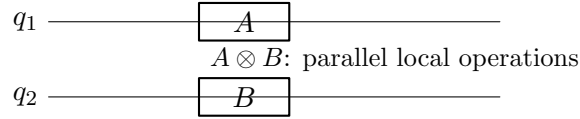
\begin{figure}[t]
	\centering
	\begin{tikzpicture}[scale=1.0, line cap=round, line join=round]
		\draw (0,0) -- (6,0);
		\draw (0,-1) -- (6,-1);
		\node[left] at (0,0) {$q_1$};
		\node[left] at (0,-1) {$q_2$};
		\draw[thick] (2,0.25) rectangle (3.2,-0.25);
		\node at (2.6,0) {$A$};
		\draw[thick] (2,-0.75) rectangle (3.2,-1.25);
		\node at (2.6,-1) {$B$};
		\node at (4.6,-0.5) {\small $A\otimes B$: parallel local operations};
	\end{tikzpicture}
	\caption{Tensor product gates correspond to parallel operations on separate wires. This ``wiring rule'' is the algebra behind parallelism.}
	\label{fig:tensor-wires}
\end{figure}

\subsection*{Density matrices and mixed states}

\subsubsection*{1. Density matrices: definition and two core invariants}
\begin{defn}[Density matrix]
	A density matrix is a matrix $\rho\in\C^{n\times n}$ such that
	\[
	\rho^\dagger=\rho,\qquad
	\rho\succeq 0,\qquad
	\Tr(\rho)=1.
	\]
\end{defn}

\begin{prop}[Pure states are rank-one density matrices]
	If $\|\psi\|=1$, then $\rho=\ket{\psi}\bra{\psi}$ is a density matrix and satisfies
	\[
	\rho^2=\rho,\qquad \Tr(\rho^2)=1.
	\]
	Conversely, if $\Tr(\rho^2)=1$ for a density matrix, then $\rho$ is pure (rank one).
\end{prop}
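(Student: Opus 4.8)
The plan is to prove the two implications separately, both by short direct computations that use only the inner-product identities and the spectral theorem recalled earlier in this chapter. For the forward direction I would first verify the three defining properties of a density matrix for $\rho=\ket{\psi}\bra{\psi}$: Hermiticity is immediate from $(\ket{\psi}\bra{\psi})^\dagger=\ket{\psi}\bra{\psi}$; positivity follows from $\ip{v}{\rho v}=|\braket{\psi}{v}|^2\ge 0$ for every $v$; and $\Tr\rho=\braket{\psi}{\psi}=1$ by cyclicity of the trace (or by expanding in coordinates). Idempotence $\rho^2=\rho$ is exactly the one-line computation already used for rank-one projectors, invoking $\braket{\psi}{\psi}=1$, and then $\Tr(\rho^2)=\Tr\rho=1$ follows.

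For the converse I would invoke the spectral decomposition of a density matrix: since $\rho$ is Hermitian and positive semidefinite with unit trace, write $\rho=\sum_k\lambda_k\ket{v_k}\bra{v_k}$ with $\{\ket{v_k}\}$ orthonormal, $\lambda_k\ge 0$, and $\sum_k\lambda_k=1$. Orthonormality gives $\rho^2=\sum_k\lambda_k^2\ket{v_k}\bra{v_k}$, hence $\Tr(\rho^2)=\sum_k\lambda_k^2$. Because each $\lambda_k$ is nonnegative and bounded above by the sum, we have $\lambda_k\in[0,1]$ and therefore $\lambda_k^2\le\lambda_k$, with equality iff $\lambda_k\in\{0,1\}$; summing, $\Tr(\rho^2)\le\sum_k\lambda_k=1$ with equality iff every $\lambda_k$ is $0$ or $1$. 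Together with $\sum_k\lambda_k=1$ this forces exactly one eigenvalue equal to $1$ and all others $0$, so $\rho=\ket{v_k}\bra{v_k}$ is rank one, i.e.\ pure.

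I do not expect a serious obstacle here; the only place to be careful is the equality-case analysis in the converse — using $\lambda_k^2\le\lambda_k$ correctly and checking that $\sum_k\lambda_k=1$ is what pins down a single unit eigenvalue — together with the implicit appeal to the spectral theorem for Hermitian matrices. If one prefers to avoid spectral theory entirely, an alternative for the converse is to note $\Tr(\rho-\rho^2)=\Tr\!\big(\rho(I-\rho)\big)=0$, observe that $\rho$ and $I-\rho$ are commuting positive operators so $\rho(I-\rho)\succeq 0$, and conclude $\rho(I-\rho)=0$; thus $\rho$ is a projector, and $\Tr\rho=1$ makes it rank one. I would present the spectral-decomposition argument as the main proof and mention this trace-based shortcut as a remark.
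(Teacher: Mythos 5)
Your proof is correct and follows essentially the same route as the paper: the forward direction is the same one-line projector computation, and your spectral-decomposition argument (with the equality case of $\lambda_k^2\le\lambda_k$) is exactly the standard fact the paper invokes when it asserts $\Tr(\rho^2)\le 1$ with equality iff $\rho$ has rank one. Your version simply spells out the converse that the paper states without detail, and the trace-based shortcut you mention is a fine alternative.
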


\begin{proof}
	We already showed $\rho^2=\rho$ for rank-one projectors, so $\Tr(\rho^2)=\Tr(\rho)=1$.
	Conversely, for density matrices one always has $\Tr(\rho^2)\le 1$ with equality iff $\rho$ has rank one,
	so $\Tr(\rho^2)=1$ characterizes purity.
\end{proof}

\subsubsection*{2. Worked example: classical mixture vs. coherent superposition}
\begin{ex}[Same populations, different physics]
	Consider two states on one qubit:
	\[
	\rho_{\text{mix}}=\frac12\ket0\bra0+\frac12\ket1\bra1
	=
	\frac12\begin{pmatrix}1&0\\0&1\end{pmatrix}
	=\frac12 I,
	\]
	and
	\[
	\ket{+}=\frac{1}{\sqrt2}(\ket0+\ket1),\qquad
	\rho_{+}=\ket+\bra+=\frac12\begin{pmatrix}1&1\\1&1\end{pmatrix}.
	\]
	Both have $p(0)=p(1)=1/2$ in the computational basis.
	But they differ by the off-diagonal terms (coherence).
	Indeed,
	\[
	\Tr(\rho_{\text{mix}}^2)=\Tr\!\left(\frac14 I\right)=\frac12,
	\qquad
	\Tr(\rho_{+}^2)=\Tr(\rho_{+})=1.
	\]
	So $\rho_{\text{mix}}$ is mixed, while $\rho_{+}$ is pure.
\end{ex}

\subsubsection*{3. (Optional but useful) Partial trace: the simplest nontrivial computation}
\begin{ex}[Tracing out one qubit of a Bell state]
	Let
	\[
	\ket{\Phi^+}=\frac{1}{\sqrt2}(\ket{00}+\ket{11}),\qquad
	\rho=\ket{\Phi^+}\bra{\Phi^+}.
	\]
	Write $\rho$ in the basis $\{\ket{00},\ket{01},\ket{10},\ket{11}\}$:
	\[
	\rho=\frac12
	\begin{pmatrix}
		1&0&0&1\\
		0&0&0&0\\
		0&0&0&0\\
		1&0&0&1
	\end{pmatrix}.
	\]
	The reduced state on the first qubit is $\rho_A=\Tr_B(\rho)$, computed by
	\[
	(\rho_A)_{i,i'}=\sum_{j\in\{0,1\}} \rho_{(i,j),(i',j)}.
	\]
	Compute entries:
	\[
	(\rho_A)_{00}=\rho_{00,00}+\rho_{01,01}=\frac12+0=\frac12,\qquad
	(\rho_A)_{11}=\rho_{10,10}+\rho_{11,11}=0+\frac12=\frac12,
	\]
	\[
	(\rho_A)_{01}=\rho_{00,10}+\rho_{01,11}=0+0=0,\qquad
	(\rho_A)_{10}=\rho_{10,00}+\rho_{11,01}=0+0=0.
	\]
	So
	\[
	\rho_A=\frac12\begin{pmatrix}1&0\\0&1\end{pmatrix}=\frac12 I.
	\]
	This is the ``entanglement $\Rightarrow$ local mixedness'' phenomenon in one calculation.
\end{ex}

\subsection*{Applications: how this chapter feeds the rest of the notes}

\subsubsection*{1. Where these tools appear next (concrete pointers)}
\begin{itemize}
	\item \textbf{Bloch sphere}: inner products, Pauli basis, rotations.
	\item \textbf{QFIM/QNG (later)}: adjoints, expectation values $\Tr(\rho O)$, tensor-product bookkeeping.
	\item \textbf{Circuits (later)}: unitarity, gate composition, Kronecker products.
	\item \textbf{QEC decoding (Track A)}: parity constraints become linear algebra over $\mathbb{F}_2$ (conceptually analogous).
	\item \textbf{Hardware pipelines}: diagonalization/spectral tools motivate stable computations; tensor products become wiring constraints.
\end{itemize}

\subsection*{Exercises}

\begin{exercise}
	Let $\ket{\psi}=\frac{1}{\sqrt{5}}\binom{1+2i}{0}$.
	Compute $\|\psi\|$, normalize it, and compute the projector $\ket{\psi}\bra{\psi}$.
\end{exercise}
\noindent\textbf{How to do it.}
Compute $\|\psi\|^2=\braket{\psi}{\psi}=\frac{1}{5}|1+2i|^2=\frac{1}{5}(1^2+2^2)=1$,
so it is already normalized. Then
\[
\ket{\psi}\bra{\psi}
=\frac{1}{5}
\begin{pmatrix}
	1+2i\\0
\end{pmatrix}
\begin{pmatrix}
	1-2i & 0
\end{pmatrix}
=
\frac{1}{5}
\begin{pmatrix}
	5 & 0\\
	0 & 0
\end{pmatrix}
=
\begin{pmatrix}
	1&0\\0&0
\end{pmatrix}.
\]

\begin{exercise}
	Show that if $U$ is unitary, then $\ket{\psi}$ is normalized iff $U\ket{\psi}$ is normalized.
\end{exercise}
\noindent\textbf{How to do it.}
Use $\|U\psi\|^2=\psi^\dagger U^\dagger U\psi=\psi^\dagger\psi=\|\psi\|^2$.

\begin{exercise}
	Check that the matrix
	\[
	U=\frac{1}{\sqrt2}\begin{pmatrix}1&i\\ i&1\end{pmatrix}
	\]
	is unitary. Then compute $U\ket0$ and $U\ket1$.
\end{exercise}
\noindent\textbf{How to do it.}
Compute $U^\dagger=\frac{1}{\sqrt2}\begin{pmatrix}1&-i\\ -i&1\end{pmatrix}$ and verify $U^\dagger U=I$.
Then multiply:
\[
U\ket0=\frac{1}{\sqrt2}\binom{1}{i},\qquad
U\ket1=\frac{1}{\sqrt2}\binom{i}{1}.
\]

\begin{exercise}
	Diagonalize $Z=\begin{pmatrix}1&0\\0&-1\end{pmatrix}$ and write its spectral decomposition.
\end{exercise}
\noindent\textbf{How to do it.}
Eigenpairs: $(+1,\ket0)$ and $(-1,\ket1)$. Hence
\[
Z = 1\cdot \ket0\bra0 + (-1)\cdot \ket1\bra1.
\]

\begin{exercise}
	Compute $(A\otimes B)(\ket{0}\otimes\ket{1})$ where
	$A=X=\begin{pmatrix}0&1\\1&0\end{pmatrix}$ and $B=Z=\begin{pmatrix}1&0\\0&-1\end{pmatrix}$.
\end{exercise}
\noindent\textbf{How to do it.}
Use the wiring rule:
\[
(A\otimes B)(\ket0\otimes\ket1)=(A\ket0)\otimes(B\ket1)=\ket1\otimes(-\ket1)=-\ket{11}.
\]

\begin{exercise}
	Let $\rho=\frac12 I$ and $\sigma=\ket+\bra+$ with $\ket+=\frac{1}{\sqrt2}(\ket0+\ket1)$.
	Compute $\Tr(\rho^2)$ and $\Tr(\sigma^2)$ and interpret.
\end{exercise}
\noindent\textbf{How to do it.}
$\rho^2=\frac14 I$ so $\Tr(\rho^2)=\frac12$ (mixed).
$\sigma^2=\sigma$ so $\Tr(\sigma^2)=1$ (pure).

\begin{exercise}[Optional: partial trace]
	Compute $\Tr_B(\ket{00}\bra{00})$ and $\Tr_B(\ket{01}\bra{01})$ explicitly.
\end{exercise}
\noindent\textbf{How to do it.}
Use $(\rho_A)_{i,i'}=\sum_{j} \rho_{(i,j),(i',j)}$.
For $\ket{00}\bra{00}$ you get $\ket0\bra0$.
For $\ket{01}\bra{01}$ you also get $\ket0\bra0$ (the first qubit is still $\ket0$).

\begin{exercise}[Optional: implementation-flavored]
	Explain why matrix multiplication order matters in circuits: give a $2\times2$ example where $AB\neq BA$.
\end{exercise}
\noindent\textbf{How to do it.}
Pick $A=X$ and $B=Z$. Compute
\[
XZ=\begin{pmatrix}0&-1\\1&0\end{pmatrix},\qquad
ZX=\begin{pmatrix}0&1\\-1&0\end{pmatrix},
\]
so $XZ=-ZX\neq ZX$.


\section{Physical Platforms Overview (Why Control Hardware Matters)}
\label{sec:platforms}

\subsection*{Objective}

This section explains \emph{why} quantum computing inevitably becomes a \textbf{control-hardware problem}.
Across essentially all platforms, the story is the same:
\[
\text{Quantum device (Hamiltonian dynamics)}
\quad+\quad
\text{control pulses (classical waveforms)}
\quad+\quad
\text{measurement (analog) $\to$ bits}
\quad+\quad
\text{real-time classical processing (feedback / decoding)}.
\]
We will treat the QPU as a physical system whose evolution is governed by a Hamiltonian,
and treat ``circuits'' as \emph{idealized descriptions} of what control hardware tries to implement.

\medskip
\noindent By the end of this section, you should be able to:
\begin{itemize}
	\item explain the universal control model ``Hamiltonian + pulses'' and how it implements gates,
	\item describe what makes superconducting and trapped-ion platforms different in timing and control,
	\item sketch a realistic classical control stack (AWGs, DAC/ADC, FPGA, host),
	\item reason with a minimal latency model using concrete symbols (cycle time, decode time, jitter, p99/p999),
	\item identify which computations belong near the device (FPGA/ASIC) versus offline (CPU/GPU).
\end{itemize}

\subsection*{Key takeaways}

\begin{enumerate}
	\item \textbf{Every platform is a control system.}
	Quantum ``gates'' are implemented by precisely shaped classical pulses; calibration is continuous.
	\item \textbf{Measurement produces a stream.}
	You do not read out a wavefunction; you read analog signals that are thresholded/classified into bits.
	\item \textbf{Real-time matters.}
	For feedback, active reset, and QEC, the computation must complete within a strict cycle budget.
	Average runtime is not enough; p99/p999 and jitter are correctness-relevant.
	\item \textbf{Control hardware is part of the algorithm.}
	If the control loop misses deadlines, the effective algorithm changes (because the physical system keeps evolving and decohering).
	\item \textbf{FPGA/ASIC is the natural ``near-QPU'' compute layer.}
	It provides bounded latency, deterministic scheduling, and streaming dataflow.
\end{enumerate}

\subsection*{The universal control picture: Hamiltonian + pulses}

\subsubsection*{From ``circuit gate'' to physical evolution}

An ideal circuit applies a unitary $U$.
A physical platform instead evolves continuously:
\[
\frac{d}{dt}\ket{\psi(t)} = -i H(t)\ket{\psi(t)},
\qquad
\ket{\psi(T)} = \mathcal{T}\exp\!\left(-i\int_0^T H(t)\,dt\right)\ket{\psi(0)},
\]
where $\mathcal{T}$ denotes time-ordering.

The key modeling move is:
\[
H(t) = H_0 + \sum_{k=1}^m u_k(t)\,H_k,
\]
where
\begin{itemize}
	\item $H_0$ is the \textbf{drift} Hamiltonian (always on),
	\item $H_k$ are \textbf{control} Hamiltonians (how fields couple to the qubits),
	\item $u_k(t)$ are classical control amplitudes (pulses) produced by AWGs/DACs.
\end{itemize}

\subsubsection*{Why calibration is never ``done''}

If $H_0$ drifts or $u_k(t)$ is distorted (bandwidth limits, phase noise, crosstalk),
the implemented unitary is not the intended gate.
Thus a core engineering loop is:
\[
\text{choose pulse parameters} \to \text{run experiment} \to \text{estimate error} \to \text{update}.
\]
This is already an online optimization problem, and it becomes \emph{real-time} when feedback is inside the experiment.

\subsubsection*{Visualization: gate as a pulse-driven trajectory}

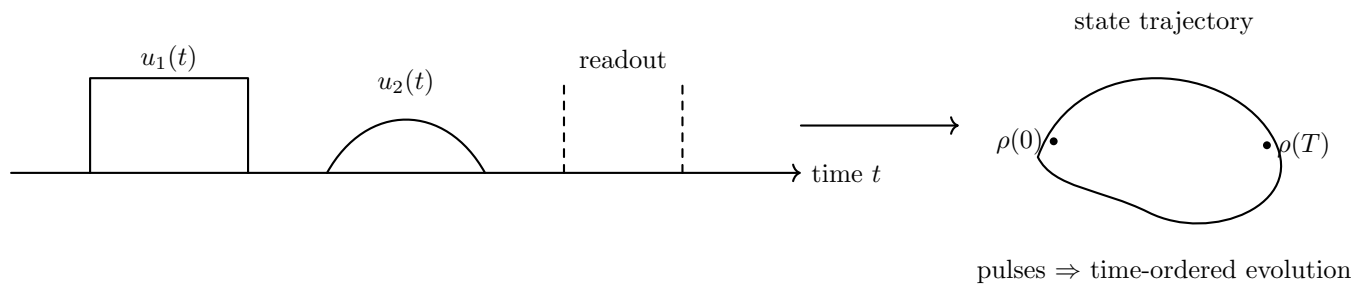
\begin{figure}[t]
	\centering
	\begin{tikzpicture}[scale=1.05, line cap=round, line join=round]
		\draw[->, thick] (0,0) -- (10,0) node[right] {\small time $t$};
		
		\draw[thick] (1,0) -- (1,1.2) -- (3,1.2) -- (3,0);
		\node at (2,1.45) {\small $u_1(t)$};
		
		\draw[thick] (4,0) .. controls (4.5,0.9) and (5.5,0.9) .. (6,0);
		\node at (5,1.15) {\small $u_2(t)$};
		
		\draw[thick, dashed] (7,0) -- (7,1.2);
		\draw[thick, dashed] (8.5,0) -- (8.5,1.2);
		\node at (7.75,1.45) {\small readout};
		
		\draw[->, thick] (10,0.6) -- (12,0.6);
		
		\draw[thick] (13,0.2) .. controls (13.5,1.6) and (15.5,1.4) .. (16,0.4)
		.. controls (16.4,-0.4) and (15.2,-0.9) .. (14.4,-0.5)
		.. controls (13.8,-0.2) and (13.2,-0.2) .. (13,0.2);
		\node at (14.6,1.9) {\small state trajectory};
		
		\fill (13.2,0.4) circle (0.05) node[left] {\small $\rho(0)$};
		\fill (15.9,0.35) circle (0.05) node[right] {\small $\rho(T)$};
		
		\node at (14.6,-1.25) {\small pulses $\Rightarrow$ time-ordered evolution};
	\end{tikzpicture}
	\caption{Universal control picture: the control hardware outputs analog waveforms $u_k(t)$,
		which drive the Hamiltonian $H(t)$ and produce a state trajectory; measurement occurs in a readout window.}
	\label{fig:platforms-pulse-to-path}
\end{figure}

\subsection*{Platform I: Superconducting qubits (transmons)}

\subsubsection*{What a transmon is (one paragraph model)}
A transmon is a superconducting circuit behaving like an anharmonic oscillator.
You encode $\ket{0},\ket{1}$ in its lowest two energy levels.
Because the oscillator is only weakly anharmonic, \textbf{leakage} to $\ket{2}$ is a real engineering concern.

\subsubsection*{Control and readout in practice}
\begin{itemize}
	\item \textbf{Control:} microwave pulses implement single-qubit rotations; two-qubit gates come from tunable couplers or cross-resonance-type interactions.
	\item \textbf{Readout:} dispersive measurement via a resonator; you measure an analog IQ signal and classify it into 0/1.
	\item \textbf{Timing intuition:} gates are fast (tens of ns to a few hundred ns), but coherence times are finite;
	feedback/QEC cycles are time-critical.
\end{itemize}

\subsubsection*{Why control hardware matters here}
\begin{itemize}
	\item \textbf{Fast repetition:} calibration and error characterization require many shots quickly.
	\item \textbf{Low-latency feedback:} active reset and conditional operations benefit from near-device processing.
	\item \textbf{Streaming syndrome:} surface-code experiments naturally produce a continuous stream of stabilizer outcomes.
\end{itemize}

\subsubsection*{Visualization: readout as analog-to-bit conversion}
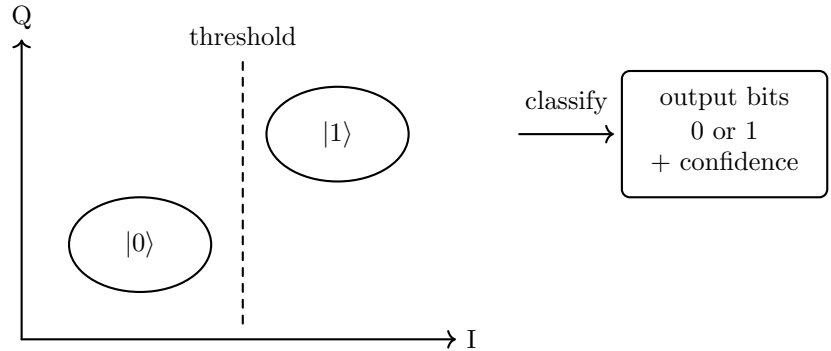
\begin{figure}[t]
	\centering
	\begin{tikzpicture}[scale=1.05, line cap=round, line join=round]
		\draw[->, thick] (0,0) -- (5.5,0) node[right] {\small I};
		\draw[->, thick] (0,0) -- (0,3.8) node[above] {\small Q};
		
		\draw[thick] (1.5,1.2) ellipse (0.9 and 0.6);
		\draw[thick] (4.0,2.6) ellipse (0.9 and 0.6);
		\node at (1.5,1.2) {\small $\ket0$};
		\node at (4.0,2.6) {\small $\ket1$};
		
		\draw[dashed, thick] (2.8,0.2) -- (2.8,3.6);
		\node[above] at (2.8,3.6) {\small threshold};
		
		\draw[->, thick] (6.3,2.6) -- (7.5,2.6);
		\node at (6.9,3.0) {\small classify};
		
		\draw[thick, rounded corners=3pt] (7.6,1.8) rectangle (10.2,3.4);
		\node at (8.9,3.05) {\small output bits};
		\node at (8.9,2.65) {\small 0 or 1};
		\node at (8.9,2.25) {\small + confidence};
		
		\node at (4.8,-1.0) {\small readout is analog; the ``bit'' is a classified estimate};
	\end{tikzpicture}
	\caption{Superconducting readout schematic: analog IQ samples cluster by state; a classifier/threshold outputs bits.
		This classification step is part of the classical pipeline and contributes latency and error.}
	\label{fig:platforms-readout}
\end{figure}

\subsection*{Platform II: Trapped-ion qubits}

\subsubsection*{Physical picture}
Trapped-ion qubits encode states in internal electronic levels of ions held in electromagnetic traps.
The ions share collective motional modes, which can mediate entangling gates.

\subsubsection*{Control and readout in practice}
\begin{itemize}
	\item \textbf{Control:} laser pulses drive single-qubit rotations and multi-qubit entangling gates via motional modes.
	\item \textbf{Readout:} state-dependent fluorescence; photon counts are thresholded into bits.
	\item \textbf{Timing intuition:} gates can be slower than superconducting gates but often have very high fidelity;
	measurement windows can be relatively long (photon counting).
\end{itemize}

\subsubsection*{Why control hardware still matters}
Even when gates are slower, the pipeline constraints remain:
\begin{itemize}
	\item repeated-shot estimation still creates data streams,
	\item feedback still benefits from deterministic scheduling,
	\item decoding/processing still becomes streaming at scale.
\end{itemize}

\subsubsection*{Visualization: photon-count thresholding}
\begin{figure}[t]
	\centering
	\begin{tikzpicture}[scale=1.0, line cap=round, line join=round]
		\draw[->, thick] (0,0) -- (10,0) node[right] {\small photon count};
		\draw[->, thick] (0,0) -- (0,4.0) node[above] {\small frequency};
		
		\draw[thick] (1.0,0.2) .. controls (2.2,2.6) and (3.0,2.6) .. (4.2,0.2);
		\draw[thick] (5.2,0.2) .. controls (6.4,3.4) and (7.2,3.4) .. (8.4,0.2);
		
		\node at (2.6,2.8) {\small dark};
		\node at (6.8,3.6) {\small bright};
		
		\draw[dashed, thick] (4.8,0) -- (4.8,3.8);
		\node[above] at (4.8,3.8) {\small threshold};
		
		\node at (2.6,-0.5) {\small output 0};
		\node at (6.8,-0.5) {\small output 1};
		
		\node at (5,-1.3) {\small fluorescence $\Rightarrow$ count distribution $\Rightarrow$ threshold $\Rightarrow$ bit};
	\end{tikzpicture}
	\caption{Trapped-ion readout schematic: photon counts form distributions; thresholding converts counts into a bit.
		Readout is again an analog-to-bit classification step with error and latency.}
	\label{fig:platforms-photon}
\end{figure}
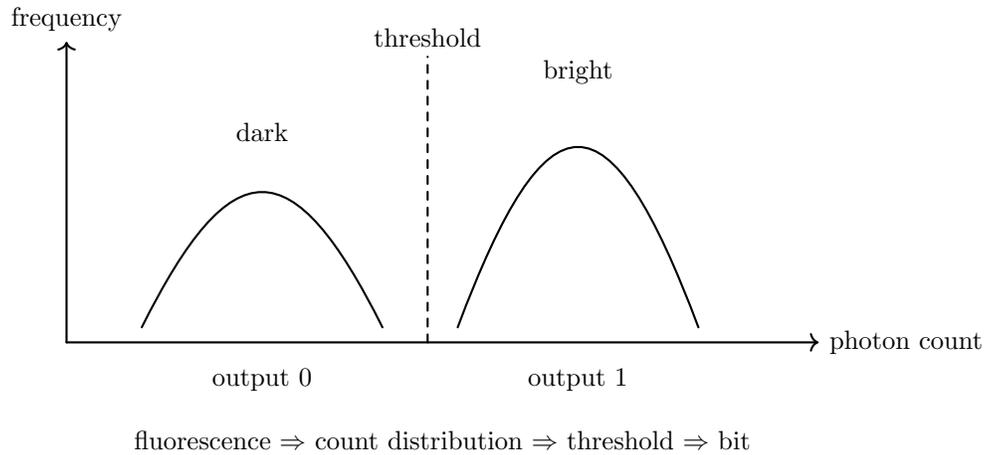

\subsection*{Other platforms (brief)}

The same architecture pattern appears across other technologies:
\begin{itemize}
	\item \textbf{Neutral atoms / Rydberg:} laser control, imaging-based readout; large arrays; timing and calibration remain central.
	\item \textbf{Photonic:} sources, interferometers, and detectors; feed-forward and switching impose strict timing constraints.
	\item \textbf{Spin qubits (semiconductor):} microwave/voltage pulses; cryogenic constraints; readout via charge sensors; low-latency feedback is valuable.
\end{itemize}
Platform specifics change, but the universal control loop does not.

\subsection*{The classical control stack}

\subsubsection*{Stack layers (conceptual)}
A practical lab stack can be summarized as:

\begin{enumerate}
	\item \textbf{Experiment logic / host (CPU):}
	compiles circuits, schedules runs, logs data, runs offline analysis.
	\item \textbf{Real-time controller (FPGA/ASIC):}
	handles deterministic timing, streaming parsing, feedback, decoding, and near-device decisions.
	\item \textbf{Waveform generation (DAC/AWG):}
	outputs analog control pulses $u_k(t)$ to the device.
	\item \textbf{Analog front-end:}
	mixers, filters, amplifiers, cryogenic chain (platform dependent).
	\item \textbf{Digitization (ADC):}
	converts readout waveforms into samples.
	\item \textbf{Readout processing:}
	demodulation/integration/classification into bits, often on FPGA.
\end{enumerate}

\subsubsection*{Visualization: control stack and data direction}
\begin{figure}[t]
	\centering
	\begin{tikzpicture}[
		font=\small,
		box/.style={
			draw, thick, rounded corners=3pt,
			align=center,
			minimum height=1.05cm,
			inner xsep=6pt, inner ysep=5pt,
			text width=3.0cm 
		},
		qpubox/.style={
			draw, thick, rounded corners=3pt,
			align=center,
			minimum height=1.05cm,
			inner xsep=6pt, inner ysep=5pt,
			text width=2.8cm
		},
		arr/.style={-Latex, thick},
		lab/.style={font=\footnotesize, inner sep=1pt}
		]
		
		\matrix (top) [matrix of nodes,
		nodes={box},
		column sep=10mm,
		row sep=10mm
		]{
			{\textbf{Host CPU}\\[-1pt]\footnotesize compile / log / analyze} &
			{\textbf{FPGA/ASIC}\\[-1pt]\footnotesize timing / decode / feedback} &
			{\textbf{DAC/AWG}\\[-1pt]\footnotesize pulses $u_k(t)$} &
			|[qpubox]| {\textbf{QPU}\\[-1pt]\footnotesize Hamiltonian evolution} \\
		};
		
		\node[box, text width=3.0cm] (adc) at ($(top-1-3.south)+(0,-1.55cm)$)
		{\textbf{ADC}\\[-1pt]\footnotesize samples};
		
		\draw[arr] (top-1-1.east) -- node[lab, above] {circuits / params} (top-1-2.west);
		\draw[arr] (top-1-2.east) -- node[lab, above] {waveforms}        (top-1-3.west);
		
		\draw[arr] (top-1-3.east) -- node[lab, above] {control} (top-1-4.west);
		
		\draw[arr] (top-1-4.west) ++(0,-2mm) |- node[lab, near end, right] {readout} (adc.east);
		
		\draw[arr] (adc.west) -- node[lab, below] {bits / samples} ($(top-1-2.south)+(0,-1.55cm)$);
		
		\draw[arr] ($(top-1-2.south)+(-6mm,0)$) -- node[lab, below] {logs / results} ($(top-1-1.south)+(6mm,0)$);
		
		\node[align=center, font=\footnotesize, text width=0.92\textwidth]
		at ($(top.south)+(0,-2.55cm)$)
		{near-QPU real-time path is \textbf{FPGA} $\leftrightarrow$ (DAC/ADC) $\leftrightarrow$ QPU};
		
	\end{tikzpicture}
	\caption{Classical control stack schematic. Real-time tasks (readout processing, feedback, decoding)
		sit naturally on FPGA/ASIC between host software and analog hardware.}
	\label{fig:platforms-stack}
\end{figure}

\subsection*{Minimal latency model}

\subsubsection*{Symbols (use these throughout the book)}
We will model a single ``cycle'' (one measurement-and-decision unit) with the following times:
\begin{itemize}
	\item $T_{\mathrm{meas}}$: physical measurement window (integration / photon counting / etc.)
	\item $T_{\mathrm{adc}}$: digitization and transfer time
	\item $T_{\mathrm{ro}}$: readout processing time (demodulate / integrate / classify)
	\item $T_{\mathrm{dec}}$: decoding/estimation/decision time (Union--Find, QNG update, etc.)
	\item $T_{\mathrm{cmd}}$: time to send the next command / correction to the control layer
	\item $T_{\mathrm{margin}}$: safety margin (clock domain crossings, buffering, worst-case slack)
\end{itemize}
Define the \textbf{cycle budget} (deadline) $T_{\mathrm{cycle}}$ and the \textbf{total latency}
\[
T_{\mathrm{tot}} := T_{\mathrm{meas}} + T_{\mathrm{adc}} + T_{\mathrm{ro}} + T_{\mathrm{dec}} + T_{\mathrm{cmd}} + T_{\mathrm{margin}}.
\]
A real-time loop requires
\[
T_{\mathrm{tot}} \le T_{\mathrm{cycle}}.
\]

\subsubsection*{Tail latency (p99/p999) belongs in the inequality}
In practice $T_{\mathrm{ro}}$ and $T_{\mathrm{dec}}$ are random variables under software scheduling,
I/O contention, and caching effects.
Therefore the relevant inequality is a tail constraint:
\[
\mathrm{p99}(T_{\mathrm{tot}}) \le T_{\mathrm{cycle}}
\quad\text{or even}\quad
\mathrm{p999}(T_{\mathrm{tot}}) \le T_{\mathrm{cycle}},
\]
depending on the required reliability.

\subsubsection*{Backlog stability (streaming view)}
If syndrome/measurement events arrive every $T_{\mathrm{cycle}}$ but processing occasionally takes longer,
a backlog accumulates in a FIFO.
A minimal stability condition is that the long-run average processing rate exceeds the arrival rate,
and that tail spikes are bounded so the FIFO does not overflow.
This is why ``bounded passes'' and deterministic pipelines are emphasized in Track A.

\subsubsection*{Visualization: cycle budget and deadlines}
\begin{figure}[t]
	\centering
	\begin{tikzpicture}[scale=1.05, line cap=round, line join=round]
		\draw[->, thick] (0,0) -- (12,0) node[right] {\small time};
		
		\draw[thick] (0,0) -- (2,0);
		\draw[thick] (0,0.7) rectangle (2,1.3);
		\node at (1,1.0) {\small $T_{\mathrm{meas}}$};
		
		\draw[thick] (2,0.7) rectangle (3.2,1.3);
		\node at (2.6,1.0) {\small $T_{\mathrm{adc}}$};
		
		\draw[thick] (3.2,0.7) rectangle (5.4,1.3);
		\node at (4.3,1.0) {\small $T_{\mathrm{ro}}$};
		
		\draw[thick] (5.4,0.7) rectangle (7.8,1.3);
		\node at (6.6,1.0) {\small $T_{\mathrm{dec}}$};
		
		\draw[thick] (7.8,0.7) rectangle (9.0,1.3);
		\node at (8.4,1.0) {\small $T_{\mathrm{cmd}}$};
		
		\draw[thick] (9.0,0.7) rectangle (10.2,1.3);
		\node at (9.6,1.0) {\small margin};
		
		\draw[dashed, thick] (10.2,0.2) -- (10.2,1.8);
		\node[above] at (10.2,1.8) {\small deadline $T_{\mathrm{cycle}}$};
		
		\draw[->, thick] (10.2,1.5) -- (11.7,1.5);
		\node at (11.0,1.75) {\small next cycle};
		
		\node at (6.0,-0.8) {\small real-time condition: total latency must fit before the deadline};
	\end{tikzpicture}
	\caption{Minimal latency model: measurement and classical processing must complete within a fixed cycle budget.
		For QEC/feedback, missing the deadline can cause physical failure.}
	\label{fig:platforms-latency}
\end{figure}

\subsection*{Exercises with explanations}

\begin{exercise}[Hamiltonian + pulse model]
	Suppose $H(t)=H_0+u(t)H_1$ with a piecewise-constant pulse
	$u(t)=u_0$ for $0\le t\le T$.
	Write the unitary $U(T)$ explicitly.
\end{exercise}
\noindent\textbf{Explanation / solution.}
If $u(t)$ is constant, then $H(t)$ is constant:
\[
H = H_0 + u_0 H_1.
\]
Hence the evolution is
\[
U(T)=\exp(-iHT)=\exp\!\bigl(-i(H_0+u_0H_1)T\bigr).
\]
(If $H_0$ and $H_1$ do not commute, this formula still holds because $H$ is constant; noncommutativity becomes an issue when $u(t)$ varies with time and the Hamiltonian changes.)

\begin{exercise}[Time-ordering intuition]
	Let $u(t)$ switch between two values $u_a$ on $[0,T/2]$ and $u_b$ on $[T/2,T]$.
	Express $U(T)$ as a product of exponentials.
\end{exercise}
\noindent\textbf{Explanation / solution.}
On each interval the Hamiltonian is constant:
\[
H_a=H_0+u_aH_1,\qquad H_b=H_0+u_bH_1.
\]
Thus
\[
U(T)=\exp\!\left(-iH_b\frac{T}{2}\right)\exp\!\left(-iH_a\frac{T}{2}\right),
\]
with later time on the left (time-ordering).

\begin{exercise}[Readout is classification]
	Explain (in 2--3 sentences) why measurement error can occur even if the underlying quantum projection is perfect.
\end{exercise}
\noindent\textbf{Explanation / solution.}
The physical measurement produces an analog signal (IQ samples or photon counts).
To output a bit, the classical system must classify that signal using thresholds or a learned discriminator.
Noise in the analog chain and overlap of the signal distributions cause misclassification, producing readout error.

\begin{exercise}[Latency inequality]
	Given a cycle budget $T_{\mathrm{cycle}}=1$ (in normalized units),
	suppose $T_{\mathrm{meas}}=0.35$, $T_{\mathrm{adc}}=0.10$, $T_{\mathrm{ro}}=0.25$,
	$T_{\mathrm{dec}}=0.20$, $T_{\mathrm{cmd}}=0.05$.
	How much margin remains? What happens if $T_{\mathrm{dec}}$ occasionally spikes to $0.35$?
\end{exercise}
\noindent\textbf{Explanation / solution.}
Compute
\[
T_{\mathrm{tot}}=0.35+0.10+0.25+0.20+0.05=0.95,
\]
so margin is $0.05$.
If $T_{\mathrm{dec}}$ spikes to $0.35$, then
\[
T_{\mathrm{tot}}=0.35+0.10+0.25+0.35+0.05=1.10>1,
\]
so the deadline is missed; in a real-time loop this can cause backlog growth or incorrect/tardy correction.

\begin{exercise}[p99 vs.\ mean]
	Give one concrete reason why a controller with mean latency $0.2$ could still be unacceptable,
	even if $T_{\mathrm{cycle}}=1$.
\end{exercise}
\noindent\textbf{Explanation / solution.}
If rare events produce tail spikes (e.g.\ p99 or p999 near or above $1$),
then even though the mean is small, occasional deadline misses can destabilize the loop (FIFO overflow, stale corrections).
Real-time correctness depends on tail behavior, not just the mean.

\begin{exercise}[Where does FPGA belong?]
	For each of the following tasks, label it as ``near-QPU'' (FPGA/ASIC) or ``host/offline'' (CPU/GPU),
	and give one sentence of justification:
	(i) readout demodulation/integration,
	(ii) QEC decoding,
	(iii) large-scale parameter sweep to fit a noise model,
	(iv) logging and plotting.
\end{exercise}
\noindent\textbf{Explanation / solution.}
(i) near-QPU: must run every shot with bounded latency.
(ii) near-QPU (for real-time QEC): deadline critical and streaming.
(iii) host/offline: computationally heavy but not per-cycle deadline critical.
(iv) host/offline: not latency-critical; better handled in software.

	
\section{Single-Qubit States and the Bloch Sphere}
\label{sec:bloch}

\subsection*{Objective (geometry-first, algorithm-first)}

A single qubit is the smallest nontrivial quantum system, yet it already contains
the full set of ``quantum'' phenomena that drive algorithms and hardware constraints:
\emph{superposition}, \emph{interference}, and \emph{phase sensitivity}.
This section builds a geometric model of single-qubit states and uses it as a
working language for computation:

\begin{quote}
	\emph{prepare a state on the Bloch sphere $\to$ rotate it $\to$ measure along an axis $\to$ convert analog readout to bits.}
\end{quote}

\noindent The main message is that many single-qubit procedures are best understood as
\textbf{explicit rotations of a point on a sphere}.

\medskip
\noindent\textbf{Implementation-aware note.}
In hardware, gates are implemented by control pulses and readout returns classical bits (after analog processing).
Drift, dephasing, miscalibration (axis tilt, over/under-rotation), and readout error deform the ideal picture.
A Bloch-sphere viewpoint makes sensitivity and robustness visually obvious and gives a direct bridge to
calibration, feedback, and near-device control (often on FPGA/ASIC).

\medskip
\noindent By the end of this section, you should be able to:
\begin{itemize}
	\item explain why pure qubit states form $\CP^1\simeq S^2$ (projective, not linear),
	\item compute Bloch coordinates from a state vector or density matrix,
	\item interpret common gates ($X,Y,Z,H,R_Z,R_X,R_Y$) as rotations (with explicit formulas),
	\item predict measurement statistics from geometry (and vice versa),
	\item connect ideal rotations to practical control/readout steps.
\end{itemize}

\subsection*{Key Concepts}

\subsubsection*{1. Pure states, global phase, and why the state space is a sphere}

\begin{defn}[Pure single-qubit state]
	A pure single-qubit state is a unit vector
	\[
	\ket{\psi}=\alpha\ket{0}+\beta\ket{1}\in\C^2,
	\qquad |\alpha|^2+|\beta|^2=1.
	\]
\end{defn}

\begin{prop}[Global phase invariance]
	The vectors $\ket{\psi}$ and $e^{i\theta}\ket{\psi}$ represent the same physical state
	for any $\theta\in\R$.
\end{prop}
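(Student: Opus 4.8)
The plan is to unpack the phrase ``represent the same physical state'' into its operational content and then verify that content by a one-line algebraic identity. Two unit vectors describe the same physical state precisely when no measurement can distinguish them: for every POVM $\{M_x\}$ the Born probabilities agree. So first I would fix an arbitrary measurement, recall that its outcome statistics are $p(x) = \bra{\psi} M_x \ket{\psi} = \Tr(M_x\,\ket{\psi}\bra{\psi})$, and observe that these depend on $\ket{\psi}$ only through the outer product (equivalently, the density matrix) $\rho_\psi := \ket{\psi}\bra{\psi}$, which we have already seen is a legitimate rank-one density matrix whenever $\|\psi\|=1$.

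The core step is then to check that this outer product is phase-insensitive. Writing $\ket{\psi'} = e^{i\theta}\ket{\psi}$ and using $\bra{\psi'} = \overline{e^{i\theta}}\,\bra{\psi} = e^{-i\theta}\bra{\psi}$, we get
\[
\ket{\psi'}\bra{\psi'} = e^{i\theta}\,e^{-i\theta}\,\ket{\psi}\bra{\psi} = \ket{\psi}\bra{\psi},
\]
so $\rho_{\psi'} = \rho_\psi$. Since every predictable quantity---outcome probabilities $\Tr(M_x\rho)$, expectation values $\Tr(\rho O)$ of observables, and post-measurement states (which again carry only an irrelevant overall phase)---is a function of $\rho$ alone, $\ket{\psi}$ and $e^{i\theta}\ket{\psi}$ are operationally indistinguishable, i.e.\ the same physical state.

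As a coordinate-level sanity check I would also note that $\alpha \mapsto e^{i\theta}\alpha$ and $\beta \mapsto e^{i\theta}\beta$ leave $|\alpha|^2$, $|\beta|^2$, and the relative phase $\arg\beta - \arg\alpha$ unchanged; these are exactly the quantities that pin down a point on the Bloch sphere, which previews why the pure-state space is $S^2$ rather than the full unit sphere of $\C^2$.

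There is no genuine obstacle here: once ``same physical state'' is read as ``same statistics under every measurement,'' the claim reduces to the cancellation $e^{i\theta}e^{-i\theta}=1$. The only point requiring care is not to treat the statement as deeper than it is---the substantive input is the postulate that physical predictions are precisely those expressible via Born's rule through $\rho$, and given that, the proof is immediate.
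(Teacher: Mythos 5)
Your proof is correct. The route differs slightly from the paper's: the paper verifies invariance directly at the level of amplitudes, noting that for any orthonormal basis $\{\ket{x}\}$ one has $|\braket{x}{e^{i\theta}\psi}|^2 = |e^{i\theta}|^2|\braket{x}{\psi}|^2 = |\braket{x}{\psi}|^2$, so all measurement statistics agree; you instead pass through the density matrix, showing $\ket{\psi'}\bra{\psi'} = e^{i\theta}e^{-i\theta}\ket{\psi}\bra{\psi} = \ket{\psi}\bra{\psi}$ and then appealing to the fact that every operational prediction (POVM probabilities $\Tr(M_x\rho)$, expectations $\Tr(\rho O)$, post-measurement states) is a function of $\rho$ alone. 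Both arguments hinge on the same cancellation $|e^{i\theta}|=1$; the paper's version is more elementary and self-contained at this point in the text (it needs only the Born rule for basis measurements, which is all that has been introduced), while yours is marginally more general and aligns well with the density-matrix and POVM formalism developed later, at the cost of invoking the postulate that all physical predictions factor through $\rho$. Your Bloch-sphere sanity check is a nice addition consistent with the surrounding discussion.
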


\begin{proof}
	Measurement probabilities in any orthonormal basis $\{\ket{x}\}$ are
	\[
	p(x)=|\braket{x}{\psi}|^2.
	\]
	For $\ket{\psi'}=e^{i\theta}\ket{\psi}$,
	\[
	|\braket{x}{\psi'}|^2
	=
	|\braket{x}{e^{i\theta}\psi}|^2
	=
	|e^{i\theta}\braket{x}{\psi}|^2
	=
	|\braket{x}{\psi}|^2.
	\]
	So all measurement statistics agree.
\end{proof}

\begin{rem}[Algorithmic meaning]
	Only \emph{relative} phase affects interference; global phase does not.
	Hence the true state space of pure states is projective.
\end{rem}

\begin{defn}[Complex projective line]
	\[
	\CP^1 := (\C^2\setminus\{0\})/\!\sim,
	\qquad v\sim \lambda v \text{ for } \lambda\in\C^\times.
	\]
\end{defn}

\begin{rem}[Degrees of freedom]
	$(\alpha,\beta)\in\C^2$ has $4$ real parameters.
	Normalization removes $1$, global phase removes $1$.
	Thus pure qubit states have $2$ real degrees of freedom, matching $S^2$.
\end{rem}

\subsubsection*{2. Pauli matrices, density matrices, and the Bloch map}

\begin{defn}[Pauli matrices]
	\[
	X=\begin{pmatrix}0&1\\1&0\end{pmatrix},\quad
	Y=\begin{pmatrix}0&-i\\ i&0\end{pmatrix},\quad
	Z=\begin{pmatrix}1&0\\0&-1\end{pmatrix}.
	\]
\end{defn}

\begin{defn}[Density matrix]
	A (single-qubit) density matrix is a $2\times2$ matrix $\rho$ such that:
	\[
	\rho^\dagger=\rho,\qquad \rho\succeq 0,\qquad \Tr(\rho)=1.
	\]
	A pure state $\ket{\psi}$ corresponds to $\rho=\ket{\psi}\bra{\psi}$.
\end{defn}

\begin{defn}[Bloch vector]
	For any single-qubit state $\rho$, define
	\[
	x=\Tr(\rho X),\qquad y=\Tr(\rho Y),\qquad z=\Tr(\rho Z).
	\]
	The vector $\vec r=(x,y,z)\in\R^3$ is the \emph{Bloch vector}.
\end{defn}

\begin{rem}[Operational meaning]
	Each coordinate is an expectation of a $\pm1$-valued measurement:
	\[
	x=\mathbb E[X\text{-outcome}],\quad
	y=\mathbb E[Y\text{-outcome}],\quad
	z=\mathbb E[Z\text{-outcome}].
	\]
	In experiments each is estimated by repeated shots and averaging classical readout bits.
\end{rem}

\begin{prop}[Bloch expansion (any qubit state)]
	For any $2\times2$ density matrix $\rho$ with Bloch vector $\vec r=(x,y,z)$,
	\[
	\rho=\frac12\bigl(I+xX+yY+zZ\bigr).
	\]
\end{prop}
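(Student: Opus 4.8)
The plan is to use the fact that the identity together with the three Pauli matrices forms a basis of the real vector space of $2\times 2$ Hermitian matrices, and then to extract the coefficients of $\rho$ in that basis by pairing against each basis element under the trace inner product. Since $\rho$ is Hermitian, we may write $\rho=\tfrac12(c_0 I+c_1 X+c_2 Y+c_3 Z)$ for real numbers $c_0,c_1,c_2,c_3$, where the prefactor $\tfrac12$ is chosen so the coefficients come out cleanly. Linear independence of $\{I,X,Y,Z\}$ can be checked directly (a vanishing linear combination forces all four coefficients to zero by comparing the four matrix entries), and a dimension count ($\dim_\R$ of the Hermitian $2\times 2$ matrices equals $4$) shows this set is a basis, so such a representation exists and is unique.

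First I would record the trace relations for the Pauli basis: $\Tr(I)=2$, $\Tr(X)=\Tr(Y)=\Tr(Z)=0$, together with $X^2=Y^2=Z^2=I$ and $XY=iZ$, $YZ=iX$, $ZX=iY$. These yield the orthogonality statement $\Tr(\sigma_\mu\sigma_\nu)=2\delta_{\mu\nu}$ for $\mu,\nu$ ranging over $\{I,X,Y,Z\}$. This is the only genuinely computational step, and it is short; everything after it is bookkeeping.

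Next I would extract the coefficients. Multiplying the expansion by $I$ and taking the trace gives $\Tr(\rho)=\tfrac12 c_0\cdot 2=c_0$, so $c_0=\Tr(\rho)=1$ by the normalization in the definition of a density matrix. Multiplying by $X$ and taking the trace gives $\Tr(\rho X)=\tfrac12 c_1\cdot 2=c_1$, but by the definition of the Bloch vector $\Tr(\rho X)=x$, hence $c_1=x$. The identical computation with $Y$ and with $Z$ yields $c_2=y$ and $c_3=z$. Substituting back gives $\rho=\tfrac12(I+xX+yY+zZ)$, as claimed.

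I do not expect a serious obstacle: at bottom the argument is just the statement that $\{I,X,Y,Z\}/\sqrt2$ is an orthonormal basis for the Hilbert--Schmidt inner product on $2\times 2$ Hermitian matrices, and the proposition is the expansion of $\rho$ in that basis. The two points deserving mild care are (i) that the coefficients are real --- they are, because $\rho$ and all four basis matrices are Hermitian and the Hilbert--Schmidt pairing of Hermitian matrices is real --- and (ii) that only Hermiticity and $\Tr(\rho)=1$ are used, while positivity $\rho\succeq 0$ plays no role in this identity (it is what later forces $\|\vec r\|\le 1$). As an optional sanity check one can verify the formula on $\rho=\ket{0}\bra{0}$, which gives $\vec r=(0,0,1)$, and on $\rho=\tfrac12 I$, which gives $\vec r=0$.
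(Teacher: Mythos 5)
Your proof is correct and follows essentially the same route as the paper's: expand $\rho$ in the Hermitian basis $\{I,X,Y,Z\}$ and extract the coefficients via the trace orthogonality relations $\Tr(\sigma_\mu\sigma_\nu)=2\delta_{\mu\nu}$, using $\Tr(\rho)=1$ and the definition of the Bloch coordinates. Your added remarks on reality of the coefficients and the irrelevance of positivity are fine but not a departure from the paper's argument.
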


\begin{proof}
	The Hermitian $2\times2$ matrices form a $4$-dimensional real vector space with basis $\{I,X,Y,Z\}$.
	Write $\rho=aI+bX+cY+dZ$.
	From $\Tr(\rho)=1$ we get $2a=1$ so $a=\frac12$.
	Using $\frac12\Tr(\sigma_i\sigma_j)=\delta_{ij}$ and $\Tr(\sigma_i)=0$,
	\[
	\Tr(\rho X)=2b,\quad \Tr(\rho Y)=2c,\quad \Tr(\rho Z)=2d,
	\]
	so $b=\frac{x}{2}$, $c=\frac{y}{2}$, $d=\frac{z}{2}$.
\end{proof}

\begin{prop}[Bloch ball and purity]
	For any qubit state $\rho$ with Bloch vector $\vec r$,
	\[
	\Tr(\rho^2)=\frac12\bigl(1+\|\vec r\|^2\bigr).
	\]
	In particular, $\rho$ is pure $\iff \|\vec r\|=1$, and mixed states satisfy $\|\vec r\|<1$.
\end{prop}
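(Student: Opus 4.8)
The plan is to compute $\Tr(\rho^2)$ directly from the Bloch expansion $\rho=\tfrac12(I+xX+yY+zZ)$ established in the previous proposition, using only the multiplication table of the Pauli matrices and the trace identities $\Tr(I)=2$, $\Tr(\sigma_i)=0$, and $\Tr(\sigma_i\sigma_j)=2\delta_{ij}$. First I would square the expansion:
\[
\rho^2=\frac14\bigl(I+xX+yY+zZ\bigr)^2
=\frac14\Bigl(I^2+\sum_i r_i^2\,\sigma_i^2+\text{cross terms}\Bigr),
\]
where $\vec r=(r_1,r_2,r_3)=(x,y,z)$. The key facts are $\sigma_i^2=I$ for each Pauli, so the diagonal terms contribute $\|\vec r\|^2\,I$; the terms linear in a single $\sigma_i$ (coming from $I\cdot\sigma_i$ and $\sigma_i\cdot I$) are traceless; and the genuinely mixed products $\sigma_i\sigma_j$ with $i\neq j$ are (up to $\pm i$) the third Pauli matrix, hence also traceless. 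Taking the trace therefore kills everything except $\Tr(I^2)=\Tr(I)=2$ and $\Tr(\|\vec r\|^2 I)=2\|\vec r\|^2$, giving
\[
\Tr(\rho^2)=\frac14\bigl(2+2\|\vec r\|^2\bigr)=\frac12\bigl(1+\|\vec r\|^2\bigr).
\]

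For the purity characterization, I would invoke the fact (already used in the linear-algebra chapter, Proposition on pure states as rank-one density matrices) that a density matrix $\rho$ satisfies $\Tr(\rho^2)\le 1$ with equality iff $\rho$ is pure. Combining this with the formula just derived: $\Tr(\rho^2)=1 \iff \tfrac12(1+\|\vec r\|^2)=1 \iff \|\vec r\|^2=1 \iff \|\vec r\|=1$. For the inequality $\|\vec r\|<1$ in the mixed case, I would note that positivity of $\rho$ forces $\Tr(\rho^2)\le 1$, hence $\|\vec r\|^2\le 1$ always; and strict inequality $\|\vec r\|<1$ holds precisely when $\rho$ is not pure. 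Alternatively, one can argue directly that the eigenvalues of $\rho$ are $\tfrac12(1\pm\|\vec r\|)$ (since $\rho-\tfrac12 I=\tfrac12\vec r\cdot\vec\sigma$ has eigenvalues $\pm\tfrac12\|\vec r\|$), so $\rho\succeq 0$ requires $1-\|\vec r\|\ge 0$, i.e.\ $\|\vec r\|\le 1$, with equality iff the smaller eigenvalue vanishes, i.e.\ iff $\rho$ is rank one, i.e.\ pure.

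I do not expect a serious obstacle here; the only mildly delicate point is bookkeeping the cross terms carefully enough to be sure they are all traceless. The cleanest way to handle this is to not expand term-by-term at all, but to use the eigenvalue observation: $\rho = \tfrac12 I + \tfrac12(\vec r\cdot\vec\sigma)$, and $\vec r\cdot\vec\sigma$ is a traceless Hermitian matrix whose square is $\|\vec r\|^2 I$ (a standard identity that follows from $\sigma_i\sigma_j+\sigma_j\sigma_i=2\delta_{ij}I$), so its eigenvalues are $\pm\|\vec r\|$ and those of $\rho$ are $\tfrac12(1\pm\|\vec r\|)$. Then $\Tr(\rho^2)=\tfrac14(1+\|\vec r\|)^2+\tfrac14(1-\|\vec r\|)^2=\tfrac12(1+\|\vec r\|^2)$, and everything else follows. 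I would present the direct Pauli-algebra computation as the main line and mention the eigenvalue route as the conceptual shortcut.
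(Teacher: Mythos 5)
Your proposal is correct and follows essentially the same route as the paper: the paper also squares $\rho=\tfrac12(I+\vec r\cdot\vec\sigma)$, uses $(\vec r\cdot\vec\sigma)^2=\|\vec r\|^2 I$ together with tracelessness of the Paulis, and obtains $\Tr(\rho^2)=\tfrac12(1+\|\vec r\|^2)$, with the purity statement then following from the earlier fact that $\Tr(\rho^2)=1$ characterizes pure states. Your eigenvalue remark is a nice conceptual add-on but not a different method in any essential sense.
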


\begin{proof}
	Write $\rho=\frac12(I+\vec r\cdot\vec\sigma)$.
	Using $(\vec r\cdot\vec\sigma)^2=\|\vec r\|^2 I$ and $\Tr(\sigma_i)=0$,
	\[
	\rho^2=\frac14\bigl(I+2\vec r\cdot\vec\sigma+\|\vec r\|^2I\bigr)
	\quad\Rightarrow\quad
	\Tr(\rho^2)=\frac14\bigl(2+0+2\|\vec r\|^2\bigr)=\frac12(1+\|\vec r\|^2).
	\]
\end{proof}

\begin{prop}[Bloch sphere identification $\CP^1\simeq S^2$]
	For pure states $\rho=\ket{\psi}\bra{\psi}$, the Bloch vector satisfies $\|\vec r\|=1$,
	and the map $\CP^1\to S^2$, $[\psi]\mapsto \vec r$, is a diffeomorphism.
\end{prop}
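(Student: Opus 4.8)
The plan is to construct an explicit parametrization of $\CP^1$ by angles and show that it matches the standard spherical coordinates on $S^2$ under the Bloch map, then verify bijectivity and smoothness in both directions. First I would fix a representative for each projective class: given $[\psi]\in\CP^1$ with $\ket{\psi}=\alpha\ket0+\beta\ket1$, if $\alpha\neq 0$ I can use global-phase freedom (Proposition on global phase invariance) to assume $\alpha=\cos(\theta/2)\ge 0$ and $\beta=e^{i\varphi}\sin(\theta/2)$ for unique $\theta\in[0,\pi)$ and $\varphi\in[0,2\pi)$; the single remaining point $\alpha=0$ is $[\ket1]$, handled separately as the south pole. Then I would compute the Bloch vector directly from $\rho=\ket\psi\bra\psi$ using the definitions $x=\Tr(\rho X)$, etc.: a short calculation gives
\[
\rho=\begin{pmatrix}\cos^2(\theta/2) & e^{-i\varphi}\cos(\theta/2)\sin(\theta/2)\\ e^{i\varphi}\cos(\theta/2)\sin(\theta/2) & \sin^2(\theta/2)\end{pmatrix},
\]
so that $x=\sin\theta\cos\varphi$, $y=\sin\theta\sin\varphi$, $z=\cos\theta$, which is exactly the standard spherical parametrization of $S^2$. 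Combined with the already-proved Proposition that pure states satisfy $\|\vec r\|=1$, this shows the map lands in $S^2$ and is surjective, since every $(\theta,\varphi)$ arises.

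Next I would establish injectivity. Suppose two pure states $\rho_1=\ket{\psi_1}\bra{\psi_1}$ and $\rho_2=\ket{\psi_2}\bra{\psi_2}$ have the same Bloch vector $\vec r$. By the Bloch expansion proposition, $\rho=\frac12(I+\vec r\cdot\vec\sigma)$ is determined by $\vec r$, so $\rho_1=\rho_2$ as matrices; then $\ket{\psi_1}$ and $\ket{\psi_2}$ span the same one-dimensional range of this rank-one projector, hence differ by a scalar, i.e. $[\psi_1]=[\psi_2]$ in $\CP^1$. This gives a set-theoretic bijection $\CP^1\to S^2$.

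For smoothness I would work in the standard atlas on $\CP^1$ with affine coordinate $w=\beta/\alpha\in\C$ on the chart $\{\alpha\neq0\}$ (and $w'=\alpha/\beta$ on the other chart). On $\{\alpha\neq0\}$ one has $\ket\psi\propto(1,w)$, so $\rho=\frac{1}{1+|w|^2}\begin{pmatrix}1 & \bar w\\ w & |w|^2\end{pmatrix}$ and hence $(x,y,z)=\frac{1}{1+|w|^2}(2\operatorname{Re}w,\,2\operatorname{Im}w,\,1-|w|^2)$ — this is literally inverse stereographic projection, manifestly real-analytic in $\operatorname{Re}w,\operatorname{Im}w$ with nowhere-vanishing Jacobian (a quick determinant check, or the classical fact that stereographic projection is a diffeomorphism onto its image). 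The chart $\{\beta\neq0\}$ is handled symmetrically and covers the south pole. Since the two charts cover $\CP^1$ and their images cover $S^2$, and the inverse on each image is inverse stereographic projection (smooth), the map is a diffeomorphism.

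The only real subtlety — and the step I would be most careful about — is the bookkeeping at the poles and the transition between charts: one must check that the two local formulas agree on the overlap and that the claimed map is well-defined on projective classes (independent of the chosen representative), which is exactly where global-phase invariance is used. Everything else is a routine computation plus invoking the two Bloch propositions already proved; no hard analysis is needed, so the "obstacle" is really just careful patching of the $S^2$ atlas rather than any substantive difficulty.
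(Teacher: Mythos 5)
Your proposal is correct and follows essentially the same route as the paper: the paper's proof is exactly the explicit computation of $(x,y,z)$ from $(\alpha,\beta)$ (equivalently the $\theta,\phi$ parametrization $\cos(\theta/2)\ket0+e^{i\phi}\sin(\theta/2)\ket1$) together with the spherical-angle inverse, which is your surjectivity step. The extra material you add — injectivity via the Bloch expansion $\rho=\tfrac12(I+\vec r\cdot\vec\sigma)$ determining the rank-one projector, and smoothness via the affine chart $w=\beta/\alpha$ where the Bloch map becomes inverse stereographic projection — correctly fills in the bijectivity and diffeomorphism details that the paper's "concrete construction" leaves implicit, so it is a valid (and slightly more complete) version of the same argument.
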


\begin{proof}[Concrete construction]
	Let $\ket{\psi}=\begin{psmallmatrix}\alpha\\ \beta\end{psmallmatrix}$ with $|\alpha|^2+|\beta|^2=1$.
	Then
	\[
	x=2\Re(\alpha\overline{\beta}),\quad
	y=2\Im(\alpha\overline{\beta}),\quad
	z=|\alpha|^2-|\beta|^2,
	\]
	and a direct computation gives $x^2+y^2+z^2=1$.
	Conversely, given $(x,y,z)\in S^2$, choose angles $\theta,\phi$ with
	$z=\cos\theta$, $x=\sin\theta\cos\phi$, $y=\sin\theta\sin\phi$ and define
	\[
	\ket{\psi(\theta,\phi)}=\cos(\theta/2)\ket0+e^{i\phi}\sin(\theta/2)\ket1,
	\]
	which has Bloch vector $(x,y,z)$.
\end{proof}

\begin{figure}[t]
	\centering
	\begin{tikzpicture}[scale=2.2, line cap=round, line join=round]
		\draw (0,0) circle (1);
		\draw[dashed] (-1,0) arc (180:360:1 and 0.35);
		\draw (-1,0) arc (180:0:1 and 0.35);
		
		\draw[->] (0,0) -- (1.25,0) node[right] {$x$};
		\draw[->] (0,0) -- (0,1.25) node[above] {$z$};
		\draw[->] (0,0) -- (-0.75,-0.55) node[left] {$y$};
		
		\fill (0,1) circle (0.03) node[above] {$\ket{0}$};
		\fill (0,-1) circle (0.03) node[below] {$\ket{1}$};
		
		\fill (1,0) circle (0.03) node[right] {$\ket{+}$};
		\fill (-1,0) circle (0.03) node[left] {$\ket{-}$};
		
		\coordinate (P) at (0.55,0.35);
		\fill (P) circle (0.03) node[right] {$\ket{\psi(\theta,\phi)}$};
		\draw[thick] (0,0) -- (P);
		\node at (0.25,0.2) {$\vec r$};
	\end{tikzpicture}
	\caption{Bloch sphere: pure qubit states form $S^2$.
		The Bloch vector $\vec r=(x,y,z)$ encodes expectation values of $(X,Y,Z)$.}
	\label{fig:bloch-sphere}
\end{figure}
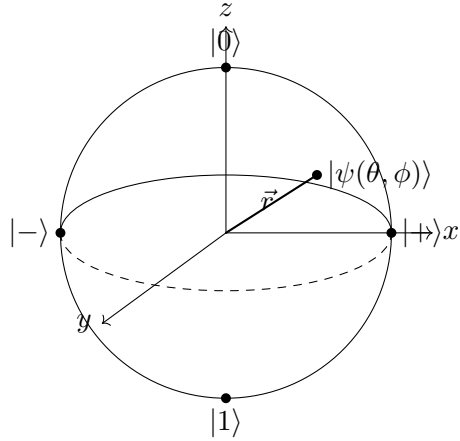

\begin{figure}[t]
	\centering
	\begin{tikzpicture}[scale=2.2, line cap=round, line join=round]
		\draw (0,0) circle (1);
		\draw[dashed] (-1,0) arc (180:360:1 and 0.35);
		\draw (-1,0) arc (180:0:1 and 0.35);
		
		\draw[thick] (0,0) circle (0.55);
		\node at (0.78,0.52) {\small mixed};
		
		\draw[->] (0,0) -- (1.25,0) node[right] {$x$};
		\draw[->] (0,0) -- (0,1.25) node[above] {$z$};
		\draw[->] (0,0) -- (-0.75,-0.55) node[left] {$y$};
		
		\node at (0,-1.2) {\small $\|\vec r\|=1$ pure (surface), $\ \|\vec r\|<1$ mixed (interior)};
	\end{tikzpicture}
	\caption{Bloch ball: mixed states fill the interior. Purity is detected by $\Tr(\rho^2)=\frac12(1+\|\vec r\|^2)$.}
	\label{fig:bloch-ball}
\end{figure}

\subsubsection*{3. Unitary gates are rotations (SU(2) $\to$ SO(3))}

\begin{defn}[Single-qubit gate action]
	A single-qubit gate is a unitary $U\in\SU(2)$ acting on density matrices by
	\[
	\rho \longmapsto U\rho U^\dagger.
	\]
\end{defn}

\begin{prop}[Conjugation preserves states]
	If $\rho$ is a density matrix, then $U\rho U^\dagger$ is also a density matrix.
\end{prop}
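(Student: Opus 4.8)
The plan is to verify directly that conjugation by a unitary preserves each of the three defining properties of a density matrix stated earlier: Hermiticity, positive semidefiniteness, and unit trace. Since $U\in\SU(2)$ is in particular unitary, I will only use $U^\dagger U=UU^\dagger=I$ together with standard facts (cyclicity of the trace, $(AB)^\dagger=B^\dagger A^\dagger$) already in play from the linear-algebra crash course.

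First I would check Hermiticity. Using $(U\rho U^\dagger)^\dagger = (U^\dagger)^\dagger \rho^\dagger U^\dagger = U\rho^\dagger U^\dagger$ and then $\rho^\dagger=\rho$, we get $(U\rho U^\dagger)^\dagger = U\rho U^\dagger$, so the conjugate is Hermitian. Next, the trace: by cyclicity, $\Tr(U\rho U^\dagger) = \Tr(U^\dagger U \rho) = \Tr(\rho) = 1$, where the middle equality uses $U^\dagger U = I$. Finally, positive semidefiniteness: for any $v\in\C^2$, set $w := U^\dagger v$; then
\[
\ip{v}{(U\rho U^\dagger) v} = \ip{U^\dagger v}{\rho\, U^\dagger v} = \ip{w}{\rho w} \ge 0,
\]
since $\rho\succeq 0$. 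Hence $U\rho U^\dagger\succeq 0$. Together these three facts say $U\rho U^\dagger$ is again a density matrix.

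Honestly, there is no real obstacle here—each step is a one-line manipulation. The only point worth stating carefully is the positive-semidefiniteness step: one must invoke the characterization ``$A\succeq 0$ iff $\ip{v}{Av}\ge 0$ for all $v$'' and use the substitution $w=U^\dagger v$, noting that $v\mapsto U^\dagger v$ is a bijection of $\C^2$ so that ranging over all $v$ is the same as ranging over all $w$. As a remark I would also note that the same argument works verbatim in any dimension, so ``conjugation by a unitary maps density matrices to density matrices'' is a general fact, not special to a single qubit; and since purity is measured by $\Tr(\rho^2)$, which is likewise invariant under conjugation ($\Tr((U\rho U^\dagger)^2)=\Tr(U\rho^2 U^\dagger)=\Tr(\rho^2)$), unitary gates preserve the pure/mixed distinction as well.
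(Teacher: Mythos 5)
Your proof is correct and follows essentially the same route as the paper's: checking Hermiticity via $(U\rho U^\dagger)^\dagger = U\rho U^\dagger$, unit trace via cyclicity with $U^\dagger U = I$, and positivity via the substitution $w = U^\dagger v$. The extra remarks (dimension-independence, invariance of $\Tr(\rho^2)$) are correct but not needed for the statement.
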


\begin{proof}
	Hermiticity is preserved since $(U\rho U^\dagger)^\dagger=U\rho U^\dagger$.
	Trace is preserved:
	$\Tr(U\rho U^\dagger)=\Tr(\rho U^\dagger U)=\Tr(\rho)=1$.
	Positivity is preserved because $\bra{v}U\rho U^\dagger\ket{v}=\bra{U^\dagger v}\rho\ket{U^\dagger v}\ge0$.
\end{proof}

\begin{prop}[Rotation action on Bloch vectors]
	Let $\rho=\frac12(I+\vec r\cdot\vec\sigma)$ and $\rho'=U\rho U^\dagger$.
	Then there exists a rotation $R(U)\in\SO(3)$ such that
	\[
	\vec r' = R(U)\,\vec r.
	\]
	Moreover $R(U)=R(-U)$ (double cover).
\end{prop}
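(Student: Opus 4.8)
The plan is to construct $R(U)$ directly from the conjugation action on the Pauli part of $\rho$, then check in turn that it is linear, orthogonal, and orientation-preserving, and finally read off $R(U)=R(-U)$.

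\textbf{Constructing the map and linearity.} Since $UIU^\dagger=I$, writing $\rho=\tfrac12(I+\vec r\cdot\vec\sigma)$ gives $U\rho U^\dagger=\tfrac12\big(I+U(\vec r\cdot\vec\sigma)U^\dagger\big)$. The matrix $U(\vec r\cdot\vec\sigma)U^\dagger$ is Hermitian (conjugation preserves $\dagger$) and traceless (cyclicity of $\Tr$), so by the Bloch expansion proposition it equals $\vec r'\cdot\vec\sigma$ for a unique $\vec r'\in\R^3$; define $R(U)\vec r:=\vec r'$. Because $A\mapsto UAU^\dagger$ is $\R$-linear on the three-dimensional space of traceless Hermitian $2\times2$ matrices, and $\vec a\mapsto\vec a\cdot\vec\sigma$ is a linear isomorphism onto that space, $R(U)$ is a well-defined linear endomorphism of $\R^3$.

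\textbf{Orthogonality.} Here I would reuse the identity $(\vec a\cdot\vec\sigma)^2=\|\vec a\|^2 I$ already used for the purity formula: $\|\vec r'\|^2 I=(\vec r'\cdot\vec\sigma)^2=U(\vec r\cdot\vec\sigma)^2U^\dagger=U(\|\vec r\|^2 I)U^\dagger=\|\vec r\|^2 I$, so $\|R(U)\vec r\|=\|\vec r\|$ for all $\vec r$ and hence $R(U)\in O(3)$. (Equivalently, the bilinear form $\langle\vec a,\vec b\rangle:=\tfrac12\Tr\big((\vec a\cdot\vec\sigma)(\vec b\cdot\vec\sigma)\big)$ is the Euclidean inner product and is conjugation-invariant by cyclicity of the trace.)

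\textbf{Orientation and the double cover.} The one step I expect to be genuinely nontrivial is $\det R(U)=+1$. The clean argument is connectedness: $\SU(2)$ is the unit sphere $\{(\alpha,\beta):|\alpha|^2+|\beta|^2=1\}\cong S^3$, hence path-connected, and $U\mapsto R(U)$ is continuous (its entries are polynomials in the entries of $U$ and $U^\dagger$), so $U\mapsto\det R(U)\in\{+1,-1\}$ is constant; since $R(I)=I$ it equals $+1$ everywhere. The elementary, topology-free alternative is to parametrize a general $U\in\SU(2)$ as $\exp(-i\tfrac{\theta}{2}\,\hat n\cdot\vec\sigma)$ and apply the Hadamard lemma $e^{A}Be^{-A}=\sum_{k\ge0}\tfrac{1}{k!}\,(\mathrm{ad}_A)^k(B)$ with the Pauli commutation relations to show explicitly that $R(U)$ is rotation by angle $\theta$ about $\hat n$ — self-contained but more computational. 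Finally, the double-cover relation is immediate: $(-U)\rho(-U)^\dagger=U\rho U^\dagger$, so the same $\vec r'$ results and $R(-U)=R(U)$; thus $R$ factors through $\SU(2)/\{\pm I\}$ and is not injective. (As a bonus one reads off $R(UV)=R(U)R(V)$ from $(UV)\rho(UV)^\dagger=U(V\rho V^\dagger)U^\dagger$, though the statement does not require it.)
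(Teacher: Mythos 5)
Your proposal is correct and follows essentially the same route as the paper: define $R(U)$ from the conjugation action on the Pauli (traceless Hermitian) part, get orthogonality from invariance of the trace/Hilbert--Schmidt inner product (your norm argument via $(\vec a\cdot\vec\sigma)^2=\|\vec a\|^2 I$ is an equivalent phrasing), obtain $\det R(U)=+1$ by continuity and connectedness of $\SU(2)$ with $R(I)=I$, and read off $R(-U)=R(U)$ from $(-U)\rho(-U)^\dagger=U\rho U^\dagger$. The extra remarks (explicit axis-angle alternative, the homomorphism property) are fine additions but not needed.
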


\begin{proof}
	Define $R_{ij}:=\frac12\Tr(\sigma_i U\sigma_j U^\dagger)$.
	Then $U\sigma_jU^\dagger=\sum_i R_{ij}\sigma_i$ and
	\[
	\rho'=\frac12\Bigl(I+\sum_j r_j U\sigma_jU^\dagger\Bigr)
	=\frac12\Bigl(I+\sum_i (\sum_j R_{ij}r_j)\sigma_i\Bigr),
	\]
	so $\vec r'=R(U)\vec r$.
	Conjugation preserves the Hilbert--Schmidt inner product, hence $R(U)^TR(U)=I$.
	Continuity and connectedness imply $\det R(U)=+1$, so $R(U)\in\SO(3)$.
	Finally $(-U)\sigma_j(-U)^\dagger=U\sigma_jU^\dagger$, so $R(-U)=R(U)$.
\end{proof}

\begin{rem}[Practical computation]
	To find $R(U)$ explicitly, compute $UXU^\dagger, UYU^\dagger, UZU^\dagger$ and express each in the Pauli basis.
\end{rem}

\subsubsection*{4. Standard rotation gates and their geometric meaning}

\begin{defn}[Axis rotations]
	For $\theta\in\R$ define
	\[
	R_X(\theta)=e^{-i\frac{\theta}{2}X},\qquad
	R_Y(\theta)=e^{-i\frac{\theta}{2}Y},\qquad
	R_Z(\theta)=e^{-i\frac{\theta}{2}Z}.
	\]
\end{defn}

\begin{prop}[Closed forms]
	Using $X^2=Y^2=Z^2=I$,
	\[
	R_\sigma(\theta)=\cos(\tfrac{\theta}{2})I-i\sin(\tfrac{\theta}{2})\,\sigma
	\quad (\sigma\in\{X,Y,Z\}).
	\]
	In particular,
	\[
	R_Z(\theta)=
	\begin{pmatrix}
		e^{-i\theta/2} & 0\\
		0 & e^{i\theta/2}
	\end{pmatrix}
	\sim
	\begin{pmatrix}
		1 & 0\\
		0 & e^{i\theta}
	\end{pmatrix}
	\quad(\text{equal up to global phase}).
	\]
\end{prop}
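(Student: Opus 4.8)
The plan is to expand the matrix exponential as a power series and exploit the involution property $\sigma^2 = I$ to resum the even and odd parts separately. First I would write
\[
R_\sigma(\theta) = e^{-i\frac{\theta}{2}\sigma} = \sum_{n=0}^\infty \frac{1}{n!}\left(-i\frac{\theta}{2}\right)^n \sigma^n,
\]
which converges absolutely in operator norm (the exponential series converges for any bounded operator, and here $\|\sigma\|=1$), so I may freely split it into the $n$-even and $n$-odd subsums.

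Next I would use $\sigma^2 = I$, which forces $\sigma^{2k} = I$ and $\sigma^{2k+1} = \sigma$, together with $(-i)^{2k} = (-1)^k$ and $(-i)^{2k+1} = -i(-1)^k$. Factoring $I$ out of the even terms and $\sigma$ out of the odd terms gives
\[
R_\sigma(\theta) = \left(\sum_{k=0}^\infty \frac{(-1)^k}{(2k)!}\left(\tfrac{\theta}{2}\right)^{2k}\right) I \;-\; i\left(\sum_{k=0}^\infty \frac{(-1)^k}{(2k+1)!}\left(\tfrac{\theta}{2}\right)^{2k+1}\right)\sigma,
\]
and recognizing the two scalar series as the Taylor expansions of $\cos(\theta/2)$ and $\sin(\theta/2)$ yields $R_\sigma(\theta) = \cos(\theta/2)\,I - i\sin(\theta/2)\,\sigma$ for each $\sigma\in\{X,Y,Z\}$ at once.

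For the $R_Z$ specialization I would simply substitute $\sigma = Z = \mathrm{diag}(1,-1)$ into the formula just proved, so that
\[
R_Z(\theta) = \cos(\tfrac{\theta}{2})\,I - i\sin(\tfrac{\theta}{2})\,Z = \mathrm{diag}\big(\cos\tfrac{\theta}{2} - i\sin\tfrac{\theta}{2},\; \cos\tfrac{\theta}{2} + i\sin\tfrac{\theta}{2}\big) = \mathrm{diag}(e^{-i\theta/2},\, e^{i\theta/2})
\]
by Euler's formula on each diagonal entry. Multiplying through by the global phase $e^{i\theta/2}$ — which by the global-phase invariance proposition does not change the physical gate — turns this into $\mathrm{diag}(1, e^{i\theta})$, which is the stated ``$\sim$'' equivalence.

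I do not expect a genuine obstacle here; the only point deserving a word of care is the legitimacy of regrouping the exponential series into even/odd blocks, which is justified by absolute convergence in operator norm. An alternative that sidesteps the series entirely is to note that both sides satisfy the ODE $\frac{d}{d\theta}f(\theta) = -\tfrac{i}{2}\sigma f(\theta)$ with $f(0) = I$ and invoke uniqueness of solutions, but the power-series route is the most direct and matches the hint ``Using $X^2 = Y^2 = Z^2 = I$.''
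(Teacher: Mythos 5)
Your proposal is correct and follows essentially the same route as the paper: expand $e^{-i\frac{\theta}{2}\sigma}$ as a power series, use $\sigma^2=I$ so that even powers contribute $I$ and odd powers contribute $\sigma$, and resum into $\cos(\theta/2)I - i\sin(\theta/2)\sigma$. Your additional remarks on absolute convergence and the explicit diagonal computation for $R_Z$ are fine elaborations of the same argument.
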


\begin{proof}
	Expand the exponential using $\sigma^2=I$:
	odd powers contribute $\sigma$, even powers contribute $I$.
\end{proof}

\begin{prop}[$Z$-rotation rotates the equator]
	Under the Bloch map, $R_Z(\theta)$ rotates $(x,y)$ by angle $\theta$ and leaves $z$ unchanged:
	\[
	(x,y,z)\mapsto (x\cos\theta-y\sin\theta,\ x\sin\theta+y\cos\theta,\ z).
	\]
\end{prop}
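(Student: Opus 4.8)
The plan is to reduce everything to the general mechanism already proved in the proposition on the rotation action on Bloch vectors: if $U\sigma_j U^\dagger=\sum_i R_{ij}\sigma_i$ in the Pauli basis, then conjugation by $U$ sends the Bloch vector $\vec r$ to $R\vec r$. So it suffices to compute the three conjugations $R_Z(\theta)\,X\,R_Z(\theta)^\dagger$, $R_Z(\theta)\,Y\,R_Z(\theta)^\dagger$, $R_Z(\theta)\,Z\,R_Z(\theta)^\dagger$, express each in the basis $\{X,Y,Z\}$, read off the columns of $R(\theta)\in\SO(3)$, and apply $R(\theta)$ to $(x,y,z)^T$.

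First I would use the closed form from the preceding proposition, $R_Z(\theta)=\mathrm{diag}(e^{-i\theta/2},e^{i\theta/2})$; the choice of representative up to global phase is irrelevant here since conjugation is insensitive to a global phase. A direct $2\times2$ multiplication then gives $R_Z(\theta)\,Z\,R_Z(\theta)^\dagger=Z$ immediately, because $Z$ is diagonal and hence commutes with $R_Z(\theta)$. For the off-diagonal Paulis, conjugation only reweights the entries by phases: the $(1,2)$ entry of $X$ becomes $e^{-i\theta}$, and that of $Y$ becomes $-ie^{-i\theta}$. Re-expanding $e^{\mp i\theta}=\cos\theta\mp i\sin\theta$ in terms of $X$ and $Y$ yields $R_Z(\theta)\,X\,R_Z(\theta)^\dagger=\cos\theta\,X+\sin\theta\,Y$ and $R_Z(\theta)\,Y\,R_Z(\theta)^\dagger=-\sin\theta\,X+\cos\theta\,Y$. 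Collecting these three identities as the columns of $R(\theta)$ gives
\[
R(\theta)=\begin{pmatrix}\cos\theta & -\sin\theta & 0\\ \sin\theta & \cos\theta & 0\\ 0 & 0 & 1\end{pmatrix},
\]
and $\vec r'=R(\theta)\vec r$ is exactly the claimed map $(x,y,z)\mapsto(x\cos\theta-y\sin\theta,\ x\sin\theta+y\cos\theta,\ z)$. An equivalent route, avoiding the general proposition, is to substitute $\rho=\tfrac12(I+xX+yY+zZ)$ directly into $R_Z(\theta)\rho R_Z(\theta)^\dagger$, distribute the conjugation over the four terms (the $I$ term is fixed), and collect Pauli coefficients — the same three sub-computations appear.

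There is no real mathematical obstacle; the whole content is a short explicit instance of the $\SU(2)\to\SO(3)$ map restricted to the diagonal subgroup. The one thing to watch is sign and phase bookkeeping in the conjugation, since it is easy to end up rotating by $-\theta$ or to interchange the roles of $X$ and $Y$. I would pin down all signs by checking the case $\theta=\pi/2$: $R_Z(\pi/2)$ should send the state $\ket{+}$ (Bloch vector $(1,0,0)$) to the state with Bloch vector $(0,1,0)$, which is consistent with the matrix above.
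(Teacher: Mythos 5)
Your proof is correct and follows essentially the same route as the paper: compute the conjugations $R_Z(\theta)XR_Z(\theta)^\dagger=\cos\theta\,X+\sin\theta\,Y$, $R_Z(\theta)YR_Z(\theta)^\dagger=-\sin\theta\,X+\cos\theta\,Y$, note that $Z$ commutes with $R_Z(\theta)$, and read off the induced $\SO(3)$ action on $(x,y,z)$ via the Bloch expansion of $\rho$. Packaging the three identities as columns of $R(\theta)$ versus substituting directly into $\rho=\tfrac12(I+xX+yY+zZ)$ (the paper's wording, and your stated alternative) is only a cosmetic difference, and your $\theta=\pi/2$ sign check is consistent with the paper's convention.
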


\begin{proof}
	Compute $R_Z(\theta)XR_Z(\theta)^\dagger = (\cos\theta)X+(\sin\theta)Y$ and
	$R_Z(\theta)YR_Z(\theta)^\dagger = (-\sin\theta)X+(\cos\theta)Y$, while $Z$ commutes with $R_Z(\theta)$.
	Substitute into $\rho=\frac12(I+xX+yY+zZ)$ and read off coefficients.
\end{proof}

\begin{figure}[t]
	\centering
	\begin{tikzpicture}[scale=2.3, line cap=round, line join=round]
		\draw (0,0) circle (1);
		\draw[dashed] (-1,0) arc (180:360:1 and 0.35);
		\draw (-1,0) arc (180:0:1 and 0.35);
		
		\draw[->] (0,0) -- (1.25,0) node[right] {$x$};
		\draw[->] (0,0) -- (0,1.25) node[above] {$z$};
		
		\draw[->, thick] (0.55,0.0) arc (0:60:0.55 and 0.2);
		\node at (0.58,0.28) {\small $R_Z(\theta)$};
		
		\fill (0.75,0) circle (0.03) node[right] {\small before};
		\fill (0.40,0.20) circle (0.03) node[above right] {\small after};
		
		\node at (0,-1.25) {\small $R_Z(\theta)$ rotates about $z$: phase motion becomes visible after basis change};
	\end{tikzpicture}
	\caption{$Z$-axis rotations move states along the equator (changing relative phase).}
	\label{fig:bloch-rz}
\end{figure}

\subsection*{Concrete algorithms on the Bloch sphere}

\subsubsection*{Algorithm template (used repeatedly)}
A typical single-qubit procedure is:
\begin{enumerate}
	\item \textbf{Prepare} a reference state (often $\ket0$).
	\item \textbf{Rotate} by a gate sequence (a path on the sphere).
	\item \textbf{Measure} along an axis (often $Z$ in hardware).
	\item \textbf{Post-process} classically (average bits, estimate expectations, update parameters).
\end{enumerate}
Only the unitary evolution is quantum; everything else is classical control and estimation.

\subsubsection*{Algorithm 1: Basis change (turn $Z$ readout into $X$ readout)}

\begin{defn}[Hadamard gate]
	\[
	H=\frac{1}{\sqrt{2}}
	\begin{pmatrix}
		1 & 1\\
		1 & -1
	\end{pmatrix}.
	\]
\end{defn}

\begin{prop}[Hadamard swaps axes]
	\[
	HZH = X,\qquad HXH = Z,\qquad HYH = -Y.
	\]
\end{prop}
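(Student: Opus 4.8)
The plan is to reduce the proposition to two structural facts already in hand: $H$ is both Hermitian and unitary, so $H^\dagger = H$ and $H^2 = I$. Consequently each of $HZH$, $HXH$, $HYH$ is a genuine conjugation of a Pauli matrix by a unitary; it is Hermitian, traceless (conjugation preserves trace and $\Tr\sigma = 0$), hence lies in the real span of $\{X,Y,Z\}$. So the only content is to \emph{identify} which Pauli comes out, and that is a finite $2\times 2$ computation.

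First I would verify $HZH = X$ directly. One can simply multiply the three $2\times 2$ matrices and watch the product collapse to $\left(\begin{smallmatrix}0&1\\1&0\end{smallmatrix}\right)$. I prefer a slightly slicker route that exposes the mechanism: write $H = \tfrac{1}{\sqrt2}(X+Z)$, so that $HZH = \tfrac12 (X+Z)Z(X+Z)$, and expand using $X^2 = Z^2 = I$ together with the anticommutation $XZ = -ZX$; the cross terms combine via $XZX = X(ZX) = -X^2 Z = -Z$ to give $\tfrac12(2X) = X$.

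Second, $HXH = Z$ then costs nothing: conjugating the identity $HZH = X$ by $H$ and using $H^2 = I$ gives $HXH = H(HZH)H = H^2 Z H^2 = Z$. Third, for $HYH = -Y$ I would avoid a new matrix product by invoking the Pauli identity $Y = iXZ$ (a one-line check). Inserting $H^2 = I$ strictly between the two conjugations,
\[
HYH = iHXZH = i(HXH)(HZH) = i\,Z\,X = iZX = -Y ,
\]
since $iZX = i\left(\begin{smallmatrix}0&1\\-1&0\end{smallmatrix}\right) = -Y$.

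There is no genuine obstacle here — it is a routine finite computation — and the only care needed is sign bookkeeping: the anticommutator $XZ = -ZX$, the correct sign in $Y = iXZ$ (not $-iXZ$), and keeping the $H^2 = I$ insertion between the two conjugations so that $HXZH = (HXH)(HZH)$. As a consistency check against the earlier $\SU(2)\to\SO(3)$ picture, the associated $R(H)\in\SO(3)$ is the $\pi$-rotation about the diagonal axis $(1,0,1)/\sqrt2$, which by inspection interchanges the $x$- and $z$-axes and negates $y$ — exactly what $HXH=Z$, $HZH=X$, $HYH=-Y$ assert.
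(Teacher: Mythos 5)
Your proof is correct; every sign checks out ($HZH=\tfrac12\bigl(XZX+2X+Z\bigr)=X$ since $XZX=-Z$, the conjugation step $HXH=H(HZH)H=H^2ZH^2=Z$ is valid, and $iZX=-Y$ with the convention $Y=iXZ$). It does, however, take a different route from the paper. The paper's proof of this proposition is brute force: it multiplies the $2\times2$ matrices to get $HZH=X$ and declares that the other two identities "follow similarly" (and where the same table reappears in the circuit-identities chapter, it again computes $HXH$ by explicit matrix multiplication, using the $Y=iXZ$ trick only for the last identity). You instead never multiply out $H$ at all after the first identity: you use the decomposition $H=\tfrac{1}{\sqrt2}(X+Z)$ together with the Pauli relations $X^2=Z^2=I$, $XZ=-ZX$, then bootstrap $HXH=Z$ from $HZH=X$ via $H^2=I$, and get $HYH=-Y$ from $Y=iXZ$. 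What the paper's approach buys is immediacy — a reader who knows only matrix multiplication can verify everything — while yours buys structure: the computation is organized around the Pauli algebra and the involution $H^2=I$, which makes the sign bookkeeping transparent, generalizes to other Clifford conjugations, and links cleanly to the $\SU(2)\to\SO(3)$ picture you mention at the end (a $\pi$-rotation about $(1,0,1)/\sqrt2$). Both are complete proofs; yours simply front-loads a small amount of algebraic machinery that the paper defers.
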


\begin{proof}
	Direct multiplication gives $HZH=X$. The others follow similarly.
\end{proof}

\begin{prop}[Measurement equivalence]
	Measuring $Z$ after applying $H$ is equivalent to measuring $X$ before applying $H$:
	\[
	\Pr(Z=\pm1 \text{ on } H\ket{\psi}) = \Pr(X=\pm1 \text{ on } \ket{\psi}).
	\]
\end{prop}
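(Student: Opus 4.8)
The plan is to reduce both sides of the claimed equality to the Born rule and then exploit the already-established Hadamard identities ($H^\dagger = H$, $H^2 = I$, and $HZH = X$) to turn the statement into a single operator identity between spectral projectors. So nothing new needs to be computed beyond what the preceding propositions give.

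First I would fix notation for the two measurements via their eigenprojectors. Write $\Pi^Z_{+} = \ket{0}\bra{0}$ and $\Pi^Z_{-} = \ket{1}\bra{1}$ for the $Z = \pm 1$ eigenspaces, and $\Pi^X_{+} = \ket{+}\bra{+}$, $\Pi^X_{-} = \ket{-}\bra{-}$ for the $X = \pm 1$ eigenspaces. By the Born rule, $\Pr(Z = \pm 1 \text{ on } H\ket{\psi}) = \bra{\psi} H^\dagger \Pi^Z_{\pm} H \ket{\psi}$ and $\Pr(X = \pm 1 \text{ on } \ket{\psi}) = \bra{\psi}\Pi^X_{\pm}\ket{\psi}$. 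Since $H^\dagger = H$, the proposition follows once I establish the operator identity $H\,\Pi^Z_{\pm}\,H = \Pi^X_{\pm}$.

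Next I would verify that identity directly, using $H\ket{0} = \ket{+}$, $H\ket{1} = \ket{-}$ (and hence $\bra{0}H = \bra{+}$, $\bra{1}H = \bra{-}$ by Hermiticity): $H\ket{0}\bra{0}H = \ket{+}\bra{+} = \Pi^X_{+}$ and $H\ket{1}\bra{1}H = \ket{-}\bra{-} = \Pi^X_{-}$. Substituting back gives $\Pr(Z = \pm 1 \text{ on } H\ket{\psi}) = \bra{\psi}\Pi^X_{\pm}\ket{\psi} = \Pr(X = \pm 1 \text{ on } \ket{\psi})$. An alternative one-line route, which I might give instead, is to skip projectors and write $\Pr(Z = +1 \text{ on } H\ket{\psi}) = |\braket{0}{H\psi}|^2 = |\braket{+}{\psi}|^2$ directly from $\bra{0}H = \bra{+}$, and similarly for the $-1$ outcome.

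I do not expect a genuine obstacle here; the only thing that needs care is bookkeeping — pairing the correct $X$-eigenvector with the correct $Z$-eigenvector (the sign in the companion identity $HYH = -Y$ is a reminder that such correspondences must be checked, not assumed) and keeping $H = H^\dagger$ explicit so that $H(\cdot)H$ is literally the conjugation $U(\cdot)U^\dagger$ from the state-update rule. It is worth flagging in the writeup that this proposition is just the $U = H$ instance of a general principle — measuring an observable $A$ on $U\ket{\psi}$ has the same statistics as measuring $U^\dagger A U$ on $\ket{\psi}$ — which will be reused whenever a basis change is commuted past a measurement.
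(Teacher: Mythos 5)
Your proof is correct. The only difference from the paper is one of emphasis: the paper proves the proposition as a one-liner, invoking exactly the general principle you relegate to your closing remark --- measuring $A$ on $U\ket{\psi}$ has the same statistics as measuring $U^\dagger A U$ on $\ket{\psi}$ --- and then citing $H^\dagger Z H = X$ from the conjugation table just established. Your primary route instead works at the level of spectral projectors, verifying $H\,\Pi^Z_{\pm}\,H = \Pi^X_{\pm}$ directly from $H\ket{0}=\ket{+}$, $H\ket{1}=\ket{-}$ and then applying the Born rule on each side. The two arguments are equivalent (conjugating the observable versus conjugating its eigenprojectors), and in fact the paper itself presents both as alternative routes in a later exercise solution. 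Yours is marginally more self-contained, since it never needs the spectral-decomposition link between the $\pm1$ outcomes of $X$ and the projectors $\ket{\pm}\bra{\pm}$ to be taken as given; the paper's is shorter because it leans on the already-proved identity $HZH=X$ and the general conjugation principle. Either writeup is fine; your explicit statement that $H=H^\dagger$ is needed to read $H(\cdot)H$ as $U(\cdot)U^\dagger$ is a correct and worthwhile point of care.
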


\begin{proof}
	In general, measuring observable $A$ on $U\ket{\psi}$ is equivalent to measuring $U^\dagger A U$ on $\ket{\psi}$.
	Here $U=H$ and $A=Z$, so $H^\dagger Z H = X$.
\end{proof}

\begin{figure}[t]
	\centering
	\begin{tikzpicture}[scale=2.2, line cap=round, line join=round]
		\draw (0,0) circle (1);
		\draw[dashed] (-1,0) arc (180:360:1 and 0.35);
		\draw (-1,0) arc (180:0:1 and 0.35);
		
		\draw[->] (0,0) -- (1.25,0) node[right] {$x$};
		\draw[->] (0,0) -- (0,1.25) node[above] {$z$};
		
		\fill (0,1) circle (0.03) node[above] {\small $+Z$};
		\fill (1,0) circle (0.03) node[right] {\small $+X$};
		
		\draw[->, thick] (0,1.0) to[bend left=25] (1.0,0);
		\node at (0.55,0.75) {\small $H$};
		
		\node at (0,-1.15) {\small pre-rotate by $H$ to reuse $Z$-hardware readout for $X$-measurement};
	\end{tikzpicture}
	\caption{Hadamard as an axis swap: it maps $Z$ to $X$ under conjugation.}
	\label{fig:bloch-hadamard}
\end{figure}

\subsubsection*{Algorithm 2: Phase becomes observable after basis change}

\begin{prop}[Phase-to-amplitude conversion]
	Let $\ket{\psi}=\ket{+}=\frac{1}{\sqrt2}(\ket0+\ket1)$.
	Apply $R_Z(\theta)$ and then apply $H$, and finally measure $Z$.
	The probability of outcome $0$ is
	\[
	\Pr(0)=\cos^2(\tfrac{\theta}{2}),
	\qquad
	\Pr(1)=\sin^2(\tfrac{\theta}{2}).
	\]
\end{prop}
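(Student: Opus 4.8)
The plan is to compute the output state explicitly in the computational basis and then apply the Born rule; the whole argument is a short coordinate calculation using the closed form for $R_Z(\theta)$ established earlier in this section.

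First I would write the state after the phase gate. Using $R_Z(\theta)=\begin{pmatrix}e^{-i\theta/2}&0\\0&e^{i\theta/2}\end{pmatrix}$ from the closed-form proposition, acting on $\ket{+}=\tfrac{1}{\sqrt2}(\ket0+\ket1)$ gives $\tfrac{1}{\sqrt2}\bigl(e^{-i\theta/2}\ket0+e^{i\theta/2}\ket1\bigr)$. Next I would apply the Hadamard matrix $H=\tfrac{1}{\sqrt2}\begin{pmatrix}1&1\\1&-1\end{pmatrix}$, obtaining amplitude $\tfrac12(e^{-i\theta/2}+e^{i\theta/2})$ on $\ket0$ and $\tfrac12(e^{-i\theta/2}-e^{i\theta/2})$ on $\ket1$. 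By Euler's formula these simplify to $\cos(\theta/2)$ and $-i\sin(\theta/2)$ respectively, so the final state is $\cos(\theta/2)\ket0-i\sin(\theta/2)\ket1$. Taking squared magnitudes of the amplitudes then yields $\Pr(0)=\cos^2(\theta/2)$ and $\Pr(1)=\sin^2(\theta/2)$, and as a consistency check they sum to $1$.

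An alternative, more geometric route reuses machinery already developed here. By the measurement-equivalence proposition, measuring $Z$ after $H$ is the same as measuring $X=H^\dagger Z H$ on the state $R_Z(\theta)\ket{+}$. The Bloch vector of $\ket{+}$ is $(1,0,0)$, and by the ``$Z$-rotation rotates the equator'' proposition it is sent to $(\cos\theta,\sin\theta,0)$, so $\langle X\rangle=\cos\theta$. Since the $X$-outcome is $\pm1$-valued with mean $\cos\theta$, the Born rule gives $\Pr(+1)=\tfrac12(1+\cos\theta)=\cos^2(\theta/2)$ and $\Pr(-1)=\tfrac12(1-\cos\theta)=\sin^2(\theta/2)$, matching the claim after identifying outcome $0$ with $+1$.

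There is essentially no hard step: the only care needed is (i) keeping the global-phase bookkeeping straight so the $\ket1$ amplitude comes out as $-i\sin(\theta/2)$ rather than with a sign slip, and (ii) invoking the half-angle identity $\cos^2(\theta/2)=\tfrac12(1+\cos\theta)$ to reconcile the amplitude form with the expectation-value form. I would present the direct computation as the main proof and keep the geometric version as a one-line sanity check that illustrates the section's viewpoint.
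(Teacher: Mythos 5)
Your main argument is exactly the paper's proof: apply $R_Z(\theta)$ to $\ket{+}$, apply $H$, simplify the amplitudes to $\cos(\theta/2)$ and $-i\sin(\theta/2)$, and square; the Bloch-vector route you add is just a consistency check, also consonant with the section's geometric remarks. The proposal is correct and matches the paper's approach.
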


\begin{proof}
	Start with $\ket{+}$, then
	\[
	R_Z(\theta)\ket{+}
	=
	\frac{1}{\sqrt2}\bigl(e^{-i\theta/2}\ket0+e^{i\theta/2}\ket1\bigr)
	\sim
	\frac{1}{\sqrt2}\bigl(\ket0+e^{i\theta}\ket1\bigr).
	\]
	Apply $H$:
	\[
	H\frac{1}{\sqrt2}\bigl(e^{-i\theta/2}\ket0+e^{i\theta/2}\ket1\bigr)
	=
	\frac12
	\Bigl((e^{-i\theta/2}+e^{i\theta/2})\ket0 + (e^{-i\theta/2}-e^{i\theta/2})\ket1\Bigr)
	=
	\cos(\tfrac{\theta}{2})\ket0 - i\sin(\tfrac{\theta}{2})\ket1.
	\]
	Measuring $Z$ gives $\Pr(0)=|\cos(\theta/2)|^2$ and $\Pr(1)=|\sin(\theta/2)|^2$.
\end{proof}

\begin{rem}
	The geometric story is: $R_Z(\theta)$ moves along the equator (phase motion),
	and $H$ rotates axes so that this equatorial displacement becomes a change in $Z$-projection.
\end{rem}

\subsubsection*{Algorithm 3: Gate synthesis by Euler angles}

\begin{prop}[Euler-angle decomposition in $\SU(2)$]
	Any $U\in\SU(2)$ can be written as
	\[
	U = R_Z(\alpha)\,R_Y(\beta)\,R_Z(\gamma)
	\quad\text{for some }\alpha,\beta,\gamma\in\R.
	\]
\end{prop}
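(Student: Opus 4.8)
The plan is to verify the decomposition by an explicit $2\times 2$ matrix computation, using only the closed forms already established for $R_Z$ and $R_Y$. First I would fix the standard parametrization of $\SU(2)$: from $U^\dagger U = I$ and $\det U = 1$ one gets that every $U\in\SU(2)$ has the form
\[
U=\begin{pmatrix} a & b\\ -\overline{b} & \overline{a}\end{pmatrix},\qquad |a|^2+|b|^2=1.
\]
Next I would compute the right-hand side of the claimed identity using $R_Z(\theta)=\mathrm{diag}(e^{-i\theta/2},e^{i\theta/2})$ and $R_Y(\beta)=\cos(\beta/2)I-i\sin(\beta/2)Y$ from the preceding proposition; multiplying the three matrices yields
\[
R_Z(\alpha)R_Y(\beta)R_Z(\gamma)=
\begin{pmatrix}
e^{-i(\alpha+\gamma)/2}\cos(\beta/2) & -e^{-i(\alpha-\gamma)/2}\sin(\beta/2)\\[2pt]
e^{i(\alpha-\gamma)/2}\sin(\beta/2) & e^{i(\alpha+\gamma)/2}\cos(\beta/2)
\end{pmatrix}.
\]

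The remaining step is to solve for $(\alpha,\beta,\gamma)$ so this equals $U$. I would choose $\beta\in[0,\pi]$ with $\cos(\beta/2)=|a|$ and $\sin(\beta/2)=|b|$, which is consistent precisely because $|a|^2+|b|^2=1$. When $a,b$ are both nonzero, writing $a=|a|e^{i\phi_a}$ and $b=|b|e^{i\phi_b}$ turns the phase-matching of the $(1,1)$ and $(1,2)$ entries into the linear system $\alpha+\gamma\equiv -2\phi_a$ and $\alpha-\gamma\equiv 2\pi-2\phi_b \pmod{4\pi}$, which always has a solution. One then observes that the $(2,1)$ and $(2,2)$ entries are automatically matched, since in both the target $U$ and the product they are the complex conjugates forced by the $\SU(2)$ form. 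I would close by noting that $R_Z$ and $R_Y$ are $4\pi$-periodic, so working modulo $4\pi$ is harmless.

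The only delicate point is the degenerate ``gimbal-lock'' case: if $a=0$ then $\beta=\pi$ and the diagonal entries vanish, so only $\alpha-\gamma$ is constrained; if $b=0$ then $\beta=0$ and only $\alpha+\gamma$ is constrained. In these cases the triple $(\alpha,\beta,\gamma)$ fails to be unique, but \emph{existence} still holds (just fix the undetermined angle arbitrarily), so the proposition as stated is unaffected. An alternative route would be to pass to $\SO(3)$ via the double cover $\SU(2)\to\SO(3)$ from the earlier rotation proposition, invoke the classical $zyz$-Euler decomposition of $\SO(3)$, and lift back; this works too, but requires carefully tracking the $\pm U$ sign ambiguity of the lift, which is the main reason I would favor the direct computation above.
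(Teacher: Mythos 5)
Your proof is correct and follows essentially the same route as the paper's constructive argument: parametrize $U\in\SU(2)$ by $(a,b)$ with $|a|^2+|b|^2=1$, set $\beta$ from the magnitudes via $\cos(\beta/2)=|a|$, $\sin(\beta/2)=|b|$, and solve the phase equations for $\alpha,\gamma$ so the explicit product matches $U$. You simply carry out the details the paper's sketch leaves implicit (the explicit product matrix, the automatic matching of the second row, and the gimbal-lock cases), which is fine.
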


\begin{proof}[Constructive sketch]
	Every $U\in\SU(2)$ has the form
	\[
	U=
	\begin{pmatrix}
		a & b\\
		-\overline{b} & \overline{a}
	\end{pmatrix},
	\qquad |a|^2+|b|^2=1.
	\]
	Choose $\beta\in[0,\pi]$ with $\cos(\beta/2)=|a|$, $\sin(\beta/2)=|b|$.
	Choose phases $\alpha,\gamma$ so that
	\[
	a=e^{-i(\alpha+\gamma)/2}\cos(\beta/2),\qquad
	b=-e^{-i(\alpha-\gamma)/2}\sin(\beta/2).
	\]
	Then multiplying out $R_Z(\alpha)R_Y(\beta)R_Z(\gamma)$ matches $U$.
\end{proof}

\begin{rem}[Engineering relevance]
	This underlies compiler synthesis and pulse decomposition: two fixed axes suffice to generate any single-qubit gate (up to global phase).
\end{rem}

\subsection*{Exercises}

\begin{exercise}[Global phase $\Rightarrow$ same Bloch vector]
	Show explicitly that $\ket{\psi}$ and $e^{i\theta}\ket{\psi}$ have identical Bloch vectors.
\end{exercise}

\begin{solution}
	Let
	\[
	\ket{\psi}=\alpha\ket0+\beta\ket1,\qquad |\alpha|^2+|\beta|^2=1,
	\]
	and define the global-phase shifted vector $\ket{\psi'}:=e^{i\theta}\ket{\psi}$.
	
	\smallskip
	\noindent\textbf{Step 1: compute density matrices.}
	\[
	\rho=\ket{\psi}\bra{\psi},\qquad \rho'=\ket{\psi'}\bra{\psi'}.
	\]
	Since $\ket{\psi'}=e^{i\theta}\ket{\psi}$ and $\bra{\psi'}=e^{-i\theta}\bra{\psi}$,
	\[
	\rho'=\ket{\psi'}\bra{\psi'}=(e^{i\theta}\ket{\psi})(e^{-i\theta}\bra{\psi})
	=\ket{\psi}\bra{\psi}=\rho.
	\]
	
	\smallskip
	\noindent\textbf{Step 2: take Bloch coordinates.}
	By definition,
	\[
	x=\Tr(\rho X),\quad y=\Tr(\rho Y),\quad z=\Tr(\rho Z),
	\]
	and similarly
	\[
	x'=\Tr(\rho' X),\quad y'=\Tr(\rho' Y),\quad z'=\Tr(\rho' Z).
	\]
	Because $\rho'=\rho$, we get $x'=x$, $y'=y$, $z'=z$.
	Hence the Bloch vectors are identical.
\end{solution}

\begin{exercise}[Bloch vector of $\ket{\psi(\theta,\phi)}$]
	Compute the Bloch vector of
	\[
	\ket{\psi(\theta,\phi)}=\cos(\theta/2)\ket0+e^{i\phi}\sin(\theta/2)\ket1.
	\]
\end{exercise}

\begin{solution}
	Write the column vector
	\[
	\ket{\psi}=
	\begin{pmatrix}
		\cos(\theta/2)\\ e^{i\phi}\sin(\theta/2)
	\end{pmatrix},
	\qquad
	\bra{\psi}=
	\begin{pmatrix}
		\cos(\theta/2) & e^{-i\phi}\sin(\theta/2)
	\end{pmatrix}.
	\]
	
	\smallskip
	\noindent\textbf{Step 1: compute the density matrix.}
	\[
	\rho=\ket{\psi}\bra{\psi}
	=
	\begin{pmatrix}
		\cos^2(\theta/2) & \cos(\theta/2)\,e^{-i\phi}\sin(\theta/2)\\[0.3em]
		e^{i\phi}\sin(\theta/2)\cos(\theta/2) & \sin^2(\theta/2)
	\end{pmatrix}.
	\]
	So the entries are
	\[
	\rho_{00}=\cos^2(\theta/2),\quad
	\rho_{11}=\sin^2(\theta/2),\quad
	\rho_{01}=\cos(\theta/2)\sin(\theta/2)e^{-i\phi},\quad
	\rho_{10}=\cos(\theta/2)\sin(\theta/2)e^{i\phi}.
	\]
	
	\smallskip
	\noindent\textbf{Step 2: compute $x=\Tr(\rho X)$.}
	Since
	\(
	X=\begin{psmallmatrix}0&1\\1&0\end{psmallmatrix},
	\)
	we have
	\[
	\Tr(\rho X)=\rho_{01}+\rho_{10}.
	\]
	Thus
	\[
	x=\rho_{01}+\rho_{10}
	=\cos(\theta/2)\sin(\theta/2)\bigl(e^{-i\phi}+e^{i\phi}\bigr)
	=2\cos(\theta/2)\sin(\theta/2)\cos\phi
	=\sin\theta\cos\phi.
	\]
	
	\smallskip
	\noindent\textbf{Step 3: compute $y=\Tr(\rho Y)$.}
	Since
	\(
	Y=\begin{psmallmatrix}0&-i\\ i&0\end{psmallmatrix},
	\)
	\[
	\Tr(\rho Y)=-i\rho_{01}+i\rho_{10}.
	\]
	Hence
	\[
	y=-i\rho_{01}+i\rho_{10}
	=i\cos(\theta/2)\sin(\theta/2)\bigl(e^{i\phi}-e^{-i\phi}\bigr)
	=i\cos(\theta/2)\sin(\theta/2)\cdot 2i\sin\phi
	= -2\cos(\theta/2)\sin(\theta/2)\sin\phi
	=\sin\theta\sin\phi.
	\]
	(Here we used $e^{i\phi}-e^{-i\phi}=2i\sin\phi$.)
	
	\smallskip
	\noindent\textbf{Step 4: compute $z=\Tr(\rho Z)$.}
	Since
	\(
	Z=\begin{psmallmatrix}1&0\\0&-1\end{psmallmatrix},
	\)
	\[
	z=\Tr(\rho Z)=\rho_{00}-\rho_{11}
	=\cos^2(\theta/2)-\sin^2(\theta/2)=\cos\theta.
	\]
	
	\smallskip
	\noindent Therefore the Bloch vector is
	\[
	\vec r=(x,y,z)=(\sin\theta\cos\phi,\ \sin\theta\sin\phi,\ \cos\theta).
	\]
\end{solution}

\begin{exercise}[$X,Y,Z$ as $\pi$-rotations on the Bloch sphere]
	Describe geometrically the action of $X$, $Y$, and $Z$ on the Bloch sphere.
\end{exercise}

\begin{solution}
	Let
	\[
	\rho=\frac12(I+xX+yY+zZ)
	\]
	be any qubit state. Applying a Pauli gate $P\in\{X,Y,Z\}$ sends $\rho$ to
	\[
	\rho' = P\rho P^\dagger = P\rho P,
	\]
	since Pauli matrices are Hermitian and unitary.
	
	\smallskip
	\noindent\textbf{Step 1: reduce to conjugating Pauli matrices.}
	\[
	\rho'=\frac12\bigl(I + x(PXP)+y(PYP)+z(PZP)\bigr).
	\]
	So we only need $PXP$, $PYP$, $PZP$.
	
	\smallskip
	\noindent\textbf{Step 2: use Pauli multiplication identities.}
	Using $X^2=Y^2=Z^2=I$ and the anticommutation relations
	\[
	XY=-YX=iZ,\quad YZ=-ZY=iX,\quad ZX=-XZ=iY,
	\]
	we get the key conjugations.
	
	\medskip
	\noindent\underline{Case $P=X$.}
	\[
	X X X = X,\qquad
	X Y X = -Y,\qquad
	X Z X = -Z.
	\]
	(For example, $XYX=(XY)X=(iZ)X=i(ZX)=i(-XZ)=-iXZ=-(iY)=-Y$.)
	Thus
	\[
	\rho'=\frac12(I + xX - yY - zZ),
	\quad\Rightarrow\quad
	(x',y',z')=(x,-y,-z).
	\]
	
	\medskip
	\noindent\underline{Case $P=Y$.}
	\[
	Y X Y = -X,\qquad
	Y Y Y = Y,\qquad
	Y Z Y = -Z.
	\]
	Hence
	\[
	(x',y',z')=(-x,y,-z).
	\]
	
	\medskip
	\noindent\underline{Case $P=Z$.}
	\[
	Z X Z = -X,\qquad
	Z Y Z = -Y,\qquad
	Z Z Z = Z.
	\]
	Hence
	\[
	(x',y',z')=(-x,-y,z).
	\]
	
	\smallskip
	\noindent\textbf{Step 3: geometric interpretation.}
	Each map flips the two coordinates orthogonal to its axis and keeps the axis coordinate fixed:
	\[
	X:\ (x,y,z)\mapsto (x,-y,-z)\quad\text{($\pi$-rotation about the $x$-axis),}
	\]
	\[
	Y:\ (x,y,z)\mapsto (-x,y,-z)\quad\text{($\pi$-rotation about the $y$-axis),}
	\]
	\[
	Z:\ (x,y,z)\mapsto (-x,-y,z)\quad\text{($\pi$-rotation about the $z$-axis).}
	\]
\end{solution}

\begin{exercise}[Hadamard measurement trick]
	Show that applying $H$ followed by a $Z$ measurement is equivalent to measuring $X$.
\end{exercise}

\begin{solution}
	There are two clean routes; both are worth knowing.
	
	\medskip
	\noindent\textbf{Route A (observable conjugation).}
	
	\smallskip
	\noindent\textbf{Step 1: compute the conjugation identity.}
	A direct multiplication gives
	\[
	H Z H = X.
	\]
	Since $H^\dagger=H$, equivalently $H^\dagger Z H = X$.
	
	\smallskip
	\noindent\textbf{Step 2: convert this to measurement statistics.}
	For any state $\rho$, the expectation value of observable $A$ is $\langle A\rangle_\rho=\Tr(\rho A)$.
	After applying $H$, the state becomes $\rho' = H\rho H^\dagger$ and
	\[
	\langle Z\rangle_{\rho'}=\Tr(\rho' Z)=\Tr(H\rho H^\dagger Z)
	=\Tr(\rho\,H^\dagger Z H)
	=\Tr(\rho X)=\langle X\rangle_\rho.
	\]
	So measuring $Z$ after $H$ produces the same $\pm1$ statistics as measuring $X$ before $H$.
	
	\medskip
	\noindent\textbf{Route B (projectors).}
	
	The $Z$-measurement projectors are
	\[
	\Pi_0=\ket0\bra0,\qquad \Pi_1=\ket1\bra1.
	\]
	After applying $H$, the probability of outcome $0$ is
	\[
	p(0)=\Tr(\rho' \Pi_0)=\Tr(H\rho H^\dagger \Pi_0)=\Tr(\rho\,H^\dagger \Pi_0 H).
	\]
	But $H\ket0=\ket{+}$ and $H\ket1=\ket{-}$, hence
	\[
	H^\dagger \Pi_0 H = H\ket0\bra0 H = \ket{+}\bra{+},
	\]
	which is exactly the projector for outcome $+1$ of an $X$-measurement.
	Similarly $H^\dagger \Pi_1 H=\ket{-}\bra{-}$.
	Thus $Z$ after $H$ reproduces $X$-measurement outcomes.
\end{solution}

\begin{exercise}[Axis-angle unitary]
	Given an axis $\vec n\in S^2$ and angle $\theta$, write an explicit unitary that rotates about $\vec n$.
\end{exercise}

\begin{solution}
	Let $\vec n=(n_x,n_y,n_z)$ with $\|\vec n\|=1$ and define
	\[
	\vec n\cdot\vec\sigma := n_xX+n_yY+n_zZ.
	\]
	
	\smallskip
	\noindent\textbf{Step 1: check the square.}
	Using the Pauli multiplication rule,
	\[
	(\vec n\cdot\vec\sigma)^2
	=
	n_x^2X^2+n_y^2Y^2+n_z^2Z^2
	+\text{(cross terms)}.
	\]
	Cross terms cancel because $\sigma_i\sigma_j+\sigma_j\sigma_i=0$ for $i\neq j$.
	Since $X^2=Y^2=Z^2=I$ and $n_x^2+n_y^2+n_z^2=1$,
	\[
	(\vec n\cdot\vec\sigma)^2 = I.
	\]
	
	\smallskip
	\noindent\textbf{Step 2: expand the exponential.}
	Define
	\[
	U(\vec n,\theta):=\exp\!\Bigl(-i\frac{\theta}{2}\,\vec n\cdot\vec\sigma\Bigr).
	\]
	Because $(\vec n\cdot\vec\sigma)^2=I$, the power series separates into even/odd parts:
	\[
	U(\vec n,\theta)
	=
	\sum_{k\ge0}\frac{1}{(2k)!}\Bigl(-i\frac{\theta}{2}\Bigr)^{2k}I
	+
	\sum_{k\ge0}\frac{1}{(2k+1)!}\Bigl(-i\frac{\theta}{2}\Bigr)^{2k+1}(\vec n\cdot\vec\sigma).
	\]
	Hence
	\[
	U(\vec n,\theta)=\cos(\tfrac{\theta}{2})I - i\sin(\tfrac{\theta}{2})(\vec n\cdot\vec\sigma).
	\]
	
	\smallskip
	\noindent\textbf{Step 3: write the explicit $2\times2$ matrix.}
	Compute
	\[
	\vec n\cdot\vec\sigma
	=
	n_x\begin{pmatrix}0&1\\1&0\end{pmatrix}
	+n_y\begin{pmatrix}0&-i\\ i&0\end{pmatrix}
	+n_z\begin{pmatrix}1&0\\0&-1\end{pmatrix}
	=
	\begin{pmatrix}
		n_z & n_x - i n_y\\
		n_x + i n_y & -n_z
	\end{pmatrix}.
	\]
	Therefore
	\[
	U(\vec n,\theta)
	=
	\begin{pmatrix}
		\cos(\tfrac{\theta}{2})-i n_z\sin(\tfrac{\theta}{2})
		&
		-i\sin(\tfrac{\theta}{2})(n_x-i n_y)
		\\[0.4em]
		-i\sin(\tfrac{\theta}{2})(n_x+i n_y)
		&
		\cos(\tfrac{\theta}{2})+i n_z\sin(\tfrac{\theta}{2})
	\end{pmatrix}.
	\]
	This is a standard axis-angle parametrization of $\SU(2)$.
\end{solution}

\begin{exercise}[Phase visibility]
	Explain why a $Z$-rotation may not change $Z$-measurement statistics, and how a basis change makes the phase visible.
\end{exercise}

\begin{solution}
	\textbf{Step 1: $Z$-measurement depends only on $z$.}
	For a state with Bloch vector $\vec r=(x,y,z)$, the $Z$ expectation is
	\[
	\langle Z\rangle=\Tr(\rho Z)=z.
	\]
	Equivalently, the $Z$-measurement probabilities are
	\[
	p(0)=\frac{1+z}{2},\qquad p(1)=\frac{1-z}{2}.
	\]
	So $Z$-statistics depend only on the $z$-coordinate.
	
	\smallskip
	\noindent\textbf{Step 2: $R_Z(\theta)$ does not change $z$.}
	Under $R_Z(\theta)$,
	\[
	(x,y,z)\mapsto(x\cos\theta-y\sin\theta,\ x\sin\theta+y\cos\theta,\ z).
	\]
	So $z$ is unchanged, hence $Z$-measurement probabilities can be unchanged.
	
	\smallskip
	\noindent\textbf{Step 3: a basis change converts equatorial motion into $Z$-readout.}
	Applying $H$ swaps axes: $H^\dagger Z H = X$.
	Thus measuring $Z$ after $H$ on a state $\rho$ is equivalent to measuring $X$ on $\rho$:
	\[
	\langle Z\rangle_{H\rho H}=\Tr(H\rho H Z)=\Tr(\rho\,H Z H)=\Tr(\rho X)=\langle X\rangle_\rho.
	\]
	Since $R_Z(\theta)$ rotates $(x,y)$, it changes $x$ and $y$ in general, so it becomes visible if you read out $X$ (or $Y$) instead of $Z$.
\end{solution}

\begin{exercise}[Hardware-flavored: sensitivity to $\varepsilon$]
	Suppose an intended $R_Z(\theta)$ is implemented as $R_Z(\theta+\varepsilon)$.
	Which measurement setting is most sensitive to $\varepsilon$?
\end{exercise}

\begin{solution}
	Let the ideal rotation be $R_Z(\theta)$ and the implemented rotation be $R_Z(\theta+\varepsilon)$.
	
	\smallskip
	\noindent\textbf{Step 1: interpret the error geometrically.}
	On the Bloch sphere, a $Z$-rotation rotates the azimuthal angle in the equatorial plane.
	Replacing $\theta$ by $\theta+\varepsilon$ is an \emph{over/under-rotation} about the $z$-axis by $\varepsilon$.
	
	\smallskip
	\noindent\textbf{Step 2: see which observables change to first order.}
	If a state has Bloch vector $(x,y,z)$, then after $R_Z(\theta)$,
	\[
	x(\theta)=x\cos\theta-y\sin\theta,\qquad
	y(\theta)=x\sin\theta+y\cos\theta,\qquad
	z(\theta)=z.
	\]
	Differentiate with respect to $\theta$:
	\[
	\frac{d}{d\theta}x(\theta)=-(x\sin\theta+y\cos\theta)=-y(\theta),
	\qquad
	\frac{d}{d\theta}y(\theta)=(x\cos\theta-y\sin\theta)=x(\theta),
	\qquad
	\frac{d}{d\theta}z(\theta)=0.
	\]
	Thus a small additional angle $\varepsilon$ changes $(x,y)$ by an amount of order $\varepsilon$ (first order),
	but changes $z$ by $0$ (no first-order effect, in fact no effect at all).
	
	\smallskip
	\noindent\textbf{Step 3: translate to measurement sensitivity.}
	Measuring $Z$ reads out $z$, which is unchanged under any $Z$-rotation, so it is \emph{insensitive} to $\varepsilon$.
	Measuring $X$ reads out $x$, and measuring $Y$ reads out $y$; both typically change linearly with $\varepsilon$.
	
	\smallskip
	\noindent\textbf{Conclusion.}
	The most sensitive settings are equatorial measurements ($X$ or $Y$), or equivalently
	$Z$-hardware readout preceded by a basis-change gate to implement an $X$- or $Y$-measurement (e.g.\ $H$ or $S^\dagger H$).
\end{solution}


	
\section{Differential Geometry Viewpoint}
\label{sec:dg}

\subsection*{Objective}
This chapter gives the minimum differential-geometry toolkit needed to read the rest of these notes
\emph{as geometry rather than as symbol manipulation}.
The goal is not to turn you into a geometer; it is to make the following pipeline feel natural:
\[
\text{states as points on a manifold}
\;\Longrightarrow\;
\text{gates as motion (paths)}
\;\Longrightarrow\;
\text{robustness as geometry (distance/curvature)}
\]
\[
\text{hardware as a constraint on motion (latency + noise).}
\;\Longrightarrow\;
\text{(what we can actually run).}
\]

By the end of this chapter, you should be able to:
\begin{itemize}
	\item compute lengths of curves and distances from a metric,
	\item recognize geodesics as ``straightest'' motions and why they matter for optimal control,
	\item compute curvature on $S^2$ and interpret Gauss--Bonnet as a global constraint,
	\item compare two models of the torus: the flat quotient (for distances) and the embedded torus (for curvature),
	\item understand why pure quantum states live on $\CP^{d-1}$ and how the Fubini--Study distance measures distinguishability.
\end{itemize}

\subsection*{Minimum geometric vocabulary}

\subsubsection*{Manifolds (what the ``space of states'' really is)}
A \emph{smooth manifold} $M$ is a space that looks locally like $\R^n$ and supports calculus.
We will use manifolds as \emph{configuration spaces}:
\begin{itemize}
	\item $S^2$ for pure single-qubit states (Bloch sphere),
	\item $\CP^{d-1}$ for pure $d$-level quantum states (projective Hilbert space),
	\item tori $T^2$ and higher tori as convenient ``flat'' models for phases and periodic parameters.
\end{itemize}

\subsubsection*{Tangent vectors and curves}
A smooth curve $\gamma:[0,1]\to M$ has a velocity vector $\dot\gamma(t)\in T_{\gamma(t)}M$.
If $M\subset \R^N$ is embedded, $\dot\gamma(t)$ is literally the derivative in $\R^N$; otherwise it is defined intrinsically.

\subsubsection*{Riemannian metric (the object that turns calculus into geometry)}
A \emph{Riemannian metric} $g$ assigns to each point $p\in M$ an inner product
\[
g_p(\cdot,\cdot):T_pM\times T_pM\to \R
\]
varying smoothly in $p$.

Once you have $g$, you can define:
\begin{itemize}
	\item \textbf{Length of a curve:}
	\[
	L(\gamma)=\int_0^1 \sqrt{g_{\gamma(t)}(\dot\gamma(t),\dot\gamma(t))}\,dt.
	\]
	\item \textbf{Distance between points:}
	\[
	d(p,q)=\inf_{\gamma:\,p\to q} L(\gamma).
	\]
	\item \textbf{Geodesics:} curves that locally minimize length (``straight lines'' in the metric).
\end{itemize}

\subsubsection*{Geodesics (why ``shortest path'' is not a metaphor)}
Geodesics can be defined in several equivalent ways.
We will use two:
\begin{enumerate}
	\item \textbf{Variational definition:} $\gamma$ is a geodesic if it is a critical point of the energy functional
	\[
	E(\gamma)=\frac12\int_0^1 g_{\gamma(t)}(\dot\gamma(t),\dot\gamma(t))\,dt.
	\]
	\item \textbf{Embedded picture:} on a surface embedded in $\R^3$,
	geodesics have acceleration normal to the surface (no tangential component),
	so they look ``as straight as possible'' while remaining on the surface.
\end{enumerate}

\subsubsection*{Curvature (local bending that affects global behavior)}
Curvature tells you how geometry deviates from flat Euclidean space.
On a 2D surface, the relevant scalar is \emph{Gaussian curvature} $K$.
You do not need general tensor formulas yet; you need:
\begin{itemize}
	\item $K>0$ (sphere-like): geodesics reconverge, triangles have angle sum $>\pi$,
	\item $K=0$ (flat): Euclidean behavior,
	\item $K<0$ (hyperbolic-like): geodesics diverge, triangles have angle sum $<\pi$.
\end{itemize}

\subsubsection*{Gauss--Bonnet (one global identity you should remember)}
For a compact oriented surface $M$ without boundary,
\[
\int_M K\,dA = 2\pi\,\chi(M),
\]
where $\chi(M)$ is the Euler characteristic.
This is a bridge between \emph{geometry} (curvature) and \emph{topology} (Euler characteristic).

\subsection*{Warm-up on $S^2$: metric, distance, curvature, Gauss--Bonnet}

\subsubsection*{The round metric}
Let
\[
S^2=\{(x,y,z)\in\R^3:\ x^2+y^2+z^2=1\}.
\]
The \emph{round metric} on $S^2$ is the one induced by the Euclidean inner product in $\R^3$:
for tangent vectors $u,v\in T_pS^2\subset \R^3$,
\[
g_p(u,v)=u\cdot v.
\]

\subsubsection*{Great circles are geodesics (the key fact for Bloch-sphere intuition)}
A \emph{great circle} is the intersection of $S^2$ with a plane through the origin.
Great circles are geodesics for the round metric.

\begin{prop}[Distance on $S^2$ via central angle]
	Let $p,q\in S^2$. Then the geodesic distance is
	\[
	d_{S^2}(p,q)=\arccos(p\cdot q).
	\]
\end{prop}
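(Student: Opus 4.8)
The plan is to prove the two inequalities separately: an \emph{upper bound} $d_{S^2}(p,q)\le\arccos(p\cdot q)$ by exhibiting an explicit great-circle arc of exactly that length, and a \emph{lower bound} $d_{S^2}(p,q)\ge\arccos(p\cdot q)$ valid for \emph{every} curve from $p$ to $q$. The case $q=p$ is trivial ($0=\arccos 1$), so I would assume $q\neq p$ and set $\theta_0:=\arccos(p\cdot q)\in(0,\pi]$.

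For the upper bound, I would first write $q=(p\cdot q)\,p+(\sin\theta_0)\,u$ where $u$ is a unit vector orthogonal to $p$ (namely $u:=(q-(p\cdot q)p)/\norm{q-(p\cdot q)p}$ when $q\neq-p$, and any unit $u\perp p$ when $q=-p$). The curve $\gamma(t):=(\cos t)\,p+(\sin t)\,u$ for $t\in[0,\theta_0]$ then satisfies $\gamma(0)=p$, $\gamma(\theta_0)=q$, $\gamma(t)\in S^2$, and $\dot\gamma(t)=-(\sin t)\,p+(\cos t)\,u$ has Euclidean norm $1$; hence $L(\gamma)=\int_0^{\theta_0}1\,dt=\theta_0$, so $d_{S^2}(p,q)\le\theta_0$. (This $\gamma$ is precisely the great-circle arc, consistent with the stated fact that great circles are geodesics for the round metric.)

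For the lower bound---the heart of the argument---I would let $\gamma:[0,1]\to S^2$ be any admissible curve with $\gamma(0)=p$, $\gamma(1)=q$, and track the scalar ``angular radius'' $\rho(t):=\arccos(p\cdot\gamma(t))\in[0,\pi]$, which is continuous with $\rho(0)=0$, $\rho(1)=\theta_0$. The key pointwise estimate is that wherever $\gamma(t)\neq\pm p$,
\[
|\rho'(t)|=\frac{|p\cdot\dot\gamma(t)|}{\sqrt{1-(p\cdot\gamma(t))^2}}\le\norm{\dot\gamma(t)}.
\]
To prove it I would decompose $p$ at the point $x=\gamma(t)$ as $p=(p\cdot x)\,x+p^\perp$ with $p^\perp\in T_xS^2$ and $\norm{p^\perp}^2=1-(p\cdot x)^2$; differentiating $\norm{\gamma}^2=1$ gives $\dot\gamma(t)\perp\gamma(t)$, so $p\cdot\dot\gamma(t)=p^\perp\cdot\dot\gamma(t)$, and Cauchy--Schwarz (stated earlier) gives $|p^\perp\cdot\dot\gamma(t)|\le\norm{p^\perp}\,\norm{\dot\gamma(t)}$, which is exactly the claimed bound. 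Integrating, $L(\gamma)=\int_0^1\norm{\dot\gamma(t)}\,dt\ge\int_0^1|\rho'(t)|\,dt\ge|\rho(1)-\rho(0)|=\theta_0$; taking the infimum over all $\gamma$ gives $d_{S^2}(p,q)\ge\arccos(p\cdot q)$, and together with the upper bound this closes the proof.

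The main obstacle is the rigorous treatment of parameters $t$ with $\gamma(t)\in\{p,-p\}$, where $\rho$ is not differentiable because $s\mapsto\arccos s$ is not Lipschitz at $s=\pm1$. Two observations neutralize this. First, the estimate above shows $|\rho'|\le\sup_t\norm{\dot\gamma(t)}<\infty$ wherever it is defined, so $\rho$ is Lipschitz on every closed subinterval avoiding $\{p,-p\}$; since $\gamma$ is (piecewise) smooth the offending parameter set is closed with empty interior (or $\gamma$ is momentarily constant there, contributing no length), and a routine continuity/approximation argument extends $\int_0^1\norm{\dot\gamma}\ge|\rho(1)-\rho(0)|$ across it. Second, if $\gamma$ ever hits the antipode $-p$, the same radial estimate applied up to that time already forces $L(\gamma)\ge\pi\ge\theta_0$, so nothing is lost. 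An alternative packaging of the lower bound is to use geodesic polar coordinates about $p$, in which the round metric is $d\phi^2+\sin^2\!\phi\,d\psi^2$, so that $\sqrt{\dot\phi^2+\sin^2\!\phi\,\dot\psi^2}\ge|\dot\phi|$ pointwise and $L(\gamma)\ge\int_0^1|\dot\phi|\,dt\ge|\phi(1)-\phi(0)|=\theta_0$; this is the identical computation with the singular chart made explicit, and it requires the same mild care at $\phi\in\{0,\pi\}$.
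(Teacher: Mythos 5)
Your proof is correct, and it takes a genuinely different route from the one in the text. The paper's argument is the classical one by comparison with the great circle: exhibit the great-circle arc of length $\alpha=\arccos(p\cdot q)$ to get the upper bound, and then dispose of the lower bound very briefly (in the proposition it is essentially asserted, and in the later exercise solution it is handled by radially projecting an arbitrary curve onto the great circle through $p$ and $q$ and claiming the projection does not increase length). You prove the same upper bound with an explicit parametrization $\gamma(t)=(\cos t)\,p+(\sin t)\,u$, but for the lower bound you introduce the angular radius $\rho(t)=\arccos\bigl(p\cdot\gamma(t)\bigr)$ and establish the pointwise estimate $|\rho'(t)|\le\norm{\dot\gamma(t)}$ via the tangential decomposition of $p$ at $\gamma(t)$ and Cauchy--Schwarz, then integrate. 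This is the standard ``comparison with the radial coordinate'' argument (equivalently, the geodesic polar coordinate computation you mention), and it is more rigorous than what the paper offers: the paper's one-line appeal to ``the definition of distance as an infimum'' does not actually prove that no shorter curve exists, and the projection argument in the exercise solution is itself only sketched. Your approach buys a self-contained, quantitative lower bound for \emph{every} competitor curve, at the cost of having to treat the singular set where $\gamma(t)\in\{p,-p\}$, which you handle adequately (Lipschitz bound away from the poles, the antipode case already forcing $L\ge\pi$, constant pieces contributing no length). The paper's approach, when fleshed out, is shorter and more geometric but hides exactly this analytic work inside the unproven claim that radial projection is length-nonincreasing.
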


\begin{proof}
	Let $\alpha=\arccos(p\cdot q)\in[0,\pi]$ be the central angle between $p$ and $q$.
	The great circle through $p,q$ has radius $1$, so the arc length subtending angle $\alpha$ is exactly $\alpha$.
	Any other curve connecting $p$ and $q$ has length $\ge \alpha$ by the definition of distance as an infimum.
	Hence $d(p,q)=\alpha=\arccos(p\cdot q)$.
\end{proof}

\subsubsection*{Curvature of the round sphere}
For the unit sphere, the Gaussian curvature is constant:
\[
K\equiv 1.
\]
(For a sphere of radius $R$, it is $K\equiv 1/R^2$.)

\subsubsection*{Gauss--Bonnet on $S^2$ (a concrete computation)}
The area form $dA$ on the unit sphere integrates to $\mathrm{Area}(S^2)=4\pi$.
Since $K\equiv 1$,
\[
\int_{S^2}K\,dA=\int_{S^2}1\,dA=4\pi.
\]
Gauss--Bonnet gives
\[
4\pi = 2\pi\,\chi(S^2)\quad\Rightarrow\quad \chi(S^2)=2.
\]
This single computation is the canonical example of ``curvature integrates to topology.''

\subsubsection*{Spherical triangles (why curvature matters for algorithms)}
On $S^2$, a geodesic triangle with interior angles $A,B,C$ satisfies
\[
A+B+C-\pi = \mathrm{Area}(\triangle),
\]
for the unit sphere.
So the angle excess measures area. On curved state spaces, similar ``excess'' phenomena appear in how phases
and distinguishability accumulate along paths.

\begin{figure}[t]
	\centering
	\begin{tikzpicture}[scale=2.0, line cap=round, line join=round]
		\draw (0,0) circle (1);
		\draw[dashed] (-1,0) arc (180:360:1 and 0.35);
		\draw (-1,0) arc (180:0:1 and 0.35);
		
		\coordinate (A) at (0.15,0.85);
		\coordinate (B) at (-0.65,0.10);
		\coordinate (C) at (0.75,0.20);
		
		\fill (A) circle (0.03) node[above] {$A$};
		\fill (B) circle (0.03) node[left] {$B$};
		\fill (C) circle (0.03) node[right] {$C$};
		
		\draw[thick] (A) to[bend left=15] (B);
		\draw[thick] (B) to[bend left=10] (C);
		\draw[thick] (C) to[bend left=20] (A);
		
		\node at (0,-1.2) {\small Curved geometry: angle sum $>\pi$ on $S^2$ (area $\leftrightarrow$ curvature)};
	\end{tikzpicture}
	\caption{Spherical geometry in one picture: geodesics are great-circle arcs, and curvature shows up as an angle excess.}
	\label{fig:spherical-triangle}
\end{figure}

\subsection*{The torus: flat model for distances, embedded model for curvature}

\subsubsection*{Model 1: flat torus as a quotient (best for distances)}
Define the 2-torus as the quotient
\[
T^2=\R^2/\Z^2.
\]
A point is represented by $(x,y)\in\R^2$ with identification $(x,y)\sim(x+m,y+n)$ for $(m,n)\in\Z^2$.
The flat metric on $\R^2$ descends to a flat metric on $T^2$.

\begin{prop}[Geodesics on the flat torus]
	Geodesics on $T^2=\R^2/\Z^2$ are precisely the projections of straight lines in $\R^2$:
	\[
	\gamma(t)=(x_0,y_0)+t(v_x,v_y)\quad \text{mod }\Z^2.
	\]
\end{prop}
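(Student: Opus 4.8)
The plan is to exploit the fact that the quotient map $\pi:\R^2\to T^2=\R^2/\Z^2$ is a Riemannian covering, so the local structure of geodesics upstairs and downstairs coincides, and geodesics of $\R^2$ are already understood: they are exactly the constant-speed straight lines. Recall that the flat metric on $T^2$ is defined precisely so that $\pi$ restricts to an isometry on each small open set $U$ on which it is injective; equivalently, in the standard local coordinates $(x,y)$ on $T^2$ the metric is literally $dx^2+dy^2$. Throughout I take ``geodesic'' in the usual constant-speed sense, so the affine parametrization $\gamma(t)=(x_0,y_0)+t(v_x,v_y)$ in the statement is not a restriction but part of the definition (and $v=0$ gives the constant geodesic, which the formula covers).

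First, for the direction ``straight lines project to geodesics'': being a geodesic is a local condition (characterized pointwise by the geodesic ODE, or locally by being a critical point of the energy functional $E(\gamma)=\tfrac12\int_0^1 g_{\gamma(t)}(\dot\gamma(t),\dot\gamma(t))\,dt$), and $\pi$ is a local isometry, so $\pi$ carries geodesics of $\R^2$ to geodesics of $T^2$. Since $t\mapsto (x_0,y_0)+t(v_x,v_y)$ is a constant-speed geodesic of the Euclidean plane, its image under $\pi$ is a geodesic of $T^2$.

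For the converse, let $\gamma:[0,1]\to T^2$ be a geodesic and pick any preimage $p_0\in\pi^{-1}(\gamma(0))$. By the path-lifting property of the covering $\pi$ there is a unique continuous $\tilde\gamma:[0,1]\to\R^2$ with $\pi\circ\tilde\gamma=\gamma$ and $\tilde\gamma(0)=p_0$; smoothness of $\tilde\gamma$ is automatic because $\pi$ is a local diffeomorphism. As $\pi$ is a local isometry, $\tilde\gamma$ satisfies the geodesic equation of $\R^2$ at every point, i.e.\ $\ddot{\tilde\gamma}\equiv 0$, so $\tilde\gamma(t)=\tilde\gamma(0)+t\,\dot{\tilde\gamma}(0)$ is a straight line with constant-speed parametrization. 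Hence $\gamma=\pi\circ\tilde\gamma$ has exactly the claimed form, with $(x_0,y_0)$ the coordinates of $p_0$ and $(v_x,v_y)=\dot{\tilde\gamma}(0)$.

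I expect the only genuine obstacle to be organizational: deciding how much covering-space machinery to invoke. The slick route uses the path-lifting lemma together with ``local isometries push forward and pull back geodesics.'' If the notes prefer to stay self-contained, one argues locally instead: in each coordinate chart the geodesic equation reduces to $\ddot x=\ddot y=0$, the chart transition maps are translations $(x,y)\mapsto(x+m,y+n)$ which send straight lines to straight lines, and a short connectedness/continuation argument shows that the local straight-line pieces glue into the projection of a single global line in $\R^2$. Either way the calculations are routine; the care is in phrasing the gluing precisely.
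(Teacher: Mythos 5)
Your argument is correct, and its forward direction is essentially the paper's proof: the paper argues that straight lines minimize length in $\R^2$ and that the quotient map is a local isometry, so local length-minimizers project to local length-minimizers; you phrase the same transfer principle via the geodesic ODE / energy functional rather than local length-minimization, which is an equivalent characterization the text also uses. Where you genuinely go beyond the paper is the converse: the statement says ``precisely,'' but the paper's three-line proof only shows that projected lines \emph{are} geodesics and is silent on why every geodesic of $T^2$ arises this way. Your lifting argument --- lift $\gamma$ through the covering $\pi$, note the lift is a geodesic of $\R^2$ because $\pi$ is a local isometry, hence $\ddot{\tilde\gamma}\equiv 0$ and $\tilde\gamma$ is an affine line --- fills exactly that gap, at the modest cost of invoking the path-lifting property of coverings (or, as you note, a chart-by-chart continuation argument if one wants to stay elementary). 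So your write-up is the more complete proof of the proposition as stated; the paper's version buys brevity by proving only the easier inclusion.
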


\begin{proof}
	Straight lines minimize length in $\R^2$ under the Euclidean metric.
	The quotient map $\R^2\to T^2$ is a local isometry, so local length-minimizers project to local length-minimizers.
	Hence projected lines are geodesics.
\end{proof}

\subsubsection*{Distance on the flat torus (explicit formula you can implement)}
Let $[p],[q]\in T^2$ be represented by $p,q\in\R^2$.
Then the distance is
\[
d_{T^2}([p],[q])=\min_{k\in\Z^2}\| (p-q)+k\|.
\]
Interpretation: among all lattice-shifted copies of $q$, choose the nearest one to $p$.

\subsubsection*{Model 2: embedded torus in $\R^3$ (best for curvature intuition)}
An embedded ``donut'' torus can be parameterized by
\[
\Phi(\theta,\varphi)
=
\bigl((R+r\cos\theta)\cos\varphi,\ (R+r\cos\theta)\sin\varphi,\ r\sin\theta\bigr),
\qquad
\theta,\varphi\in[0,2\pi).
\]
Here $R>r>0$.
This torus has \emph{non-constant} Gaussian curvature:
positive on the outer side and negative on the inner side.
So:
\[
\text{same topology ($T^2$), different geometry (curvature distribution).}
\]

\begin{rem}[Why two models?]
	The quotient torus is the right model when you care about periodic parameters and distances.
	The embedded torus is the right model when you care about curvature effects and visualization.
\end{rem}

\begin{figure}[t]
	\centering
	\begin{tikzpicture}[scale=2.2, line cap=round, line join=round]
		\draw[thick] (0,0) rectangle (1.6,1.2);
		\draw[->] (0.2,1.25) -- (1.4,1.25) node[midway,above] {\small identify};
		\draw[->] (0.2,-0.05) -- (1.4,-0.05) node[midway,below] {\small identify};
		\draw[->] (-0.05,0.2) -- (-0.05,1.0) node[midway,left] {\small identify};
		\draw[->] (1.65,0.2) -- (1.65,1.0) node[midway,right] {\small identify};
		\node at (0.8,0.6) {\small flat $T^2=\R^2/\Z^2$};

	\end{tikzpicture}
	\caption{Two complementary torus models: the flat quotient for distances and periodic parameters, and the embedded torus for curvature intuition.}
	\label{fig:torus-two-models}
\end{figure}
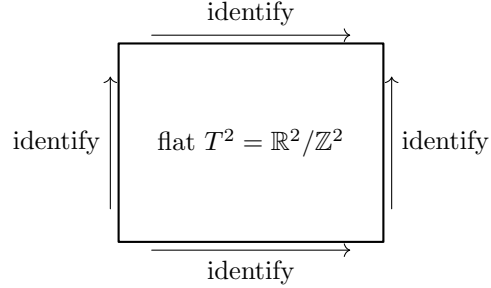

\subsection*{Back to quantum: $\CP^{d-1}$ and Fubini--Study distance}

\subsubsection*{Why projective space appears}
A pure quantum state in $\C^d$ is a unit vector $\ket{\psi}\in\C^d$.
Physical predictions are invariant under global phase:
\[
\ket{\psi}\sim e^{i\theta}\ket{\psi}.
\]
Therefore the true space of pure states is the projective space
\[
\CP^{d-1} = \bigl(\C^d\setminus\{0\}\bigr)/\sim,\qquad v\sim\lambda v \ (\lambda\in\C^\times).
\]

\subsubsection*{The Fubini--Study metric (the canonical metric on $\CP^{d-1}$)}
There are many equivalent ways to introduce it.
For computation, you can treat the induced distance formula as the primary definition:

\begin{defn}[Fubini--Study distance (operational form)]
	For two normalized pure states $\ket{\psi},\ket{\phi}\in\C^d$,
	the Fubini--Study distance on $\CP^{d-1}$ is
	\[
	d_{\mathrm{FS}}([\psi],[\phi]) := \arccos\bigl(|\braket{\psi}{\phi}|\bigr)\in[0,\pi/2].
	\]
\end{defn}

\begin{rem}[What it measures]
	$|\braket{\psi}{\phi}|$ is the overlap (fidelity for pure states).
	So $d_{\mathrm{FS}}$ is an \emph{angle} measuring distinguishability:
	\[
	|\braket{\psi}{\phi}|=1 \ \Rightarrow\ d_{\mathrm{FS}}=0 \quad(\text{same ray}),
	\qquad
	|\braket{\psi}{\phi}|=0 \ \Rightarrow\ d_{\mathrm{FS}}=\frac{\pi}{2}\quad(\text{orthogonal}).
	\]
\end{rem}

\subsubsection*{Special case: qubit ($d=2$) recovers the Bloch-sphere angle}
For $d=2$, $\CP^1\simeq S^2$ (Bloch sphere), and the distances satisfy:
\[
d_{\mathrm{FS}}([\psi],[\phi])=\frac12\, d_{S^2}(\vec r_\psi,\vec r_\phi),
\]
i.e.\ the FS distance is half the central angle between Bloch vectors.
This is one reason ``circuits are rotations'' works so cleanly for a qubit.

\begin{figure}[t]
	\centering
	\begin{tikzpicture}[scale=2.3, line cap=round, line join=round]
		\draw (0,0) circle (1);
		\draw[dashed] (-1,0) arc (180:360:1 and 0.35);
		\draw (-1,0) arc (180:0:1 and 0.35);
		
		\coordinate (P) at (0.25,0.80);
		\coordinate (Q) at (0.85,0.05);
		\fill (P) circle (0.03) node[above] {\small $\vec r_\psi$};
		\fill (Q) circle (0.03) node[right] {\small $\vec r_\phi$};
		
		\draw[thick] (P) to[bend left=15] (Q);
		\node at (0.55,0.55) {\small $d_{S^2}$};
		
		\node at (0,-1.2) {\small For qubits: $d_{\mathrm{FS}}=\tfrac12 d_{S^2}$ (FS angle is half Bloch angle)};
	\end{tikzpicture}
	\caption{For a qubit, the Fubini--Study distance on $\CP^1$ corresponds to half the geodesic angle on the Bloch sphere.}
	\label{fig:fs-half-bloch}
\end{figure}

\subsubsection*{Why this matters for circuits}
A parameterized circuit $U(\theta)$ generates a path of states
\[
[\psi(\theta)] = [U(\theta)\ket{\psi_0}]\in \CP^{d-1}.
\]
Geometry gives:
\begin{itemize}
	\item a notion of \textbf{speed} along the path (from the metric),
	\item a notion of \textbf{shortest route} (geodesics),
	\item a notion of \textbf{conditioning} for optimization (local metric tensor $\Rightarrow$ QFIM),
	\item a way to quantify \textbf{robustness} against small perturbations (distances/curvature).
\end{itemize}

\subsection*{Exercises}

\begin{exercise}[Distance on $S^2$]
	Let $p,q\in S^2\subset\R^3$ be unit vectors. Prove that the geodesic distance satisfies
	\[
	d_{S^2}(p,q)=\arccos(p\cdot q).
	\]
\end{exercise}

\begin{solution}
	Let $\alpha=\arccos(p\cdot q)\in[0,\pi]$ be the angle between $p$ and $q$.
	The plane spanned by $p$ and $q$ passes through the origin, hence its intersection with $S^2$ is a great circle.
	The shorter arc connecting $p$ to $q$ on that great circle subtends angle $\alpha$ at the origin.
	Since the radius is $1$, arc length equals the angle: $L=\alpha$.
	Because distance is the infimum over all connecting curves, $d_{S^2}(p,q)\le \alpha$.
	
	Conversely, any curve $\gamma$ on $S^2$ connecting $p$ to $q$ projects (via radial projection) to a curve on the great circle whose length is at most the original length; the great-circle arc is the unique length minimizer for $\alpha<\pi$.
	Hence no curve can have length $<\alpha$, so $d_{S^2}(p,q)=\alpha$.
\end{solution}

\begin{exercise}[Gauss--Bonnet on the unit sphere]
	Assume $K\equiv 1$ on the unit sphere and $\mathrm{Area}(S^2)=4\pi$.
	Use Gauss--Bonnet to compute $\chi(S^2)$.
\end{exercise}

\begin{solution}
	Gauss--Bonnet gives
	\[
	\int_{S^2}K\,dA = 2\pi\,\chi(S^2).
	\]
	Since $K\equiv 1$ and $\int_{S^2}1\,dA=\mathrm{Area}(S^2)=4\pi$,
	\[
	4\pi=2\pi\,\chi(S^2)\quad\Rightarrow\quad \chi(S^2)=2.
	\]
\end{solution}

\begin{exercise}[Distance on the flat torus]
	Let $T^2=\R^2/\Z^2$ with the induced flat metric.
	Show that for $[p],[q]\in T^2$ represented by $p,q\in\R^2$,
	\[
	d_{T^2}([p],[q])=\min_{k\in\Z^2}\|(p-q)+k\|.
	\]
\end{exercise}

\begin{solution}
	A curve $\gamma$ on $T^2$ lifts locally to curves $\widetilde\gamma$ in $\R^2$.
	Any path from $[p]$ to $[q]$ lifts to a path from $p$ to $q+k$ for some lattice vector $k\in\Z^2$
	(depending on how many times the path wraps around).
	The length of the path in $T^2$ equals the Euclidean length of the lift in $\R^2$ because the quotient is a local isometry.
	
	In $\R^2$, the shortest path from $p$ to $q+k$ is the straight line segment, with length $\|(p-q)+k\|$.
	Therefore the shortest path in $T^2$ corresponds to choosing the lattice shift $k$ that minimizes this length,
	giving the formula.
\end{solution}

\begin{exercise}[Fubini--Study distance for qubits]
	Let $\ket{\psi},\ket{\phi}\in\C^2$ be normalized pure qubit states with Bloch vectors $\vec r_\psi,\vec r_\phi\in S^2$.
	Assuming the identity
	\[
	|\braket{\psi}{\phi}|^2=\frac{1+\vec r_\psi\cdot \vec r_\phi}{2},
	\]
	prove that
	\[
	d_{\mathrm{FS}}([\psi],[\phi])=\frac12\,\arccos(\vec r_\psi\cdot\vec r_\phi).
	\]
\end{exercise}

\begin{solution}
	By definition,
	\[
	d_{\mathrm{FS}}([\psi],[\phi])=\arccos\bigl(|\braket{\psi}{\phi}|\bigr).
	\]
	Using the given identity,
	\[
	|\braket{\psi}{\phi}|=\sqrt{\frac{1+\vec r_\psi\cdot \vec r_\phi}{2}}.
	\]
	Write $\vec r_\psi\cdot \vec r_\phi=\cos\alpha$ with $\alpha\in[0,\pi]$.
	Then
	\[
	|\braket{\psi}{\phi}|=\sqrt{\frac{1+\cos\alpha}{2}}=\cos(\alpha/2),
	\]
	so
	\[
	d_{\mathrm{FS}}=\arccos(\cos(\alpha/2))=\alpha/2=\frac12\,\arccos(\vec r_\psi\cdot\vec r_\phi).
	\]
\end{solution}

\subsection*{What to visualize}

\begin{itemize}
	\item \textbf{$S^2$ as a distance machine:} pick two points $p,q$ and draw the great-circle arc; its length is the distance.
	\item \textbf{Curvature as angle excess:} draw a spherical triangle and remember ``sum of angles $>\pi$'' on $S^2$.
	\item \textbf{Flat vs.\ embedded torus:} the square-with-identified-edges for shortest paths and wrap-around distance;
	the donut picture for where curvature changes sign.
	\item \textbf{Projective state space:} rays in $\C^d$ (global phase ignored) $\Rightarrow \CP^{d-1}$.
	\item \textbf{Fubini--Study distance as overlap angle:} $|\langle\psi|\phi\rangle|$ close to $1$ means small distance, orthogonal means distance $\pi/2$.
\end{itemize}

\subsection*{Summary: why this chapter matters for circuits and hardware}

\begin{itemize}
	\item \textbf{Circuits generate paths.} A parameterized unitary $U(\theta)$ moves a state along a curve in $\CP^{d-1}$.
	\item \textbf{Optimization is geometry.} The local metric that measures ``how far the state moves'' under parameter changes
	is exactly what later becomes the QFIM and the Quantum Natural Gradient.
	\item \textbf{Robustness is distance/curvature.} Small control errors and noise perturb the path; geometry quantifies sensitivity.
	\item \textbf{Hardware imposes time.} Real-time controllers discretize the path into clocked steps; latency budgets constrain how quickly you can correct or adapt.
	\item \textbf{Two toy spaces cover most intuition.} $S^2$ (sphere) and $T^2$ (torus) already teach:
	\emph{curved vs.\ flat}, \emph{global constraints}, and \emph{periodic parameters}.
\end{itemize}

	
\section{Quantum Fisher Information Geometry (QFIM)}
\label{sec:qfim}

\subsection*{Objective}
This chapter builds the \emph{geometry of quantum state families} from the ground up and connects it
directly to \emph{computable formulas} used in variational circuits and hardware-aware optimization.
You will learn:
\begin{itemize}
	\item how measurement turns a quantum state into a classical probability distribution,
	\item how classical Fisher information defines a metric on distributions (and why it measures sensitivity),
	\item why quantum mechanics forces a choice of measurement and how the \emph{QFIM} is the ``best possible'' Fisher metric over all measurements,
	\item how to compute QFIM using the SLD (symmetric logarithmic derivative),
	\item the clean pure-state identity: QFIM is the pullback of the Fubini--Study metric,
	\item fast circuit formulas (generator/variance and covariance forms) that avoid density-matrix differentiation,
	\item how QFIM becomes a preconditioner: Quantum Natural Gradient (QNG),
	\item worked examples for one- and multi-parameter circuits, with practical notes (regularization, shot noise, singular metrics).
\end{itemize}

\subsection*{Minimal measurement vocabulary: POVMs and Born probabilities}

\subsubsection*{Born rule as a map: quantum state $\to$ classical distribution}
A measurement converts quantum information into classical data.
Mathematically, this is a map
\[
\rho \quad\longmapsto\quad p_\rho(x)
\]
where $p_\rho$ is a probability distribution on outcomes $x$.

\begin{defn}[POVM]
	A POVM (positive operator-valued measure) on a finite outcome set $\mathcal X$ is a collection of operators
	\[
	\{E_x\}_{x\in\mathcal X},\qquad E_x\succeq 0,\qquad \sum_{x\in\mathcal X} E_x = I.
	\]
\end{defn}

\begin{defn}[Born probabilities for a POVM]
	Given a density matrix $\rho$ and POVM $\{E_x\}$, the outcome probabilities are
	\[
	p_\rho(x) := \Tr(\rho E_x).
	\]
\end{defn}

\begin{rem}[Projective measurements are a special case]
	If $\{E_x\}$ are orthogonal projectors $\Pi_x$ summing to $I$, then this is the usual projective measurement.
	POVMs are strictly more general and include noisy or coarse-grained measurements.
\end{rem}

\subsubsection*{Parameter estimation viewpoint}
Let $\rho_\theta$ be a family of states depending on parameters $\theta=(\theta^1,\dots,\theta^m)$.
A measurement produces a family of distributions
\[
p_\theta(x)=\Tr(\rho_\theta E_x).
\]
The central question:
\begin{quote}
	How sensitive are the observable probabilities to small changes in $\theta$?
\end{quote}
Fisher information answers this in a coordinate-free way.

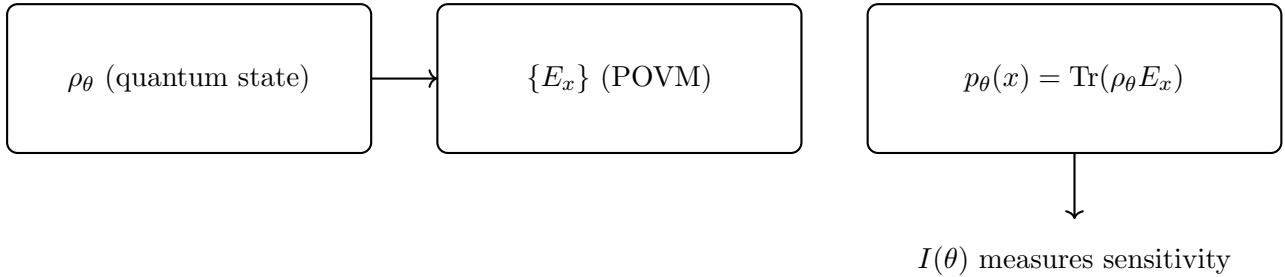
\begin{figure}[t]
	\centering
	\begin{tikzpicture}[scale=2.2, line cap=round, line join=round]
		\draw[thick, rounded corners] (0,0.5) rectangle (2.2,1.4);
		\node at (1.1,0.95) {$\rho_\theta$ (quantum state)};
		
		\draw[thick, rounded corners] (2.6,0.5) rectangle (4.8,1.4);
		\node at (3.7,0.95) {$\{E_x\}$ (POVM)};
		
		\draw[->, thick] (2.2,0.95) -- (2.6,0.95);
		
		\draw[thick, rounded corners] (5.2,0.5) rectangle (7.7,1.4);
		\node at (6.45,0.95) {$p_\theta(x)=\Tr(\rho_\theta E_x)$};
		
		\draw[->, thick] (6.45,0.5) -- (6.45,0.1);
		\node at (6.45,-0.15) {$I(\theta)$ measures sensitivity};
		
		\node at (3.85,1.75) {\small Quantum-to-classical: a measurement induces a statistical model};
	\end{tikzpicture}
	\caption{A POVM turns a parameterized quantum state family $\rho_\theta$ into a classical probability model $p_\theta(x)$. Fisher information measures how sharply $p_\theta$ changes with $\theta$.}
	\label{fig:qfim-povm-pipeline}
\end{figure}

\subsection*{Classical Fisher information: geometry of distributions}

\subsubsection*{Single-parameter Fisher information}
Let $p_\theta(x)$ be a parametric probability distribution on a finite set $\mathcal X$.
Define the score (log-derivative)
\[
\partial_\theta \log p_\theta(x) = \frac{\partial_\theta p_\theta(x)}{p_\theta(x)}\quad (\text{where }p_\theta(x)>0).
\]

\begin{defn}[Classical Fisher information (1D)]
	The Fisher information is
	\[
	I(\theta) := \sum_{x\in\mathcal X} p_\theta(x)\,\bigl(\partial_\theta \log p_\theta(x)\bigr)^2.
	\]
\end{defn}

\begin{rem}[Sensitivity interpretation]
	If a small parameter change $\theta\mapsto \theta+\delta$ causes a large change in the likelihood,
	then $I(\theta)$ is large. In estimation, large $I$ means \emph{small variance lower bounds} (Cram\'er--Rao).
\end{rem}

\subsubsection*{Multi-parameter Fisher information matrix (FIM)}
For $\theta=(\theta^1,\dots,\theta^m)$ define the score components
\[
\partial_i \log p_\theta(x) := \frac{\partial_i p_\theta(x)}{p_\theta(x)}.
\]

\begin{defn}[Classical Fisher information matrix]
	The Fisher information matrix is
	\[
	I_{ij}(\theta) := \sum_{x\in\mathcal X} p_\theta(x)\,(\partial_i \log p_\theta(x))(\partial_j \log p_\theta(x)).
	\]
	Equivalently, it is the covariance of the score:
	\[
	I_{ij}(\theta) = \mathbb E_\theta\!\left[ \partial_i \log p_\theta(X)\,\partial_j \log p_\theta(X)\right].
	\]
\end{defn}

\begin{prop}[FIM is positive semidefinite]
	For any vector $v\in\R^m$,
	\[
	v^T I(\theta) v \ge 0.
	\]
\end{prop}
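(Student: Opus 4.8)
The plan is to expand the quadratic form $v^T I(\theta) v$ directly from the definition and recognize it as a probability-weighted sum of squares, hence nonnegative. The whole argument is essentially one line of algebra, so there is no real obstacle; the only care needed is bookkeeping over the index sums.

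First I would substitute the definition $I_{ij}(\theta) = \sum_{x} p_\theta(x)\,(\partial_i \log p_\theta(x))(\partial_j \log p_\theta(x))$ into $v^T I(\theta) v = \sum_{i,j} v_i v_j I_{ij}(\theta)$, and then interchange the finite sums over $(i,j)$ and over $x$ (legitimate since $\mathcal{X}$ is finite). This yields
\[
v^T I(\theta) v = \sum_{x\in\mathcal X} p_\theta(x)\Bigl(\sum_i v_i\,\partial_i \log p_\theta(x)\Bigr)\Bigl(\sum_j v_j\,\partial_j \log p_\theta(x)\Bigr) = \sum_{x\in\mathcal X} p_\theta(x)\,S_v(x)^2,
\]
where $S_v(x) := \sum_i v_i\,\partial_i \log p_\theta(x)$ is the directional score in the direction $v$.

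Second, I would observe that each term in the final sum is a product of $p_\theta(x)\ge 0$ (a probability) and a real square $S_v(x)^2 \ge 0$, so the sum is nonnegative; this gives $v^T I(\theta) v \ge 0$ for all $v$, which is exactly positive semidefiniteness. I would also note in passing (for context, not strictly needed) that this can be phrased as $v^T I(\theta) v = \mathbb{E}_\theta[\,S_v(X)^2\,] = \Var_\theta(S_v(X))$, using that the score has zero mean, which foreshadows the Cramér--Rao interpretation. The only mild subtlety worth flagging is the restriction to outcomes with $p_\theta(x)>0$: for $x$ with $p_\theta(x)=0$ the term contributes $0$ and the log-derivative is simply not included in the sum, consistent with the convention already fixed when the score was defined.
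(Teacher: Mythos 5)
Your proposal is correct and follows essentially the same route as the paper: substitute the definition of $I_{ij}(\theta)$, swap the finite sums, and recognize $v^T I(\theta) v$ as $\sum_x p_\theta(x)\bigl(\sum_i v_i\,\partial_i\log p_\theta(x)\bigr)^2 \ge 0$. The extra remarks about the variance interpretation and the $p_\theta(x)=0$ convention are fine but not needed.
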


\begin{proof}
	Compute
	\[
	v^T I v = \sum_{i,j} v_i I_{ij} v_j
	= \sum_x p_\theta(x)\left(\sum_i v_i \partial_i \log p_\theta(x)\right)^2 \ge 0.
	\]
\end{proof}

\subsubsection*{Fisher metric as a Riemannian metric on a statistical manifold}
The matrix $I(\theta)$ defines an inner product on tangent vectors $u,v\in\R^m$ by
\[
\langle u,v\rangle_\theta := u^T I(\theta) v.
\]
This is the first appearance of ``geometry of learning'': distance and conditioning depend on $I(\theta)$.

\subsection*{From classical to quantum: measurement choice and the ``best'' metric}

\subsubsection*{Measurement dependence}
Given $\rho_\theta$, different POVMs $\{E_x\}$ produce different classical models $p_\theta(x)$,
hence different classical Fisher matrices $I(\theta;\{E_x\})$.
So: \emph{there is no unique Fisher metric unless we decide how we measure.}

\subsubsection*{Best possible Fisher information over all measurements}
Quantum theory has a canonical answer: take the supremum over all POVMs.
The resulting matrix is the \emph{Quantum Fisher Information Matrix}.

\begin{defn}[QFIM as the best classical FIM]
	For a state family $\rho_\theta$, define the QFIM by
	\[
	F(\theta) := \sup_{\{E_x\}} I(\theta;\{E_x\}),
	\]
	where the supremum is taken over all POVMs.
\end{defn}

\begin{rem}[Why ``best'' is not vague]
	There are precise theorems (Helstrom, Holevo) that identify the QFIM with an operator formula (SLD)
	and show it upper-bounds any classical Fisher matrix from any measurement.
	So $F(\theta)$ is the intrinsic information geometry of $\rho_\theta$.
\end{rem}

\subsection*{Computable definition via SLD}

\subsubsection*{The symmetric logarithmic derivative (SLD)}
\begin{defn}[SLD operators]
	For each parameter $\theta^i$, the SLD $L_i$ is the Hermitian operator solving
	\[
	\partial_i \rho_\theta = \frac12\bigl(L_i \rho_\theta + \rho_\theta L_i\bigr).
	\]
\end{defn}

\begin{defn}[QFIM via SLD]
	The quantum Fisher information matrix is
	\[
	F_{ij}(\theta) := \Tr\!\bigl(\rho_\theta\,\frac12(L_iL_j+L_jL_i)\bigr)
	= \Re\,\Tr(\rho_\theta L_i L_j).
	\]
\end{defn}

\begin{prop}[QFIM upper-bounds any measurement Fisher matrix]
	For any POVM $\{E_x\}$ producing $p_\theta(x)=\Tr(\rho_\theta E_x)$, the classical FIM satisfies
	\[
	I(\theta;\{E_x\}) \preceq F(\theta).
	\]
\end{prop}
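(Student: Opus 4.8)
This is the standard Braunstein--Caves argument, and the plan is to prove the Loewner inequality $I(\theta;\{E_x\})\preceq F(\theta)$ by reducing it to a scalar inequality in each real direction and then estimating with Cauchy--Schwarz in the Hilbert--Schmidt inner product. Fix $v\in\R^m$ and set $\partial_v\rho_\theta:=\sum_i v_i\partial_i\rho_\theta$ and $L_v:=\sum_i v_iL_i$; summing the SLD equations gives $\partial_v\rho_\theta=\tfrac12(L_v\rho_\theta+\rho_\theta L_v)$, and the QFIM definition gives $v^TF(\theta)v=\Re\Tr(\rho_\theta L_vL_v)=\Tr(\rho_\theta L_v^2)$ (real because $L_v\rho_\theta L_v$ is positive semidefinite). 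On the classical side, $v^TI(\theta;\{E_x\})v=\sum_x p_\theta(x)\big(\partial_v\log p_\theta(x)\big)^2=\sum_x\big(\partial_v p_\theta(x)\big)^2/p_\theta(x)$ with $\partial_v p_\theta(x)=\Tr(\partial_v\rho_\theta\,E_x)$, since $E_x$ is $\theta$-independent. So the whole statement reduces to the scalar inequality
\[
\sum_x\frac{\big(\Tr(\partial_v\rho_\theta\,E_x)\big)^2}{p_\theta(x)}\;\le\;\Tr(\rho_\theta L_v^2),\qquad\text{for every }v\in\R^m.
\]

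First I would rewrite the numerator using $\partial_v\rho_\theta=\tfrac12(L_v\rho_\theta+\rho_\theta L_v)$, cyclicity, and Hermiticity of $L_v,\rho_\theta,E_x$, obtaining $\Tr(\partial_v\rho_\theta\,E_x)=\Re\Tr(\rho_\theta L_v E_x)$, hence $\big(\Tr(\partial_v\rho_\theta\,E_x)\big)^2\le|\Tr(\rho_\theta L_v E_x)|^2$. Then I would factor $\Tr(\rho_\theta L_v E_x)=\Tr\!\big((\sqrt{\rho_\theta}\sqrt{E_x})^\dagger(\sqrt{\rho_\theta}\,L_v\sqrt{E_x})\big)$ and apply Cauchy--Schwarz for the Hilbert--Schmidt inner product:
\[
|\Tr(\rho_\theta L_v E_x)|^2\le\Tr(\sqrt{E_x}\rho_\theta\sqrt{E_x})\cdot\Tr(\sqrt{E_x}L_v\rho_\theta L_v\sqrt{E_x})=p_\theta(x)\,\Tr(E_x L_v\rho_\theta L_v).
\]
Dividing by $p_\theta(x)$, summing over $x$, and using the completeness relation $\sum_x E_x=I$ collapses the right-hand side to $\Tr(L_v\rho_\theta L_v)=\Tr(\rho_\theta L_v^2)=v^TF(\theta)v$. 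Since $v$ is arbitrary, this is exactly the Loewner inequality.

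Two technical points I would dispatch quickly. I would restrict all sums to outcomes with $p_\theta(x)>0$ under the convention $0/0=0$: if $p_\theta(x)=0$ then $\sqrt{\rho_\theta}\sqrt{E_x}=0$, so that term vanishes on both sides. I would also assume $\rho_\theta$ full-rank so the SLD is well defined, sketching the rank-deficient case only as a limit. The step I expect to be the main obstacle is the Cauchy--Schwarz application: one must pick the factorization so that the two Hilbert--Schmidt norms simplify cleanly to $p_\theta(x)$ and $\Tr(E_x L_v\rho_\theta L_v)$, and one must note that the estimate $(\Re z)^2\le|z|^2$ is the \emph{only} place where information is discarded --- it is precisely what makes the inequality generically strict and identifies the SLD measurement as the equality case. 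Everything else is routine trace bookkeeping.
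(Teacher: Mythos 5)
Your proof is correct, and it is the standard Braunstein--Caves argument: reduce the Loewner inequality to the scalar inequality in each direction $v$, write $\partial_v p_\theta(x)=\Re\Tr(\rho_\theta L_v E_x)$ via the SLD equation, bound $(\Re z)^2\le |z|^2$, and apply Hilbert--Schmidt Cauchy--Schwarz with the factorization $\sqrt{\rho_\theta}\sqrt{E_x}$ versus $\sqrt{\rho_\theta}L_v\sqrt{E_x}$ so that $\sum_x E_x=I$ collapses the sum to $\Tr(\rho_\theta L_v^2)=v^TF(\theta)v$. The paper itself does not prove this proposition --- it states it and defers to the Helstrom/Holevo results mentioned earlier in the section --- so your argument supplies a proof where the text gives none, and every step checks out, including the treatment of zero-probability outcomes (since $p_\theta(x)=\|\sqrt{\rho_\theta}\sqrt{E_x}\|_{\mathrm{HS}}^2$, both the numerator and the factorization vanish when $p_\theta(x)=0$) and the full-rank caveat for the SLD.

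One small correction to a side remark: the step $(\Re z)^2\le|z|^2$ is \emph{not} the only place where slack enters. The Cauchy--Schwarz step also discards information unless $\sqrt{\rho_\theta}\,L_v\sqrt{E_x}\propto\sqrt{\rho_\theta}\sqrt{E_x}$ for every outcome $x$, and both equality conditions are needed to characterize saturating measurements (and in the multi-parameter setting a single POVM generally cannot saturate the bound in all directions simultaneously). This does not affect the validity of the inequality you prove, only the claim about where strictness comes from.
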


\begin{rem}[Practical consequence]
	You can design measurement strategies to approach the QFIM, but for optimization you typically
	use $F(\theta)$ as the intrinsic preconditioner, even if you estimate it approximately.
\end{rem}

\subsubsection*{Spectral formula (useful for mixed states)}
If $\rho=\sum_a \lambda_a \ket{a}\bra{a}$ with $\lambda_a\ge 0$, then a standard identity gives
\[
F_{ij}
=
\sum_{a,b:\,\lambda_a+\lambda_b>0}
\frac{2}{\lambda_a+\lambda_b}\,
\Re\!\Bigl(\bra{a}\,\partial_i\rho\,\ket{b}\ \bra{b}\,\partial_j\rho\,\ket{a}\Bigr).
\]
This avoids explicitly solving for $L_i$ and is often the most stable analytic expression for mixed states.

\subsection*{Pure states: QFIM equals the Fubini--Study pullback (derivation)}

\subsubsection*{Setup}
Let $\ket{\psi_\theta}$ be a smooth family of \emph{normalized} pure states.
Then $\rho_\theta=\ket{\psi_\theta}\bra{\psi_\theta}$.

\begin{prop}[Pure-state QFIM formula]
	For pure states,
	\[
	F_{ij}(\theta)
	=
	4\,\Re\!\Bigl(\braket{\partial_i\psi}{\partial_j\psi}
	-
	\braket{\partial_i\psi}{\psi}\braket{\psi}{\partial_j\psi}
	\Bigr).
	\]
\end{prop}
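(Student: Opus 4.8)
The plan is to work directly from the SLD definition of the QFIM, exploiting the fact that purity yields an \emph{explicit} symmetric logarithmic derivative. Writing $\rho=\rho_\theta=\ket{\psi_\theta}\bra{\psi_\theta}$, the idempotence $\rho^2=\rho$ differentiates to $(\partial_i\rho)\rho+\rho(\partial_i\rho)=\partial_i\rho$, which is exactly the SLD equation $\partial_i\rho=\tfrac12(L_i\rho+\rho L_i)$ with $L_i:=2\,\partial_i\rho$. Since $\rho$ is Hermitian, so is $\partial_i\rho$, hence $L_i$ is a legitimate Hermitian SLD. First I would record this, together with the small remark that although the SLD is not unique when $\rho$ is rank-deficient, the bilinear form $F_{ij}=\Re\Tr(\rho L_iL_j)$ is insensitive to any modification of $L_i$ on $\ker\rho$, so the choice $L_i=2\,\partial_i\rho$ is harmless.

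Next I would substitute into $F_{ij}=\Tr\!\bigl(\rho\,\tfrac12(L_iL_j+L_jL_i)\bigr)=\Re\Tr(\rho L_iL_j)$ to obtain $F_{ij}=4\,\Re\Tr\!\bigl(\rho\,(\partial_i\rho)(\partial_j\rho)\bigr)$, reducing the whole claim to a single operator trace. Then I compute $\partial_i\rho=\ket{\partial_i\psi}\bra{\psi}+\ket{\psi}\bra{\partial_i\psi}$ and note that differentiating the normalization $\braket{\psi}{\psi}=1$ gives $\braket{\psi}{\partial_i\psi}+\braket{\partial_i\psi}{\psi}=0$, so $\braket{\psi}{\partial_i\psi}$ is purely imaginary and equals $-\braket{\partial_i\psi}{\psi}$.

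The computational core is to expand $\rho\,(\partial_i\rho)(\partial_j\rho)$ using the outer-product form of $\partial_i\rho$. One gets $\rho\,(\partial_i\rho)=\ket{\psi}\bigl(\braket{\psi}{\partial_i\psi}\bra{\psi}+\bra{\partial_i\psi}\bigr)$, and multiplying by $\partial_j\rho$ the term proportional to $\bigl(\braket{\psi}{\partial_i\psi}+\braket{\partial_i\psi}{\psi}\bigr)\bra{\partial_j\psi}$ is killed by normalization, leaving $\rho\,(\partial_i\rho)(\partial_j\rho)=\ket{\psi}\bigl(\braket{\partial_i\psi}{\partial_j\psi}+\braket{\psi}{\partial_i\psi}\braket{\psi}{\partial_j\psi}\bigr)\bra{\psi}$. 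Taking the trace and using $\braket{\psi}{\psi}=1$ gives $\Tr\!\bigl(\rho\,(\partial_i\rho)(\partial_j\rho)\bigr)=\braket{\partial_i\psi}{\partial_j\psi}+\braket{\psi}{\partial_i\psi}\braket{\psi}{\partial_j\psi}$, and rewriting $\braket{\psi}{\partial_i\psi}=-\braket{\partial_i\psi}{\psi}$ turns the second term into $-\braket{\partial_i\psi}{\psi}\braket{\psi}{\partial_j\psi}$. Multiplying by $4$ and taking the real part yields the stated formula.

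I expect the only real obstacle to be the bookkeeping in the last step—keeping the bra/ket factors in the right order and spotting cleanly which terms vanish by normalization—rather than anything conceptually deep; the purity hypothesis does the heavy lifting by making the SLD explicit. As a closing remark I would observe that the right-hand side is four times the real part of $\braket{\partial_i\psi}{\partial_j\psi}-\braket{\partial_i\psi}{\psi}\braket{\psi}{\partial_j\psi}$, which is by definition the pullback of the Fubini--Study metric along $\theta\mapsto[\psi_\theta]$, so this proposition is precisely the ``pure-state QFIM $=$ Fubini--Study pullback'' statement promised by the section heading; the imaginary part of the same Hermitian form is the Berry curvature, which I would mention but not use here.
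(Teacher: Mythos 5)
Your proof is correct and follows essentially the same route as the paper: both use the explicit pure-state SLD $L_i=2\,\partial_i\rho=2\bigl(\ket{\partial_i\psi}\bra{\psi}+\ket{\psi}\bra{\partial_i\psi}\bigr)$ and reduce $F_{ij}=\Re\Tr(\rho L_iL_j)$ to an outer-product expansion in which the cross terms are killed by the normalization identity $\braket{\psi}{\partial_i\psi}+\braket{\partial_i\psi}{\psi}=0$. Your derivation of the SLD by differentiating $\rho^2=\rho$ (plus the remark on kernel ambiguity) is a slightly cleaner verification than the paper's direct substitution, but the argument is the same in substance.
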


\begin{proof}
	Start with $\rho=\ket{\psi}\bra{\psi}$ and differentiate:
	\[
	\partial_i\rho = \ket{\partial_i\psi}\bra{\psi} + \ket{\psi}\bra{\partial_i\psi}.
	\]
	A convenient SLD choice for pure states is
	\[
	L_i = 2\bigl(\ket{\partial_i\psi}\bra{\psi}+\ket{\psi}\bra{\partial_i\psi}\bigr),
	\]
	which is Hermitian. Then
	\[
	\frac12(L_i\rho+\rho L_i)
	=
	\bigl(\ket{\partial_i\psi}\bra{\psi}+\ket{\psi}\bra{\partial_i\psi}\bigr)\ket{\psi}\bra{\psi}
	+
	\ket{\psi}\bra{\psi}\bigl(\ket{\partial_i\psi}\bra{\psi}+\ket{\psi}\bra{\partial_i\psi}\bigr).
	\]
	Use $\braket{\psi}{\psi}=1$ and $\rho\ket{\psi}=\ket{\psi}$:
	after simplification, the result equals $\partial_i\rho$ (details expand mechanically),
	so this $L_i$ satisfies the SLD equation.
	
	Now compute
	\[
	F_{ij}=\Re\,\Tr(\rho L_i L_j) = \Re\,\bra{\psi}L_iL_j\ket{\psi}.
	\]
	Insert $L_i$ and expand the products.
	The only non-vanishing contributions are those that return to $\ket{\psi}$ after acting.
	A direct expansion yields exactly
	\[
	F_{ij}
	=
	4\,\Re\!\Bigl(\braket{\partial_i\psi}{\partial_j\psi}
	-
	\braket{\partial_i\psi}{\psi}\braket{\psi}{\partial_j\psi}\Bigr).
	\]
\end{proof}

\begin{rem}[Geometric meaning: a metric on projective space]
	The expression
	\[
	g^{\mathrm{FS}}_{ij}(\theta)
	:=
	\Re\!\Bigl(\braket{\partial_i\psi}{\partial_j\psi}
	-
	\braket{\partial_i\psi}{\psi}\braket{\psi}{\partial_j\psi}\Bigr)
	\]
	is precisely the pullback of the Fubini--Study metric on $\CP^{d-1}$ along the map $\theta\mapsto [\psi_\theta]$.
	So
	\[
	F(\theta)=4\,g^{\mathrm{FS}}(\theta).
	\]
	This is the cleanest conceptual bridge between ``quantum states'' and ``Riemannian geometry.''
\end{rem}

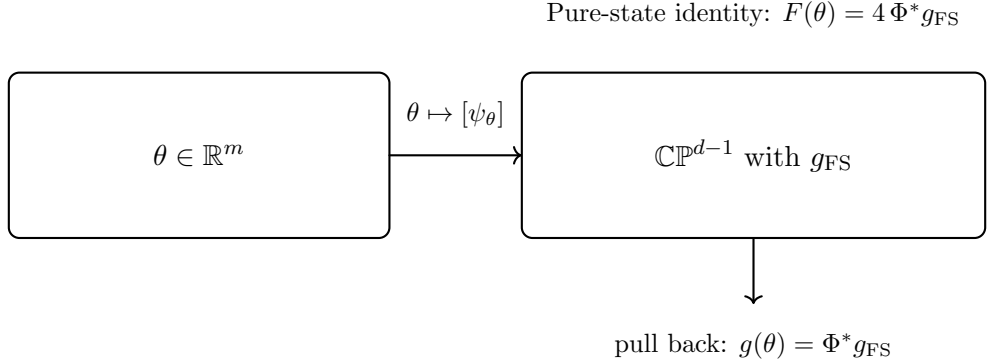
\begin{figure}[t]
	\centering
	\begin{tikzpicture}[scale=2.2, line cap=round, line join=round]
		\draw[thick, rounded corners] (0,0.5) rectangle (2.3,1.5);
		\node at (1.15,1.0) {$\theta\in\R^m$};
		
		\draw[->, thick] (2.3,1.0) -- (3.1,1.0);
		\node at (2.7,1.25) {\small $\theta\mapsto[\psi_\theta]$};
		
		\draw[thick, rounded corners] (3.1,0.5) rectangle (5.9,1.5);
		\node at (4.5,1.0) {$\CP^{d-1}$ with $g_{\mathrm{FS}}$};
		
		\draw[->, thick] (4.5,0.5) -- (4.5,0.1);
		\node at (4.5,-0.15) {\small pull back: $g(\theta)=\Phi^*g_{\mathrm{FS}}$};
		
		\node at (4.5,1.85) {\small Pure-state identity: $F(\theta)=4\,\Phi^*g_{\mathrm{FS}}$};
	\end{tikzpicture}
	\caption{For pure states, QFIM is the pullback of the Fubini--Study metric on projective Hilbert space.}
	\label{fig:qfim-pure-pullback}
\end{figure}

\subsection*{Fast computation tool: generator/variance form}

\subsubsection*{Single-parameter unitary families}
Let
\[
\ket{\psi_\theta}=U(\theta)\ket{\psi_0},\qquad U(\theta)=e^{-i\theta G},
\]
where $G$ is Hermitian (the generator).

\begin{prop}[Single-parameter QFI as variance]
	For a pure state family generated by $G$,
	\[
	F(\theta)=4\,\Var_{\psi_\theta}(G)
	=4\Bigl(\bra{\psi_\theta}G^2\ket{\psi_\theta}-\bra{\psi_\theta}G\ket{\psi_\theta}^2\Bigr).
	\]
\end{prop}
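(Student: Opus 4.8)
The plan is to specialize the pure-state QFIM formula just proven, $F_{ij}(\theta)=4\,\Re\bigl(\braket{\partial_i\psi}{\partial_j\psi}-\braket{\partial_i\psi}{\psi}\braket{\psi}{\partial_j\psi}\bigr)$, to the one-parameter case $m=1$. The first point to note is that since $U(\theta)=e^{-i\theta G}$ is unitary, $\ket{\psi_\theta}$ stays normalized for every $\theta$, so that formula applies verbatim with $\ket{\partial_\theta\psi}$ in place of $\ket{\partial_i\psi}$.

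First I would compute the tangent vector by differentiating $\ket{\psi_\theta}=e^{-i\theta G}\ket{\psi_0}$, which gives $\ket{\partial_\theta\psi}=-iG\,e^{-i\theta G}\ket{\psi_0}=-iG\ket{\psi_\theta}$, and then $\bra{\partial_\theta\psi}=i\bra{\psi_\theta}G$ using $G^\dagger=G$. Substituting into the three inner products in the formula:
\[
\braket{\partial_\theta\psi}{\partial_\theta\psi}=\bra{\psi_\theta}G^2\ket{\psi_\theta},\qquad
\braket{\partial_\theta\psi}{\psi}=i\bra{\psi_\theta}G\ket{\psi_\theta},\qquad
\braket{\psi}{\partial_\theta\psi}=-i\bra{\psi_\theta}G\ket{\psi_\theta},
\]
so the product of the last two is $\bra{\psi_\theta}G\ket{\psi_\theta}^2$. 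Both $\bra{\psi_\theta}G^2\ket{\psi_\theta}$ and $\bra{\psi_\theta}G\ket{\psi_\theta}^2$ are real (expectation values of Hermitian operators), so the $\Re$ acts trivially and
\[
F(\theta)=4\Bigl(\bra{\psi_\theta}G^2\ket{\psi_\theta}-\bra{\psi_\theta}G\ket{\psi_\theta}^2\Bigr)=4\,\Var_{\psi_\theta}(G),
\]
which is the claim.

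There is essentially no genuine obstacle here: the argument is a mechanical substitution into an identity already available, and the only thing to be careful about is sign bookkeeping and the use of Hermiticity of $G$ when passing from the ket to the bra. As a sanity check, one can observe that $U(\theta)$ commutes with $G$, so $\bra{\psi_\theta}G^k\ket{\psi_\theta}=\bra{\psi_0}G^k\ket{\psi_0}$ and hence $F(\theta)$ is in fact independent of $\theta$ — the state moves along an orbit of constant Fubini--Study speed. If one preferred not to invoke the pure-state formula, an alternative is to verify directly that $L=2\bigl(\ket{\partial_\theta\psi}\bra{\psi}+\ket{\psi}\bra{\partial_\theta\psi}\bigr)=-2i[G,\rho_\theta]$ is Hermitian and solves the SLD equation $\partial_\theta\rho_\theta=\tfrac12(L\rho_\theta+\rho_\theta L)$, and then evaluate $F(\theta)=\Re\,\Tr(\rho_\theta L^2)$; this reproduces the same variance expression at the cost of somewhat more algebra.
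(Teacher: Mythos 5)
Your proposal is correct and follows essentially the same route as the paper: differentiate $\ket{\psi_\theta}=e^{-i\theta G}\ket{\psi_0}$ to get $\partial_\theta\ket{\psi_\theta}=-iG\ket{\psi_\theta}$, substitute into the pure-state QFIM formula, and use Hermiticity of $G$ to reduce the expression to $4\,\Var_{\psi_\theta}(G)$. The added observations (that $F(\theta)$ is $\theta$-independent, and the alternative SLD verification) are fine supplements but not needed.
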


\begin{proof}
	Differentiate $\ket{\psi_\theta}$:
	\[
	\partial_\theta \ket{\psi_\theta} = -iG\ket{\psi_\theta}.
	\]
	Insert into the pure-state formula:
	\[
	F = 4\,\Re\Bigl(\braket{\partial\psi}{\partial\psi}-|\braket{\psi}{\partial\psi}|^2\Bigr)
	=4\Bigl(\bra{\psi}G^2\ket{\psi}-|\bra{\psi}G\ket{\psi}|^2\Bigr),
	\]
	using $\braket{\psi}{\partial\psi}=-i\bra{\psi}G\ket{\psi}$ and $\Re(i\cdot \text{real})=0$.
\end{proof}

\begin{rem}[Engineering translation]
	QFI is large when the generator fluctuates strongly in the state.
	If the state is (nearly) an eigenstate of $G$, variance is small and parameter becomes hard to estimate/optimize.
	This is the simplest geometric explanation of barren-plateau-style flatness in certain directions.
\end{rem}

\subsection*{Multi-parameter circuits: covariance form (practical for ans\"atze)}

\subsubsection*{Parameterized circuits as products of exponentials}
A common ansatz form is
\[
U(\theta)=U_L(\theta^L)\cdots U_2(\theta^2)U_1(\theta^1),
\qquad
U_k(\theta^k)=e^{-i\theta^k G_k},
\]
with fixed Hermitian generators $G_k$ (Pauli strings, hardware Hamiltonians, etc.).

Define the \emph{effective generators} (also called dressed generators)
\[
\widetilde G_i(\theta)
:=
U_{1:i-1}(\theta)^\dagger\,G_i\,U_{1:i-1}(\theta),
\qquad
U_{1:i-1}:=U_{i-1}\cdots U_1,
\]
so that the parameter derivative acts like
\[
\partial_i \ket{\psi_\theta} = -i\,\widetilde G_i(\theta)\ket{\psi_\theta}.
\]

\begin{prop}[Covariance form of pure-state QFIM for circuits]
	For pure $\ket{\psi_\theta}=U(\theta)\ket{\psi_0}$,
	\[
	F_{ij}(\theta)=4\,\Cov_{\psi_\theta}\bigl(\widetilde G_i(\theta),\widetilde G_j(\theta)\bigr),
	\]
	where the (symmetrized) covariance is
	\[
	\Cov_\psi(A,B)
	:=
	\frac12\bra{\psi}(AB+BA)\ket{\psi}
	-\bra{\psi}A\ket{\psi}\,\bra{\psi}B\ket{\psi}.
	\]
\end{prop}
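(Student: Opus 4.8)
The plan is to reduce the claim to the pure-state QFIM formula already established, namely $F_{ij}(\theta)=4\,\Re\bigl(\braket{\partial_i\psi}{\partial_j\psi}-\braket{\partial_i\psi}{\psi}\braket{\psi}{\partial_j\psi}\bigr)$, and then feed in the dressed-generator description of the parameter derivatives. First I would record two structural facts. (1) By the setup, $\partial_i\ket{\psi_\theta}=-i\,\widetilde G_i(\theta)\ket{\psi_\theta}$; I would only note that this is a product-rule computation on $U(\theta)=U_L\cdots U_1$ requiring care with operator ordering, and cite it rather than rederive it. (2) Each $\widetilde G_i(\theta)$ is Hermitian, being a unitary conjugate of the Hermitian generator $G_i$; consequently $\bra{\psi_\theta}\widetilde G_i\ket{\psi_\theta}\in\R$, which is what makes the ``correction term'' in the QFIM formula collapse to a real scalar.

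Next I would substitute and evaluate the three scalars. Writing $\ket{\psi}=\ket{\psi_\theta}$, the relation $\partial_i\ket{\psi}=-i\widetilde G_i\ket{\psi}$ together with Hermiticity gives $\braket{\partial_i\psi}{\partial_j\psi}=\bra{\psi}\widetilde G_i\widetilde G_j\ket{\psi}$, $\braket{\partial_i\psi}{\psi}=i\bra{\psi}\widetilde G_i\ket{\psi}$, and $\braket{\psi}{\partial_j\psi}=-i\bra{\psi}\widetilde G_j\ket{\psi}$. Since the last two expectation values are real, their product equals the real number $\bra{\psi}\widetilde G_i\ket{\psi}\bra{\psi}\widetilde G_j\ket{\psi}$, so $F_{ij}=4\bigl(\Re\bra{\psi}\widetilde G_i\widetilde G_j\ket{\psi}-\bra{\psi}\widetilde G_i\ket{\psi}\bra{\psi}\widetilde G_j\ket{\psi}\bigr)$. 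The last manipulation is to symmetrize the operator product: from $\overline{\bra{\psi}\widetilde G_i\widetilde G_j\ket{\psi}}=\bra{\psi}(\widetilde G_i\widetilde G_j)^\dagger\ket{\psi}=\bra{\psi}\widetilde G_j\widetilde G_i\ket{\psi}$ one gets $\Re\bra{\psi}\widetilde G_i\widetilde G_j\ket{\psi}=\tfrac12\bra{\psi}(\widetilde G_i\widetilde G_j+\widetilde G_j\widetilde G_i)\ket{\psi}$, and substituting this back is exactly $4\,\Cov_{\psi_\theta}(\widetilde G_i,\widetilde G_j)$ with the symmetrized covariance from the statement.

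I do not expect a real obstacle: the whole proof is three substitutions plus Hermiticity. The only places that genuinely need attention are (a) establishing the dressed-generator identity $\partial_i\ket{\psi_\theta}=-i\widetilde G_i\ket{\psi_\theta}$ consistently with whatever ordering convention is fixed for the ansatz --- this is the one step where sign and operator-ordering mistakes creep in --- and (b) tracking the factors of $i$ so that the two ``off-diagonal'' inner products combine into a manifestly real correction. As a sanity check I would specialize to $i=j$ and confirm the formula reduces to $F_{ii}=4\Var_{\psi_\theta}(\widetilde G_i)$, recovering the single-parameter variance form proved earlier; and I would note that for a single-block ansatz $\widetilde G_1=G_1$, so the multi-parameter statement degenerates to exactly that special case.
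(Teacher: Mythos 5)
Your proposal is correct and follows essentially the same route as the paper's proof: substitute $\partial_i\ket{\psi}=-i\widetilde G_i\ket{\psi}$ into the pure-state QFIM formula, use Hermiticity of the dressed generators to make the cross term real, and symmetrize the real part of $\bra{\psi}\widetilde G_i\widetilde G_j\ket{\psi}$ into the anticommutator. Your signs for $\braket{\partial_i\psi}{\psi}=+i\bra{\psi}\widetilde G_i\ket{\psi}$ and $\braket{\psi}{\partial_j\psi}=-i\bra{\psi}\widetilde G_j\ket{\psi}$ are in fact the consistent ones (the paper lists them swapped, harmlessly, since only their product enters), and your $i=j$ sanity check against the variance form is a nice addition.
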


\begin{proof}
	Use $\partial_i\ket{\psi}=-i\widetilde G_i\ket{\psi}$ and insert into the pure-state formula
	\[
	F_{ij}=4\,\Re\Bigl(\braket{\partial_i\psi}{\partial_j\psi}-\braket{\partial_i\psi}{\psi}\braket{\psi}{\partial_j\psi}\Bigr).
	\]
	Compute:
	\[
	\braket{\partial_i\psi}{\partial_j\psi}
	=
	(-i)(i)\bra{\psi}\widetilde G_i\widetilde G_j\ket{\psi}
	=
	\bra{\psi}\widetilde G_i\widetilde G_j\ket{\psi},
	\]
	and
	\[
	\braket{\partial_i\psi}{\psi}=-i\bra{\psi}\widetilde G_i\ket{\psi},\qquad
	\braket{\psi}{\partial_j\psi}=+i\bra{\psi}\widetilde G_j\ket{\psi}.
	\]
	Thus
	\[
	\braket{\partial_i\psi}{\psi}\braket{\psi}{\partial_j\psi}
	=
	\bra{\psi}\widetilde G_i\ket{\psi}\,\bra{\psi}\widetilde G_j\ket{\psi}.
	\]
	Taking the real part replaces $\bra{\psi}\widetilde G_i\widetilde G_j\ket{\psi}$
	by $\frac12\bra{\psi}(\widetilde G_i\widetilde G_j+\widetilde G_j\widetilde G_i)\ket{\psi}$,
	yielding the covariance form.
\end{proof}

\begin{rem}[Why this is practical]
	This formula computes QFIM by expectation values of (dressed) generators and their pairwise products.
	On simulators this is exact; on hardware it can be estimated via measurement grouping strategies.
	It avoids differentiating density matrices and is the default workhorse in variational algorithms.
\end{rem}

\subsection*{QFIM and Quantum Natural Gradient (QNG): geometry-aware preconditioning}

\subsubsection*{Problem: Euclidean gradient ignores geometry}
Suppose you minimize a cost $C(\theta)$ (energy, infidelity, negative log-likelihood, etc.).
A standard update uses the Euclidean gradient:
\[
\theta \leftarrow \theta - \eta \nabla C(\theta).
\]
But $\theta$-coordinates are arbitrary; reparameterizations change the meaning of ``step size''.

\subsubsection*{Geometric fix: take steepest descent in the QFIM metric}
In a Riemannian metric $F(\theta)$, the steepest descent direction is
\[
\Delta\theta_{\mathrm{QNG}} = -\eta\,F(\theta)^{-1}\,\nabla C(\theta).
\]
This is the \emph{Quantum Natural Gradient} update.

\begin{rem}[Singular and noisy QFIM: regularization is not optional]
	In practice $F$ can be ill-conditioned or singular (redundant parameters, symmetries, barren directions).
	Use a damped inverse:
	\[
	(F+\lambda I)^{-1}
	\]
	or truncated eigenspace inversion. This is analogous to Levenberg--Marquardt / Tikhonov regularization.
\end{rem}

\begin{figure}[t]
	\centering
	\begin{tikzpicture}[scale=2.1, line cap=round, line join=round]
		\draw[->] (-0.2,0) -- (3.2,0) node[right] {$\theta^1$};
		\draw[->] (0,-0.2) -- (0,2.4) node[above] {$\theta^2$};
		
		\draw (1.7,1.1) ellipse (1.2 and 0.55);
		\draw (1.7,1.1) ellipse (0.85 and 0.38);
		\draw (1.7,1.1) ellipse (0.50 and 0.22);
		
		\fill (0.9,1.8) circle (0.03) node[above] {$\theta$};
		
		\draw[->, thick] (0.9,1.8) -- (1.35,1.3) node[midway,right] {\small $-\nabla C$};
		
		\draw[->, thick] (0.9,1.8) -- (1.05,1.55) node[midway,left] {\small $-F^{-1}\nabla C$};
		
		\node at (1.7,-0.55) {\small QNG rescales directions according to the local state-space geometry};
	\end{tikzpicture}
	\caption{Contour picture intuition: with an anisotropic metric, the steepest descent direction differs from the Euclidean gradient. QNG uses the QFIM to normalize parameter directions by state-space sensitivity.}
	\label{fig:qng-contours}
\end{figure}

\subsection*{Worked examples}

\subsubsection*{Example 1: single-qubit $R_Y(\theta)$ from $\ket{0}$}
Let
\[
\ket{\psi_\theta}=R_Y(\theta)\ket{0},\qquad R_Y(\theta)=e^{-i\frac{\theta}{2}Y}.
\]
Then $G=\frac12 Y$, so for pure states
\[
F(\theta)=4\,\Var_{\psi_\theta}(G).
\]
Compute $\Var(G)$ by evaluating $\langle Y\rangle$ and $\langle Y^2\rangle$.
Since $Y^2=I$,
\[
\langle G^2\rangle = \frac14\langle I\rangle=\frac14.
\]
Also $\langle Y\rangle_{\psi_\theta}=0$ because $R_Y(\theta)\ket0$ stays on the $xz$-plane of the Bloch sphere,
so the $y$-coordinate is zero. Hence $\langle G\rangle=0$ and
\[
\Var(G)=\frac14 \quad\Rightarrow\quad F(\theta)=4\cdot \frac14 = 1.
\]
So this parameter is uniformly well-conditioned: the state moves at constant speed in Fubini--Study geometry.

\subsubsection*{Example 2: single-qubit $R_Z(\theta)$ acting on $\ket{0}$ (a ``dead'' direction)}
Let $\ket{\psi_\theta}=R_Z(\theta)\ket0$ with $R_Z(\theta)=e^{-i\frac{\theta}{2}Z}$.
But $\ket0$ is an eigenstate of $Z$, so the physical state is unchanged up to global phase.
Thus the QFI must vanish. Using variance:
\[
G=\frac12 Z,\qquad \Var_{\ket0}(G)=\frac14-\left(\frac12\right)^2=0,
\]
so
\[
F(\theta)=0.
\]
This is the cleanest illustration of a singular QFIM direction: the parameter does not move you in projective state space.

\subsubsection*{Example 3: two-parameter circuit with noncommuting dressed generators}
Let
\[
\ket{\psi(\theta_1,\theta_2)} = R_Z(\theta_2)\,R_Y(\theta_1)\ket0,
\]
with generators $G_1=\frac12 Y$, $G_2=\frac12 Z$.
The dressed generator for $\theta_1$ is $\widetilde G_1=G_1$ (nothing before it).
For $\theta_2$, the dressed generator is
\[
\widetilde G_2(\theta_1) = R_Y(\theta_1)^\dagger\,\frac12 Z\,R_Y(\theta_1)
= \frac12\bigl(Z\cos\theta_1 + X\sin\theta_1\bigr),
\]
since $Y$-rotation rotates the Bloch axes.

Now compute QFIM using covariance in state $\ket{\psi}$.
Because $R_Z(\theta_2)$ is last, it does not change expectation values of operators conjugated consistently;
you can evaluate in $\ket{\phi}=R_Y(\theta_1)\ket0$.

First,
\[
F_{11}=4\,\Var_\phi\!\left(\frac12 Y\right)=1
\]
as in Example 1.

Next,
\[
F_{22}=4\,\Var_\phi\!\left(\frac12(Z\cos\theta_1+X\sin\theta_1)\right).
\]
In state $\ket{\phi}$, the Bloch vector is $(\sin\theta_1,0,\cos\theta_1)$.
So
\[
\langle X\rangle_\phi = \sin\theta_1,\qquad \langle Z\rangle_\phi=\cos\theta_1,
\qquad \langle Z\cos\theta_1+X\sin\theta_1\rangle_\phi = 1.
\]
Also $(Z\cos\theta_1+X\sin\theta_1)^2=I$ (Pauli along a unit axis), hence
\[
\left\langle \left(\frac12(Z\cos\theta_1+X\sin\theta_1)\right)^2\right\rangle
= \frac14.
\]
Thus
\[
\Var\left(\frac12(\cdot)\right)=\frac14-\left(\frac12\cdot 1\right)^2=0,
\quad\Rightarrow\quad F_{22}=0.
\]
So the second parameter is again a dead direction for this particular ansatz: once you rotate to $\ket{\phi}$,
the axis of the subsequent $Z$-rotation aligns with the state, producing only phase.

Finally,
\[
F_{12}=4\,\Cov_\phi\!\left(\frac12Y,\frac12(Z\cos\theta_1+X\sin\theta_1)\right).
\]
But in the $xz$-plane state $\ket{\phi}$, $\langle Y\rangle_\phi=0$ and also
$\langle YX+XY\rangle=\langle YZ+ZY\rangle=0$ in any qubit state because these anticommutators vanish:
$YX+XY=0$ and $YZ+ZY=0$.
Hence $F_{12}=0$.

Conclusion:
\[
F(\theta_1,\theta_2)=
\begin{pmatrix}
	1 & 0\\
	0 & 0
\end{pmatrix}.
\]
This example is intentionally simple but extremely important: \emph{QFIM reveals redundant parameters immediately.}

\subsection*{Exercises}

\begin{exercise}[Classical Fisher information for Bernoulli]
	Let $X\sim \mathrm{Bernoulli}(p)$ with parameter $\theta=p\in(0,1)$:
	\[
	\Pr(X=1)=p,\quad \Pr(X=0)=1-p.
	\]
	Compute the Fisher information $I(p)$.
\end{exercise}

\begin{solution}
	The log-likelihood derivatives are
	\[
	\partial_p \log p^X(1-p)^{1-X}
	=
	\frac{X}{p}-\frac{1-X}{1-p}.
	\]
	Square and take expectation:
	\[
	I(p)=\mathbb E\left[\left(\frac{X}{p}-\frac{1-X}{1-p}\right)^2\right]
	=
	p\left(\frac{1}{p}\right)^2+(1-p)\left(\frac{1}{1-p}\right)^2
	=
	\frac{1}{p}+\frac{1}{1-p}
	=
	\frac{1}{p(1-p)}.
	\]
\end{solution}

\begin{exercise}[POVM-induced Fisher information]
	Let $\rho_\theta$ be a qubit state family and $\{E_x\}_{x\in\{0,1\}}$ a two-outcome POVM.
	Write the classical Fisher information (1D) explicitly in terms of $p_\theta=\Tr(\rho_\theta E_1)$.
\end{exercise}

\begin{solution}
	With two outcomes,
	\[
	p_\theta(1)=p_\theta,\qquad p_\theta(0)=1-p_\theta.
	\]
	Thus the Fisher information is the Bernoulli formula:
	\[
	I(\theta)=\frac{(\partial_\theta p_\theta)^2}{p_\theta(1-p_\theta)}.
	\]
\end{solution}

\begin{exercise}[Pure-state QFI equals generator variance]
	Let $\ket{\psi_\theta}=e^{-i\theta G}\ket{\psi_0}$ with $G$ Hermitian.
	Using the pure-state QFIM formula
	\[
	F=4\,\Re\bigl(\braket{\partial\psi}{\partial\psi}-|\braket{\psi}{\partial\psi}|^2\bigr),
	\]
	derive $F=4\,\Var(G)$.
\end{exercise}

\begin{solution}
	Differentiate:
	\[
	\partial_\theta\ket{\psi_\theta}=-iG\ket{\psi_\theta}.
	\]
	Then
	\[
	\braket{\partial\psi}{\partial\psi}=\bra{\psi}G^2\ket{\psi},\qquad
	\braket{\psi}{\partial\psi}=-i\bra{\psi}G\ket{\psi}.
	\]
	So $|\braket{\psi}{\partial\psi}|^2=\bra{\psi}G\ket{\psi}^2$.
	Insert into the formula to obtain
	\[
	F=4\bigl(\bra{\psi}G^2\ket{\psi}-\bra{\psi}G\ket{\psi}^2\bigr)=4\,\Var_\psi(G).
	\]
\end{solution}

\begin{exercise}[A dead parameter direction]
	Show that if $\ket{\psi_0}$ is an eigenstate of $G$, then the QFI for $\ket{\psi_\theta}=e^{-i\theta G}\ket{\psi_0}$ is zero.
\end{exercise}

\begin{solution}
	If $G\ket{\psi_0}=\lambda\ket{\psi_0}$, then for all $\theta$,
	\[
	\ket{\psi_\theta}=e^{-i\theta\lambda}\ket{\psi_0},
	\]
	which differs only by global phase, so the projective state is unchanged. Hence QFI must vanish.
	Using variance:
	\[
	\Var_{\psi_\theta}(G)=\bra{\psi_0}G^2\ket{\psi_0}-\bra{\psi_0}G\ket{\psi_0}^2
	=\lambda^2-\lambda^2=0,
	\]
	so $F=4\Var(G)=0$.
\end{solution}

\begin{exercise}[Two-parameter covariance form]
	Let $\ket{\psi(\theta)}=U(\theta)\ket{\psi_0}$ be pure and define dressed generators $\widetilde G_i(\theta)$ by
	$\partial_i\ket{\psi}=-i\widetilde G_i\ket{\psi}$.
	Show that
	\[
	F_{ij}=4\left(\frac12\bra{\psi}(\widetilde G_i\widetilde G_j+\widetilde G_j\widetilde G_i)\ket{\psi}
	-\bra{\psi}\widetilde G_i\ket{\psi}\bra{\psi}\widetilde G_j\ket{\psi}\right).
	\]
\end{exercise}

\begin{solution}
	Insert $\partial_i\ket{\psi}=-i\widetilde G_i\ket{\psi}$ into the pure-state QFIM identity:
	\[
	F_{ij}=4\,\Re\Bigl(\bra{\psi}\widetilde G_i\widetilde G_j\ket{\psi}
	-\bra{\psi}\widetilde G_i\ket{\psi}\bra{\psi}\widetilde G_j\ket{\psi}\Bigr).
	\]
	Since $\widetilde G_i$ are Hermitian, the real part of $\bra{\psi}\widetilde G_i\widetilde G_j\ket{\psi}$
	is $\frac12\bra{\psi}(\widetilde G_i\widetilde G_j+\widetilde G_j\widetilde G_i)\ket{\psi}$.
	This yields the covariance form.
\end{solution}

\begin{exercise}[QNG step as a constrained minimizer]
	Let $C(\theta)$ be a smooth cost. Consider the quadratic model
	\[
	C(\theta+\Delta)\approx C(\theta)+\nabla C(\theta)^T\Delta
	\]
	subject to a constraint on the squared step length in the QFIM metric:
	\[
	\Delta^T F(\theta)\Delta \le \varepsilon^2.
	\]
	Show that the minimizing direction is proportional to $-F^{-1}\nabla C$.
\end{exercise}

\begin{solution}
	Use Lagrange multipliers for minimizing the linear approximation under a quadratic constraint.
	Minimize
	\[
	\mathcal L(\Delta,\lambda)=\nabla C^T\Delta+\lambda(\Delta^T F\Delta-\varepsilon^2),\qquad \lambda\ge 0.
	\]
	Stationarity in $\Delta$ gives
	\[
	\nabla C + 2\lambda F\Delta = 0
	\quad\Rightarrow\quad
	\Delta = -\frac{1}{2\lambda}\,F^{-1}\nabla C.
	\]
	Thus the minimizing direction is $-F^{-1}\nabla C$, with scale chosen to satisfy the constraint.
\end{solution}


	
\section{Quantum Circuits I: From Diagrams to Linear Maps (Geometry-Aware Edition)}
\label{sec:circuits1}

\subsection{Objective}
The goal of this chapter is to translate \emph{circuit diagrams} into precise
\emph{linear maps} (unitary matrices) without any ambiguity about ordering,
tensor products, or basis conventions. At the same time, we keep a
\emph{geometric interpretation} in view: single-qubit subcircuits act as
\emph{rotations on the Bloch sphere}, and two-qubit primitives (like CNOT)
create and manipulate correlation/entanglement in a way that becomes visible
both algebraically (as matrices) and geometrically (via reduced Bloch vectors).

\subsection{Conventions (read once, reuse everywhere)}
\subsubsection{Wires, ordering, and basis}
We fix the computational basis
\[
\ket{0}=\begin{pmatrix}1\\0\end{pmatrix},\qquad
\ket{1}=\begin{pmatrix}0\\1\end{pmatrix}.
\]
For two qubits we use the ordered basis
\[
\ket{00},\ket{01},\ket{10},\ket{11}
\quad\text{where}\quad
\ket{ab}:=\ket{a}\otimes\ket{b}.
\]
(First wire = left tensor factor, second wire = right tensor factor.)

\subsubsection{Time order = matrix multiplication}
If a circuit applies $U$ and then applies $V$, the overall map is
\[
\ket{\psi}\ \mapsto\ V(U\ket{\psi})=(VU)\ket{\psi}.
\]
So the matrix product is \emph{right-to-left} when reading gates in time order.

\subsubsection{Parallel gates = tensor product}
If, at the same time step, the first wire gets gate $A$ and the second wire gets gate $B$,
then the joint operation is
\[
A\otimes B.
\]

\subsubsection{ASCII circuit drawings (TikZ)}
Throughout, we use the following simple TikZ style for circuit sketches.
\begin{center}
	\begin{tikzpicture}[x=1.0cm,y=0.8cm, line cap=round]
		\draw (0,0) -- (6,0);
		\draw (0,-1) -- (6,-1);
		\node[left] at (0,0) {$q_0$};
		\node[left] at (0,-1) {$q_1$};
		\draw (1,0) rectangle (2,0.6); \node at (1.5,0.3) {$H$};
		\draw (3,-1) rectangle (4,-0.4); \node at (3.5,-0.7) {$X$};
	\end{tikzpicture}
\end{center}

\subsection{Basic single-qubit gates as matrices}
\subsubsection{The standard gates}
\[
I=\begin{pmatrix}1&0\\0&1\end{pmatrix},\quad
X=\begin{pmatrix}0&1\\1&0\end{pmatrix},\quad
Y=\begin{pmatrix}0&-i\\ i&0\end{pmatrix},\quad
Z=\begin{pmatrix}1&0\\0&-1\end{pmatrix}.
\]
\[
H=\frac1{\sqrt2}\begin{pmatrix}1&1\\1&-1\end{pmatrix},\qquad
S=\begin{pmatrix}1&0\\0&i\end{pmatrix},\qquad
T=\begin{pmatrix}1&0\\0&e^{i\pi/4}\end{pmatrix}.
\]

\subsubsection{Rotation gates and Bloch-sphere meaning}
Define (physics convention)
\[
R_X(\theta):=e^{-i\frac{\theta}{2}X},\qquad
R_Y(\theta):=e^{-i\frac{\theta}{2}Y},\qquad
R_Z(\theta):=e^{-i\frac{\theta}{2}Z}.
\]
Using $X^2=Y^2=Z^2=I$ one obtains the closed forms
\[
R_\alpha(\theta)=\cos\!\Bigl(\frac{\theta}{2}\Bigr)I
-i\sin\!\Bigl(\frac{\theta}{2}\Bigr)\alpha,
\quad \alpha\in\{X,Y,Z\}.
\]
\emph{Bloch-sphere interpretation:} for a pure state with Bloch vector $\vec r\in S^2$,
\[
\rho=\ket{\psi}\bra{\psi}=\frac12(I+\vec r\cdot\vec\sigma),
\qquad \vec\sigma=(X,Y,Z),
\]
conjugation by $R_\alpha(\theta)$ rotates $\vec r$ by angle $\theta$ about the corresponding axis.

\begin{figure}[t]
	\centering
	\begin{tikzpicture}[scale=2.2, line cap=round, line join=round]
		\draw (0,0) circle (1);
		\draw[dashed] (-1,0) arc (180:360:1 and 0.35);
		\draw (-1,0) arc (180:0:1 and 0.35);
		\draw[->] (0,0) -- (1.25,0) node[right] {$x$};
		\draw[->] (0,0) -- (0,1.25) node[above] {$z$};
		\draw[->] (0,0) -- (-0.75,-0.55) node[left] {$y$};
		\coordinate (P) at (0.7,0.15);
		\fill (P) circle (0.03) node[right] {\small $\vec r$};
		\draw[thick] (0,0)--(P);
		\draw[->, thick] (0.55,0) arc (0:55:0.55 and 0.2);
		\node at (0.55,0.25) {\small $R_Z(\theta)$};
	\end{tikzpicture}
	\caption{Single-qubit unitaries act as rotations on the Bloch sphere: $R_Z(\theta)$ rotates the Bloch vector in the $(x,y)$-plane while keeping $z$ fixed.}
	\label{fig:bloch-rotations-circuits1}
\end{figure}
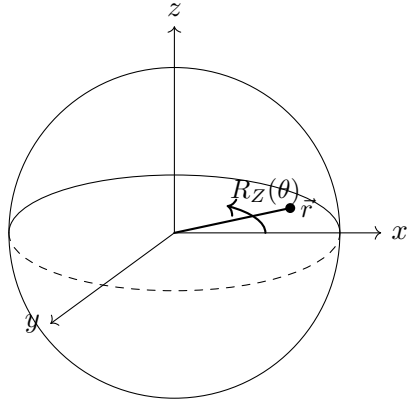

\subsection{Gate composition is matrix multiplication (worked)}
\subsubsection{Worked example 1: $HZH = X$ as a circuit identity}
Circuit:
\begin{center}
	\begin{tikzpicture}[x=1.1cm,y=0.8cm, line cap=round]
		\draw (0,0) -- (6,0);
		\node[left] at (0,0) {$q$};
		\draw (1,-0.35) rectangle (2,0.35); \node at (1.5,0) {$H$};
		\draw (3,-0.35) rectangle (4,0.35); \node at (3.5,0) {$Z$};
		\draw (5,-0.35) rectangle (6,0.35); \node at (5.5,0) {$H$};
	\end{tikzpicture}
\end{center}
Algebra:
\[
HZH=\frac1{\sqrt2}\begin{pmatrix}1&1\\1&-1\end{pmatrix}
\begin{pmatrix}1&0\\0&-1\end{pmatrix}
\frac1{\sqrt2}\begin{pmatrix}1&1\\1&-1\end{pmatrix}
=
\begin{pmatrix}0&1\\1&0\end{pmatrix}=X.
\]
Geometric meaning: Hadamard rotates axes so that a $Z$-flip becomes an $X$-flip on the Bloch sphere.

\subsubsection{Worked example 2: phase becomes observable after a basis change}
Consider
\[
\ket{\psi}=\ket{0}
\ \xrightarrow{H}\ 
\ket{+}=\frac{\ket0+\ket1}{\sqrt2}
\ \xrightarrow{R_Z(\theta)}\ 
\frac{\ket0+e^{i\theta}\ket1}{\sqrt2}
\ \xrightarrow{H}\ 
\text{measure }Z.
\]
Circuit:
\begin{center}
	\begin{tikzpicture}[x=1.05cm,y=0.8cm, line cap=round]
		\draw (0,0) -- (7,0);
		\node[left] at (0,0) {$q$};
		\draw (1,-0.35) rectangle (2,0.35); \node at (1.5,0) {$H$};
		\draw (3,-0.35) rectangle (4,0.35); \node at (3.5,0) {$R_Z(\theta)$};
		\draw (5,-0.35) rectangle (6,0.35); \node at (5.5,0) {$H$};
		\node at (6.6,0) {\small measure $Z$};
	\end{tikzpicture}
\end{center}
Compute the state right before measurement:
\[
H\,R_Z(\theta)\,H\,\ket0
=
H\Bigl(\frac{\ket0+e^{i\theta}\ket1}{\sqrt2}\Bigr)
=
\frac1{2}\Bigl((1+e^{i\theta})\ket0+(1-e^{i\theta})\ket1\Bigr).
\]
Thus
\[
p(0)=\Bigl|\frac{1+e^{i\theta}}{2}\Bigr|^2
=
\frac14(2+2\cos\theta)
=
\frac{1+\cos\theta}{2},
\qquad
p(1)=\frac{1-\cos\theta}{2}.
\]
Geometric meaning:
$R_Z(\theta)$ is a rotation around the $z$-axis (equator motion);
the final $H$ maps that equatorial displacement into the $z$-projection, making phase visible in $Z$-readout.

\subsection{Two-qubit systems: tensor products and bookkeeping}
\subsubsection{Basis and tensor products}
For $\ket{\psi}=\alpha\ket0+\beta\ket1$ and $\ket{\phi}=\gamma\ket0+\delta\ket1$,
\[
\ket{\psi}\otimes\ket{\phi}
=
\alpha\gamma\ket{00}+\alpha\delta\ket{01}+\beta\gamma\ket{10}+\beta\delta\ket{11}.
\]
Matrix rule:
\[
(A\otimes B)(\ket{v}\otimes\ket{w})=(A\ket v)\otimes(B\ket w).
\]

\subsubsection{Worked example: parallel gates are tensor products}
Circuit:
\begin{center}
	\begin{tikzpicture}[x=1.0cm,y=0.8cm, line cap=round]
		\draw (0,0) -- (6,0);
		\draw (0,-1) -- (6,-1);
		\node[left] at (0,0) {$q_0$};
		\node[left] at (0,-1) {$q_1$};
		\draw (2,-0.35) rectangle (3,0.35); \node at (2.5,0) {$H$};
		\draw (2,-1.35) rectangle (3,-0.65); \node at (2.5,-1) {$X$};
	\end{tikzpicture}
\end{center}
This is $H\otimes X$.
Apply it to $\ket{00}$:
\[
(H\otimes X)\ket{00}=(H\ket0)\otimes(X\ket0)=\ket{+}\otimes\ket1=\frac{\ket{01}+\ket{11}}{\sqrt2}.
\]

\subsection{The CNOT gate as a linear map}
\subsubsection{Definition in the computational basis}
CNOT (control = first qubit, target = second qubit) is defined by
\[
\ket{a}\ket{b}\ \mapsto\ \ket{a}\ket{b\oplus a},
\quad a,b\in\{0,1\}.
\]
So
\[
\ket{00}\mapsto\ket{00},\ 
\ket{01}\mapsto\ket{01},\ 
\ket{10}\mapsto\ket{11},\ 
\ket{11}\mapsto\ket{10}.
\]
Matrix (in basis $\ket{00},\ket{01},\ket{10},\ket{11}$):
\[
\mathrm{CNOT}=
\begin{pmatrix}
	1&0&0&0\\
	0&1&0&0\\
	0&0&0&1\\
	0&0&1&0
\end{pmatrix}.
\]

\subsubsection{Circuit symbol and intuition}
\begin{center}
	\begin{tikzpicture}[x=1.0cm,y=0.8cm, line cap=round]
		\draw (0,0) -- (6,0);
		\draw (0,-1) -- (6,-1);
		\node[left] at (0,0) {$q_0$};
		\node[left] at (0,-1) {$q_1$};
		\fill (3,0) circle (0.08);
		\draw (3,0) -- (3,-1);
		\draw (3,-1) circle (0.18);
		\draw (3-0.18,-1) -- (3+0.18,-1);
		\draw (3,-1-0.18) -- (3,-1+0.18);
		\node at (3,-1.6) {\small CNOT};
	\end{tikzpicture}
\end{center}

\subsubsection{Worked example: CNOT creates entanglement}
Start with $\ket{00}$ and apply $H$ on the control, then CNOT:
\[
\ket{00}\xrightarrow{H\otimes I}\frac{\ket{00}+\ket{10}}{\sqrt2}
\xrightarrow{\mathrm{CNOT}}\frac{\ket{00}+\ket{11}}{\sqrt2}=\ket{\Phi^+}.
\]
Circuit:
\begin{center}
	\begin{tikzpicture}[x=1.0cm,y=0.85cm, line cap=round]
		\draw (0,0) -- (7,0);
		\draw (0,-1) -- (7,-1);
		\node[left] at (0,0) {$q_0$};
		\node[left] at (0,-1) {$q_1$};
		\draw (2,-0.35) rectangle (3,0.35); \node at (2.5,0) {$H$};
		\fill (5,0) circle (0.08);
		\draw (5,0) -- (5,-1);
		\draw (5,-1) circle (0.18);
		\draw (4.82,-1) -- (5.18,-1);
		\draw (5,-1.18) -- (5,-0.82);
	\end{tikzpicture}
\end{center}

\paragraph{Geometric reading via reduced Bloch vectors.}
For the Bell state $\ket{\Phi^+}$, each single-qubit reduced density matrix is maximally mixed:
\[
\rho_{0}=\Tr_{1}\bigl(\ket{\Phi^+}\bra{\Phi^+}\bigr)=\frac{I}{2},
\qquad
\rho_{1}=\Tr_{0}\bigl(\ket{\Phi^+}\bra{\Phi^+}\bigr)=\frac{I}{2}.
\]
Thus each qubit has Bloch vector $\vec r=\vec 0$ (center of the Bloch ball).
This is the cleanest geometric signal that the \emph{joint pure state is entangled}:
globally pure, locally maximally mixed.

\subsection{Hardware-efficient ansatz (HEA) in matrix language}
\subsubsection{A canonical 2-qubit HEA layer}
A common HEA layer is:
(1) single-qubit rotations on each qubit, then (2) entangling block (e.g.\ CNOT chain).
For two qubits:
\[
U(\boldsymbol{\theta})
=
\mathrm{CNOT}\cdot
\bigl(R_Z(\theta_1)R_Y(\theta_0)\bigr)\otimes
\bigl(R_Z(\theta_3)R_Y(\theta_2)\bigr).
\]
Circuit sketch:
\begin{center}
	\begin{tikzpicture}[x=1.0cm,y=0.85cm, line cap=round]
		\draw (0,0) -- (9,0);
		\draw (0,-1) -- (9,-1);
		\node[left] at (0,0) {$q_0$};
		\node[left] at (0,-1) {$q_1$};
		\draw (1.5,-0.35) rectangle (2.6,0.35); \node at (2.05,0) {\small $R_Y(\theta_0)$};
		\draw (3.0,-0.35) rectangle (4.1,0.35); \node at (3.55,0) {\small $R_Z(\theta_1)$};
		\draw (1.5,-1.35) rectangle (2.6,-0.65); \node at (2.05,-1) {\small $R_Y(\theta_2)$};
		\draw (3.0,-1.35) rectangle (4.1,-0.65); \node at (3.55,-1) {\small $R_Z(\theta_3)$};
		\fill (6.5,0) circle (0.08);
		\draw (6.5,0) -- (6.5,-1);
		\draw (6.5,-1) circle (0.18);
		\draw (6.32,-1) -- (6.68,-1);
		\draw (6.5,-1.18) -- (6.5,-0.82);
	\end{tikzpicture}
\end{center}

\subsubsection{Geometry comment (why HEA is ``paths on a manifold'')}
Each $R_Y,R_Z$ is a Bloch-sphere rotation for the corresponding qubit \emph{before entangling}.
The entangler (CNOT) couples the two subsystems so that the joint state moves on $\CP^{3}$
(for pure 2-qubit states), while single-qubit reduced Bloch vectors typically shrink (mixedness increases)
when entanglement is generated.

\subsection{Exercises}

\begin{exercise}
	Write the $4\times4$ matrix for $(H\otimes I)\cdot \mathrm{CNOT}$ and apply it to $\ket{00}$.
\end{exercise}

\noindent\textbf{Solution.}
We work in the ordered computational basis
\[
\ket{00},\ \ket{01},\ \ket{10},\ \ket{11}.
\]

\smallskip
\noindent\textbf{Step 1: write the matrices.}
The CNOT (control = first qubit, target = second qubit) is
\[
\mathrm{CNOT}=
\begin{pmatrix}
	1&0&0&0\\
	0&1&0&0\\
	0&0&0&1\\
	0&0&1&0
\end{pmatrix}.
\]
Also
\[
H=\frac1{\sqrt2}\begin{pmatrix}1&1\\1&-1\end{pmatrix},
\qquad
I=\begin{pmatrix}1&0\\0&1\end{pmatrix}.
\]
Compute the tensor product:
\[
H\otimes I
=
\frac1{\sqrt2}
\begin{pmatrix}
	1\cdot I & 1\cdot I\\
	1\cdot I & -1\cdot I
\end{pmatrix}
=
\frac1{\sqrt2}
\begin{pmatrix}
	1&0&1&0\\
	0&1&0&1\\
	1&0&-1&0\\
	0&1&0&-1
\end{pmatrix}.
\]

\smallskip
\noindent\textbf{Step 2: multiply $(H\otimes I)\mathrm{CNOT}$.}
A fast way is to multiply by columns, because $\mathrm{CNOT}$ permutes basis vectors:
\[
\mathrm{CNOT}\,e_1=e_1,\quad \mathrm{CNOT}\,e_2=e_2,\quad \mathrm{CNOT}\,e_3=e_4,\quad \mathrm{CNOT}\,e_4=e_3,
\]
where $e_1,e_2,e_3,e_4$ correspond to $\ket{00},\ket{01},\ket{10},\ket{11}$.
Hence $(H\otimes I)\mathrm{CNOT}$ is obtained from $H\otimes I$ by swapping the 3rd and 4th columns:
\[
(H\otimes I)\mathrm{CNOT}
=
\frac1{\sqrt2}
\begin{pmatrix}
	1&0&0&1\\
	0&1&1&0\\
	1&0&0&-1\\
	0&1&-1&0
\end{pmatrix}.
\]

\smallskip
\noindent\textbf{Step 3: apply it to $\ket{00}$.}
In this basis, $\ket{00}=e_1=(1,0,0,0)^T$, so
\[
(H\otimes I)\mathrm{CNOT}\,\ket{00}
=
\frac1{\sqrt2}
\begin{pmatrix}
	1\\0\\1\\0
\end{pmatrix}
=
\frac{\ket{00}+\ket{10}}{\sqrt2}
=
\ket{+}\otimes\ket{0}.
\]

\smallskip
\noindent\textbf{Sanity check (interpretation).}
This makes sense: $\mathrm{CNOT}$ does nothing to $\ket{00}$, then $H$ on the first qubit produces $\ket{+}$.

\begin{exercise}
	Compute $p(0)$ and $p(1)$ for the circuit
	\[
	\ket0 \xrightarrow{H} \xrightarrow{R_Z(\theta)} \xrightarrow{H} \text{measure } Z.
	\]
\end{exercise}

\noindent\textbf{Solution.}
We compute the final state explicitly and read off the measurement probabilities.

\smallskip
\noindent\textbf{Step 1: apply $H$ to $\ket0$.}
\[
H\ket0=\ket{+}=\frac{\ket0+\ket1}{\sqrt2}.
\]

\smallskip
\noindent\textbf{Step 2: apply $R_Z(\theta)$.}
Using
\[
R_Z(\theta)=e^{-i\frac{\theta}{2}Z}
=
\begin{pmatrix}
	e^{-i\theta/2} & 0\\
	0 & e^{i\theta/2}
\end{pmatrix},
\]
we get
\[
R_Z(\theta)\ket{+}
=
\frac{e^{-i\theta/2}\ket0+e^{i\theta/2}\ket1}{\sqrt2}.
\]

\smallskip
\noindent\textbf{Step 3: apply $H$ again.}
Recall
\[
H\ket0=\frac{\ket0+\ket1}{\sqrt2},\qquad
H\ket1=\frac{\ket0-\ket1}{\sqrt2}.
\]
So
\begin{align*}
	H\,R_Z(\theta)\ket{+}
	&=
	\frac{1}{\sqrt2}\left(e^{-i\theta/2}H\ket0+e^{i\theta/2}H\ket1\right)\\
	&=
	\frac{1}{\sqrt2}\left(
	e^{-i\theta/2}\frac{\ket0+\ket1}{\sqrt2}
	+
	e^{i\theta/2}\frac{\ket0-\ket1}{\sqrt2}
	\right)\\
	&=
	\frac{1}{2}\left((e^{-i\theta/2}+e^{i\theta/2})\ket0 + (e^{-i\theta/2}-e^{i\theta/2})\ket1\right)\\
	&=
	\frac{1}{2}\left(2\cos(\theta/2)\ket0 -2i\sin(\theta/2)\ket1\right)\\
	&=
	\cos(\theta/2)\ket0 - i\sin(\theta/2)\ket1.
\end{align*}

\smallskip
\noindent\textbf{Step 4: measure $Z$.}
The probability of outcome $0$ is the squared magnitude of the $\ket0$ amplitude:
\[
p(0)=|\cos(\theta/2)|^2=\cos^2(\theta/2)=\frac{1+\cos\theta}{2}.
\]
Similarly,
\[
p(1)=|{-i\sin(\theta/2)}|^2=\sin^2(\theta/2)=\frac{1-\cos\theta}{2}.
\]

\smallskip
\noindent\textbf{Geometric meaning.}
This circuit converts a $Z$-axis rotation (equatorial motion on the Bloch sphere) into a change in the final $z$-projection, which is exactly what $Z$-measurement reads out.

\begin{exercise}
	Show that $\mathrm{CNOT}(\ket{+}\ket0)=\ket{\Phi^+}$ and compute the reduced density matrix of the first qubit.
\end{exercise}

\noindent\textbf{Solution.}

\smallskip
\noindent\textbf{Part A: compute the output state.}
First write
\[
\ket{+}\ket0=\left(\frac{\ket0+\ket1}{\sqrt2}\right)\otimes\ket0
=
\frac{\ket{00}+\ket{10}}{\sqrt2}.
\]
CNOT acts as $\ket{a}\ket{b}\mapsto\ket{a}\ket{b\oplus a}$, hence
\[
\mathrm{CNOT}\ket{00}=\ket{00},\qquad
\mathrm{CNOT}\ket{10}=\ket{11}.
\]
Therefore
\[
\mathrm{CNOT}(\ket{+}\ket0)
=
\frac{\ket{00}+\ket{11}}{\sqrt2}
=:\ket{\Phi^+}.
\]

\smallskip
\noindent\textbf{Part B: compute the reduced density matrix $\rho_0$ of the first qubit.}
Let
\[
\rho=\ket{\Phi^+}\bra{\Phi^+}
=
\frac12\Bigl(\ket{00}\bra{00}+\ket{00}\bra{11}+\ket{11}\bra{00}+\ket{11}\bra{11}\Bigr).
\]
We trace out the second qubit. Use the rule (basis $\{\ket0,\ket1\}$ for the second qubit):
\[
\Tr_1\!\bigl(\ket{a b}\bra{c d}\bigr)=\langle d|b\rangle\,\ket{a}\bra{c}
=
\delta_{b d}\,\ket{a}\bra{c}.
\]
Apply this to each term:

\begin{itemize}
	\item $\Tr_1(\ket{00}\bra{00})=\delta_{0,0}\ket0\bra0=\ket0\bra0$.
	\item $\Tr_1(\ket{00}\bra{11})=\delta_{0,1}\ket0\bra1=0$.
	\item $\Tr_1(\ket{11}\bra{00})=\delta_{1,0}\ket1\bra0=0$.
	\item $\Tr_1(\ket{11}\bra{11})=\delta_{1,1}\ket1\bra1=\ket1\bra1$.
\end{itemize}
So
\[
\rho_0=\Tr_1(\rho)=\frac12\left(\ket0\bra0+\ket1\bra1\right)=\frac{I}{2}.
\]

\smallskip
\noindent\textbf{Interpretation (Bloch-sphere).}
$\rho_0=I/2$ has Bloch vector $\vec r=\vec 0$ (center of the Bloch ball), i.e.\ the first qubit is maximally mixed.
This is the clean hallmark of entanglement: the \emph{joint} state is pure, but each \emph{part} is mixed.

\begin{exercise}
	Verify explicitly that
	\[
	(A\otimes B)(\ket v\otimes\ket w)=(A\ket v)\otimes(B\ket w)
	\]
	and use it to compute $(H\otimes X)\ket{01}$.
\end{exercise}

\noindent\textbf{Solution.}

\smallskip
\noindent\textbf{Part A: verify the identity from coordinates.}
Write $\ket v=\sum_i v_i\ket i$ and $\ket w=\sum_j w_j\ket j$ in the computational basis ($i,j\in\{0,1\}$).
Then
\[
\ket v\otimes\ket w
=
\sum_{i,j} v_i w_j\,\ket i\otimes\ket j.
\]
Apply $A\otimes B$ linearly and use the defining action on basis tensors:
\[
(A\otimes B)(\ket i\otimes\ket j)=(A\ket i)\otimes(B\ket j).
\]
Thus
\[
(A\otimes B)(\ket v\otimes\ket w)
=
\sum_{i,j} v_i w_j\,(A\ket i)\otimes(B\ket j)
=
\left(\sum_i v_i A\ket i\right)\otimes\left(\sum_j w_j B\ket j\right)
=
(A\ket v)\otimes(B\ket w),
\]
which proves the identity.

\smallskip
\noindent\textbf{Part B: compute $(H\otimes X)\ket{01}$.}
We have $\ket{01}=\ket0\otimes\ket1$. Therefore
\[
(H\otimes X)\ket{01}=(H\ket0)\otimes(X\ket1).
\]
Compute each piece:
\[
H\ket0=\ket{+}=\frac{\ket0+\ket1}{\sqrt2},\qquad
X\ket1=\ket0.
\]
So
\[
(H\otimes X)\ket{01}
=
\ket{+}\otimes\ket0
=
\frac{\ket{00}+\ket{10}}{\sqrt2}.
\]

\smallskip
\noindent\textbf{Optional check (as a $4$-vector).}
In the basis $(\ket{00},\ket{01},\ket{10},\ket{11})$,
\[
\ket{01}=(0,1,0,0)^T,\qquad
(H\otimes X)=\frac1{\sqrt2}
\begin{pmatrix}
	0&1&0&1\\
	1&0&1&0\\
	0&1&0&-1\\
	1&0&-1&0
\end{pmatrix},
\]
and multiplying indeed gives $\frac1{\sqrt2}(1,0,1,0)^T$.

\section{Quantum Circuits II: Interference, Cancellation, and Measurement}
\label{sec:circuits2}

\subsection{Objective}

This chapter explains the single most important phenomenon behind quantum speedups:
\emph{interference of amplitudes}. The core idea is simple but subtle:

\begin{quote}
	\emph{Probabilities never interfere. Amplitudes interfere, and probabilities are squared magnitudes of amplitudes.}
\end{quote}

\subsection{Setup: amplitudes, probabilities, and one-line rules}

We work in the computational basis $\{\ket{x}\}$, where $x\in\{0,1\}^n$.
A general pure $n$-qubit state is
\[
\ket{\psi}=\sum_{x\in\{0,1\}^n} \alpha_x\,\ket{x},
\qquad
\sum_x |\alpha_x|^2=1.
\]

\medskip
\noindent\textbf{Born rule (one line).}
If you measure in the computational basis, the probability of outcome $x$ is
\[
p(x)=|\alpha_x|^2.
\]

\medskip
\noindent\textbf{Unitary evolution (one line).}
A circuit is a unitary $U$; it maps amplitudes linearly:
\[
\ket{\psi}\longmapsto U\ket{\psi}.
\]

\medskip
\noindent\textbf{Interference (one line).}
If an amplitude for outcome $x$ is a \emph{sum} of contributions,
\[
\alpha_x = a_x + b_x + \cdots,
\]
then the probability is
\[
p(x)=|\alpha_x|^2 = |a_x+b_x+\cdots|^2,
\]
and \emph{cross terms} appear. This is the origin of interference.

\medskip
\noindent\textbf{Global phase (one line).}
$\ket{\psi}$ and $e^{i\theta}\ket{\psi}$ produce identical measurement probabilities.

\medskip
\noindent\textbf{Relative phase (one line).}
Relative phase between components \emph{does} affect future interference, hence affects future probabilities after additional gates.

\subsection{Superposition is about amplitudes, not probabilities}

\subsubsection*{Superposition vs.\ classical mixture}

A coherent superposition of two basis states is
\[
\ket{\psi}=\frac{\ket{0}+\ket{1}}{\sqrt2}=\ket{+}.
\]
A classical mixture that outputs $\ket{0}$ with prob.\ $1/2$ and $\ket{1}$ with prob.\ $1/2$
is represented by a density matrix
\[
\rho_{\text{mix}}=\frac12\ket0\bra0+\frac12\ket1\bra1=\frac{I}{2}.
\]
They have identical $Z$-measurement probabilities:
\[
p_{\ket{+}}(0)=p_{\rho_{\text{mix}}}(0)=\frac12,
\qquad
p_{\ket{+}}(1)=p_{\rho_{\text{mix}}}(1)=\frac12.
\]
But they are \emph{physically different} because they respond differently to basis changes.

\begin{prop}[Basis change distinguishes coherence]
	Let $H$ be the Hadamard gate. Then measuring $Z$ after applying $H$ gives:
	\[
	H\ket{+}=\ket0 \quad\Rightarrow\quad p(0)=1,
	\]
	but
	\[
	H\rho_{\text{mix}}H^\dagger=\frac{I}{2}\quad\Rightarrow\quad p(0)=\frac12.
	\]
\end{prop}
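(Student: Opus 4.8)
The plan is to prove the statement by two short, independent direct computations---one on the pure state $\ket{+}$ and one on the maximally mixed state $\rho_{\text{mix}}$---followed by an application of the Born rule in each case.

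First I would treat the pure-state side. Since $H$ is unitary and Hermitian we have $H^2=I$ (checked explicitly in the linear-algebra chapter), and $\ket{+}=H\ket0$ by the definition of $\ket{+}$. Hence $H\ket{+}=H(H\ket0)=\ket0$; equivalently one can expand $H\ket{+}=\tfrac{1}{\sqrt2}(H\ket0+H\ket1)=\tfrac{1}{\sqrt2}(\ket{+}+\ket{-})=\ket0$. The computational-basis Born rule then gives $p(0)=|\braket{0}{0}|^2=1$.

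Next I would treat the mixed-state side. Because $\rho_{\text{mix}}=\tfrac12 I$ is a scalar multiple of the identity, conjugation by $H$ is trivial: using $H^\dagger=H$ and $HH^\dagger=I$ we get $H\rho_{\text{mix}}H^\dagger=\tfrac12\,HH^\dagger=\tfrac12 I$. Applying the Born rule for density matrices, $p(x)=\Tr(\rho E_x)$ with $E_0=\ket0\bra0$, yields $p(0)=\Tr\!\bigl(\tfrac12 I\,\ket0\bra0\bigr)=\tfrac12\Tr(\ket0\bra0)=\tfrac12$.

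There is no real obstacle here; the content is conceptual rather than technical. The two bookkeeping points to watch are (i) using $H^\dagger=H$ so that the conjugation $H\rho H^\dagger$ is literally $H\rho H$, and (ii) using the mixed-state Born rule $p(x)=\Tr(\rho E_x)$ rather than the pure-state form $|\alpha_x|^2$, since $\rho_{\text{mix}}$ is not rank one. The point the proposition really makes is that $\tfrac12 I$ is a fixed point of conjugation by \emph{every} unitary---it has no off-diagonal (coherence) entries to rotate into populations---whereas $\ket{+}$ carries a nonzero coherence, $\rho_{+}=\tfrac12\begin{psmallmatrix}1&1\\1&1\end{psmallmatrix}$, which $H$ converts into a deterministic $\ket0$ outcome. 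So the ``proof'' is essentially the observation that a basis change exposes exactly the off-diagonal part of the density matrix, which is precisely what separates a coherent superposition from the corresponding classical mixture.
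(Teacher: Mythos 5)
Your proof is correct and follows essentially the same route as the paper: a direct computation showing $H\ket{+}=\ket0$ (so $p(0)=1$) and that $\tfrac12 I$ is invariant under conjugation by $H$ (so $p(0)=\tfrac12$), with the Born rule applied in each case. The extra bookkeeping you supply (the trace form $p(x)=\Tr(\rho E_x)$ for the mixed state) is a fine, slightly more explicit version of the paper's argument.
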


\begin{proof}
	First,
	\[
	H\ket{+}=H\left(\frac{\ket0+\ket1}{\sqrt2}\right)
	=
	\frac{H\ket0+H\ket1}{\sqrt2}
	=
	\frac{\ket{+}+\ket{-}}{\sqrt2}
	=
	\ket0.
	\]
	So $Z$-measurement outputs $0$ with certainty.
	
	For the mixture, use $H I H^\dagger=I$ so $\rho$ stays maximally mixed under any unitary:
	\[
	H\left(\frac{I}{2}\right)H^\dagger=\frac{I}{2}.
	\]
	Hence $Z$-measurement remains uniform.
\end{proof}

\begin{rem}[Geometric meaning (Bloch sphere)]
	$\ket{+}$ is a point on the \emph{surface} (pure state), while $I/2$ is the \emph{center} (maximally mixed).
	Hadamard is a rotation of the Bloch sphere; pure points move, the center does not.
\end{rem}

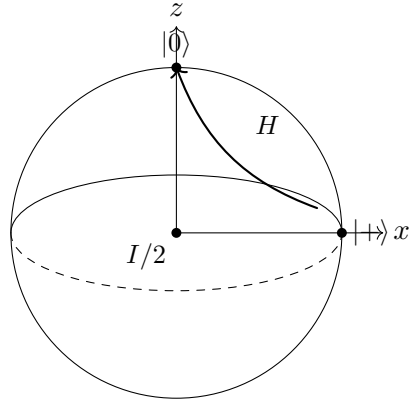
\begin{figure}[t]
	\centering
	\begin{tikzpicture}[scale=2.2, line cap=round, line join=round]
		\draw (0,0) circle (1);
		\draw[dashed] (-1,0) arc (180:360:1 and 0.35);
		\draw (-1,0) arc (180:0:1 and 0.35);
		\draw[->] (0,0) -- (1.25,0) node[right] {$x$};
		\draw[->] (0,0) -- (0,1.25) node[above] {$z$};
		\fill (1,0) circle (0.03) node[right] {\small $\ket{+}$};
		\fill (0,0) circle (0.03) node[below left] {\small $I/2$};
		\draw[->, thick] (0.85,0.15) to[bend left=25] (0,1.0);
		\node at (0.55,0.65) {\small $H$};
		\fill (0,1) circle (0.03) node[above] {\small $\ket0$};
	\end{tikzpicture}
	\caption{Superposition vs.\ mixture: $H$ rotates the pure state $\ket{+}$ to $\ket0$ (certain outcome),
		while the maximally mixed state $I/2$ remains at the center (always uniform).}
	\label{fig:superposition-vs-mixture}
\end{figure}

\subsection{Constructive vs.\ destructive interference}

\subsubsection*{The two-path calculation (the interference identity)}

Suppose an amplitude for outcome $\ket{x}$ is the sum of two contributions:
\[
\alpha_x = a_x + b_x.
\]
Then
\[
p(x)=|\alpha_x|^2 = |a_x+b_x|^2 = |a_x|^2 + |b_x|^2 + 2\Re(a_x\overline{b_x}).
\]
The last term is the interference term:
\begin{itemize}
	\item \textbf{constructive} if $\Re(a_x\overline{b_x})>0$,
	\item \textbf{destructive} if $\Re(a_x\overline{b_x})<0$,
	\item \textbf{perfect cancellation} if $a_x=-b_x$.
\end{itemize}

\subsubsection*{Canonical interference demo: $H^2=I$ as cancellation}

A single qubit:
\[
\ket0 \xrightarrow{H} \ket{+} \xrightarrow{H} \ket0.
\]
This is not magic; it is a cancellation of amplitude contributions.

\begin{prop}[Hadamard squared equals identity]
	\[
	H^2=I.
	\]
\end{prop}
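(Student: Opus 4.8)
The plan is to prove this by direct computation, which is the fastest route for a single $2\times 2$ matrix identity. I would write $H^2 = H\cdot H$, pull the scalar $1/\sqrt 2$ out of each factor so the product carries an overall factor $(1/\sqrt2)^2 = 1/2$, and then multiply the two integer matrices $\begin{pmatrix}1&1\\1&-1\end{pmatrix}$ against each other.

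The key steps, in order: first, record the scalar bookkeeping $(1/\sqrt2)^2 = 1/2$ on its own line so it does not get lost. Second, compute the four entries of $\begin{pmatrix}1&1\\1&-1\end{pmatrix}^2$: the two diagonal entries are $1\cdot1+1\cdot1 = 2$ and $1\cdot1+(-1)(-1) = 2$, while the two off-diagonal entries are $1\cdot1+1\cdot(-1) = 0$. Third, reassemble: $H^2 = \tfrac12\begin{pmatrix}2&0\\0&2\end{pmatrix} = I$.

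As an even shorter alternative I might record as a remark, one can invoke facts already in the text: $H$ is real and symmetric, so $H^\dagger = H^T = H$, and $H$ was shown earlier to be unitary, so $H^\dagger H = I$; combining the two gives $HH = I$ with no arithmetic at all. I would probably lead with the explicit multiplication (since it is concrete and self-contained) and mention this unitary-plus-symmetry argument as a sanity check.

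There is essentially no obstacle here; the only place to slip is dropping or mis-squaring the normalization constant, which is exactly why I would isolate $(1/\sqrt2)^2 = 1/2$ as its own step. If desired, I would append the geometric reading — Hadamard is a $\pi$ rotation of the Bloch sphere about the $(\hat x+\hat z)/\sqrt2$ axis, so $H^2$ is a $2\pi$ rotation and therefore acts as the identity on states (up to a global phase that the projective picture ignores) — but this is intuition, not part of the formal argument.
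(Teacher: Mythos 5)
Your proposal is correct and follows essentially the same route as the paper: a direct $2\times 2$ multiplication with the $(1/\sqrt 2)^2 = 1/2$ normalization factored out, yielding $\tfrac12\begin{pmatrix}2&0\\0&2\end{pmatrix}=I$. The unitary-plus-symmetry remark and the Bloch-sphere reading are fine optional additions but not needed.
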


\begin{proof}
	Compute
	\[
	H^2=\frac12
	\begin{pmatrix}
		1&1\\
		1&-1
	\end{pmatrix}
	\begin{pmatrix}
		1&1\\
		1&-1
	\end{pmatrix}
	=
	\frac12
	\begin{pmatrix}
		2&0\\
		0&2
	\end{pmatrix}
	=I.
	\]
\end{proof}

\begin{rem}[Where the cancellation is hiding]
	Track the amplitude of $\ket1$ after two Hadamards:
	\[
	\ket0 \xrightarrow{H} \frac{\ket0+\ket1}{\sqrt2}
	\xrightarrow{H}
	\frac{1}{\sqrt2}\left(\frac{\ket0+\ket1}{\sqrt2}+\frac{\ket0-\ket1}{\sqrt2}\right)
	=
	\ket0.
	\]
	The $\ket1$ contributions appear with opposite signs and cancel.
\end{rem}

\begin{figure}[t]
	\centering
	\begin{tikzpicture}[scale=1.0]
		\node at (0,0) {$\ket0$};
		\node[draw, rounded corners, minimum width=1.0cm, minimum height=0.7cm] (H1) at (2,0) {$H$};
		\node[draw, rounded corners, minimum width=1.0cm, minimum height=0.7cm] (H2) at (4,0) {$H$};
		\node at (6,0) {$\ket0$};
		
		\draw[->] (0.5,0) -- (H1.west);
		\draw[->] (H1.east) -- (H2.west) node[midway, above] {\small $\ket{+}$};
		\draw[->] (H2.east) -- (5.5,0);
	\end{tikzpicture}
	\caption{$H^2=I$: a minimal interference/cancellation circuit.}
	\label{fig:h2}
\end{figure}
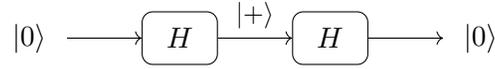

\subsubsection*{Interference as a design pattern}

To build an algorithm, you often:
\begin{enumerate}
	\item create a superposition (many computational paths),
	\item apply phase factors so ``bad'' paths pick up minus signs (or complex phases),
	\item recombine so that bad paths cancel and good paths add,
	\item measure.
\end{enumerate}
Grover's algorithm and phase estimation are refined versions of this pattern.

\subsection{Measurement as projection (and why phase is lost)}

\subsubsection*{Projective measurement in the computational basis}

Let $\ket{\psi}=\sum_x \alpha_x\ket{x}$.
Measuring in the computational basis returns outcome $x$ with probability $|\alpha_x|^2$,
and the post-measurement state becomes $\ket{x}$ (ideal projective measurement).

\begin{defn}[Projectors]
	For a single qubit (computational basis),
	\[
	\Pi_0=\ket0\bra0,\qquad \Pi_1=\ket1\bra1,
	\qquad \Pi_0+\Pi_1=I.
	\]
\end{defn}

\begin{prop}[Measurement probabilities and post-measurement update]
	For a pure state $\ket{\psi}$,
	\[
	p(0)=\bra{\psi}\Pi_0\ket{\psi},\qquad p(1)=\bra{\psi}\Pi_1\ket{\psi}.
	\]
	Conditioned on outcome $k\in\{0,1\}$, the post-measurement state is
	\[
	\ket{\psi}\ \mapsto\ \frac{\Pi_k\ket{\psi}}{\sqrt{\bra{\psi}\Pi_k\ket{\psi}}}.
	\]
\end{prop}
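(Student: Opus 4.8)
The plan is to reduce the whole statement to a short coordinate computation in the computational basis, using only: the expansion $\ket{\psi}=\alpha_0\ket0+\alpha_1\ket1$ with $|\alpha_0|^2+|\alpha_1|^2=1$; the Born rule $p(k)=|\alpha_k|^2$ recorded just above; and the elementary facts about the rank-one projectors already proved in this part, namely $\Pi_k^\dagger=\Pi_k$, $\Pi_k^2=\Pi_k$, together with $\Pi_0+\Pi_1=I$ from their definition.

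First I would prove the probability formulas. Since $\Pi_0=\ket0\bra0$, we get $\Pi_0\ket{\psi}=\ket0\braket{0}{\psi}=\alpha_0\ket0$, hence $\bra{\psi}\Pi_0\ket{\psi}=\alpha_0\braket{\psi}{0}=|\alpha_0|^2=p(0)$, and the outcome-$1$ computation is identical. As a consistency check I would note $p(0)+p(1)=\bra{\psi}(\Pi_0+\Pi_1)\ket{\psi}=\braket{\psi}{\psi}=1$.

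Next I would handle the post-measurement state in two steps. Well-definedness: because $\Pi_k$ is a Hermitian idempotent, $\bra{\psi}\Pi_k\ket{\psi}=\bra{\psi}\Pi_k^\dagger\Pi_k\ket{\psi}=\norm{\Pi_k\ket{\psi}}^2\ge 0$, so the square root is meaningful, and it is strictly positive precisely on the event that outcome $k$ occurs (probability $p(k)>0$) --- the only case in which a conditional state is defined. Correctness: $\Pi_k\ket{\psi}=\alpha_k\ket k$ and $\sqrt{\bra{\psi}\Pi_k\ket{\psi}}=|\alpha_k|$, so the normalized vector is $(\alpha_k/|\alpha_k|)\ket k$, a unit vector proportional to $\ket k$. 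I would then remark that this coincides with the informal ``collapse to $\ket x$'' stated earlier up to the physically irrelevant global phase $\alpha_k/|\alpha_k|$, invoking the global-phase-invariance proposition from the Bloch-sphere chapter.

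The ``hard part'' is minimal: the only conceptual points are (i) the conditional state is defined only when $p(k)>0$, so the division is never by zero, and (ii) reconciling the tidy projector formula with the bare statement that the post-measurement state is $\ket x$, which agree only modulo global phase. For completeness I would note that the same lines prove the general version verbatim: for an $n$-qubit state $\ket{\psi}=\sum_x\alpha_x\ket x$ and projectors $\Pi_x=\ket x\bra x$ onto an orthonormal basis, orthonormality gives $\Pi_x\ket{\psi}=\alpha_x\ket x$ and $\bra{\psi}\Pi_x\ket{\psi}=|\alpha_x|^2$, so the proposition stated here is just the instance $n=1$.
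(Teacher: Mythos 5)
Your proposal is correct. Note that the paper states this proposition without any proof (it is essentially the projector-language restatement of the Born rule and collapse rule given just above it), so there is nothing to diverge from: your coordinate computation $\Pi_k\ket{\psi}=\alpha_k\ket{k}$, giving $\bra{\psi}\Pi_k\ket{\psi}=|\alpha_k|^2=p(k)$ and the normalized post-measurement vector $(\alpha_k/|\alpha_k|)\ket{k}$, is exactly the verification the paper leaves implicit, and your two added remarks --- that the conditional state is only defined when $p(k)>0$, and that the projector formula agrees with the earlier ``collapse to $\ket{x}$'' statement only up to a global phase --- are correct and worth making explicit.
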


\subsubsection*{Why phase is lost}

Take
\[
\ket{\psi_\phi}=\frac{\ket0+e^{i\phi}\ket1}{\sqrt2}.
\]
A $Z$-measurement gives
\[
p(0)=\frac12,\qquad p(1)=\frac12
\]
independent of $\phi$. This is the precise sense in which $Z$-measurement discards relative phase information.

However, $\phi$ \emph{is real and usable}: apply a basis change and it becomes visible.

\begin{prop}[Phase becomes visible after a basis change]
	Let $\ket{\psi_\phi}=\frac{\ket0+e^{i\phi}\ket1}{\sqrt2}$.
	After applying $H$ and measuring $Z$,
	\[
	p(0)=\frac{1+\cos\phi}{2},\qquad p(1)=\frac{1-\cos\phi}{2}.
	\]
\end{prop}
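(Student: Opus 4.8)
The plan is to carry out the three standard steps of any circuit-then-measure calculation: propagate amplitudes through the unitary $H$, read off the coefficient of $\ket 0$, and take its squared magnitude using the interference identity already recorded in this section.

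First I would apply $H$ to $\ket{\psi_\phi}$ by linearity, using $H\ket 0=\tfrac{1}{\sqrt 2}(\ket 0+\ket 1)$ and $H\ket 1=\tfrac{1}{\sqrt 2}(\ket 0-\ket 1)$:
\[
H\ket{\psi_\phi}
=\frac{1}{\sqrt 2}\Bigl(H\ket 0+e^{i\phi}H\ket 1\Bigr)
=\frac12\Bigl((1+e^{i\phi})\ket 0+(1-e^{i\phi})\ket 1\Bigr).
\]
So the $\ket 0$-amplitude is $a_0=\tfrac12(1+e^{i\phi})$ and the $\ket 1$-amplitude is $a_1=\tfrac12(1-e^{i\phi})$. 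Then I would apply the Born rule $p(k)=|a_k|^2$ and simplify the modulus via $|1+e^{i\phi}|^2=(1+e^{i\phi})(1+e^{-i\phi})=2+e^{i\phi}+e^{-i\phi}=2+2\cos\phi$, and similarly $|1-e^{i\phi}|^2=2-2\cos\phi$. This gives
\[
p(0)=\frac{|1+e^{i\phi}|^2}{4}=\frac{2+2\cos\phi}{4}=\frac{1+\cos\phi}{2},
\qquad
p(1)=\frac{|1-e^{i\phi}|^2}{4}=\frac{1-\cos\phi}{2},
\]
and $p(0)+p(1)=1$ provides a built-in sanity check. One could equivalently view $a_0$ as the sum of two path contributions $\tfrac12\cdot 1$ (from $\ket 0$) and $\tfrac12 e^{i\phi}$ (from $\ket 1$) and invoke the interference identity $p(0)=|a_0|^2=\tfrac14+\tfrac14+2\Re(\tfrac14 e^{-i\phi})=\tfrac12(1+\cos\phi)$, which is exactly the constructive/destructive-interference picture this section is illustrating.

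There is no real obstacle here: the statement is a one-line consequence of linearity of $H$ and the definition $p(x)=|\alpha_x|^2$. The only thing to be careful about is the phase bookkeeping in the modulus computation (equivalently, remembering that $Z$-measurement after $H$ implements an $X$-measurement on $\ket{\psi_\phi}$, whose Bloch vector has $x$-component $\cos\phi$, so $p(0)=\tfrac{1+\cos\phi}{2}$ immediately). I would also note that this is the special case $\theta=\phi$ of the earlier ``phase-to-amplitude conversion'' proposition, since $R_Z(\theta)\ket{+}\sim\tfrac{1}{\sqrt 2}(\ket 0+e^{i\theta}\ket 1)$; if desired, the statement can simply be cited from there rather than re-derived.
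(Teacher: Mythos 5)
Your proposal is correct and follows essentially the same route as the paper: apply $H$ by linearity, read off the amplitudes $\tfrac12(1\pm e^{i\phi})$, and square their moduli to get $p(0)=\tfrac{1+\cos\phi}{2}$ and $p(1)=\tfrac{1-\cos\phi}{2}$. The extra remarks (interference-identity view, the $HZH=X$ Bloch-vector shortcut, and the link to the earlier phase-to-amplitude proposition) are consistent with the surrounding text but not needed.
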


\begin{proof}
	Compute
	\[
	H\ket{\psi_\phi}
	=
	\frac{1}{\sqrt2}\bigl(H\ket0+e^{i\phi}H\ket1\bigr)
	=
	\frac{1}{\sqrt2}\left(\frac{\ket0+\ket1}{\sqrt2}+e^{i\phi}\frac{\ket0-\ket1}{\sqrt2}\right)
	=
	\frac{(1+e^{i\phi})\ket0+(1-e^{i\phi})\ket1}{2}.
	\]
	Thus
	\[
	p(0)=\left|\frac{1+e^{i\phi}}{2}\right|^2=\frac{1+\cos\phi}{2},
	\qquad
	p(1)=\left|\frac{1-e^{i\phi}}{2}\right|^2=\frac{1-\cos\phi}{2}.
	\]
\end{proof}

\begin{rem}[Bloch-sphere interpretation]
	$\ket{\psi_\phi}$ lies on the equator; $\phi$ is the azimuth angle.
	A $Z$ measurement only sees the $z$-coordinate; it cannot see azimuth.
	Hadamard rotates axes so that azimuthal motion becomes a change in the measured projection.
\end{rem}

\subsection{Algorithmic examples}

\subsubsection*{Example 1: ``phase-to-amplitude'' conversion (the $H$--phase--$H$ pattern)}

The circuit
\[
\ket0 \xrightarrow{H} \xrightarrow{R_Z(\theta)} \xrightarrow{H} \text{measure }Z
\]
produces
\[
p(0)=\frac{1+\cos\theta}{2},\qquad p(1)=\frac{1-\cos\theta}{2}.
\]
This is the smallest example of the central algorithmic pattern:
\emph{convert an invisible phase into a visible probability bias.}

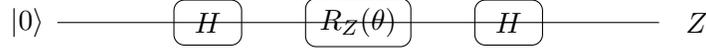
\begin{figure}[t]
	\centering
	\begin{tikzpicture}[scale=1.0]
		\draw (0,0) -- (8,0);
		\node at (-0.4,0) {$\ket0$};
		\node[draw, rounded corners, minimum width=0.9cm, minimum height=0.6cm] at (2,0) {$H$};
		\node[draw, rounded corners, minimum width=1.4cm, minimum height=0.6cm] at (4,0) {$R_Z(\theta)$};
		\node[draw, rounded corners, minimum width=0.9cm, minimum height=0.6cm] at (6,0) {$H$};
		\node at (8.5,0) {$Z$};
	\end{tikzpicture}
	\caption{Phase-to-amplitude conversion: $H$--$R_Z(\theta)$--$H$ turns a phase parameter into a measurable $Z$-bias.}
	\label{fig:phase-to-amplitude}
\end{figure}

\subsubsection*{Example 2: two-qubit interference and entanglement creation (Bell state)}

Prepare $\ket{00}$, apply $H$ on the first qubit, then CNOT:
\[
\ket{00}\xrightarrow{H\otimes I}\frac{\ket{00}+\ket{10}}{\sqrt2}
\xrightarrow{\mathrm{CNOT}}\frac{\ket{00}+\ket{11}}{\sqrt2}
=\ket{\Phi^+}.
\]
Here ``two paths'' $\ket{00}$ and $\ket{10}$ are recombined into correlated outcomes.

\begin{figure}[t]
	\centering
	\begin{tikzpicture}[scale=1.0]
		\draw (0,1) -- (8,1);
		\draw (0,0) -- (8,0);
		\node at (-0.4,1) {$\ket0$};
		\node at (-0.4,0) {$\ket0$};
		\node[draw, rounded corners, minimum width=0.9cm, minimum height=0.6cm] at (2,1) {$H$};
		\fill (5,1) circle (0.07);
		\draw (5,1) -- (5,0);
		\draw (5,0) circle (0.18);
		\draw (5-0.18,0) -- (5+0.18,0);
		\draw (5,0-0.18) -- (5,0+0.18);
		\node at (8.6,1) {$Z$};
		\node at (8.6,0) {$Z$};
	\end{tikzpicture}
	\caption{Bell-state circuit: $(H\otimes I)$ then CNOT creates $\ket{\Phi^+}=(\ket{00}+\ket{11})/\sqrt2$.}
	\label{fig:bell-circuit}
\end{figure}

\subsubsection*{Example 3: cancellation by design (the ``minus sign trick'')}

Consider
\[
\ket0 \xrightarrow{H} \frac{\ket0+\ket1}{\sqrt2}
\xrightarrow{Z} \frac{\ket0-\ket1}{\sqrt2}=\ket{-}
\xrightarrow{H} \ket1.
\]
So $HZH\ket0=\ket1$ deterministically.
The gate $Z$ inserts a relative phase ($-1$ on $\ket1$), which flips constructive into destructive interference
on the final recombination.

\begin{figure}[t]
	\centering
	\begin{tikzpicture}[scale=1.0]
		\draw (0,0) -- (8,0);
		\node at (-0.4,0) {$\ket0$};
		\node[draw, rounded corners, minimum width=0.9cm, minimum height=0.6cm] at (2,0) {$H$};
		\node[draw, rounded corners, minimum width=0.9cm, minimum height=0.6cm] at (4,0) {$Z$};
		\node[draw, rounded corners, minimum width=0.9cm, minimum height=0.6cm] at (6,0) {$H$};
		\node at (8.5,0) {$\ket1$};
	\end{tikzpicture}
	\caption{A single inserted minus sign ($Z$) flips interference: $HZH\ket0=\ket1$.}
	\label{fig:hzh}
\end{figure}

\subsection{Exercises}

\begin{exercise}[Interference identity]
	Let $\alpha=a+b$ with $a,b\in\C$. Show
	\[
	|\alpha|^2 = |a|^2+|b|^2+2\Re(a\overline b).
	\]
	Give an example where the interference term is (i) positive, (ii) negative, (iii) makes the probability zero.
\end{exercise}

\noindent\textbf{Solution.}
Expand:
\[
|a+b|^2=(a+b)\overline{(a+b)}=a\overline a+b\overline b+a\overline b+\overline a b
=|a|^2+|b|^2+2\Re(a\overline b).
\]
Examples: take $a=b=1$ gives positive interference ($|2|^2=4$).
Take $a=1$, $b=-1/2$ gives negative interference ($|1-1/2|^2=1/4<|1|^2+|1/2|^2$).
Take $a=1$, $b=-1$ gives perfect cancellation ($|0|^2=0$).

\begin{exercise}[Superposition vs.\ mixture under Hadamard]
	Compute the measurement distribution of $Z$ after applying $H$ to:
	(i) $\ket{+}$, and (ii) the mixed state $\rho=\frac12\ket0\bra0+\frac12\ket1\bra1$.
\end{exercise}

\noindent\textbf{Solution.}
(i) $H\ket{+}=\ket0$, so $p(0)=1$.
(ii) $H\rho H^\dagger=\rho=I/2$, so $p(0)=p(1)=1/2$.

\begin{exercise}[Phase invisibility and visibility]
	Let $\ket{\psi_\phi}=(\ket0+e^{i\phi}\ket1)/\sqrt2$.
	Show that a $Z$-measurement gives uniform outcomes, but an $H$ then $Z$ measurement gives
	\[
	p(0)=\frac{1+\cos\phi}{2},\qquad p(1)=\frac{1-\cos\phi}{2}.
	\]
\end{exercise}

\noindent\textbf{Solution.}
Uniform for $Z$ because amplitudes are $(1/\sqrt2,e^{i\phi}/\sqrt2)$ so probabilities are both $1/2$.
After $H$, compute as in the proof above:
\[
H\ket{\psi_\phi}=\frac{(1+e^{i\phi})\ket0+(1-e^{i\phi})\ket1}{2},
\]
then square magnitudes.

\begin{exercise}[Bell state marginals]
	Let $\ket{\Phi^+}=(\ket{00}+\ket{11})/\sqrt2$ and $\rho=\ket{\Phi^+}\bra{\Phi^+}$.
	Compute $\rho_0=\Tr_1(\rho)$ and $\rho_1=\Tr_0(\rho)$ and interpret geometrically on the Bloch ball.
\end{exercise}

\noindent\textbf{Solution.}
As computed earlier,
\[
\Tr_1(\rho)=\frac12(\ket0\bra0+\ket1\bra1)=\frac{I}{2}.
\]
By symmetry $\Tr_0(\rho)=I/2$ too.
Geometrically each qubit is at the center of the Bloch ball (maximally mixed) even though the joint state is pure.

\begin{exercise}[Minus sign flips the outcome]
	Show that $HZH\ket0=\ket1$ and $HZH\ket1=\ket0$.
	Conclude that $HZH=X$.
\end{exercise}

\noindent\textbf{Solution.}
Compute
\[
H\ket0=\frac{\ket0+\ket1}{\sqrt2},\qquad
Z(H\ket0)=\frac{\ket0-\ket1}{\sqrt2},\qquad
H\left(\frac{\ket0-\ket1}{\sqrt2}\right)=\ket1.
\]
Similarly,
\[
H\ket1=\frac{\ket0-\ket1}{\sqrt2},\quad
Z(H\ket1)=\frac{\ket0+\ket1}{\sqrt2},\quad
H\left(\frac{\ket0+\ket1}{\sqrt2}\right)=\ket0.
\]
Thus $HZH$ swaps $\ket0$ and $\ket1$, so it equals $X$.

\subsection{Optional drill problems}

\begin{exercise}[Drill: amplitude bookkeeping]
	Compute the final state (as amplitudes in the computational basis) for each circuit on input $\ket0$:
	\begin{enumerate}
		\item $H\to H$,
		\item $H\to Z\to H$,
		\item $H\to R_Z(\theta)\to H$,
		\item $H\to X\to H$.
	\end{enumerate}
\end{exercise}

\noindent\textbf{Solutions.}
\begin{enumerate}
	\item $H^2\ket0=\ket0$.
	\item $HZH\ket0=\ket1$.
	\item As derived: $\cos(\theta/2)\ket0-i\sin(\theta/2)\ket1$.
	\item Use conjugation $HXH=Z$; hence $HXH\ket0=Z\ket0=\ket0$.
\end{enumerate}

\begin{exercise}[Drill: which measurements see phase?]
	Let $\ket{\psi_\phi}=(\ket0+e^{i\phi}\ket1)/\sqrt2$.
	Compute $\langle X\rangle$, $\langle Y\rangle$, $\langle Z\rangle$ and identify which ones depend on $\phi$.
\end{exercise}

\noindent\textbf{Solution.}
This state lies on the equator, so $z=0$.
Compute using Bloch coordinates:
\[
\langle X\rangle=\cos\phi,\qquad
\langle Y\rangle=\sin\phi,\qquad
\langle Z\rangle=0.
\]
Thus $X,Y$ measurements see phase; $Z$ does not.

\section{Quantum Circuits III: Circuit Identities, Simplification, and Compilation}
\label{sec:circuits3}

\subsection{Objective}

This chapter builds the ``algebra of circuits'' that you will use constantly:
\emph{rewrite}, \emph{simplify}, and \emph{compile} circuits without changing their action.
The guiding principle is:

\begin{quote}
	\emph{Circuit identities are matrix identities (up to global phase).}
\end{quote}

\subsection{Circuit identities are matrix identities}

\subsubsection*{Equality conventions}

Two circuits $C_1$ and $C_2$ acting on the same number of qubits are \emph{exactly equal} if
their matrices are equal:
\[
U(C_1)=U(C_2).
\]
In many algorithmic contexts, we treat circuits as equivalent \emph{up to global phase}:
\[
U(C_1)=e^{i\theta}U(C_2),
\]
because a global phase does not change measurement statistics.

\begin{rem}[When global phase matters]
	Global phase never affects any measurement probabilities.
	However, when circuits are used as subroutines inside larger interference patterns,
	a \emph{relative} phase between branches matters.
	A global phase on the entire unitary still does not matter, but be careful not to confuse
	``global phase'' with ``a phase on a subspace''.
\end{rem}

\subsubsection*{Foundational one-qubit identities}

We will use these repeatedly:
\[
H^2=I,\qquad X^2=Y^2=Z^2=I,
\]
\[
HXH=Z,\qquad HZH=X,\qquad HYH=-Y,
\]
and (up to global phase)
\[
S^2=Z,\qquad T^2=S,\qquad S^\dagger=S^{-1},\qquad T^\dagger=T^{-1},
\]
where
\[
S=\begin{pmatrix}1&0\\0&i\end{pmatrix},
\qquad
T=\begin{pmatrix}1&0\\0&e^{i\pi/4}\end{pmatrix}.
\]

\begin{prop}[Conjugation table for $H$]
	\[
	H X H = Z,\qquad H Z H = X,\qquad H Y H = -Y.
	\]
\end{prop}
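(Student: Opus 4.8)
The plan is to reduce the entire table to explicit $2\times 2$ matrix arithmetic, exploiting the fact that $H=\frac{1}{\sqrt2}\begin{pmatrix}1&1\\1&-1\end{pmatrix}$ is simultaneously Hermitian and unitary, so $H=H^\dagger=H^{-1}$ and (as already verified earlier in this chapter) $H^2=I$. Only one of the three identities genuinely requires a computation from scratch; the other two follow by short algebraic manipulations, so there is no need to grind through three separate matrix products.

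First I would verify $HXH=Z$ by direct multiplication:
\[
HXH=\frac12\begin{pmatrix}1&1\\1&-1\end{pmatrix}\begin{pmatrix}0&1\\1&0\end{pmatrix}\begin{pmatrix}1&1\\1&-1\end{pmatrix},
\]
carrying out the two products (the associativity of matrix multiplication means the grouping is irrelevant) to land on $\begin{pmatrix}1&0\\0&-1\end{pmatrix}=Z$. This is the single routine calculation in the proof.

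Next I would obtain $HZH=X$ essentially for free by conjugating the first identity by $H$ and using $H^2=I$:
\[
HZH=H(HXH)H=H^2\,X\,H^2=X.
\]
For the last identity $HYH=-Y$ I would avoid a third matrix product by invoking the Pauli relation $Y=iXZ$ (itself a one-line check). Then, inserting $H^2=I$ between $X$ and $Z$,
\[
HYH=iH(XZ)H=i(HXH)(HZH)=iZX,
\]
and since $ZX=-XZ$ one concludes $iZX=-iXZ=-Y$.

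The argument has no substantive obstacle; the only point demanding care is the sign bookkeeping in the $Y$ case — placing the factor of $i$ and the anticommutator $ZX=-XZ$ correctly — together with the observation that all three identities hold \emph{exactly} as stated, not merely up to a global phase, since no scalar prefactors are discarded at any step.
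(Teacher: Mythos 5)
Your proof is correct and follows essentially the same route as the paper: one explicit matrix computation plus the identity $Y=iXZ$ combined with $HXH$ and $HZH$ to handle the $Y$ case with the same sign bookkeeping. The only (minor) difference is that the paper obtains $HZH=X$ by citing the earlier direct matrix calculation, whereas you derive it from $HXH=Z$ via $H^2=I$; both are fine.
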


\begin{proof}
	We already computed $HZH=X$ in \S\ref{sec:circuits2}. Similarly:
	\[
	HXH=\frac12\begin{pmatrix}1&1\\1&-1\end{pmatrix}
	\begin{pmatrix}0&1\\1&0\end{pmatrix}
	\begin{pmatrix}1&1\\1&-1\end{pmatrix}
	=
	\begin{pmatrix}1&0\\0&-1\end{pmatrix}
	=Z.
	\]
	Then $HYH = H(iXZ)H = i(HXH)(HZH)=i(Z)(X)=-iXZ=-Y$.
\end{proof}

\subsubsection*{Two-qubit identities: CNOT and friends}

Recall the controlled-not gate
\[
\mathrm{CNOT}\ket{a,b}=\ket{a,\,b\oplus a}.
\]
It is its own inverse:
\[
\mathrm{CNOT}^2=I.
\]

\begin{prop}[CNOT involution]
	\[
	\mathrm{CNOT}^2=I.
	\]
\end{prop}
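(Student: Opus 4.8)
The plan is to prove the identity by checking it on computational basis states and invoking linearity, with a one-line matrix computation as an independent sanity check.

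First I would fix the ordered basis $\ket{00},\ket{01},\ket{10},\ket{11}$ (control = first qubit, target = second qubit, matching the conventions of \S\ref{sec:circuits1}) and recall the defining action $\mathrm{CNOT}\ket{a,b}=\ket{a,\,b\oplus a}$ for $a,b\in\{0,1\}$. Applying the gate twice gives
\[
\mathrm{CNOT}^2\ket{a,b}=\mathrm{CNOT}\ket{a,\,b\oplus a}=\ket{a,\,(b\oplus a)\oplus a}=\ket{a,b},
\]
where the last step uses associativity of $\oplus$ together with $a\oplus a=0$. Since $\mathrm{CNOT}^2$ agrees with $I$ on a basis and both maps are linear, $\mathrm{CNOT}^2=I$.

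As an independent confirmation I would square the explicit $4\times 4$ matrix written in \S\ref{sec:circuits1}: $\mathrm{CNOT}$ is the permutation matrix that transposes the basis vectors $\ket{10}$ and $\ket{11}$ while fixing $\ket{00}$ and $\ket{01}$, and the square of any transposition matrix is the identity. There is no genuine obstacle here; the only point requiring care is the bookkeeping convention, ensuring the XOR acts on the correct tensor factor so that the basis ordering used in the matrix form is consistent with the index convention in the defining action.
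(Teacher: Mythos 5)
Your proof is correct and follows essentially the same route as the paper: apply $\mathrm{CNOT}$ twice to a computational basis state $\ket{a,b}$ and use $a\oplus a=0$ to recover $\ket{a,b}$, then conclude by linearity. The additional matrix-transposition check is a harmless extra confirmation beyond what the paper does.
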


\begin{proof}
	Apply CNOT twice:
	\[
	(a,b)\mapsto (a,b\oplus a)\mapsto (a,(b\oplus a)\oplus a)=(a,b).
	\]
\end{proof}

\begin{prop}[Hadamard swaps control/target (CNOT $\leftrightarrow$ CZ)]
	Define the controlled-$Z$ gate $\mathrm{CZ}=\mathrm{diag}(1,1,1,-1)$.
	Then
	\[
	(I\otimes H)\,\mathrm{CNOT}\,(I\otimes H)=\mathrm{CZ}.
	\]
	Equivalently,
	\[
	(I\otimes H)\,\mathrm{CZ}\,(I\otimes H)=\mathrm{CNOT}.
	\]
\end{prop}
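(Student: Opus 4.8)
The plan is to exploit the block structure of all three operators in the fixed ordered basis $\ket{00},\ket{01},\ket{10},\ket{11}$, so that the two-qubit identity collapses to the single-qubit conjugation rules already established in the preceding proposition. This is exactly the ``circuit identities are matrix identities'' philosophy of the chapter, so I would present the block computation as the main argument and keep a basis-state calculation only as a sanity check.

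First I would write each factor as a $2\times 2$ block matrix over the control (first) qubit. An operator that acts trivially on the first wire is block-diagonal with equal blocks, so $I\otimes H=\left(\begin{smallmatrix}H&0\\0&H\end{smallmatrix}\right)$. The CNOT gate, controlled on the first qubit, is $\mathrm{CNOT}=\ket0\bra0\otimes I+\ket1\bra1\otimes X=\left(\begin{smallmatrix}I&0\\0&X\end{smallmatrix}\right)$, which one checks against the permutation definition $\ket{a,b}\mapsto\ket{a,b\oplus a}$ (when the control is $\ket1$ the target is flipped). Finally $\mathrm{CZ}=\mathrm{diag}(1,1,1,-1)=\left(\begin{smallmatrix}I&0\\0&Z\end{smallmatrix}\right)$, with $Z$ the Pauli $Z$ on the target.

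Next I would multiply the three block-diagonal matrices, which combine block by block:
\[
(I\otimes H)\,\mathrm{CNOT}\,(I\otimes H)
=\begin{pmatrix}HIH&0\\[2pt]0&HXH\end{pmatrix}
=\begin{pmatrix}I&0\\[2pt]0&Z\end{pmatrix}
=\mathrm{CZ},
\]
using $H^2=I$ and the conjugation entry $HXH=Z$ from the proposition just proved. For the equivalent second form I would conjugate this identity by $I\otimes H$ and use $(I\otimes H)^2=I\otimes H^2=I$ to obtain $(I\otimes H)\,\mathrm{CZ}\,(I\otimes H)=\mathrm{CNOT}$. As a cross-check one can apply the composite directly to a computational basis state and verify it produces $(-1)^{ab}\ket{a,b}$, matching the action of $\mathrm{CZ}$.

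There is no genuine obstacle in this proof; the only thing that needs care is bookkeeping. I would make sure the control/target assignment and the placement of $H$ on the \emph{second} tensor factor are consistent with the ordered-basis conventions fixed in Section~\ref{sec:circuits1}, since a slip there (e.g. writing $H\otimes I$ instead of $I\otimes H$, or swapping which qubit is the control) would destroy the simultaneous block-diagonal structure and the one-line reduction would no longer apply.
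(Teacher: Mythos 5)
Your proof is correct, but it takes a different route from the paper. The paper argues in the Heisenberg picture: it recalls how $\mathrm{CNOT}$ and $\mathrm{CZ}$ conjugate the two-qubit Pauli operators, notes that conjugation by $I\otimes H$ swaps $X\leftrightarrow Z$ on the target wire, and concludes that the transformed conjugation table is exactly that of $\mathrm{CZ}$ (relegating the direct $4\times 4$ matrix check to an exercise). You instead work in the Schr\"odinger picture via the controlled-operator block decomposition $\mathrm{CNOT}=\ket{0}\bra{0}\otimes I+\ket{1}\bra{1}\otimes X$, $\mathrm{CZ}=\ket{0}\bra{0}\otimes I+\ket{1}\bra{1}\otimes Z$, so the whole identity collapses blockwise to $HIH=I$ and $HXH=Z$, and the second form follows by conjugating once more with $(I\otimes H)^2=I$. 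Your argument is more self-contained: the paper's version implicitly relies on the (unproved there) fact that a two-qubit unitary is pinned down, up to phase, by its Pauli conjugation action, whereas your computation needs nothing beyond the single-qubit identity $HXH=Z$ already established; it also generalizes verbatim to $(I\otimes V)\,\mathrm{C}U\,(I\otimes V^\dagger)=\mathrm{C}(VUV^\dagger)$. What the paper's Heisenberg-style argument buys in exchange is the stabilizer-flavored reasoning pattern (Pauli propagation through Cliffords) that it reuses heavily in the compilation and QEC chapters. Both are valid; yours is arguably the cleaner proof of this particular statement.
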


\begin{proof}
	Use the rule that $H$ maps $X\leftrightarrow Z$ by conjugation.
	CNOT can be characterized as the unique 2-qubit gate that conjugates Pauli operators as:
	\[
	\mathrm{CNOT}:\ 
	\begin{cases}
		X\otimes I \mapsto X\otimes X,\\
		I\otimes Z \mapsto Z\otimes Z,\\
		I\otimes X \mapsto I\otimes X,\\
		Z\otimes I \mapsto Z\otimes I,
	\end{cases}
	\]
	while CZ conjugates as:
	\[
	\mathrm{CZ}:\ 
	\begin{cases}
		X\otimes I \mapsto X\otimes Z,\\
		I\otimes X \mapsto Z\otimes X,\\
		Z\otimes I \mapsto Z\otimes I,\\
		I\otimes Z \mapsto I\otimes Z.
	\end{cases}
	\]
	Conjugating CNOT by $(I\otimes H)$ swaps $X$ and $Z$ on the second qubit, producing CZ.
	(You can also verify by multiplying $4\times4$ matrices directly; see exercises.)
\end{proof}

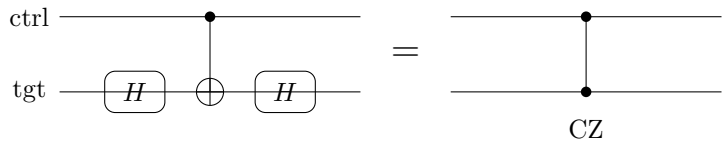
\begin{figure}[t]
	\centering
	\begin{tikzpicture}[scale=1.0]
		\draw (0,1) -- (4,1);
		\draw (0,0) -- (4,0);
		\node at (-0.4,1) {\small ctrl};
		\node at (-0.4,0) {\small tgt};
		
		\node[draw, rounded corners, minimum width=0.8cm, minimum height=0.55cm] at (1,0) {\small $H$};
		\fill (2,1) circle (0.07);
		\draw (2,1) -- (2,0);
		\draw (2,0) circle (0.18);
		\draw (2-0.18,0) -- (2+0.18,0);
		\draw (2,0-0.18) -- (2,0+0.18);
		\node[draw, rounded corners, minimum width=0.8cm, minimum height=0.55cm] at (3,0) {\small $H$};
		
		\node at (4.6,0.5) {\Large $=$};
		
		\draw (5.2,1) -- (8.8,1);
		\draw (5.2,0) -- (8.8,0);
		\fill (7,1) circle (0.07);
		\fill (7,0) circle (0.07);
		\draw (7,1) -- (7,0);
		\node at (7.0,-0.5) {\small CZ};
		
	\end{tikzpicture}
	\caption{A standard compilation identity: $(I\otimes H)\,\mathrm{CNOT}\,(I\otimes H)=\mathrm{CZ}$.}
	\label{fig:cnot-cz}
\end{figure}

\subsection{Commutation, reordering, and parallelization}

\subsubsection*{When can you swap gates?}

If two gates act on disjoint sets of wires, they commute:
\[
(A\otimes I)(I\otimes B)=(I\otimes B)(A\otimes I)=A\otimes B.
\]
If gates overlap on a wire, commutation is delicate and usually false.

\begin{prop}[Disjoint-support commutation]
	If $A$ acts on qubit set $S$ and $B$ acts on disjoint qubit set $T$ ($S\cap T=\varnothing$), then
	\[
	AB=BA.
	\]
\end{prop}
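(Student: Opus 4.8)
The plan is to reduce the claim to the tensor-product ``wiring rule'' already established for Kronecker products (the mixed-product identity $(A_1\otimes B_1)(A_2\otimes B_2)=(A_1A_2)\otimes(B_1B_2)$), by rewriting both $A$ and $B$ in a single tensor factorization adapted to the partition of the qubit register.

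First I would fix the register of all qubits and split it into three pairwise-disjoint blocks: $S$ (where $A$ acts), $T$ (where $B$ acts), and $R$, everything else. Relabeling the qubits amounts to conjugating every operator by a fixed permutation unitary $P$ (a product of SWAPs), and the identity $AB=BA$ is equivalent to $(PAP^\dagger)(PBP^\dagger)=(PBP^\dagger)(PAP^\dagger)$; so I may assume without loss of generality that $S$, $T$, $R$ occupy contiguous blocks of wires in that order. Under this convention, ``$A$ acts on $S$'' means precisely that $A=A_S\otimes I_T\otimes I_R$ for some operator $A_S$ on the $S$-subsystem, and likewise $B=I_S\otimes B_T\otimes I_R$.

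Next I would apply the mixed-product identity — used twice, once for each tensor slot, which is legitimate by associativity of $\otimes$ and the two-factor wiring rule already proven. This gives
\[
AB=(A_S\otimes I_T\otimes I_R)(I_S\otimes B_T\otimes I_R)=(A_SI_S)\otimes(I_TB_T)\otimes(I_RI_R)=A_S\otimes B_T\otimes I_R,
\]
and symmetrically
\[
BA=(I_S\otimes B_T\otimes I_R)(A_S\otimes I_T\otimes I_R)=(I_SA_S)\otimes(B_TI_T)\otimes(I_RI_R)=A_S\otimes B_T\otimes I_R.
\]
Since both products equal $A_S\otimes B_T\otimes I_R$, we conclude $AB=BA$.

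The only genuine obstacle is the bookkeeping of interleaved wires: in a raw circuit the qubits of $S$ and $T$ need not be contiguous, so the clean factorizations $A=A_S\otimes I_T\otimes I_R$ and $B=I_S\otimes B_T\otimes I_R$ hold literally only after a wire permutation. The conjugation-by-$P$ reduction in the second paragraph is exactly what converts ``disjoint support'' into ``disjoint tensor factors,'' after which the mixed-product computation is entirely mechanical; I would be careful to state that reduction explicitly rather than wave at it.
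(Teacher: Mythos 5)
Your proof is correct and follows essentially the same route as the paper: reorder the wires so that $A$ and $B$ become $A_S\otimes I$ and $I\otimes B_T$ in a common tensor factorization, then commute them via the mixed-product rule. Your version is simply more explicit about the permutation-conjugation step and the spectator block $R$, which the paper's one-line proof leaves implicit.
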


\begin{proof}
	In a tensor-factor ordering where $S$ wires come first and $T$ wires come next,
	the operators have the form $A\otimes I$ and $I\otimes B$, which commute by matrix multiplication.
\end{proof}

\subsubsection*{Moving single-qubit gates through CNOT}

These are compilation workhorses. The clean way to remember them:
CNOT propagates $X$ forward from control to target and propagates $Z$ backward from target to control.

\begin{prop}[Pauli propagation through CNOT]
	Let $C=\mathrm{CNOT}$ (control=first qubit, target=second). Then:
	\[
	C\,(X\otimes I)\,C = X\otimes X,
	\qquad
	C\,(I\otimes X)\,C = I\otimes X,
	\]
	\[
	C\,(Z\otimes I)\,C = Z\otimes I,
	\qquad
	C\,(I\otimes Z)\,C = Z\otimes Z.
	\]
	Equivalently (rearranged as commuting rules):
	\[
	(X\otimes I)\,C = C\,(X\otimes X),
	\qquad
	(I\otimes Z)\,C = C\,(Z\otimes Z),
	\]
	and $I\otimes X$ and $Z\otimes I$ commute through unchanged.
\end{prop}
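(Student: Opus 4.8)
\subsection*{Proof proposal for the Pauli propagation rules}

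The plan is to verify all four conjugation identities by computing the action of each side on the computational basis $\{\ket{a,b} : a,b\in\{0,1\}\}$ and then invoking linearity. Two preliminary observations make the conjugation picture clean: $\mathrm{CNOT}$ is a real symmetric permutation matrix, so $C^\dagger = C$, and we have already shown $C^2 = I$; hence $CAC = CAC^{-1}$ is a genuine conjugation of $A$ by $C$, which is exactly the form appearing on the left-hand sides.

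For the two $X$-type identities I would track basis kets through XOR arithmetic. Using $C\ket{a,b} = \ket{a,\,b\oplus a}$, a short chain gives $C(X\otimes I)C\ket{a,b} = C(X\otimes I)\ket{a,\,b\oplus a} = C\ket{a\oplus 1,\,b\oplus a} = \ket{a\oplus 1,\,(b\oplus a)\oplus(a\oplus 1)} = \ket{a\oplus 1,\,b\oplus 1}$, which is exactly $(X\otimes X)\ket{a,b}$. The same bookkeeping with $I\otimes X$ collapses the target bit $b\oplus a\oplus 1$ back to $b\oplus 1$, giving $C(I\otimes X)C = I\otimes X$.

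For the two $Z$-type identities it is cleaner to use the diagonal action $Z\otimes I:\ket{a,b}\mapsto(-1)^a\ket{a,b}$ and $I\otimes Z:\ket{a,b}\mapsto(-1)^b\ket{a,b}$. Conjugating, $C(Z\otimes I)C\ket{a,b} = (-1)^a\ket{a,b}$, since the phase picked up is $(-1)^a$ and is untouched by the target XOR, proving $C(Z\otimes I)C = Z\otimes I$; and $C(I\otimes Z)C\ket{a,b} = (-1)^{b\oplus a}\ket{a,b}$. The one point worth stating explicitly is the sign identity $(-1)^{a\oplus b} = (-1)^{a+b}$ (the exponents differ by $2ab$), so this phase coincides with the action of $Z\otimes Z$, yielding $C(I\otimes Z)C = Z\otimes Z$. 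Finally, the rearranged commuting rules follow by left-multiplying each conjugation identity by $C$ and using $C^2 = I$: from $C(X\otimes I)C = X\otimes X$ one obtains $(X\otimes I)C = C(X\otimes X)$, similarly $(I\otimes Z)C = C(Z\otimes Z)$, and the unchanged cases $I\otimes X$, $Z\otimes I$ commute through $C$ trivially. There is no substantive obstacle here; the only things to watch are the direction of propagation (an $X$ on the control flows to control$\otimes$target, a $Z$ on the target flows back to target$\otimes$control) and the $(-1)^{a\oplus b}=(-1)^{a+b}$ step. A reader who prefers linear algebra can instead multiply the $4\times 4$ matrices directly and will land on the same four identities.
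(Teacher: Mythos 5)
Your proposal is correct and follows essentially the same route as the paper: verify the conjugation identities on computational basis states via $C\ket{a,b}=\ket{a,\,b\oplus a}$ and extend by linearity. Your added details (tracking the $Z$-type cases as diagonal phases, the $(-1)^{a\oplus b}=(-1)^{a+b}$ remark, and deriving the rearranged rules from $C^2=I$) simply fill in the steps the paper leaves as "similar."
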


\begin{proof}
	Verify on computational basis states $\ket{a,b}$:
	\[
	C\ket{a,b}=\ket{a,b\oplus a}.
	\]
	For example, compute $C(X\otimes I)C\ket{a,b}$:
	\[
	C(X\otimes I)\ket{a,b\oplus a}=C\ket{a\oplus1,\,b\oplus a}
	=\ket{a\oplus1,\,(b\oplus a)\oplus(a\oplus1)}
	=\ket{a\oplus1,\,b\oplus1}.
	\]
	But $(X\otimes X)\ket{a,b}=\ket{a\oplus1,b\oplus1}$, so
	$C(X\otimes I)C = X\otimes X$.
	The other identities are similar.
\end{proof}

\begin{figure}[t]
	\centering
	\begin{tikzpicture}[scale=1.0]
		\draw (0,1) -- (4,1);
		\draw (0,0) -- (4,0);
		\node[draw, rounded corners, minimum width=0.8cm, minimum height=0.55cm] at (1,1) {\small $X$};
		\fill (2.5,1) circle (0.07);
		\draw (2.5,1) -- (2.5,0);
		\draw (2.5,0) circle (0.18);
		\draw (2.5-0.18,0) -- (2.5+0.18,0);
		\draw (2.5,0-0.18) -- (2.5,0+0.18);
		
		\node at (4.6,0.5) {\Large $=$};
		
		\draw (5.2,1) -- (9.2,1);
		\draw (5.2,0) -- (9.2,0);
		\fill (6.7,1) circle (0.07);
		\draw (6.7,1) -- (6.7,0);
		\draw (6.7,0) circle (0.18);
		\draw (6.7-0.18,0) -- (6.7+0.18,0);
		\draw (6.7,0-0.18) -- (6.7,0+0.18);
		\node[draw, rounded corners, minimum width=0.8cm, minimum height=0.55cm] at (8.2,1) {\small $X$};
		\node[draw, rounded corners, minimum width=0.8cm, minimum height=0.55cm] at (8.2,0) {\small $X$};
	\end{tikzpicture}
	\caption{Pauli propagation: $X$ on the control ``copies'' to the target when pushed through CNOT.}
	\label{fig:pauli-propagation-x}
\end{figure}
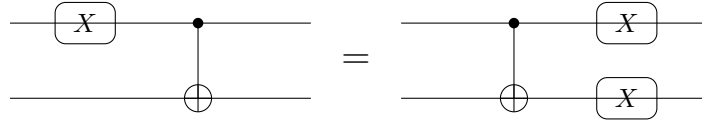

\subsubsection*{Parallelization: depth vs.\ gate count}

Two circuits can have the same total number of gates but different \emph{depth}.
Depth matters because noise accumulates with time and because hardware has coherence limits.

\begin{defn}[Circuit depth]
	The depth of a circuit is the minimum number of time steps needed if gates on disjoint wires can be executed simultaneously.
\end{defn}

\begin{rem}[Hardware reality check]
	On real devices, ``disjoint wires'' is not always enough:
	parallel gates may be disallowed due to crosstalk, frequency crowding, shared control lines, or readout conflicts.
	Compilation targets a \emph{hardware schedule}, not just an abstract circuit.
\end{rem}

\subsection{Normal forms, universality, and compilation}

\subsubsection*{Universal gate sets}

A gate set $\mathcal{G}$ is universal if any target unitary can be approximated arbitrarily well by circuits from $\mathcal{G}$.

\begin{prop}[A standard universal set]
	The gate set $\{H,\,T,\,\mathrm{CNOT}\}$ is universal for quantum computation.
\end{prop}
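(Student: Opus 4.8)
The plan is to reduce the $n$-qubit claim to a single-qubit density statement and then establish that density. First I would invoke the exact universality of \emph{arbitrary} single-qubit gates together with $\mathrm{CNOT}$, which follows from tools already in these chapters: every $U\in U(2^n)$ factors into two-level (Givens-type) unitaries; each two-level unitary becomes, after a Gray-code relabeling of basis states, a multiply-controlled single-qubit gate; multiply-controlled gates reduce via ancillas and the Toffoli gate to $\mathrm{CNOT}$ plus single-qubit gates; and the Toffoli gate itself has the standard decomposition into $\mathrm{CNOT}$, $H$, $T$, and $T^\dagger$, where $T^\dagger=T^{7}$ since $T^{8}=I$. Hence it suffices to show that words in $H$ and $T$ are \emph{dense} in $\SU(2)$ modulo global phase, i.e.\ dense in $PU(2)\cong\SO(3)$. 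The Solovay--Kitaev theorem then upgrades density to efficient approximation (polylogarithmic word length for any target accuracy), which is what "universal for quantum computation" requires in the approximate sense; I would state this as the final packaging step.

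For the density step I would work on the Bloch sphere. Up to global phase, $T$ is the rotation $R_Z(\pi/4)$ about the $z$-axis, and by the conjugation identity $HZH=X$ from the $H$-conjugation table, $HTH$ is (up to phase) the rotation $R_X(\pi/4)$ about the $x$-axis. Form the composite rotation $g:=R_Z(\pi/4)\,R_X(\pi/4)$; using the axis-angle multiplication rule $U(\vec n,\theta)=\cos(\tfrac{\theta}{2})I-i\sin(\tfrac{\theta}{2})\,\vec n\cdot\vec\sigma$ together with the composition formula $\cos(\tfrac{\gamma}{2})=\cos(\tfrac{\alpha}{2})\cos(\tfrac{\beta}{2})-\sin(\tfrac{\alpha}{2})\sin(\tfrac{\beta}{2})(\hat a\cdot\hat b)$, one gets $\cos(\tfrac{\gamma}{2})=\cos^{2}(\pi/8)=\tfrac{2+\sqrt2}{4}$ for the rotation angle $\gamma$ of $g$ about some fixed axis $\hat n$. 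The crucial non-routine fact is that $\gamma/\pi$ is \emph{irrational}: since $2\cos\gamma$ is an algebraic integer, a Niven-type rational-root argument on its minimal polynomial eliminates the finitely many rational exceptions. Irrationality makes $\{g^{k}:k\in\Z\}$ dense in the one-parameter circle subgroup of rotations about $\hat n$.

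Finally I would promote dense access about one axis to density in all of $\SO(3)$: a suitable conjugate of $g$ (for instance $HgH$ or $TgT^{-1}$) produces the same kind of dense one-parameter subgroup about a second axis $\hat n'$ that is not parallel to $\hat n$, and a standard lemma---arbitrarily small rotations about two linearly independent axes generate a dense subgroup of $\SO(3)$, using connectedness of $\SO(3)$---then yields density in $\SO(3)$, hence in $PU(2)$ after lifting through the double cover $\SU(2)\to\SO(3)$ (noting $H,T\in\SU(2)$ up to phase). The main obstacles are exactly the two ingredients that are not bookkeeping: the irrationality of $\gamma/\pi$ and the two-axis density lemma. I expect the irrationality argument to be the genuine hurdle to write cleanly; I would handle it via the minimal polynomial of $2\cos\gamma$ (or cite the Boykin--Mor--Pulver--Roychowdhury analysis), and either prove the two-axis lemma directly or cite it, leaving the rest as routine composition of earlier results.
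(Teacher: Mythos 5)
The paper does not actually prove this proposition: it is stated as a standard fact, followed only by a remark distinguishing universality (an existence statement) from compilation, with Nielsen--Chuang cited in the bibliography. So there is no in-paper argument to compare against; what you have written is essentially the canonical textbook proof, and it is correct in outline. Your reduction (two-level unitaries $\to$ Gray-coded multiply-controlled gates $\to$ Toffoli $\to$ $\{H,T,T^\dagger,\mathrm{CNOT}\}$ with $T^\dagger=T^7$), the computation $\cos(\gamma/2)=\cos^2(\pi/8)=\tfrac{2+\sqrt2}{4}$ for $g=R_Z(\pi/4)R_X(\pi/4)$, the two-axis density lemma, and the Solovay--Kitaev packaging are all the standard route. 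Two small points to tighten when writing it up: (i) the irrationality step should be phrased as a contrapositive --- \emph{if} $\gamma$ were a rational multiple of $2\pi$, \emph{then} $2\cos\gamma$ would be an algebraic integer (a sum of a root of unity and its conjugate), whereas here $2\cos\gamma=4\cos^2(\gamma/2)-2=\tfrac{2\sqrt2-1}{2}$ has minimal polynomial $x^2+x-\tfrac74$, which is not monic over $\mathbb{Z}$, giving the contradiction; as written, ``since $2\cos\gamma$ is an algebraic integer'' asserts the wrong direction. (ii) You must actually verify that the conjugated rotation (e.g.\ $HgH$) has axis $\hat n'$ not parallel to $\hat n$; this is true but is a short explicit computation, not automatic. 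With those repairs the argument is complete, and citing Boykin--Mor--Pulver--Roychowdhury (or Nielsen--Chuang, which the paper already lists) for the irrationality step is a legitimate alternative to writing it out.
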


\begin{rem}[What universality means in practice]
	Universality is an existence statement. Compilation is the constructive task:
	given a target algorithmic unitary $U$, produce a circuit over your hardware gate set with acceptable:
	\begin{itemize}
		\item approximation error,
		\item depth,
		\item two-qubit gate count,
		\item scheduling constraints.
	\end{itemize}
\end{rem}

\subsubsection*{Normal forms (single-qubit)}

A very common compilation step is to rewrite a single-qubit unitary into Euler angles:
\[
U \in \SU(2)
\quad\Rightarrow\quad
U = R_Z(\alpha)\,R_Y(\beta)\,R_Z(\gamma),
\]
as already discussed in \S\ref{sec:bloch}. This is a continuous, geometry-based normal form.

If your hardware supports $R_Z$ virtually (``software phase'') and has calibrated $R_X$ or $R_Y$ pulses,
Euler decomposition is physically meaningful.

\subsubsection*{Clifford+T compilation viewpoint}

Clifford gates (generated by $H,S,\mathrm{CNOT}$) have rich algebraic structure and compile easily.
The $T$ gate supplies non-Clifford power needed for universality.

A canonical workflow:
\begin{enumerate}
	\item reduce/normalize Clifford parts by identities (often using stabilizer rules),
	\item approximate non-Clifford rotations using $T$ gates (Solovay--Kitaev or modern synthesis),
	\item then resimplify and schedule respecting hardware constraints.
\end{enumerate}

\begin{rem}[Compilation target depends on platform]
	Superconducting systems often natively support continuous rotations $R_X(\theta),R_Y(\theta),R_Z(\theta)$ and a calibrated entangler (e.g.\ cross-resonance).
	Trapped ions often implement M{\o}lmer--S{\o}rensen-like entanglers and collective rotations.
	Compilation means different normal forms on different platforms.
\end{rem}

\subsection{Why simplification matters physically}

Simplification is not aesthetic; it is an error-mitigation step.

\subsubsection*{Depth reduction reduces decoherence exposure}
If each time step has decoherence rate $\lambda$, a crude model says survival fidelity decays roughly like $e^{-\lambda \cdot \text{depth}}$.
So commuting/parallelizing gates can change success probabilities dramatically.

\subsubsection*{Two-qubit gates are expensive}
In many platforms, two-qubit gates have significantly higher error rates than single-qubit gates.
Compilation strategies often optimize:
\[
\text{(two-qubit gate count)} \quad \text{first, then depth}.
\]

\subsubsection*{Crosstalk and scheduling constraints}
Even if two gates commute algebraically, hardware may forbid parallel execution.
Real compilation is:
\[
\text{algebraic rewrite} \ +\ \text{device constraints} \ +\ \text{calibration/parameter choices}.
\]

\subsubsection*{FPGA/control relevance}
On control hardware, simplification can:
\begin{itemize}
	\item shorten real-time feedback loops (lower latency),
	\item reduce classical post-processing load per shot,
	\item simplify message schemas (fewer pulses/events to stream),
	\item reduce memory bandwidth in control pipelines.
\end{itemize}

\subsection{Conceptual summary}

\begin{itemize}
	\item Circuit equalities are matrix equalities (often modulo global phase).
	\item You simplify by applying identities and by commuting gates when permitted.
	\item CNOT has predictable propagation rules for Paulis (the core of many rewrites).
	\item Universality explains \emph{why} compilation is possible; normal forms explain \emph{how}.
	\item Physical motivation: fewer gates, less depth, fewer two-qubit operations $\Rightarrow$ higher fidelity and lower latency.
\end{itemize}

\subsection{Exercises}

\begin{exercise}[Warm-up: prove an identity by matrices]
	Compute $HXH$ and $HYH$ directly as $2\times2$ matrix products and verify
	\[
	HXH=Z,\qquad HYH=-Y.
	\]
\end{exercise}

\noindent\textbf{Solution.}
We already computed $HXH=Z$ in the text.
For $HYH$:
\[
HYH
=
\frac{1}{\sqrt2}
\begin{pmatrix}1&1\\1&-1\end{pmatrix}
\begin{pmatrix}0&-i\\ i&0\end{pmatrix}
\frac{1}{\sqrt2}
\begin{pmatrix}1&1\\1&-1\end{pmatrix}.
\]
First multiply $YH$:
\[
YH
=
\begin{pmatrix}0&-i\\ i&0\end{pmatrix}
\frac{1}{\sqrt2}\begin{pmatrix}1&1\\1&-1\end{pmatrix}
=
\frac{1}{\sqrt2}\begin{pmatrix}-i&i\\ i&i\end{pmatrix}.
\]
Then multiply $H(YH)$:
\[
H(YH)
=
\frac{1}{\sqrt2}\begin{pmatrix}1&1\\1&-1\end{pmatrix}
\frac{1}{\sqrt2}\begin{pmatrix}-i&i\\ i&i\end{pmatrix}
=
\frac12\begin{pmatrix}0&2i\\ -2i&0\end{pmatrix}
=
\begin{pmatrix}0&i\\ -i&0\end{pmatrix}
=-Y.
\]

\begin{exercise}[CNOT to CZ compilation]
	Verify directly that
	\[
	(I\otimes H)\,\mathrm{CNOT}\,(I\otimes H)=\mathrm{CZ}
	\]
	by multiplying $4\times 4$ matrices in the computational basis order
	$\{\ket{00},\ket{01},\ket{10},\ket{11}\}$.
\end{exercise}

\noindent\textbf{Solution.}
Write the matrices:
\[
\mathrm{CNOT}=
\begin{pmatrix}
	1&0&0&0\\
	0&1&0&0\\
	0&0&0&1\\
	0&0&1&0
\end{pmatrix},
\qquad
I\otimes H=
\frac{1}{\sqrt2}
\begin{pmatrix}
	1&1&0&0\\
	1&-1&0&0\\
	0&0&1&1\\
	0&0&1&-1
\end{pmatrix}.
\]
Compute $M=(I\otimes H)\,\mathrm{CNOT}\,(I\otimes H)$.
A quick way is to apply $M$ to basis vectors:
\begin{itemize}
	\item $M\ket{00}=\ket{00}$ (no phase),
	\item $M\ket{01}=\ket{01}$,
	\item $M\ket{10}=\ket{10}$,
	\item $M\ket{11}=-\ket{11}$.
\end{itemize}
Thus $M=\mathrm{diag}(1,1,1,-1)=\mathrm{CZ}$.

\begin{exercise}[Pauli propagation through CNOT]
	Let $C=\mathrm{CNOT}$ (control first, target second). Prove
	\[
	C(X\otimes I)C = X\otimes X,
	\qquad
	C(I\otimes Z)C = Z\otimes Z
	\]
	by checking their action on all basis states $\ket{a,b}$ with $a,b\in\{0,1\}$.
\end{exercise}

\noindent\textbf{Solution.}
We show the first identity; the second is similar.
For any $\ket{a,b}$:
\[
C(X\otimes I)C\ket{a,b}
=
C(X\otimes I)\ket{a,\,b\oplus a}
=
C\ket{a\oplus1,\,b\oplus a}
=
\ket{a\oplus1,\,(b\oplus a)\oplus(a\oplus1)}
=
\ket{a\oplus1,\,b\oplus1}.
\]
But $(X\otimes X)\ket{a,b}=\ket{a\oplus1,\,b\oplus1}$, so the operators agree.

\begin{exercise}[Commutation and illegal swaps]
	Decide whether each pair of gates commute (answer ``yes'' or ``no'') and justify:
	\begin{enumerate}
		\item $X\otimes I$ and $I\otimes Z$,
		\item $X\otimes I$ and $\mathrm{CNOT}$,
		\item $I\otimes X$ and $\mathrm{CNOT}$,
		\item $Z\otimes I$ and $\mathrm{CNOT}$.
	\end{enumerate}
\end{exercise}

\noindent\textbf{Solution.}
\begin{enumerate}
	\item Yes: disjoint support $\Rightarrow$ commute.
	\item No in general: $(X\otimes I)\mathrm{CNOT}=\mathrm{CNOT}(X\otimes X)$, not $\mathrm{CNOT}(X\otimes I)$.
	\item Yes: $C(I\otimes X)C=I\otimes X$ implies $(I\otimes X)$ commutes with CNOT.
	\item Yes: $C(Z\otimes I)C=Z\otimes I$ implies $(Z\otimes I)$ commutes with CNOT.
\end{enumerate}

\begin{exercise}[Depth reduction by commuting]
	Consider a 3-qubit circuit:
	apply $H$ on qubit 1, then $H$ on qubit 3, then CNOT with control=1 target=2.
	\begin{enumerate}
		\item Draw the circuit.
		\item Rewrite it to minimize depth using commutation/parallelization.
		\item State the original depth and the optimized depth.
	\end{enumerate}
\end{exercise}

\noindent\textbf{Solution.}
(1) Circuit:
\[
(H\ \text{on }1)\ \rightarrow\ (H\ \text{on }3)\ \rightarrow\ \mathrm{CNOT}_{1\to2}.
\]
(2) The $H$ on qubit 3 is disjoint from both $H$ on qubit 1 and the CNOT on (1,2),
so it can be moved and executed in parallel.
Schedule:
\begin{itemize}
	\item time step 1: apply $H$ on qubit 1 and $H$ on qubit 3 in parallel,
	\item time step 2: apply $\mathrm{CNOT}_{1\to2}$.
\end{itemize}
(3) Original depth: 3 (if executed strictly serially).
Optimized depth: 2.

	\Part{FPGA Viewpoint: Deterministic Pipelines for Hybrid Workflows}
	
\section{FPGA I: Why Hardware Acceleration Appears in Quantum Workflows}
\label{sec:fpga1}

\subsection{Objective}

This chapter explains, in concrete engineering terms, why ``quantum computing'' quickly becomes
a \emph{quantum--classical real-time system}. The quantum device (QPU) produces measurement outcomes,
but the workflow is dominated by \emph{classical} tasks:
\begin{itemize}
	\item decoding (QEC), filtering, and feedback control,
	\item low-latency decision making (adaptive circuits / mid-circuit measurement),
	\item streaming data reduction and aggregation (shots $\to$ estimates),
	\item deterministic, jitter-bounded I/O orchestration (timing is a feature, not an afterthought).
\end{itemize}
We focus on why and where FPGA acceleration naturally appears, and what you should learn first to use it.

\subsection{A fundamental reality of quantum computing}

\subsubsection*{The quantum computer is not a standalone box}

A practical quantum system is a closed loop:
\[
\text{(classical control)} \rightarrow \text{(pulses)} \rightarrow \text{(QPU evolution)} \rightarrow
\text{(measurement)} \rightarrow \text{(classical processing)} \rightarrow \text{(updated control)}.
\]
Even when the algorithm looks ``purely quantum'' on paper, the experiment is typically run as a
high-throughput, repeated-shot pipeline:
\[
\text{circuit parameters} \ \to\ \text{many shots}\ \to\ \text{bitstrings}\ \to\ \text{statistics/gradients}\ \to\ \text{parameter update}.
\]

\begin{rem}[Where the latency pressure comes from]
	Two sources create hard latency requirements:
	\begin{enumerate}
		\item \textbf{Error correction cycles:} syndrome extraction is periodic, and correction decisions
		must be made before errors accumulate.
		\item \textbf{Mid-circuit feedback:} adaptive measurement (feed-forward) requires classical decisions
		\emph{during} a circuit, not after the entire experiment.
	\end{enumerate}
	Both requirements point to deterministic, low-jitter classical processing close to the hardware.
\end{rem}

\subsubsection*{A minimal timing model (engineering lens)}

Let
\[
T_{\mathrm{cycle}}
=
T_{\mathrm{meas}}
+
T_{\mathrm{readout}}
+
T_{\mathrm{classical}}
+
T_{\mathrm{apply}},
\]
where:
\begin{itemize}
	\item $T_{\mathrm{meas}}$: physical measurement window (integrating an analog signal),
	\item $T_{\mathrm{readout}}$: digitization + thresholding + I/O transfer,
	\item $T_{\mathrm{classical}}$: decoding / filtering / decision logic,
	\item $T_{\mathrm{apply}}$: time to emit the next control action (pulse trigger, conditional gate).
\end{itemize}
The engineering problem is not only small average time, but also small \emph{tail latency} and jitter:
\[
T_{\mathrm{classical}} \text{ must be bounded and predictable.}
\]
This is exactly where FPGA-style hardware is strong.

\subsection{What is an FPGA (conceptually)?}

\subsubsection*{Reconfigurable hardware: ``hardware you can program''}

An FPGA (Field-Programmable Gate Array) is a chip containing:
\begin{itemize}
	\item many small logic blocks (LUTs: look-up tables),
	\item flip-flops (registers),
	\item block RAMs (BRAM),
	\item DSP blocks (fast multiply-accumulate),
	\item programmable routing fabric (wires you configure),
	\item I/O transceivers and clocking resources.
\end{itemize}
You do not ``run a program'' on an FPGA the way you run a program on a CPU.
Instead, you \emph{synthesize} a digital circuit that exists physically on the chip.

\subsubsection*{A mental model: from Boolean expressions to gates}

A combinational logic function $f:\{0,1\}^k\to\{0,1\}$ can be built from gates.
For example, the XOR function $a\oplus b$ can be built from AND/OR/NOT gates:
\[
a\oplus b = (a\wedge \neg b)\ \vee\ (\neg a \wedge b).
\]

\begin{figure}[t]
	\centering
	\begin{tikzpicture}[x=1.1cm,y=1.0cm, line cap=round, line join=round]
		\node at (0,2) {$a$};
		\node at (0,0) {$b$};
		\draw (0.2,2) -- (1.0,2);
		\draw (0.2,0) -- (1.0,0);
		
		\node[draw, minimum width=0.6cm, minimum height=0.45cm] (nota) at (1.6,2) {\small NOT};
		\node[draw, minimum width=0.6cm, minimum height=0.45cm] (notb) at (1.6,0) {\small NOT};
		\draw (1.0,2) -- (nota.west);
		\draw (1.0,0) -- (notb.west);
		
		\draw (1.0,2) -- (1.0,1.2);
		\draw (1.0,0) -- (1.0,0.8);
		
		\node[draw, minimum width=0.8cm, minimum height=0.55cm] (and1) at (3.2,1.2) {\small AND};
		\node[draw, minimum width=0.8cm, minimum height=0.55cm] (and2) at (3.2,0.8) {\small AND};
		
		\draw (1.0,1.2) -- (and1.west);
		\draw (nota.east) -- (2.3,2) -- (2.3,1.2) -- (and1.west);
		
		\draw (notb.east) -- (2.3,0) -- (2.3,0.8) -- (and2.west);
		\draw (1.0,0.8) -- (and2.west);
		
		\node[draw, minimum width=0.8cm, minimum height=0.55cm] (or1) at (4.8,1.0) {\small OR};
		\draw (and1.east) -- (or1.west);
		\draw (and2.east) -- (or1.west);
		
		\draw (or1.east) -- (6.0,1.0);
		\node at (6.3,1.0) {$a\oplus b$};
		
	\end{tikzpicture}
	\caption{A gate-level realization of XOR using AND/OR/NOT (pure combinational logic).}
	\label{fig:xor-gates}
\end{figure}
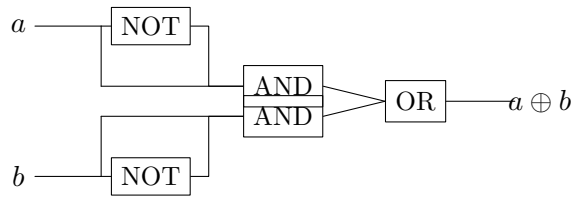

\subsubsection*{Sequential logic: registers and clocks}

Many FPGA tasks are streaming pipelines.
For that, we need state (memory), implemented using flip-flops (registers) updated by a clock.

A single-bit register can be pictured as a D flip-flop:

\begin{figure}[t]
	\centering
	\begin{tikzpicture}[x=1.1cm,y=1.0cm, line cap=round, line join=round]
		\node at (0,1) {$D$};
		\draw (0.2,1) -- (1.0,1);
		
		\node[draw, minimum width=1.1cm, minimum height=1.0cm] (dff) at (2.0,1) {\small DFF};
		\node at (1.45,0.35) {\small clk};
		
		\draw (1.0,1) -- (dff.west);
		
		\draw (1.2,0.35) -- (dff.south);
		\draw (1.2,0.35) -- (0.2,0.35);
		\node at (0,0.35) {$\mathrm{clk}$};
		
		\draw (dff.east) -- (3.4,1);
		\node at (3.7,1) {$Q$};
		
		\node at (2.0,-0.2) {\small $Q(t^+)=D(t)$ on clock edge};
	\end{tikzpicture}
	\caption{A register (D flip-flop) stores state and updates on a clock edge.}
	\label{fig:dff}
\end{figure}
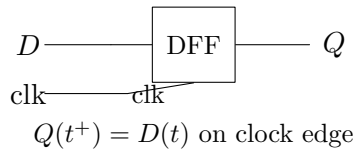

The key FPGA advantage: you can build \emph{many} such pipelines in parallel and keep them
synchronized with deterministic timing.

\subsection{Why classical processing becomes the bottleneck}

\subsubsection*{The QPU produces data; the classical side must keep up}

A measurement round produces bitstrings (or analog samples that become bits).
If your workflow needs to decide what happens next (feedback or QEC),
then the classical computation is on the critical path.

Two bottleneck patterns appear repeatedly:

\begin{enumerate}
	\item \textbf{Streaming bottleneck:} data arrive continuously; you must process at line rate.
	\item \textbf{Latency bottleneck:} you must decide within a time budget, every cycle, with bounded jitter.
\end{enumerate}

\subsubsection*{Example: parity checks as a toy model for syndrome processing}

A very simple syndrome-like computation is a parity (XOR) over several bits:
\[
s = b_1 \oplus b_2 \oplus \cdots \oplus b_m.
\]
On a CPU, this is a loop (fast on average, but with scheduling jitter and OS noise).
On an FPGA, it is a fixed XOR tree with known depth.

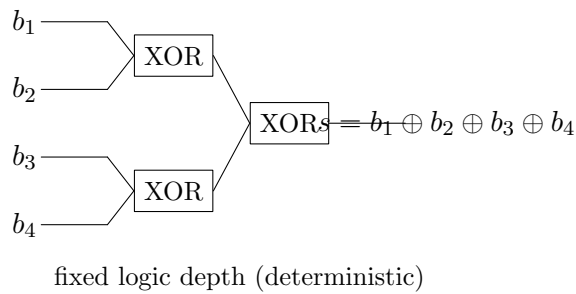
\begin{figure}[t]
	\centering
	\begin{tikzpicture}[x=1.1cm,y=0.9cm, line cap=round, line join=round]
		\node at (0,4) {$b_1$}; \draw (0.2,4) -- (1.0,4);
		\node at (0,3) {$b_2$}; \draw (0.2,3) -- (1.0,3);
		\node at (0,2) {$b_3$}; \draw (0.2,2) -- (1.0,2);
		\node at (0,1) {$b_4$}; \draw (0.2,1) -- (1.0,1);
		
		\node[draw, minimum width=0.8cm, minimum height=0.55cm] (x12) at (1.8,3.5) {\small XOR};
		\node[draw, minimum width=0.8cm, minimum height=0.55cm] (x34) at (1.8,1.5) {\small XOR};
		\draw (1.0,4) -- (x12.west);
		\draw (1.0,3) -- (x12.west);
		\draw (1.0,2) -- (x34.west);
		\draw (1.0,1) -- (x34.west);
		
		\node[draw, minimum width=0.8cm, minimum height=0.55cm] (xall) at (3.2,2.5) {\small XOR};
		\draw (x12.east) -- (xall.west);
		\draw (x34.east) -- (xall.west);
		
		\draw (xall.east) -- (4.6,2.5);
		\node at (5.1,2.5) {$s=b_1\oplus b_2\oplus b_3\oplus b_4$};
		
		\node at (2.6,0.2) {\small fixed logic depth (deterministic)};
	\end{tikzpicture}
	\caption{Parity as an XOR tree. This is a toy model for ``syndrome extraction'' style computations.}
	\label{fig:parity-tree}
\end{figure}

This toy is deliberately simple, but it captures the control-hardware theme:
\emph{the computation is a circuit} with fixed latency and predictable throughput.

\subsection{Why CPUs and GPUs are often insufficient}

\subsubsection*{CPUs: great flexibility, weak determinism}

CPUs excel at irregular control flow and complex software stacks.
But low-latency control loops suffer from:
\begin{itemize}
	\item OS scheduling jitter,
	\item cache/memory variability,
	\item interrupts and context switches,
	\item unpredictable tail latency under load.
\end{itemize}
Even if average latency is good, the worst-case (or tail) can break real-time constraints.

\subsubsection*{GPUs: massive throughput, high overhead and poor tight-loop latency}

GPUs excel at:
\begin{itemize}
	\item large batch linear algebra,
	\item embarrassingly parallel workloads,
	\item high throughput with large data blocks.
\end{itemize}
But they often struggle with:
\begin{itemize}
	\item tight per-cycle feedback decisions (kernel launch overhead, queueing),
	\item low-latency I/O and deterministic response time,
	\item bit-level streaming logic and protocol handling.
\end{itemize}

\begin{rem}[A useful rule of thumb]
	If the task is ``do a huge matrix multiply'' $\Rightarrow$ GPU.
	If the task is ``for every cycle, decide in a few hundred nanoseconds with bounded jitter'' $\Rightarrow$ FPGA.
	Many quantum pipelines require \emph{both}, but in different places.
\end{rem}

\subsection{Key roles of FPGA in quantum pipelines}

We list the FPGA roles in a way that matches quantum workflows in later chapters.

\subsubsection*{Role 1: Real-time QEC decoding and correction triggers}

Syndromes are produced repeatedly. A decoder maps syndrome history $\to$ correction actions.
Even if you do not implement the full decoder, you often need:
\begin{itemize}
	\item fast parity check accumulation,
	\item windowed filtering and reliability scoring,
	\item threshold triggers and event flags,
	\item routing of correction instructions to the control sequencer.
\end{itemize}

\subsubsection*{Role 2: Fast classical feed-forward (mid-circuit measurement)}

Adaptive circuits require:
\[
\text{measure} \Rightarrow \text{classical predicate} \Rightarrow \text{conditional gate}.
\]
This is naturally represented as a small state machine with deterministic timing.

\begin{figure}[t]
	\centering
	\begin{tikzpicture}[x=1.25cm,y=0.95cm, line cap=round, line join=round]
		\node[draw, rounded corners, minimum width=2.5cm, minimum height=0.9cm] (meas) at (0,0) {\small readout bit $m$};
		\node[draw, rounded corners, minimum width=3.2cm, minimum height=0.9cm] (logic) at (3,0) {\small predicate / decode};
		\node[draw, rounded corners, minimum width=3.0cm, minimum height=0.9cm] (act) at (6.3,0) {\small conditional action};
		
		\draw[->] (meas.east) -- (logic.west) node[midway,above] {\small stream};
		\draw[->] (logic.east) -- (act.west) node[midway,above] {\small decision};
		
		\node at (3,-1.2) {\small deterministic low-jitter loop is the design goal};
	\end{tikzpicture}
	\caption{The basic feedback-control motif: readout $\to$ decision logic $\to$ conditional action.}
	\label{fig:feedback-loop}
\end{figure}
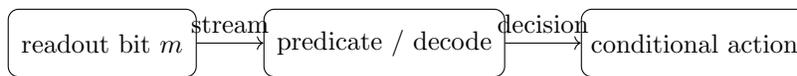

\subsubsection*{Role 3: Readout DSP and data reduction}

Before you even get bits, you often start with analog readout samples.
FPGA DSP blocks can do:
\begin{itemize}
	\item digital downconversion / filtering,
	\item integration windows,
	\item thresholding and classification,
	\item shot-by-shot aggregation (counts, moments).
\end{itemize}
This reduces the data-rate pressure on the CPU/GPU and keeps latency low.

\subsubsection*{Role 4: Protocol glue and time orchestration}

Quantum control stacks involve multiple devices and timing domains.
FPGA is commonly used for:
\begin{itemize}
	\item deterministic triggering and sequencing,
	\item I/O protocol bridging (custom links, timestamped packets),
	\item buffering, formatting, and versioned message schemas.
\end{itemize}

\subsubsection*{Role 5: ``Hybrid'' acceleration for geometry/optimization pipelines}

Even when training/optimization happens on GPUs/CPUs, FPGA can contribute:
\begin{itemize}
	\item fast statistics reduction (shots $\to$ expectations $\to$ gradients),
	\item low-latency preconditioning approximations (limited precision, structured matrices),
	\item streaming evaluation loops for variational circuits when parameters change rapidly.
\end{itemize}

\subsection{Connection to earlier chapters}

This chapter connects directly to the earlier themes:

\begin{itemize}
	\item \textbf{Bloch sphere and circuits:} gates are rotations / linear maps, but experiments are repeated-shot pipelines
	with classical post-processing. FPGA helps keep the pipeline real-time and deterministic.
	\item \textbf{Interference and measurement:} measurement collapses quantum information into bits;
	the rest is classical computation---often on the critical path.
	\item \textbf{Circuit identities and compilation:} simplification reduces depth and two-qubit gates;
	that reduces the burden on both the QPU and the classical control system (fewer events, tighter schedules).
\end{itemize}

\subsection{Exercises}

\begin{exercise}[Gate-level parity]
	Build a gate-level circuit (AND/OR/NOT or XOR gates) that computes
	\[
	s=b_1\oplus b_2\oplus b_3\oplus b_4
	\]
	and estimate its combinational depth if implemented as a balanced XOR tree.
\end{exercise}

\noindent\textbf{Solution.}
A balanced XOR tree is shown in Fig.~\ref{fig:parity-tree}.
Depth is 2 XOR levels (first XOR pairs, then XOR the results). If each XOR is built from basic gates,
the depth becomes ``(depth of XOR implementation)$\times 2$''.

\begin{exercise}[Registers and streaming]
	Suppose a stream of readout bits $m_t\in\{0,1\}$ arrives each cycle.
	Design a small circuit that outputs the running parity
	\[
	p_t = m_1\oplus m_2\oplus \cdots \oplus m_t.
	\]
	Draw the sequential circuit (use one DFF and one XOR) and write the update equation for the register.
\end{exercise}

\noindent\textbf{Solution.}
Let the register hold the current parity $p$.
Each cycle, update $p \leftarrow p \oplus m_t$.
Circuit: XOR the incoming $m_t$ with the register output $p$ and feed back into the DFF input.
Formally:
\[
p_{t+1}=p_t\oplus m_{t+1},\qquad p_0=0.
\]

\begin{figure}[t]
	\centering
	\begin{tikzpicture}[x=1.2cm,y=1.0cm, line cap=round, line join=round]
		\node at (0,1) {$m_t$};
		\draw (0.2,1) -- (1.1,1);
		
		\node[draw, minimum width=0.9cm, minimum height=0.55cm] (xor) at (2.0,1) {\small XOR};
		\draw (1.1,1) -- (xor.west);
		
		\node[draw, minimum width=1.1cm, minimum height=1.0cm] (dff) at (4.0,1) {\small DFF};
		\draw (xor.east) -- (dff.west);
		
		\draw (dff.east) -- (5.4,1);
		\node at (5.9,1) {$p_t$};
		
		\draw (5.0,1) -- (5.0,0.2) -- (1.6,0.2) -- (1.6,0.8) -- (xor.west);
		
		\draw (3.3,0.1) -- (4.0,0.1) -- (4.0,0.5) -- (dff.south);
		\node at (3.0,0.1) {\small clk};
		
		\node at (3.2,-0.6) {\small $p_{t+1} = p_t \oplus m_{t+1}$};
	\end{tikzpicture}
	\caption{A 1-register streaming parity accumulator (toy model of running syndrome aggregation).}
	\label{fig:streaming-parity}
\end{figure}
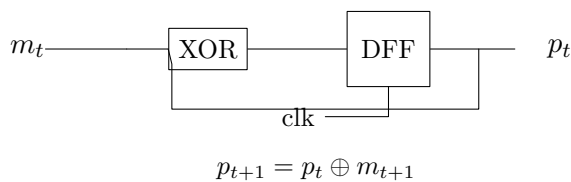

\begin{exercise}[Latency budgeting thought experiment]
	In the cycle model
	\[
	T_{\mathrm{cycle}}=T_{\mathrm{meas}}+T_{\mathrm{readout}}+T_{\mathrm{classical}}+T_{\mathrm{apply}},
	\]
	explain which term(s) an FPGA most directly improves, and why (two to five sentences).
\end{exercise}

\noindent\textbf{Solution.}
An FPGA most directly reduces $T_{\mathrm{classical}}$ and part of $T_{\mathrm{readout}}$ by performing
thresholding / filtering / decision logic in hardware at line rate with fixed latency.
It can also reduce $T_{\mathrm{apply}}$ by integrating tightly with pulse sequencers and I/O triggers,
avoiding OS and software scheduling overhead.
The physical measurement window $T_{\mathrm{meas}}$ is set by device physics and usually not reduced by FPGA.

\begin{exercise}[CPU vs.\ GPU vs.\ FPGA classification]
	For each task, choose the best default accelerator (CPU/GPU/FPGA) and justify briefly:
	\begin{enumerate}
		\item large-batch simulation of $10^6$ parameter settings of a variational circuit,
		\item deterministic per-cycle processing of a $1$-bit readout stream with a hard deadline,
		\item parsing and logging experiment metadata to disk,
		\item real-time predicate: ``if $m_t=1$ then trigger pulse A else pulse B'' within a tight budget.
	\end{enumerate}
\end{exercise}

\noindent\textbf{Solution.}
\begin{enumerate}
	\item GPU: throughput-dominated batch computation.
	\item FPGA: tiny per-cycle logic with hard latency/jitter constraints.
	\item CPU: software + file system interaction and irregular control flow.
	\item FPGA: fixed-latency branching logic close to the hardware.
\end{enumerate}


\section{FPGA II: Circuits as Dataflow Graphs (A Hardware View)}
\label{sec:fpga2}

\subsection{Objective}

This chapter reinterprets quantum circuits using a hardware lens:
\begin{center}
	\emph{a circuit diagram is not only an algebraic expression, but also a \textbf{dataflow graph}
		that suggests a staged execution plan.}
\end{center}
We connect:
\[
\text{operator composition} \ (U=U_k\cdots U_2U_1)
\quad \longleftrightarrow \quad
\text{pipeline stages and signal flow},
\]
and show how tensor products, parallel gates, and locality constraints become wiring rules.
The goal is to make the step from ``circuit on paper'' to ``implementation plan'' feel natural,
\emph{before} you write RTL.

\subsection{From operator composition to staged execution}

\subsubsection*{The algebraic view (what you already know)}

A quantum circuit on $n$ qubits is a product of operators:
\[
U = U_L U_{L-1}\cdots U_2 U_1,
\qquad
\ket{\psi_{\mathrm{out}}}=U\ket{\psi_{\mathrm{in}}}.
\]
In simulation (or linear-algebra thinking), the state vector is updated by repeated matrix-vector multiplication.
That is logically correct, but it hides the implementation question:
\begin{quote}
	\emph{Which pieces can run in parallel? What is the critical path? Where do we store intermediate data?
		Which steps must wait for measurement outcomes?}
\end{quote}

\subsubsection*{The staged execution view (what hardware sees)}

Hardware tends to execute in \emph{stages}:
\begin{itemize}
	\item each stage consumes inputs available at that stage,
	\item performs a fixed set of operations,
	\item produces outputs registered at a clock boundary.
\end{itemize}
So we group a circuit into \emph{layers} (a.k.a.\ moments):
\[
U = \underbrace{(\text{Layer }r)}_{\text{parallel}} \cdots \underbrace{(\text{Layer }2)}_{\text{parallel}}\underbrace{(\text{Layer }1)}_{\text{parallel}}.
\]
Inside a layer, gates that act on disjoint qubits commute and can be parallelized physically (subject to routing and control constraints).

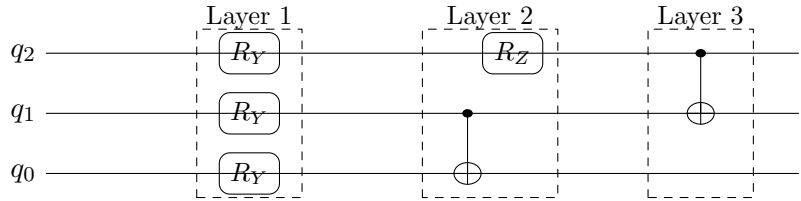
\begin{figure}[t]
	\centering
	\begin{tikzpicture}[x=1.0cm,y=0.8cm, line cap=round, line join=round]
		\draw (0,2) -- (10,2);
		\draw (0,1) -- (10,1);
		\draw (0,0) -- (10,0);
		\node[left] at (0,2) {$q_2$};
		\node[left] at (0,1) {$q_1$};
		\node[left] at (0,0) {$q_0$};
		
		\draw[dashed] (2.0,-0.4) rectangle (3.4,2.4);
		\node at (2.7,2.6) {\small Layer 1};
		
		\draw[dashed] (5.0,-0.4) rectangle (6.8,2.4);
		\node at (5.9,2.6) {\small Layer 2};
		
		\draw[dashed] (8.0,-0.4) rectangle (9.4,2.4);
		\node at (8.7,2.6) {\small Layer 3};
		
		\node[draw, rounded corners, minimum width=0.8cm, minimum height=0.5cm] at (2.7,2) {\small $R_Y$};
		\node[draw, rounded corners, minimum width=0.8cm, minimum height=0.5cm] at (2.7,1) {\small $R_Y$};
		\node[draw, rounded corners, minimum width=0.8cm, minimum height=0.5cm] at (2.7,0) {\small $R_Y$};
		
		\fill (5.6,1) circle (0.07);
		\draw (5.6,1) -- (5.6,0);
		\draw (5.6,0) circle (0.18);
		\draw (5.42,0) -- (5.78,0);
		\draw (5.6,-0.18) -- (5.6,0.18);
		
		\node[draw, rounded corners, minimum width=0.8cm, minimum height=0.5cm] at (6.2,2) {\small $R_Z$};
		
		\fill (8.7,2) circle (0.07);
		\draw (8.7,2) -- (8.7,1);
		\draw (8.7,1) circle (0.18);
		\draw (8.52,1) -- (8.88,1);
		\draw (8.7,0.82) -- (8.7,1.18);
		
	\end{tikzpicture}
	\caption{Layering a circuit: each dashed box is a parallelizable stage. Hardware cares about stage boundaries and critical path.}
	\label{fig:circuit-layering}
\end{figure}

\begin{rem}[Layering is not unique]
	Different compilers choose different layerings:
	\begin{itemize}
		\item commuting gates may move across each other,
		\item a hardware scheduler may insert idle cycles to satisfy constraints (crosstalk, routing, calibration),
		\item measurement/feedback introduces hard barriers.
	\end{itemize}
	The important point is that the circuit naturally suggests a directed acyclic graph (DAG) of dependencies.
\end{rem}

\subsection{Dataflow graphs: computation as signal flow}

\subsubsection*{A dataflow graph (DAG) viewpoint}

A dataflow graph is a directed graph where:
\begin{itemize}
	\item nodes are operations (gates, measurement, classical post-processing),
	\item edges carry values (qubit state dependencies, measurement bits, parameters),
	\item an edge $u\to v$ means ``$v$ cannot run until $u$ produces its output.''
\end{itemize}

For a circuit with no mid-circuit measurement, the dependency graph is essentially:
\[
\text{gate on qubit }i \text{ depends on the previous gate(s) that touched qubit } i.
\]
Two gates on disjoint qubits have no dependency and can be parallelized.

\subsubsection*{Hardware meaning: schedule and resources}

Given a dataflow graph, hardware planning asks:
\begin{itemize}
	\item What is the \textbf{critical path length}? (minimum time)
	\item What is the \textbf{parallelism} at each stage? (resource usage)
	\item Where are the \textbf{barriers}? (measurement, synchronization, I/O)
\end{itemize}

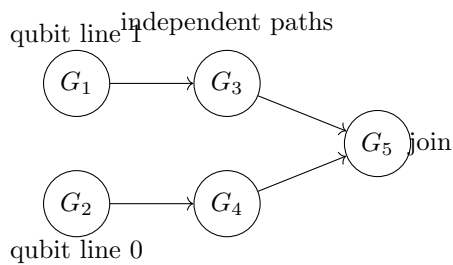
\begin{figure}[t]
	\centering
	\begin{tikzpicture}[x=1.0cm,y=0.8cm, line cap=round, line join=round]
		\node[draw, circle, minimum size=0.7cm] (g1) at (0,2) {\small $G_1$};
		\node[draw, circle, minimum size=0.7cm] (g2) at (0,0) {\small $G_2$};
		\node[draw, circle, minimum size=0.7cm] (g3) at (2,2) {\small $G_3$};
		\node[draw, circle, minimum size=0.7cm] (g4) at (2,0) {\small $G_4$};
		\node[draw, circle, minimum size=0.7cm] (g5) at (4,1) {\small $G_5$};
		
		\draw[->] (g1) -- (g3);
		\draw[->] (g2) -- (g4);
		\draw[->] (g3) -- (g5);
		\draw[->] (g4) -- (g5);
		
		\node at (0,2.8) {\small qubit line 1};
		\node at (0,-0.8) {\small qubit line 0};
		
		\node at (2,3.0) {\small independent paths};
		\node at (4.7,1.0) {\small join};
		
	\end{tikzpicture}
	\caption{A simple dataflow DAG: $G_1\!\to\!G_3$ and $G_2\!\to\!G_4$ can run in parallel, then must join at $G_5$. Critical path is the longest chain.}
	\label{fig:dataflow-dag}
\end{figure}

\subsubsection*{Adding measurement creates a mixed quantum--classical dataflow}

Mid-circuit measurement introduces classical edges:
\[
\text{(measure)} \rightarrow \text{(classical function)} \rightarrow \text{(conditional gate)}.
\]
This turns the circuit into a hybrid dataflow graph.

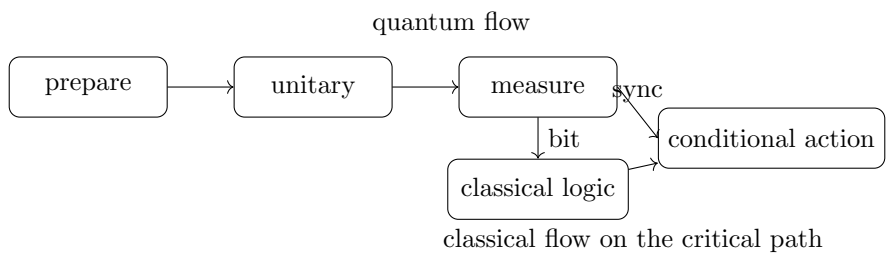
\begin{figure}[t]
	\centering
	\begin{tikzpicture}[x=1.15cm,y=0.85cm, line cap=round, line join=round]
		\node[draw, rounded corners, minimum width=2.1cm, minimum height=0.8cm] (prep) at (0,1.2) {\small prepare};
		\node[draw, rounded corners, minimum width=2.1cm, minimum height=0.8cm] (unit) at (2.6,1.2) {\small unitary};
		\node[draw, rounded corners, minimum width=2.1cm, minimum height=0.8cm] (meas) at (5.2,1.2) {\small measure};
		
		\node[draw, rounded corners, minimum width=2.4cm, minimum height=0.8cm] (cl) at (5.2,-0.4) {\small classical logic};
		\node[draw, rounded corners, minimum width=2.4cm, minimum height=0.8cm] (cond) at (7.9,0.4) {\small conditional action};
		
		\draw[->] (prep) -- (unit);
		\draw[->] (unit) -- (meas);
		\draw[->] (meas) -- (cl) node[midway,right] {\small bit};
		\draw[->] (cl) -- (cond);
		\draw[->] (meas.east) -- (cond.west) node[midway,above] {\small sync};
		
		\node at (4.2,2.2) {\small quantum flow};
		\node at (6.3,-1.2) {\small classical flow on the critical path};
		
	\end{tikzpicture}
	\caption{Hybrid dataflow: measurement outputs classical bits that feed a decision block, which triggers a conditional action.}
	\label{fig:hybrid-dataflow}
\end{figure}

\subsection{Tensor products as wiring rules}

\subsubsection*{The algebraic rule}

For disjoint subsystems, gates combine by tensor product:
\[
U_{\mathrm{total}} = U_A \otimes U_B.
\]
On basis states:
\[
(A\otimes B)(\ket v \otimes \ket w) = (A\ket v)\otimes (B\ket w).
\]
In circuit diagrams, this is ``parallel gates on different wires.''

\subsubsection*{The wiring rule (hardware view)}

Tensor product corresponds to:
\begin{itemize}
	\item \textbf{separate wires}, separate data paths,
	\item parallel operations, no cross-coupling,
	\item direct product state spaces in the simulator,
	\item independent control signals in the hardware schedule (until a two-qubit gate occurs).
\end{itemize}

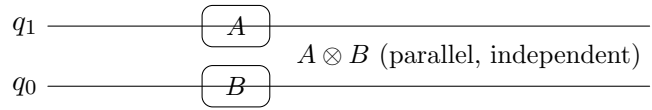
\begin{figure}[t]
	\centering
	\begin{tikzpicture}[x=1.0cm,y=0.8cm, line cap=round, line join=round]
		\draw (0,1) -- (8,1);
		\draw (0,0) -- (8,0);
		\node[left] at (0,1) {$q_1$};
		\node[left] at (0,0) {$q_0$};
		
		\node[draw, rounded corners, minimum width=0.9cm, minimum height=0.55cm] at (2.5,1) {\small $A$};
		\node[draw, rounded corners, minimum width=0.9cm, minimum height=0.55cm] at (2.5,0) {\small $B$};
		
		\node at (5.6,0.5) {\small $A\otimes B$ (parallel, independent)};
	\end{tikzpicture}
	\caption{Parallel gates on distinct wires represent tensor products. Hardware sees two independent datapaths.}
	\label{fig:tensor-as-wiring}
\end{figure}

\subsubsection*{Two-qubit gates are the ``wiring join''}

A two-qubit gate (e.g.\ CNOT, CZ) creates coupling and hence a dependency edge across wires.
That is where:
\begin{itemize}
	\item scheduling constraints appear,
	\item physical connectivity matters (nearest-neighbor couplers),
	\item control and calibration complexity increases,
	\item noise sensitivity typically increases.
\end{itemize}

\subsection{Parallelism and locality: what hardware cares about}

\subsubsection*{Parallelism is constrained by locality and shared resources}

Even if two gates are algebraically independent, hardware may limit concurrency due to:
\begin{itemize}
	\item limited number of control channels,
	\item cross-talk constraints (cannot drive neighboring qubits simultaneously),
	\item shared readout resonators / multiplexed measurement chains,
	\item routing constraints (two-qubit gates only on connected pairs).
\end{itemize}

\subsubsection*{Locality as a graph constraint}

Model the device by a connectivity graph $G=(V,E)$ where vertices are qubits and edges allow two-qubit gates.
Then any two-qubit gate must be on an edge $(i,j)\in E$.
If a circuit wants a gate between distant qubits, compilation inserts SWAP networks, increasing depth.

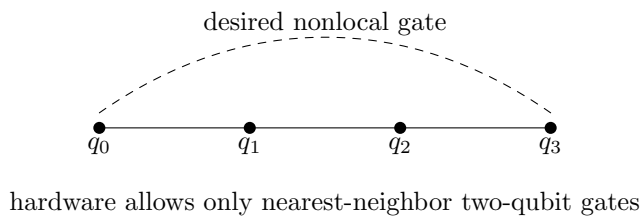
\begin{figure}[t]
	\centering
	\begin{tikzpicture}[x=1.0cm,y=1.0cm, line cap=round, line join=round]
		\foreach \x/\lab in {0/$q_0$,2/$q_1$,4/$q_2$,6/$q_3$}{
			\fill (\x,0) circle (0.08);
			\node[below] at (\x,0) {\small \lab};
		}
		\draw (0,0) -- (2,0) -- (4,0) -- (6,0);
		\draw[dashed] (0,0.2) to[bend left=35] (6,0.2);
		\node[above] at (3,1.1) {\small desired nonlocal gate};
		\node at (3,-1.0) {\small hardware allows only nearest-neighbor two-qubit gates};
	\end{tikzpicture}
	\caption{Connectivity constraint: two-qubit gates must follow the device graph. Nonlocal interactions require SWAP compilation, increasing depth.}
	\label{fig:connectivity}
\end{figure}

\subsubsection*{Hardware implication: depth is not just ``number of gates''}

Depth increases due to:
\begin{itemize}
	\item locality routing overhead (SWAPs),
	\item measurement barriers (must wait),
	\item classical feedback edges (decision time),
	\item scheduling constraints (parallelism limits).
\end{itemize}
Therefore, in hardware-oriented design, we track:
\[
\text{critical path depth} \quad + \quad \text{barrier count} \quad + \quad \text{I/O and decision latency}.
\]

\subsection{Worked example: 2-qubit HEA as a pipeline}

We build a minimal 2-qubit hardware-efficient ansatz (HEA) layer:
\[
U(\vec\theta) =
\bigl(R_Y^{(0)}(\theta_0)\,R_Z^{(0)}(\theta_1)\bigr)
\otimes
\bigl(R_Y^{(1)}(\theta_2)\,R_Z^{(1)}(\theta_3)\bigr)
\ \cdot\
\mathrm{CNOT}_{0\to 1}.
\]
(You may choose a different ordering; the point here is to show the pipeline and dependencies.)

\subsubsection*{Circuit diagram (algebra $\to$ graph)}

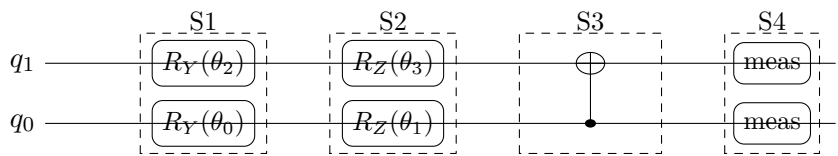
\begin{figure}[t]
	\centering
	\begin{tikzpicture}[x=1.05cm,y=0.8cm, line cap=round, line join=round]
		\draw (0,1) -- (10,1);
		\draw (0,0) -- (10,0);
		\node[left] at (0,1) {$q_1$};
		\node[left] at (0,0) {$q_0$};
		
		\draw[dashed] (1.2,-0.5) rectangle (2.8,1.5);
		\node at (2.0,1.7) {\small S1};
		\node[draw, rounded corners, minimum width=0.9cm, minimum height=0.55cm] at (2.0,1) {\small $R_Y(\theta_2)$};
		\node[draw, rounded corners, minimum width=0.9cm, minimum height=0.55cm] at (2.0,0) {\small $R_Y(\theta_0)$};
		
		\draw[dashed] (3.6,-0.5) rectangle (5.2,1.5);
		\node at (4.4,1.7) {\small S2};
		\node[draw, rounded corners, minimum width=0.9cm, minimum height=0.55cm] at (4.4,1) {\small $R_Z(\theta_3)$};
		\node[draw, rounded corners, minimum width=0.9cm, minimum height=0.55cm] at (4.4,0) {\small $R_Z(\theta_1)$};
		
		\draw[dashed] (6.0,-0.5) rectangle (7.8,1.5);
		\node at (6.9,1.7) {\small S3};
		
		\fill (6.9,0) circle (0.07);
		\draw (6.9,0) -- (6.9,1);
		\draw (6.9,1) circle (0.18);
		\draw (6.72,1) -- (7.08,1);
		\draw (6.9,0.82) -- (6.9,1.18);
		
		\draw[dashed] (8.6,-0.5) rectangle (9.8,1.5);
		\node at (9.2,1.7) {\small S4};
		\node[draw, rounded corners, minimum width=0.9cm, minimum height=0.55cm] at (9.2,1) {\small meas};
		\node[draw, rounded corners, minimum width=0.9cm, minimum height=0.55cm] at (9.2,0) {\small meas};
		
	\end{tikzpicture}
	\caption{A 2-qubit HEA layer as staged execution: (S1,S2) are parallel single-qubit rotations; (S3) is the entangling barrier; (S4) readout.}
	\label{fig:hea2q-pipeline}
\end{figure}

\subsubsection*{Dataflow interpretation}

In S1 and S2, the two wires are independent (tensor-product structure):
\[
\text{S1/S2: two parallel datapaths (control pulses can be scheduled concurrently if hardware permits).}
\]
At S3 (CNOT), the datapaths must join:
\[
\text{S3: cross-wire coupling (requires connectivity, more calibration, often dominates error).}
\]
At S4, measurement turns quantum state into classical bits:
\[
\text{S4: classical pipeline begins (aggregation, decoding, feedback).}
\]

\subsubsection*{Pipeline skeleton (hardware planner view)}

If you were implementing a repeated-shot experiment, the pipeline resembles:

\begin{figure}[t]
	\centering
	\begin{tikzpicture}[x=1.1cm,y=0.9cm, line cap=round, line join=round]
		\node[draw, rounded corners, minimum width=2.4cm, minimum height=0.8cm] (load) at (0,0) {\small load $\vec\theta$};
		\node[draw, rounded corners, minimum width=2.4cm, minimum height=0.8cm] (s12) at (2.8,0) {\small S1+S2 pulses};
		\node[draw, rounded corners, minimum width=2.4cm, minimum height=0.8cm] (ent) at (5.6,0) {\small S3 entangle};
		\node[draw, rounded corners, minimum width=2.4cm, minimum height=0.8cm] (meas) at (8.4,0) {\small S4 readout};
		
		\node[draw, rounded corners, minimum width=2.8cm, minimum height=0.8cm] (proc) at (11.7,0) {\small classical reduce};
		
		\draw[->] (load) -- (s12);
		\draw[->] (s12) -- (ent);
		\draw[->] (ent) -- (meas);
		\draw[->] (meas) -- (proc);
		
		\node at (5.8,-1.2) {\small repeat for many shots; FPGA aims to keep both timing and throughput deterministic};
		
	\end{tikzpicture}
	\caption{Repeated-shot pipeline skeleton for a 2-qubit HEA experiment: parameter load, pulse schedule, entangling barrier, readout, classical reduction.}
	\label{fig:hea2q-shot-pipeline}
\end{figure}

Hardware design questions now become explicit:
\begin{itemize}
	\item Can S1 and S2 be fully parallelized or must they be serialized by control constraints?
	\item What is the maximum rate at which $\vec\theta$ can change (parameter streaming bandwidth)?
	\item What is the deterministic budget for ``classical reduce'' if you want feedback between shots?
\end{itemize}

\subsection{Why this matters before RTL}

RTL is where you commit to:
\begin{itemize}
	\item clock domains, registers, and buffering,
	\item message schemas and I/O protocols,
	\item exact fixed-point formats for arithmetic,
	\item finite-state machines (FSMs) for sequencing and feedback,
	\item worst-case latency analysis and resource accounting.
\end{itemize}
If you do not first express the quantum workflow as a dataflow graph,
RTL becomes guesswork: you risk building a fast circuit that accelerates the wrong part of the pipeline.

A good pre-RTL discipline is:
\begin{enumerate}
	\item draw the dataflow DAG (quantum + classical edges),
	\item mark barriers (measurement, synchronization),
	\item define the message interfaces (inputs/outputs at each barrier),
	\item allocate deterministic latency budgets per stage,
	\item only then implement the stages as RTL modules.
\end{enumerate}

\subsection{Exercises}

\begin{exercise}[Layering and critical path]
	Given a circuit with gates
	\[
	U = (A^{(0)}\otimes B^{(1)})\ \cdot\ \mathrm{CNOT}_{0\to1}\ \cdot\ (C^{(0)}\otimes D^{(1)}),
	\]
	draw a layered circuit diagram and give the dependency DAG.
	Which stage(s) can be parallelized?
\end{exercise}

\noindent\textbf{Solution.}
$(A^{(0)}\otimes B^{(1)})$ is parallel single-qubit work (one layer).
$\mathrm{CNOT}_{0\to1}$ is a coupling barrier (next layer).
$(C^{(0)}\otimes D^{(1)})$ is again parallel single-qubit work (final layer).
DAG: two independent nodes in layer 1 feeding the CNOT node, then feeding two independent nodes in layer 3.
Parallelism exists in layers 1 and 3, but not across the CNOT barrier.

\begin{exercise}[Tensor product wiring rule]
	Verify explicitly that
	\[
	(H\otimes X)\ket{01} = \ket{+}\ket0,
	\qquad \ket{+}=\frac{\ket0+\ket1}{\sqrt2}.
	\]
	Then interpret the computation as a wiring rule rather than a $4\times4$ matrix multiplication.
\end{exercise}

\noindent\textbf{Solution.}
Use $(A\otimes B)(\ket v\otimes\ket w)=(A\ket v)\otimes(B\ket w)$:
\[
(H\otimes X)\ket{01}=(H\ket0)\otimes(X\ket1)=\ket+\otimes\ket0.
\]
Interpretation: the top wire applies $H$ to $\ket0$ while the bottom wire applies $X$ to $\ket1$, independently and in parallel.

\begin{exercise}[Connectivity overhead]
	Assume a linear hardware graph $q_0-q_1-q_2-q_3$.
	You want a CNOT between $q_0$ and $q_3$.
	Explain why SWAP insertion is needed, and estimate the additional two-qubit gate depth required in a naive approach.
\end{exercise}

\noindent\textbf{Solution.}
Because $(0,3)$ is not an edge, the hardware cannot directly couple $q_0$ and $q_3$.
A naive method moves the state of $q_0$ next to $q_3$ using SWAPs:
swap along the chain $q_0\leftrightarrow q_1$, then $q_1\leftrightarrow q_2$, then $q_2\leftrightarrow q_3$
(3 SWAPs) to bring the logical qubit to position 3, apply CNOT, then optionally swap back (another 3 SWAPs).
Each SWAP is typically 3 CNOTs (or similar cost), so the two-qubit depth can grow substantially.
Even without swapping back, you already add multiple two-qubit layers before the desired interaction.

\begin{exercise}[Hybrid edge and latency]
	Draw a hybrid dataflow graph for:
	\[
	\ket0 \xrightarrow{H} \text{measure } Z \xrightarrow{\text{if }m=1} X \xrightarrow{} \text{measure } Z.
	\]
	Mark which edge is classical, and explain why the classical node is on the critical path for the conditional $X$.
\end{exercise}

\noindent\textbf{Solution.}
The first measurement produces a classical bit $m$.
The conditional $X$ depends on $m$, so there is an edge
\[
\text{measure }Z \ \to\ m \ \to\ \text{(apply or skip }X\text{)}.
\]
That classical computation (even if it is just ``if $m=1$'') must complete before the second measurement stage.
Hence its latency contributes directly to the end-to-end circuit time.

\begin{exercise}[Design question: what to accelerate]
	You have a variational experiment where each shot returns a bitstring of length $n=50$.
	You only need the Hamming weight (number of 1s) and a few parities.
	Explain why computing these features on FPGA can reduce bandwidth and improve determinism.
\end{exercise}

\noindent\textbf{Solution.}
If you compute Hamming weight and parities on FPGA, you transmit only a small set of features
(e.g.\ an integer weight plus a few syndrome bits) rather than the full 50-bit string per shot.
This reduces I/O bandwidth and host-side parsing overhead.
Moreover, the feature extraction can be implemented as a fixed-latency combinational/sequential circuit,
removing OS jitter and making per-shot processing time predictable, which helps tight feedback loops.


\section{Algorithm I: Grover's Algorithm as a Geodesic Rotation}
\label{sec:grover}

\subsection{Objective}
Grover's algorithm is often taught as a clever trick involving an oracle and a mysterious
``diffusion operator.''  In this chapter we treat it as a \textbf{pure geometry story}:

\begin{quote}
	\emph{Grover iteration is a rotation in a 2D plane. Each step follows a constant-angle turn,
		so the algorithm is a discrete geodesic-like motion toward the marked direction.}
\end{quote}

This viewpoint clarifies:
\begin{itemize}
	\item why the algorithm reduces to a 2-dimensional subspace,
	\item why the success probability has a closed form,
	\item why the number of iterations is $\Theta(\sqrt{N})$,
	\item how the ``rotation speed'' can be interpreted via distinguishability (QFI/QFIM).
\end{itemize}

\subsection{Visual roadmap}

\subsubsection*{1) The three pictures (same algorithm)}
Keep three mental pictures in sync:

\begin{enumerate}
	\item \textbf{Circuit picture:} repeated block $G = D\,O_f$ applied $t$ times.
	\item \textbf{2D plane picture:} a vector rotating toward the marked axis by a fixed angle $2\theta$.
	\item \textbf{Probability picture:} $\Pr(\text{success after }t)=\sin^2((2t+1)\theta)$.
\end{enumerate}

\begin{figure}[t]
	\centering
	\begin{tikzpicture}[scale=1.05, line cap=round, line join=round]
		\node[draw, rounded corners, minimum width=3.0cm, minimum height=0.9cm] (O) at (-5.0,0.8) {$O_f$};
		\node[draw, rounded corners, minimum width=3.0cm, minimum height=0.9cm] (D) at (-1.5,0.8) {$D$};
		\node[draw, rounded corners, minimum width=3.0cm, minimum height=0.9cm] (G) at (2.0,0.8) {$G=D\,O_f$};
		\draw[->, thick] (-7.0,0.8) -- (O);
		\draw[->, thick] (O) -- (D);
		\draw[->, thick] (D) -- (G);
		\draw[->, thick] (G) -- (4.0,0.8);
		\node at (5.0,0.8) {\small repeat $t$ times};
		
		\draw[->] (-3.2,-1.0) -- (-3.2,-3.2) node[below] {\small $\ket{w_\perp}$};
		\draw[->] (-3.2,-1.0) -- (-1.0,-1.0) node[right] {\small $\ket{w}$};
		\draw[thick] (-3.2,-1.0) -- (-2.2,-2.4);
		\node at (-2.0,-2.6) {\small $\ket{\psi_0}$};
		
		\draw[->, thick] (-2.55,-1.75) arc (-120:-70:1.0);
		\node at (-2.1,-1.5) {\small $2\theta$};
		
		\draw[thick] (-3.2,-1.0) -- (-1.5,-1.6);
		\node at (-1.3,-1.6) {\small $\ket{\psi_t}$};
		
		\draw[->] (1.2,-3.0) -- (6.2,-3.0) node[right] {\small $t$};
		\draw[->] (1.2,-3.0) -- (1.2,-1.0) node[above] {\small $p_{\text{succ}}$};
		\draw[thick, domain=0:4.6, samples=80]
		plot (\x+1.2, {-3.0 + 1.7*(sin(35*\x))^2});
		\node at (3.8,-1.2) {\small $\sin^2((2t+1)\theta)$};
		
		\node[align=center] at (0.0,-4.0)
		{\small Circuit block $\leftrightarrow$ 2D rotation $\leftrightarrow$ closed-form success probability.};
	\end{tikzpicture}
	\caption{Grover in one page: repeated oracle+diffusion equals a fixed-angle rotation in a 2D plane, yielding a sinusoidal success probability.}
	\label{fig:grover-roadmap}
\end{figure}
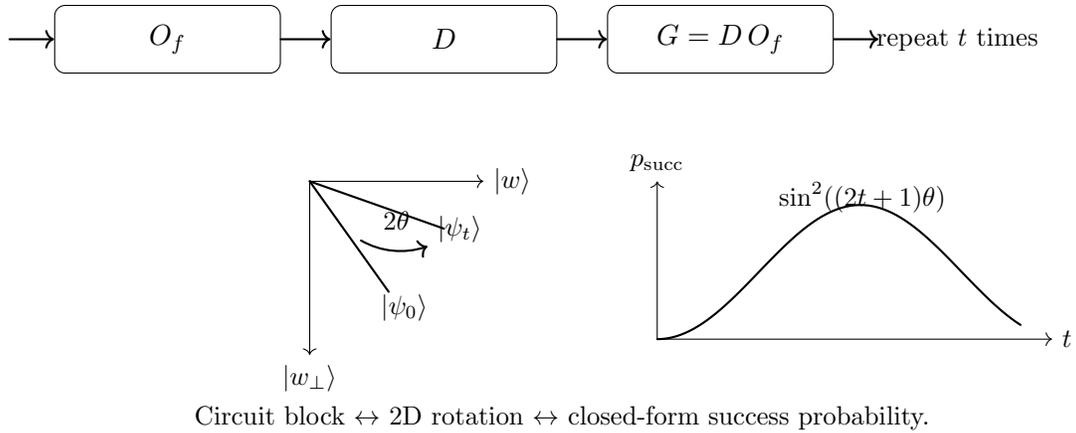

\subsection{Unstructured search and the phase oracle}

\subsubsection*{1) Problem statement}
Let $N=2^n$ and let $f:\{0,1\}^n\to\{0,1\}$ mark solutions:
\[
f(x)=1 \iff x \in \mathcal{M},
\]
where $\mathcal{M}$ is the set of marked items.
The simplest case is $|\mathcal{M}|=1$ (one solution), but we will allow $|\mathcal{M}|=M$.

\subsubsection*{2) Phase oracle}
Grover uses a \emph{phase oracle}:
\[
O_f\ket{x} = (-1)^{f(x)}\ket{x}.
\]
So marked basis states pick up a minus sign; unmarked states do nothing.

\begin{rem}[Why phase matters]
	A phase flip is invisible if you measure immediately in the computational basis.
	Grover works by converting phase information into amplitude bias via interference.
\end{rem}

\subsubsection*{3) The uniform starting state}
Define
\[
\ket{s} := \frac{1}{\sqrt{N}} \sum_{x\in\{0,1\}^n}\ket{x}.
\]
This can be prepared by applying $H^{\otimes n}$ to $\ket{0^n}$.

\subsection{The diffusion operator (inversion about the average)}

\subsubsection*{1) Definition}
The diffusion operator is
\[
D := 2\ket{s}\bra{s} - I.
\]
It is a reflection about the vector $\ket{s}$.

\begin{prop}[Reflection property]
	$D$ is unitary, Hermitian, and satisfies $D^2=I$.
	Moreover, for any vector $\ket{v}$, $D$ reflects $\ket{v}$ across the axis spanned by $\ket{s}$.
\end{prop}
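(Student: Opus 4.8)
The statement bundles three claims about $D = 2\ket{s}\bra{s} - I$: unitarity, Hermiticity, $D^2 = I$, and the geometric reflection interpretation. The plan is to dispatch the algebraic facts first by direct computation, then extract the reflection picture from the $D^2 = I$ identity together with the eigenvalue structure. First I would verify Hermiticity: since $\ket{s}\bra{s}$ is a rank-one projector onto a unit vector, it is Hermitian (indeed $\Pi_s^\dagger = (\ket{s}\bra{s})^\dagger = \ket{s}\bra{s}$, as established in the projector discussion of \S\ref{sec:la-crash}), and $I$ is Hermitian, so $D^\dagger = 2\Pi_s^\dagger - I^\dagger = 2\Pi_s - I = D$.

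Next I would compute $D^2$ directly, using the key fact that $\Pi_s$ is idempotent: $\Pi_s^2 = \ket{s}\braket{s}{s}\bra{s} = \ket{s}\bra{s} = \Pi_s$ since $\|s\| = 1$. Then
\[
D^2 = (2\Pi_s - I)^2 = 4\Pi_s^2 - 4\Pi_s + I = 4\Pi_s - 4\Pi_s + I = I.
\]
Unitarity then follows immediately: $D^\dagger D = D D = D^2 = I$ (using Hermiticity), so $D$ is unitary. This is entirely routine; the only ``content'' is remembering $\braket{s}{s} = 1$.

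For the reflection property, I would decompose an arbitrary $\ket{v}$ into its component along $\ket{s}$ and its orthogonal complement: write $\ket{v} = \ket{v_\parallel} + \ket{v_\perp}$ where $\ket{v_\parallel} = \braket{s}{v}\ket{s} = \Pi_s \ket{v}$ and $\ket{v_\perp} = (I - \Pi_s)\ket{v}$, so that $\braket{s}{v_\perp} = 0$. Applying $D$ and using $\Pi_s \ket{v_\parallel} = \ket{v_\parallel}$ and $\Pi_s \ket{v_\perp} = 0$:
\[
D\ket{v} = (2\Pi_s - I)(\ket{v_\parallel} + \ket{v_\perp}) = 2\ket{v_\parallel} - \ket{v_\parallel} - \ket{v_\perp} = \ket{v_\parallel} - \ket{v_\perp}.
\]
This is exactly the reflection that fixes the $\ket{s}$-axis and negates everything orthogonal to it. Equivalently, $D$ has eigenvalue $+1$ on $\ket{s}$ and eigenvalue $-1$ on the hyperplane $\ket{s}^\perp$, which characterizes it as the Householder-type reflection about $\mathrm{span}\{\ket{s}\}$.

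\textbf{Main obstacle.} Honestly there is no hard step here — every piece is a one-line manipulation resting on idempotence of a unit-vector projector. The only place to be careful is a terminological one: the name ``inversion about the average'' refers to $D$ acting in the computational basis (where $\bra{s}\ket{v}$ picks out the mean amplitude), whereas ``reflection about $\ket{s}$'' is the coordinate-free statement; if the exposition wants both interpretations reconciled, I would add a short remark expanding $\braket{s}{v} = \frac{1}{\sqrt N}\sum_x v_x$ and showing the new amplitude at $x$ is $2\bar v - v_x$ where $\bar v$ is the average amplitude. But for the proposition as literally stated, the orthogonal-decomposition argument above is complete.
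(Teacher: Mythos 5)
Your proof is correct and follows essentially the same route as the paper: Hermiticity and $D^2=I$ by direct computation using $\braket{s}{s}=1$, then unitarity from $D^\dagger D = D^2 = I$. The only difference is cosmetic — where the paper invokes the general fact that a Hermitian involution is a reflection across its $+1$ eigenspace, you make this explicit via the decomposition $\ket{v}=\ket{v_\parallel}+\ket{v_\perp}$ and the computation $D\ket{v}=\ket{v_\parallel}-\ket{v_\perp}$, which is the same idea unpacked.
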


\begin{proof}
	Hermitian: $(2\ket{s}\bra{s}-I)^\dagger=2\ket{s}\bra{s}-I$.
	Then
	\[
	D^2 = (2\ket{s}\bra{s}-I)^2
	=4\ket{s}\bra{s}\ket{s}\bra{s} -4\ket{s}\bra{s}+I
	=4\ket{s}\bra{s}-4\ket{s}\bra{s}+I=I
	\]
	since $\braket{s}{s}=1$ and $\ket{s}\bra{s}\ket{s}\bra{s}=\ket{s}\bra{s}$.
	Hence $D^{-1}=D$ and $D$ is unitary.
	A Hermitian involution is a reflection across its $+1$ eigenspace (here $\mathrm{span}\{\ket{s}\}$).
\end{proof}

\subsubsection*{2) ``Inversion about the average'' in coordinates}
Let $\ket{v}=\sum_x a_x\ket{x}$ and define the average amplitude
\[
\bar a := \frac{1}{N}\sum_x a_x.
\]
Then a direct computation gives
\[
(D\ket{v})_x = 2\bar a - a_x.
\]
So each amplitude is mapped to its reflection about the mean.

\begin{proof}[Coordinate computation]
	Since $\ket{s}=\frac1{\sqrt N}\sum_x \ket{x}$, we have
	\[
	\braket{s}{v}=\frac1{\sqrt N}\sum_x a_x = \sqrt N\,\bar a.
	\]
	Thus
	\[
	D\ket{v} = 2\ket{s}\braket{s}{v}-\ket{v}
	=2\left(\frac1{\sqrt N}\sum_x \ket{x}\right)(\sqrt N\,\bar a) - \sum_x a_x\ket{x}
	=\sum_x (2\bar a-a_x)\ket{x}.
	\]
\end{proof}

\subsection{Grover iterate = two reflections = rotation}

\subsubsection*{1) Define the Grover iterate}
The Grover iterate is
\[
G := D\,O_f.
\]

\subsubsection*{2) Geometric principle: two reflections make a rotation}
In a real 2D plane, the product of two reflections is a rotation by twice the angle between the reflecting axes.
Grover is exactly this, because:
\begin{itemize}
	\item $O_f$ is a reflection about the subspace of \emph{unmarked} vectors (it flips the marked subspace),
	\item $D$ is a reflection about $\ket{s}$.
\end{itemize}
The miracle is that the whole action collapses to a 2D subspace.

\subsection{Exact two-dimensional reduction (the plane $K$)}

\subsubsection*{1) Marked/unmarked superpositions}
Let $M=|\mathcal{M}|$.
Define normalized vectors
\[
\ket{w} := \frac{1}{\sqrt M}\sum_{x\in\mathcal{M}}\ket{x},
\qquad
\ket{w_\perp} := \frac{1}{\sqrt{N-M}}\sum_{x\notin\mathcal{M}}\ket{x}.
\]
Then $\{\ket{w},\ket{w_\perp}\}$ is an orthonormal pair spanning a 2D plane:
\[
K := \mathrm{span}\{\ket{w},\ket{w_\perp}\}.
\]

\subsubsection*{2) The starting state lies in $K$}
Write $\ket{s}$ in this basis:
\[
\ket{s}
=\frac{1}{\sqrt N}\left(\sum_{x\in\mathcal{M}}\ket{x}+\sum_{x\notin\mathcal{M}}\ket{x}\right)
=\sqrt{\frac{M}{N}}\,\ket{w}+\sqrt{\frac{N-M}{N}}\,\ket{w_\perp}.
\]
Define an angle $\theta\in(0,\pi/2)$ by
\[
\sin\theta = \sqrt{\frac{M}{N}},\qquad \cos\theta=\sqrt{\frac{N-M}{N}}.
\]
Then
\[
\ket{s}=\sin\theta\,\ket{w}+\cos\theta\,\ket{w_\perp}.
\]

\subsubsection*{3) $K$ is invariant under both $O_f$ and $D$}
First,
\[
O_f\ket{w}=-\ket{w},\qquad O_f\ket{w_\perp}=+\ket{w_\perp},
\]
so $O_f$ preserves $K$.

Next, $D=2\ket{s}\bra{s}-I$ preserves any subspace that contains $\ket{s}$ and is stable under orthogonal complement within that subspace.
Since $\ket{s}\in K$ and $K$ is 2D, one can check explicitly that $D$ maps both basis vectors into $K$:
\[
D\ket{w} = 2\ket{s}\braket{s}{w}-\ket{w}
=2(\sin\theta\,\ket{s})-\ket{w}\in K,
\]
and similarly $D\ket{w_\perp}\in K$.
Thus $G$ acts as a $2\times2$ rotation on $K$.

\subsection{Rotation angle and closed-form probabilities}

\subsubsection*{1) Matrix of $G$ in the basis $\{\ket{w},\ket{w_\perp}\}$}
We compute $G=D\,O_f$ restricted to $K$.

First note:
\[
O_f =
\begin{pmatrix}
	-1 & 0\\
	0 & 1
\end{pmatrix}
\quad\text{on }(\ket{w},\ket{w_\perp}).
\]

Next compute $D=2\ket{s}\bra{s}-I$ in this basis.
Since $\ket{s}=\sin\theta\,\ket{w}+\cos\theta\,\ket{w_\perp}$,
\[
\ket{s}\bra{s}=
\begin{pmatrix}
	\sin^2\theta & \sin\theta\cos\theta\\
	\sin\theta\cos\theta & \cos^2\theta
\end{pmatrix},
\]
so
\[
D=
2\begin{pmatrix}
	\sin^2\theta & \sin\theta\cos\theta\\
	\sin\theta\cos\theta & \cos^2\theta
\end{pmatrix}
-
\begin{pmatrix}
	1&0\\0&1
\end{pmatrix}
=
\begin{pmatrix}
	2\sin^2\theta-1 & 2\sin\theta\cos\theta\\
	2\sin\theta\cos\theta & 2\cos^2\theta-1
\end{pmatrix}.
\]
Use identities:
\[
2\sin^2\theta-1 = -\cos 2\theta,\quad
2\cos^2\theta-1 = \cos 2\theta,\quad
2\sin\theta\cos\theta=\sin 2\theta.
\]
Hence
\[
D=
\begin{pmatrix}
	-\cos 2\theta & \sin 2\theta\\
	\sin 2\theta & \cos 2\theta
\end{pmatrix}.
\]

Now multiply:
\[
G = D\,O_f
=
\begin{pmatrix}
	-\cos 2\theta & \sin 2\theta\\
	\sin 2\theta & \cos 2\theta
\end{pmatrix}
\begin{pmatrix}
	-1&0\\0&1
\end{pmatrix}
=
\begin{pmatrix}
	\cos 2\theta & \sin 2\theta\\
	-\sin 2\theta & \cos 2\theta
\end{pmatrix}.
\]
This is exactly a rotation by angle $2\theta$ in the $(\ket{w},\ket{w_\perp})$ plane.

\subsubsection*{2) Closed-form state after $t$ iterations}
Start from $\ket{\psi_0}=\ket{s}=\sin\theta\,\ket{w}+\cos\theta\,\ket{w_\perp}$.
Applying a rotation $t$ times:
\[
\ket{\psi_t} = G^t\ket{\psi_0}
= \sin((2t+1)\theta)\,\ket{w} + \cos((2t+1)\theta)\,\ket{w_\perp}.
\]

\subsubsection*{3) Success probability}
Measuring in the computational basis, success means landing in the marked set.
Since $\ket{w}$ is the equal superposition of marked basis states,
the total success probability is
\[
p_{\mathrm{succ}}(t) = \|\text{projection of }\ket{\psi_t}\text{ onto marked subspace}\|^2
= |\sin((2t+1)\theta)|^2
= \sin^2((2t+1)\theta).
\]

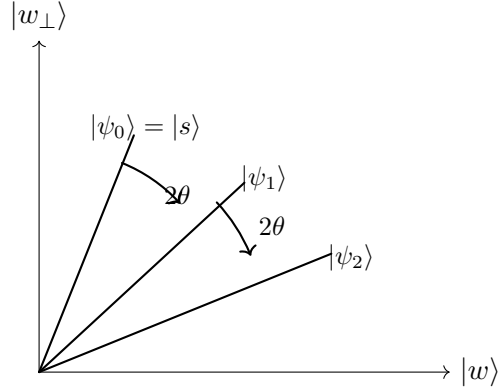
\begin{figure}[t]
	\centering
	\begin{tikzpicture}[scale=1.05, line cap=round, line join=round]
		\draw[->] (0,0) -- (5.2,0) node[right] {$\ket{w}$};
		\draw[->] (0,0) -- (0,4.2) node[above] {$\ket{w_\perp}$};
		
		\draw[thick] (0,0) -- (1.2,3.0);
		\node at (1.35,3.1) {\small $\ket{\psi_0}=\ket{s}$};
		
		\draw[thick] (0,0) -- (2.6,2.4);
		\node at (2.85,2.45) {\small $\ket{\psi_1}$};
		
		\draw[thick] (0,0) -- (3.7,1.5);
		\node at (3.95,1.45) {\small $\ket{\psi_2}$};
		
		\draw[->, thick] (1.05,2.65) arc (68:42:2.0);
		\node at (1.75,2.25) {\small $2\theta$};
		
		\draw[->, thick] (2.25,2.15) arc (43:22:2.2);
		\node at (2.95,1.85) {\small $2\theta$};
		
		\node[align=center] at (2.6,-0.9)
		{\small Each Grover step rotates by $2\theta$ toward $\ket{w}$.};
	\end{tikzpicture}
	\caption{The invariant plane $K=\mathrm{span}\{\ket{w},\ket{w_\perp}\}$. The Grover iterate is a fixed-angle rotation by $2\theta$, so the marked amplitude grows sinusoidally.}
	\label{fig:grover-rotation}
\end{figure}

\subsection{Concrete numerical example}

Take $N=16$ ($n=4$) and $M=1$ marked item.
Then
\[
\sin\theta=\sqrt{\frac{1}{16}}=\frac14,\qquad \theta=\arcsin\!\left(\frac14\right)\approx 0.25268.
\]
Success after $t$ iterations:
\[
p_{\mathrm{succ}}(t)=\sin^2((2t+1)\theta).
\]
Compute a few values:
\[
t=0:\ (2t+1)\theta=\theta \Rightarrow p\approx \sin^2(0.25268)\approx 0.0625,
\]
\[
t=1:\ 3\theta\approx 0.7580 \Rightarrow p\approx \sin^2(0.7580)\approx 0.472,
\]
\[
t=2:\ 5\theta\approx 1.2634 \Rightarrow p\approx \sin^2(1.2634)\approx 0.902,
\]
\[
t=3:\ 7\theta\approx 1.7688 \Rightarrow p\approx \sin^2(1.7688)\approx 0.961,
\]
\[
t=4:\ 9\theta\approx 2.2741 \Rightarrow p\approx \sin^2(2.2741)\approx 0.580.
\]
So the peak occurs near $t=3$ for this small instance.

\begin{rem}[Rule of thumb]
	The first peak is near $(2t+1)\theta\approx \frac{\pi}{2}$, i.e.
	\[
	t \approx \frac{\pi}{4\theta}-\frac12 \sim \frac{\pi}{4}\sqrt{\frac{N}{M}}.
	\]
\end{rem}

\subsection{Optimality: why you cannot beat $\sqrt{N}$}

\subsubsection*{1) Intuition from rotation speed}
Each Grover step rotates by $2\theta$ where $\sin\theta=\sqrt{M/N}$.
For small $M/N$,
\[
\theta \approx \sqrt{\frac{M}{N}},
\quad\Rightarrow\quad
2\theta \approx 2\sqrt{\frac{M}{N}}.
\]
To rotate from the initial angle $\theta$ to near $\pi/2$ requires $\Theta(1/\theta)$ steps:
\[
t = \Theta\!\left(\sqrt{\frac{N}{M}}\right).
\]

\subsubsection*{2) Why faster is impossible (geometric distinguishability sketch)}
The oracle $O_f$ is the only operation that depends on which item is marked.
Between oracle calls you can apply arbitrary unitaries independent of $f$.
A standard adversary/hybrid argument shows that after $T$ oracle queries,
the state cannot depend enough on $f$ (in inner product distance) to identify the marked item
unless $T=\Omega(\sqrt{N/M})$.
Geometrically: each oracle query can only change the state by a limited angle
when the marked subspace weight is small; accumulating a constant total angle needs $\sqrt{N}$ queries.

\begin{rem}[What to remember]
	Grover is optimal up to constants in the black-box (oracle) model.
	The geometry picture explains the scaling without heavy machinery.
\end{rem}

\subsection{Geodesic and QFI/QFIM viewpoint}

\subsubsection*{1) State-space geometry: projective distance}
Pure states live in projective space with the Fubini--Study distance
\[
d_{\mathrm{FS}}(\ket{\psi},\ket{\phi})
= \arccos(|\braket{\psi}{\phi}|).
\]
In the plane $K$, Grover steps move along a great-circle-like arc in the induced metric:
inner products evolve exactly as a rotation.

\subsubsection*{2) Query = bounded speed in distinguishability}
A useful mental model:
\begin{quote}
	\emph{Oracle calls are the only source of motion that depends on the hidden solution.
		They bound how fast the state can become distinguishable under the Fubini--Study metric.}
\end{quote}
This connects to QFI: for a one-parameter family $\ket{\psi(\varphi)}$,
QFI controls the rate at which nearby parameters become distinguishable.
Grover can be viewed as repeatedly applying a fixed ``generator'' that increases
the marked amplitude at a bounded rate; accumulating a constant distinguishability
costs $\Theta(\sqrt{N})$ oracle uses.

\subsubsection*{3) Multi-parameter extension (QFIM intuition)}
In amplitude amplification, you effectively steer probability mass in a low-dimensional subspace.
When the algorithm depends on multiple continuous parameters (as in variational settings),
the QFIM plays the role of a local metric controlling how parameter changes move the state.
Grover is the discrete, exactly-solvable prototype: the metric reduces to a constant 2D rotation.

\subsection{Grover beyond search: amplitude amplification}

\subsubsection*{1) General statement}
Suppose you have a subroutine $A$ such that
\[
A\ket{0^n}=\sqrt{p}\,\ket{\mathrm{good}}+\sqrt{1-p}\,\ket{\mathrm{bad}}.
\]
Amplitude amplification defines reflections:
\[
S_0 := 2\ket{0^n}\bra{0^n}-I,\qquad
S_{\mathrm{good}} := I-2\Pi_{\mathrm{good}},
\]
and constructs
\[
Q := A\,S_0\,A^\dagger\,S_{\mathrm{good}}.
\]
Then $Q$ acts as a rotation in $\mathrm{span}\{\ket{\mathrm{good}},\ket{\mathrm{bad}}\}$,
amplifying success probability from $p$ to near $1$ in $\Theta(1/\sqrt{p})$ iterations.

\subsubsection*{2) Why this matters}
Grover is just the special case where $A=H^{\otimes n}$ and $\Pi_{\mathrm{good}}$ selects marked basis states.
The rotation mechanism is universal: whenever you have a reflection about ``good''
and a reflection about the prepared direction, you get amplification.

\subsection{Exercises}

\begin{exercise}[Compute the diffusion operator explicitly for $n=2$]
	Let $n=2$ so $N=4$ and
	\[
	\ket{s}=\frac12(\ket{00}+\ket{01}+\ket{10}+\ket{11}).
	\]
	Write the $4\times4$ matrix of $D=2\ket{s}\bra{s}-I$ in the computational basis.
	Verify on a sample vector that it performs inversion about the average amplitude.
\end{exercise}
\noindent\textbf{Solution.}
We have $\ket{s}\bra{s}=\frac14\mathbf{1}\mathbf{1}^T$ where $\mathbf{1}=(1,1,1,1)^T$.
Thus $2\ket{s}\bra{s}=\frac12\mathbf{1}\mathbf{1}^T$, i.e.\ every entry is $1/2$.
Hence
\[
D=
\begin{pmatrix}
	-1/2& 1/2& 1/2& 1/2\\
	1/2&-1/2& 1/2& 1/2\\
	1/2& 1/2&-1/2& 1/2\\
	1/2& 1/2& 1/2&-1/2
\end{pmatrix}.
\]
For $\ket{v}=(a_{00},a_{01},a_{10},a_{11})^T$ the output is
$a'_{x}=2\bar a-a_x$ with $\bar a=\frac14\sum_x a_x$, matching inversion about the average.

\begin{exercise}[Two-dimensional reduction]
	Assume there is exactly one marked item $x^\star$.
	Define $\ket{w}=\ket{x^\star}$ and
	\[
	\ket{w_\perp}=\frac{1}{\sqrt{N-1}}\sum_{x\neq x^\star}\ket{x}.
	\]
	Show that both $O_f$ and $D$ preserve $K=\mathrm{span}\{\ket{w},\ket{w_\perp}\}$.
\end{exercise}
\noindent\textbf{Solution.}
$O_f$ flips $\ket{w}$ and fixes every unmarked basis state, so it fixes $\ket{w_\perp}$.
Thus $O_f(K)\subset K$.
For $D=2\ket{s}\bra{s}-I$, note $\ket{s}\in K$ and for any $\ket{u}\in K$ we have
$D\ket{u}=2\ket{s}\braket{s}{u}-\ket{u}\in K$ because it is a linear combination of $\ket{s}$ and $\ket{u}$, both in $K$.

\begin{exercise}[Derive the closed-form success probability]
	With $M$ marked items, define $\sin\theta=\sqrt{M/N}$.
	Starting from $\ket{s}$, show that after $t$ Grover iterations,
	\[
	\ket{\psi_t}=\sin((2t+1)\theta)\ket{w}+\cos((2t+1)\theta)\ket{w_\perp},
	\]
	hence $p_{\mathrm{succ}}(t)=\sin^2((2t+1)\theta)$.
\end{exercise}
\noindent\textbf{Solution.}
Compute the $2\times2$ matrix of $G$ on the basis $(\ket{w},\ket{w_\perp})$ as in the text:
\[
G=
\begin{pmatrix}
	\cos 2\theta & \sin 2\theta\\
	-\sin 2\theta & \cos 2\theta
\end{pmatrix}.
\]
This is a rotation by $2\theta$, so $G^t$ is rotation by $2t\theta$.
Apply it to $\ket{s}=(\sin\theta,\cos\theta)$ to get
\[
(\sin\theta,\cos\theta)\mapsto (\sin(\theta+2t\theta),\cos(\theta+2t\theta))
=(\sin((2t+1)\theta),\cos((2t+1)\theta)).
\]
Projecting onto the marked direction $\ket{w}$ yields success probability $\sin^2((2t+1)\theta)$.

\begin{exercise}[Find the optimal iteration count]
	Assume $M\ll N$. Using $\theta\approx \sqrt{M/N}$, show that the first near-maximum occurs at
	\[
	t^\star \approx \left\lfloor \frac{\pi}{4}\sqrt{\frac{N}{M}} \right\rfloor.
	\]
	Explain why iterating beyond $t^\star$ decreases success probability.
\end{exercise}
\noindent\textbf{Solution.}
The success is $\sin^2((2t+1)\theta)$, maximized near $(2t+1)\theta\approx \pi/2$.
Solve:
\[
(2t+1)\theta\approx \frac{\pi}{2}
\quad\Rightarrow\quad
t\approx \frac{\pi}{4\theta}-\frac12.
\]
For small $\theta$, $\theta\approx \sqrt{M/N}$ gives
$t\approx \frac{\pi}{4}\sqrt{N/M}$.
Because the function is sinusoidal in $t$, after the peak the angle passes $\pi/2$ and $\sin^2$ decreases (the state overshoots the marked axis).

\begin{exercise}[Amplitude amplification template]
	Let $A\ket{0^n}=\sqrt{p}\ket{\mathrm{good}}+\sqrt{1-p}\ket{\mathrm{bad}}$.
	Define $S_{\mathrm{good}}=I-2\Pi_{\mathrm{good}}$ and $S_0=2\ket{0^n}\bra{0^n}-I$.
	Show that
	\[
	Q:=A S_0 A^\dagger S_{\mathrm{good}}
	\]
	acts as a rotation in $\mathrm{span}\{\ket{\mathrm{good}},\ket{\mathrm{bad}}\}$.
	Find the rotation angle in terms of $p$.
\end{exercise}
\noindent\textbf{Solution.}
In the 2D basis $(\ket{\mathrm{good}},\ket{\mathrm{bad}})$, the state after $A$ is
\[
\ket{\psi}=\sin\theta\,\ket{\mathrm{good}}+\cos\theta\,\ket{\mathrm{bad}}
\quad\text{where}\quad \sin\theta=\sqrt{p}.
\]
$S_{\mathrm{good}}$ flips the sign of the good component (a reflection),
and $A S_0 A^\dagger = 2\ket{\psi}\bra{\psi}-I$ reflects about $\ket{\psi}$.
Thus $Q$ is a product of two reflections, hence a rotation by $2\theta$ in that plane.
Therefore success can be amplified to near $1$ in $\Theta(1/\sin\theta)=\Theta(1/\sqrt{p})$ iterations.
	
\section{Algorithm II: Quantum Natural Gradient and Variational Circuits}
\label{sec:qng}

\subsection{Objective}
Variational quantum algorithms optimize a parameterized circuit
\[
\ket{\psi(\boldsymbol{\theta})}=U(\boldsymbol{\theta})\ket{\psi_{\mathrm{in}}}
\]
to minimize (or maximize) a scalar objective (loss) such as an energy expectation
\[
\mathcal{L}(\boldsymbol{\theta})=\bra{\psi(\boldsymbol{\theta})}H\ket{\psi(\boldsymbol{\theta})},
\]
a negative log-likelihood, or a task-specific cost (fidelity, classification loss, etc.).
The key issue: \emph{parameters are not the true geometry}.
Two different parameterizations can represent the \emph{same} physical states with different distortions.
Quantum Natural Gradient (QNG) fixes this by updating parameters using the \emph{intrinsic geometry}
induced by the quantum state manifold (QFIM / Fubini--Study metric).

\medskip
\noindent\textbf{Pipeline viewpoint (hybrid, implementation-aware).}
A typical loop is:
\[
\text{prepare shots} \rightarrow \text{estimate }\mathcal{L},\nabla \mathcal{L}, G(\theta)
\rightarrow \text{solve }G\,\Delta\theta=\nabla \mathcal{L}
\rightarrow \theta\leftarrow \theta-\eta\Delta\theta,
\]
where $G(\theta)$ is (an approximation to) the QFIM. This is a \emph{streaming linear-algebra loop}
that often dominates classical latency, motivating FPGA/GPU acceleration in later chapters.

\subsection{From Euclidean updates to geometric motion}

\subsubsection*{Why vanilla gradient depends on coordinates}
A Euclidean update
\[
\boldsymbol{\theta}_{t+1}=\boldsymbol{\theta}_t-\eta\,\nabla_{\boldsymbol{\theta}}\mathcal{L}(\boldsymbol{\theta}_t)
\]
implicitly assumes a flat metric on parameter space (the identity matrix).
But the physically meaningful distance is in \emph{state space}:
for pure states, a canonical local distance is the Fubini--Study line element
\[
ds^2_{\mathrm{FS}} = 4\Bigl(\langle d\psi|d\psi\rangle - |\langle\psi|d\psi\rangle|^2\Bigr).
\]
Pulling back to parameters $\boldsymbol{\theta}$ yields a Riemannian metric
\[
ds^2_{\mathrm{FS}} = \sum_{i,j} G_{ij}(\boldsymbol{\theta})\,d\theta^i\,d\theta^j,
\]
where $G(\theta)$ is the (pure-state) quantum Fisher information metric (QFIM).
This is the object that should shape ``steepest descent.''

\subsubsection*{State-space first-order model}
A small parameter step $\delta\boldsymbol{\theta}$ induces a state change
\[
\ket{\psi(\theta+\delta\theta)} \approx \ket{\psi(\theta)} + \sum_i \partial_i\ket{\psi(\theta)}\,\delta\theta^i.
\]
The intrinsic squared distance (to second order) is
\[
\|\delta\psi\|_{\mathrm{FS}}^2
\;=\;
\sum_{i,j} G_{ij}(\theta)\,\delta\theta^i\delta\theta^j.
\]
Thus, $G$ plays the role of ``how expensive a parameter move is'' in the physical state manifold.

\subsection{Natural gradient as steepest descent in state space}

\subsubsection*{Derivation by constrained minimization}
Define the local linearization
\[
\mathcal{L}(\theta+\delta\theta)\approx \mathcal{L}(\theta)+\nabla\mathcal{L}(\theta)^{\!\top}\delta\theta.
\]
To pick the step that decreases $\mathcal{L}$ most while keeping the \emph{state-space move} bounded,
solve
\[
\min_{\delta\theta}\ \nabla\mathcal{L}(\theta)^{\!\top}\delta\theta
\quad
\text{subject to }
\delta\theta^{\!\top}G(\theta)\delta\theta\le \varepsilon^2.
\]
Lagrange multipliers give
\[
\nabla\mathcal{L}(\theta) + 2\lambda\,G(\theta)\delta\theta = 0
\quad\Rightarrow\quad
\delta\theta = -\frac{1}{2\lambda}\,G(\theta)^{-1}\nabla\mathcal{L}(\theta).
\]
Absorbing the scalar $\frac{1}{2\lambda}$ into the learning rate $\eta$ yields the QNG update
\[
\boxed{
	\boldsymbol{\theta}_{t+1}=\boldsymbol{\theta}_t-\eta\,G(\boldsymbol{\theta}_t)^{-1}\nabla \mathcal{L}(\boldsymbol{\theta}_t).
}
\]
This is coordinate-invariant: it is the steepest descent direction with respect to the Riemannian metric $G$.

\subsubsection*{Regularization and practical solve}
In practice $G$ may be singular or noisy. Common fixes:
\[
(G+\lambda I)\,\Delta\theta = \nabla\mathcal{L}
\quad\text{(Tikhonov / damping)},
\qquad
\Delta\theta = G^{\dagger}\nabla\mathcal{L}
\quad\text{(pseudoinverse)}.
\]
For $p$ parameters, solving $p\times p$ systems is $O(p^3)$ naively, but with structure
(blocks, low-rank updates, diagonal/triangular approximations) it becomes much cheaper.

\subsection{Explicit calculation: a two-parameter single-qubit circuit}

We now compute \emph{everything explicitly} for a simple but representative 1-qubit ansatz:
\[
U(\theta,\phi)=R_Z(\phi)\,R_Y(\theta),
\qquad
\ket{\psi(\theta,\phi)} = U(\theta,\phi)\ket{0}.
\]
Recall
\[
R_Y(\theta)=e^{-i\theta Y/2}=
\begin{pmatrix}
	\cos\frac{\theta}{2} & -\sin\frac{\theta}{2}\\
	\sin\frac{\theta}{2} & \cos\frac{\theta}{2}
\end{pmatrix},
\quad
R_Z(\phi)=e^{-i\phi Z/2}=
\begin{pmatrix}
	e^{-i\phi/2} & 0\\
	0 & e^{i\phi/2}
\end{pmatrix}.
\]

\subsubsection*{Step 1: write the state and Bloch vector}
Apply $R_Y(\theta)$ to $\ket0$:
\[
R_Y(\theta)\ket0=\cos\frac{\theta}{2}\ket0+\sin\frac{\theta}{2}\ket1.
\]
Then apply $R_Z(\phi)$:
\[
\ket{\psi(\theta,\phi)}
=
e^{-i\phi/2}\cos\frac{\theta}{2}\ket0
+
e^{i\phi/2}\sin\frac{\theta}{2}\ket1.
\]
Up to a global phase, this matches the Bloch-sphere parametrization, so the Bloch vector is
\[
\vec r(\theta,\phi)=(\sin\theta\cos\phi,\ \sin\theta\sin\phi,\ \cos\theta).
\]

\subsubsection*{Step 2: compute the QFIM $G(\theta,\phi)$ for pure states}
For pure states, one convenient formula is
\[
G_{ij} = 4\,\Re\Bigl(\langle \partial_i\psi|\partial_j\psi\rangle
-\langle \partial_i\psi|\psi\rangle\langle \psi|\partial_j\psi\rangle\Bigr),
\qquad i,j\in\{\theta,\phi\}.
\]
Compute derivatives:
\[
\partial_\theta\ket{\psi}
=
e^{-i\phi/2}\Bigl(-\frac12\sin\frac{\theta}{2}\Bigr)\ket0
+
e^{i\phi/2}\Bigl(\frac12\cos\frac{\theta}{2}\Bigr)\ket1,
\]
\[
\partial_\phi\ket{\psi}
=
\Bigl(-\frac{i}{2}\Bigr)e^{-i\phi/2}\cos\frac{\theta}{2}\ket0
+
\Bigl(\frac{i}{2}\Bigr)e^{i\phi/2}\sin\frac{\theta}{2}\ket1.
\]
Inner products:
\[
\langle\psi|\partial_\theta\psi\rangle
=
\cos\frac{\theta}{2}\Bigl(-\frac12\sin\frac{\theta}{2}\Bigr)
+
\sin\frac{\theta}{2}\Bigl(\frac12\cos\frac{\theta}{2}\Bigr)
=0,
\]
\[
\langle\psi|\partial_\phi\psi\rangle
=
-\frac{i}{2}\cos^2\frac{\theta}{2}+\frac{i}{2}\sin^2\frac{\theta}{2}
=
-\frac{i}{2}\cos\theta.
\]
Next,
\[
\langle\partial_\theta\psi|\partial_\theta\psi\rangle
=
\frac14\sin^2\frac{\theta}{2}+\frac14\cos^2\frac{\theta}{2}
=\frac14,
\]
\[
\langle\partial_\phi\psi|\partial_\phi\psi\rangle
=
\frac14\cos^2\frac{\theta}{2}+\frac14\sin^2\frac{\theta}{2}
=\frac14,
\]
\[
\langle\partial_\theta\psi|\partial_\phi\psi\rangle
=
\Bigl(-\frac12\sin\frac{\theta}{2}\Bigr)\Bigl(-\frac{i}{2}\cos\frac{\theta}{2}\Bigr)
+
\Bigl(\frac12\cos\frac{\theta}{2}\Bigr)\Bigl(\frac{i}{2}\sin\frac{\theta}{2}\Bigr)
=
\frac{i}{4}\sin\frac{\theta}{2}\cos\frac{\theta}{2}
-
\frac{i}{4}\sin\frac{\theta}{2}\cos\frac{\theta}{2}
=0.
\]
Now assemble:
\[
G_{\theta\theta}=4\Re\bigl(\tfrac14-0\bigr)=1,
\]
\[
G_{\phi\phi}=4\Re\Bigl(\tfrac14-|\langle\psi|\partial_\phi\psi\rangle|^2\Bigr)
=
4\Bigl(\tfrac14-\tfrac14\cos^2\theta\Bigr)
=
\sin^2\theta,
\]
\[
G_{\theta\phi}=4\Re\Bigl(0-\langle\partial_\theta\psi|\psi\rangle\langle\psi|\partial_\phi\psi\rangle\Bigr)=0.
\]
Therefore
\[
\boxed{
	G(\theta,\phi)=
	\begin{pmatrix}
		1 & 0\\
		0 & \sin^2\theta
	\end{pmatrix}.
}
\]
This is exactly the round metric on the Bloch sphere (up to convention): $ds^2=d\theta^2+\sin^2\theta\,d\phi^2$.

\subsubsection*{Step 3: pick a concrete objective and compute $\nabla\mathcal{L}$}
Choose a simple, physically meaningful cost:
\[
\mathcal{L}(\theta,\phi)=\bra{\psi(\theta,\phi)}Z\ket{\psi(\theta,\phi)}.
\]
Using $\vec r=(x,y,z)$, we have $\langle Z\rangle=z=\cos\theta$, hence
\[
\mathcal{L}(\theta,\phi)=\cos\theta,
\qquad
\nabla\mathcal{L}=
\begin{pmatrix}
	\partial_\theta\mathcal{L}\\
	\partial_\phi\mathcal{L}
\end{pmatrix}
=
\begin{pmatrix}
	-\sin\theta\\
	0
\end{pmatrix}.
\]

\subsubsection*{Step 4: compare Euclidean gradient vs.\ QNG}
Euclidean step:
\[
\Delta\theta_{\mathrm{Euc}}=
-\eta
\begin{pmatrix}
	-\sin\theta\\
	0
\end{pmatrix}
=
\eta
\begin{pmatrix}
	\sin\theta\\
	0
\end{pmatrix}.
\]
QNG step:
\[
\Delta\theta_{\mathrm{QNG}}=-\eta\,G^{-1}\nabla\mathcal{L}
=
-\eta
\begin{pmatrix}
	1 & 0\\
	0 & \sin^{-2}\theta
\end{pmatrix}
\begin{pmatrix}
	-\sin\theta\\
	0
\end{pmatrix}
=
\eta
\begin{pmatrix}
	\sin\theta\\
	0
\end{pmatrix}.
\]
Here they coincide because the loss depends only on $\theta$ and $G$ is diagonal with $G_{\theta\theta}=1$.
But the key phenomenon appears when the loss has $\phi$-dependence:
for instance, take
\[
\mathcal{L}(\theta,\phi)=\langle X\rangle = \sin\theta\cos\phi.
\]
Then
\[
\nabla\mathcal{L}=
\begin{pmatrix}
	\cos\theta\cos\phi\\
	-\sin\theta\sin\phi
\end{pmatrix},
\qquad
\Delta\theta_{\mathrm{QNG}}=-\eta
\begin{pmatrix}
	\cos\theta\cos\phi\\
	-\dfrac{\sin\theta\sin\phi}{\sin^2\theta}
\end{pmatrix}
=
-\eta
\begin{pmatrix}
	\cos\theta\cos\phi\\
	-\dfrac{\sin\phi}{\sin\theta}
\end{pmatrix}.
\]
So the $\phi$-update is rescaled by $1/\sin\theta$, i.e.\ \emph{near the poles} ($\sin\theta\approx0$),
changing $\phi$ costs almost nothing in Euclidean coordinates but is \emph{physically meaningless}
(because all longitudes meet at the pole). QNG corrects this by using the proper geometry.

\subsection{Circuit-level geometric interpretation}

\subsubsection*{QNG as ``shortest physical move that improves the loss''}
The update $\delta\theta=-\eta G^{-1}\nabla\mathcal{L}$ is the direction that,
for a given infinitesimal state-space step length,
achieves maximal first-order decrease in $\mathcal{L}$.

\subsubsection*{Geodesics and coordinate singularities}
For the 1-qubit example, $G$ is (up to convention) the sphere metric.
Thus, QNG turns parameter updates into motion consistent with spherical geometry:
\begin{itemize}
	\item Near the equator ($\sin\theta\approx1$), $\phi$ is a good coordinate and updates behave normally.
	\item Near the poles ($\sin\theta\approx0$), $\phi$ is ill-conditioned: many $\phi$ represent nearly identical states.
	QNG naturally damps or rescales $\phi$-motion by the metric.
\end{itemize}

\subsubsection*{Generator/variance view (link to QFIM computation)}
If your circuit has the form
\[
U(\theta)=e^{-i\theta A}\cdot(\text{rest}),
\]
then in many cases the pure-state QFIM element reduces to a variance-like object:
\[
G_{\theta\theta}=4\bigl(\langle A^2\rangle-\langle A\rangle^2\bigr),
\]
where the expectation is taken in the current state (possibly with conjugations depending on parameter ordering).
This is why QNG often ``slows down'' directions with small variance (flat directions) and re-scales sensitive ones.

\subsection{Algorithmic applications}

\subsubsection*{Application 1: Variational energy minimization (VQE)}
In VQE, $\mathcal{L}(\theta)=\langle H\rangle$ for a Hamiltonian $H$.
You estimate:
\begin{itemize}
	\item $\mathcal{L}$ by measuring Pauli terms,
	\item $\nabla\mathcal{L}$ by parameter-shift or finite differences,
	\item $G$ by overlap/SLD/variance-based estimators (depending on ansatz and measurement access).
\end{itemize}
Then solve $(G+\lambda I)\Delta\theta=\nabla\mathcal{L}$ and update $\theta\leftarrow\theta-\eta\Delta\theta$.
Empirically, QNG can reduce sensitivity to parameterization and improve conditioning.

\subsubsection*{Application 2: Quantum circuit learning / QML}
If $\mathcal{L}$ is a supervised loss depending on measured probabilities, QNG
acts like ``natural gradient'' in the model manifold: it corrects distortion introduced by
measurement nonlinearity and redundant parameterization.

\subsubsection*{Application 3: Noise-aware and constrained optimization}
Under noise, the effective geometry can change (mixed-state QFIM).
A common practical strategy is to:
\begin{itemize}
	\item compute/estimate a mixed-state QFIM (or an approximation),
	\item add damping ($\lambda I$) to avoid overreacting to noisy metric estimates,
	\item restrict to blocks (layerwise QNG) for scalability.
\end{itemize}

\subsubsection*{Implementation note: why this is hardware-relevant}
Each iteration requires repeated, structured operations:
\[
\text{stream measurements} \rightarrow \text{accumulate statistics} \rightarrow \text{form }G,\nabla\mathcal{L}
\rightarrow \text{solve small linear systems}.
\]
This is exactly the type of deterministic, low-latency pipeline where an FPGA can help:
online accumulation, fixed-shape matrix kernels, and stable tail-latency.

\subsection{Geometric parallel with Grover's algorithm}

Grover's iterate is a product of two reflections, producing a \emph{rotation} in a 2D subspace.
QNG, in contrast, is a \emph{local} geometric rule that chooses the steepest descent direction
with respect to the Fubini--Study/QFIM metric.
The parallel is conceptual:

\begin{itemize}
	\item \textbf{Grover:} global, exact rotation by a fixed angle in a known 2D plane (closed form).
	\item \textbf{QNG:} local, adaptive ``best direction'' computed from the metric and gradient,
	approximating geodesic-like motion on the state manifold.
\end{itemize}

In both cases, \emph{state space geometry} is the right language:
Grover exploits it analytically; QNG exploits it algorithmically.

\subsection{Exercises}

\begin{exercise}[Compute QFIM for a different 1-qubit parametrization]
	Let
	\[
	\ket{\psi(\alpha,\beta)} = R_X(\alpha)\,R_Z(\beta)\ket{0}.
	\]
	Compute the pure-state QFIM $G(\alpha,\beta)$ explicitly.
\end{exercise}

\noindent\textbf{Solution.}
First compute the state. Since $R_Z(\beta)\ket0=e^{-i\beta/2}\ket0$ (global phase),
\[
\ket{\psi(\alpha,\beta)} \sim R_X(\alpha)\ket0.
\]
Now
\[
R_X(\alpha)=e^{-i\alpha X/2}=\cos\frac{\alpha}{2}I-i\sin\frac{\alpha}{2}X
=
\begin{pmatrix}
	\cos\frac{\alpha}{2} & -i\sin\frac{\alpha}{2}\\
	-i\sin\frac{\alpha}{2} & \cos\frac{\alpha}{2}
\end{pmatrix},
\]
so
\[
\ket{\psi(\alpha,\beta)} \sim
\cos\frac{\alpha}{2}\ket0 - i\sin\frac{\alpha}{2}\ket1.
\]
This has \emph{no dependence} on $\beta$ in the projective state, hence $\partial_\beta\ket{\psi}$ is purely
a gauge (global phase) contribution and the physical metric must satisfy $G_{\beta\beta}=G_{\alpha\beta}=0$.
Compute $G_{\alpha\alpha}$ using the formula (with $\partial_\alpha$ only):
\[
\partial_\alpha\ket{\psi}
=
-\frac12\sin\frac{\alpha}{2}\ket0 - i\frac12\cos\frac{\alpha}{2}\ket1,
\qquad
\langle\psi|\partial_\alpha\psi\rangle = 0,
\]
\[
\langle\partial_\alpha\psi|\partial_\alpha\psi\rangle
=\frac14\sin^2\frac{\alpha}{2}+\frac14\cos^2\frac{\alpha}{2}=\frac14.
\]
Thus
\[
G_{\alpha\alpha}=4\cdot\frac14=1,
\qquad
G(\alpha,\beta)=\begin{pmatrix}1&0\\0&0\end{pmatrix}.
\]
Interpretation: $\beta$ is redundant here (acts as a global phase on this input state).

\medskip

\begin{exercise}[QNG step on the Bloch sphere metric]
	For the two-parameter circuit in the chapter,
	\[
	G(\theta,\phi)=\mathrm{diag}(1,\sin^2\theta).
	\]
	Given the loss $\mathcal{L}(\theta,\phi)=\sin\theta\cos\phi$ (i.e.\ $\langle X\rangle$),
	compute the Euclidean gradient step and the QNG step explicitly.
\end{exercise}

\noindent\textbf{Solution.}
Compute gradients:
\[
\partial_\theta \mathcal{L}=\cos\theta\cos\phi,
\qquad
\partial_\phi \mathcal{L}=-\sin\theta\sin\phi.
\]
Euclidean step ($\delta=-\eta\nabla\mathcal{L}$):
\[
\delta\theta_{\mathrm{Euc}}=-\eta\cos\theta\cos\phi,
\qquad
\delta\phi_{\mathrm{Euc}}=+\eta\sin\theta\sin\phi.
\]
QNG step uses $G^{-1}=\mathrm{diag}(1,\sin^{-2}\theta)$:
\[
\begin{pmatrix}\delta\theta_{\mathrm{QNG}}\\ \delta\phi_{\mathrm{QNG}}\end{pmatrix}
=
-\eta
\begin{pmatrix}
	1 & 0\\
	0 & \sin^{-2}\theta
\end{pmatrix}
\begin{pmatrix}
	\cos\theta\cos\phi\\
	-\sin\theta\sin\phi
\end{pmatrix}
=
-\eta
\begin{pmatrix}
	\cos\theta\cos\phi\\
	-\dfrac{\sin\phi}{\sin\theta}
\end{pmatrix}.
\]
So
\[
\delta\theta_{\mathrm{QNG}}=-\eta\cos\theta\cos\phi,
\qquad
\delta\phi_{\mathrm{QNG}}=+\eta\,\frac{\sin\phi}{\sin\theta}.
\]
Key point: near $\theta\approx 0,\pi$, $\phi$ is ill-conditioned, and the metric compensates.

\medskip

\begin{exercise}[Redundancy and singular metrics]
	Show that if two different parameter values $\theta$ and $\theta'$ produce the same state
	(up to global phase), then the QFIM must be singular along some direction.
	Give an example using a 1-qubit circuit.
\end{exercise}

\noindent\textbf{Solution.}
If there is a continuous redundancy, there exists a curve $\theta(t)$ such that
$\ket{\psi(\theta(t))}$ is constant in $\CP^{d-1}$.
Then the physical tangent vector $|\partial_t\psi\rangle$ is purely a gauge direction,
so the Fubini--Study/QFIM norm of that direction is zero:
\[
\|\partial_t\psi\|_{\mathrm{FS}}^2=\partial_t\theta^i\,G_{ij}\,\partial_t\theta^j=0.
\]
Hence $G$ has a nontrivial null vector $\partial_t\theta$, i.e.\ it is singular.

Example: from the previous exercise,
\[
\ket{\psi(\alpha,\beta)}=R_X(\alpha)R_Z(\beta)\ket0 \sim R_X(\alpha)\ket0,
\]
so changing $\beta$ does not change the physical state. Therefore the $\beta$-direction has zero metric length and
$G_{\beta\beta}=0$, making $G$ singular.

\medskip

\begin{exercise}[Solve a damped QNG system]
	Let
	\[
	G=\begin{pmatrix}1&0\\0&\sin^2\theta\end{pmatrix},\quad
	g=\nabla\mathcal{L}=
	\begin{pmatrix}g_\theta\\ g_\phi\end{pmatrix}.
	\]
	Solve $(G+\lambda I)\Delta = g$ explicitly and write the update $\theta\leftarrow\theta-\eta\Delta$.
\end{exercise}

\noindent\textbf{Solution.}
Since $G+\lambda I$ is diagonal,
\[
G+\lambda I=
\begin{pmatrix}
	1+\lambda & 0\\
	0 & \sin^2\theta+\lambda
\end{pmatrix}.
\]
Thus
\[
\Delta=
\begin{pmatrix}
	\dfrac{g_\theta}{1+\lambda}\\[0.6em]
	\dfrac{g_\phi}{\sin^2\theta+\lambda}
\end{pmatrix}.
\]
Update:
\[
\theta\leftarrow\theta-\eta\,\frac{g_\theta}{1+\lambda},
\qquad
\phi\leftarrow\phi-\eta\,\frac{g_\phi}{\sin^2\theta+\lambda}.
\]
Damping prevents blow-up when $\sin\theta\approx 0$.

\medskip

\begin{exercise}[Geometry meets Grover (conceptual)]
	In Grover search, state evolution stays in a 2D plane and performs a rotation by a fixed angle.
	In QNG, updates are local and metric-shaped.
	Write a short paragraph explaining in what sense both are ``geometry-driven''.
\end{exercise}

\noindent\textbf{Solution.}
Grover's algorithm uses exact geometric structure: two reflections produce a rotation in a known 2D invariant subspace,
so the amplitude on the marked state increases by deterministic angular motion with closed-form probabilities.
QNG uses differential geometry: it defines ``steepest descent'' with respect to the intrinsic state-space metric (QFIM),
so parameter updates correspond to the most efficient local movement in the physical manifold, correcting coordinate distortions.
Both replace naive coordinate intuition by the geometry of the state space; Grover exploits it analytically (global rotation),
while QNG exploits it algorithmically (local metric-preconditioned descent).


\section{FPGA III: Streaming Linear Algebra for Hybrid Quantum Algorithms}
\label{sec:fpga3}

\subsection{Objective}
Hybrid quantum algorithms (VQE, QAOA, QNG, adaptive tomography, error-mitigation loops)
are not ``just quantum circuits.'' They are \emph{streaming pipelines}:
\[
\text{shots} \;\to\; \text{classical statistics} \;\to\; \text{linear algebra} \;\to\; \text{parameter update}
\;\to\; \text{next shots}.
\]
The quantum device produces a high-rate, low-bitwidth stream (measurement outcomes),
but the \emph{classical} side must:
\begin{itemize}
	\item aggregate (online mean/variance/covariance),
	\item build small matrices (Gram/QFIM/covariance),
	\item solve small linear systems (damped inverse / preconditioned step),
	\item keep tail latency (p99/p999) tight enough for feedback/control.
\end{itemize}
This section explains why these workloads fit FPGA well: deterministic microkernels,
fixed shapes, high reuse, low latency, and clean dataflow.

\subsection{From quantum measurements to classical data streams}

\subsubsection*{The data that actually arrives}
On most platforms, the classical readout per shot is a bitstring
\[
\mathbf{b}^{(s)}\in\{0,1\}^n
\quad \text{or} \quad
\mathbf{y}^{(s)}\in\{\pm1\}^n
\]
(after mapping $0\mapsto +1$, $1\mapsto -1$ per qubit).
From this, we form estimators of expectations like
\[
\langle Z_i\rangle,\quad \langle Z_i Z_j\rangle,\quad
\langle P_k\rangle \;\text{for Pauli strings }P_k,
\]
depending on what basis rotations were used before measurement.

\subsubsection*{Measurement basis choice = which random variable you sample}
A circuit typically ends with a basis-change layer, then computational-basis measurement.
So each shot effectively samples a random variable $X$ (observable) with outcomes $\pm 1$,
and the hardware produces \emph{bits}; your pipeline reconstructs \emph{statistics}:
\[
\widehat{\mu} \approx \mathbb{E}[X],
\qquad
\widehat{\sigma}^2 \approx \mathbb{E}[X^2]-\mathbb{E}[X]^2.
\]
FPGA becomes relevant because this is exactly the structure of online estimation.

\begin{figure}[t]
	\centering
	\resizebox{\textwidth}{!}{%
		\begin{tikzpicture}[
			>=Latex,
			rounded corners,
			node distance=8mm,
			font=\small
			]
			\tikzset{
				block/.style={draw, thick, align=center, minimum height=10mm},
				arr/.style={->, thick}
			}
			
			\node[block, minimum width=26mm] (qpu) {QPU\\(shots)};
			\node[block, right=10mm of qpu, minimum width=34mm] (readout) {Readout\\bits/words};
			\node[block, right=10mm of readout, minimum width=38mm] (agg) {Online\\aggregation};
			\node[block, right=10mm of agg, minimum width=40mm] (linalg) {Small\\linear algebra};
			
			\node[block, below=10mm of linalg, minimum width=40mm] (update) {Parameter\\update};
			\node[block, below=10mm of update, minimum width=40mm] (sched) {Next\\pulse schedule};
			
			\draw[arr] (qpu) -- node[above]{shot stream} (readout);
			\draw[arr] (readout) -- node[above]{unpacked $\pm1$} (agg);
			\draw[arr] (agg) -- node[above]{means/covs} (linalg);
			\draw[arr] (linalg) -- node[right]{$\Delta\theta$} (update);
			\draw[arr] (update) -- (sched);
			\draw[arr] (sched.west) |- (qpu.south);
			
			\node[align=center] at ($(agg)+(0,-9mm)$)
			{\small FPGA sweet spot:\\ \small deterministic streaming kernels};
		\end{tikzpicture}%
	}
	\caption{Hybrid loop as a feedback pipeline: measurements become streams; streams become statistics; statistics drive small linear algebra; updates schedule the next quantum shots.}
	\label{fig:hybrid-streaming-loop}
\end{figure}
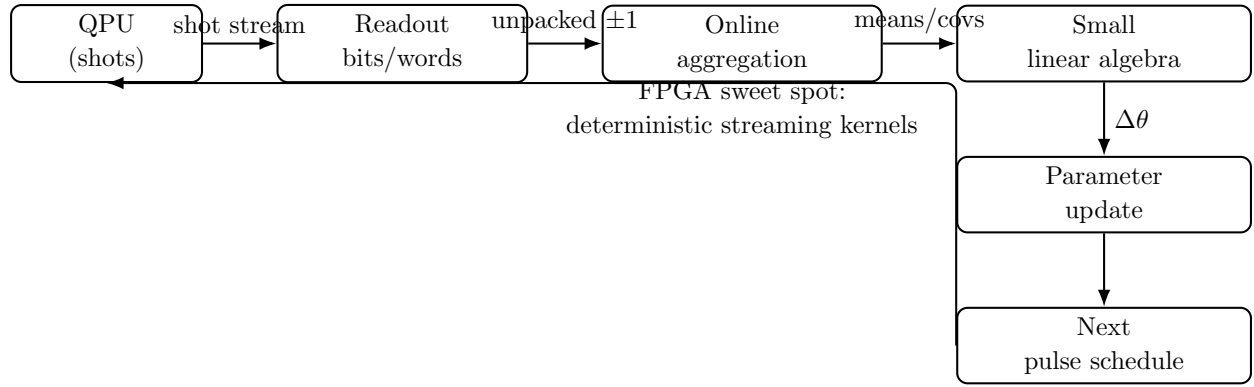

\subsection{Pattern I: shot aggregation as online estimation}

\subsubsection*{1) Online mean (one observable)}
Suppose each shot produces $x_s\in\{\pm1\}$ for some observable (e.g.\ $Z$, or $X$ after a basis change).
The empirical mean after $N$ shots is
\[
\widehat{\mu}_N = \frac{1}{N}\sum_{s=1}^N x_s.
\]
Streaming update:
\[
S_N := S_{N-1}+x_N,\qquad \widehat{\mu}_N=\frac{S_N}{N}.
\]
This is a single adder + counter; extremely FPGA-friendly.

\subsubsection*{2) Online variance (Welford, numerically stable)}
For calibration and error bars you want variance. A stable streaming update is Welford's algorithm:
\[
\mu_N = \mu_{N-1} + \frac{x_N-\mu_{N-1}}{N},
\qquad
M2_N = M2_{N-1} + (x_N-\mu_{N-1})(x_N-\mu_N),
\]
then $\widehat{\sigma}^2 = \frac{M2_N}{N-1}$.
This avoids catastrophic cancellation when $N$ is large.

\subsubsection*{3) Online covariance (many observables)}
For a vector of measured features $\mathbf{x}_s\in\R^d$ (e.g.\ multiple Pauli terms),
the empirical mean and covariance are
\[
\widehat{\mu}=\frac1N\sum_s \mathbf{x}_s,
\qquad
\widehat{\Sigma}=\frac1N\sum_s (\mathbf{x}_s-\widehat{\mu})(\mathbf{x}_s-\widehat{\mu})^\top.
\]
In streaming form, maintain:
\[
\mathbf{S}=\sum_s \mathbf{x}_s,
\qquad
\mathbf{C}=\sum_s \mathbf{x}_s\mathbf{x}_s^\top,
\]
then
\[
\widehat{\mu}=\frac{\mathbf{S}}{N},
\qquad
\widehat{\Sigma}=\frac{\mathbf{C}}{N}-\widehat{\mu}\,\widehat{\mu}^\top.
\]
This is a rank-1 outer-product accumulation per shot, i.e.\ a fixed microkernel.

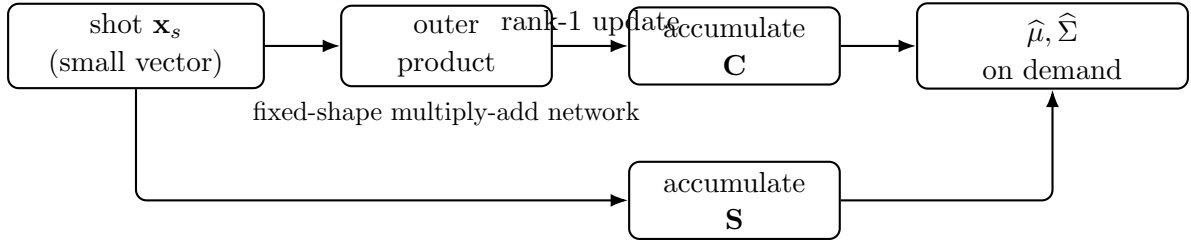
\begin{figure}[t]
	\centering
	\begin{tikzpicture}[>=Latex, rounded corners]
		\node[draw, thick, align=center, minimum width=34mm, minimum height=10mm] (in) {shot $\mathbf{x}_s$\\(small vector)};
		\node[draw, thick, right=10mm of in, align=center, minimum width=28mm, minimum height=10mm] (outer) {outer\\product};
		\node[draw, thick, right=10mm of outer, align=center, minimum width=28mm, minimum height=10mm] (acc) {accumulate\\$\mathbf{C}$};
		\node[draw, thick, below=10mm of acc, align=center, minimum width=28mm, minimum height=10mm] (sum) {accumulate\\$\mathbf{S}$};
		\node[draw, thick, right=10mm of acc, align=center, minimum width=36mm, minimum height=10mm] (out) {$\widehat{\mu},\widehat{\Sigma}$\\on demand};
		
		\draw[->, thick] (in) -- (outer);
		\draw[->, thick] (outer) -- node[above]{rank-1 update} (acc);
		\draw[->, thick] (in) |- (sum);
		\draw[->, thick] (acc) -- (out);
		\draw[->, thick] (sum) -| (out);
		
		\node[align=center] at ($(outer)+(0,-9mm)$) {\small fixed-shape multiply-add network};
	\end{tikzpicture}
	\caption{Streaming covariance accumulation: each shot contributes a rank-1 outer product $\mathbf{x}_s\mathbf{x}_s^\top$ plus a vector sum. This pattern repeats across many hybrid algorithms.}
	\label{fig:streaming-cov}
\end{figure}

\subsection{Pattern II: QFIM updates as a matrix--vector microkernel}

\subsubsection*{Where QFIM shows up in practice}
In QNG and related methods, you need a metric matrix $G(\theta)\in\R^{p\times p}$ and a gradient $g(\theta)\in\R^p$.
Then you solve a small system
\[
(G+\lambda I)\Delta\theta = g
\quad\Rightarrow\quad
\theta \leftarrow \theta - \eta\,\Delta\theta.
\]
Even if $p$ is only 10--200 (typical for modest ans\"atze), this happens many times and must be \emph{stable and fast}.

\subsubsection*{Common estimator structure (Gram / covariance form)}
Many QFIM estimators reduce to a Gram/covariance form
\[
G \approx \mathbb{E}\bigl[\mathbf{v}\mathbf{v}^\top\bigr]-\mathbb{E}[\mathbf{v}]\,\mathbb{E}[\mathbf{v}]^\top,
\]
where $\mathbf{v}$ is a per-shot (or per-batch) feature vector derived from measurements
(parameter-shift features, generator features, score-function features, etc.).
Thus QFIM estimation is again a streaming outer-product accumulation, exactly Pattern I.

\subsubsection*{Microkernel view}
At the hardware level, a dominant primitive is:
\[
\mathbf{C} \leftarrow \mathbf{C} + \mathbf{v}\mathbf{v}^\top
\qquad\text{and/or}\qquad
\mathbf{y} \leftarrow \mathbf{y} + \mathbf{A}\mathbf{x}
\]
for small, fixed sizes. FPGAs excel at these because:
\begin{itemize}
	\item the shape is known at compile time,
	\item memory access is predictable (streaming),
	\item multiply-add pipelines can be fully unrolled or partially unrolled,
	\item the latency is deterministic.
\end{itemize}

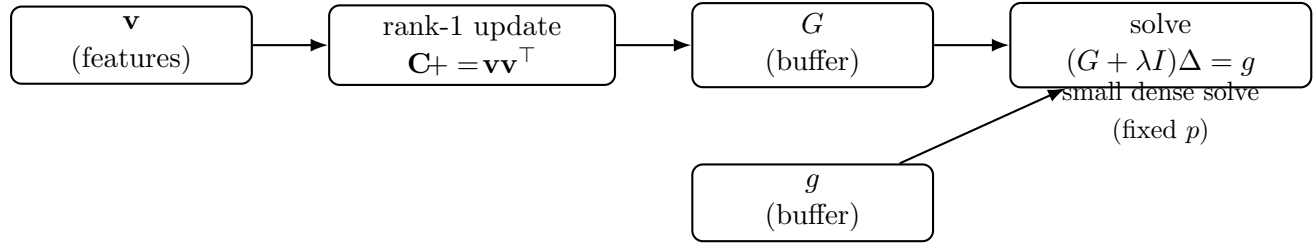
\begin{figure}[t]
	\centering
	\begin{tikzpicture}[>=Latex, rounded corners]
		\node[draw, thick, align=center, minimum width=32mm, minimum height=10mm] (v) {$\mathbf{v}$\\(features)};
		\node[draw, thick, right=10mm of v, align=center, minimum width=38mm, minimum height=10mm] (gemm) {rank-1 update\\$\mathbf{C}\!+=\!\mathbf{v}\mathbf{v}^\top$};
		\node[draw, thick, right=10mm of gemm, align=center, minimum width=32mm, minimum height=10mm] (G) {$G$\\(buffer)};
		\node[draw, thick, below=10mm of G, align=center, minimum width=32mm, minimum height=10mm] (g) {$g$\\(buffer)};
		\node[draw, thick, right=10mm of G, align=center, minimum width=40mm, minimum height=10mm] (solve) {solve\\$(G+\lambda I)\Delta=g$};
		
		\draw[->, thick] (v) -- (gemm);
		\draw[->, thick] (gemm) -- (G);
		\draw[->, thick] (g) -- (solve);
		\draw[->, thick] (G) -- (solve);
		
		\node[align=center] at ($(solve)+(0,-9mm)$) {\small small dense solve\\\small (fixed $p$)};
	\end{tikzpicture}
	\caption{QFIM/QNG pipeline: (i) streaming outer-product updates to form $G$ (and $g$), then (ii) a small dense linear solve. Both are dominated by fixed-shape linear algebra.}
	\label{fig:qfim-microkernel}
\end{figure}

\subsubsection*{Solving small systems (practical options)}
For modest $p$, typical FPGA-friendly solve strategies include:
\begin{itemize}
	\item \textbf{Cholesky} for SPD matrices ($G+\lambda I \succ 0$), $\;O(p^3/3)$, stable.
	\item \textbf{LDL$^\top$} variant (no square roots), often good for fixed-point pipelines.
	\item \textbf{Iterative (CG)} if you can accept a few iterations; very regular compute.
	\item \textbf{Diagonal / block-diagonal} approximations (layerwise QNG), turning solve into divides.
\end{itemize}
In many workflows, the best first step is to build a reliable streaming $G$ and use a \emph{damped} solve.

\subsection{Pattern III: tight online optimization loops}

\subsubsection*{The controlling constraint: tail latency}
If the hybrid loop is used for real-time adaptation or drift tracking, you care less about throughput and more about:
\[
\text{bounded worst-case response time (p99/p999)}.
\]
GPU/CPU scheduling jitter, cache effects, and OS interference can break real-time guarantees.
FPGA pipelines, once placed and timed, provide much tighter jitter bounds.

\subsubsection*{A minimal loop model}
Let
\begin{itemize}
	\item $T_{\mathrm{shot}}$ = time to execute one shot (including reset, pulses, measurement),
	\item $N$ = number of shots per iteration,
	\item $T_{\mathrm{agg}}$ = aggregation time (often overlapped with readout),
	\item $T_{G,g}$ = time to assemble $G$ and $g$ (or finalize streaming buffers),
	\item $T_{\mathrm{solve}}$ = time to solve $(G+\lambda I)\Delta=g$,
	\item $T_{\mathrm{update}}$ = time to update parameters and schedule next pulses.
\end{itemize}
Then the iteration latency is roughly
\[
T_{\mathrm{iter}}
\approx
N\,T_{\mathrm{shot}}
+\max\{T_{\mathrm{agg}},T_{G,g}\}
+T_{\mathrm{solve}}
+T_{\mathrm{update}}.
\]
Streaming design aims to hide $T_{\mathrm{agg}}$ inside the shot loop so that only a short tail remains.

\subsubsection*{Hardware sketch: overlapped aggregation}
A canonical pipeline is:
\[
\text{while shots stream: }
\mathbf{S}\!+=\!\mathbf{x},\ \mathbf{C}\!+=\!\mathbf{x}\mathbf{x}^\top
\quad\text{(no stalls)}
\quad\Rightarrow\quad
\text{after last shot: finalize + solve}.
\]
This structure is exactly what hardware dataflow is designed for.

\begin{figure}[t]
	\centering
	\begin{tikzpicture}[>=Latex, rounded corners]
		\node[draw, thick, align=center, minimum width=26mm, minimum height=8mm] (s1) {shot 1};
		\node[draw, thick, right=2mm of s1, align=center, minimum width=26mm, minimum height=8mm] (s2) {shot 2};
		\node[draw, thick, right=2mm of s2, align=center, minimum width=26mm, minimum height=8mm] (s3) {shot 3};
		\node[right=2mm of s3] (dots) {$\cdots$};
		\node[draw, thick, right=4mm of dots, align=center, minimum width=26mm, minimum height=8mm] (sN) {shot $N$};
		
		\node[draw, thick, below=7mm of s2, align=center, minimum width=84mm, minimum height=8mm] (agg) {streaming aggregation runs continuously};
		\node[draw, thick, below=7mm of sN, align=center, minimum width=44mm, minimum height=8mm] (solve) {finalize + solve};
		
		\draw[->, thick] (s1.east) -- (s2.west);
		\draw[->, thick] (s2.east) -- (s3.west);
		\draw[->, thick] (s3.east) -- (dots.west);
		\draw[->, thick] (dots.east) -- (sN.west);
		
		\draw[->, thick] (s1.south) |- (agg.west);
		\draw[->, thick] (sN.south) -- (solve.north);
		
		\node[align=center] at ($(solve)+(0,-7mm)$) {\small short deterministic tail};
	\end{tikzpicture}
	\caption{Latency hiding by design: aggregation overlaps with shot execution; only a short ``finalize + solve'' tail remains.}
	\label{fig:overlap-aggregation}
\end{figure}
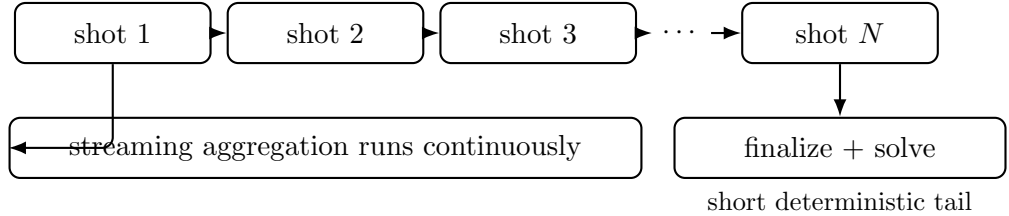

\subsection{Why this matters for quantum algorithms}

\subsubsection*{1) QNG and adaptive methods are only as good as their loop time}
If you cannot iterate fast enough, you cannot:
\begin{itemize}
	\item track drift / recalibrate frequently,
	\item run inner loops (line search, trust region),
	\item do adaptive measurement allocation (importance sampling),
	\item maintain stability in feedback-controlled experiments.
\end{itemize}

\subsubsection*{2) Measurement dominates, but classical jitter breaks ``physics timing''}
Even when $N\,T_{\mathrm{shot}}$ dominates, classical jitter can destroy:
\begin{itemize}
	\item deterministic control scheduling,
	\item consistent batching (important for estimator variance),
	\item clean latency budgets in QEC/QND-style protocols.
\end{itemize}

\subsubsection*{3) Algorithms with small matrices appear everywhere}
Small dense linear algebra is a recurring motif:
\[
\text{QFIM/QNG},\quad \text{Gauss--Newton / LM},\quad
\text{Kalman-style tracking},\quad
\text{adaptive tomography},\quad
\text{error mitigation fits}.
\]
If your system can stream and solve small systems reliably, many hybrid algorithms become viable.

\subsection{Conceptual takeaway}
The ``classical part'' of hybrid quantum computing is not a generic CPU workload.
It is a repeated, fixed-shape, low-latency linear-algebra microkernel pipeline fed by a measurement stream.
That is why hardware acceleration appears naturally---not as a luxury, but as an architectural consequence.

\subsection{Exercises}

\begin{exercise}[Online mean/variance for $\pm1$ shots]
	Let $x_s\in\{\pm1\}$ be shot outcomes of a fixed observable.
	Show that the streaming mean update
	\[
	\mu_N=\mu_{N-1}+\frac{x_N-\mu_{N-1}}{N}
	\]
	produces the same $\mu_N$ as $\mu_N=\frac1N\sum_{s=1}^N x_s$.
	Then implement Welford's variance update formulas and verify on a toy sequence.
\end{exercise}

\noindent\textbf{Solution.}
For the mean, write
\[
\mu_N=\frac1N\sum_{s=1}^N x_s=\frac{N-1}{N}\cdot\frac1{N-1}\sum_{s=1}^{N-1}x_s+\frac{x_N}{N}
=\frac{N-1}{N}\mu_{N-1}+\frac{x_N}{N}
=\mu_{N-1}+\frac{x_N-\mu_{N-1}}{N}.
\]
For Welford variance, use:
\[
\mu_N=\mu_{N-1}+\frac{x_N-\mu_{N-1}}{N},\qquad
M2_N=M2_{N-1}+(x_N-\mu_{N-1})(x_N-\mu_N).
\]
Finally $\widehat{\sigma}^2=M2_N/(N-1)$.
Verify by comparing against the batch formula $\frac1{N-1}\sum_{s=1}^N(x_s-\mu_N)^2$ on a short sequence.

\medskip

\begin{exercise}[Streaming covariance via outer products]
	Let $\mathbf{x}_s\in\R^d$ be a per-shot feature vector.
	Define $\mathbf{S}=\sum_s\mathbf{x}_s$ and $\mathbf{C}=\sum_s\mathbf{x}_s\mathbf{x}_s^\top$.
	Show that
	\[
	\widehat{\Sigma}=\frac{\mathbf{C}}{N}-\widehat{\mu}\,\widehat{\mu}^\top,
	\qquad \widehat{\mu}=\frac{\mathbf{S}}{N},
	\]
	equals the usual empirical covariance $\frac1N\sum_s(\mathbf{x}_s-\widehat{\mu})(\mathbf{x}_s-\widehat{\mu})^\top$.
\end{exercise}

\noindent\textbf{Solution.}
Expand the empirical covariance:
\[
\frac1N\sum_s(\mathbf{x}_s-\widehat{\mu})(\mathbf{x}_s-\widehat{\mu})^\top
=
\frac1N\sum_s\Bigl(\mathbf{x}_s\mathbf{x}_s^\top-\mathbf{x}_s\widehat{\mu}^\top-\widehat{\mu}\mathbf{x}_s^\top+\widehat{\mu}\widehat{\mu}^\top\Bigr).
\]
Now use $\sum_s\mathbf{x}_s=\mathbf{S}=N\widehat{\mu}$ to get
\[
\frac1N\sum_s\mathbf{x}_s\mathbf{x}_s^\top - \widehat{\mu}\widehat{\mu}^\top-\widehat{\mu}\widehat{\mu}^\top+\widehat{\mu}\widehat{\mu}^\top
=
\frac{\mathbf{C}}{N}-\widehat{\mu}\widehat{\mu}^\top,
\]
as claimed.

\medskip

\begin{exercise}[Damped QNG solve as a streaming-friendly primitive]
	Suppose your QNG update solves
	\[
	(G+\lambda I)\Delta=g
	\]
	for $p=2$ with
	\[
	G=\begin{pmatrix}a&c\\c&b\end{pmatrix},\quad \lambda>0.
	\]
	Write an explicit closed-form expression for $\Delta$ in terms of $(a,b,c,\lambda)$ and $(g_1,g_2)$.
	Explain why this is attractive for hardware when $p$ is small and fixed.
\end{exercise}

\noindent\textbf{Solution.}
Let
\[
M=G+\lambda I=
\begin{pmatrix}a+\lambda&c\\c&b+\lambda\end{pmatrix}.
\]
Its inverse (since $\lambda>0$ typically makes it SPD) is
\[
M^{-1}=\frac{1}{\det M}
\begin{pmatrix}b+\lambda&-c\\-c&a+\lambda\end{pmatrix},
\qquad
\det M=(a+\lambda)(b+\lambda)-c^2.
\]
Thus
\[
\Delta = M^{-1}g
=
\frac{1}{(a+\lambda)(b+\lambda)-c^2}
\begin{pmatrix}
	(b+\lambda)g_1-cg_2\\
	-cg_1+(a+\lambda)g_2
\end{pmatrix}.
\]
Hardware attractiveness: for fixed small $p$, this becomes a fixed datapath of multiply-adds plus one reciprocal (or iterative reciprocal),
with deterministic latency and no dynamic memory.

\medskip

\begin{exercise}[Latency model sanity check]
	Assume:
	\[
	T_{\mathrm{shot}}=2\ \mu s,\quad N=2000,\quad
	T_{\mathrm{solve}}=20\ \mu s,\quad T_{\mathrm{update}}=5\ \mu s,
	\]
	and aggregation is perfectly overlapped.
	Compute $T_{\mathrm{iter}}$.
	Then explain qualitatively what happens to $T_{\mathrm{iter}}$ if aggregation is \emph{not} overlapped and costs $150\ \mu s$.
\end{exercise}

\noindent\textbf{Solution.}
If aggregation overlaps completely, the iteration time is
\[
T_{\mathrm{iter}}\approx N T_{\mathrm{shot}} + T_{\mathrm{solve}} + T_{\mathrm{update}}
=2000\cdot 2\mu s + 20\mu s + 5\mu s
=4000\mu s + 25\mu s
=4025\ \mu s.
\]
If aggregation does not overlap, add its cost to the tail:
\[
T_{\mathrm{iter}}\approx 4000\mu s + 150\mu s + 20\mu s + 5\mu s=4175\ \mu s.
\]
The key point: even modest non-overlapped work increases the tail, and tail latency matters for tight feedback loops.

\medskip

\begin{exercise}[Design a feature vector for QFIM-style accumulation]
	Propose a per-shot (or per-batch) feature vector $\mathbf{v}\in\R^p$ such that accumulating
	$\sum \mathbf{v}\mathbf{v}^\top$ could produce a useful approximation to a QFIM or a preconditioner.
	Your answer can be conceptual (no need to match a specific estimator), but it must be explicit about:
	(i) what $\mathbf{v}$ is computed from, and (ii) why $\mathbf{v}\mathbf{v}^\top$ is informative.
\end{exercise}

\noindent\textbf{Solution.}
One simple explicit choice is to use parameter-shift gradient samples. For each parameter $\theta_i$,
estimate $\partial_i \mathcal{L}$ from two shifted circuits and treat the resulting gradient vector
$g^{(s)}\in\R^p$ (from a batch $s$) as a feature:
\[
\mathbf{v}^{(s)} := g^{(s)}.
\]
Then
\[
\sum_s \mathbf{v}^{(s)}\mathbf{v}^{(s)\top}
=
\sum_s g^{(s)}g^{(s)\top}
\]
is an empirical second-moment matrix of gradients, which serves as a curvature/preconditioning proxy
(similar in spirit to Gauss--Newton / Fisher-style approximations).
It is informative because directions with consistently large gradient magnitude/variance receive larger diagonal (and possibly correlated) weight,
helping scale updates and stabilize optimization. This is not the exact QFIM, but it matches the same streaming outer-product pattern and is hardware-friendly.

	\Part{Multi-Qubit Structure: Entanglement and Its Geometry}
	
\section{Multi-Qubit Systems and Entanglement}
\label{sec:entanglement}

\subsection{Objective}
A single qubit already forces you to think geometrically (Bloch sphere / projective space).
Multiple qubits force a deeper shift: the state space is no longer a simple sphere-like object,
and the most important new phenomenon is \emph{entanglement}.

This section builds the multi-qubit formalism in a way that is:
\begin{itemize}
	\item \textbf{structural}: tensor products explain what a composite system means,
	\item \textbf{computational}: we compute reduced states via partial trace explicitly,
	\item \textbf{algorithmic}: we connect entanglement to interference and speedups,
	\item \textbf{hardware-aware}: entangling gates are expensive and define compilation constraints.
\end{itemize}

\subsection{Tensor products and the meaning of ``composite system''}

\subsubsection*{1) The postulate (what composite means in quantum)}
If system $A$ has state space $\mathcal{H}_A$ and system $B$ has state space $\mathcal{H}_B$,
then the composite system has state space
\[
\mathcal{H}_{AB}=\mathcal{H}_A\otimes \mathcal{H}_B.
\]
For two qubits, $\mathcal{H}_A\simeq \C^2$, $\mathcal{H}_B\simeq \C^2$, hence
\[
\mathcal{H}_{AB}\simeq \C^2\otimes \C^2 \simeq \C^4.
\]

\subsubsection*{2) Meaning of simple tensors (separable preparation)}
A \emph{product state} is a simple tensor
\[
\ket{\psi}_{AB}=\ket{\psi}_A\otimes \ket{\phi}_B.
\]
Operationally: ``prepare $A$ in $\ket{\psi}$ and $B$ in $\ket{\phi}$ independently.''

\begin{rem}[Basis and bookkeeping]
	Fix computational bases $\{\ket0,\ket1\}$ on each qubit.
	Then the tensor product basis is
	\[
	\{\ket{00},\ket{01},\ket{10},\ket{11}\},
	\qquad
	\ket{ij}:=\ket{i}\otimes\ket{j}.
	\]
	A general two-qubit pure state is
	\[
	\ket{\Psi}=a\ket{00}+b\ket{01}+c\ket{10}+d\ket{11},
	\qquad a,b,c,d\in\C,\ \ |a|^2+|b|^2+|c|^2+|d|^2=1.
	\]
\end{rem}

\subsubsection*{3) Matrix view (very useful for Schmidt)}
For two qubits, group coefficients into a $2\times 2$ matrix
\[
M_\Psi=
\begin{pmatrix}
	a & b\\
	c & d
\end{pmatrix}.
\]
Then
\[
\ket{\Psi}=\sum_{i,j\in\{0,1\}}(M_\Psi)_{ij}\ket{i}\ket{j}.
\]
This matrix viewpoint turns Schmidt decomposition into an SVD.

\begin{figure}[t]
	\centering
	\begin{tikzpicture}[>=Latex, rounded corners]
		\node[draw, thick, align=center, minimum width=30mm, minimum height=10mm] (A) {Qubit $A$\\$\mathcal{H}_A\simeq\C^2$};
		\node[draw, thick, right=12mm of A, align=center, minimum width=30mm, minimum height=10mm] (B) {Qubit $B$\\$\mathcal{H}_B\simeq\C^2$};
		\node[draw, thick, below=10mm of $(A)!0.5!(B)$, align=center, minimum width=74mm, minimum height=10mm] (AB) {Composite system\\$\mathcal{H}_{AB}=\mathcal{H}_A\otimes\mathcal{H}_B\simeq\C^4$};
		
		\draw[->, thick] (A.south) -- (AB.north west);
		\draw[->, thick] (B.south) -- (AB.north east);
		
		\node[align=center] at ($(AB)+(0,-9mm)$) {\small product state: $\ket{\psi}\otimes\ket{\phi}$ \qquad vs.\qquad entangled: not factorizable};
	\end{tikzpicture}
	\caption{Composite system rule: the joint Hilbert space is a tensor product. Product states form a tiny subset; generic states are entangled.}
	\label{fig:composite-tensor}
\end{figure}
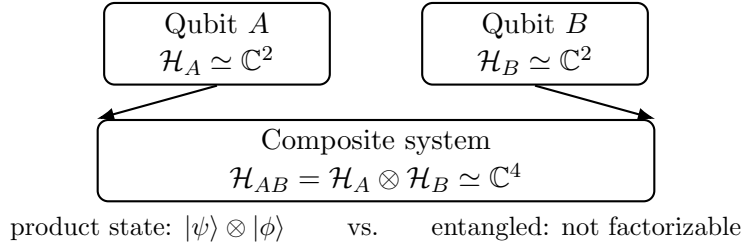

\subsection{Product vs.\ entangled pure states}

\begin{defn}[Product (separable) pure state]
	A two-qubit pure state $\ket{\Psi}\in\mathcal{H}_A\otimes\mathcal{H}_B$ is a \emph{product state}
	if there exist $\ket{\psi}\in\mathcal{H}_A$ and $\ket{\phi}\in\mathcal{H}_B$ such that
	\[
	\ket{\Psi}=\ket{\psi}\otimes\ket{\phi}.
	\]
	Otherwise it is \emph{entangled}.
\end{defn}

\subsubsection*{A fast test for two qubits (determinant test)}
Write $\ket{\Psi}=a\ket{00}+b\ket{01}+c\ket{10}+d\ket{11}$ and $M_\Psi=\begin{psmallmatrix}a&b\\ c&d\end{psmallmatrix}$.
Then:
\[
\ket{\Psi}\ \text{is product}
\quad\Longleftrightarrow\quad
\mathrm{rank}(M_\Psi)=1
\quad\Longleftrightarrow\quad
\det(M_\Psi)=ad-bc=0.
\]
So for two qubits, $ad-bc$ detects entanglement (up to normalization/phase).

\begin{proof}[Why the determinant test works]
	If $\ket{\Psi}=\ket{\psi}\otimes\ket{\phi}$ with $\ket{\psi}=(\alpha,\beta)^\top$, $\ket{\phi}=(\gamma,\delta)^\top$, then
	\[
	\ket{\Psi}=
	(\alpha\ket0+\beta\ket1)\otimes(\gamma\ket0+\delta\ket1)
	=\alpha\gamma\ket{00}+\alpha\delta\ket{01}+\beta\gamma\ket{10}+\beta\delta\ket{11}.
	\]
	Thus $a=\alpha\gamma$, $b=\alpha\delta$, $c=\beta\gamma$, $d=\beta\delta$, hence $ad-bc=0$.
	Conversely, if $ad-bc=0$, then the $2\times2$ matrix $M_\Psi$ has rank $1$,
	so it can be written as an outer product $M_\Psi=\mathbf{u}\mathbf{v}^\top$,
	which corresponds exactly to $\ket{\psi}\otimes\ket{\phi}$.
\end{proof}

\begin{ex}[Bell state (maximally entangled)]
	\[
	\ket{\Phi^+}=\frac{1}{\sqrt2}(\ket{00}+\ket{11})
	\quad\Rightarrow\quad
	M_{\Phi^+}=\frac{1}{\sqrt2}\begin{pmatrix}1&0\\0&1\end{pmatrix},
	\]
	so $\det(M_{\Phi^+})=\frac12\neq0$, hence entangled.
\end{ex}

\subsection{Schmidt decomposition (main structural theorem)}

\begin{thm}[Schmidt decomposition]
	Let $\ket{\Psi}\in \mathcal{H}_A\otimes\mathcal{H}_B$ be a pure state with $\dim\mathcal{H}_A=m$, $\dim\mathcal{H}_B=n$.
	Then there exist orthonormal sets $\{\ket{u_k}\}\subset\mathcal{H}_A$, $\{\ket{v_k}\}\subset\mathcal{H}_B$
	and nonnegative numbers $\lambda_k\ge0$ such that
	\[
	\ket{\Psi}=\sum_{k=1}^{r}\lambda_k\,\ket{u_k}\otimes\ket{v_k},
	\qquad r\le \min(m,n),
	\qquad \sum_{k=1}^r \lambda_k^2=1.
	\]
	The integer $r$ (the number of nonzero $\lambda_k$) is the \emph{Schmidt rank}.
\end{thm}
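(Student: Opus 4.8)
The plan is to prove the Schmidt decomposition by reducing it to the singular value decomposition (SVD) of the coefficient matrix. First I would fix orthonormal bases $\{\ket{i}\}_{i=1}^m$ of $\mathcal{H}_A$ and $\{\ket{j}\}_{j=1}^n$ of $\mathcal{H}_B$ and write
\[
\ket{\Psi}=\sum_{i=1}^m\sum_{j=1}^n C_{ij}\,\ket{i}\otimes\ket{j},
\]
so that the state is encoded in the $m\times n$ complex matrix $C=(C_{ij})$. Normalization $\braket{\Psi}{\Psi}=1$ translates into $\sum_{i,j}|C_{ij}|^2=1$, i.e.\ $\Tr(C^\dagger C)=1$.

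Next I would invoke the SVD: any $m\times n$ matrix factors as $C=U\Sigma V^\dagger$, where $U\in\C^{m\times m}$ and $V\in\C^{n\times n}$ are unitary and $\Sigma$ is the ``diagonal'' $m\times n$ matrix with nonnegative entries $\lambda_1\ge\lambda_2\ge\cdots\ge\lambda_r>0$ on its leading diagonal and zeros elsewhere, where $r=\rank(C)$. Writing this out in components, $C_{ij}=\sum_{k=1}^r U_{ik}\lambda_k \overline{V_{jk}}$. I would then define
\[
\ket{u_k}:=\sum_{i=1}^m U_{ik}\ket{i},\qquad \ket{v_k}:=\sum_{j=1}^n \overline{V_{jk}}\,\ket{j},
\]
and substitute back into the expansion of $\ket{\Psi}$, reindexing sums, to obtain $\ket{\Psi}=\sum_{k=1}^r\lambda_k\ket{u_k}\otimes\ket{v_k}$. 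Orthonormality of $\{\ket{u_k}\}$ follows from unitarity of $U$ (columns are orthonormal), and likewise $\{\ket{v_k}\}$ from unitarity of $V$; I would note $\braket{u_k}{u_\ell}=\sum_i \overline{U_{ik}}U_{i\ell}=\delta_{k\ell}$ and similarly for $v$. Finally, $\sum_k\lambda_k^2=\Tr(\Sigma^\dagger\Sigma)=\Tr(C^\dagger C)=1$ gives the normalization of the Schmidt coefficients, and $r\le\min(m,n)$ is immediate since $\rank(C)\le\min(m,n)$.

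To close the argument I would address the claimed well-definedness of the Schmidt rank $r$: it equals $\rank(C)$, which is basis-independent because changing the orthonormal bases of $\mathcal{H}_A$ or $\mathcal{H}_B$ replaces $C$ by $WCW'$ for unitaries $W,W'$, and multiplication by invertible matrices preserves rank. Equivalently, $r$ is the number of nonzero eigenvalues of the reduced density matrix $\rho_A=\Tr_B(\ket{\Psi}\bra{\Psi})=CC^\dagger$ (computed via the partial-trace formula from the linear-algebra chapter), which is manifestly intrinsic; the $\lambda_k^2$ are exactly these eigenvalues. I do not expect a serious obstacle here: the only subtlety is purely bookkeeping---being careful with complex conjugates when passing between $V$, $V^\dagger$, and the vectors $\ket{v_k}$, and matching the two index conventions for tensor factors fixed earlier in the paper. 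If one is not allowed to cite the SVD as a black box, the fallback is to derive it on the spot by diagonalizing the Hermitian positive-semidefinite matrix $C^\dagger C$ (spectral theorem, already available in the excerpt), defining the $\ket{v_k}$ as its normalized eigenvectors with positive eigenvalues, setting $\lambda_k$ to the square roots, and defining $\ket{u_k}:=\lambda_k^{-1}\sum_{i,j}C_{ij}\overline{(v_k)_j}\ket{i}$, then verifying orthonormality of the $\ket{u_k}$ using $C^\dagger C\ket{v_k}=\lambda_k^2\ket{v_k}$---so the main work shifts to that verification, which is a short computation.
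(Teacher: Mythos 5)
Your proposal is correct and follows essentially the same route as the paper: reshape the coefficients into an $m\times n$ matrix, apply the SVD, and read off $\ket{u_k}$, $\ket{v_k}$, and $\lambda_k$ from $U$, $V$, and the singular values. Your version simply fills in details the paper leaves implicit (orthonormality checks, the normalization $\sum_k\lambda_k^2=1$, basis-independence of the rank, and the $C^\dagger C$ fallback), all of which are sound.
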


\subsubsection*{Proof idea you can compute with (SVD)}
Choose orthonormal bases $\{\ket{i}_A\}_{i=1}^m$, $\{\ket{j}_B\}_{j=1}^n$ and write
\[
\ket{\Psi}=\sum_{i,j}\Psi_{ij}\ket{i}_A\ket{j}_B.
\]
View $\Psi=(\Psi_{ij})$ as an $m\times n$ matrix.
Take its singular value decomposition:
\[
\Psi = U\,\mathrm{diag}(\lambda_1,\dots,\lambda_r)\,V^\dagger,
\]
where $U\in U(m)$, $V\in U(n)$, and $\lambda_k>0$ are singular values.
Define
\[
\ket{u_k}=\sum_i U_{ik}\ket{i}_A,
\qquad
\ket{v_k}=\sum_j \overline{V_{jk}}\ket{j}_B.
\]
Then plugging into the expansion yields the Schmidt form
$\ket{\Psi}=\sum_k\lambda_k\ket{u_k}\ket{v_k}$.

\begin{rem}[Two-qubit specialization]
	For two qubits, Schmidt decomposition always has at most two terms:
	\[
	\ket{\Psi}=\lambda_1\ket{u_1}\ket{v_1}+\lambda_2\ket{u_2}\ket{v_2},
	\qquad \lambda_1^2+\lambda_2^2=1.
	\]
\end{rem}

\subsection{Separability via Schmidt rank}

\begin{prop}[Product $\iff$ Schmidt rank $1$]
	A bipartite pure state $\ket{\Psi}$ is a product state iff its Schmidt rank equals $1$.
\end{prop}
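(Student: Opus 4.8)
The plan is to prove both implications directly from the Schmidt decomposition theorem just stated, with the only genuinely substantive point being that the Schmidt rank $r$ is a well-defined invariant of $\ket{\Psi}$ rather than an artifact of a particular decomposition. To pin this down I would first identify $r$ with the rank of the coefficient matrix: writing $\ket{\Psi}=\sum_{i,j}\Psi_{ij}\ket{i}_A\ket{j}_B$ in any fixed orthonormal product basis, the proof of the theorem realizes the $\lambda_k$ as the singular values of $\Psi=(\Psi_{ij})$, so the number of nonzero Schmidt coefficients equals $\mathrm{rank}(\Psi)$. Since changing orthonormal bases on $\mathcal H_A$ and $\mathcal H_B$ replaces $\Psi$ by $A\Psi B$ for invertible (in fact unitary) matrices $A,B$, and matrix rank is invariant under multiplication by invertible matrices, $\mathrm{rank}(\Psi)$ — hence the Schmidt rank — depends only on $\ket{\Psi}$. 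Equivalently, one can identify $r=\mathrm{rank}(\rho_A)$ with $\rho_A=\Tr_B\ket{\Psi}\bra{\Psi}$, reusing the partial-trace machinery from the linear-algebra chapter, since the $\lambda_k^2$ are exactly the nonzero eigenvalues of $\rho_A$.

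For the easy direction ($\Leftarrow$): if the Schmidt rank is $1$, the theorem gives $\ket{\Psi}=\lambda_1\,\ket{u_1}\otimes\ket{v_1}$ with $\lambda_1\ge 0$ and $\lambda_1^2=1$, forcing $\lambda_1=1$, so $\ket{\Psi}=\ket{u_1}\otimes\ket{v_1}$ is manifestly a product state. This is immediate and needs no further work.

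For the other direction ($\Rightarrow$): suppose $\ket{\Psi}=\ket{\psi}\otimes\ket{\phi}$. Since $\ket{\Psi}$ is a unit vector, $\|\psi\|\,\|\phi\|=1$, and after rescaling we may take $\ket{\psi},\ket{\phi}$ to be unit vectors. Then the coefficient matrix factors as an outer product, $\Psi_{ij}=\psi_i\phi_j$, so $\mathrm{rank}(\Psi)=1$ (it is nonzero), and by the invariance established above the Schmidt rank is $1$. Alternatively, and even more directly, $\{\ket{\psi}\}$ and $\{\ket{\phi}\}$ are orthonormal singletons and $\ket{\Psi}=1\cdot\ket{\psi}\otimes\ket{\phi}$ already exhibits a valid Schmidt decomposition with a single nonzero coefficient, so $r=1$.

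The main obstacle — such as it is — is purely bookkeeping: making the invariance of the Schmidt rank airtight so that the statement ``$r=1$'' is meaningful in the first place. Everything else is a one-line consequence of the SVD-based proof of the Schmidt theorem. If one prefers to avoid re-deriving that invariance, the cleanest route is to route the whole proposition through $\rho_A$: a pure product state has pure $\rho_A$ of rank $1$, and conversely a one-term Schmidt form exhibits $\ket{\Psi}$ as a product; this also dovetails with the Bell-state computation, where $\rho_A=I/2$ has rank $2$ and thereby certifies entanglement.
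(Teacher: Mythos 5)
Your proof is correct and follows essentially the same route as the paper's: the reverse direction reads off the product form from a one-term Schmidt decomposition, and the forward direction notes that a product state has a rank-one coefficient matrix, hence a single nonzero singular value. The extra care you devote to the basis-invariance of the Schmidt rank (via invariance of matrix rank, or via $\rho_A$) is a reasonable supplement but not a departure from the paper's argument.
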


\begin{proof}
	If Schmidt rank is $1$, then $\ket{\Psi}=\lambda_1\ket{u_1}\ket{v_1}$,
	which is a product state (global phase absorbed into vectors; $\lambda_1=1$ by normalization).
	Conversely, if $\ket{\Psi}=\ket{\psi}\otimes\ket{\phi}$, then in any bases the coefficient matrix has rank $1$,
	so the SVD has exactly one nonzero singular value, hence Schmidt rank $1$.
\end{proof}

\begin{rem}[What Schmidt rank measures]
	Schmidt rank is the most basic invariant: it detects whether entanglement exists.
	To measure \emph{how much} entanglement, you use the Schmidt coefficients $\lambda_k$
	(entanglement entropy, concurrence for two qubits, etc.).
\end{rem}

\subsection{Worked Schmidt decompositions}

\subsubsection*{Example 1: Bell state $\ket{\Phi^+}$}
Start from
\[
\ket{\Phi^+}=\frac{1}{\sqrt2}(\ket{00}+\ket{11}).
\]
This is already in Schmidt form with
\[
\lambda_1=\lambda_2=\frac{1}{\sqrt2},\quad
\ket{u_1}=\ket0,\ \ket{v_1}=\ket0,\quad
\ket{u_2}=\ket1,\ \ket{v_2}=\ket1.
\]
So Schmidt rank $2$ (maximal for two qubits).

\subsubsection*{Example 2: a partially entangled state}
Let
\[
\ket{\Psi}=\sqrt{\tfrac34}\ket{00}+\sqrt{\tfrac14}\ket{11}.
\]
This is also Schmidt form with $\lambda_1=\sqrt{3}/2$, $\lambda_2=1/2$.
It is entangled (rank $2$) but not maximally entangled (unequal coefficients).

\subsubsection*{Example 3: nontrivial basis entanglement}
Consider
\[
\ket{\Psi}=\frac12(\ket{00}+\ket{01}+\ket{10}-\ket{11}).
\]
Coefficient matrix in computational basis:
\[
M_\Psi=\frac12
\begin{pmatrix}
	1 & 1\\
	1 & -1
\end{pmatrix}.
\]
Compute $M_\Psi M_\Psi^\dagger$:
\[
M_\Psi M_\Psi^\dagger
=\frac14
\begin{pmatrix}
	1+1 & 1-1\\
	1-1 & 1+1
\end{pmatrix}
=
\frac12
\begin{pmatrix}
	1 & 0\\
	0 & 1
\end{pmatrix}
=\frac12 I.
\]
Thus singular values are both $1/\sqrt2$.
So Schmidt coefficients are $\lambda_1=\lambda_2=1/\sqrt2$ (maximally entangled).
Moreover, the left singular vectors can be chosen as $\ket{u_1}=\ket0$, $\ket{u_2}=\ket1$ (any orthonormal basis works since $M M^\dagger$ is proportional to identity),
and similarly on $B$.
One convenient Schmidt form is
\[
\ket{\Psi}
=
\frac{1}{\sqrt2}\ket{0}\otimes\ket{+}
+\frac{1}{\sqrt2}\ket{1}\otimes\ket{-},
\qquad
\ket{\pm}=\frac{\ket0\pm\ket1}{\sqrt2},
\]
which you can verify by expanding the RHS.

\begin{figure}[t]
	\centering
	\begin{tikzpicture}[>=Latex, rounded corners]
		\node[draw, thick, align=center, minimum width=35mm, minimum height=10mm] (state) {two-qubit\\$\ket{\Psi}$};
		\node[draw, thick, right=12mm of state, align=center, minimum width=42mm, minimum height=10mm] (matrix) {reshape\\into $2\times2$ matrix $M_\Psi$};
		\node[draw, thick, right=12mm of matrix, align=center, minimum width=38mm, minimum height=10mm] (svd) {SVD:\\$M=U\Sigma V^\dagger$};
		\node[draw, thick, below=10mm of svd, align=center, minimum width=48mm, minimum height=10mm] (schmidt) {Schmidt:\\$\ket{\Psi}=\sum_k\lambda_k\ket{u_k}\ket{v_k}$};
		
		\draw[->, thick] (state) -- (matrix);
		\draw[->, thick] (matrix) -- (svd);
		\draw[->, thick] (svd) -- (schmidt);
		
		\node[align=center] at ($(schmidt)+(0,-9mm)$) {\small Schmidt coefficients $\lambda_k$ quantify entanglement};
	\end{tikzpicture}
	\caption{Practical workflow: coefficient table $\to$ matrix reshape $\to$ SVD $\to$ Schmidt decomposition. For two qubits, this is always a $2\times2$ SVD.}
	\label{fig:schmidt-workflow}
\end{figure}
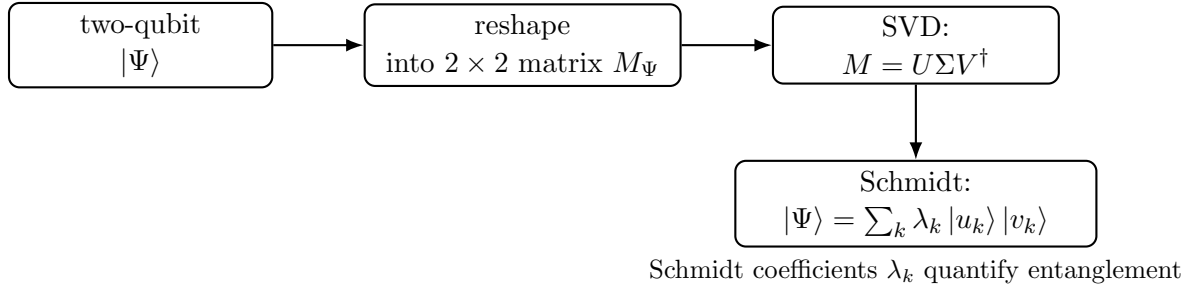

\subsection{Reduced density matrices and ``mixedness = entanglement''}

\subsubsection*{1) Reduced state via partial trace}
Given a bipartite density matrix $\rho_{AB}$, the reduced state of $A$ is
\[
\rho_A=\Tr_B(\rho_{AB}),
\qquad
\rho_B=\Tr_A(\rho_{AB}).
\]
For a pure state $\ket{\Psi}$, we take $\rho_{AB}=\ket{\Psi}\bra{\Psi}$.

\subsubsection*{2) Compute $\rho_A$ from Schmidt data}
If $\ket{\Psi}=\sum_{k=1}^r \lambda_k\ket{u_k}\ket{v_k}$, then
\[
\rho_{AB}=\sum_{k,\ell}\lambda_k\lambda_\ell
\bigl(\ket{u_k}\bra{u_\ell}\bigr)\otimes \bigl(\ket{v_k}\bra{v_\ell}\bigr).
\]
Taking $\Tr_B$ kills cross terms because $\{\ket{v_k}\}$ is orthonormal:
\[
\Tr_B\!\Bigl(\ket{v_k}\bra{v_\ell}\Bigr)=\langle v_\ell|v_k\rangle=\delta_{k\ell}.
\]
Hence
\[
\rho_A=\Tr_B(\rho_{AB})=\sum_{k=1}^r \lambda_k^2\,\ket{u_k}\bra{u_k}.
\]
Similarly
\[
\rho_B=\sum_{k=1}^r \lambda_k^2\,\ket{v_k}\bra{v_k}.
\]

\begin{prop}[Pure marginal $\iff$ product state]
	For a bipartite pure state $\ket{\Psi}$:
	\[
	\rho_A\ \text{is pure}
	\quad\Longleftrightarrow\quad
	\ket{\Psi}\ \text{is a product state}
	\quad\Longleftrightarrow\quad
	\rho_B\ \text{is pure}.
	\]
\end{prop}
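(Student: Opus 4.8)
The plan is to prove the chain of equivalences by reducing everything to the Schmidt data, since the excerpt has already established (Theorem: Schmidt decomposition, and the formula $\rho_A=\sum_k\lambda_k^2\ket{u_k}\bra{u_k}$, $\rho_B=\sum_k\lambda_k^2\ket{v_k}\bra{v_k}$) that the reduced states are diagonal with eigenvalues $\lambda_k^2$. The key observation is that both $\rho_A$ and $\rho_B$ share the \emph{same} nonzero eigenvalues $\{\lambda_k^2\}_{k=1}^r$, so purity of $\rho_A$, purity of $\rho_B$, and Schmidt rank $r=1$ are all the same condition. Then I invoke the earlier proposition ``Product $\iff$ Schmidt rank $1$'' to connect the Schmidt-rank condition to separability of $\ket{\Psi}$, which closes the loop.

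First I would recall the criterion for purity of a density matrix: for a state $\sigma$ with spectral decomposition $\sigma=\sum_k p_k\ket{w_k}\bra{w_k}$ (with $p_k>0$, $\sum_k p_k=1$), $\sigma$ is pure if and only if exactly one $p_k$ is nonzero, equivalently $\Tr(\sigma^2)=1$. Applying this to $\rho_A=\sum_{k=1}^r\lambda_k^2\ket{u_k}\bra{u_k}$ (which is already in spectral form since $\{\ket{u_k}\}$ is orthonormal and $\lambda_k^2>0$), purity of $\rho_A$ is equivalent to $r=1$. The identical argument applied to $\rho_B=\sum_{k=1}^r\lambda_k^2\ket{v_k}\bra{v_k}$ shows purity of $\rho_B$ is also equivalent to $r=1$. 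This immediately gives $\rho_A$ pure $\iff r=1 \iff \rho_B$ pure.

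Next I would close the triangle by citing the previously proved proposition that $\ket{\Psi}$ is a product state if and only if its Schmidt rank equals $1$. Combining: $\ket{\Psi}$ product $\iff r=1 \iff \rho_A$ pure, and likewise $\iff \rho_B$ pure. For concreteness I might also note the ``forward'' direction directly: if $\ket{\Psi}=\ket{\psi}\otimes\ket{\phi}$ then $\rho_A=\ket{\psi}\bra{\psi}$ (a rank-one projector, hence pure) by the definition of partial trace on a simple tensor; and conversely, if $\rho_A$ is pure with $\rho_A=\ket{u_1}\bra{u_1}$, then the only $\lambda_k$ surviving is $\lambda_1=1$, so $\ket{\Psi}=\ket{u_1}\otimes\ket{v_1}$ is a product. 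This makes the proof self-contained even without leaning on the earlier Schmidt-rank proposition.

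I do not expect any real obstacle here: the Schmidt decomposition does essentially all the work, and the argument is a short bookkeeping exercise on the eigenvalue multiset. The one point that deserves a careful sentence rather than a hand-wave is the claim that $\rho_A$ and $\rho_B$ have the \emph{same} nonzero spectrum --- this follows directly because both are built from the \emph{same} Schmidt coefficients $\lambda_k^2$, but it is worth stating explicitly since it is the crux of why the $A$-side and $B$-side conditions coincide. A secondary minor point is ensuring normalization is handled (if $r=1$ then $\lambda_1^2=1$ from $\sum_k\lambda_k^2=1$), which I would mention in passing but not belabor.
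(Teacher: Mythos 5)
Your proof is correct and follows essentially the same route as the paper: both reduce purity of the marginals to the statement that the Schmidt rank (the number of nonzero $\lambda_k$, i.e.\ the rank of $\rho_A=\sum_k\lambda_k^2\ket{u_k}\bra{u_k}$) equals $1$, and then invoke the ``Product $\iff$ Schmidt rank $1$'' proposition. Your explicit remark that $\rho_A$ and $\rho_B$ share the same nonzero spectrum is a nice touch but does not change the argument.
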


\begin{proof}
	$\rho_A$ is pure iff it has rank $1$.
	From $\rho_A=\sum_k\lambda_k^2\ket{u_k}\bra{u_k}$, its rank equals the number of nonzero $\lambda_k$, i.e.\ Schmidt rank.
	Thus $\rho_A$ pure $\iff$ Schmidt rank $1$ $\iff$ product state.
\end{proof}

\subsubsection*{3) Explicit computation for $\ket{\Phi^+}$}
Let
\[
\ket{\Phi^+}=\frac{1}{\sqrt2}(\ket{00}+\ket{11}),
\qquad
\rho_{AB}=\ket{\Phi^+}\bra{\Phi^+}.
\]
Expand:
\[
\rho_{AB}=\frac12\Bigl(
\ket{00}\bra{00}+\ket{00}\bra{11}+\ket{11}\bra{00}+\ket{11}\bra{11}
\Bigr).
\]
Now partial trace over $B$ using $\Tr_B(\ket{i}\bra{j}\otimes\ket{k}\bra{\ell})=\ket{i}\bra{j}\,\Tr(\ket{k}\bra{\ell})=\ket{i}\bra{j}\,\delta_{k\ell}$:
\[
\Tr_B(\ket{00}\bra{00})=\ket0\bra0,
\quad
\Tr_B(\ket{00}\bra{11})=\ket0\bra1\,\Tr(\ket0\bra1)=0,
\]
\[
\Tr_B(\ket{11}\bra{00})=\ket1\bra0\,\Tr(\ket1\bra0)=0,
\quad
\Tr_B(\ket{11}\bra{11})=\ket1\bra1.
\]
Thus
\[
\rho_A=\Tr_B(\rho_{AB})
=\frac12(\ket0\bra0+\ket1\bra1)
=\frac12 I.
\]
So the marginal is maximally mixed. In Bloch-vector language, $\vec r=\vec 0$.

\begin{figure}[t]
	\centering
	\begin{tikzpicture}[scale=2.0, line cap=round, line join=round]
		\draw (0,0) circle (1);
		\draw[dashed] (-1,0) arc (180:360:1 and 0.35);
		\draw (-1,0) arc (180:0:1 and 0.35);
		\draw[->] (0,0) -- (1.2,0) node[right] {$x$};
		\draw[->] (0,0) -- (0,1.2) node[above] {$z$};
		\draw[->] (0,0) -- (-0.7,-0.5) node[left] {$y$};
		\fill (0,0) circle (0.04);
		\node[align=center] at (0,-1.25) {\small entangled pure $\ket{\Phi^+}$\\\small $\Rightarrow$ marginal $\rho_A=I/2$ at center};
	\end{tikzpicture}
	\caption{``Mixedness = entanglement'' for pure bipartite states: if $\ket{\Psi}_{AB}$ is entangled, then each subsystem is mixed. For $\ket{\Phi^+}$, $\rho_A=I/2$ (center of Bloch ball).}
	\label{fig:mixedness-entanglement}
\end{figure}
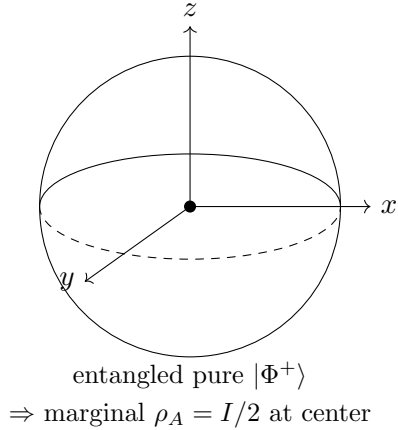

\subsection{Why entanglement is a computational resource}

\subsubsection*{1) Entanglement enables non-classical correlations}
In $\ket{\Phi^+}$, measurements are perfectly correlated in the same basis:
\[
Z\text{-measurements: outcomes match with prob. }1.
\]
But the state is not a classical mixture of $\ket{00}$ and $\ket{11}$ because it also has coherent cross terms
$\ket{00}\bra{11}$ and $\ket{11}\bra{00}$ that drive interference in other bases.
Those off-diagonal terms are exactly what classical probability models lack.

\subsubsection*{2) Entangling gates create the ``hard part'' of circuits}
Single-qubit gates are local rotations; they are cheap and parallelizable.
Entangling gates (CNOT/CZ/iSWAP) are:
\begin{itemize}
	\item the main source of depth and error,
	\item constrained by hardware connectivity,
	\item the defining resource that compilers optimize.
\end{itemize}

\subsubsection*{3) Speedups and structure (high-level view)}
Entanglement is not sufficient for speedup by itself, but in many algorithmic primitives it is essential:
\begin{itemize}
	\item amplitude amplification and multi-qubit interference patterns require coherent multi-qubit structure,
	\item phase estimation and Fourier-type routines rely on controlled operations (entangling),
	\item error correction is \emph{built out of} structured entanglement across many qubits.
\end{itemize}


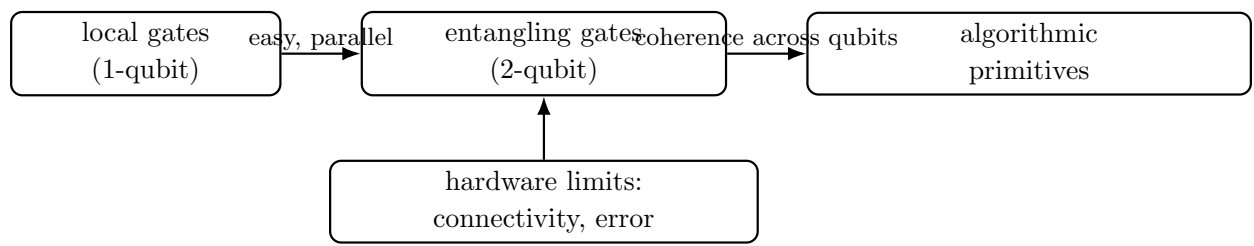
\begin{figure}[t]
	\centering
	\resizebox{\textwidth}{!}{%
		\begin{tikzpicture}[
			>=Latex,
			rounded corners,
			font=\small,
			box/.style={draw, thick, align=center, minimum height=10mm},
			arr/.style={-Latex, thick},
			node distance=10mm and 14mm
			]
			\node[box, minimum width=34mm] (local) {local gates\\(1-qubit)};
			\node[box, right=10mm of local, minimum width=46mm] (ent) {entangling gates\\(2-qubit)};
			\node[box, right=10mm of ent, minimum width=56mm] (algo) {algorithmic\\primitives};
			
			\node[box, below=8mm of ent, minimum width=54mm] (hardware)
			{hardware limits:\\connectivity, error};
			
			\draw[arr] (local) -- node[midway, above, font=\footnotesize, inner sep=1pt]
			{easy, parallel} (ent);
			
			\draw[arr] (ent) -- node[midway, above, font=\footnotesize, inner sep=1pt, text width=40mm, align=center]
			{coherence across qubits} (algo);
			
			\draw[arr] (hardware) -- (ent);
			
			\node[below=6mm of hardware, align=center, font=\footnotesize, text width=90mm]
			{compilers mostly optimize entangling structure};
			
		\end{tikzpicture}%
	}
	\caption{Entanglement as a resource in practice: the algorithm needs entangling gates; hardware constrains them; compilation tries to preserve algorithmic structure under hardware limits.}
	\label{fig:entanglement-resource}
\end{figure}

\subsection{Exercises}

\begin{exercise}[Tensor product expansion]
	Let $\ket{\psi}=\alpha\ket0+\beta\ket1$ and $\ket{\phi}=\gamma\ket0+\delta\ket1$.
	Compute $\ket{\psi}\otimes\ket{\phi}$ in the computational basis and identify the coefficient matrix $M$.
	Show $\det(M)=0$.
\end{exercise}

\noindent\textbf{Solution.}
Expand:
\[
\ket{\psi}\otimes\ket{\phi}
=(\alpha\ket0+\beta\ket1)\otimes(\gamma\ket0+\delta\ket1)
=\alpha\gamma\ket{00}+\alpha\delta\ket{01}+\beta\gamma\ket{10}+\beta\delta\ket{11}.
\]
Thus $M=\begin{psmallmatrix}\alpha\gamma&\alpha\delta\\ \beta\gamma&\beta\delta\end{psmallmatrix}
=\begin{psmallmatrix}\alpha\\ \beta\end{psmallmatrix}\begin{psmallmatrix}\gamma&\delta\end{psmallmatrix}$ is rank $1$,
and
\[
\det(M)=\alpha\gamma\cdot \beta\delta-\alpha\delta\cdot \beta\gamma = 0.
\]

\medskip

\begin{exercise}[Determinant entanglement test]
	For the state $\ket{\Psi}=a\ket{00}+b\ket{01}+c\ket{10}+d\ket{11}$:
	prove that $\ket{\Psi}$ is entangled iff $ad-bc\neq0$.
\end{exercise}

\noindent\textbf{Solution.}
Let $M=\begin{psmallmatrix}a&b\\ c&d\end{psmallmatrix}$.
If $\ket{\Psi}$ is product, then $M$ is an outer product (rank $1$) hence $\det(M)=0$.
Conversely, if $\det(M)=0$, then $\mathrm{rank}(M)<2$, so $\mathrm{rank}(M)=1$ (unless $M=0$, impossible after normalization).
Thus $M=\mathbf{u}\mathbf{v}^\top$ for some $\mathbf{u},\mathbf{v}\in\C^2$,
and $\ket{\Psi}=(u_0\ket0+u_1\ket1)\otimes(v_0\ket0+v_1\ket1)$ is product.
So entangled iff $\det(M)\neq0$, i.e.\ $ad-bc\neq0$.

\medskip

\begin{exercise}[Schmidt decomposition via SVD: a worked case]
	Find the Schmidt decomposition of
	\[
	\ket{\Psi}=\frac{1}{\sqrt3}\ket{00}+\sqrt{\frac{2}{3}}\ket{11}.
	\]
	Compute $\rho_A$ and its Bloch vector.
\end{exercise}

\noindent\textbf{Solution.}
The state is already in Schmidt form with
\[
\lambda_1=\frac{1}{\sqrt3},\quad \lambda_2=\sqrt{\frac{2}{3}},
\quad \ket{u_1}=\ket0,\ \ket{v_1}=\ket0,\quad \ket{u_2}=\ket1,\ \ket{v_2}=\ket1.
\]
Thus
\[
\rho_A=\lambda_1^2\ket0\bra0+\lambda_2^2\ket1\bra1
=\frac13\ket0\bra0+\frac23\ket1\bra1
=
\begin{pmatrix}
	1/3 & 0\\
	0 & 2/3
\end{pmatrix}.
\]
Bloch vector: $x=\Tr(\rho_A X)=0$, $y=\Tr(\rho_A Y)=0$, and
\[
z=\Tr(\rho_A Z)=\rho_{00}-\rho_{11}=\frac13-\frac23=-\frac13.
\]
So $\vec r=(0,0,-1/3)$ (mixed, inside the Bloch ball).

\medskip

\begin{exercise}[Bell state marginal is maximally mixed]
	Let $\ket{\Phi^+}=\frac{1}{\sqrt2}(\ket{00}+\ket{11})$.
	Compute $\rho_A=\Tr_B(\ket{\Phi^+}\bra{\Phi^+})$ explicitly and verify $\rho_A=I/2$.
\end{exercise}

\noindent\textbf{Solution.}
As computed in the text:
\[
\rho_{AB}=\frac12(\ket{00}\bra{00}+\ket{00}\bra{11}+\ket{11}\bra{00}+\ket{11}\bra{11}).
\]
Take $\Tr_B$ term-by-term:
\[
\Tr_B(\ket{00}\bra{00})=\ket0\bra0,\quad
\Tr_B(\ket{00}\bra{11})=0,\quad
\Tr_B(\ket{11}\bra{00})=0,\quad
\Tr_B(\ket{11}\bra{11})=\ket1\bra1.
\]
Thus
\[
\rho_A=\frac12(\ket0\bra0+\ket1\bra1)=\frac12 I.
\]

\medskip

\begin{exercise}[Circuit that creates entanglement]
	Consider the circuit: apply $H$ on qubit $A$ then apply $\mathrm{CNOT}$ with control $A$ and target $B$.
	Starting from $\ket{00}$, compute the output state and show it is entangled.
	Then compute $ad-bc$ for the output coefficients.
\end{exercise}

\noindent\textbf{Solution.}
First apply $H$ to the first qubit:
\[
(H\otimes I)\ket{00}=(H\ket0)\otimes\ket0=\ket{+}\ket0
=\frac{1}{\sqrt2}(\ket0+\ket1)\ket0
=\frac{1}{\sqrt2}(\ket{00}+\ket{10}).
\]
Now apply $\mathrm{CNOT}$ (control first, target second): $\ket{00}\mapsto\ket{00}$, $\ket{10}\mapsto\ket{11}$, so
\[
\mathrm{CNOT}\,\frac{1}{\sqrt2}(\ket{00}+\ket{10})
=\frac{1}{\sqrt2}(\ket{00}+\ket{11})
=\ket{\Phi^+}.
\]
This is entangled.
Coefficients: $a=\frac1{\sqrt2}$, $b=0$, $c=0$, $d=\frac1{\sqrt2}$, so
\[
ad-bc=\frac12-0=\frac12\neq0,
\]
confirming entanglement by the determinant test.

\section{Algorithm III: Teleportation and Superdense Coding}
\label{sec:teleport}

\subsection{Objective}
Teleportation and superdense coding are the two ``infrastructure-grade'' demonstrations that
\emph{entanglement is a resource that changes what communication is possible}.

\begin{itemize}
	\item \textbf{Teleportation:} move an \emph{unknown} qubit state using
	\[
	\text{1 Bell pair} \ +\ \text{2 classical bits} \ +\ \text{local Clifford + Pauli correction}.
	\]
	\item \textbf{Superdense coding:} send \emph{two} classical bits using
	\[
	\text{1 Bell pair} \ +\ \text{1 transmitted qubit} \ +\ \text{Bell measurement}.
	\]
\end{itemize}

Both protocols share the same control pattern:
\[
\text{resource entanglement}
\ \to\
\text{short Clifford circuit}
\ \to\
\text{measurement (bits)}
\ \to\
\text{feed-forward correction/decoding}.
\]
This ``measure $\Rightarrow$ branch $\Rightarrow$ correct'' pattern is exactly what reappears in
fault-tolerant stacks (Pauli frame updates) and hardware control loops (low-latency conditional logic).

\subsection{Prerequisites and notation}

\subsubsection{Registers and ownership}
Teleportation uses three qubits:
\[
A \ (\text{Alice: unknown input}),\qquad
A' \ (\text{Alice: half of Bell pair}),\qquad
B \ (\text{Bob: half of Bell pair}).
\]
The unknown input is
\[
\ket{\psi}_A=\alpha\ket0+\beta\ket1,\qquad |\alpha|^2+|\beta|^2=1.
\]
The shared Bell pair is
\[
\ket{\Phi^+}_{A'B}=\frac{1}{\sqrt2}(\ket{00}+\ket{11}).
\]
The initial joint state is
\[
\ket{\psi}_A\otimes\ket{\Phi^+}_{A'B}.
\]

\subsubsection{Gates}
\[
H=\frac{1}{\sqrt2}\begin{pmatrix}1&1\\1&-1\end{pmatrix},\qquad
X=\begin{pmatrix}0&1\\1&0\end{pmatrix},\qquad
Z=\begin{pmatrix}1&0\\0&-1\end{pmatrix}.
\]
CNOT is control $\to$ target:
\[
\mathrm{CNOT}\ket{a}\ket{b}=\ket{a}\ket{b\oplus a}.
\]

\subsubsection{Bell basis}
\[
\ket{\Phi^\pm}=\frac{1}{\sqrt2}(\ket{00}\pm\ket{11}),
\qquad
\ket{\Psi^\pm}=\frac{1}{\sqrt2}(\ket{01}\pm\ket{10}).
\]

\subsubsection{Measurement outcomes and corrections}
Alice measures two qubits in the computational basis, producing two classical bits
\[
(m_1,m_2)\in\{0,1\}^2.
\]
In the convention used below, Bob applies the correction
\[
\boxed{\text{Bob applies } Z^{m_1}X^{m_2}.}
\]
If you prefer $X^{m_2}Z^{m_1}$, that is equivalent up to a global phase when both exponents are $1$.

\subsection{Quantum teleportation}

\subsubsection{Circuit (resource + measurement + feed-forward)}
\begin{figure}[t]
	\centering
	\begin{tikzpicture}[x=1.0cm,y=0.92cm,>=Latex]
		\draw (0,2) -- (13.4,2) node[right] {$A$ (Alice: unknown)};
		\draw (0,1) -- (13.4,1) node[right] {$A'$ (Alice)};
		\draw (0,0) -- (13.4,0) node[right] {$B$ (Bob)};
		
		\node[left] at (0,2) {$\ket{\psi}$};
		\node[left] at (0,1) {$\ket0$};
		\node[left] at (0,0) {$\ket0$};
		
		\node[draw, thick, minimum width=6mm, minimum height=5mm] at (1.4,1) {$H$};
		\fill[black] (2.7,1) circle (0.06);
		\draw[thick] (2.7,1) -- (2.7,0);
		\draw[thick] (2.7,0) circle (0.16);
		\draw[thick] (2.7,0.16) -- (2.7,-0.16);
		\node at (2.05,0.55) {\scriptsize create $\ket{\Phi^+}$};
		
		\fill[black] (4.6,2) circle (0.06);
		\draw[thick] (4.6,2) -- (4.6,1);
		\draw[thick] (4.6,1) circle (0.16);
		\draw[thick] (4.6,1.16) -- (4.6,0.84);
		\node at (4.6,2.35) {\scriptsize CNOT};
		
		\node[draw, thick, minimum width=6mm, minimum height=5mm] at (6.0,2) {$H$};
		
		\node[draw, thick, minimum width=8mm, minimum height=6mm] (M1) at (7.7,2) {$M$};
		\node[draw, thick, minimum width=8mm, minimum height=6mm] (M2) at (7.7,1) {$M$};
		\node at (7.7,2.45) {\scriptsize $m_1$};
		\node at (7.7,1.45) {\scriptsize $m_2$};
		
		\draw[dashed, thick] (8.15,2) -- (10.6,0.55);
		\draw[dashed, thick] (8.15,1) -- (10.6,0.25);
		\node at (9.5,0.95) {\scriptsize 2 classical bits};
		
		\node[draw, thick, minimum width=10mm, minimum height=6mm] at (11.3,0.55) {$Z^{m_1}$};
		\node[draw, thick, minimum width=10mm, minimum height=6mm] at (11.3,0.25) {$X^{m_2}$};
		
		\node at (9.5,-0.65) {\scriptsize Bob recovers $\ket{\psi}$ on $B$};
	\end{tikzpicture}
	\caption{Teleportation circuit. A Bell pair plus two classical bits suffice to move an unknown qubit state.}
	\label{fig:teleport}
\end{figure}
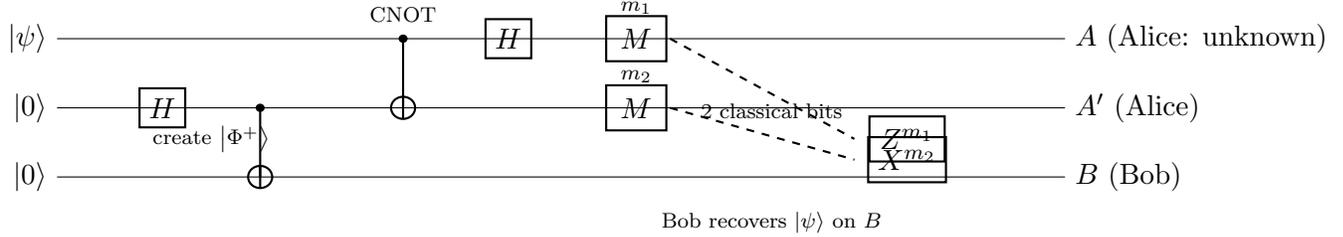

\subsubsection{The teleportation identity (expanded algebra)}
Start with
\[
\ket{\psi}_A\otimes\ket{\Phi^+}_{A'B}
=
(\alpha\ket0+\beta\ket1)_A\otimes\frac1{\sqrt2}(\ket{00}+\ket{11})_{A'B}.
\]
Expand:
\[
=\frac1{\sqrt2}\Bigl(
\alpha\ket0\ket0\ket0+\alpha\ket0\ket1\ket1+\beta\ket1\ket0\ket0+\beta\ket1\ket1\ket1
\Bigr)_{AA'B}.
\]

\paragraph{Step 1: apply $\mathrm{CNOT}_{A\to A'}$.}
Only terms with $A=\ket1$ flip $A'$:
\[
\ket1\ket0\mapsto\ket1\ket1,\qquad \ket1\ket1\mapsto\ket1\ket0.
\]
So the state becomes
\[
\frac1{\sqrt2}\Bigl(
\alpha\ket0\ket0\ket0+\alpha\ket0\ket1\ket1+\beta\ket1\ket1\ket0+\beta\ket1\ket0\ket1
\Bigr).
\]

\paragraph{Step 2: apply $H$ on $A$.}
Use $H\ket0=\frac1{\sqrt2}(\ket0+\ket1)$ and $H\ket1=\frac1{\sqrt2}(\ket0-\ket1)$.
After distributing and collecting, we get an overall factor $1/2$:
\[
\ket{\Psi_{\mathrm{pre}}}
=
\frac12\Bigl[
\ket{00}(\alpha\ket0+\beta\ket1)
+\ket{01}(\alpha\ket1+\beta\ket0)
+\ket{10}(\alpha\ket0-\beta\ket1)
+\ket{11}(\alpha\ket1-\beta\ket0)
\Bigr]_{(AA')B}.
\]
Recognize each Bob branch as a Pauli-applied version of $\ket{\psi}$:
\[
(\alpha\ket0+\beta\ket1)=I\ket{\psi},\qquad
(\alpha\ket1+\beta\ket0)=X\ket{\psi},
\]
\[
(\alpha\ket0-\beta\ket1)=Z\ket{\psi},\qquad
(\alpha\ket1-\beta\ket0)=XZ\ket{\psi}.
\]
Therefore
\[
\boxed{
	\ket{\Psi_{\mathrm{pre}}}
	=
	\frac12\sum_{m_1,m_2\in\{0,1\}}
	\ket{m_1m_2}_{AA'}\ \Bigl(X^{m_2}Z^{m_1}\ket{\psi}\Bigr)_B.
}
\]

\subsubsection{Measurement and recovery}
Alice measures $(A,A')$.
If she observes outcome $(m_1,m_2)$, the state collapses to
\[
\ket{m_1m_2}_{AA'}\otimes X^{m_2}Z^{m_1}\ket{\psi}_B.
\]
After receiving $(m_1,m_2)$, Bob applies $Z^{m_1}X^{m_2}$:
\[
(Z^{m_1}X^{m_2})(X^{m_2}Z^{m_1})\ket{\psi}=\ket{\psi}
\]
(the only possible discrepancy is a global phase when $m_1=m_2=1$, which has no physical effect).

\subsubsection{Pauli frame note}
Instead of physically applying $X$ and $Z$ immediately, many architectures track the correction in software
as a \emph{Pauli frame} and reinterpret subsequent measurements/gates accordingly.
Teleportation is therefore naturally compatible with low-latency classical control.

\subsection{Superdense coding}

\subsubsection{Goal and resource trade}
Superdense coding achieves:
\[
\boxed{\text{send 2 classical bits by transmitting 1 qubit, if a Bell pair is shared.}}
\]
It works because a shared Bell pair gives access to \emph{four orthogonal global states} using only local operations.

\subsubsection{Encoding}
Let Alice and Bob share $\ket{\Phi^+}_{AB}$.
To send $(b_1,b_2)\in\{0,1\}^2$, Alice applies
\[
U_{b_1b_2}=Z^{b_1}X^{b_2}
\]
to her qubit $A$, then sends qubit $A$ to Bob.

\subsubsection{Paulis permute the Bell basis}
Compute:
\[
(I\otimes I)\ket{\Phi^+}=\ket{\Phi^+},
\]
\[
(Z\otimes I)\ket{\Phi^+}
=\frac1{\sqrt2}(\ket{00}-\ket{11})=\ket{\Phi^-},
\]
\[
(X\otimes I)\ket{\Phi^+}
=\frac1{\sqrt2}(\ket{10}+\ket{01})=\ket{\Psi^+},
\]
\[
(XZ\otimes I)\ket{\Phi^+}
=\frac1{\sqrt2}(\ket{10}-\ket{01})=\ket{\Psi^-}
\quad(\text{global phase ignored}).
\]
Thus the two classical bits are mapped to one of four orthogonal Bell states.

\subsubsection{Decoding (Bell measurement)}
Once Bob has both qubits, he performs a Bell measurement via a fixed circuit:
apply $\mathrm{CNOT}_{A\to B}$, then $H$ on $A$, then measure $(A,B)$ in the computational basis.

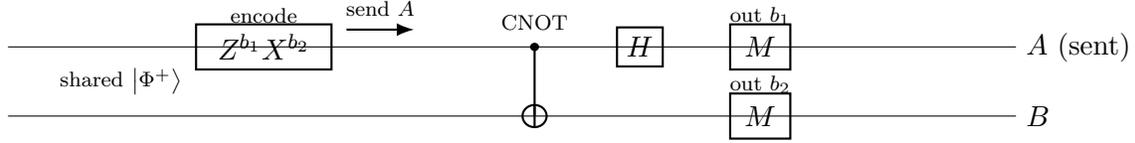
\begin{figure}[t]
	\centering
	\begin{tikzpicture}[x=1.0cm,y=0.92cm,>=Latex]
		\draw (0,1) -- (13.4,1) node[right] {$A$ (sent)};
		\draw (0,0) -- (13.4,0) node[right] {$B$};
		
		\node at (1.5,0.5) {\scriptsize shared $\ket{\Phi^+}$};
		
		\node[draw, thick, minimum width=18mm, minimum height=6mm] at (3.4,1) {$Z^{b_1}X^{b_2}$};
		\node at (3.4,1.45) {\scriptsize encode};
		
		\draw[->, thick] (4.5,1.25) -- (5.4,1.25);
		\node at (4.95,1.55) {\scriptsize send $A$};
		
		\fill[black] (7.0,1) circle (0.06);
		\draw[thick] (7.0,1) -- (7.0,0);
		\draw[thick] (7.0,0) circle (0.16);
		\draw[thick] (7.0,0.16) -- (7.0,-0.16);
		\node at (7.0,1.35) {\scriptsize CNOT};
		
		\node[draw, thick, minimum width=6mm, minimum height=5mm] at (8.4,1) {$H$};
		
		\node[draw, thick, minimum width=8mm, minimum height=6mm] at (10.0,1) {$M$};
		\node[draw, thick, minimum width=8mm, minimum height=6mm] at (10.0,0) {$M$};
		\node at (10.0,1.45) {\scriptsize out $b_1$};
		\node at (10.0,0.45) {\scriptsize out $b_2$};
	\end{tikzpicture}
	\caption{Superdense coding: local Pauli encoding + Bell measurement decoding.}
	\label{fig:dense}
\end{figure}

\subsubsection{Why the decoder works (explicit mapping)}
Apply the decoder $D=(H\otimes I)\cdot\mathrm{CNOT}$:
\[
\ket{\Phi^+}\mapsto\ket{00},\quad
\ket{\Phi^-}\mapsto\ket{10},\quad
\ket{\Psi^+}\mapsto\ket{01},\quad
\ket{\Psi^-}\mapsto\ket{11}.
\]
Hence computational-basis measurement recovers $(b_1,b_2)$ deterministically.

\subsection{Conceptual interpretation: what entanglement is doing}

\subsubsection{Teleportation: entanglement + two bits selects the correct Pauli frame}
Teleportation produces a classical label $(m_1,m_2)$ that specifies \emph{which Pauli-twisted copy} of $\ket{\psi}$
Bob currently holds. The classical bits do not carry $\alpha,\beta$; they carry the \emph{frame index}.

\subsubsection{Dense coding: entanglement turns local moves into globally distinguishable states}
With a shared Bell pair, Alice's four local Pauli choices produce four orthogonal Bell states.
The channel transmits one qubit, but the \emph{space of perfectly distinguishable global states} is size $4$.

\subsubsection{Hardware/control takeaway}
Both protocols are small instances of the fault-tolerant control pattern:
\[
\text{Clifford circuit} \to \text{syndrome-like bits} \to \text{conditional Pauli update}.
\]
This is why teleportation-based primitives and Pauli-frame tracking are a practical ``control-hardware story.''

\subsection{Exercises}

\begin{exercise}[Teleportation identity (full derivation)]
	Let $\ket{\psi}=\alpha\ket0+\beta\ket1$.
	Derive explicitly that after $\mathrm{CNOT}_{A\to A'}$ and $H$ on $A$,
	\[
	\ket{\Psi_{\mathrm{pre}}}
	=
	\frac12\sum_{m_1,m_2\in\{0,1\}}
	\ket{m_1m_2}_{AA'}\ \Bigl(X^{m_2}Z^{m_1}\ket{\psi}\Bigr)_B.
	\]
\end{exercise}

\noindent\textbf{Solution.}
Use the expansion-and-collect calculation in the teleportation subsection:
expand $\ket{\psi}\otimes\ket{\Phi^+}$ into four computational basis terms,
apply CNOT (flip $A'$ iff $A=1$), apply $H$ on $A$,
then group by Alice outcomes $\ket{00},\ket{01},\ket{10},\ket{11}$ and identify the corresponding Paulis.

\medskip

\begin{exercise}[Teleport a concrete state and list all branches]
	Take $\ket{\psi}=\ket{+}=\frac1{\sqrt2}(\ket0+\ket1)$.
	List Bob's state for each measurement outcome $(m_1,m_2)$ before correction,
	and verify correction returns $\ket{+}$.
\end{exercise}

\noindent\textbf{Solution.}
Bob's pre-correction state is $X^{m_2}Z^{m_1}\ket{+}$.
Compute $X\ket{+}=\ket{+}$ and $Z\ket{+}=\ket{-}$.
Thus:
\[
(0,0)\Rightarrow \ket{+},\quad
(0,1)\Rightarrow \ket{+},\quad
(1,0)\Rightarrow \ket{-},\quad
(1,1)\Rightarrow \ket{-}.
\]
Applying $Z^{m_1}X^{m_2}$ cancels the same Pauli (up to global phase), giving $\ket{+}$.

\medskip

\begin{exercise}[Dense coding table]
	Fill the table mapping bits $(b_1,b_2)$ to the Bell state $U_{b_1b_2}\ket{\Phi^+}$ for
	$U_{b_1b_2}=Z^{b_1}X^{b_2}$.
\end{exercise}

\noindent\textbf{Solution.}
Direct computation gives:
\[
(0,0)\mapsto\ket{\Phi^+},\quad
(1,0)\mapsto\ket{\Phi^-},\quad
(0,1)\mapsto\ket{\Psi^+},\quad
(1,1)\mapsto\ket{\Psi^-}
\]
(up to global phase for the last case depending on your $XZ$ convention).

\medskip

\begin{exercise}[Bell measurement decoder check]
	Show that the decode circuit $D=(H\otimes I)\cdot \mathrm{CNOT}$ satisfies
	\[
	D\ket{\Phi^+}=\ket{00},\ 
	D\ket{\Phi^-}=\ket{10},\ 
	D\ket{\Psi^+}=\ket{01},\ 
	D\ket{\Psi^-}=\ket{11}.
	\]
\end{exercise}

\noindent\textbf{Solution.}
Compute each case by two short steps:
apply CNOT to the Bell state (rewriting it as $\ket{\pm}\ket{0/1}$),
then apply $H$ on the first qubit ($H\ket{\pm}=\ket{0/1}$).
For example,
\[
\ket{\Phi^+}=\tfrac1{\sqrt2}(\ket{00}+\ket{11})
\xrightarrow{\mathrm{CNOT}}
\tfrac1{\sqrt2}(\ket{00}+\ket{10})=\ket{+}\ket0
\xrightarrow{H\text{ on 1st}}
\ket0\ket0.
\]

\medskip

\begin{exercise}[No-signaling: Bob sees maximally mixed without the bits]
	Assume Alice does not send $(m_1,m_2)$.
	Show Bob's reduced density matrix is $\rho_B=I/2$ for any $\ket{\psi}$.
\end{exercise}

\noindent\textbf{Solution.}
From the teleportation identity, Bob holds one of the four states
$X^{m_2}Z^{m_1}\ket{\psi}$ with equal probability $1/4$.
So
\[
\rho_B=\frac14\sum_{m_1,m_2} X^{m_2}Z^{m_1}\,\rho\,Z^{m_1}X^{m_2},
\qquad \rho=\ket{\psi}\bra{\psi}.
\]
Write $\rho=\frac12(I+\vec r\cdot\vec\sigma)$ (Bloch form).
Conjugation by the four Paulis flips the components of $\vec r$ in all sign patterns,
so the average cancels $\vec r$ and leaves $\frac12 I$:
\[
\rho_B=\frac12 I.
\]
This confirms Alice cannot signal to Bob without the classical message.
	
\section{Entanglement via Algebraic Geometry: Segre Varieties}
\label{sec:segre}

\subsection*{Objective}
This section gives a structural, algebraic--geometric characterization of
\emph{bipartite pure-state separability}. The key dictionary is one line:
\begin{quote}
	\emph{Product states are exactly rank-one tensors; projectively, they form the Segre variety.}
\end{quote}
Equivalently, entanglement is the geometric statement ``the state does not lie on the Segre variety.''
Schmidt rank becomes \emph{matrix rank}, and rank constraints become \emph{minor vanishing conditions}.

\subsection*{Key idea: product states are rank-one tensors}

\subsubsection*{(1) Projective state space}
Let the local Hilbert spaces be
\[
\mathcal H_A \cong \C^{m},\qquad \mathcal H_B \cong \C^{n}.
\]
Then the composite system is
\[
\mathcal H_{AB}=\mathcal H_A\otimes \mathcal H_B \cong \C^{mn}.
\]
A pure state is a nonzero vector $\psi\in\mathcal H_{AB}\setminus\{0\}$, but physically
global scaling (in particular global phase) is irrelevant, so the true space of pure states is
\[
[\psi]\in \PP(\mathcal H_{AB})\cong \CP^{mn-1}.
\]

\subsubsection*{(2) Coefficient tensor and reshaping}
Choose orthonormal bases $\{\ket{i}_A\}_{i=1}^m$ and $\{\ket{j}_B\}_{j=1}^n$. Write
\[
\ket{\psi}=\sum_{i=1}^{m}\sum_{j=1}^{n}\psi_{ij}\,\ket{i}_A\otimes\ket{j}_B.
\]
The coefficients $\psi_{ij}$ form an $m\times n$ array. Define the reshaped matrix
\[
M_\psi := (\psi_{ij})\in \Mat_{m\times n}(\C).
\]

\subsubsection*{(3) Product state $\Leftrightarrow$ rank-one}
A product state has the form $\ket{\psi}=\ket a\otimes \ket b$ with
\[
\ket a=\sum_{i=1}^m a_i\ket{i}_A,\qquad
\ket b=\sum_{j=1}^n b_j\ket{j}_B.
\]
Then
\[
\psi_{ij}=a_i b_j,\qquad\text{so}\qquad M_\psi = a\, b^{T},
\]
an outer product, hence $\rank(M_\psi)=1$. Conversely, if $\rank(M_\psi)=1$ then
$M_\psi=a b^T$ for some vectors $a,b$, which implies $\psi=a\otimes b$.

\begin{prop}[Product state $\iff$ rank-one tensor]
	For $[\psi]\in \PP(\C^m\otimes \C^n)$ the following are equivalent:
	\[
	\psi \text{ is a product state}
	\ \Longleftrightarrow\
	\rank(M_\psi)=1.
	\]
\end{prop}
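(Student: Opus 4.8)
The plan is to prove both implications directly from the outer-product structure of rank-one matrices, reusing the coefficient-reshaping dictionary ($\ket{\psi}\leftrightarrow M_\psi=(\psi_{ij})$) already set up just before the statement. First I would check that the statement is even well-posed on $\PP(\C^m\otimes\C^n)$: since $[\psi]$ is a genuine projective point we have $\psi\neq 0$, hence $M_\psi\neq 0$, so $\rank(M_\psi)\geq 1$; and since rescaling $\psi\mapsto\lambda\psi$ ($\lambda\neq 0$) rescales $M_\psi$ by $\lambda$ without changing its rank, the condition ``$\rank(M_\psi)=1$'' descends to the projective class, and likewise ``being a product state'' is scale-invariant. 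This settles meaningfulness before touching the equivalence.

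For the forward implication ($\Rightarrow$): if $\ket{\psi}=\ket{a}\otimes\ket{b}$ with $\ket{a}=\sum_i a_i\ket{i}$ and $\ket{b}=\sum_j b_j\ket{j}$, then the computation recorded in the paragraph preceding the statement gives $\psi_{ij}=a_i b_j$, i.e.\ $M_\psi=a\,b^{T}$. Because $\psi\neq 0$, both $a$ and $b$ are nonzero, so $a b^{T}$ is a nonzero matrix whose column space is the line spanned by $a$; hence $\rank(M_\psi)=1$.

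For the converse ($\Leftarrow$): suppose $\rank(M_\psi)=1$. Then the column space of $M_\psi$ is one-dimensional, spanned by some nonzero $a\in\C^m$; every column of $M_\psi$ is a scalar multiple of $a$, and writing the $j$-th column as $b_j\,a$ collects those scalars into a vector $b=(b_1,\dots,b_n)^{T}$ with $b\neq 0$ (else $M_\psi=0$), so $M_\psi=a\,b^{T}$. (Alternatively one can invoke the SVD: a rank-one matrix has exactly one nonzero singular value, so $M_\psi=\sigma\,u v^{\dagger}$.) Either way, setting $\ket{a}=\sum_i a_i\ket{i}$ and $\ket{b}=\sum_j b_j\ket{j}$ yields $\psi_{ij}=a_i b_j$, and re-summing,
\[
\ket{\psi}=\sum_{i,j}a_i b_j\,\ket{i}\otimes\ket{j}=\Bigl(\sum_i a_i\ket{i}\Bigr)\otimes\Bigl(\sum_j b_j\ket{j}\Bigr)=\ket{a}\otimes\ket{b},
\]
so $\psi$ is a product state.

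There is no deep obstacle here; the only points that need care are (i) excluding the degenerate case $\rank(M_\psi)=0$, handled by $\psi\neq 0$, and (ii) making the rank-one factorization $M_\psi=a\,b^{T}$ explicit and basis-clean rather than waving at ``rank factorization.'' I would also add a one-line remark giving the shortcut: the SVD-based construction underlying the Schmidt decomposition theorem already identifies the Schmidt rank of $\ket{\psi}$ with $\rank(M_\psi)$, so this proposition is just the earlier ``product $\iff$ Schmidt rank $1$'' restated in matrix language; the self-contained argument above is kept mainly because it exposes the outer-product picture that the Segre variety discussion will build on.
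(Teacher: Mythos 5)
Your proof is correct and follows essentially the same route as the paper's: the forward direction via $M_\psi=a\,b^{T}$ for a product state, and the converse by factoring a rank-one $M_\psi$ as an outer product and re-summing to recover $\ket{a}\otimes\ket{b}$. The extra remarks on scale-invariance and the excluded rank-zero case are fine elaborations but do not change the argument.
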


\begin{proof}
	If $\psi=a\otimes b$, then $M_\psi=a b^T$ so $\rank(M_\psi)=1$.
	Conversely, if $\rank(M_\psi)=1$, then $M_\psi=a b^T$ for some $a,b$, and
	\[
	\psi=\sum_{i,j} a_i b_j\,\ket{i}\otimes\ket{j}=a\otimes b,
	\]
	so $\psi$ is a product state.
\end{proof}

\subsection*{Segre embedding and the Segre variety}

\subsubsection*{(1) Segre embedding}
Define the Segre map
\[
\sigma:\ \CP^{m-1}\times\CP^{n-1}\ \longrightarrow\ \CP^{mn-1},
\qquad
([a],[b])\ \longmapsto\ [a\otimes b].
\]
In homogeneous coordinates:
\[
([a_1:\cdots:a_m],[b_1:\cdots:b_n])
\ \longmapsto\
[\;a_i b_j\;]_{1\le i\le m,\ 1\le j\le n}.
\]
Its image
\[
\Sigma_{m,n}:=\sigma(\CP^{m-1}\times\CP^{n-1})\subset \CP^{mn-1}
\]
is the \emph{Segre variety}.

\begin{prop}[Segre variety = projectivized product states]
	$\Sigma_{m,n}$ is exactly the set of projective classes of bipartite pure product states in
	$\PP(\C^m\otimes\C^n)$.
\end{prop}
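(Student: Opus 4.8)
The plan is to observe that the claimed set equality is essentially a projectivized restatement of the proposition just proved (product state $\iff$ rank-one tensor) together with the very definition of the Segre map, so the proof reduces to a well-definedness check plus two set containments.

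First I would verify that $\sigma$ is well defined as a map out of $\CP^{m-1}\times\CP^{n-1}$. If $\ket a\in\C^m\setminus\{0\}$ and $\ket b\in\C^n\setminus\{0\}$, then $\ket a\otimes\ket b\neq 0$: pick indices $i,j$ with $a_i\neq 0$ and $b_j\neq 0$, and note the $(i,j)$-coordinate $a_ib_j$ is nonzero (this is the coordinate description $\psi_{ij}=a_ib_j$ already used above, i.e.\ the ``$\rank\ge 1$'' half of the rank-one characterization). Rescaling $\ket a\mapsto\lambda\ket a$, $\ket b\mapsto\mu\ket b$ sends $\ket a\otimes\ket b\mapsto\lambda\mu(\ket a\otimes\ket b)$, which has the same class in $\CP^{mn-1}$, so $\sigma([a],[b])$ depends only on $[a]$ and $[b]$. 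Hence $\Sigma_{m,n}=\sigma(\CP^{m-1}\times\CP^{n-1})$ is a well-defined subset of $\CP^{mn-1}$.

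Next I would establish the two inclusions. For ``$\Sigma_{m,n}\subseteq$ (projectivized product states)'': every point of $\Sigma_{m,n}$ is by construction of the form $[\ket a\otimes\ket b]$ for nonzero $\ket a,\ket b$, and $\ket a\otimes\ket b$ is a product vector, so its class is a pure product state. For the reverse inclusion, suppose $[\psi]$ is a product state, so $\psi=\ket a\otimes\ket b$ for some nonzero $\ket a\in\mathcal H_A$, $\ket b\in\mathcal H_B$; then $[\psi]=\sigma([a],[b])\in\Sigma_{m,n}$. Combining the two containments gives the equality. I would also record the equivalent computational form: by the proposition just proved, $[\psi]\in\Sigma_{m,n}$ if and only if $\rank(M_\psi)=1$, which is the minor-vanishing criterion used in the sequel.

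I do not expect a genuine obstacle here. The only step needing any care is the nonvanishing/well-definedness point ($\ket a\otimes\ket b\neq 0$ for nonzero factors), and that is immediate from the coordinate expression. All the real content — that rank-one is the correct algebraic shadow of separability — lives in the preceding proposition, and this statement is simply its clean geometric packaging in terms of the Segre embedding.
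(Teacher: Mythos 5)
Your proof is correct and follows essentially the same route as the paper: both arguments reduce the statement to the two trivial containments between the image of the Segre map and the set of projectivized product vectors. Your added well-definedness check ($\ket a\otimes\ket b\neq 0$ and independence of representatives) is harmless extra rigor that the paper leaves implicit.
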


\begin{proof}
	By definition, every point of $\Sigma_{m,n}$ is of the form $[a\otimes b]$, a product state.
	Conversely, any product state $[a\otimes b]$ is $\sigma([a],[b])$, hence lies in the image.
\end{proof}

\subsubsection*{(2) Defining equations: $2\times2$ minors}
A matrix has rank one iff all its $2\times2$ minors vanish. Concretely,
for all $1\le i<k\le m$ and $1\le j<\ell\le n$,
\[
\psi_{ij}\psi_{k\ell}-\psi_{i\ell}\psi_{kj}=0.
\]
These are homogeneous quadrics in the projective coordinates, so $\Sigma_{m,n}$ is a projective
algebraic variety.

\begin{prop}[Ideal of $\Sigma_{m,n}$ (bipartite case)]
	$\Sigma_{m,n}\subset \CP^{mn-1}$ is cut out by the homogeneous ideal generated by all
	$2\times2$ minors of the reshaped matrix $M_\psi$.
\end{prop}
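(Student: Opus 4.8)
The plan is to prove the statement in two stages: first the set-theoretic equality $\Sigma_{m,n}=V(I_2)$, where $I_2\subset\C[x_{ij}]$ denotes the homogeneous ideal generated by the $2\times2$ minors of the reshaped matrix $M_\psi$, and then the substantive claim that $I_2$ is \emph{already} the full vanishing ideal $I(\Sigma_{m,n})$ (equivalently, that $I_2$ is radical), so that $\Sigma_{m,n}$ is scheme-theoretically cut out by the minors and not merely as a set.

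For the first stage I would invoke the preceding proposition directly: $[\psi]\in\Sigma_{m,n}$ iff $\psi$ is a product state iff $\rank(M_\psi)=1$. A nonzero matrix has rank $\le 1$ precisely when all its $2\times2$ minors vanish, and a projective point forces $\psi\ne 0$, hence $\rank(M_\psi)\ge 1$; combining these gives $[\psi]\in\Sigma_{m,n}\iff$ all $2\times2$ minors of $M_\psi$ vanish, i.e.\ $\Sigma_{m,n}=V(I_2)$. The inclusion $I_2\subseteq I(\Sigma_{m,n})$ is then immediate, since each minor vanishes identically on rank-one tensors (if $\psi_{ij}=a_ib_j$ then $\psi_{ij}\psi_{k\ell}-\psi_{i\ell}\psi_{kj}=0$).

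The real content is the reverse inclusion $I(\Sigma_{m,n})\subseteq I_2$. I would pass to the affine cone: set $R=\C[x_{ij}]$ and consider the graded algebra map $\phi\colon R\to\C[a_1,\dots,a_m,b_1,\dots,b_n]$, $x_{ij}\mapsto a_ib_j$, whose image is the coordinate ring of the affine cone $\widehat\Sigma_{m,n}$, so $I(\widehat\Sigma_{m,n})=\ker\phi$ and the goal becomes $\ker\phi=I_2$. The cleanest self-contained route is the standard-chart argument: on the principal open set $\{x_{11}\ne 0\}$, row and column reduction shows that modulo $I_2$ every $x_{ij}$ is expressible as $x_{i1}x_{1j}/x_{11}$, so the localized ring $R_{x_{11}}/I_2R_{x_{11}}$ is a localization of a polynomial ring in the variables $x_{11},x_{i1},x_{1j}$ — in particular a domain of Krull dimension $m+n-1$. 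Since every point of $V(I_2)$ is a nonzero rank-one matrix, the cone $\widehat\Sigma_{m,n}$ is covered by the charts $\{x_{ij}\ne 0\}$, so $R/I_2$ is reduced and equidimensional of dimension $m+n-1$; this matches $\dim\widehat\Sigma_{m,n}$, and because $\widehat\Sigma_{m,n}$ is irreducible (image of $\C^m\times\C^n$ under a morphism) we conclude $\sqrt{I_2}=\ker\phi$, which reducedness upgrades to $I_2=\ker\phi$. Re-homogenizing gives $I(\Sigma_{m,n})=I_2$. An alternative I would mention but not execute is the Gröbner/straightening-law route: choose a diagonal term order so that $x_{ij}x_{k\ell}-x_{i\ell}x_{kj}$ ($i<k$, $j<\ell$) has leading term $x_{ij}x_{k\ell}$, check via Buchberger's criterion that these minors form a Gröbner basis, note that the initial ideal is a squarefree monomial ideal (hence radical), and match Hilbert functions, the count of standard monomials in degree $d$ being $\binom{m+d-1}{d}\binom{n+d-1}{d}=\dim\phi(R)_d$.

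The main obstacle is precisely this last point in either route. Showing that the $2\times2$ minors \emph{vanish} on $\Sigma_{m,n}$ (the set-theoretic part and the easy inclusion) is essentially formal; showing that they \emph{generate} the entire vanishing ideal — that the determinantal ideal $I_2$ is prime and radical, not just defines the correct zero locus — requires genuine commutative algebra: the chart computation together with a dimension/irreducibility count, or the Gröbner-basis verification, or a direct appeal to the classical theorem that ideals of $(r{+}1)\times(r{+}1)$ minors of a generic matrix are prime. I would present the standard-chart version as the primary proof since it keeps the argument elementary and geometric, consistent with the style of these notes.
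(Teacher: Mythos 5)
Your first stage is exactly the paper's proof: the paper reads ``cut out'' set-theoretically, and its entire argument is the two-line rank observation you reproduce ($\psi=a\otimes b\Rightarrow M_\psi=ab^{T}$ has vanishing $2\times2$ minors; conversely vanishing minors and $\psi\neq 0$ force $\rank(M_\psi)=1$, hence a product state). So for the statement as the paper intends it, you are done after your second paragraph, and everything after that is an (admirable) attempt at the strictly stronger claim that the minors generate the full vanishing ideal $I(\Sigma_{m,n})$ --- something the paper neither asserts in its proof nor needs elsewhere.

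In that stronger part there is a genuine gap at the reducedness step. You write that since every point of $V(I_2)$ is a nonzero rank-one matrix, the cone is covered by the charts $\{x_{ij}\neq 0\}$ and hence $R/I_2$ is reduced. But the vertex $0\in V(I_2)$ lies in no such chart, so chart coverage only controls the punctured cone: a homogeneous nilpotent $\bar f\in R/I_2$ killed by a power of each $x_{ij}$ (i.e.\ an embedded component at the irrelevant ideal) is not excluded by the localizations $(R/I_2)_{x_{ij}}$ being domains. Ruling this out is precisely the nontrivial commutative algebra you were trying to avoid; it requires either the straightening-law/Gr\"obner argument you mention but defer (squarefree initial ideal, Hilbert-function match), or a direct proof that some $x_{ij}$ is a nonzerodivisor on $R/I_2$, or an appeal to the classical primality of determinantal ideals. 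Note also that the dimension count is superfluous for your purposes: once set-theoretic equality is known, $\sqrt{I_2}=\ker\phi$ is immediate from the Nullstellensatz, so radicality of $I_2$ is the one and only missing ingredient. If you want the ideal-theoretic statement in the spirit of these notes, the cleanest self-contained route is the normal-form argument: use the relations $x_{ij}x_{k\ell}\equiv x_{i\ell}x_{kj}$ to reduce every monomial modulo $I_2$ to a standard form, and check that the standard forms map to linearly independent monomials $a^{\alpha}b^{\beta}$ under $\phi$, which gives $\ker\phi\subseteq I_2$ directly without any reducedness detour.
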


\begin{proof}
	If $\psi=a\otimes b$, then $M_\psi=a b^T$ and every $2\times2$ minor vanishes.
	Conversely, if all $2\times2$ minors vanish then $\rank(M_\psi)\le 1$.
	In projective space $\psi\neq0$, hence $\rank(M_\psi)=1$, so $\psi$ is a product state.
\end{proof}

\subsubsection*{(3) Dimension check}
\[
\dim(\Sigma_{m,n})=\dim(\CP^{m-1}\times \CP^{n-1})=(m-1)+(n-1)=m+n-2,
\]
while $\dim(\CP^{mn-1})=mn-1$, hence
\[
\codim(\Sigma_{m,n}\subset\CP^{mn-1})
=(mn-1)-(m+n-2)=(m-1)(n-1).
\]

\subsubsection*{(4) Two qubits: a single quadric}
For $m=n=2$,
\[
\ket{\psi}=a\ket{00}+b\ket{01}+c\ket{10}+d\ket{11},
\qquad [a:b:c:d]\in\CP^3,
\qquad
M_\psi=\begin{pmatrix}a&b\\c&d\end{pmatrix}.
\]
The product-state condition is
\[
\det(M_\psi)=ad-bc=0.
\]
Thus $\Sigma_{2,2}\subset\CP^3$ is the quadric hypersurface $\{ad-bc=0\}$.

\begin{figure}[t]
	\centering
	\begin{tikzpicture}[x=1cm,y=1cm,>=Latex]
		\node[draw, rounded corners, align=left, inner sep=6pt] (All) at (0,0)
		{All bipartite pure states\\ $\mathbb{P}(\C^{mn})=\CP^{mn-1}$};
		\node[draw, rounded corners, align=left, inner sep=6pt] (Segre) at (7.4,0)
		{Product states\\ Segre variety $\Sigma_{m,n}$\\ (rank-one tensors)};
		\draw[->, thick] (All) -- node[above, align=center]
		{\small reshape $\psi\mapsto M_\psi$\\[-0.2em]\small check minors} (Segre);
		
		\node[align=left] at (0,-2.0)
		{\small Coefficients $\psi_{ij}$ form a matrix $M_\psi$};
		\node[align=left] at (7.4,-2.0)
		{\small Product $\iff \rank(M_\psi)=1$\\ \small $\iff$ all $2\times2$ minors vanish};
	\end{tikzpicture}
	\caption{Segre viewpoint: product states form an algebraic subvariety defined by rank-one constraints.}
	\label{fig:segre-dictionary}
\end{figure}
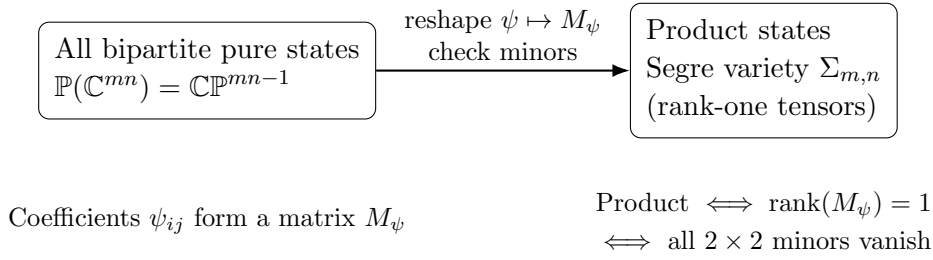

\subsection*{Schmidt rank, matrix rank, and Segre strata}

\subsubsection*{(1) Schmidt decomposition as SVD}
Every bipartite pure state admits a Schmidt decomposition
\[
\ket{\psi}=\sum_{r=1}^{R} s_r\,\ket{u_r}\otimes\ket{v_r},
\qquad s_r>0,
\]
where $R$ is the Schmidt rank.

\begin{prop}[Schmidt rank = rank of reshaped matrix]
	For $\ket{\psi}\in\C^m\otimes\C^n$,
	\[
	\mathrm{SchmidtRank}(\psi)=\rank(M_\psi).
	\]
\end{prop}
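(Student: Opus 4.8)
The plan is to prove the two inequalities $\mathrm{SchmidtRank}(\psi)\le\rank(M_\psi)$ and $\rank(M_\psi)\le\mathrm{SchmidtRank}(\psi)$ separately, using the singular value decomposition as the bridge (exactly the construction already sketched in the proof of the Schmidt decomposition theorem earlier in this section). First I would take an SVD of the reshaped matrix, $M_\psi = U\Sigma V^\dagger$ with $U\in U(m)$, $V\in U(n)$, and $\Sigma$ the $m\times n$ matrix whose only nonzero entries are the singular values $s_1\ge\cdots\ge s_R>0$ on the diagonal, where by definition of matrix rank $R=\rank(M_\psi)$ is precisely the number of nonzero singular values. I would then set $\ket{u_r}:=\sum_i U_{ir}\ket{i}_A$ and $\ket{v_r}:=\sum_j\overline{V_{jr}}\ket{j}_B$; unitarity of $U$ and $V$ makes $\{\ket{u_r}\}_{r=1}^R$ and $\{\ket{v_r}\}_{r=1}^R$ orthonormal. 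Substituting $\psi_{ij}=\sum_{r=1}^R U_{ir}s_r\overline{V_{jr}}$ into $\ket{\psi}=\sum_{i,j}\psi_{ij}\ket{i}_A\otimes\ket{j}_B$ and collapsing the sum gives $\ket{\psi}=\sum_{r=1}^R s_r\ket{u_r}\otimes\ket{v_r}$, a Schmidt decomposition with exactly $R$ nonzero terms; hence $\mathrm{SchmidtRank}(\psi)\le\rank(M_\psi)$.

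For the reverse inequality I would start from an arbitrary Schmidt decomposition $\ket{\psi}=\sum_{r=1}^{R'}s'_r\ket{u'_r}\otimes\ket{v'_r}$ with $s'_r>0$ and each of $\{\ket{u'_r}\}$, $\{\ket{v'_r}\}$ orthonormal, and read off its coefficient matrix $M_\psi=\sum_{r=1}^{R'}s'_r\,\mathbf{u}'_r(\overline{\mathbf{v}'_r})^T$, where $\mathbf{u}'_r$ and $\mathbf{v}'_r$ are the coordinate vectors of the Schmidt factors. Since the $\mathbf{u}'_r$ are orthonormal (hence linearly independent) and the $\overline{\mathbf{v}'_r}$ are likewise orthonormal, this expresses $M_\psi$ as a sum of $R'$ rank-one matrices with pairwise orthogonal column spaces and pairwise orthogonal row spaces, so $\rank(M_\psi)=R'=\mathrm{SchmidtRank}(\psi)$. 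Combining the two inequalities yields the claimed equality. An alternative, arguably cleaner, route I would mention uses the reduced density matrix: a short coefficient computation gives $\rho_A=\Tr_B(\ket{\psi}\bra{\psi})=M_\psi M_\psi^\dagger$, whence $\rank(\rho_A)=\rank(M_\psi M_\psi^\dagger)=\rank(M_\psi)$ by the standard fact $\rank(AA^\dagger)=\rank(A)$; combined with the earlier identity $\rho_A=\sum_r s_r^2\ket{u_r}\bra{u_r}$, which exhibits $\rank(\rho_A)$ as the number of nonzero Schmidt coefficients, this again gives the result.

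I do not expect any genuine obstacle here: this is a bookkeeping proof, essentially a repackaging of the SVD. The one place where a careless write-up slips is the conjugation convention — the $B$-side coordinate vectors appearing in $M_\psi$ are the complex conjugates of the amplitude columns, so one must stay consistent about whether $\ket{v_r}$ or its conjugate indexes the SVD factor $V$, and correspondingly whether the row-space spanning vectors are $\overline{\mathbf{v}'_r}$ or $\mathbf{v}'_r$. Since conjugation does not change linear independence or rank, the final conclusion is unaffected, but I would state the convention explicitly and carry the bars through so the two directions match.
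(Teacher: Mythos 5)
Your proposal is correct and uses the same bridge as the paper's proof: reshape the state into $M_\psi$, take an SVD, and identify the nonzero singular values with the Schmidt coefficients. You simply spell out both inequalities (and note the $\rho_A=M_\psi M_\psi^\dagger$ variant and the conjugation convention) where the paper gives the one-shot SVD argument, so the content is essentially identical.
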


\begin{proof}
	View $\psi$ as a linear map
	\[
	\widetilde{\psi}:\C^n\to\C^m,\qquad
	\widetilde{\psi}(\ket{j})=\sum_{i=1}^m \psi_{ij}\ket{i}.
	\]
	The matrix of $\widetilde{\psi}$ in the chosen bases is exactly $M_\psi$.
	An SVD of $\widetilde{\psi}$,
	\[
	\widetilde{\psi}=U\,\mathrm{diag}(s_1,\dots,s_R)\,V^\dagger,
	\]
	produces the Schmidt decomposition with the same nonzero singular values $s_r$.
	Hence $R$ equals the number of nonzero singular values, i.e.\ $\rank(M_\psi)$.
\end{proof}

\subsubsection*{(2) Determinantal rank strata}
Define the rank strata
\[
\mathcal{R}_{\le r}:=\{[\psi]\in\CP^{mn-1}:\ \rank(M_\psi)\le r\}.
\]
Then
\[
\mathcal{R}_{\le 1}=\Sigma_{m,n}.
\]
More generally, the Zariski closure of $\mathcal{R}_{\le r}$ is the determinantal variety
\[
\overline{\mathcal{R}_{\le r}}
=\{[\psi]: \text{all $(r+1)\times(r+1)$ minors of $M_\psi$ vanish}\}.
\]
Thus the condition ``Schmidt rank $\le r$'' becomes an explicit system of polynomial equations.

\subsection*{General bipartite systems: higher Segre varieties and minors}

\subsubsection*{(1) Higher local dimension is the same story}
For arbitrary $m,n$,
\[
\Sigma_{m,n}\subset\CP^{mn-1}
\]
is always the set of rank-one tensors, and separability is still equivalent to
vanishing of all $2\times2$ minors of $M_\psi$.

\subsubsection*{(2) Multipartite remark: flattenings}
For three or more subsystems there is no single Schmidt rank, but one still obtains
useful algebraic tests by \emph{flattening} the coefficient tensor into a matrix along a chosen bipartition
and checking minors (rank constraints) for that bipartition.

\subsection*{Worked examples}

\subsubsection*{Example 1 (two qubits): Segre equation and concurrence}
For
\[
\ket{\psi}=a\ket{00}+b\ket{01}+c\ket{10}+d\ket{11},
\]
\[
\ket{\psi}\text{ is product}\iff ad-bc=0.
\]
Moreover, for two-qubit pure states the concurrence is
\[
C(\psi)=2|ad-bc|,
\]
so the Segre equation is exactly the condition $C(\psi)=0$.

\subsubsection*{Example 2 (Bell state): off the Segre variety}
\[
\ket{\Phi^+}=\frac1{\sqrt2}(\ket{00}+\ket{11})
\Rightarrow (a,b,c,d)=\Bigl(\frac1{\sqrt2},0,0,\frac1{\sqrt2}\Bigr),
\]
hence
\[
ad-bc=\frac12\neq 0,
\]
so $\ket{\Phi^+}$ is entangled.

\subsubsection*{Example 3 ($2\times3$): minors as explicit equations}
Let
\[
\ket{\psi}=
a\ket{00}+b\ket{01}+c\ket{02}
+d\ket{10}+e\ket{11}+f\ket{12},
\qquad
M_\psi=\begin{pmatrix}a&b&c\\ d&e&f\end{pmatrix}.
\]
Then $\ket{\psi}$ is product iff all $2\times2$ minors vanish:
\[
ae-bd=0,\qquad af-cd=0,\qquad bf-ce=0.
\]

\subsubsection*{Example 4 (Schmidt rank via rank computation)}
In the $2\times3$ case the only possibilities are
\[
\rank(M_\psi)=1 \iff \text{product},\qquad
\rank(M_\psi)=2 \iff \text{entangled}.
\]

\subsection*{Exercises}

\begin{exercise}[Two-qubit Segre equation]
	Let $\ket{\psi}=a\ket{00}+b\ket{01}+c\ket{10}+d\ket{11}$.
	Show that $\ket{\psi}$ is a product state iff $ad-bc=0$.
\end{exercise}

\noindent\textbf{Solution.}
($\Rightarrow$) If $\ket{\psi}=(\alpha\ket0+\beta\ket1)\otimes(\gamma\ket0+\delta\ket1)$, then
\[
a=\alpha\gamma,\quad b=\alpha\delta,\quad c=\beta\gamma,\quad d=\beta\delta,
\]
so
\[
ad-bc=(\alpha\gamma)(\beta\delta)-(\alpha\delta)(\beta\gamma)=0.
\]
($\Leftarrow$) If $ad-bc=0$, then $\det(M_\psi)=0$ for
$M_\psi=\begin{psmallmatrix}a&b\\c&d\end{psmallmatrix}$, hence $\rank(M_\psi)\le 1$.
Since $\psi\neq0$ in projective space, $\rank(M_\psi)=1$, so $M_\psi=u v^T$ for some $u,v$.
Therefore $\psi=u\otimes v$ is a product state.

\medskip

\begin{exercise}[Bell states and minors]
	Compute $ad-bc$ for $\ket{\Phi^\pm}$ and $\ket{\Psi^\pm}$ and conclude they are entangled.
\end{exercise}

\noindent\textbf{Solution.}
\[
\ket{\Phi^\pm}=\tfrac1{\sqrt2}(\ket{00}\pm\ket{11})
\Rightarrow
(a,b,c,d)=\bigl(\tfrac1{\sqrt2},0,0,\pm\tfrac1{\sqrt2}\bigr)
\Rightarrow ad-bc=\pm\tfrac12\neq0.
\]
\[
\ket{\Psi^\pm}=\tfrac1{\sqrt2}(\ket{01}\pm\ket{10})
\Rightarrow
(a,b,c,d)=\bigl(0,\tfrac1{\sqrt2},\pm\tfrac1{\sqrt2},0\bigr)
\Rightarrow ad-bc=-\,(\pm\tfrac12)\neq0.
\]
Hence all four Bell states are entangled.

\medskip

\begin{exercise}[$2\times3$ minors]
	For $\ket{\psi}=a\ket{00}+b\ket{01}+c\ket{02}+d\ket{10}+e\ket{11}+f\ket{12}$,
	prove that $\ket{\psi}$ is product iff
	\[
	ae-bd=0,\quad af-cd=0,\quad bf-ce=0.
	\]
\end{exercise}

\noindent\textbf{Solution.}
Let $M_\psi=\begin{psmallmatrix}a&b&c\\ d&e&f\end{psmallmatrix}$.
A state is product iff $\rank(M_\psi)=1$, and $\rank(M_\psi)=1$ iff all $2\times2$ minors vanish.
Those minors are exactly
\[
\det\!\begin{pmatrix}a&b\\ d&e\end{pmatrix}=ae-bd,\quad
\det\!\begin{pmatrix}a&c\\ d&f\end{pmatrix}=af-cd,\quad
\det\!\begin{pmatrix}b&c\\ e&f\end{pmatrix}=bf-ce.
\]
Thus the condition is necessary and sufficient.

\medskip

\begin{exercise}[Schmidt rank equals matrix rank]
	Let $M_\psi$ be the coefficient matrix of $\ket{\psi}\in\C^m\otimes\C^n$.
	Prove $\mathrm{SchmidtRank}(\psi)=\rank(M_\psi)$.
\end{exercise}

\noindent\textbf{Solution.}
Interpret $\psi$ as a linear map
\[
\widetilde{\psi}:\C^n\to\C^m,\qquad
\widetilde{\psi}(\ket{j})=\sum_{i=1}^{m}\psi_{ij}\ket{i}.
\]
In the chosen bases, the matrix of $\widetilde{\psi}$ is $M_\psi$.
An SVD
\[
\widetilde{\psi}=U\,\mathrm{diag}(s_1,\dots,s_R)\,V^\dagger
\]
yields
\[
\psi=\sum_{r=1}^{R} s_r\,\ket{u_r}\otimes\ket{v_r},
\]
a Schmidt decomposition with $R$ nonzero Schmidt coefficients.
Hence $R$ equals the number of nonzero singular values, i.e.\ $\rank(M_\psi)$.

\medskip

\begin{exercise}[Flattening drill (three qubits)]
	Let $\ket{\psi}=\sum_{i,j,k\in\{0,1\}}\psi_{ijk}\ket{i}\ket{j}\ket{k}$ be a three-qubit state.
	Form the $2\times4$ flattening matrix for the bipartition $A|BC$ and write down the
	condition for $\ket{\psi}$ to be product across $A|BC$ in terms of $2\times2$ minors.
\end{exercise}

\noindent\textbf{Solution.}
Order the $BC$ basis as $\{\ket{00},\ket{01},\ket{10},\ket{11}\}$ and set
\[
M^{A|BC}_\psi=
\begin{pmatrix}
	\psi_{0,0,0}&\psi_{0,0,1}&\psi_{0,1,0}&\psi_{0,1,1}\\
	\psi_{1,0,0}&\psi_{1,0,1}&\psi_{1,1,0}&\psi_{1,1,1}
\end{pmatrix}.
\]
The state is product across $A|BC$ iff $\rank(M^{A|BC}_\psi)=1$,
equivalently iff every $2\times2$ minor of this $2\times4$ matrix vanishes.
For example, using the first two columns gives the condition
\[
\psi_{000}\psi_{101}-\psi_{001}\psi_{100}=0,
\]
and similarly for each pair of columns.

\section{Algorithm IV: Entanglement Detection and Measurement}
\label{sec:ent-detect}

\subsection{Objective}
Entanglement is not a philosophical label; it is a \emph{measurable resource} that changes what circuits can do,
how noise propagates, and what classical post-processing must estimate.
This section gives two practical detection toolchains:
\begin{enumerate}
	\item \textbf{Exact (two qubits):} compute \emph{concurrence} from a reconstructed (or known) density matrix.
	\item \textbf{Fast (many qubits, few shots):} use \emph{entanglement witnesses} from a small set of observables.
\end{enumerate}
We emphasize operational meaning (what you actually measure), algorithmic cost (shots and classical compute),
and circuit templates to prepare test states.

\subsection{Why these objects matter (engineering + algorithm viewpoint)}
\paragraph{Engineering reality.}
In real devices, you do not ``see'' amplitudes. You see \emph{classical measurement outcomes} and their statistics.
Entanglement detection matters because:
\begin{itemize}
	\item \textbf{Calibration/verification:} your compiler may claim ``prepare a Bell state''; you must verify it.
	\item \textbf{Debugging noise:} many errors (dephasing, crosstalk) collapse entanglement before they visibly change
	single-qubit marginals.
	\item \textbf{Downstream algorithms:} teleportation, QEC primitives, and variational circuits rely on entanglement.
	\item \textbf{Latency/throughput:} full tomography is expensive; witnesses can be low-latency.
\end{itemize}

\paragraph{Two detection regimes.}
\begin{center}
	\begin{tabular}{p{0.33\textwidth} p{0.62\textwidth}}
		\textbf{Regime} & \textbf{Tool and tradeoff}\\ \hline
		Two qubits, need an exact number & Concurrence $C(\rho)\in[0,1]$; requires $\rho$ (tomography or a trusted model).\\
		Many qubits / quick screening & Witness $W$; requires only a few expectation values, but is only sufficient (not necessary).\\
	\end{tabular}
\end{center}


\begin{figure}[t]
	\centering
	\resizebox{\textwidth}{!}{%
		\begin{tikzpicture}[
			>=Latex,
			line cap=round,
			line join=round,
			font=\small,
			box/.style={draw, rounded corners, align=center, minimum height=9mm},
			arr/.style={->, thick},
			node distance=10mm
			]
			\node[box, minimum width=32mm] (prep) {Prepare circuit\\$\rho$ (unknown)};
			
			\node[box, right=10mm of prep, minimum width=40mm] (meas)
			{Measure chosen observables\\shots $\to$ bitstrings};
			
			\node[box, right=10mm of meas, minimum width=44mm] (post)
			{Classical post-processing\\estimate $\langle O_k\rangle$};
			
			\node[box, right=10mm of post, minimum width=42mm] (dec)
			{Decision\\$C(\rho)$ or witness};
			
			\draw[arr] (prep) -- (meas);
			\draw[arr] (meas) -- (post);
			\draw[arr] (post) -- (dec);
			
			\node[align=center, font=\small] at ($(post.south)+(0,-8mm)$)
			{Two choices: reconstruct $\rho$ (tomography) or compute a witness directly.};
		\end{tikzpicture}%
	}
	\caption{Detection pipeline: circuits $\to$ measurements $\to$ classical estimation $\to$ entanglement decision.}
	\label{fig:ent-detect-pipeline}
\end{figure}

\subsection{Warm-up: what measurement means here}

\subsubsection{Observables, expectation values, and Pauli strings}
For $n$ qubits, an important measurement family is the \emph{Pauli strings}:
\[
P = \sigma_{a_1}\otimes \sigma_{a_2}\otimes \cdots \otimes \sigma_{a_n},
\quad
\sigma_{a_k}\in\{I,X,Y,Z\}.
\]
Given a state $\rho$, the expectation value is
\[
\langle P\rangle_\rho := \Tr(\rho P).
\]
In experiments, you estimate $\langle P\rangle$ from repeated shots.
If $P$ has eigenvalues $\pm 1$ and you can measure in its eigenbasis, then
\[
\widehat{\langle P\rangle}=\frac{1}{N}\sum_{s=1}^N o_s,
\qquad o_s\in\{+1,-1\}.
\]

\subsubsection{Basis changes for Pauli measurements}
Hardware often measures in the computational ($Z$) basis.
To measure $X$ on a qubit, apply $H$ then measure $Z$.
To measure $Y$, apply $S^\dagger H$ then measure $Z$ (with $S=\begin{psmallmatrix}1&0\\0&i\end{psmallmatrix}$).
For Pauli strings, apply these basis changes qubit-wise.

\begin{figure}[t]
	\centering
	\begin{tikzpicture}[scale=1.0, line cap=round, line join=round]
		\draw (0,0) -- (7.4,0);
		\draw (0,-0.9) -- (7.4,-0.9);
		\node[left] at (0,0) {$q_1$};
		\node[left] at (0,-0.9) {$q_2$};
		
		\node[draw, rounded corners, minimum width=0.9cm, minimum height=0.55cm] (U1) at (2.0,0) {$H$};
		\node[draw, rounded corners, minimum width=0.9cm, minimum height=0.55cm] (U2) at (2.0,-0.9) {$S^\dagger H$};
		
		\draw (6.2,0) node[draw, circle, inner sep=1.2pt] (m1) {};
		\draw (6.2,-0.9) node[draw, circle, inner sep=1.2pt] (m2) {};
		\node[right] at (6.35,0) {$Z$};
		\node[right] at (6.35,-0.9) {$Z$};
		
		\node[align=center] at (3.9,0.55) {\small Measures $X\otimes Y$ via $Z$-readout};
	\end{tikzpicture}
	\caption{Measuring $X\otimes Y$: apply $H$ on qubit 1 and $S^\dagger H$ on qubit 2, then measure both in $Z$.}
	\label{fig:measure-xy}
\end{figure}
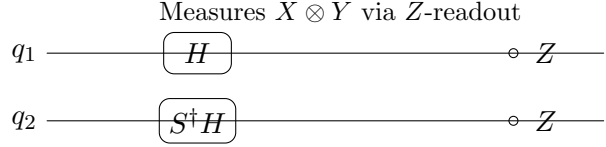

\subsubsection{Separable states and a simple necessary condition}
A bipartite state $\rho_{AB}$ is \emph{separable} if it can be written as
\[
\rho_{AB}=\sum_k p_k\,\rho_A^{(k)}\otimes \rho_B^{(k)},\qquad p_k\ge 0,\ \sum_k p_k=1.
\]
Entangled states are those that are not of this form.
Many tests (witnesses) certify \emph{non-separability} using inequalities satisfied by all separable states.

\subsection{Concurrence (two qubits): what it measures and why it matters}

\subsubsection{Definition (Wootters)}
For a two-qubit density matrix $\rho$ (a $4\times 4$ PSD matrix with trace $1$), define the \emph{spin-flipped} state
\[
\tilde{\rho}:=(\sigma_y\otimes\sigma_y)\,\rho^*\,(\sigma_y\otimes\sigma_y),
\]
where $\rho^*$ is complex conjugation in the computational basis and
\[
\sigma_y=
\begin{pmatrix}
	0 & -i\\
	i & 0
\end{pmatrix}.
\]
Let the eigenvalues of $\rho\tilde{\rho}$ be $\lambda_1\ge \lambda_2\ge \lambda_3\ge \lambda_4\ge 0$.
Define
\[
C(\rho):=\max\Bigl\{0,\ \sqrt{\lambda_1}-\sqrt{\lambda_2}-\sqrt{\lambda_3}-\sqrt{\lambda_4}\Bigr\}.
\]
Then $C(\rho)\in[0,1]$, with $C(\rho)=0$ for separable states and $C(\rho)=1$ for Bell states.

\subsubsection{Geometric/operational meaning}
Concurrence is an \emph{entanglement monotone}: it does not increase under LOCC operations.
For two-qubit pure states $\ket{\psi}$, concurrence reduces to a simple determinant formula (see below),
making it an ideal ``ground truth'' metric for benchmarking state-preparation and noise models.

\subsubsection{Closed form for pure states}
Let
\[
\ket{\psi}=a\ket{00}+b\ket{01}+c\ket{10}+d\ket{11}.
\]
Then the concurrence is
\[
C(\ket{\psi})=2|ad-bc|.
\]
This can be derived from the Schmidt decomposition, or by directly evaluating $\rho\tilde{\rho}$ for $\rho=\ket{\psi}\bra{\psi}$.

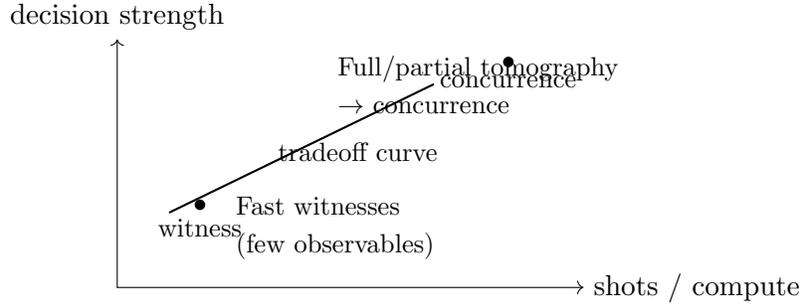
\begin{figure}[t]
	\centering
	\begin{tikzpicture}[scale=1.0, line cap=round, line join=round]
		\draw[->] (0,0) -- (6.2,0) node[right] {shots / compute};
		\draw[->] (0,0) -- (0,3.3) node[above] {decision strength};
		
		\draw[thick] (0.7,1.0) -- (4.2,2.7);
		\node[align=left] at (2.9,0.8) {\small Fast witnesses\\\small (few observables)};
		\node[align=left] at (4.8,2.7) {\small Full/partial tomography\\\small $\to$ concurrence};
		
		\fill (1.1,1.1) circle (2pt);
		\node[below] at (1.1,1.05) {\small witness};
		
		\fill (5.2,3.0) circle (2pt);
		\node[below] at (5.2,2.95) {\small concurrence};
		
		\node at (3.2,1.8) {\small tradeoff curve};
	\end{tikzpicture}
	\caption{Concurrence vs.\ witnesses: concurrence is stronger but typically needs more estimation effort.}
	\label{fig:tradeoff-conc-wit}
\end{figure}

\subsection{Mixed states: explicit concurrence algorithm (two qubits)}

\subsubsection{Algorithm (step-by-step)}
Given a two-qubit density matrix $\rho$ in the computational basis ordering
$\{\ket{00},\ket{01},\ket{10},\ket{11}\}$:

\begin{enumerate}
	\item Form $\rho^*$ (complex conjugate entrywise in this basis).
	\item Compute $\tilde{\rho}=(\sigma_y\otimes\sigma_y)\rho^*(\sigma_y\otimes\sigma_y)$.
	\item Compute $R=\rho\tilde{\rho}$.
	\item Compute the eigenvalues of $R$, sort them decreasing: $\lambda_1\ge\lambda_2\ge\lambda_3\ge\lambda_4$.
	\item Output $C(\rho)=\max\{0,\sqrt{\lambda_1}-\sqrt{\lambda_2}-\sqrt{\lambda_3}-\sqrt{\lambda_4}\}$.
\end{enumerate}

\subsubsection{Worked example: Werner state}
The Werner state is
\[
\rho_W(p)=p\ket{\Phi^+}\bra{\Phi^+}+(1-p)\frac{I_4}{4},
\qquad 0\le p\le 1,
\]
where $\ket{\Phi^+}=\frac{1}{\sqrt2}(\ket{00}+\ket{11})$.
One finds (exercise) that
\[
C(\rho_W(p))=\max\Bigl\{0,\ \frac{3p-1}{2}\Bigr\}.
\]
This is a concrete ``noise knob'': entanglement disappears at $p\le 1/3$.

\subsection{Entanglement witnesses: fast detection from few measurements}

\subsubsection{Definition and decision rule}
A Hermitian operator $W$ is an \emph{entanglement witness} if
\[
\Tr(W\sigma)\ge 0\quad\text{for all separable }\sigma,
\]
but there exists at least one entangled $\rho$ such that $\Tr(W\rho)<0$.
Thus, a negative expectation value \emph{certifies entanglement}.

\subsubsection{Bell-state witness (two qubits)}
For $\ket{\Phi^+}$, a standard witness is
\[
W_{\Phi^+}=\frac12 I_4-\ket{\Phi^+}\bra{\Phi^+}.
\]
If $\Tr(W_{\Phi^+}\rho)<0$, then $\rho$ is entangled and has fidelity with $\ket{\Phi^+}$ exceeding $1/2$.

\subsubsection{Pauli-decomposition form (measurement-friendly)}
Using
\[
\ket{\Phi^+}\bra{\Phi^+}
=\frac14\Bigl(I\otimes I + X\otimes X - Y\otimes Y + Z\otimes Z\Bigr),
\]
we get
\[
W_{\Phi^+}
=\frac14\Bigl(I\otimes I - X\otimes X + Y\otimes Y - Z\otimes Z\Bigr).
\]
So you can evaluate $\Tr(W_{\Phi^+}\rho)$ by estimating only
$\langle X\otimes X\rangle$, $\langle Y\otimes Y\rangle$, $\langle Z\otimes Z\rangle$.

\paragraph{Decision inequality.}
\[
\Tr(W_{\Phi^+}\rho)<0
\quad\Longleftrightarrow\quad
\langle X\otimes X\rangle - \langle Y\otimes Y\rangle + \langle Z\otimes Z\rangle > 1.
\]
This is extremely practical: three correlators, one inequality.

\subsubsection{Shot complexity (rule of thumb)}
If each correlator is estimated from $N$ shots, then standard deviation scales like $O(1/\sqrt{N})$.
Witness tests are attractive because you can make a decision with far fewer settings than tomography.

\subsection{Circuit examples: preparing states to test}

\subsubsection{Bell state preparation}
\[
\ket{00}\ \xrightarrow{H\otimes I}\ \frac{1}{\sqrt2}(\ket{00}+\ket{10})
\ \xrightarrow{\mathrm{CNOT}}\ \ket{\Phi^+}.
\]

\begin{figure}[t]
	\centering
	\begin{tikzpicture}[scale=1.0, line cap=round, line join=round]
		\draw (0,0) -- (8.5,0);
		\draw (0,-1.0) -- (8.5,-1.0);
		\node[left] at (0,0) {$q_1$};
		\node[left] at (0,-1.0) {$q_2$};
		
		\node[draw, rounded corners, minimum width=0.8cm, minimum height=0.55cm] (H) at (1.6,0) {$H$};
		
		\draw (4.4,0) node[draw, circle, inner sep=1.2pt] (ctrl) {};
		\draw (4.4,-1.0) node[draw, circle, inner sep=1.2pt] (tgt) {};
		\draw (tgt) ++(-0.22,0) -- ++(0.44,0);
		\draw (tgt) ++(0,-0.22) -- ++(0,0.44);
		\draw[thick] (ctrl) -- (tgt);
		
		\draw (7.5,0) node[draw, circle, inner sep=1.2pt] (m1) {};
		\draw (7.5,-1.0) node[draw, circle, inner sep=1.2pt] (m2) {};
		\node[right] at (7.65,0) {\small $Z$};
		\node[right] at (7.65,-1.0) {\small $Z$};
		
		\node[align=center] at (5.9,0.65) {\small Bell prep + readout};
	\end{tikzpicture}
	\caption{Circuit to prepare $\ket{\Phi^+}$ from $\ket{00}$.}
	\label{fig:bell-prep}
\end{figure}

\subsubsection{Producing a partially entangled pure state}
Replace $H$ by $R_Y(\theta)$ on the first qubit:
\[
(R_Y(\theta)\otimes I)\ket{00}=\cos(\tfrac\theta2)\ket{00}+\sin(\tfrac\theta2)\ket{10},
\]
then apply CNOT to obtain
\[
\ket{\psi(\theta)}=\cos(\tfrac\theta2)\ket{00}+\sin(\tfrac\theta2)\ket{11}.
\]
Its concurrence is
\[
C(\ket{\psi(\theta)})=|\sin\theta|.
\]
This is a clean knob to validate a detector across entanglement strengths.

\subsubsection{Noisy Bell state (dephasing toy model)}
A simple dephasing model yields
\[
\rho(\gamma)=(1-\gamma)\ket{\Phi^+}\bra{\Phi^+}+\gamma\frac{\ket{00}\bra{00}+\ket{11}\bra{11}}{2},
\quad 0\le \gamma\le 1.
\]
Witness values degrade smoothly with $\gamma$; concurrence drops and eventually hits $0$.

\subsection{Concurrence vs.\ witness: when to use which}

\paragraph{Use concurrence when:}
\begin{itemize}
	\item you are in the two-qubit regime and want a \emph{quantitative} entanglement score,
	\item you have (or can cheaply reconstruct) $\rho$,
	\item you are producing benchmark plots across noise knobs.
\end{itemize}

\paragraph{Use witnesses when:}
\begin{itemize}
	\item you need \emph{fast certification} in a loop (calibration, debugging, regression tests),
	\item you have more qubits or limited measurement settings,
	\item you only need a \emph{yes/no} entanglement certificate for a target family (e.g.\ Bell-like states).
\end{itemize}

\paragraph{A practical hybrid strategy.}
In workflows: use a witness for daily/CI tests; run concurrence/tomography periodically or when a witness fails.

\subsection{Exercises}

\begin{exercise}[Pure-state concurrence formula]
	Let $\ket{\psi}=a\ket{00}+b\ket{01}+c\ket{10}+d\ket{11}$.
	Show that $C(\ket{\psi})=2|ad-bc|$.
\end{exercise}
\noindent\textbf{Solution.}
Write $\rho=\ket{\psi}\bra{\psi}$ and compute the spin-flip:
\[
\ket{\tilde{\psi}}=(\sigma_y\otimes\sigma_y)\ket{\psi^*}.
\]
Using $\sigma_y\ket{0}=i\ket{1}$ and $\sigma_y\ket{1}=-i\ket{0}$, one checks
\[
\ket{\tilde{\psi}} = d^*\ket{00}-c^*\ket{01}-b^*\ket{10}+a^*\ket{11}.
\]
For pure states, Wootters shows $C(\ket{\psi})=|\langle \psi|\tilde{\psi}\rangle|$.
Compute
\[
\langle\psi|\tilde{\psi}\rangle
= a d^* - b c^* - c b^* + d a^*
= 2(ad-bc)^*,
\]
hence $C(\ket{\psi})=|2(ad-bc)^*|=2|ad-bc|$.

\begin{exercise}[Bell witness via correlators]
	Prove the Pauli expansion
	\[
	\ket{\Phi^+}\bra{\Phi^+}
	=\frac14(I\otimes I+X\otimes X-Y\otimes Y+Z\otimes Z),
	\]
	and deduce the witness inequality
	\[
	\Tr(W_{\Phi^+}\rho)<0
	\iff
	\langle X\otimes X\rangle - \langle Y\otimes Y\rangle + \langle Z\otimes Z\rangle > 1.
	\]
\end{exercise}
\noindent\textbf{Solution.}
Compute the action of $X\otimes X$, $Y\otimes Y$, $Z\otimes Z$ on the Bell basis.
For $\ket{\Phi^+}=\frac{1}{\sqrt2}(\ket{00}+\ket{11})$:
\[
(X\otimes X)\ket{\Phi^+}=\ket{\Phi^+},\quad
(Y\otimes Y)\ket{\Phi^+}=-\ket{\Phi^+},\quad
(Z\otimes Z)\ket{\Phi^+}=\ket{\Phi^+}.
\]
Also $\Tr(\ket{\Phi^+}\bra{\Phi^+})=1$ and the projector must be a Hermitian operator
supported on the Bell subspace. The stated Pauli combination is Hermitian, has trace $1$,
and has $\ket{\Phi^+}$ as eigenvector with eigenvalue $1$ while annihilating the orthogonal Bell states
(check similarly), hence equals $\ket{\Phi^+}\bra{\Phi^+}$.
Then
\[
W_{\Phi^+}=\frac12 I-\ket{\Phi^+}\bra{\Phi^+}
=\frac14(I\otimes I - X\otimes X + Y\otimes Y - Z\otimes Z).
\]
Taking expectation in $\rho$ gives
\[
\Tr(W_{\Phi^+}\rho)
=\frac14\Bigl(1-\langle X\otimes X\rangle+\langle Y\otimes Y\rangle-\langle Z\otimes Z\rangle\Bigr),
\]
so negativity is equivalent to
\[
1-\langle X\otimes X\rangle+\langle Y\otimes Y\rangle-\langle Z\otimes Z\rangle<0
\iff
\langle X\otimes X\rangle-\langle Y\otimes Y\rangle+\langle Z\otimes Z\rangle>1.
\]

\begin{exercise}[Werner concurrence threshold]
	For $\rho_W(p)=p\ket{\Phi^+}\bra{\Phi^+}+(1-p)\frac{I_4}{4}$, show that
	\[
	C(\rho_W(p))=\max\Bigl\{0,\frac{3p-1}{2}\Bigr\}.
	\]
\end{exercise}
\noindent\textbf{Solution.}
Use the fact that $\rho_W(p)$ is diagonal in the Bell basis with eigenvalues
\[
\lambda_{\Phi^+}=\frac{1+3p}{4},\qquad
\lambda_{\Phi^-}=\lambda_{\Psi^+}=\lambda_{\Psi^-}=\frac{1-p}{4}.
\]
For Werner states, $\tilde{\rho}=\rho$ (in the Bell basis), hence $\rho\tilde{\rho}=\rho^2$
and the eigenvalues of $\rho\tilde{\rho}$ are $\lambda_i^2$.
Thus $\sqrt{\lambda_i(\rho\tilde{\rho})}=\lambda_i(\rho)$.
Plug into Wootters:
\[
C(\rho)=\max\{0,\lambda_1-\lambda_2-\lambda_3-\lambda_4\}
=\max\Bigl\{0,\frac{1+3p}{4}-3\cdot\frac{1-p}{4}\Bigr\}
=\max\Bigl\{0,\frac{3p-1}{2}\Bigr\}.
\]

\begin{exercise}[Tomography-lite for Bell states]
	Assume $\rho$ is close to $\ket{\Phi^+}$ and you can estimate
	$\langle X\otimes X\rangle,\langle Y\otimes Y\rangle,\langle Z\otimes Z\rangle$.
	Give an estimator for the fidelity $F=\langle \Phi^+|\rho|\Phi^+\rangle$.
\end{exercise}
\noindent\textbf{Solution.}
Using the Pauli expansion,
\[
F=\Tr(\rho\,\ket{\Phi^+}\bra{\Phi^+})
=\frac14\Bigl(1+\langle X\otimes X\rangle-\langle Y\otimes Y\rangle+\langle Z\otimes Z\rangle\Bigr).
\]
So the plug-in estimator is the same expression with empirical correlators.

\begin{exercise}[Circuit knob: concurrence vs.\ $\theta$]
	Prepare $\ket{\psi(\theta)}=\cos(\tfrac\theta2)\ket{00}+\sin(\tfrac\theta2)\ket{11}$
	using $R_Y(\theta)$ + CNOT. Compute $C(\ket{\psi(\theta)})$ and the witness value
	$\Tr(W_{\Phi^+}\ket{\psi(\theta)}\bra{\psi(\theta)})$.
\end{exercise}
\noindent\textbf{Solution.}
Here $a=\cos(\tfrac\theta2)$, $d=\sin(\tfrac\theta2)$, $b=c=0$, so
\[
C(\ket{\psi(\theta)})=2|ad|=2\Bigl|\cos(\tfrac\theta2)\sin(\tfrac\theta2)\Bigr|=|\sin\theta|.
\]
For the witness,
\[
\Tr(W_{\Phi^+}\rho)=\frac12 - |\langle \Phi^+|\psi(\theta)\rangle|^2.
\]
Compute
\[
\langle \Phi^+|\psi(\theta)\rangle
=\frac{1}{\sqrt2}\Bigl(\cos(\tfrac\theta2)+\sin(\tfrac\theta2)\Bigr),
\]
so
\[
|\langle \Phi^+|\psi(\theta)\rangle|^2
=\frac12\Bigl(\cos(\tfrac\theta2)+\sin(\tfrac\theta2)\Bigr)^2
=\frac12\bigl(1+\sin\theta\bigr).
\]
Therefore
\[
\Tr(W_{\Phi^+}\rho)=\frac12-\frac12(1+\sin\theta)=-\frac12\sin\theta.
\]
So the witness detects entanglement for $\sin\theta>0$ (and fails for $\sin\theta\le 0$),
illustrating that witnesses can be directionally biased toward a target family.


\section{Algorithm V: Shor's Algorithm via Quantum Phase Estimation}
\label{sec:shor}

\subsection{Objective}
Shor's algorithm is best understood as a \emph{hybrid pipeline}:
\[
\text{factoring }N
\ \longrightarrow\
\text{period finding }r
\ \longrightarrow\
\text{eigenphase estimation (QPE)}
\ \longrightarrow\
\text{continued fractions + gcd (classical)}.
\]
This section explains why the quantum part is essentially \emph{phase estimation of a modular-multiplication unitary},
how the circuit is organized, and where classical/FPGA acceleration naturally appears (modular arithmetic, streaming statistics,
continued fractions, gcd, and scheduling/latency constraints).

\subsection{Why Shor's algorithm matters}
\paragraph{Cryptography impact (conceptual).}
The widely taught reason is that Shor's algorithm factors integers and computes discrete logs in polynomial time,
threatening RSA and ECC \emph{if} sufficiently large fault-tolerant quantum computers exist.

\paragraph{Systems impact (engineering).}
Shor is also a canonical example of:
\begin{itemize}
	\item a deep quantum circuit with large reversible arithmetic blocks,
	\item repeated controlled powers of a unitary (controlled-$U^{2^k}$),
	\item heavy classical post-processing tightly coupled to the measurement outcomes,
	\item a demanding error-correction/control stack (latency and scheduling).
\end{itemize}
So even before full-scale factoring is feasible, Shor is a design template for how quantum and classical infrastructure must co-exist.

\subsection{From factoring to period finding}
\subsubsection{Reduction (the number-theory skeleton)}
Given an odd composite integer $N$ and an integer $a$ with $1<a<N$:
\begin{enumerate}
	\item If $\gcd(a,N)\neq 1$, you already found a nontrivial factor.
	\item Otherwise, consider the function $f(x)=a^x \bmod N$.
	This function is periodic with some period $r$ (the order of $a$ modulo $N$):
	\[
	a^r \equiv 1 \pmod N,\qquad r=\mathrm{ord}_N(a).
	\]
	\item If $r$ is even and $a^{r/2}\not\equiv -1\pmod N$, then
	\[
	\gcd\!\bigl(a^{r/2}-1,N\bigr),\quad \gcd\!\bigl(a^{r/2}+1,N\bigr)
	\]
	yield nontrivial factors of $N$.
\end{enumerate}

\subsubsection{Why the gcd trick works}
If $a^r\equiv 1\pmod N$, then
\[
a^r-1 \equiv 0 \pmod N
\quad\Rightarrow\quad
(a^{r/2}-1)(a^{r/2}+1)\equiv 0 \pmod N.
\]
If neither factor is $0\pmod N$, then $N$ shares a nontrivial gcd with each factor, producing nontrivial factors.

\begin{exercise}
	Show that if $r$ is even and $a^{r/2}\not\equiv -1 \pmod N$, then at least one of
	$\gcd(a^{r/2}-1,N)$ or $\gcd(a^{r/2}+1,N)$ is a nontrivial factor of $N$.
\end{exercise}

\subsection{Period finding as an eigenphase problem}

\subsubsection{The key unitary: modular multiplication}
Work on the computational basis states $\ket{y}$ with $y\in\{0,1,\dots,N-1\}$.
Define the unitary (a permutation on basis states)
\[
U_a\ket{y}=\ket{a y \bmod N}.
\]
This is unitary because multiplication by $a$ mod $N$ is a bijection on $\{0,\dots,N-1\}$ whenever $\gcd(a,N)=1$.

\subsubsection{Eigenstates from periodic orbits}
Fix $y=1$ and define the orbit
\[
\ket{1},\ \ket{a},\ \ket{a^2},\ \dots,\ \ket{a^{r-1}},
\qquad a^r\equiv 1\pmod N.
\]
Consider the following superpositions (Fourier modes on the cycle):
\[
\ket{\psi_s}
=\frac{1}{\sqrt r}\sum_{k=0}^{r-1} e^{-2\pi i s k/r}\,\ket{a^k \bmod N},
\qquad s=0,1,\dots,r-1.
\]
Then
\[
U_a\ket{\psi_s}=e^{2\pi i s/r}\ket{\psi_s}.
\]
So the period $r$ is encoded in the eigenphases $2\pi s/r$ of $U_a$.

\begin{proof}[Verification]
	Compute
	\[
	U_a\ket{\psi_s}
	=\frac{1}{\sqrt r}\sum_{k=0}^{r-1} e^{-2\pi i s k/r}\,U_a\ket{a^k}
	=\frac{1}{\sqrt r}\sum_{k=0}^{r-1} e^{-2\pi i s k/r}\,\ket{a^{k+1}}.
	\]
	Reindex $k'=k+1$ (mod $r$):
	\[
	U_a\ket{\psi_s}
	=\frac{1}{\sqrt r}\sum_{k'=0}^{r-1} e^{-2\pi i s (k'-1)/r}\,\ket{a^{k'}}
	=e^{2\pi i s/r}\frac{1}{\sqrt r}\sum_{k'=0}^{r-1} e^{-2\pi i s k'/r}\,\ket{a^{k'}}
	=e^{2\pi i s/r}\ket{\psi_s}.
	\]
\end{proof}

\subsection{Quantum phase estimation (QPE) as a measurement}

\subsubsection{What QPE estimates}
QPE estimates an eigenphase $\varphi\in[0,1)$ in
\[
U\ket{u}=e^{2\pi i \varphi}\ket{u}.
\]
In Shor, $U=U_a$ and $\varphi=s/r$ for an unknown $r$ (and a random $s$ depending on the eigenstate component).

\subsubsection{QPE circuit skeleton}
Let the top register have $t$ qubits (precision register) and the bottom register store $\ket{1}$ (work register).
QPE performs:
\begin{enumerate}
	\item Apply $H^{\otimes t}$ to the top register to create a uniform superposition over $x\in\{0,\dots,2^t-1\}$.
	\item Apply controlled-$U^{2^k}$ gates (powers of $U$) conditioned on each top qubit.
	\item Apply inverse QFT on the top register.
	\item Measure the top register to obtain an integer $m\in\{0,\dots,2^t-1\}$ approximating $2^t\varphi$.
\end{enumerate}

\begin{figure}[t]
	\centering
	\begin{tikzpicture}[scale=1.0, line cap=round, line join=round]
		\foreach \y/\lab in {0/{q_{t-1}},-0.7/{\vdots},-1.4/{q_1},-2.1/{q_0}}{
			\draw (0,\y) -- (12.2,\y);
			\node[left] at (0,\y) {$\lab$};
		}
		\draw (0,-3.2) -- (12.2,-3.2);
		\node[left] at (0,-3.2) {$\ket{1}$};
		
		\foreach \x/\y in {1.2/0,1.2/-1.4,1.2/-2.1}{
			\node[draw, rounded corners, minimum width=0.7cm, minimum height=0.45cm] at (\x,\y) {$H$};
		}
		\node at (1.2,-0.7) {\small $H$};
		
		\foreach \y/\pow/\x in {0/{2^{t-1}}/5.4,-1.4/{2^1}/5.4,-2.1/{2^0}/5.4}{
			\fill (\x,\y) circle (1.3pt);
			\draw[thick] (\x,\y) -- (\x,-3.2);
			\node[draw, rounded corners, minimum width=2.4cm, minimum height=0.7cm, align=center] at (\x+2.2,-3.2)
			{\small $U^{\pow}$\\\small (mod mult)};
		}
		\node at (5.4,-0.7) {\small $\cdots$};
		
		\node[draw, rounded corners, minimum width=2.8cm, minimum height=2.1cm, align=center] (iqft) at (10.2,-1.05)
		{\small inverse\\\small QFT};
		
		\foreach \y in {0,-0.7,-1.4,-2.1}{
			\draw (11.7,\y) node[draw, circle, inner sep=1.2pt] {};
			\node[right] at (11.85,\y) {\small meas};
		}
		
		\node[align=left] at (6.2,-4.0)
		{\small QPE measures an eigenphase $\varphi\approx m/2^t$ of $U_a$. In Shor, $\varphi=s/r$.};
	\end{tikzpicture}
	\caption{QPE skeleton (schematic): controlled powers of $U_a$ (modular multiplication) followed by inverse QFT.}
	\label{fig:qpe-skeleton}
\end{figure}
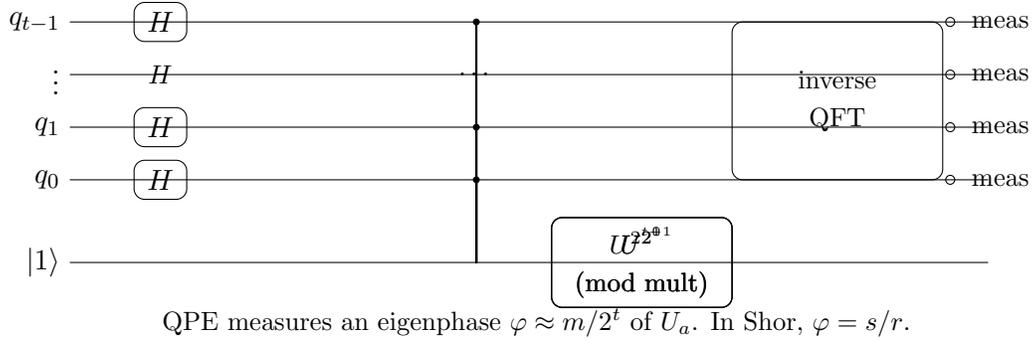

\subsubsection{One-line math of QPE output}
If the work register is an eigenstate $\ket{u}$ of $U$, the measurement $m$ concentrates near $2^t\varphi$.
In Shor, you obtain an estimate
\[
\frac{m}{2^t}\approx \frac{s}{r}.
\]
Then classical continued fractions recovers a candidate denominator $r$.

\subsection{Circuit anatomy of Shor's algorithm (hybrid pipeline)}

\subsubsection{Registers}
\begin{itemize}
	\item \textbf{Phase register:} $t$ qubits (precision).
	\item \textbf{Work register:} $n=\lceil \log_2 N\rceil$ qubits to store numbers mod $N$.
	\item \textbf{Ancillas:} for reversible modular multiplication/addition circuits.
\end{itemize}

\subsubsection{Quantum block: controlled modular multiplication}
Controlled-$U_a^{2^k}$ is implemented as controlled modular multiplication by $a^{2^k}\bmod N$:
\[
\ket{y}\mapsto \ket{a^{2^k}y \bmod N}.
\]
This is the heavy part: large reversible arithmetic (adders, modular reduction, uncomputation).

\subsubsection{Classical block: from $m$ to $r$ to factors}
Given measured $m$, do:
\begin{enumerate}
	\item compute the continued fraction expansion of $m/2^t$ to find $s/r$,
	\item test candidate $r$ by checking whether $a^r\equiv 1\pmod N$,
	\item if $r$ even and $a^{r/2}\not\equiv -1\pmod N$, compute gcds to get factors.
\end{enumerate}


\begin{figure}[t]
	\centering
	\resizebox{\textwidth}{!}{%
		\begin{tikzpicture}[
			font=\small,
			box/.style={draw, rounded corners, align=center, minimum height=9mm},
			arr/.style={-Latex, thick},
			lab/.style={font=\footnotesize},
			node distance=12mm and 14mm
			]
			\node[box, minimum width=40mm] (choose)
			{Choose $a$\\compute $\gcd(a,N)$};
			
			\node[box, right=14mm of choose, minimum width=44mm] (qpe)
			{Quantum: QPE on $U_a$\\measure $m$};
			
			\node[box, right=14mm of qpe, minimum width=52mm] (cf)
			{Classical: CF on $m/2^t$\\candidate $r$};
			
			\node[box, right=14mm of cf, minimum width=46mm] (gcd)
			{$\gcd\!\bigl(a^{r/2}\pm 1,\,N\bigr)$\\output factors};
			
			\draw[arr] (choose) -- (qpe);
			\draw[arr] (qpe) -- (cf);
			\draw[arr] (cf) -- (gcd);
			
			\node[
			below=10mm of cf,
			align=center,
			text width=150mm
			] (note)
			{Hybrid loop: if $r$ fails tests, repeat with new $a$ or new shot outcomes.};
			
		\end{tikzpicture}%
	}
	\caption{Shor pipeline: classical pre-check $\to$ quantum period finding (QPE) $\to$ classical continued fractions + gcd.}
	\label{fig:shor-pipeline}
\end{figure}

\subsection{Worked toy example: $N=15$, $a=2$}

\subsubsection{Step 1: gcd pre-check}
\[
\gcd(2,15)=1,
\]
so continue.

\subsubsection{Step 2: compute the order $r$ (ground truth)}
Compute powers of $2$ mod $15$:
\[
2^1\equiv 2,\quad
2^2\equiv 4,\quad
2^3\equiv 8,\quad
2^4\equiv 16\equiv 1\pmod{15}.
\]
So the period is
\[
r=4.
\]

\subsubsection{Step 3: what eigenphases QPE is estimating}
The eigenphases are
\[
\varphi=\frac{s}{r}\in\Bigl\{0,\frac14,\frac12,\frac34\Bigr\}.
\]
With enough precision $t$, QPE returns $m/2^t$ close to one of these values.

\subsubsection{Step 4: classical recovery of $r$ from a QPE output}
Pick $t=4$ (so $2^t=16$) for an easy toy picture.
If QPE outputs $m=4$, then
\[
\frac{m}{2^t}=\frac{4}{16}=\frac14 \Rightarrow r=4.
\]
If it outputs $m=12$, then $12/16=3/4$ also yields denominator $4$.

\subsubsection{Step 5: factor extraction}
Since $r=4$ is even,
\[
2^{r/2}=2^2=4\not\equiv -1\equiv 14\pmod{15}.
\]
Compute gcds:
\[
\gcd(4-1,15)=\gcd(3,15)=3,\qquad
\gcd(4+1,15)=\gcd(5,15)=5.
\]
Thus $15=3\cdot 5$.

\begin{figure}[t]
	\centering
	\begin{tikzpicture}[scale=1.05, line cap=round, line join=round]
		\draw (0,0) circle (2.0);
		\draw[->] (-2.4,0) -- (2.6,0) node[right] {};
		\draw[->] (0,-2.4) -- (0,2.6) node[above] {};
		
		\foreach \ang/\lab in {0/{0},90/{1/4},180/{1/2},270/{3/4}}{
			\fill ({2*cos(\ang)},{2*sin(\ang)}) circle (2.2pt);
			\node at ({2.25*cos(\ang)},{2.25*sin(\ang)}) {\small $\lab$};
		}
		
		\node[align=center] at (0,-3.0) {\small For $r=4$, eigenphases $\varphi=s/r$ lie at quarter-turns on the unit circle.};
	\end{tikzpicture}
	\caption{Toy geometry: QPE estimates phases $e^{2\pi i s/r}$ on the unit circle; denominators reveal the period $r$.}
	\label{fig:qpe-phases-r4}
\end{figure}

\subsection{Why exponential speedup is possible (and what the bottleneck is)}
\subsubsection{Where the speedup comes from}
Classically, finding the order $r$ is believed to require super-polynomial time in general (for large $N$).
Quantumly, QPE extracts $\varphi=s/r$ with $t=O(\log N)$ precision in poly$(\log N)$ time,
assuming we can implement controlled-$U_a^{2^k}$ efficiently (reversible modular arithmetic).

\subsubsection{What becomes the practical bottleneck}
In real hardware stacks, the bottleneck is not the continued fractions or gcd.
It is:
\begin{itemize}
	\item circuit depth / error correction overhead for large modular arithmetic,
	\item controlled-power scheduling (many sequential stages),
	\item classical control latency for feed-forward (in some variants),
	\item throughput of measurement and classical post-processing at scale.
\end{itemize}

\subsection{Classical post-processing and FPGA acceleration}

\subsubsection{What classical work is in the loop}
Even in the textbook version (no adaptive feed-forward inside the quantum circuit),
you still run a high-throughput classical pipeline:
\begin{itemize}
	\item parse many $m$ samples (shots),
	\item run continued fractions repeatedly,
	\item validate candidate periods ($a^r\bmod N$),
	\item compute gcds and stop when factors found,
	\item manage retries over different $a$ values if needed.
\end{itemize}

\subsubsection{Why FPGA fits naturally}
FPGA acceleration can appear in two places:
\begin{enumerate}
	\item \textbf{Control/IO side:} fast ingestion of readout bits, online histogramming, and low-latency decision logic
	(e.g.\ accept/reject candidate $r$, schedule next runs).
	\item \textbf{Arithmetic microkernels:} modular exponentiation checks ($a^r\bmod N$), gcd computations,
	and continued-fraction kernels implemented as streaming integer arithmetic.
\end{enumerate}

\paragraph{A minimal streaming model.}
Assume a stream of measured integers $m_j\in[0,2^t)$ arrives.
A lightweight FPGA pipeline can compute:
\[
x_j=\frac{m_j}{2^t}\ \rightsquigarrow\ \text{CF candidates } \frac{s}{r}
\ \rightsquigarrow\ \text{validate } r
\ \rightsquigarrow\ \text{emit best $r$ to host}.
\]
Even if the heavy reversible modular arithmetic stays quantum-side, the \emph{classical} verification loop
is a natural place to enforce low latency and high throughput.

\subsection{Exercises}

\begin{exercise}[Order finding and gcd extraction]
	Let $N$ be odd composite and $\gcd(a,N)=1$.
	Assume $r=\mathrm{ord}_N(a)$ is even and $a^{r/2}\not\equiv -1\pmod N$.
	Show that $\gcd(a^{r/2}-1,N)$ is a nontrivial factor of $N$ or $\gcd(a^{r/2}+1,N)$ is.
\end{exercise}
\noindent\textbf{Solution.}
We have $a^r\equiv 1\pmod N$, so
\[
(a^{r/2}-1)(a^{r/2}+1)=a^r-1\equiv 0\pmod N.
\]
Thus $N$ divides the product. If $\gcd(a^{r/2}-1,N)=1$ and $\gcd(a^{r/2}+1,N)=1$ simultaneously,
then $N$ would be coprime to the product, contradicting $N\mid (a^{r/2}-1)(a^{r/2}+1)$.
So at least one gcd is $>1$. It is also $<N$ because $a^{r/2}\not\equiv \pm 1\pmod N$
(the $+1$ case would contradict minimality of $r$, and the $-1$ case is excluded by assumption).
Hence at least one gcd is a nontrivial factor.

\begin{exercise}[Toy Shor instance]
	Work through Shor's classical post-processing for $N=15$ and $a=2$:
	compute $r$, then compute the gcds and recover the factors.
\end{exercise}
\noindent\textbf{Solution.}
As in the section:
\[
2^1\equiv 2,\ 2^2\equiv 4,\ 2^3\equiv 8,\ 2^4\equiv 1\pmod{15},
\]
so $r=4$. Then $2^{r/2}=4\not\equiv -1\pmod{15}$.
Compute
\[
\gcd(4-1,15)=\gcd(3,15)=3,\qquad \gcd(4+1,15)=\gcd(5,15)=5.
\]
Thus $15=3\cdot 5$.

\begin{exercise}[QPE output interpretation]
	Assume QPE returns $m$ with $m/2^t$ close to $\varphi=s/r$ where $0\le s<r$ and $\gcd(s,r)=1$.
	Explain why a continued-fraction expansion of $m/2^t$ can recover $r$ when $t$ is large enough.
\end{exercise}
\noindent\textbf{Solution.}
The continued fraction algorithm returns best rational approximations to a real number.
If $|m/2^t - s/r| < 1/(2r^2)$, then $s/r$ is the unique convergent of the continued fraction with denominator $\le r$.
For sufficiently large precision $t$, QPE ensures $m/2^t$ is within $O(1/2^t)$ of $s/r$ with high probability.
Choosing $t$ so that $2^t \gg r^2$ makes the inequality hold, allowing recovery of $r$ from the convergent denominator.

\begin{exercise}[Eigenstates from modular orbits]
	Let $U_a\ket{y}=\ket{ay\bmod N}$ with $\gcd(a,N)=1$ and let $r$ be the order of $a$ mod $N$.
	Define $\ket{\psi_s}$ as
	\[
	\ket{\psi_s}=\frac{1}{\sqrt r}\sum_{k=0}^{r-1} e^{-2\pi i s k/r}\,\ket{a^k \bmod N}.
	\]
	Show $U_a\ket{\psi_s}=e^{2\pi i s/r}\ket{\psi_s}$.
\end{exercise}
\noindent\textbf{Solution.}
Exactly as in the derivation earlier:
\[
U_a\ket{\psi_s}
=\frac{1}{\sqrt r}\sum_{k=0}^{r-1} e^{-2\pi i s k/r}\,\ket{a^{k+1}}
=e^{2\pi i s/r}\frac{1}{\sqrt r}\sum_{k'=0}^{r-1} e^{-2\pi i s k'/r}\,\ket{a^{k'}}
=e^{2\pi i s/r}\ket{\psi_s}.
\]

\begin{exercise}[Measurement settings for witness-style checks]
	Suppose you want to certify that the work register after some arithmetic block remains close to a target eigenstate
	by measuring a small set of Pauli strings (a witness-like approach).
	Explain why basis-change layers (single-qubit gates) are essential in practice.
\end{exercise}
\noindent\textbf{Solution.}
Hardware typically measures in the $Z$ basis. A generic Pauli string includes $X$ and $Y$ factors, whose eigenbases
are not the computational basis. By inserting local basis changes (e.g.\ $H$ to measure $X$, $S^\dagger H$ to measure $Y$),
one can map the desired Pauli eigenbasis measurement into $Z$-readout.
Without these basis-change layers, you would only access $Z$-diagonal information and miss phase/coherence signals
that are crucial for eigenphase structure and entanglement created by arithmetic blocks.


	
	\Part{Topological Error Correction and Real-Time Decoding (Track A Core)}
	
\section{Noise, Errors, and Stabilizer Codes}
\label{sec:stabilizers}

\subsection{Overview and motivation}
Quantum information is fragile: the same superposition and entanglement that enable speedups
also make states sensitive to unwanted interactions with the environment and control imperfections.
The modern approach to reliability is:

\begin{quote}
	\emph{Model noise \(\to\) extract error information (syndrome) without collapsing data \(\to\)
		classically decode \(\to\) apply corrections (or track them in software).}
\end{quote}

This chapter builds the minimal, reusable toolkit:
\begin{itemize}
	\item how physical noise is abstracted into Pauli error models,
	\item what stabilizers are (as symmetry constraints),
	\item how syndrome extraction works using ancillas,
	\item how simple examples (Bell pair) already behave like a code,
	\item why surface-code geometry turns errors into chains,
	\item how homology packages “what is correctable” into a topological statement,
	\item why the classical side (decoding + real-time control) is a natural FPGA workload.
\end{itemize}

\subsection{From physical noise to Pauli error models}

\subsubsection{Noise as a quantum channel}
A general (Markovian) noise process acting on a density matrix \(\rho\) is modeled as a
\emph{quantum channel} (completely positive trace-preserving map, CPTP):
\[
\mathcal{E}(\rho)=\sum_k E_k \rho E_k^\dagger,
\qquad
\sum_k E_k^\dagger E_k = I.
\]
The \(E_k\) are Kraus operators. This is the most general “memoryless” noise model.

\subsubsection{Pauli operators and Pauli channels}
For one qubit, the Pauli operators are
\[
I=\begin{pmatrix}1&0\\0&1\end{pmatrix},\quad
X=\begin{pmatrix}0&1\\1&0\end{pmatrix},\quad
Y=\begin{pmatrix}0&-i\\ i&0\end{pmatrix},\quad
Z=\begin{pmatrix}1&0\\0&-1\end{pmatrix}.
\]
A \emph{Pauli channel} is
\[
\mathcal{E}(\rho) = p_I \rho + p_X X\rho X + p_Y Y\rho Y + p_Z Z\rho Z,
\qquad
p_I+p_X+p_Y+p_Z=1.
\]
Pauli channels are special because they “stay in the Pauli basis” under many operations (Cliffords),
which makes fault-tolerant analysis tractable.

\subsubsection{Why Pauli errors are enough (the working intuition)}
Two key reasons Pauli error models appear everywhere:
\begin{itemize}
	\item \textbf{Twirling / randomization:} inserting random Cliffords can convert many coherent errors
	into an effective stochastic Pauli channel (at the level of averaged behavior).
	\item \textbf{Stabilizer measurements only care about Pauli components:} for stabilizer codes,
	syndrome extraction effectively diagnoses Pauli-type error patterns; coherent errors often
	manifest as mixtures after repeated syndrome measurements and classical processing.
\end{itemize}

\subsubsection{Single-qubit physical noise examples mapped to Pauli language}

\paragraph{Bit-flip noise.}
With probability \(p\), apply \(X\):
\[
\mathcal{E}_{\mathrm{bit}}(\rho)=(1-p)\rho + p X\rho X.
\]

\paragraph{Phase-flip noise.}
With probability \(p\), apply \(Z\):
\[
\mathcal{E}_{\mathrm{phase}}(\rho)=(1-p)\rho + p Z\rho Z.
\]

\paragraph{Depolarizing noise.}
With probability \(p\), replace the state by maximally mixed:
\[
\mathcal{E}_{\mathrm{dep}}(\rho)=(1-p)\rho + \frac{p}{3}\,(X\rho X + Y\rho Y + Z\rho Z).
\]

\paragraph{Amplitude damping (not Pauli, but often approximated).}
Amplitude damping has Kraus operators
\[
E_0=\begin{pmatrix}1&0\\0&\sqrt{1-\gamma}\end{pmatrix},\quad
E_1=\begin{pmatrix}0&\sqrt{\gamma}\\0&0\end{pmatrix}.
\]
It is not a Pauli channel, but in many regimes one approximates its effect by a mixture that includes
\(X\)-like and \(Z\)-like components when analyzing stabilizer measurements.

\subsection{Stabilizers as symmetry constraints}

\subsubsection{The Pauli group}
For \(n\) qubits, the \(n\)-qubit Pauli group \(\mathcal{P}_n\) is generated by tensor products
of single-qubit Paulis with phases \(\{\pm 1, \pm i\}\):
\[
\mathcal{P}_n=\left\{\omega\, P_1\otimes \cdots \otimes P_n \;:\;
\omega\in\{\pm 1,\pm i\},\ P_j\in\{I,X,Y,Z\}\right\}.
\]
Key fact: any two Paulis either commute or anticommute.

\subsubsection{Definition of a stabilizer code}
A \emph{stabilizer group} \(S\subset \mathcal{P}_n\) is an abelian subgroup not containing \(-I\).
The \emph{codespace} is the joint \(+1\) eigenspace:
\[
\mathcal{C}=\left\{\ket{\psi}\in(\mathbb{C}^2)^{\otimes n} \;:\; g\ket{\psi}=\ket{\psi}\ \text{for all } g\in S\right\}.
\]

If \(S\) has \(m\) independent generators, then \(\dim \mathcal{C}=2^{n-m}\).
We write this as an \([[n,k]]\) stabilizer code with \(k=n-m\) logical qubits.

\subsubsection{Syndrome: how stabilizers detect errors}
Given an error \(E\in\mathcal{P}_n\) acting on a codeword \(\ket{\psi}\in\mathcal{C}\),
measure each generator \(g_i\in S\).
If \(E\) commutes with \(g_i\), the outcome is \(+1\); if \(E\) anticommutes with \(g_i\), the outcome is \(-1\).
Thus the syndrome is the bit-string
\[
s(E) = \big(s_1,\dots,s_m\big)\in\{0,1\}^m,\qquad
s_i=
\begin{cases}
	0,& Eg_i=g_iE\\
	1,& Eg_i=-g_iE.
\end{cases}
\]
This is the algebraic heart of stabilizer decoding: \emph{errors are inferred from commutation patterns}.

\subsection{Ancilla-mediated syndrome extraction}

\subsubsection{Measuring a Pauli stabilizer without measuring the data}
To measure a multi-qubit Pauli operator \(g\) (e.g.\ \(Z_1Z_2Z_3Z_4\)) without collapsing the data in the computational basis,
we use an ancilla qubit and controlled gates so that the ancilla accumulates the eigenvalue information.

\paragraph{Measuring \(Z\)-type stabilizers.}
To measure \(g=Z_{i_1}\cdots Z_{i_w}\):
\begin{enumerate}
	\item Prepare ancilla in \(\ket{0}\).
	\item Apply CNOTs with each data qubit \(i_j\) as control and the ancilla as target.
	\item Measure ancilla in \(Z\). The outcome bit encodes the parity (eigenvalue) of \(g\).
\end{enumerate}

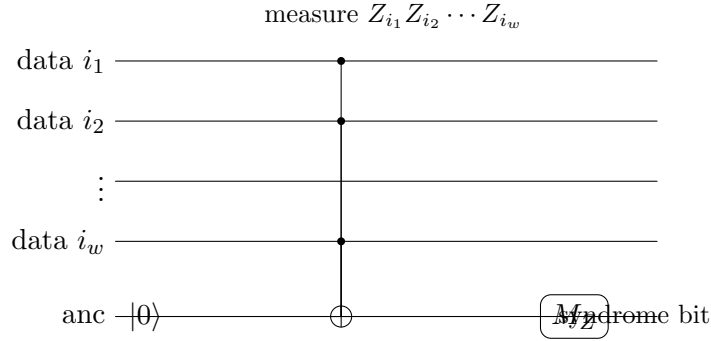
\begin{figure}[t]
	\centering
	\begin{tikzpicture}[scale=1.0, line cap=round, line join=round]
		\draw (0,0) -- (7.2,0); \node[left] at (0,0) {data $i_1$};
		\draw (0,-0.8) -- (7.2,-0.8); \node[left] at (0,-0.8) {data $i_2$};
		\draw (0,-1.6) -- (7.2,-1.6); \node[left] at (0,-1.6) {$\vdots$};
		\draw (0,-2.4) -- (7.2,-2.4); \node[left] at (0,-2.4) {data $i_w$};
		\draw (0,-3.4) -- (7.2,-3.4); \node[left] at (0,-3.4) {anc};
		
		\node at (0.4,-3.4) {$\ket{0}$};
		
		\foreach \y in {0,-0.8,-2.4}{
			\fill (3.0,\y) circle (1.5pt);
			\draw (3.0,\y) -- (3.0,-3.4);
		}
		\draw (3.0,-3.4) circle (0.14);
		\draw (2.86,-3.4) -- (3.14,-3.4);
		\draw (3.0,-3.54) -- (3.0,-3.26);
		
		\node[draw, rounded corners, minimum width=0.9cm, minimum height=0.5cm] at (6.1,-3.4) {$M_Z$};
		\node at (6.9,-3.4) {\small syndrome bit};
		
		\node[align=center] at (3.7,0.6) {\small measure $Z_{i_1}Z_{i_2}\cdots Z_{i_w}$};
	\end{tikzpicture}
	\caption{Ancilla-mediated measurement of a \(Z\)-type stabilizer via parity into the ancilla.}
	\label{fig:measure-z-stabilizer}
\end{figure}

\paragraph{Measuring \(X\)-type stabilizers.}
To measure \(g=X_{i_1}\cdots X_{i_w}\), conjugate by Hadamards because \(H X H = Z\):
\begin{enumerate}
	\item Apply \(H\) to each involved data qubit (or equivalently to the ancilla pattern, depending on implementation).
	\item Measure as a \(Z\)-type stabilizer parity (CNOT fan-in).
	\item Undo the basis change with \(H\).
\end{enumerate}

\begin{figure}[t]
	\centering
	\begin{tikzpicture}[scale=1.0, line cap=round, line join=round]
		\draw (0,0) -- (8.2,0); \node[left] at (0,0) {data $i_1$};
		\draw (0,-0.8) -- (8.2,-0.8); \node[left] at (0,-0.8) {data $i_2$};
		\draw (0,-1.6) -- (8.2,-1.6); \node[left] at (0,-1.6) {$\vdots$};
		\draw (0,-2.4) -- (8.2,-2.4); \node[left] at (0,-2.4) {data $i_w$};
		\draw (0,-3.4) -- (8.2,-3.4); \node[left] at (0,-3.4) {anc};
		
		\foreach \y in {0,-0.8,-2.4}{
			\node[draw, rounded corners, minimum width=0.6cm, minimum height=0.45cm] at (1.2,\y) {$H$};
		}
		
		\foreach \y in {0,-0.8,-2.4}{
			\fill (4.0,\y) circle (1.5pt);
			\draw (4.0,\y) -- (4.0,-3.4);
		}
		\draw (4.0,-3.4) circle (0.14);
		\draw (3.86,-3.4) -- (4.14,-3.4);
		\draw (4.0,-3.54) -- (4.0,-3.26);
		
		\foreach \y in {0,-0.8,-2.4}{
			\node[draw, rounded corners, minimum width=0.6cm, minimum height=0.45cm] at (6.3,\y) {$H$};
		}
		
		\node[draw, rounded corners, minimum width=0.9cm, minimum height=0.5cm] at (7.5,-3.4) {$M_Z$};
		
		\node[align=center] at (4.3,0.7) {\small measure $X_{i_1}X_{i_2}\cdots X_{i_w}$ via basis change};
	\end{tikzpicture}
	\caption{Measuring an \(X\)-type stabilizer by conjugating it into a \(Z\)-type parity measurement.}
	\label{fig:measure-x-stabilizer}
\end{figure}
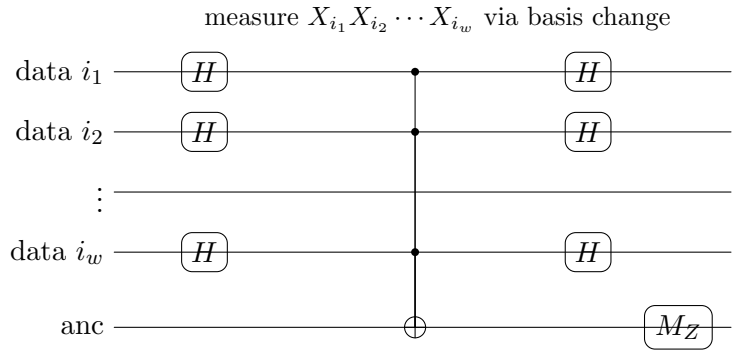

\subsection{Worked example: Bell state as a stabilizer code}

\subsubsection{The Bell state and its stabilizers}
Consider
\[
\ket{\Phi^+}=\frac{1}{\sqrt2}\bigl(\ket{00}+\ket{11}\bigr).
\]
It is stabilized by the commuting operators
\[
g_1 = Z\otimes Z,\qquad g_2 = X\otimes X,
\]
meaning
\[
(Z\otimes Z)\ket{\Phi^+}=\ket{\Phi^+},\qquad (X\otimes X)\ket{\Phi^+}=\ket{\Phi^+}.
\]
Thus \(\ket{\Phi^+}\) is the unique joint \(+1\) eigenstate of \(\langle ZZ, XX\rangle\).
This is an \([[2,0]]\) stabilizer code: it encodes \(k=0\) logical qubits (a single protected state),
but it is still extremely instructive as an \emph{error-detecting} gadget.

\subsubsection{Error detection table (single-qubit Pauli errors)}
Let \(E\) be a single-qubit Pauli on qubit 1. The syndrome is determined by commutation with \(ZZ\) and \(XX\):

\[
\begin{array}{c|cc}
	E\ \text{on qubit 1} & \text{with }ZZ & \text{with }XX\\
	\hline
	X\otimes I & \text{anticommutes} & \text{commutes}\\
	Z\otimes I & \text{commutes} & \text{anticommutes}\\
	Y\otimes I & \text{anticommutes} & \text{anticommutes}\\
\end{array}
\]
So the syndrome bits \((s_{ZZ}, s_{XX})\) are:
\[
X\otimes I \mapsto (1,0),\quad
Z\otimes I \mapsto (0,1),\quad
Y\otimes I \mapsto (1,1).
\]
Similarly for errors on qubit 2 (same pattern).

\subsubsection{Preparing \(\ket{\Phi^+}\) and verifying stabilizers in-circuit}
A standard preparation circuit is \(H\) then CNOT:
\[
\ket{00}\xrightarrow{H\otimes I}\frac{1}{\sqrt2}(\ket{00}+\ket{10})
\xrightarrow{\mathrm{CNOT}}\frac{1}{\sqrt2}(\ket{00}+\ket{11})=\ket{\Phi^+}.
\]

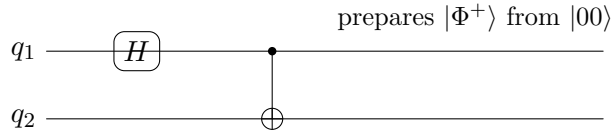
\begin{figure}[t]
	\centering
	\begin{tikzpicture}[scale=1.0, line cap=round, line join=round]
		\draw (0,0) -- (7.4,0); \node[left] at (0,0) {$q_1$};
		\draw (0,-0.9) -- (7.4,-0.9); \node[left] at (0,-0.9) {$q_2$};
		
		\node[draw, rounded corners, minimum width=0.6cm, minimum height=0.45cm] at (1.2,0) {$H$};
		\fill (3.0,0) circle (1.6pt);
		\draw (3.0,0) -- (3.0,-0.9);
		\draw (3.0,-0.9) circle (0.14);
		\draw (2.86,-0.9) -- (3.14,-0.9);
		\draw (3.0,-1.04) -- (3.0,-0.76);
		
		\node[align=center] at (5.7,0.45) {\small prepares $\ket{\Phi^+}$ from $\ket{00}$};
	\end{tikzpicture}
	\caption{Bell-pair preparation circuit.}
	\label{fig:bell-prep}
\end{figure}

To \emph{measure} the stabilizers \(ZZ\) and \(XX\), use the ancilla circuits from Figures
\ref{fig:measure-z-stabilizer} and \ref{fig:measure-x-stabilizer} with \(w=2\).

\subsection{Surface code geometry: errors as chains}

\subsubsection{The geometric picture}
Surface codes live on a 2D lattice of qubits with local parity checks.
The essential idea is:

\begin{quote}
	\emph{local checks detect the endpoints of an error chain; decoding is finding a likely chain consistent with those endpoints.}
\end{quote}

A common abstraction uses:
\begin{itemize}
	\item \textbf{Data qubits} on edges of a square lattice,
	\item \textbf{\(Z\)-type (plaquette) stabilizers} on faces: product of \(Z\) on the four surrounding edges,
	\item \textbf{\(X\)-type (star) stabilizers} on vertices: product of \(X\) on the four incident edges.
\end{itemize}

\subsubsection{Chains and endpoints (a minimal example)}
A \(Z\)-error on a data qubit anticommutes with neighboring \(X\)-type checks,
so it flips the outcomes of the adjacent star checks. If multiple \(Z\) errors occur along a path,
interior flips cancel and only the endpoints remain as syndrome defects.

\begin{figure}[t]
	\centering
	\begin{tikzpicture}[scale=1.0, line cap=round, line join=round]
		\foreach \x in {0,1,2,3}{
			\foreach \y in {0,1,2,3}{
				\fill (\x,\y) circle (1.2pt);
			}
		}
		\foreach \x in {0,1,2}{
			\foreach \y in {0,1,2,3}{
				\draw (\x,\y) -- (\x+1,\y);
			}
		}
		\foreach \x in {0,1,2,3}{
			\foreach \y in {0,1,2}{
				\draw (\x,\y) -- (\x,\y+1);
			}
		}
		
		\draw[very thick] (0,1) -- (1,1) -- (2,1) -- (2,2) -- (2,3);
		
		\node at (0,1.35) {\Large $\ast$};
		\node at (2.35,3.0) {\Large $\ast$};
		
		\node[align=center] at (1.5,-0.8)
		{\small Thick path = a chain of \(Z\) errors on data edges.\\
			\small Syndromes appear at endpoints (defects).};
	\end{tikzpicture}
	\caption{Surface-code intuition: a chain of \(Z\) errors produces syndrome defects at its endpoints.}
	\label{fig:surface-chain}
\end{figure}

\subsubsection{Decoding as matching (high-level)}
Given many defects (endpoints), a decoder tries to pair them with likely paths.
For i.i.d.\ noise, “likely” means short (or low-weight) paths, but real devices use
weighted edges based on calibration data, crosstalk, leakage, and measurement error rates.

\subsection{Homological interpretation}

\subsubsection{Boundary operator viewpoint}
The phrase “errors are chains” can be made precise using a chain complex.
Informally (for a square lattice):
\begin{itemize}
	\item \(C_1\): formal sums of edges (error chains),
	\item \(C_0\): formal sums of vertices (defect locations),
	\item \(\partial: C_1\to C_0\): boundary map sending an edge to its two endpoints.
\end{itemize}
An error chain \(e\in C_1\) has syndrome \(\partial e\in C_0\), its set of endpoints.
Thus syndrome extraction computes a boundary.

\subsubsection{Why “topology explains protection”}
If two different chains \(e\) and \(e'\) have the same boundary, then \(\partial(e-e')=0\),
so \(e-e'\) is a cycle. Cycles that are themselves boundaries correspond to “trivial” differences
that do not change logical information; nontrivial cycles (representing nonzero homology classes)
correspond to \emph{logical operators} that implement a nontrivial transformation on encoded qubits.
That is the core protection statement:

\begin{quote}
	\emph{local measurements cannot distinguish chains differing by a homologically nontrivial cycle.}
\end{quote}

In practice, code distance is the minimal weight of a nontrivial cycle:
it is the minimum number of physical errors needed to implement a logical error.

\subsection{Hardware perspective and FPGA relevance}

\subsubsection{Where the classical workload sits}
Each QEC cycle produces classical bits:
\begin{itemize}
	\item syndrome bits from stabilizer measurements,
	\item possibly flags for leakage, ancilla faults, or measurement confidence,
	\item timing/alignment metadata (cycle index, shot id).
\end{itemize}
The classical side must do (often in real time):
\begin{enumerate}
	\item \textbf{Syndrome processing:} compute differences between rounds (to handle measurement noise),
	\item \textbf{Decoding:} map syndromes to corrections (or Pauli frame updates),
	\item \textbf{Feedback/control:} schedule next operations, apply conditional resets, update parameters.
\end{enumerate}

\subsubsection{Why FPGA is a natural fit}
FPGA advantages match the QEC loop structure:
\begin{itemize}
	\item \textbf{Deterministic low latency:} stable cycle-to-cycle timing (jitter control),
	\item \textbf{Streaming bit-level compute:} parity, XOR networks, small-state machines,
	\item \textbf{Massive parallelism:} many checks updated concurrently,
	\item \textbf{Custom datapaths:} decode kernels (e.g.\ union-find, MWPM variants, lookup-table microdecoders).
\end{itemize}
In surface-code-scale systems, the decoder and the data-movement fabric (syndrome routing, buffering, time alignment)
become first-class architecture problems. This is precisely where FPGA/ASIC control hardware appears.

\subsection{Exercises}

\begin{exercise}[Pauli commutation and syndrome bits]
	Let \(g=Z_1Z_2Z_3Z_4\). For each single-qubit Pauli error \(E=X_j\), \(Z_j\), \(Y_j\) acting on qubit \(j\in\{1,2,3,4\}\),
	determine whether \(E\) commutes or anticommutes with \(g\).
\end{exercise}
\noindent\textbf{Solution.}
Use: \(X\) anticommutes with \(Z\), \(Y\) anticommutes with \(Z\), and \(Z\) commutes with \(Z\).
Also tensor factors on different qubits commute.
Thus for \(E=X_j\) or \(E=Y_j\), the \(j\)-th factor anticommutes with the corresponding \(Z_j\) in \(g\),
so \(Eg=-gE\) (anticommute). For \(E=Z_j\), all factors commute, so \(Eg=gE\).

\begin{exercise}[Bell stabilizers]
	Verify directly that \((Z\otimes Z)\ket{\Phi^+}=\ket{\Phi^+}\) and \((X\otimes X)\ket{\Phi^+}=\ket{\Phi^+}\).
\end{exercise}
\noindent\textbf{Solution.}
Compute
\[
(Z\otimes Z)\ket{00}=\ket{00},\qquad (Z\otimes Z)\ket{11}=( -\ket{1})\otimes(-\ket{1})=\ket{11},
\]
so \((Z\otimes Z)\ket{\Phi^+}=\ket{\Phi^+}\).
Also
\[
(X\otimes X)\ket{00}=\ket{11},\qquad (X\otimes X)\ket{11}=\ket{00},
\]
so
\[
(X\otimes X)\ket{\Phi^+}=\frac{1}{\sqrt2}(\ket{11}+\ket{00})=\ket{\Phi^+}.
\]

\begin{exercise}[Syndrome of single-qubit errors on a Bell pair]
	Let \(S=\langle ZZ,XX\rangle\) and \(\ket{\Phi^+}\) be the codespace state.
	Compute the syndrome bits \((s_{ZZ},s_{XX})\) for \(E=X\otimes I\), \(Z\otimes I\), and \(Y\otimes I\).
\end{exercise}
\noindent\textbf{Solution.}
Compute commutation:
\[
(X\otimes I)(Z\otimes Z)=-(Z\otimes Z)(X\otimes I)\quad\Rightarrow\quad s_{ZZ}=1,
\]
\[
(X\otimes I)(X\otimes X)=(X\otimes X)(X\otimes I)\quad\Rightarrow\quad s_{XX}=0.
\]
So \(X\otimes I\mapsto (1,0)\).
Similarly,
\[
(Z\otimes I)\ \text{commutes with }ZZ \Rightarrow s_{ZZ}=0,\qquad
(Z\otimes I)\ \text{anticommutes with }XX \Rightarrow s_{XX}=1,
\]
so \(Z\otimes I\mapsto (0,1)\).
Finally \(Y=iXZ\) anticommutes with both \(ZZ\) and \(XX\), hence \(Y\otimes I\mapsto(1,1)\).

\begin{exercise}[Why ancilla parity measurement works (two-qubit \(ZZ\))]
	Consider data qubits in a joint eigenstate of \(Z\otimes Z\) with eigenvalue \(\lambda\in\{+1,-1\}\).
	Show that the ancilla circuit “CNOT from each data qubit into ancilla, then measure ancilla in \(Z\)”
	returns a classical bit encoding \(\lambda\).
\end{exercise}
\noindent\textbf{Solution.}
Write the data in the computational basis:
\(\ket{00},\ket{01},\ket{10},\ket{11}\).
The operator \(Z\otimes Z\) has eigenvalue \(+1\) on even parity states \(\ket{00},\ket{11}\)
and eigenvalue \(-1\) on odd parity states \(\ket{01},\ket{10}\).
The two CNOTs compute parity into the ancilla:
starting ancilla \(\ket{0}\), it flips once for each data qubit in state \(\ket{1}\),
so it ends in \(\ket{0}\) for even parity and \(\ket{1}\) for odd parity.
Thus ancilla measurement reveals the \(ZZ\) eigenvalue without directly measuring data.

\begin{exercise}[Chains and endpoints]
	In Figure~\ref{fig:surface-chain}, interpret the thick path as an element of \(C_1\).
	Explain why only its endpoints are detected by local checks (syndrome bits), not the interior.
\end{exercise}
\noindent\textbf{Solution.}
Each individual \(Z\) error flips the two neighboring \(X\)-type checks adjacent to that edge.
Along a chain, interior vertices touch two error edges, so the corresponding check is flipped twice:
\((-1)\cdot(-1)=+1\), canceling out. Only vertices incident to an odd number (here, one) of error edges flip,
i.e.\ the boundary \(\partial e\), the endpoints.

\begin{exercise}[FPGA-friendly microtasks]
	List three decoder-adjacent tasks that are naturally streaming/bitwise and thus FPGA-friendly.
	For each, state the input and output bitwidth at a high level.
\end{exercise}
\noindent\textbf{Solution.}
Examples:
\begin{itemize}
	\item Syndrome differencing across rounds: input \(m\) syndrome bits for round \(t\) and \(m\) bits for round \(t-1\),
	output \(m\) bits via XOR.
	\item Defect extraction: input \(m\) syndrome bits, output a sparse list or bitmask of defect locations (same \(m\) bits or compressed indices).
	\item Local rule / micro-decoder stage: input a neighborhood window of syndrome bits (e.g.\ \(O(10)\)–\(O(100)\) bits),
	output correction bits for a subset of edges (few to tens of bits) or updates to a Pauli frame record.
\end{itemize}
All are low-precision, high-throughput, fixed-latency logic patterns well-suited to FPGA pipelines.
	
\section{Decoder Requirements and Metrics (Real-Time, Correctness, and Product Knobs)}
\label{sec:req-metrics}

\subsection{Objective and central claim}
This section translates ``decoder quality'' into \emph{engineering requirements} that you can
measure, test, and negotiate as product knobs.

\medskip
\noindent\textbf{Central claim.}
A QEC decoder is not just an algorithm; it is a \emph{real-time service}.
Therefore, the only meaningful notion of ``good'' is a \emph{joint} statement about:
\[
\text{(deadline stability)}\ \wedge\ \text{(tail latency)}\ \wedge\ \text{(throughput under bursts)}\ \wedge\ \text{(correctness under a noise model)}.
\]
You do not ship “lowest average latency” if p999 misses deadlines.
You do not ship “highest accuracy” if it creates backlog growth.
You do not claim “works on distance \(d\)” without stating the cadence and the time-window policy.

\medskip
\noindent\textbf{Running mental model.}
Each QEC round (cycle) produces a syndrome slice \(s_t\).
The decoder ingests \(s_t\) (often together with a window \(s_{t-W+1:t}\)),
and must output a correction decision \(c_t\) (or a Pauli-frame update) before a deadline.
If the decoder ever falls behind (backlog), the system becomes unstable.

\subsection{What ``real-time'' means: deadlines, cadence, and backlog stability}

\subsubsection{Cadence and deadlines}
Let:
\begin{itemize}
	\item \(T_{\mathrm{cycle}}\): QEC cycle period (cadence), e.g.\ microseconds to milliseconds depending on platform,
	\item \(D\): decoding deadline (often \(D \le T_{\mathrm{cycle}}\) after accounting for IO and actuation),
	\item \(L_t\): decoder latency for round \(t\) (end-to-end: ingest \(\to\) compute \(\to\) output visible to control),
	\item \(A_t\): arrival time of syndrome slice \(s_t\).
\end{itemize}
The real-time requirement is:
\[
\forall t,\qquad A_t + L_t \le A_t + D.
\]
Equivalently, \(L_t \le D\) for all \(t\).

\subsubsection{Backlog and stability: the queueing statement}
If a decoder is a server and syndrome slices are jobs, backlog grows when the server falls behind.
A minimal discrete-time backlog model:
\[
B_{t+1} = \max\{0,\ B_t + L_t - T_{\mathrm{cycle}}\}.
\]
Here \(B_t\) is “how far behind schedule” you are just before round \(t\) arrives.

\medskip
\noindent\textbf{Stability condition (practical).}
Even if \(\mathbb{E}[L_t] < T_{\mathrm{cycle}}\), occasional spikes can make \(B_t\) drift upward.
So you must care about \emph{tails} and \emph{burst behavior}, not just means.

\subsubsection{The product definition of real-time}
In a product spec, ``real-time'' should be written as a concrete SLA:
\begin{itemize}
	\item \textbf{Cadence:} \(T_{\mathrm{cycle}}=\) \underline{\hspace{2.5cm}}.
	\item \textbf{Deadline:} \(D=\) \underline{\hspace{2.5cm}}.
	\item \textbf{SLA:} \( \Pr[L \le D] \ge 1-\epsilon \) with \(\epsilon=\) \underline{\hspace{1.5cm}} (e.g.\ \(10^{-6}\)).
	\item \textbf{Backlog:} over a run of \(N\) rounds, maximum backlog \(B_{\max}\le\) \underline{\hspace{2cm}}.
\end{itemize}

\subsection{Latency metrics: mean vs.\ p99/p999, jitter, and spikes}

\subsubsection{Core latency metrics}
Given measured latencies \(L_1,\dots,L_N\):
\begin{itemize}
	\item \(\mu = \frac{1}{N}\sum_{t=1}^N L_t\) (mean),
	\item \(p99 = \text{99th percentile of } L\),
	\item \(p999 = \text{99.9th percentile of } L\),
	\item \(L_{\max} = \max_t L_t\).
\end{itemize}

\subsubsection{Jitter (cycle-to-cycle variability)}
Two practical jitter measures:
\[
J_{\mathrm{abs}} := \mathrm{std}(L_t),
\qquad
J_{\mathrm{diff}} := \mathrm{std}(L_{t+1}-L_t).
\]
The first measures overall spread; the second measures rapid variability that tends to break pipelining.

\subsubsection{Spike taxonomy (what you log)}
When you see p999 failures, they are almost always attributable to a small set of causes:
\begin{itemize}
	\item cache misses / memory stalls,
	\item buffer contention / burst arrivals,
	\item branch-heavy code paths (data-dependent),
	\item OS scheduling / interrupts (CPU),
	\item PCIe / network jitter (host-device),
	\item garbage collection / allocator spikes (managed runtimes),
	\item thermal throttling (rare, but real).
\end{itemize}
Your benchmark harness should label latency samples with a \emph{reason code}
(or at least enough telemetry to infer one).

\subsection{Throughput metrics: sustained rate, burst rate, and buffering}

\subsubsection{Definitions}
Let:
\begin{itemize}
	\item \(R_{\mathrm{in}} = 1/T_{\mathrm{cycle}}\): required sustained input rate (rounds/sec),
	\item \(R_{\mathrm{svc}} = 1/\mu\): service rate implied by mean latency (rounds/sec),
	\item \(B_{\mathrm{buf}}\): buffer capacity (in rounds or in bytes),
	\item \(R_{\mathrm{burst}}\): burst arrival rate when the system delivers multiple slices quickly (common with batching / transport).
\end{itemize}

\subsubsection{Sustained throughput requirement}
For stability, you need at least:
\[
R_{\mathrm{svc}} > R_{\mathrm{in}}
\quad\Longleftrightarrow\quad
\mu < T_{\mathrm{cycle}}.
\]
But to survive bursts and tails, you typically need a safety margin:
\[
\mu \le \alpha\, T_{\mathrm{cycle}}
\quad\text{with}\quad \alpha \in [0.3,0.7]\ \text{(platform dependent)}.
\]
The remaining budget goes to IO, control, and contingency for spikes.

\subsubsection{Burst handling and buffer sizing}
If bursts of size \(K\) rounds can arrive within a short transport interval,
your buffer must absorb them:
\[
B_{\mathrm{buf}} \ge K + \text{(worst-case temporary backlog in compute)}.
\]
A practical buffer spec is written using a \emph{burst profile}:
\begin{itemize}
	\item maximum burst size \(K_{\max}\),
	\item maximum burst frequency (how often bursts occur),
	\item maximum gap between bursts.
\end{itemize}

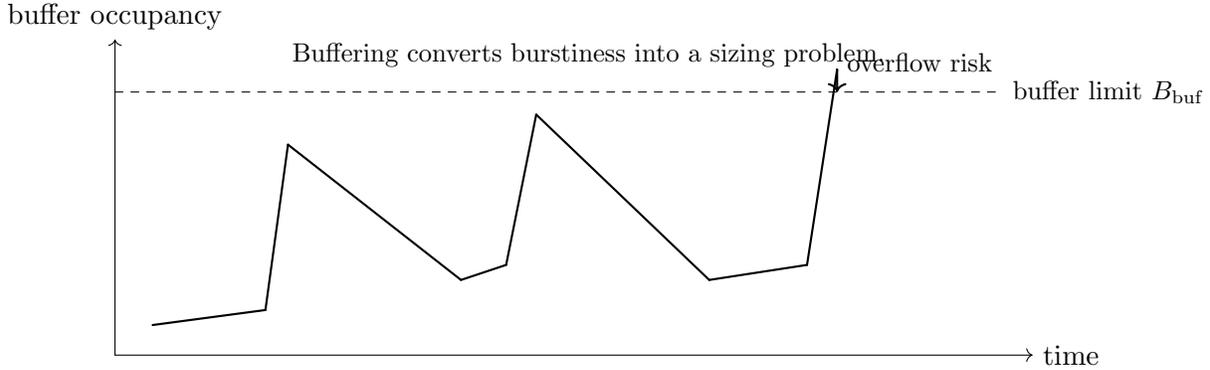
\begin{figure}[t]
	\centering
	\begin{tikzpicture}[scale=1.0, line cap=round, line join=round]
		\draw[->] (0,0) -- (12.2,0) node[right] {time};
		\draw[->] (0,0) -- (0,4.2) node[above] {buffer occupancy};
		
		\draw[dashed] (0,3.5) -- (11.8,3.5);
		\node[right] at (11.8,3.5) {\small buffer limit $B_{\mathrm{buf}}$};
		
		\draw[thick] (0.5,0.4) -- (2.0,0.6);
		\draw[thick] (2.0,0.6) -- (2.3,2.8); 
		\draw[thick] (2.3,2.8) -- (4.6,1.0); 
		\draw[thick] (4.6,1.0) -- (5.2,1.2);
		\draw[thick] (5.2,1.2) -- (5.6,3.2); 
		\draw[thick] (5.6,3.2) -- (7.9,1.0); 
		\draw[thick] (7.9,1.0) -- (9.2,1.2);
		\draw[thick] (9.2,1.2) -- (9.6,3.8); 
		
		\draw[->, thick] (9.6,3.8) -- (9.6,3.5);
		\node[right] at (9.6,3.9) {\small overflow risk};
		
		\node[align=center] at (6.3,4.0)
		{\small Buffering converts burstiness into a sizing problem.};
	\end{tikzpicture}
	\caption{Buffer occupancy under bursts: even with adequate mean throughput, insufficient buffering can overflow during clustered arrivals.}
	\label{fig:buffer-bursts}
\end{figure}

\subsection{Correctness metrics: success criteria and failure modes}

\subsubsection{What correctness means for a decoder}
Decoders do not need to output the exact physical error \(E\); they need to output a correction
\(\widehat{E}\) that produces the same syndrome and differs from the true error only by a stabilizer:
\[
\widehat{E}E \in S
\quad\text{(success condition)}.
\]
Equivalently: the net effect on the logical state is trivial.

\subsubsection{Decoder correctness metrics}
Typical correctness metrics depend on your test model and whether you simulate logical qubits:
\begin{itemize}
	\item \textbf{Logical failure rate:} \(p_L(d)\) per round (or per unit time) at code distance \(d\).
	\item \textbf{Failure per shot:} probability that a full experiment run produces a logical error.
	\item \textbf{Decoder mismatch rate:} fraction of rounds where \(\widehat{E}\) is inconsistent with syndrome constraints
	(often indicates implementation bug).
	\item \textbf{Time-correlated failure:} failure probability conditioned on recent events (bursts, measurement errors, leakage flags).
\end{itemize}

\subsubsection{Failure modes you must name in the spec}
A product-quality decoder spec should explicitly list failure modes:
\begin{itemize}
	\item \textbf{Deadline miss:} correction arrives too late (even if logically correct).
	\item \textbf{Backlog collapse:} sustained overload causes cascading misses.
	\item \textbf{Model mismatch:} decoder assumes i.i.d.\ Pauli noise but device has strong bias, drift, or correlations.
	\item \textbf{Measurement errors:} syndrome bits wrong; requires time-differencing or 3D decoding.
	\item \textbf{Leakage / non-Pauli events:} error model invalidates assumptions unless flagged and handled.
	\item \textbf{Implementation bugs:} wrong boundary conditions, indexing, windowing, or version mismatch.
\end{itemize}

\subsection{Scaling knobs: code distance, window size, and policy variants}

\subsubsection{Knob 1: code distance \(d\)}
Distance \(d\) controls robustness but increases workload:
\begin{itemize}
	\item number of data qubits \(\sim O(d^2)\),
	\item number of stabilizers \(\sim O(d^2)\),
	\item number of syndrome bits per round \(\sim O(d^2)\),
	\item decoding complexity depends on algorithm (often near-linear in number of defects for union-find, etc.).
\end{itemize}

\subsubsection{Knob 2: decoding window size \(W\)}
If measurement errors exist, decoders often use multiple rounds (a spacetime graph).
Window size \(W\) trades:
\[
\text{better handling of measurement noise}
\quad \leftrightarrow \quad
\text{higher latency/memory/bandwidth}.
\]
In real-time systems, a key product knob is whether you do:
\begin{itemize}
	\item \textbf{strict streaming:} fixed small \(W\), constant-time updates,
	\item \textbf{micro-batching:} accumulate \(W\) rounds then decode (better accuracy, worse latency),
	\item \textbf{hybrid:} streaming baseline with periodic “repair passes.”
\end{itemize}

\subsubsection{Knob 3: policy variants}
Decoding is full of policy choices that affect speed/quality:
\begin{itemize}
	\item boundary conditions (planar vs.\ toric),
	\item weighting model (uniform vs.\ calibrated weights),
	\item tie-breaking (deterministic vs.\ randomized),
	\item post-processing passes (local cleanups),
	\item leakage-aware rules (extra flags and constraints).
\end{itemize}
A good benchmark suite treats these as named profiles, not hidden code paths.

\subsection{Measurement plan: traces, logs, and reproducible benchmarks}

\subsubsection{The minimal trace you should collect}
To make results reproducible and debuggable, record:
\begin{itemize}
	\item \textbf{Configuration:} code family, \(d\), window \(W\), noise model parameters, seed,
	\item \textbf{IO sizes:} number of syndrome bits per round, metadata format/version,
	\item \textbf{Timing:} timestamps for ingest, compute start/end, output commit,
	\item \textbf{Latency sample:} \(L_t\) per round with percentiles computed after run,
	\item \textbf{Backlog/buffer:} occupancy over time and overflow events,
	\item \textbf{Correctness:} logical failure events with context (recent syndromes, flags).
\end{itemize}

\subsubsection{Determinism rules for benchmarks}
For a benchmark to be “scientific enough,” enforce:
\begin{itemize}
	\item fixed RNG seeds (with explicit reporting),
	\item pinned CPU frequency or dedicated FPGA clock domain,
	\item no hidden warm-up effects (separate warm-up from measurement),
	\item logs include build hash / git commit and compiler flags,
	\item fixed input trace files archived alongside results.
\end{itemize}

\subsection{Exercises (design-a-metric)}

\begin{exercise}[SLA writing]
	Write an SLA for a decoder that must run at \(T_{\mathrm{cycle}}=1\ \mu s\) with deadline \(D=0.7\ \mu s\).
	Specify a tail-latency requirement using p999 and a backlog requirement over \(N=10^9\) rounds.
\end{exercise}
\noindent\textbf{Solution sketch.}
A concrete SLA could be:
\[
p999(L) \le 0.7\ \mu s,\qquad \Pr[L > 0.7\ \mu s] \le 10^{-6}.
\]
Backlog requirement: with discrete update \(B_{t+1}=\max(0,B_t+L_t-T_{\mathrm{cycle}})\),
require \(B_{\max}\le 5\ \mu s\) (i.e.\ at most 5 cycles behind) over \(10^9\) rounds, with zero buffer overflows.

\begin{exercise}[Latency budget decomposition]
	Assume the end-to-end path is:
	\[
	\text{readout decode} (t_{\mathrm{io}})\ +\ \text{syndrome preprocess} (t_{\mathrm{prep}})\ +\ \text{decode core} (t_{\mathrm{core}})\ +\ \text{commit/output} (t_{\mathrm{out}}).
	\]
	For deadline \(D\), propose a budget split (percentages) that accounts for spikes in \(t_{\mathrm{core}}\).
\end{exercise}
\noindent\textbf{Solution sketch.}
A common pattern is to allocate conservative headroom to the core:
e.g.\ \(t_{\mathrm{io}}:20\%\), \(t_{\mathrm{prep}}:10\%\), \(t_{\mathrm{core}}:50\%\), \(t_{\mathrm{out}}:10\%\),
leaving \(10\%\) explicit slack. If \(t_{\mathrm{core}}\) is spiky, reduce its nominal budget
and reserve slack for p99/p999 events (or pipeline the core to flatten variability).

\begin{exercise}[Correctness metric choice]
	You have two decoder variants:
	A has lower logical failure rate but occasional deadline misses.
	B has slightly worse logical failure rate but no deadline misses.
	Define a single scalar score that reflects a product requirement: ``never miss deadlines'' is strict.
\end{exercise}
\noindent\textbf{Solution sketch.}
Use a hard constraint: if any deadline miss occurs, score is zero (or marked FAIL).
Otherwise score by logical failure rate, e.g.\ \(\text{Score}=-\log_{10}(p_L)\).
This reflects that real-time misses are catastrophic in closed-loop control.

\begin{exercise}[Burst profile]
	Create a burst profile for syndrome delivery that could arise from batching:
	specify \(K_{\max}\), burst interval, and required buffer \(B_{\mathrm{buf}}\).
	Then explain how you would measure it in a trace.
\end{exercise}
\noindent\textbf{Solution sketch.}
Example: bursts of up to \(K_{\max}=64\) rounds arriving within \(5\ \mu s\), at most once per \(1\ ms\).
Buffer should exceed \(64\) plus worst-case compute backlog, e.g.\ \(B_{\mathrm{buf}}=128\) rounds.
Measure by logging arrival timestamps and computing inter-arrival gaps; detect clusters and count their sizes.

\subsection{Templates (copy-paste)}

\subsubsection{Latency-budget worksheet}
\begin{quote}
	\textbf{Latency budget worksheet (decoder service)}\\
	\begin{tabular}{@{}p{0.42\linewidth}p{0.25\linewidth}p{0.25\linewidth}@{}}
		\textbf{Item} & \textbf{Target (mean)} & \textbf{Target (p999)}\\
		\hline
		Cycle period \(T_{\mathrm{cycle}}\) & \underline{\hspace{3cm}} & ---\\
		Deadline \(D\) & \underline{\hspace{3cm}} & ---\\
		IO ingest \(t_{\mathrm{io}}\) & \underline{\hspace{3cm}} & \underline{\hspace{3cm}}\\
		Preprocess \(t_{\mathrm{prep}}\) & \underline{\hspace{3cm}} & \underline{\hspace{3cm}}\\
		Decode core \(t_{\mathrm{core}}\) & \underline{\hspace{3cm}} & \underline{\hspace{3cm}}\\
		Commit/output \(t_{\mathrm{out}}\) & \underline{\hspace{3cm}} & \underline{\hspace{3cm}}\\
		Slack / margin & \underline{\hspace{3cm}} & \underline{\hspace{3cm}}\\
		\hline
		\textbf{Total} & \underline{\hspace{3cm}} & \underline{\hspace{3cm}}\\
	\end{tabular}
	
	\medskip
	\textbf{Pass/Fail rules}\\
	(1) \(p999(L)\le D\). \quad
	(2) Over run length \(N=\underline{\hspace{1.5cm}}\), backlog \(B_{\max}\le \underline{\hspace{1.5cm}}\). \quad
	(3) Buffer overflow count \(=0\).
\end{quote}

\subsubsection{Benchmark report skeleton}
\begin{quote}
	\textbf{Benchmark report (decoder)}\\
	\textbf{Build:} git=\underline{\hspace{3cm}}, flags=\underline{\hspace{3cm}}, platform=\underline{\hspace{3cm}}\\
	\textbf{Config:} code=\underline{\hspace{2cm}}, distance \(d=\underline{\hspace{1cm}}\), window \(W=\underline{\hspace{1cm}}\),
	noise=\underline{\hspace{2.5cm}}, seed=\underline{\hspace{1.5cm}}\\
	\textbf{Cadence:} \(T_{\mathrm{cycle}}=\underline{\hspace{1.5cm}}\), deadline \(D=\underline{\hspace{1.5cm}}\),
	run length \(N=\underline{\hspace{2cm}}\) rounds\\
	\textbf{Latency:} mean=\underline{\hspace{1.5cm}}, p99=\underline{\hspace{1.5cm}}, p999=\underline{\hspace{1.5cm}}, max=\underline{\hspace{1.5cm}}\\
	\textbf{Throughput/Buffer:} sustained rate=\underline{\hspace{1.5cm}}, burst profile=\underline{\hspace{3cm}},
	max occupancy=\underline{\hspace{1.5cm}}, overflows=\underline{\hspace{1cm}}\\
	\textbf{Correctness:} logical failure rate \(p_L=\underline{\hspace{1.5cm}}\),
	deadline misses=\underline{\hspace{1cm}}, notes=\underline{\hspace{4cm}}\\
	\textbf{Artifacts:} trace file=\underline{\hspace{3cm}}, plots=\underline{\hspace{3cm}}, raw logs=\underline{\hspace{3cm}}
\end{quote}
	

\section{Algorithm VI: Surface-Code Decoding (\texorpdfstring{Topology $\to$ Graph $\to$ Algorithm}{Topology -> Graph -> Algorithm})}
\label{sec:surface-decode}

\newcommand{\EMDot}{1.3pt}
\newcommand{\EMNodeDot}{1.6pt}
\newcommand{\EMSyndromeR}{4.2pt}

\subsection{Objective (what we decode and why it matters)}

Surface codes turn \emph{physical Pauli errors} into \emph{geometric objects} on a lattice.
You do not directly observe the errors themselves; you observe \emph{syndrome bits} that identify where the errors
\emph{begin and end} (their boundary).
Decoding is the online task of turning a stream of syndrome bits into a correction decision (often a Pauli-frame update)
fast enough to keep pace with the quantum error-correction (QEC) cycle.

\medskip
\noindent\textbf{Decoding problem in one sentence.}
Given syndrome outcomes over space (and usually time), output a correction chain whose boundary matches the syndrome,
minimizing a cost derived from the noise model, while avoiding logical errors.

\medskip
\noindent\textbf{Why it matters.}
Surface codes are attractive because checks are local (nearest-neighbor circuits) and scalable.
But the code protects logical information only if:
(i) the decoder keeps up with the cadence, and
(ii) the correction does not accidentally implement a logical operator.

\subsection{Key objects I: lattice, checks, and the measured data}

\subsubsection{A minimal planar patch picture}

We describe a planar surface-code patch (a square-ish region).
There are two dual sides:
\begin{itemize}
	\item \(Z\)-type data errors are detected by \(X\)-type checks, and
	\item \(X\)-type data errors are detected by \(Z\)-type checks.
\end{itemize}
We present the \(Z\)-error/\(X\)-check side explicitly; the other side follows by exchanging \(X \leftrightarrow Z\)
and primal/dual conventions.

\begin{itemize}
	\item \textbf{Data qubits} live on \emph{edges} of a square lattice.
	\item \textbf{\(X\)-type checks} (often called ``stars'') live on \emph{vertices}.
	Each check touches the data-qubit edges incident to that vertex.
	\item \textbf{Measured data per round} is one bit per check, typically recorded as \(\pm 1\) or \(0/1\).
\end{itemize}

\subsubsection{What is a syndrome bit?}

Let \(S_v\) be the \(X\)-type stabilizer at vertex \(v\).
In the ideal, noiseless-measurement abstraction:
\[
S_v \;=\; \prod_{e\ni v} X_e,
\]
where the product ranges over data-qubit edges \(e\) incident to \(v\).
A \(Z\)-error on an edge \(e=(u,v)\) anticommutes with \(X_e\) and therefore flips the outcomes of the two
adjacent \(X\)-checks \(S_u\) and \(S_v\).
Thus, the set of flipped checks marks the \emph{endpoints} of the \(Z\)-error chain.

\medskip
\noindent\textbf{Hardware reality.}
You typically do not measure \(S_v\) directly.
You entangle neighboring data qubits with an ancilla (via CNOT patterns) and measure the ancilla.
This produces a streaming syndrome bit per check per cycle.

\subsection{Key objects II: errors as chains, syndrome as boundary}

\subsubsection{Errors as 1-chains over \(\mathbb{Z}_2\)}

Let \(E\) be the set of data-qubit edges that suffered a \(Z\)-error in a given cycle.
Treat \(E\) as a 1-chain with \(\mathbb{Z}_2\) coefficients:
\[
E \;=\; \sum_{e} x_e\, e,
\qquad x_e\in\{0,1\}.
\]
Here \(x_e=1\) means ``edge \(e\) is in the error chain.''

\subsubsection{Boundary operator produces the syndrome}

Let \(\partial: C_1 \to C_0\) be the boundary operator modulo 2.
For an edge \(e=(u,v)\),
\[
\partial e \;=\; u+v \quad (\text{mod }2).
\]
Therefore,
\[
\partial E \;=\; \sum_e x_e\,\partial e
\]
is exactly the set of vertices incident to an odd number of error edges.
Those are precisely the vertices whose \(X\)-checks flip.
So in the ideal picture:
\[
s \;=\; \partial E.
\]

\subsubsection{Correction, cycles, and why ``matching the boundary'' is not enough}

A decoder outputs a correction chain \(\widehat{E}\) such that
\[
\partial \widehat{E} \;=\; s.
\]
Then
\[
\partial(\widehat{E}+E)=\partial\widehat{E}+\partial E=s+s=0,
\]
so \(\widehat{E}+E\) is a cycle.

\medskip
\noindent\textbf{Core success criterion.}
\[
\widehat{E}\text{ matches the syndrome, and the cycle }\widehat{E}+E\text{ is homologically trivial.}
\]
If \(\widehat{E}+E\) represents a nontrivial homology class (connects the wrong boundaries in the planar patch, or wraps
around in a toric code), it implements a logical operator and decoding fails.

\subsection{Useful visualization 1: chains vs.\ endpoints (self-contained TikZ)}

The figure below shows a \(Z\)-error chain \(E\) on edges and the syndrome \(\partial E\) on vertices.

\begin{figure}[t]
	\centering
	\begin{tikzpicture}[scale=0.95, line cap=round, line join=round]
		\def\n{5}
		\pgfmathtruncatemacro{\NmOne}{\n-1}
		
		\foreach \i in {0,...,\n}{
			\foreach \j in {0,...,\n}{
				\fill (\i,\j) circle (\EMDot);
			}
		}
		
		\foreach \i in {0,...,\n}{
			\foreach \j in {0,...,\NmOne}{
				\draw[gray!55, line width=0.4pt] (\i,\j) -- (\i,\j+1);
				\draw[gray!55, line width=0.4pt] (\j,\i) -- (\j+1,\i);
			}
		}
		
		\draw[line width=1.2pt] (1,1) -- (2,1);
		\draw[line width=1.2pt] (2,1) -- (3,1);
		\draw[line width=1.2pt] (3,1) -- (3,2);
		\draw[line width=1.2pt] (3,2) -- (3,3);
		\draw[line width=1.2pt] (3,3) -- (4,3);
		
		\fill[white] (1,1) circle (\EMSyndromeR);
		\draw[line width=1.0pt] (1,1) circle (\EMSyndromeR);
		\fill[white] (4,3) circle (\EMSyndromeR);
		\draw[line width=1.0pt] (4,3) circle (\EMSyndromeR);
		
		\node[align=center] at (2.6,5.6) {\small \(Z\)-errors form a chain \(E\) on edges};
		\node[align=center] at (2.6,5.1) {\small syndrome \(s=\partial E\) is the set of endpoints};
		\node[anchor=west] at (5.3,3.0) {\small circle = flipped check};
		\node[anchor=west] at (5.3,2.5) {\small thick edge = data-qubit \(Z\)-error};
		
		\draw[->, line width=0.8pt] (4.7,3.2) -- (4.15,3.05);
		\draw[->, line width=0.8pt] (4.7,2.7) -- (1.25,1.15);
	\end{tikzpicture}
	\caption{A \(Z\)-error chain \(E\) (thick edges) produces syndrome defects at \(\partial E\) (circled vertices). A decoder chooses \(\widehat{E}\) with \(\partial\widehat{E}=\partial E\); failure occurs if \(\widehat{E}+E\) is a nontrivial cycle (logical operator).}
	\label{fig:surface:chain-boundary}
\end{figure}
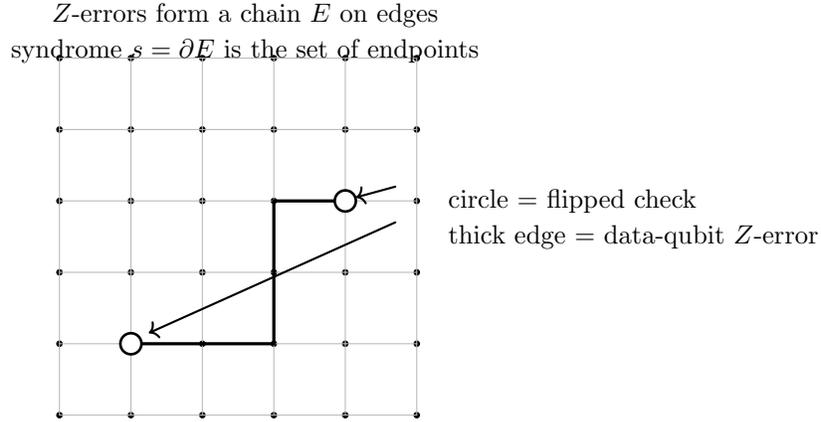

\subsection{Probability weights and the decoding objective}

\subsubsection{Why weights appear}

If each data qubit suffers a \(Z\)-error independently with probability \(p\), then a chain \(E\) with \(|E|\) edges has
likelihood (up to a constant factor)
\[
\Pr(E)\propto p^{|E|}(1-p)^{M-|E|},
\]
where \(M\) is the number of data qubits (edges).
Conditioned on a fixed syndrome, maximizing likelihood is equivalent to minimizing \(|E|\),
i.e.\ ``shortest correction'' in uniform noise.

\subsubsection{From likelihood to additive edge costs}

If edges have different error rates \(p_e\), then
\[
\Pr(E)=\prod_{e\in E} p_e \prod_{e\notin E} (1-p_e).
\]
Taking \(-\log\) and discarding constants independent of \(E\) yields an additive objective:
\[
\text{cost}(E)=\sum_{e\in E} w_e,
\qquad
w_e:=\log\!\left(\frac{1-p_e}{p_e}\right).
\]
Thus the decoding objective is:
\[
\text{find }\widehat{E}\text{ such that }\partial\widehat{E}=s
\quad\text{and}\quad
\widehat{E}\in\arg\min_{E:\,\partial E=s}\ \sum_{e\in E} w_e.
\]
This is the precise bridge from topology to graph algorithms: \(\sum w_e\) is a shortest-path/matching weight.

\subsection{Graph formulation (bridge from topology to algorithms)}

\subsubsection{Defects and pairing logic}

In the single-round, noiseless-measurement picture, the syndrome is a set of defect vertices
\[
s=\{v_1,\dots,v_{2k}\}.
\]
(Endpoints come in pairs unless boundaries are involved.)
A correction \(\widehat{E}\) can be thought of as pairing defects by lattice paths.
The total cost is the sum of edge weights along the chosen paths.

\subsubsection{Weighted distance between defects}

Define the weighted shortest-path distance
\[
d(u,v):=\min_{\gamma:u\to v}\ \sum_{e\in\gamma} w_e,
\]
where \(\gamma\) ranges over lattice paths connecting defects \(u\) and \(v\).
This is computed by Dijkstra (or BFS in uniform weights) on the underlying lattice graph.

\subsubsection{MWPM in one line}

Build a complete graph on the defect set with edge weights \(d(v_i,v_j)\), then solve
\[
\text{minimum-weight perfect matching (MWPM).}
\]
Each matched pair is expanded back into a lattice shortest path;
the union of those paths is the correction chain \(\widehat{E}\).

\subsubsection{Planar boundaries as special match partners}

In a planar code, a defect can also be matched to an appropriate boundary (rough/smooth).
Algorithmically, boundaries are handled by adding virtual boundary nodes and distances to them,
so MWPM can pair a defect to a boundary when that is cheaper and physically correct.

\begin{figure}[t]
	\centering
	\begin{tikzpicture}[scale=1.0, line cap=round, line join=round]
		\fill (0,0) circle (\EMNodeDot) node[below] {\small $v_1$};
		\fill (2.0,1.2) circle (\EMNodeDot) node[above] {\small $v_2$};
		\fill (4.2,0.2) circle (\EMNodeDot) node[below] {\small $v_3$};
		\fill (1.2,3.0) circle (\EMNodeDot) node[above] {\small $v_4$};
		
		\draw[line width=0.6pt] (0,0) -- (2.0,1.2) node[midway, above left] {\small $d_{12}$};
		\draw[line width=0.6pt] (0,0) -- (4.2,0.2) node[midway, below] {\small $d_{13}$};
		\draw[line width=0.6pt] (0,0) -- (1.2,3.0) node[midway, left] {\small $d_{14}$};
		
		\draw[line width=0.6pt] (2.0,1.2) -- (4.2,0.2) node[midway, right] {\small $d_{23}$};
		\draw[line width=0.6pt] (2.0,1.2) -- (1.2,3.0) node[midway, above] {\small $d_{24}$};
		\draw[line width=0.6pt] (4.2,0.2) -- (1.2,3.0) node[midway, right] {\small $d_{34}$};
		
		\draw[line width=1.2pt] (0,0) -- (2.0,1.2);
		\draw[line width=1.2pt] (4.2,0.2) -- (1.2,3.0);
		
		\node[align=center] at (2.2,-1.0) {\small Compute distances $d(v_i,v_j)$ from lattice weights};
		\node[align=center] at (2.2,-1.45) {\small MWPM chooses a minimum-weight pairing (thick edges)};
	\end{tikzpicture}
	\caption{Graph bridge: shortest-path distances between defects become weights in a complete graph; MWPM chooses a pairing, then each pair expands back to a lattice path forming \(\widehat{E}\).}
	\label{fig:surface:mwpm-bridge}
\end{figure}

\subsection{Decoder families: MWPM, Union--Find, and local rules (overview)}

\subsubsection{MWPM (Minimum-Weight Perfect Matching)}
\begin{itemize}
	\item \textbf{Interpretation:} maximum-likelihood decoding for common independent Pauli noise models.
	\item \textbf{Strength:} excellent logical error rates; well-understood theoretically.
	\item \textbf{Engineering cost:} matching and repeated shortest-path computations can be heavy at scale;
	real-time use requires careful batching, windowing, and optimized implementations.
\end{itemize}

\subsubsection{Union--Find (cluster growth / peeling)}
\begin{itemize}
	\item \textbf{Idea:} grow clusters around defects until they merge; resolve within each cluster using local structure.
	\item \textbf{Strength:} near-linear time; hardware-friendly data structures; good candidate for low-latency systems.
	\item \textbf{Tradeoff:} can be slightly suboptimal vs.\ MWPM under some noise/correlation models (variant-dependent).
\end{itemize}

\subsubsection{Local-rule / cellular decoders}
\begin{itemize}
	\item \textbf{Idea:} update rules depend only on a local neighborhood, iterated over time.
	\item \textbf{Strength:} extreme parallelism; maps naturally to FPGA/ASIC fabrics.
	\item \textbf{Risk:} may lose global optimality; performance depends strongly on rule design and noise correlations.
\end{itemize}

\subsection{Spacetime decoding graphs (space-like vs.\ time-like edges)}

\subsubsection{Why time enters: measurement errors}

If stabilizer measurements are noisy, a single flipped measurement bit can create a fake defect.
A standard move is to decode on \emph{syndrome differences}:
\[
\Delta s_t \;:=\; s_t \oplus s_{t-1}.
\]
Now:
\begin{itemize}
	\item a \textbf{data-qubit error} changes the stabilizer eigenvalue and persists, producing a change between rounds (space-like edge),
	\item a \textbf{measurement error} flips only one round, producing changes across time (time-like edge).
\end{itemize}

\subsubsection{3D decoding picture}

In spacetime \((x,y,t)\), defects live on vertices and paths may use:
\begin{itemize}
	\item \textbf{space-like edges:} model data-qubit errors,
	\item \textbf{time-like edges:} model measurement flips.
\end{itemize}
Weights come from two physical rates \(p_{\text{data}}\) and \(p_{\text{meas}}\) (and can be refined per edge/check).

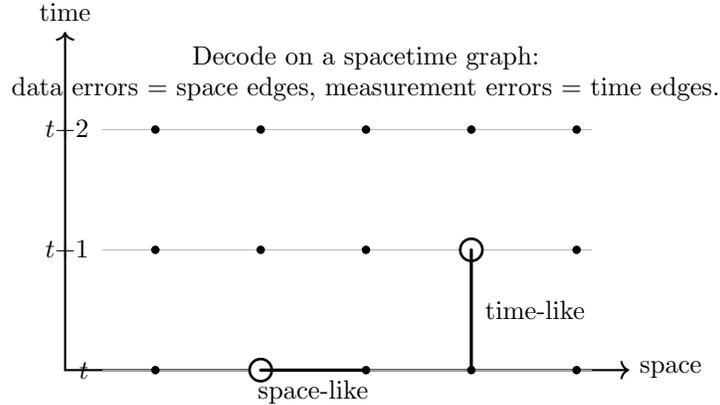
\begin{figure}[t]
	\centering
	\begin{tikzpicture}[scale=1.0, line cap=round, line join=round]
		\draw[->, line width=0.8pt] (0,0) -- (7.5,0) node[right] {\small space};
		\draw[->, line width=0.8pt] (0,0) -- (0,4.5) node[above] {\small time};
		
		\foreach \ty/\lbl in {0/{\small $t$},1.6/{\small $t{+}1$},3.2/{\small $t{+}2$}}{
			\draw[gray!55, line width=0.4pt] (0.5,\ty) -- (7.0,\ty);
			\node[left] at (0.45,\ty) {\lbl};
			\foreach \x in {1.2,2.6,4.0,5.4,6.8}{
				\fill (\x,\ty) circle (\EMNodeDot);
			}
		}
		
		\fill[white] (2.6,0.0) circle (\EMSyndromeR);
		\draw[line width=1.0pt] (2.6,0.0) circle (\EMSyndromeR);
		\fill[white] (5.4,1.6) circle (\EMSyndromeR);
		\draw[line width=1.0pt] (5.4,1.6) circle (\EMSyndromeR);
		
		\draw[line width=1.2pt] (2.6,0.0) -- (4.0,0.0);
		\node[below] at (3.3,0.0) {\small space-like};
		
		\draw[line width=1.2pt] (5.4,0.0) -- (5.4,1.6);
		\node[right] at (5.45,0.8) {\small time-like};
		
		\node[align=center] at (4.0,4.15) {\small Decode on a spacetime graph:};
		\node[align=center] at (4.0,3.75) {\small data errors = space edges,\ measurement errors = time edges.};
	\end{tikzpicture}
	\caption{Spacetime decoding sketch: with noisy measurements, decoding occurs on a 3D graph. Time-like edges model measurement flips; space-like edges model data-qubit errors.}
	\label{fig:surface:spacetime-graph}
\end{figure}

\subsection{Circuit origin: why syndromes exist as a stream}

\subsubsection{Syndrome extraction repeats every cycle}

A check is implemented by an ancilla-mediated circuit:
\[
\text{prepare ancilla}\ \to\ \text{entangle with neighbors}\ \to\ \text{measure ancilla}\ \to\ \text{reset}.
\]
Noise is continuous in time, so this cycle must repeat indefinitely.
Therefore the syndrome is not a static dataset; it is a time-indexed stream, and decoding must be online.

\subsubsection{Why this pushes toward FPGA acceleration}

The syndrome stream is high-rate, low-latency, and structurally local:
\begin{itemize}
	\item \textbf{High-rate:} \(\Theta(d^2)\) checks per cycle for distance \(d\).
	\item \textbf{Low-latency:} decisions must meet deadlines tied to the control loop.
	\item \textbf{Local structure:} lattice locality \(\Rightarrow\) parallel, hardware-friendly computation patterns.
\end{itemize}
This is exactly the regime where FPGA implementations (pipelines, locality, predictable latency) can dominate CPUs/GPUs.

\subsection{Exercises}

\begin{exercise}[Boundary computation on a small patch]
	Consider the chain \(E\) consisting of the edges
	\((1,1)\!\!-\!\!(2,1)\), \((2,1)\!\!-\!\!(3,1)\), \((3,1)\!\!-\!\!(3,2)\).
	Compute \(\partial E\) explicitly (mod \(2\)) and identify the syndrome vertices.
\end{exercise}

\noindent\textbf{Solution.}
Compute each boundary:
\[
\partial\bigl((1,1)\!-\!(2,1)\bigr)=(1,1)+(2,1),\qquad
\partial\bigl((2,1)\!-\!(3,1)\bigr)=(2,1)+(3,1),
\]
\[
\partial\bigl((3,1)\!-\!(3,2)\bigr)=(3,1)+(3,2).
\]
Sum mod \(2\) (repeated vertices cancel):
\[
\partial E
=\bigl((1,1)+(2,1)\bigr)+\bigl((2,1)+(3,1)\bigr)+\bigl((3,1)+(3,2)\bigr)
=(1,1)+(3,2).
\]
So the syndrome defects are at \((1,1)\) and \((3,2)\).

\begin{exercise}[Why the decoding cost is additive]
	Assume independent edge error probabilities \(p_e\).
	Show that maximizing \(\Pr(E)\) is equivalent to minimizing \(\sum_{e\in E} w_e\) with
	\(w_e=\log((1-p_e)/p_e)\), up to an additive constant independent of \(E\).
\end{exercise}

\noindent\textbf{Solution.}
Independence gives
\[
\Pr(E)=\prod_{e\in E} p_e \prod_{e\notin E} (1-p_e).
\]
Take \(-\log\):
\[
-\log\Pr(E)=\sum_{e\in E}(-\log p_e)+\sum_{e\notin E}(-\log(1-p_e)).
\]
Rewrite the second term:
\[
\sum_{e\notin E}(-\log(1-p_e))
=\sum_e(-\log(1-p_e))-\sum_{e\in E}(-\log(1-p_e)).
\]
Therefore
\[
-\log\Pr(E)
=\underbrace{\sum_e(-\log(1-p_e))}_{\text{constant in }E}
+\sum_{e\in E}\bigl(-\log p_e+\log(1-p_e)\bigr)
=\text{const}+\sum_{e\in E}\log\!\Bigl(\frac{1-p_e}{p_e}\Bigr).
\]
Thus maximizing \(\Pr(E)\) is equivalent to minimizing \(\sum_{e\in E} w_e\).

\begin{exercise}[Why matching the boundary can still fail]
	Explain how \(\partial\widehat{E}=\partial E\) can still cause a logical failure.
	Use the cycle \(\widehat{E}+E\) and the notion of a nontrivial logical operator.
\end{exercise}

\noindent\textbf{Solution.}
If \(\partial\widehat{E}=\partial E\), then \(\partial(\widehat{E}+E)=0\), so \(\widehat{E}+E\) is a cycle.
Some cycles are boundaries of faces (stabilizers) and act trivially on the code space.
But others represent nontrivial homology classes: in a planar code they connect the wrong boundaries;
in a toric code they wrap around the torus.
Such a cycle implements a logical Pauli operator. If \(\widehat{E}+E\) equals a nontrivial cycle, the correction differs
from the true error by a logical operator, yielding a logical error despite perfectly matching the syndrome.

\begin{exercise}[Spacetime decoding: why use syndrome differences]
	Assume check measurements can flip with probability \(q\).
	Explain why decoding on \(\Delta s_t=s_t\oplus s_{t-1}\) separates data errors (space-like)
	from measurement errors (time-like). Describe a minimal spacetime picture in words.
\end{exercise}

\noindent\textbf{Solution.}
A data error changes the stabilizer eigenvalue from the time it occurs onward, so it causes a change between consecutive rounds;
this appears as a defect in \(\Delta s_t\) that is connected by a space-like edge.
A measurement flip affects only one round, so it creates a ``blip'' in time: the reported value differs at time \(t\) but not at \(t\pm1\),
producing two changes, one between \(t-1\) and \(t\) and another between \(t\) and \(t+1\).
In a spacetime graph, that corresponds to a vertical (time-like) edge. Hence \(\Delta s_t\) naturally maps the problem to paths
in a 3D graph with two edge types and two noise rates.

\begin{exercise}[Engineering: minimum per-round logging fields]
	List a minimal set of log fields per round that lets you diagnose deadline misses and correctness failures in simulation.
	Include configuration IDs, timestamps, and correctness indicators.
\end{exercise}

\noindent\textbf{Solution.}
A minimal per-round log includes:
\begin{itemize}
	\item \textbf{Configuration identifiers:} code family, distance \(d\), window \(W\), policy variant, build/hash ID.
	\item \textbf{Round index:} \(t\).
	\item \textbf{Timing:} arrival timestamp \(A_t\), decode start timestamp, decode end timestamp (latency \(L_t\)).
	\item \textbf{Backlog/buffer:} queue depth at arrival and after service (or bytes/frames buffered).
	\item \textbf{Syndrome summary:} defect count and/or a compressed syndrome representation (for replay).
	\item \textbf{Output decision:} Pauli-frame update ID or compressed correction descriptor.
	\item \textbf{Deadline flag:} \(1[L_t>D]\) and (optional) measured slack \(D-L_t\).
	\item \textbf{Correctness signal (simulation):} logical-failure flag, or mismatch indicator against injected errors.
	\item \textbf{Counters for spikes:} optional instrumentation counters (memory stalls, branch path ID, transport jitter flag).
\end{itemize}
	

\section{FPGA V: Decoder as a Finite-State Machine on a Lattice}
\label{sec:fpga5}

\subsection{Objective and guiding viewpoint}

A real-time surface-code decoder is not ``just an algorithm.'' On hardware, it is best understood as:

\medskip
\noindent\textbf{A finite-state machine (FSM) that repeatedly updates a lattice-state array,}
driven by a streaming input of detection events, under strict latency and throughput constraints.

\medskip
This viewpoint forces three design commitments that software often hides:
\begin{itemize}
	\item \textbf{Bounded phases:} the decoder must complete in a bounded number of passes per round/window.
	\item \textbf{Locality as a contract:} updates use local neighborhoods so wiring and memory access remain structured.
	\item \textbf{Time is physics:} missing deadlines is logically equivalent to injecting extra noise into the QEC loop.
\end{itemize}

\subsection{Syndromes and detection events (streaming input)}

\subsubsection{Syndrome bits vs.\ detection events}

Let \(s_t\) denote the syndrome bits measured at cycle \(t\) (one bit per check).
With noisy measurements, it is standard to form \emph{detection events} (also called defects)
by taking differences:
\[
\Delta s_t := s_t \oplus s_{t-1}.
\]
Intuition:
\begin{itemize}
	\item a \textbf{data error} changes the underlying stabilizer eigenvalue and persists, so it creates a consistent change
	visible in \(\Delta s_t\) and is modeled as a \emph{space-like} edge;
	\item a \textbf{measurement error} is a one-round flip, producing two consecutive differences,
	modeled as a \emph{time-like} edge.
\end{itemize}

\subsubsection{Streaming input contract}

On hardware, the input is not ``a full graph instance.'' It arrives as a stream:
\[
\texttt{(t, check\_id, bit)}\quad \text{or}\quad \texttt{(t, check\_id, delta\_bit)}.
\]
A practical interface is an AXI-stream-like packet:
\begin{itemize}
	\item \texttt{t} (round index mod \(W\) for a sliding window),
	\item \texttt{addr} (check coordinate or linear index),
	\item \texttt{val} (syndrome or detection bit),
	\item \texttt{valid/last} markers for framing.
\end{itemize}
The decoder must begin processing before an entire round is received if latency is tight.

\subsection{The decoding graph and locality}

\subsubsection{Spacetime decoding graph}

With detection events, decoding occurs on a 3D spacetime graph \(G=(V,E)\) where vertices correspond to
check locations at times \(t\), and edges correspond to plausible single-fault explanations:
\begin{itemize}
	\item \textbf{space-like edges} connect neighboring checks within the same time slice,
	\item \textbf{time-like edges} connect a check across consecutive time slices.
\end{itemize}
Weights encode negative log-likelihoods:
\[
w_e = \log\!\Bigl(\frac{1-p_e}{p_e}\Bigr),
\]
with different \(p_e\) for data vs.\ measurement faults.

\subsubsection{Locality: what hardware actually needs}

Locality is not a philosophical preference; it is a hardware constraint:
\begin{itemize}
	\item \textbf{bounded-degree neighborhoods} \(\Rightarrow\) fixed number of memory reads per update,
	\item \textbf{structured adjacency} \(\Rightarrow\) address arithmetic, not pointer chasing,
	\item \textbf{tiling} \(\Rightarrow\) you can stream tiles through on-chip buffers.
\end{itemize}

\subsection{Decoder as a global FSM (bounded phases, bounded passes)}

\subsubsection{Why a global FSM exists even for ``local'' decoders}

Even if the update rule is local, the hardware must orchestrate:
\begin{itemize}
	\item receiving and buffering detection events,
	\item initializing per-round state,
	\item running a bounded number of update sweeps,
	\item emitting decisions (pairings, corrections, or Pauli-frame deltas),
	\item advancing the sliding window.
\end{itemize}
That orchestration is a \emph{global} FSM controlling local compute.

\subsubsection{A canonical phase decomposition}

A typical bounded-phase decoder looks like:

\begin{enumerate}
	\item \textbf{INGEST:} accept detection-event stream for round \(t\), update defect markers.
	\item \textbf{SEED:} initialize cluster/labels/frontiers from defects.
	\item \textbf{GROW / PROPAGATE:} run \(P\) passes of local propagation (cluster growth, belief messages, wavefronts).
	\item \textbf{RESOLVE:} finalize pairings/paths inside clusters (peeling / parity fix / local traceback).
	\item \textbf{EMIT:} output Pauli-frame update (or correction summary) and round statistics.
	\item \textbf{ADVANCE:} slide window, recycle memory pages, clear old time slice.
\end{enumerate}

\subsubsection{Finite-state means bounded work}

The key engineering requirement is: for each round/window, the number of cycles is bounded:
\[
T_{\text{decode}} \le T_{\max}(d,W,\text{policy}),
\]
where the bound is derived from:
\begin{itemize}
	\item maximum number of passes \(P\),
	\item maximum neighborhood size per update,
	\item maximum tile count to cover the lattice.
\end{itemize}
This is where the ``decoder as FSM'' framing becomes a correctness requirement, not just a style.

\subsection{Latency is part of the physics}

\subsubsection{Deadline misses behave like extra noise}

Let the QEC cycle time be \(T_{\text{cycle}}\) and the decoder deadline be \(D\le T_{\text{cycle}}\).
If the decoder output arrives late, the control system must either:
\begin{itemize}
	\item postpone corrective action (increasing effective error accumulation), or
	\item apply stale/partial decisions (increasing logical failure probability).
\end{itemize}
In both cases, lateness increases the effective physical error rate seen by the logical qubit.

\subsubsection{Two practical latency numbers}

Hardware teams usually track:
\begin{itemize}
	\item \textbf{critical-path latency} (worst-case bound),
	\item \textbf{tail latency} (p99/p999) because jitter breaks stability even when the mean is fine.
\end{itemize}

\subsection{Throughput: bandwidth dominates arithmetic}

\subsubsection{Why memory traffic is the bottleneck}

Many decoding updates are simple (XOR, min-plus, small integer ops), but they touch many lattice sites.
Thus, performance is frequently limited by:
\[
\text{bytes moved per round} \;\gg\; \text{flops per round}.
\]
A good first model is:
\[
\text{Throughput} \approx \frac{\text{memory bandwidth}}{\text{bytes accessed per lattice update}}.
\]

\subsubsection{The ``bytes per detection event'' accounting}

A practical budgeting method:
\begin{itemize}
	\item count how many lattice words must be read/updated per detection event (including neighbor fetches),
	\item multiply by event rate (checks per round),
	\item compare to available DDR/BRAM bandwidth.
\end{itemize}
This quickly predicts when your design must:
\begin{itemize}
	\item compress state,
	\item cache tiles on-chip,
	\item or increase parallel banks (multiport via replication/partitioning).
\end{itemize}

\subsection{Circuit origin of the classical stream (why FIFO appears)}

\subsubsection{Why the decoder sees a stream, not a batch}

Syndrome extraction circuits produce one ancilla measurement per check per cycle.
Electronics and firmware deliver those bits as packets with:
\begin{itemize}
	\item transport latency variability,
	\item burstiness (many checks arrive together),
	\item occasional missing/invalid samples.
\end{itemize}
Therefore the FPGA front end almost always includes a \textbf{FIFO} to decouple input timing from the decoder pipeline.

\subsubsection{A minimal streaming block diagram (TikZ)}

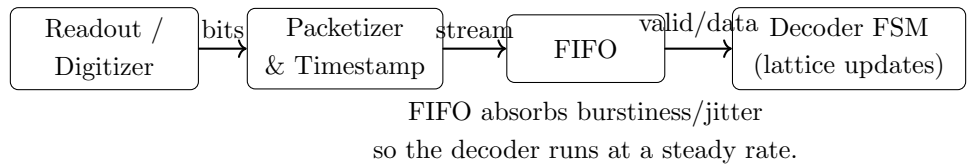
\begin{figure}[t]
	\centering
	\begin{tikzpicture}[scale=1.0, line cap=round, line join=round]
		\node[draw, rounded corners=3pt, minimum width=2.5cm, minimum height=0.9cm, align=center] (adc) at (0,0) {\small Readout /\\ \small Digitizer};
		\node[draw, rounded corners=3pt, minimum width=2.6cm, minimum height=0.9cm, align=center] (pkt) at (3.2,0) {\small Packetizer\\ \small \& Timestamp};
		\node[draw, rounded corners=3pt, minimum width=2.1cm, minimum height=0.9cm, align=center] (fifo) at (6.4,0) {\small FIFO};
		\node[draw, rounded corners=3pt, minimum width=3.1cm, minimum height=0.9cm, align=center] (dec) at (9.9,0) {\small Decoder FSM\\ \small (lattice updates)};
		
		\draw[->, line width=0.8pt] (adc) -- (pkt) node[midway, above] {\small bits};
		\draw[->, line width=0.8pt] (pkt) -- (fifo) node[midway, above] {\small stream};
		\draw[->, line width=0.8pt] (fifo) -- (dec) node[midway, above] {\small valid/data};
		
		\node[align=center] at (6.4,-1.1) {\small FIFO absorbs burstiness/jitter\\ \small so the decoder runs at a steady rate.};
	\end{tikzpicture}
	\caption{Why FIFO appears: syndrome bits arrive as a timestamped stream with jitter/bursts; the decoder FSM needs steady service to guarantee latency bounds.}
	\label{fig:fpga5:fifo-origin}
\end{figure}

\subsection{Lattice state and memory layout (RAM words)}

\subsubsection{The lattice state is a data structure, not a picture}

A hardware decoder maintains per-site state:
\begin{itemize}
	\item defect flags (present/absent),
	\item cluster labels / parents (Union--Find) or messages (belief-like),
	\item growth radii / parity info / timestamps,
	\item ``resolved'' flags and traceback pointers (optional).
\end{itemize}

\subsubsection{Linear address mapping}

Let the lattice have \(N_x\times N_y\) checks per time slice, and a window of \(W\) time slices.
A common linearization:
\[
\texttt{idx}(x,y,t)= (t \bmod W)\cdot (N_x N_y) + yN_x + x.
\]
This gives:
\begin{itemize}
	\item simple address arithmetic,
	\item contiguous scans over \(x,y\),
	\item easy page-rotation when advancing the window.
\end{itemize}

\subsubsection{Packing into RAM words}

Suppose each site needs fields:
\[
\texttt{defect:1},\quad \texttt{label:k},\quad \texttt{parity:1},\quad \texttt{age:r},\quad \texttt{misc:m}.
\]
Pack into a fixed-width word:
\[
\texttt{word}[x,y,t] \in \{0,1\}^{B},\qquad B = 1+k+1+r+m.
\]
Design choices:
\begin{itemize}
	\item \textbf{narrow words} reduce bandwidth but complicate bit slicing,
	\item \textbf{wide words} simplify updates but increase bandwidth.
\end{itemize}
This is where throughput math forces tradeoffs.

\subsection{Local rules and update schedules (hazard-avoidance)}

\subsubsection{Why hazards occur}

Local update rules often read neighbor states and then write an updated state.
If two neighboring sites update simultaneously and both write, you can get:
\begin{itemize}
	\item write-write conflicts,
	\item read-after-write ambiguity (depends on pipeline timing),
	\item inconsistent symmetry breaking (e.g.\ two clusters both claim a site).
\end{itemize}

\subsubsection{Two standard hazard-avoidance patterns}

\paragraph{(1) Red-black (checkerboard) updates.}
Partition sites into two colors; update all red sites using frozen black neighbors, then swap.
This guarantees neighbors do not write at the same time.

\paragraph{(2) Two-phase commit (read phase / write phase).}
Compute proposed updates into a shadow buffer, then commit writes in a separate phase.
This costs extra memory but yields deterministic behavior.

\subsubsection{A simple checkerboard visualization (TikZ)}

\begin{figure}[t]
	\centering
	\begin{tikzpicture}[scale=0.9, line cap=round, line join=round]
		\def\n{6}
		\pgfmathtruncatemacro{\NmOne}{\n-1}
		
		\foreach \i in {0,...,\NmOne}{
			\foreach \j in {0,...,\NmOne}{
				\pgfmathtruncatemacro{\c}{mod(\i+\j,2)}
				\ifnum\c=0
				\fill[gray!20] (\i,\j) rectangle (\i+1,\j+1);
				\else
				\fill[gray!5] (\i,\j) rectangle (\i+1,\j+1);
				\fi
				\draw[gray!60, line width=0.4pt] (\i,\j) rectangle (\i+1,\j+1);
			}
		}
		
		\node[align=center] at (3,-0.7) {\small Red-black schedule: update one color while the other is read-only.};
	\end{tikzpicture}
	\caption{Checkerboard scheduling is a hardware-friendly way to avoid neighbor write conflicts in local lattice updates.}
	\label{fig:fpga5:checkerboard}
\end{figure}

\subsection{From algorithms to hardware budgets (worst-case bounds)}

\subsubsection{A budgeting template}

Let:
\begin{itemize}
	\item \(N = N_xN_yW\) be the number of lattice sites in the active window,
	\item \(P\) be the maximum number of passes,
	\item \(r\) be the sites updated per cycle (parallel lanes),
	\item \(c_{\text{upd}}\) be cycles per site update (pipeline depth / multi-read).
\end{itemize}
A conservative bound:
\[
T_{\text{decode}} \;\le\; P\cdot \left\lceil \frac{N}{r}\right\rceil \cdot c_{\text{upd}} \;+\; T_{\text{ingest}} \;+\; T_{\text{emit}}.
\]
This bound is what you compare against the deadline \(D\).

\subsubsection{Bandwidth bound (often tighter than cycle count)}

If each site update requires \(R\) RAM reads and \(W_r\) RAM writes of \(B\)-bit words, then bytes moved per update:
\[
\text{bytes/update} = \frac{B}{8}(R+W_r).
\]
Total bytes per decode:
\[
\text{bytes/decode} \approx P\cdot N \cdot \frac{B}{8}(R+W_r).
\]
The design is feasible only if:
\[
\frac{\text{bytes/decode}}{D} \;\le\; \text{available bandwidth}.
\]
This inequality is a fast, brutally honest feasibility test.

\subsection{Exercises}

\begin{exercise}[Detection-event stream formation]
	Assume a single check produces outcomes \(s_{t-1}=1\), \(s_t=0\), \(s_{t+1}=1\).
	Compute \(\Delta s_t\) and \(\Delta s_{t+1}\), and interpret whether this looks like a measurement error or a persistent data error.
\end{exercise}

\noindent\textbf{Solution.}
\[
\Delta s_t = s_t\oplus s_{t-1} = 0\oplus 1 = 1,\qquad
\Delta s_{t+1} = s_{t+1}\oplus s_t = 1\oplus 0 = 1.
\]
Two consecutive detection events at the same check are consistent with a one-round measurement flip at time \(t\):
the reported value toggled at \(t\) relative to both neighbors.
A persistent data error would typically produce a \emph{single} change at the time the error occurs (then the syndrome stays flipped),
not a blip that immediately reverts.

\begin{exercise}[Linear memory layout]
	Let \(N_x=8\), \(N_y=6\), \(W=4\). Write \(\texttt{idx}(x,y,t)\) and compute \(\texttt{idx}(x{=}3,y{=}2,t{=}5)\).
	Assume \(t\) is reduced modulo \(W\).
\end{exercise}

\noindent\textbf{Solution.}
\[
\texttt{idx}(x,y,t)=(t\bmod W)\cdot (N_xN_y)+yN_x+x.
\]
Here \(N_xN_y=48\) and \(t\bmod W = 5\bmod 4 = 1\). So
\[
\texttt{idx}(3,2,5)=1\cdot 48 + 2\cdot 8 + 3 = 48 + 16 + 3 = 67.
\]

\begin{exercise}[Worst-case pass budget]
	A decoder uses \(P=6\) passes on a window with \(N=12{,}288\) sites.
	Your FPGA updates \(r=32\) sites per cycle and needs \(c_{\text{upd}}=2\) cycles per site update.
	Ignore ingest/emit and compute the worst-case decode cycles.
\end{exercise}

\noindent\textbf{Solution.}
\[
T_{\text{decode}} \le P\cdot \left\lceil \frac{N}{r}\right\rceil\cdot c_{\text{upd}}
=6\cdot \left\lceil \frac{12288}{32}\right\rceil \cdot 2.
\]
Since \(12288/32 = 384\) exactly,
\[
T_{\text{decode}} \le 6\cdot 384\cdot 2 = 4608 \text{ cycles}.
\]

\begin{exercise}[Bandwidth sanity check]
	Each lattice update reads \(R=3\) words and writes \(W_r=1\) word.
	The packed word width is \(B=64\) bits. The decoder performs \(P=4\) passes over \(N=10^5\) sites per round.
	Estimate bytes moved per round and the required bandwidth if the deadline is \(D=100\,\mu\text{s}\).
\end{exercise}

\noindent\textbf{Solution.}
Bytes per update:
\[
\frac{B}{8}(R+W_r)=\frac{64}{8}(3+1)=8\cdot 4=32\ \text{bytes/update}.
\]
Bytes per round:
\[
\text{bytes/round} \approx P\cdot N \cdot 32 = 4\cdot 10^5 \cdot 32 = 1.28\times 10^7\ \text{bytes}
\approx 12.8\ \text{MB}.
\]
Bandwidth required to finish within \(D=100\,\mu\text{s}\):
\[
\frac{12.8\ \text{MB}}{100\,\mu\text{s}} = 128\ \text{GB/s}.
\]
This number is extremely high, signaling that you must reduce bytes moved (state compression, fewer passes, more on-chip tiling/caching, fewer reads/writes, or algorithmic changes) or relax the deadline.

\begin{exercise}[Hazard-avoidance reasoning]
	Explain (in 3--5 sentences) why a red-black update schedule avoids neighbor write conflicts.
	What is the main downside relative to a fully parallel update?
\end{exercise}

\noindent\textbf{Solution.}
In a checkerboard (red-black) partition, every red site has only black neighbors and vice versa.
If you update all red sites simultaneously while keeping black sites read-only, no two concurrently-updating sites are neighbors,
so they never write to each other's state in the same cycle.
Then you swap roles and update black sites using the newly written red states.
The downside is that it can reduce peak parallelism (you effectively update only half the sites per phase) and may increase the number
of global phases needed for convergence or completion.
	
\section{Union--Find Decoder Microarchitecture (From Data Structure to RTL Pipeline)}
\label{sec:uf-microarch}

\subsection{Objective: deterministic Union--Find under streaming constraints}
Union--Find (UF) decoders are attractive for surface codes because they can be made
\emph{near-linear}, \emph{locality-friendly}, and \emph{hardware-realistic}.
But textbook UF is an \emph{unbounded} data structure:
path compression and union-by-rank are great on average, yet they can generate
variable-latency pointer-chasing (bad for deadlines and p99/p999 control).

\medskip
\noindent\textbf{Goal of this section.}
We refactor UF decoding into an RTL-facing microarchitecture that:
\begin{itemize}
	\item ingests a \emph{stream} of detection events (syndrome differences) at fixed cadence,
	\item runs in a \emph{fixed number of bounded passes} (deterministic latency),
	\item uses a \emph{memory layout} that avoids bank conflicts and long pointer chains,
	\item produces a correction / Pauli-frame update with explicit worst-case bounds.
\end{itemize}

\medskip
\noindent\textbf{Guiding viewpoint.}
Think of UF decoding as a lattice-wide FSM with three repeating phases:
\[
\textbf{GROW} \;\to\; \textbf{MERGE} \;\to\; \textbf{PEEL},
\]
implemented by streaming, banked RAM accesses and hazard-aware scheduling.

\subsection{Union--Find recap: \texttt{find}, \texttt{union}, ranks, compression}

\subsubsection{The textbook structure}
We maintain a forest of rooted trees over a set of elements (nodes).
Each element $i$ stores:
\[
\texttt{parent}[i]\in\{0,\dots,N-1\},\qquad
\texttt{rank}[i]\in\mathbb{N}.
\]
A root satisfies $\texttt{parent}[r]=r$.

\begin{defn}[\texttt{find} (root query)]
	\label{def:uf-find}
	Given a node $i$, \texttt{find(i)} returns the representative (root) of the tree containing $i$:
	\[
	\mathrm{root}(i)=
	\begin{cases}
		i, & \text{if }\texttt{parent}[i]=i,\\
		\mathrm{root}(\texttt{parent}[i]), & \text{otherwise}.
	\end{cases}
	\]
\end{defn}

\begin{defn}[\texttt{union} (merge two sets)]
	\label{def:uf-union}
	Given $a,b$, compute $r_a=\texttt{find}(a)$, $r_b=\texttt{find}(b)$.
	If $r_a\neq r_b$, attach one root under the other root (union-by-rank):
	\[
	\text{if }\texttt{rank}[r_a]<\texttt{rank}[r_b]\text{ then set }\texttt{parent}[r_a]=r_b,
	\]
	else set $\texttt{parent}[r_b]=r_a$ and increase rank if tied.
\end{defn}

\subsubsection{Why naive UF is hardware-hostile}
\begin{itemize}
	\item \textbf{Pointer chasing:} \texttt{find} follows parent pointers with data-dependent depth.
	\item \textbf{Path compression:} good for amortized complexity, but creates bursty writes.
	\item \textbf{Concurrent unions:} two unions touching the same root create write hazards.
	\item \textbf{Variable runtime:} even if average is small, p99 tail can violate deadlines.
\end{itemize}

\subsubsection{Hardware-aware alternatives to full path compression}
We will use \emph{bounded} variants:
\begin{itemize}
	\item \textbf{Path halving:} on each \texttt{find} step, rewrite $\texttt{parent}[i]\leftarrow \texttt{parent}[\texttt{parent}[i]]$.
	\item \textbf{Pointer jumping in passes:} run a fixed number $P$ of pointer-jump passes over all nodes in parallel.
	\item \textbf{No compression in the critical loop:} postpone compression to a non-critical stage or do it partially.
\end{itemize}

\begin{figure}[t]
	\centering
	\begin{tikzpicture}[scale=1.0, line cap=round, line join=round]
		\node (r) at (0,0) [circle, draw, inner sep=1.2pt] {$r$};
		\node (a) at (-1.2,-1.0) [circle, draw, inner sep=1.2pt] {$a$};
		\node (b) at (0.0,-1.0) [circle, draw, inner sep=1.2pt] {$b$};
		\node (c) at (1.2,-1.0) [circle, draw, inner sep=1.2pt] {$c$};
		\node (d) at (0.0,-2.0) [circle, draw, inner sep=1.2pt] {$d$};
		\draw (a) -- (r);
		\draw (b) -- (r);
		\draw (c) -- (b);
		\draw (d) -- (c);
		\node[align=center] at (0,-2.9) {\small variable-depth parent pointers};
		
		\begin{scope}[xshift=7.0cm]
			\node (r2) at (0,0) [circle, draw, inner sep=1.2pt] {$r$};
			\node (a2) at (-1.2,-1.0) [circle, draw, inner sep=1.2pt] {$a$};
			\node (b2) at (0.0,-1.0) [circle, draw, inner sep=1.2pt] {$b$};
			\node (c2) at (1.2,-1.0) [circle, draw, inner sep=1.2pt] {$c$};
			\node (d2) at (0.0,-2.0) [circle, draw, inner sep=1.2pt] {$d$};
			\draw (a2) -- (r2);
			\draw (b2) -- (r2);
			\draw (c2) -- (r2);
			\draw (d2) -- (r2);
			\node[align=center] at (0,-2.9) {\small after bounded pointer jumping};
		\end{scope}
		
		\node[align=center] at (3.5,0.55) {\small \textbf{Idea:} replace data-dependent depth\\[-0.2em]\small with fixed-pass flattening.};
		\draw[->] (2.1,-1.3) -- (4.9,-1.3);
	\end{tikzpicture}
	\caption{UF pointer chasing is variable-latency. A hardware-friendly approach uses fixed-pass pointer jumping/halving to reduce depth deterministically.}
	\label{fig:uf-pointer-jump}
\end{figure}
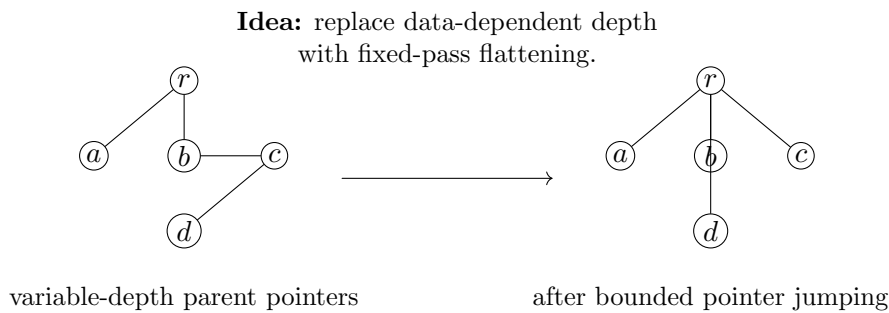

\subsection{Turning UF into bounded passes (real-time guarantee strategy)}

\subsubsection{What the decoder must compute (surface-code UF view)}
Input (per time step or per window):
\begin{itemize}
	\item detection events (defects) on a spacetime graph,
	\item edge weights / probabilities (possibly uniform),
	\item boundary conditions (virtual nodes for rough/smooth boundaries).
\end{itemize}
Output:
\begin{itemize}
	\item a set of correction edges or a Pauli-frame update that pairs defects appropriately.
\end{itemize}

\subsubsection{Bounded-pass philosophy}
We enforce determinism by organizing work into a fixed schedule:
\[
\underbrace{\textbf{PASS 1}}_{\text{GROW}} \to
\underbrace{\textbf{PASS 2}}_{\text{MERGE}} \to
\underbrace{\textbf{PASS 3}}_{\text{GROW}} \to \cdots \to
\underbrace{\textbf{FINAL}}_{\text{PEEL}}.
\]
Each pass:
\begin{itemize}
	\item is a streaming sweep over lattice sites / edges,
	\item performs only local reads/writes (bounded fan-in),
	\item uses arbitration to avoid write conflicts.
\end{itemize}

\subsubsection{A concrete deterministic bound}
Let $W$ be the decoding window size in time, and let $d$ be code distance.
A practical strategy is to limit cluster growth radius to a maximum $R_{\max}$
derived from hardware budgets and failure tolerance:
\[
R_{\max} \le c_1 d \quad\text{(planar)}\qquad\text{or}\qquad R_{\max} \le c_2 W \quad\text{(spacetime)}.
\]
Then run exactly $R_{\max}$ growth rounds (or fewer if convergence is detected early, but do not rely on early exit for deadlines).

\begin{rem}[Determinism vs.\ adaptivity]
	You may still \emph{record} early convergence to improve average power/throughput,
	but \emph{deadline planning} must assume the full $R_{\max}$ passes.
\end{rem}

\subsubsection{Two implementation patterns}
\begin{enumerate}
	\item \textbf{Strict deterministic:} always run $P$ passes, always same memory traffic.
	\item \textbf{Deterministic upper bound + safe early-exit:} run up to $P$ passes,
	but only exit early if the FSM proves a fixed-point (no updates) and the proof is itself bounded-time.
\end{enumerate}

\subsection{Memory layout: parent/rank arrays, bank conflicts, and locality}

\subsubsection{Indexing the decoding graph}
Flatten a 2D lattice (or 3D spacetime lattice) into linear addresses.
Example for a 2D grid $L_x\times L_y$:
\[
\mathrm{idx}(x,y)=x + L_x y.
\]
For spacetime with $t\in\{0,\dots,W-1\}$:
\[
\mathrm{idx}(x,y,t)=x + L_x y + (L_xL_y)t.
\]

\subsubsection{What must be stored}
Typical UF decoder state (per node or per edge) includes:
\begin{itemize}
	\item \textbf{parent} (root pointer): \texttt{parent[idx]}.
	\item \textbf{rank/size}: \texttt{rank[idx]} or \texttt{size[idx]}.
	\item \textbf{parity/charge}: whether the cluster currently contains an odd number of defects.
	\item \textbf{active flag}: whether the site participates in current pass.
	\item \textbf{boundary flag / virtual root}: for planar code boundary matching.
\end{itemize}

\subsubsection{Banking to sustain bandwidth}
UF is bandwidth-dominated. A rule of thumb:
\[
\text{throughput} \approx \frac{\text{banks}\times \text{word\_width}}{\text{reads+writes per site per pass}}.
\]
Common banking schemes:
\begin{itemize}
	\item \textbf{checkerboard (2-bank) banking:} bank = $(x+y+t)\bmod 2$.
	\item \textbf{multi-bank striping:} bank = $\mathrm{idx}\bmod B$.
	\item \textbf{edge vs node separation:} store node arrays and edge arrays in separate RAMs.
\end{itemize}

\begin{figure}[t]
	\centering
	\begin{tikzpicture}[scale=1.0, line cap=round, line join=round]
		\def\LX{6}
		\def\LY{4}
		\foreach \x in {0,...,5}{
			\foreach \y in {0,...,3}{
				\pgfmathtruncatemacro{\b}{mod(\x+\y,2)}
				\ifnum\b=0
				\fill[gray!20] (\x,\y) rectangle (\x+1,\y+1);
				\else
				\fill[gray!55] (\x,\y) rectangle (\x+1,\y+1);
				\fi
				\draw[gray!70] (\x,\y) rectangle (\x+1,\y+1);
			}
		}
		\node[anchor=west] at (6.4,2.8) {\small Bank 0 (light)};
		\node[anchor=west] at (6.4,2.3) {\small Bank 1 (dark)};
		\node[align=center] at (3.0,4.6) {\small checkerboard banking reduces neighbor conflicts};
	\end{tikzpicture}
	\caption{Checkerboard banking example: neighboring sites fall into different banks, enabling simultaneous neighbor reads in local-update passes.}
	\label{fig:uf-banking}
\end{figure}
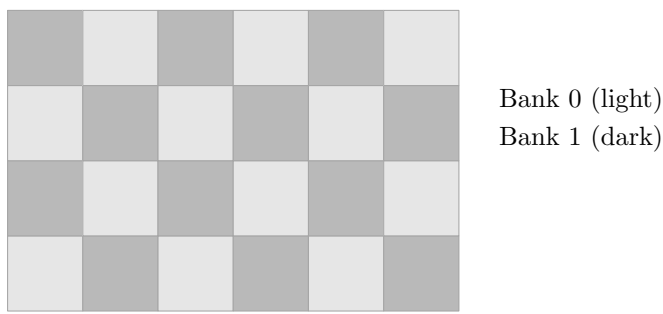

\subsubsection{Avoiding bank conflicts (what actually causes stalls)}
Conflicts occur when, in the same cycle, you need two accesses to the same bank/port.
Typical conflict sources in UF:
\begin{itemize}
	\item multiple \texttt{find} operations chasing pointers into the same hot root,
	\item simultaneous unions trying to update the same parent entry,
	\item growth steps reading multiple neighbors that land in same bank due to poor mapping.
\end{itemize}
Mitigations:
\begin{itemize}
	\item cap the number of \texttt{find} per cycle and pipeline them,
	\item use union arbitration (one writer per root per cycle),
	\item store a small root-cache (recent roots) in registers,
	\item precompute neighbor indices to avoid address-gen bottlenecks.
\end{itemize}

\subsection{Pipeline stages: grow, merge, peel (surface-code UF viewpoint)}

\subsubsection{Stage A: GROW (cluster expansion)}
Each defect starts as a singleton cluster. Clusters expand outward layer-by-layer.
In surface-code UF decoding, a cluster grows until it becomes \emph{even} (its defects are paired internally)
or it hits a boundary/another cluster.

State per cluster root typically includes a \texttt{charge} bit:
\[
\texttt{charge[root]} = (\#\text{defects in cluster}) \bmod 2.
\]
Growth rule (conceptual):
\begin{itemize}
	\item for each active boundary edge of a cluster, attempt to claim the neighboring node/edge,
	\item if two clusters attempt the same site, record a \emph{merge candidate}.
\end{itemize}

\subsubsection{Stage B: MERGE (resolve collisions deterministically)}
When clusters meet, we union their roots.
To make merges deterministic:
\begin{itemize}
	\item define a strict total order on roots (e.g.\ smaller \texttt{root\_id} wins),
	\item or use (rank, root\_id) lexicographic order.
\end{itemize}
Then in hardware:
\begin{itemize}
	\item collect merge requests into a small FIFO,
	\item arbitrate so each root participates in at most one union per cycle,
	\item perform union updates (parent, rank/size, charge) in a writeback stage.
\end{itemize}

\subsubsection{Stage C: PEEL (produce correction edges)}
After clusters are grown/merged to resolve parity, we must output a correction.
In UF surface-code decoding, peeling often means:
\begin{itemize}
	\item identify a spanning forest inside each cluster,
	\item traverse from leaves inward, selecting edges that connect defects appropriately,
	\item output edges that form the correction chain (or equivalent Pauli-frame update).
\end{itemize}

\begin{figure}[t]
	\centering
	\begin{tikzpicture}[scale=1.0, line cap=round, line join=round]
		\node (in) at (0,0) [draw, rounded corners, minimum width=2.3cm, minimum height=0.9cm] {\small defect stream};
		\node (grow) at (3.2,0) [draw, rounded corners, minimum width=2.3cm, minimum height=0.9cm] {\small GROW pass};
		\node (merge) at (6.4,0) [draw, rounded corners, minimum width=2.6cm, minimum height=0.9cm] {\small MERGE arbiter};
		\node (peel) at (9.8,0) [draw, rounded corners, minimum width=2.3cm, minimum height=0.9cm] {\small PEEL pass};
		\node (out) at (12.8,0) [draw, rounded corners, minimum width=2.5cm, minimum height=0.9cm] {\small correction output};
		
		\draw[->] (in) -- (grow);
		\draw[->] (grow) -- (merge);
		\draw[->] (merge) -- (peel);
		\draw[->] (peel) -- (out);
		
		\node (ram1) at (3.2,-1.4) [draw, minimum width=3.4cm, minimum height=0.8cm] {\small banked RAM: parent/rank/charge};
		\draw[->] (grow.south) -- (ram1.north);
		\draw[->] (merge.south) -- (ram1.north);
		
		\node (ram2) at (9.8,-1.4) [draw, minimum width=3.4cm, minimum height=0.8cm] {\small banked RAM: cluster forest/edges};
		\draw[->] (peel.south) -- (ram2.north);
		
		\node[align=center] at (6.4,1.1) {\small fixed-pass microarchitecture (deterministic latency)};
	\end{tikzpicture}
	\caption{A practical UF decoder pipeline: streaming detection events feed fixed-pass GROW/MERGE/PEEL stages, backed by banked RAM. The MERGE stage arbitrates write hazards deterministically.}
	\label{fig:uf-pipeline}
\end{figure}
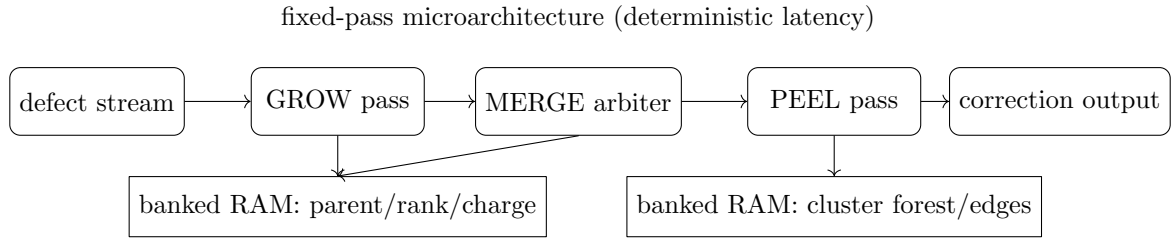

\subsection{Scheduling: clocks, hazards, and convergence detection}

\subsubsection{Clocked schedule (one workable template)}
Assume one lattice sweep per pass. For each pass:
\begin{enumerate}
	\item \textbf{Read phase:} fetch local neighborhood state (parents, charges, flags).
	\item \textbf{Compute phase:} decide growth claims / merge requests / peel actions.
	\item \textbf{Arbitrate phase:} resolve conflicts (one writer per site/root).
	\item \textbf{Writeback phase:} update RAM deterministically.
\end{enumerate}

\subsubsection{Common hazards and fixes}
\begin{itemize}
	\item \textbf{RAW hazard (read-after-write):} reading a parent that is updated in same pass.
	\begin{itemize}
		\item Fix: double-buffer parent pointers per pass, or enforce writeback happens after all reads (two-phase sweep).
	\end{itemize}
	\item \textbf{WAW hazard (write-after-write):} two unions want to write \texttt{parent[root]}.
	\begin{itemize}
		\item Fix: merge arbiter with per-root lock bits; losers retry next cycle/pass.
	\end{itemize}
	\item \textbf{Bank conflict:} two accesses to same RAM port in same cycle.
	\begin{itemize}
		\item Fix: banking + access scheduler; limit degree of parallelism to what ports sustain.
	\end{itemize}
\end{itemize}

\subsubsection{Convergence detection (bounded)}
A safe bounded scheme:
\begin{itemize}
	\item maintain a per-pass \texttt{dirty} flag that ORs any state update,
	\item after each pass, if \texttt{dirty}=0 for two consecutive passes, declare fixed point,
	\item still enforce an absolute max pass count $P_{\max}$.
\end{itemize}
The two-pass rule avoids transient zero-update artifacts caused by pipeline bubbles.

\subsection{Worst-case analysis: cycles, passes, and p99 control}

\subsubsection{What drives worst-case time}
Worst-case work scales with:
\begin{itemize}
	\item the maximum cluster radius required before all odd clusters become even,
	\item the number of merge conflicts (contention) per cycle,
	\item the maximum pointer depth if \texttt{find} is used in-line.
\end{itemize}

\subsubsection{A deterministic upper bound template}
Let:
\[
N = \text{number of sites in window},\quad
E = \text{number of edges considered},\quad
P_{\max} = \text{max passes},\quad
C = \text{cycles per site per pass}.
\]
Then a hard upper bound on cycles is:
\[
T_{\max} \le P_{\max}\cdot C \cdot N + T_{\text{peel}}.
\]
You pick $P_{\max}$ from physics + product constraints (cadence, backlog stability),
and pick $C$ from memory ports and arbitration pipeline depth.

\subsubsection{p99/p999 control strategy}
\begin{itemize}
	\item \textbf{Avoid data-dependent pointer depth in the critical path.}
	\item \textbf{Cap merge retries:} if a root is highly contended, queue and resolve in later pass.
	\item \textbf{Make memory traffic regular:} fixed sweep patterns and fixed numbers of accesses per site.
\end{itemize}

\begin{rem}[Why ``near-linear'' is not enough]
	Asymptotic runtime is not a latency guarantee.
	Real-time decoding needs \emph{upper bounds} that are compatible with the control cycle.
\end{rem}

\subsection{Policy knobs: distance, window, weights, and tie-breaking}

\subsubsection{Distance $d$ and window $W$}
\begin{itemize}
	\item Increasing $d$ increases lattice size ($\sim d^2$ checks per round).
	\item Increasing $W$ increases spacetime volume ($\sim d^2 W$ nodes).
\end{itemize}
Hardware effect: RAM footprint grows; bandwidth pressure rises; pass budget may need retuning.

\subsubsection{Weights and growth rule variants}
Variants differ in how they translate weights into growth:
\begin{itemize}
	\item \textbf{uniform growth:} grow by unweighted radius (fastest, simplest).
	\item \textbf{weighted growth:} approximate weighted distances by multi-rate growth or bucketed weights.
\end{itemize}
If weights are bucketed into $K$ levels, a hardware-friendly scheme is to run $K$ subpasses per growth layer.

\subsubsection{Tie-breaking (determinism requirement)}
Tie-breaking must be fixed and reproducible:
\[
\text{winner} = \arg\min(\texttt{rank},\ \texttt{root\_id}) \quad\text{or}\quad \arg\min(\texttt{root\_id})
\]
(depending on desired behavior).
Never rely on ``first-arrival'' in arbitration unless the arrival order is itself deterministic under backpressure.

\subsection{Exercises (design a pipeline)}

\begin{exercise}[UF in hardware: bounded \texttt{find}]
	Suppose \texttt{parent} pointers can form chains of length up to $L$.
	Design a bounded alternative to full path compression using exactly $P$ pointer-jump passes.
	Explain what each pass does and why the resulting maximum depth shrinks exponentially in $P$.
\end{exercise}
\noindent\textbf{Solution.}
A pointer-jump pass replaces each node's parent by its grandparent:
\[
\texttt{parent}[i] \leftarrow \texttt{parent}[\texttt{parent}[i]].
\]
After one pass, a chain of length $L$ becomes length about $\lceil L/2\rceil$,
because every node skips one level.
After $P$ passes, the effective chain length is at most $\lceil L/2^P\rceil$.
Thus choosing $P\ge \lceil \log_2 L\rceil$ guarantees near-flat trees.
In RTL, implement each pass as a full sweep:
read \texttt{parent}[i] and \texttt{parent[parent[i]]}, then write back the updated parent.
To avoid RAW hazards, use a two-phase update (read old array, write new array) or double buffering.

\begin{exercise}[Banking and conflicts]
	Consider a 2D lattice with neighbor reads (N/E/S/W) each cycle.
	Propose a banking function \texttt{bank(idx)} that reduces conflicts for neighbor access,
	and explain one remaining scenario that still causes conflict.
\end{exercise}
\noindent\textbf{Solution.}
A classic choice is checkerboard banking:
\[
\texttt{bank}(x,y,t)=(x+y+t)\bmod 2.
\]
Then each site has neighbors in the opposite bank, so reading a site and one neighbor can be conflict-free across two banks.
A remaining conflict occurs when you attempt to read \emph{two} neighbors that happen to fall in the same bank
under your mapping (e.g.\ with only 2 banks, two different neighbors both lie in bank 1).
To reduce this, increase banks (e.g.\ 4-bank striping) or schedule neighbor reads over two cycles.

\begin{exercise}[Merge arbiter]
	You receive merge requests $(r_a,r_b)$ as a stream (candidate unions) where roots can repeat.
	Design a deterministic arbitration policy that prevents two writes to the same root in one cycle.
	What state must the arbiter maintain?
\end{exercise}
\noindent\textbf{Solution.}
Maintain a per-cycle (or per-epoch) \texttt{locked[root]} bitset initialized to 0.
When a request $(r_a,r_b)$ arrives:
\begin{itemize}
	\item compute the canonical ordered pair $(u,v)$ with $u=\min(r_a,r_b)$, $v=\max(r_a,r_b)$,
	\item accept the request only if \texttt{locked[u]}=0 and \texttt{locked[v]}=0,
	\item upon accept, set both locks to 1 and emit exactly one union update for writeback,
	\item if rejected, push the request into a retry FIFO for the next cycle/pass.
\end{itemize}
This prevents WAW hazards on parent/rank/charge fields. The arbiter must store: the lock bits, a FIFO for rejects,
and optionally a small map to coalesce duplicate requests.

\begin{exercise}[Deterministic pass budget]
	Assume a planar code of distance $d$ and you choose a growth budget $R_{\max}=\lfloor d/2\rfloor$.
	Give a short argument for why this is a reasonable deterministic cap,
	and list one failure mechanism if the true required radius exceeds $R_{\max}$.
\end{exercise}
\noindent\textbf{Solution.}
A surface code of distance $d$ can correct up to about $\lfloor (d-1)/2\rfloor$ errors in the ideal setting;
many decoding arguments use the idea that typical error clusters remain smaller than $d/2$ below threshold.
Thus setting $R_{\max}\approx d/2$ is a physically motivated cap aligned with the notion that
clusters larger than that are already in a high-risk regime for logical failure.
A failure mechanism when the needed radius exceeds $R_{\max}$ is that odd clusters may remain unmatched (unresolved charge),
or they may be forced into suboptimal boundary matchings, increasing the chance that the residual cycle corresponds to a logical operator.

\begin{exercise}[End-to-end microarchitecture sketch]
	Draw (TikZ block diagram) a minimal UF decoder datapath including:
	input FIFO (detection events), banked RAM (parent/rank/charge), merge arbiter, and output FIFO.
	Label the main control signals that must exist to enforce determinism.
\end{exercise}
\noindent\textbf{Solution.}
A correct block diagram must include:
\begin{itemize}
	\item input FIFO valid/ready (backpressure),
	\item pass counter (bounds $P_{\max}$),
	\item bank select/address gen,
	\item merge arbiter locks + retry FIFO,
	\item dirty flag accumulation per pass,
	\item output FIFO valid/ready.
\end{itemize}
The key determinism signals are: pass counter, fixed sweep address order, arbitration rule (tie-break), and $P_{\max}$ cap.

\begin{exercise}[Policy knob: weighted vs.\ unweighted growth]
	Suppose edges have two weight levels: \texttt{light} and \texttt{heavy}.
	Propose a hardware-friendly growth rule that approximates weighted shortest paths
	without running Dijkstra, using only a small fixed number of subpasses.
\end{exercise}
\noindent\textbf{Solution.}
One approach is bucketed growth:
\begin{itemize}
	\item In each growth layer, run two subpasses:
	(1) expand across all \texttt{light} edges,
	(2) expand across \texttt{heavy} edges only if no \texttt{light} expansion was possible at that frontier
	(or expand heavy every $k$ layers).
\end{itemize}
This approximates weighted distance by making heavy edges effectively ``slower''.
In RTL, implement as two masks and two sweeps with identical memory access patterns.
Because the number of subpasses is fixed, latency remains deterministic.

\subsubsection*{RTL-facing checklist (copy into a design doc)}
\begin{itemize}
	\item \textbf{Arrays:} parent/rank(or size)/charge/active/boundary; word widths and packing.
	\item \textbf{Banking:} bank function, ports per bank, worst-case conflict scenario.
	\item \textbf{Pass schedule:} number of passes $P_{\max}$, cycles per site per pass.
	\item \textbf{Arbitration:} tie-break rule, locks, retry FIFO depth.
	\item \textbf{Determinism:} fixed sweep order, fixed cap, fixed rule for early-exit proof.
	\item \textbf{Instrumentation:} per-pass dirty flag, conflict counters, maximum queue depth.
\end{itemize}
	
\section{Verification and Testbenches for Decoders (Correctness Before Speed)}
\label{sec:verify}

\subsection{Objective: trust the decoder output}

Decoder RTL is \emph{control-plane hardware}. A wrong decode is not a ``slightly worse answer'':
it is a potential logical failure. Therefore verification comes \emph{before} throughput tuning.
This chapter builds a verification workflow that answers two questions:

\begin{itemize}
	\item \textbf{Correctness:} Does RTL implement the intended decoding policy (and its failure flags) for all supported modes?
	\item \textbf{Real-time safety:} Under realistic syndrome streams and fault scenarios, does RTL meet bounded-latency behavior
	(including p99/p999) and fail closed when it cannot?
\end{itemize}

\medskip
\noindent\textbf{Guiding principle.}
A decoder is not only a mapping $\Delta s_t \mapsto \text{frame update}$;
it is also a \emph{stream processor} with state, backpressure, and fault-handling rules.
Verification must cover both.

\subsection{Golden model (Python) vs.\ RTL: what must match}

\subsubsection{Golden model roles}
We maintain a \textbf{Python golden model} that is intentionally simple and readable.
It plays three roles:
\begin{enumerate}
	\item \textbf{Functional oracle:} for a given input stream, it produces the expected output stream.
	\item \textbf{Invariant checker:} it computes internal invariants (syndrome parity, boundary consistency, etc.).
	\item \textbf{Reference for metrics:} it computes correctness statistics (logical failure in simulation, mismatch rates, etc.).
\end{enumerate}

\subsubsection{What ``match'' means (define your contract)}
You must decide \emph{exactly} what fields are required to match cycle-by-cycle.
For streaming decoders, a practical contract is:

\begin{itemize}
	\item \textbf{Output decision equivalence:}
	the \emph{Pauli-frame update} (or correction summary) produced by RTL must equal the golden model's
	for the same configuration and the same interpreted input stream.
	
	\item \textbf{Timing semantics:}
	outputs must appear at the correct cycle offset (fixed pipeline latency or bounded-latency window),
	and the RTL must follow the specified backpressure/valid-ready protocol.
	
	\item \textbf{Fault-handling policy:}
	for corrupted inputs (drops, bursts, misalignment), RTL must:
	(i) raise the agreed flags,
	(ii) follow the agreed degradation policy (erasures or ``fail closed''),
	and (iii) keep stream alignment guarantees.
	
	\item \textbf{Determinism under tie-breaking:}
	if the algorithm has ties (equal weights, simultaneous merges, etc.), the contract must specify
	deterministic tie-break rules so Python and RTL agree.
\end{itemize}

\subsubsection{Canonical stream interface}
Assume each cycle emits a record:
\[
\texttt{pkt}_t = (\texttt{cfg\_id},\, t,\, \Delta s_t,\, \texttt{valid},\, \texttt{meta}),
\]
and the decoder outputs
\[
\texttt{out}_t = (\texttt{cfg\_id},\, t,\, \texttt{frame\_delta}_t,\, \texttt{flags}_t,\, \texttt{valid}).
\]
The testbench must verify:
\begin{itemize}
	\item config ID consistency,
	\item time index monotonicity and alignment,
	\item output validity rules,
	\item field-by-field equality where required.
\end{itemize}

\subsection{Property-based tests: random syndromes and invariants}

\subsubsection{Why property-based tests}
Hand-written unit tests catch ``known bugs''. Property-based tests catch
``unknown unknowns'' by generating many diverse cases and checking \emph{invariants}
that must always hold.

\subsubsection{Two layers of properties}
\paragraph{(A) Stream-level invariants (protocol correctness).}
These should hold regardless of the specific code or noise model:
\begin{itemize}
	\item \textbf{No output without input:} if no valid input is accepted, no valid output should be emitted.
	\item \textbf{Monotonic time:} output time indices do not go backward.
	\item \textbf{Bounded buffering:} FIFO occupancy never exceeds the stated capacity (or overflow flag must assert).
	\item \textbf{Reset behavior:} after reset, internal state returns to a canonical state within $K$ cycles.
\end{itemize}

\paragraph{(B) Decode-level invariants (algorithmic soundness).}
These use the topology--graph decoding logic:
\begin{itemize}
	\item \textbf{Parity / boundary consistency:} the implied correction must match the boundary condition:
	for the intended decode graph, $\partial \widehat{E} = \Delta s$ (in the noiseless single-round model),
	or the appropriate spacetime consistency relation in the 3D model.
	\item \textbf{Stabilizer equivalence:} if two inputs differ by a stabilizer-only disturbance (model-dependent),
	outputs should be equivalent up to the defined gauge.
	\item \textbf{Idempotence on zero syndrome:} if $\Delta s_t=0$ for a long stretch, the decoder should not drift.
\end{itemize}

\subsubsection{Random generators (what to randomize)}
Property tests are only as good as their generators. Useful randomized knobs:
\begin{itemize}
	\item code distance $d$ (within supported range),
	\item window length $W$ (if streaming spacetime),
	\item defect density (sparse to adversarial),
	\item clustered vs.\ uniform defect patterns,
	\item correlated bursts of defects (stress locality + merges),
	\item randomized tie situations (equal weights / simultaneous events),
	\item randomized backpressure (ready/valid perturbations).
\end{itemize}

\subsection{Fault injection: missing bits, flipped bits, burst corruption}

\subsubsection{Fault model taxonomy}
In real control pipelines, corruption is not only ``flip a bit''.
Three high-leverage classes:

\begin{itemize}
	\item \textbf{Erasures / drops:} missing packets or missing subfields.
	\item \textbf{Bit flips:} random single-bit or multi-bit flips in $\Delta s_t$ or metadata.
	\item \textbf{Bursts / misalignment:} contiguous corruption that can shift framing, cause off-by-one time alignment,
	or scramble packet boundaries.
\end{itemize}

\subsubsection{Design the expected behavior (fail closed vs.\ degrade gracefully)}
You must specify what RTL should do. Common policies:

\begin{itemize}
	\item \textbf{Degrade gracefully (erasures):} treat missing data as ``unknown'' and emit an \texttt{ERASURE} flag,
	possibly holding outputs constant (no frame update) for that cycle.
	
	\item \textbf{Fail closed (bursts/misalignment):} assert \texttt{FATAL} or \texttt{DESYNC} and transition to a safe state:
	stop emitting frame updates until re-synchronized or reset.
\end{itemize}

\subsubsection{A self-contained fault-injection diagram (TikZ)}
\begin{figure}[t]
	\centering
	\begin{tikzpicture}[scale=1.0, line cap=round, line join=round, every node/.style={align=center}]
		\node (src) at (0,0)  [draw, rounded corners, minimum width=3.0cm, minimum height=0.95cm]
		{\small ideal $\Delta s_t$ stream};
		\node (inj) at (4.2,0) [draw, rounded corners, minimum width=3.4cm, minimum height=0.95cm]
		{\small fault injector\par\small flip/drop/burst};
		\node (rtl) at (8.9,0) [draw, rounded corners, minimum width=3.3cm, minimum height=0.95cm]
		{\small decoder RTL};
		\node (mon) at (13.4,0) [draw, rounded corners, minimum width=3.8cm, minimum height=0.95cm]
		{\small monitors\par\small flags + invariants};
		
		\draw[->] (src) -- (inj);
		\draw[->] (inj) -- (rtl);
		\draw[->] (rtl) -- (mon);
		
		\node at (8.9,1.25) {\small verify output \emph{and} fault-handling policy};
	\end{tikzpicture}
	\caption{Fault injection loop: corrupt the syndrome stream in controlled ways and verify that RTL either
		degrades gracefully (erasures) or fails closed with explicit flags (bursts, misalignment).}
	\label{fig:fault-injection-verify}
\end{figure}
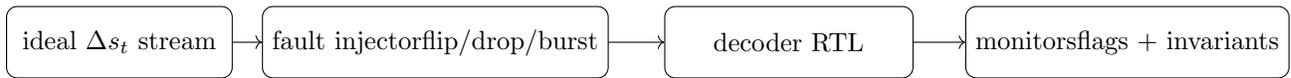

\subsubsection{Fault injection patterns (concrete)}
Let $\Delta s_t$ be a bit-vector.

\paragraph{(1) Flip.}
Pick random indices $i$ and apply
\[
\Delta s_t[i] \leftarrow \Delta s_t[i] \oplus 1.
\]
Check that:
\begin{itemize}
	\item RTL flags \texttt{CORRUPT} only if your policy requires detection,
	\item functional mismatch against golden model is expected unless injector also informs golden model,
	\item the decoder does not deadlock.
\end{itemize}

\paragraph{(2) Drop.}
Remove a packet at time $t$.
Testbench must define what the decoder sees: no \texttt{valid} that cycle, or \texttt{valid} with \texttt{ERASURE} flag.
Check:
\begin{itemize}
	\item time alignment remains consistent,
	\item FIFO behavior is correct,
	\item output policy is followed (hold/no-op vs.\ explicit erasure).
\end{itemize}

\paragraph{(3) Burst / desync.}
Corrupt $B$ consecutive packets or shift framing by one.
Check:
\begin{itemize}
	\item \texttt{DESYNC} or \texttt{FATAL} asserted within bounded cycles,
	\item frame updates stop (fail closed),
	\item recovery procedure works (explicit resync marker or reset).
\end{itemize}

\subsection{Regression suites: reproducibility across parameter knobs}

\subsubsection{Why regression matters}
Decoders have many ``product knobs'': distance $d$, window $W$, weighting model, tie-break rules,
clock rate, FIFO depth, etc. A change that helps one regime can break another.
Regression turns ``it works on my case'' into ``it works on the supported envelope''.

\subsubsection{Organize regression by tiers}
A practical layout:

\begin{itemize}
	\item \textbf{Tier 0 (smoke):} tiny $d$ values, short streams, quick compile + run.
	\item \textbf{Tier 1 (functional):} moderate $d$, randomized seeds, checks invariants + exact match to golden.
	\item \textbf{Tier 2 (stress):} adversarial densities, clustered defects, heavy merges, randomized backpressure.
	\item \textbf{Tier 3 (fault):} flip/drop/burst campaigns with policy assertions.
\end{itemize}

\subsubsection{Reproducibility contract}
Every regression run must log:
\begin{itemize}
	\item \textbf{Configuration:} $(d,W)$, weight model ID, tie-break ID, build hash.
	\item \textbf{Randomness:} seed, generator version.
	\item \textbf{Stream length:} number of cycles, rate, backpressure pattern seed.
	\item \textbf{Outcome:} pass/fail + first failing cycle index + minimal counterexample dump.
\end{itemize}

\subsection{Measuring correctness vs.\ latency tradeoffs}

\subsubsection{Two axes, one dashboard}
Verification is not separate from performance.
You want a \emph{joint} view:

\begin{itemize}
	\item \textbf{Correctness metrics:}
	mismatch rate vs.\ golden, logical-failure rate (in simulation), invariant violations, fault-policy violations.
	\item \textbf{Latency metrics:}
	per-cycle service time, pipeline latency, FIFO occupancy, deadline-miss rate, p99/p999 spikes.
\end{itemize}

\subsubsection{Instrumentation signals you should expose in RTL}
Expose lightweight counters (readable in simulation and optionally in hardware):

\begin{itemize}
	\item cycle counter, accepted-input counter, emitted-output counter,
	\item FIFO occupancy min/max, overflow events,
	\item stall cycles (backpressure), hazard stalls (bank conflict),
	\item phase counters (UF passes, merge iterations),
	\item flags histogram: \texttt{ERASURE}, \texttt{CORRUPT}, \texttt{DESYNC}, \texttt{FATAL}.
\end{itemize}

\subsubsection{Interpreting p99/p999 spikes}
Spikes often come from:
\begin{itemize}
	\item worst-case cluster merges (UF) or worst-case matching steps (MWPM),
	\item memory hot spots / bank conflicts,
	\item backpressure cascades: output blocked $\Rightarrow$ FIFO fills $\Rightarrow$ deadline misses,
	\item fault-policy paths (safe-state transitions).
\end{itemize}
Verification must confirm these paths are \emph{bounded} and that flags accurately mark unsafe regimes.

\subsection{Exercises (write a test plan)}

\begin{exercise}[Define the golden contract]
	Write a one-page conformance contract that states precisely:
	(i) which output fields must match cycle-by-cycle,
	(ii) allowed pipeline latency (fixed $L$ or bounded range),
	(iii) fault-policy behavior for drop/flip/burst,
	(iv) tie-break rules.
\end{exercise}

\noindent\textbf{Solution sketch.}
A strong answer includes:
\begin{itemize}
	\item a table of input/output records and field semantics,
	\item timing diagram: \texttt{valid/ready} handshake and accepted vs.\ observed cycles,
	\item explicit policy table: \texttt{event} $\rightarrow$ \texttt{flag} $\rightarrow$ \texttt{output action},
	\item determinism notes: ordering of merges, stable priority rules.
\end{itemize}

\begin{exercise}[Property-based invariants]
	Propose three stream-level invariants and three decode-level invariants for your decoder.
	For each, state how to check it in a testbench and what a counterexample dump should contain.
\end{exercise}

\noindent\textbf{Solution sketch.}
Good invariants include:
\begin{itemize}
	\item stream-level: monotonic time index, no output without accepted input, bounded FIFO occupancy;
	\item decode-level: boundary consistency on noiseless cases, idempotence on zero syndrome, stable behavior under reset.
\end{itemize}
Counterexample dumps should include: seed, cfg ID, first failing cycle, last $K$ packets, internal counters, flags.

\begin{exercise}[Fault injection campaign design]
	Design a fault injection campaign with:
	(i) flip rate $\alpha$, (ii) drop rate $\beta$, (iii) burst length distribution.
	Specify what flags you expect and what ``pass'' means for each fault type.
\end{exercise}

\noindent\textbf{Solution sketch.}
A complete answer specifies:
\begin{itemize}
	\item flip: mismatch expected unless golden also sees corruption; check bounded latency + no deadlock;
	\item drop: expect \texttt{ERASURE} and no frame update (or specified behavior) + alignment preserved;
	\item burst: expect \texttt{DESYNC}/\texttt{FATAL} within bounded cycles + fail-closed behavior.
\end{itemize}

\begin{exercise}[Regression matrix]
	Choose a supported envelope (e.g.\ $d\in\{5,7,9,11\}$, $W\in\{3,5,7\}$).
	Propose a regression matrix of tests across $(d,W)$ with tiers 0--3, including run-time budgets.
\end{exercise}

\noindent\textbf{Solution sketch.}
A good matrix balances coverage and cost:
tier 0 for all $(d,W)$, tier 1 for a subset with many seeds, tier 2 stress on largest $(d,W)$,
tier 3 fault tests on representative points.

\subsection{Conformance specification (golden model contract)}

\subsubsection{A minimal conformance template (copy-paste)}
Use the following as a starting point. Fill in the blanks for your implementation.

\begin{rem}
	\textbf{Conformance Contract: Decoder RTL vs.\ Python Golden Model}
	\begin{itemize}
		\item \textbf{Config identity.} Each packet carries \texttt{cfg\_id}. RTL must emit the same \texttt{cfg\_id}.
		\item \textbf{Time identity.} Input time index $t$ is preserved. Output time index equals $t$ with pipeline offset $L=\underline{\hspace{1.2cm}}$ (fixed) or $L\in[\underline{\hspace{0.6cm}},\underline{\hspace{0.6cm}}]$ (bounded).
		\item \textbf{Valid-ready semantics.} A packet is ``accepted'' iff \texttt{in\_valid \& in\_ready}.
		Only accepted packets may produce outputs.
		\item \textbf{Functional output.} For accepted packets without fault flags, RTL output
		\texttt{frame\_delta} must equal golden output under the same tie-break ID.
		\item \textbf{Tie-break.} When weights are equal, priority order is:
		\underline{\hspace{6cm}}.
		\item \textbf{Fault policy.}
		\begin{itemize}
			\item Drop: raise \texttt{ERASURE} and action \underline{\hspace{4cm}}.
			\item Flip: raise \texttt{CORRUPT} only if detectable; action \underline{\hspace{4cm}}.
			\item Burst/desync: raise \texttt{DESYNC}/\texttt{FATAL} within $\underline{\hspace{0.8cm}}$ cycles and stop frame updates until \underline{\hspace{3.5cm}}.
		\end{itemize}
		\item \textbf{Bounded-latency guarantee.} For supported $(d,W)$ and within operating defect density
		$\le \underline{\hspace{1.2cm}}$, the decoder must meet deadline $D=\underline{\hspace{1.0cm}}$ cycles with p99 $\le \underline{\hspace{1.0cm}}$.
	\end{itemize}
\end{rem}

\subsubsection{What the testbench must implement}
A correct testbench must:
\begin{enumerate}
	\item generate ideal streams (and optional injected streams),
	\item run golden and RTL under identical configuration and seeds,
	\item align cycles by accepted/observed semantics (not by wall-clock),
	\item compare outputs and flags per the contract,
	\item dump minimal counterexamples (first failure + last $K$ packets + internal counters),
	\item record latency distributions and flag histograms.
\end{enumerate}
	
\section{Host Interface and Control-Stack Integration (Syndrome In, Correction Out)}
\label{sec:interface}

\subsection{Objective: make the decoder a drop-in infrastructure component}

A surface-code decoder is useful only if it can be \emph{plugged into} a real control stack:
syndrome bits arrive continuously, the decoder must keep stream alignment, and corrections
must be returned with predictable latency and explicit safety flags.

\medskip
\noindent\textbf{Design target.}
Treat the decoder as a \emph{streaming appliance} with:
\begin{itemize}
	\item a \textbf{well-defined wire format} for syndrome packets,
	\item \textbf{bounded buffering} + backpressure,
	\item \textbf{deterministic output timing} (fixed or bounded latency),
	\item \textbf{versioned messages} and reproducible configuration IDs,
	\item \textbf{fail-closed behavior} under desync and overload.
\end{itemize}

\subsection{Streaming model: packetization, framing, and timestamps}

\subsubsection{The stream is the product}
Syndrome data is not a batch file. It is a \emph{clocked stream}:
each round $t$ produces a detection-event vector (often $\Delta s_t$) that must be decoded online.

\medskip
\noindent We model the input as packets:
\[
\texttt{pkt}_t = (\texttt{hdr},\, \texttt{payload},\, \texttt{trailer}),
\]
where:
\begin{itemize}
	\item \texttt{hdr} contains routing, framing, and time identity,
	\item \texttt{payload} contains $\Delta s_t$ bits (and optional metadata),
	\item \texttt{trailer} contains integrity checks (CRC) and end markers.
\end{itemize}

\subsubsection{Framing: how the decoder knows ``where one round ends''}
Framing must be explicit. Common choices:
\begin{itemize}
	\item \textbf{Fixed-length frames:} each round has exactly $B$ bytes of payload. Simplest for FPGA.
	\item \textbf{Length-delimited frames:} a \texttt{len} field tells payload size; supports multiple code distances.
	\item \textbf{Marker-based:} special sync words; robust but must handle marker corruption.
\end{itemize}

\medskip
\noindent\textbf{Recommended for real-time:}
fixed-length frames per configuration (each \texttt{cfg\_id} implies a fixed payload length),
plus periodic sync markers (e.g.\ every $K$ rounds) to detect drift.

\subsubsection{Timestamps: which time do we mean?}
You usually need \emph{two} times:
\begin{itemize}
	\item \textbf{Round index} $t$ (logical time): which QEC cycle this packet belongs to.
	\item \textbf{Arrival time} (transport time): when host/FPGA observed the packet.
\end{itemize}

\medskip
\noindent The decoder primarily uses $t$ for correctness (alignment), while the host uses arrival time
for latency accounting and congestion diagnostics.

\subsubsection{A concrete header}
A minimal header:
\[
\texttt{hdr} =
(\texttt{magic},\, \texttt{version},\, \texttt{cfg\_id},\, t,\, \texttt{seq},\, \texttt{flags}).
\]
Interpretation:
\begin{itemize}
	\item \texttt{magic}: sync word (detect framing loss),
	\item \texttt{version}: schema version,
	\item \texttt{cfg\_id}: configuration identifier (distance/window/policy/build hash),
	\item $t$: round index,
	\item \texttt{seq}: transport sequence number (detect drops),
	\item \texttt{flags}: e.g.\ ``erasure present'' or ``data invalid''.
\end{itemize}

\subsection{Buffers and backpressure: avoiding overflow and stale decisions}

\subsubsection{Why buffering exists}
Even with deterministic RTL, the environment is not deterministic:
PCIe bursts, DMA jitter, host scheduling, and occasional decoder slow paths (worst-case merges).
Therefore you need buffers.

\subsubsection{Two separate buffers (do not mix them)}
\begin{itemize}
	\item \textbf{Ingress FIFO:} holds incoming $\Delta s_t$ packets before decode.
	\item \textbf{Egress FIFO:} holds outgoing corrections/flags before host consumption.
\end{itemize}

\subsubsection{Backpressure: the core control law}
Backpressure is the explicit mechanism that prevents overflow:
\begin{itemize}
	\item \textbf{Ready/valid (streaming bus):} decoder asserts \texttt{in\_ready}.
	\item \textbf{Credits (host link):} host sends $C$ credits; FPGA decrements per packet.
\end{itemize}

\medskip
\noindent\textbf{Rule: never silently drop.}
If you must drop, emit a \texttt{DROP} / \texttt{ERASURE} event and record it in flags and counters.

\subsubsection{Staleness: when an answer is worse than no answer}
In QEC control loops, a correction that arrives too late may be harmful.
Define a \emph{staleness window} $S$ (in rounds).
If output is older than $S$ relative to the host's ``current'' round, then:
\begin{itemize}
	\item either discard it with a \texttt{STALE} flag,
	\item or treat it as a Pauli-frame update that is still consistent (depends on your control design).
\end{itemize}

\subsubsection{A buffer-state visualization (self-contained TikZ)}
\begin{figure}[t]
	\centering
	\begin{tikzpicture}[
		font=\small,
		>=Latex,
		node distance=10mm and 14mm,
		box/.style={
			draw, thick, rounded corners,
			minimum width=3.1cm, minimum height=0.95cm,
			align=center
		},
		arr/.style={-Latex, thick},
		note/.style={font=\footnotesize, align=center}
		]
		\node[box] (host) {host\\\footnotesize (DAQ + control)};
		\node[box, right=of host] (in) {ingress FIFO\\\footnotesize (syndrome)};
		\node[box, right=of in] (dec) {decoder core\\\footnotesize (bounded phases)};
		
		\node[box, below=12mm of dec] (out) {egress FIFO\\\footnotesize (correction)};
		\node[box, right=of out] (ctrl) {control\\\footnotesize (Pauli frame)};
		
		\draw[arr] (host) -- node[above]{\footnotesize packets $\Delta s_t$} (in);
		\draw[arr] (in)   -- node[above]{\footnotesize dequeue} (dec);
		
		\draw[arr] (dec)  -- node[right]{\footnotesize updates + flags} (out);
		\draw[arr] (out)  -- node[above]{\footnotesize DMA/reads} (ctrl);
		
		\draw[arr] (in.north) to[out=120,in=60]
		node[above]{\footnotesize backpressure / credits} (host.north);
		
		\node[note, above=7mm of dec, xshift=-18mm]
		{overflow $\Rightarrow$ explicit flags, not silent loss};
		
	\end{tikzpicture}
	\caption{Integration view: ingress/egress buffering isolates transport jitter from deterministic decode.
		Backpressure/credits prevent overflow. Late outputs must be handled explicitly (stale policy).}
	\label{fig:buffers-backpressure}
\end{figure}
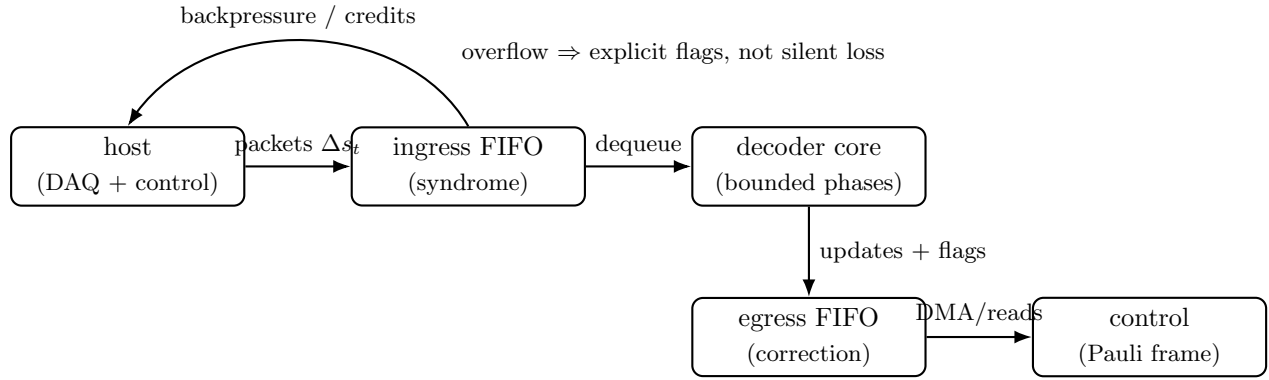

\subsection{Minimal API: C/Python stubs and command protocol}

\subsubsection{Two planes: data plane vs.\ control plane}
\begin{itemize}
	\item \textbf{Data plane (hot path):} send $\Delta s_t$ packets, receive corrections.
	\item \textbf{Control plane (cold path):} set config, query status, read counters, reset.
\end{itemize}

\subsubsection{Minimal command set}
Define a small set of commands (opcodes). Example:
\begin{itemize}
	\item \texttt{SET\_CFG(cfg\_blob)}: load configuration (distance/window/weights/tie-break).
	\item \texttt{START()} / \texttt{STOP()}: enable/disable streaming.
	\item \texttt{PUSH(pkt)}: enqueue one syndrome packet (or start DMA descriptor).
	\item \texttt{PULL()} : dequeue one output packet (or poll DMA completion).
	\item \texttt{GET\_STATUS()}: health flags, FIFO occupancy, last round index.
	\item \texttt{GET\_COUNTERS()}: p99 bins, stalls, overflow events, fault flags histogram.
	\item \texttt{RESET()}: reset state machines and FIFOs safely.
\end{itemize}

\subsubsection{C API skeleton (header-only style)}
\begin{verbatim}
	typedef struct {
		uint32_t magic;
		uint16_t version;
		uint16_t hdr_bytes;
		uint64_t cfg_id;
		uint32_t round_t;
		uint32_t seq;
		uint32_t flags;
		uint32_t payload_bytes;
	} dec_hdr_t;
	
	typedef struct {
		dec_hdr_t hdr;
		uint8_t   payload[DEC_MAX_PAYLOAD];
		uint32_t  crc32;
	} dec_pkt_t;
	
	typedef struct {
		uint64_t cfg_id;
		uint32_t round_t;
		uint32_t flags;        // ERASURE/CORRUPT/DESYNC/STALE/...
		uint32_t frame_words;  // compact correction / frame delta
		uint32_t reserved;
	} dec_out_t;
	
	int dec_set_cfg(const void* cfg_blob, size_t nbytes);
	int dec_start(void);
	int dec_stop(void);
	int dec_push(const dec_pkt_t* pkt);
	int dec_pull(dec_out_t* out);
	int dec_get_status(uint32_t* status_words, size_t nwords);
	int dec_reset(void);
\end{verbatim}

\subsubsection{Python stub (thin wrapper)}
Python should not ``re-implement'' the protocol. It should wrap the same C ABI:
\begin{verbatim}
	# dec.py (conceptual)
	def set_cfg(cfg_blob: bytes) -> None: ...
	def start() -> None: ...
	def push(pkt: bytes) -> None: ...
	def pull() -> dict: ...
	def status() -> dict: ...
	def reset() -> None: ...
\end{verbatim}

\subsubsection{Command protocol detail: avoid ambiguity}
Avoid variable-length ambiguity in hot path:
\begin{itemize}
	\item fixed-size packet header,
	\item explicit payload length,
	\item CRC over (header || payload),
	\item strict version checks (reject unknown versions with a flag).
\end{itemize}

\subsection{On-prem vs.\ rack: deployment models and constraints}

\subsubsection{Model A: ``on-prem'' lab box}
Typical: control computer + FPGA card (PCIe) near the experiment rack.
Constraints:
\begin{itemize}
	\item low transport latency (good),
	\item host OS jitter still exists (must buffer),
	\item limited physical redundancy (single point of failure).
\end{itemize}

\subsubsection{Model B: rack-scale service}
Decoder as a service inside a rack (possibly multiple FPGAs).
Constraints:
\begin{itemize}
	\item network transport adds jitter (must treat as part of p99 budget),
	\item multi-client isolation (strong config ID + authorization),
	\item failover and hot-swap become relevant (state handoff policy).
\end{itemize}

\subsubsection{Rule-of-thumb: what changes}
\begin{itemize}
	\item On-prem: optimize absolute latency.
	\item Rack: optimize tail latency + robustness + observability.
\end{itemize}

\subsection{Integration with experimental control loops}

\subsubsection{Where the decoder sits}
A typical closed-loop view:
\[
\text{qubits} \rightarrow \text{measure} \rightarrow \Delta s_t \rightarrow
\text{decode} \rightarrow \text{Pauli frame / corrective pulses} \rightarrow \text{qubits}.
\]

\subsubsection{Two integration styles}
\paragraph{(1) Pauli frame update (preferred).}
The decoder outputs a \texttt{frame\_delta} that updates a classical record.
Physical pulses are not necessarily applied immediately; instead future interpretation of measurements is updated.
Pros: lower physical actuation requirements; robust to small latencies.

\paragraph{(2) Active correction pulses.}
The decoder outputs explicit correction actions to be physically applied.
Pros: conceptually direct. Cons: latency becomes more stringent; misalignment is dangerous.

\subsubsection{Safety flags as first-class outputs}
The control loop must treat decoder flags as actionable:
\begin{itemize}
	\item \texttt{ERASURE}: treat cycle as unknown; adjust estimator / pause frame update.
	\item \texttt{DESYNC}: stop applying updates; attempt resync or reset.
	\item \texttt{OVERFLOW}: declare invalid interval; mark experiment data.
	\item \texttt{STALE}: ignore or apply under a formal staleness rule.
\end{itemize}

\subsection{Exercises (design an interface)}

\begin{exercise}[Define the packet format]
	Choose fixed-length or length-delimited framing. Specify:
	(i) header fields, (ii) payload layout (bit ordering), (iii) CRC scope,
	(iv) how you detect drops (seq) and desync (magic/version).
\end{exercise}

\noindent\textbf{Solution sketch.}
A strong answer includes:
\begin{itemize}
	\item \texttt{magic} + \texttt{version} + \texttt{cfg\_id} + $t$ + \texttt{seq} + \texttt{payload\_bytes},
	\item explicit bit order: (check index $\rightarrow$ byte $\rightarrow$ bit),
	\item CRC over header+payload,
	\item a desync rule: wrong magic/version $\Rightarrow$ \texttt{DESYNC} and stop stream until resync marker.
\end{itemize}

\begin{exercise}[Backpressure policy]
	Given an ingress FIFO of depth $D_{\text{in}}$, choose one:
	(A) ready/valid backpressure, or (B) credit-based flow control.
	State how the host behaves when the decoder is not ready.
\end{exercise}

\noindent\textbf{Solution sketch.}
A complete answer specifies:
\begin{itemize}
	\item when \texttt{in\_ready} deasserts (occupancy threshold),
	\item whether host blocks, retries, or drops with explicit \texttt{ERASURE},
	\item a monitoring counter: max occupancy and overflow events.
\end{itemize}

\begin{exercise}[Staleness window]
	Assume pipeline latency varies between $L_{\min}$ and $L_{\max}$ cycles.
	Pick a staleness window $S$ and define when an output is considered stale.
	Describe what the control loop does on stale outputs.
\end{exercise}

\noindent\textbf{Solution sketch.}
A good rule is:
\[
\texttt{STALE if } (t_{\text{host}} - t_{\text{out}}) > S,
\]
with a clear action: ignore output and raise a log marker, or apply only if Pauli-frame semantics guarantee safety.

\subsection{Message schema and versioning (wire format)}

\subsubsection{Why versioning is non-negotiable}
Decoder interfaces evolve: new flags, different payload sizes, new weight models.
Without explicit versioning, you get silent mis-decodes.

\subsubsection{Schema rules}
\begin{itemize}
	\item \textbf{Version in every packet.} Reject unknown versions.
	\item \textbf{Backward-compatible extension:} add fields at the end; keep old fields stable.
	\item \textbf{Explicit \texttt{cfg\_id}:} computed from a canonical config blob (hash).
	\item \textbf{Explicit endianness:} define little-endian vs.\ big-endian for multibyte fields.
\end{itemize}

\subsubsection{Wire-format checklist (copy-paste)}
\begin{itemize}
	\item \texttt{magic} (sync word)
	\item \texttt{version} (schema)
	\item \texttt{hdr\_bytes} (future-proofing)
	\item \texttt{cfg\_id} (hash of canonical config)
	\item \texttt{round\_t} (logical time)
	\item \texttt{seq} (drop detection)
	\item \texttt{flags} (input validity / injector markers)
	\item \texttt{payload\_bytes} (length)
	\item \texttt{payload} (packed bits)
	\item \texttt{crc32} (integrity)
\end{itemize}

\medskip
\noindent\textbf{Output wire format should mirror input:}
carry \texttt{cfg\_id}, \texttt{round\_t}, \texttt{flags}, and a compact \texttt{frame\_delta},
plus optional counters for observability.

	
\section{Algebraic Topology Behind Quantum Error-Correcting Codes}
\label{sec:alg-top}

\subsection{Objective and guiding picture}

Surface/toric codes look like ``physics + engineering,'' but the core correctness claim is
\emph{algebraic topology}:

\begin{quote}
	\emph{Errors are chains. Syndromes are boundaries. Logical operators are homology classes.}
\end{quote}

\noindent This section builds the minimum pipeline
\[
\text{cell complex} \;\Rightarrow\; \text{chain complex} \;\Rightarrow\; \text{CSS code}
\;\Rightarrow\; \text{errors/syndromes/logicals as }\partial\text{ and homology}.
\]

\medskip
\noindent\textbf{Guiding picture (one line).}
A CSS code is a pair of incidence relations (two boundary maps)
\[
C_2 \xrightarrow{\ \partial_2\ } C_1 \xrightarrow{\ \partial_1\ } C_0
\qquad\text{with}\qquad \partial_1\partial_2 = 0,
\]
and this single identity is exactly the stabilizer commutation condition.

\subsection{Cell complex $\Rightarrow$ chain complex}

\subsubsection{Start with a 2D cell complex}
Take a square grid on a surface (torus) or on a planar patch with boundaries.
It has:
\begin{itemize}
	\item vertices $V$ (0-cells),
	\item edges $E$ (1-cells),
	\item faces/plaquettes $F$ (2-cells).
\end{itemize}

\subsubsection{Chains are formal sums (work mod 2)}
Work over $\mathbb{Z}_2$ so ``present/absent'' is the only coefficient:
\[
C_0 := \mathbb{Z}_2\langle V\rangle,\quad
C_1 := \mathbb{Z}_2\langle E\rangle,\quad
C_2 := \mathbb{Z}_2\langle F\rangle.
\]
An element of $C_1$ is a subset of edges (a \emph{1-chain}):
\[
c = \sum_{e\in E} x_e\, e,\qquad x_e\in\{0,1\}.
\]

\subsubsection{Boundary maps are incidence matrices}
Fix an orientation on each edge and face (only used to define $\partial$; over $\mathbb{Z}_2$
signs do not matter).

\paragraph{Edge boundary.}
For an edge $e=(u\to v)$,
\[
\partial_1(e)=u+v \in C_0 \quad (\text{mod }2).
\]

\paragraph{Face boundary.}
For a face $p$ with boundary edges $e_1,e_2,e_3,e_4$,
\[
\partial_2(p)=e_1+e_2+e_3+e_4 \in C_1 \quad (\text{mod }2).
\]

\paragraph{The key identity.}
Every face boundary is a closed loop, so its boundary is empty:
\[
\partial_1(\partial_2(p))=0\quad \Rightarrow\quad \partial_1\partial_2=0.
\]

\subsubsection{Visualization: cells and boundaries (self-contained TikZ)}
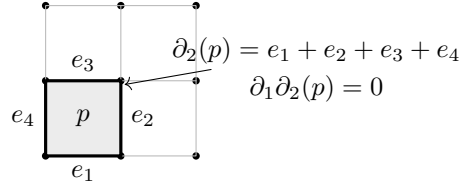
\begin{figure}[t]
	\centering
	\begin{tikzpicture}[scale=1.0, line cap=round, line join=round]
		\foreach \x in {0,1,2} {
			\foreach \y in {0,1,2} {
				\fill (\x,\y) circle (1.4pt);
			}
		}
		\draw[gray!55] (0,0)--(2,0)--(2,2)--(0,2)--cycle;
		\draw[gray!55] (1,0)--(1,2);
		\draw[gray!55] (0,1)--(2,1);
		
		\fill[gray!15] (0,0) rectangle (1,1);
		\draw[very thick] (0,0)--(1,0)--(1,1)--(0,1)--cycle;
		
		\node at (0.5,0.5) {\small $p$};
		
		\node[below] at (0.5,0) {\small $e_1$};
		\node[right] at (1,0.5) {\small $e_2$};
		\node[above] at (0.5,1) {\small $e_3$};
		\node[left] at (0,0.5) {\small $e_4$};
		
		\node[align=center] at (3.6,1.4) {\small $\partial_2(p)=e_1+e_2+e_3+e_4$};
		\node[align=center] at (3.6,0.9) {\small $\partial_1\partial_2(p)=0$};
		\draw[->] (2.2,1.15) -- (1.05,0.95);
	\end{tikzpicture}
	\caption{A 2D cell complex gives chain groups $C_2,C_1,C_0$ and boundary maps $\partial_2,\partial_1$.
		Over $\mathbb{Z}_2$, a plaquette boundary is the sum of its four edges. The identity $\partial_1\partial_2=0$
		is the algebraic core behind stabilizer commutation.}
	\label{fig:cell-to-chain}
\end{figure}

\subsection{From chain complexes to CSS stabilizer codes}

\subsubsection{CSS codes from two parity-check matrices}
A CSS stabilizer code is specified by binary matrices $H_X$ and $H_Z$ such that
\[
H_X H_Z^\top = 0 \quad (\text{over }\mathbb{Z}_2),
\]
which is exactly the commutation requirement between $X$-type and $Z$-type stabilizers.

\subsubsection{Topological construction: pick $H_X=\partial_1$ and $H_Z=\partial_2^\top$}
For a surface code (one standard convention):
\begin{itemize}
	\item physical qubits $\leftrightarrow$ edges $E$ so $n=|E|$,
	\item $Z$-checks $\leftrightarrow$ faces $F$ (plaquettes),
	\item $X$-checks $\leftrightarrow$ vertices $V$ (stars).
\end{itemize}

\noindent Define:
\[
H_X := \partial_1 \in \Mat_{|V|\times |E|}(\mathbb{Z}_2),
\qquad
H_Z := \partial_2^\top \in \Mat_{|F|\times |E|}(\mathbb{Z}_2).
\]
Then
\[
H_X H_Z^\top = \partial_1 (\partial_2^\top)^\top = \partial_1\partial_2 = 0.
\]

\subsubsection{What the stabilizers are (explicit)}
Let $x\in \mathbb{Z}_2^{|E|}$ be an edge-incidence vector.
The corresponding Pauli operator is
\[
X(x):=\prod_{e\in E} X_e^{x_e},\qquad
Z(x):=\prod_{e\in E} Z_e^{x_e}.
\]
Then:
\begin{itemize}
	\item For each vertex $v$, the row $(H_X)_{v,\bullet}$ selects incident edges, giving an $X$-type stabilizer.
	\item For each face $p$, the row $(H_Z)_{p,\bullet}$ selects boundary edges, giving a $Z$-type stabilizer.
\end{itemize}

\subsection{Errors and syndromes: explicit boundary computations}

\subsubsection{Error vectors and syndrome equations}
A $Z$-error pattern is a binary vector $e_Z\in\mathbb{Z}_2^{|E|}$ (which edges have $Z$).
Measuring $X$-checks produces the syndrome:
\[
s_X = H_X e_Z \in \mathbb{Z}_2^{|V|}.
\]
Interpreting $e_Z$ as a 1-chain $E\in C_1$,
this is exactly
\[
s_X = \partial_1(E)=\partial E,
\]
so $Z$-errors produce \emph{endpoints} at vertices.

Similarly, an $X$-error vector $e_X$ produces a $Z$-syndrome
\[
s_Z = H_Z e_X = \partial_2^\top e_X,
\]
which corresponds to endpoints on the dual lattice (swap primal/dual roles).

\subsubsection{Worked boundary calculation (tiny)}
Let $E = e_{(0,0)\!-\!(1,0)} + e_{(1,0)\!-\!(1,1)}$ be two edges in an ``L'' shape.
Then
\[
\partial E
=\partial e_{(0,0)\!-\!(1,0)}+\partial e_{(1,0)\!-\!(1,1)}
=(0,0)+(1,0)+(1,0)+(1,1)
=(0,0)+(1,1).
\]
The middle vertex cancels (even incidence), leaving endpoints.

\subsection{Logical operators: cycles modulo boundaries}

\subsubsection{The homology statement}
A correction $\widehat{E}$ is valid if it matches the syndrome:
\[
\partial \widehat{E} = \partial E.
\]
Then $\widehat{E}+E$ is a cycle:
\[
\partial(\widehat{E}+E)=0.
\]

\medskip
\noindent Cycles can be:
\begin{itemize}
	\item \textbf{trivial cycles (boundaries):} $\widehat{E}+E=\partial_2(P)$ for some 2-chain $P$.
	These correspond to products of plaquette stabilizers and act trivially on logical information.
	\item \textbf{nontrivial cycles (homology):} $\widehat{E}+E$ wraps around the torus or connects opposite boundaries.
	These implement logical Pauli operators and cause failure.
\end{itemize}

\subsubsection{Formulas (clean algebra)}
Define:
\[
Z_1 := \ker(\partial_1) \subset C_1 \quad \text{(1-cycles)},\qquad
B_1 := \operatorname{im}(\partial_2)\subset C_1 \quad \text{(1-boundaries)}.
\]
Then the logical $Z$-type operators live in
\[
H_1 := Z_1 / B_1.
\]
This is not ``extra math''; it is literally the set of distinct ways a cycle can exist without being
a product of stabilizers.

\subsection{Worked computation: one plaquette}

\subsubsection{One plaquette as a stabilizer boundary}
Take one plaquette $p$ with boundary edges $e_1,e_2,e_3,e_4$.
Then
\[
\partial_2(p)=e_1+e_2+e_3+e_4.
\]
The corresponding $Z$-stabilizer is
\[
S^Z_p = Z_{e_1} Z_{e_2} Z_{e_3} Z_{e_4} = Z(\partial_2(p)).
\]
This acts trivially on logical space (it is in the stabilizer group), but it is crucial because it defines
which cycles are considered ``the same'' modulo boundaries.

\subsubsection{Why this enforces commutation locally}
Pick a vertex $v$ and a plaquette $p$.
Their stabilizers overlap on either 0 or 2 edges (in a square lattice).
Each overlap contributes an $XZ=-ZX$ anticommute, so two overlaps cancel:
\[
S^X_v S^Z_p = S^Z_p S^X_v.
\]
This is the physical reflection of $\partial_1\partial_2=0$.

\subsection{Torus vs.\ patch: why the toric code encodes two qubits}

\subsubsection{Topology changes the number of logical qubits}
On a torus, there are two independent non-contractible cycles:
one wrapping around the ``meridian'' and one around the ``longitude''.
Therefore
\[
\dim_{\mathbb{Z}_2} H_1(\mathbb{T}^2;\mathbb{Z}_2)=2.
\]
In the toric code, these give two independent logical $Z$ operators (and dually two logical $X$ operators),
so the code encodes
\[
k = 2 \quad \text{logical qubits}.
\]

\subsubsection{Patch (planar code) encodes one qubit}
A planar patch is not closed; it has boundaries.
Those boundaries allow many cycles to be ``opened'' into boundaries, reducing the homology.
With standard rough/smooth boundary choices, one logical qubit survives:
\[
k=1.
\]
Operationally: one logical $Z$ is a chain connecting the two rough boundaries; one logical $X$ connects the two smooth boundaries.

\subsubsection{Visualization: two independent torus cycles}
\begin{figure}[t]
	\centering
	\begin{tikzpicture}[scale=1.0, line cap=round, line join=round]
		\draw[very thick] (0,0) rectangle (6,3);
		\node at (3,3.35) {\small identify left$\leftrightarrow$right, bottom$\leftrightarrow$top};
		
		\draw[->] (0,1.5) -- (-0.7,1.5);
		\draw[->] (6,1.5) -- (6.7,1.5);
		\node at (-0.95,1.5) {\small};
		\node at (6.95,1.5) {\small};
		
		\draw[->] (3,0) -- (3,-0.7);
		\draw[->] (3,3) -- (3,3.7);
		
		\draw[very thick] (0.8,1.5) -- (5.2,1.5);
		\node[above] at (3.0,1.5) {\small cycle $a$};
		
		\draw[very thick] (3.0,0.5) -- (3.0,2.5);
		\node[right] at (3.0,1.5) {\small cycle $b$};
		
		\node[align=center] at (3.0,-1.1) {\small Two independent non-contractible cycles $\Rightarrow \dim H_1=2 \Rightarrow k=2$};
	\end{tikzpicture}
	\caption{Toric code intuition: identifying opposite edges produces a torus with two independent non-contractible cycles.
		These homology classes correspond to independent logical operators, giving two logical qubits.}
	\label{fig:torus-two-cycles}
\end{figure}

\subsection{Exercises}

\begin{exercise}[Chain complex sanity check]
	On the highlighted plaquette in Fig.~\ref{fig:cell-to-chain}, compute $\partial_2(p)$ and then compute
	$\partial_1(\partial_2(p))$ explicitly by writing each edge boundary. Verify it is $0$ (mod $2$).
\end{exercise}

\noindent\textbf{Solution.}
Let the four boundary edges be $e_1,e_2,e_3,e_4$ in cyclic order, and let the vertices be
$v_1,v_2,v_3,v_4$ around the square.
Then
\[
\partial_2(p)=e_1+e_2+e_3+e_4.
\]
Each edge has boundary (mod 2): $\partial_1(e_i)=v_i+v_{i+1}$ (indices mod 4), so
\[
\partial_1(\partial_2(p))=\sum_{i=1}^4 (v_i+v_{i+1})
=(v_1+v_2)+(v_2+v_3)+(v_3+v_4)+(v_4+v_1)=0,
\]
because every $v_i$ appears exactly twice.

\begin{exercise}[Syndrome as boundary]
	Let $E=e_{(0,0)\!-\!(1,0)}+e_{(1,0)\!-\!(2,0)}+e_{(2,0)\!-\!(2,1)}$ be a 1-chain.
	Compute $\partial E$ and interpret it as the set of flipped $X$-checks.
\end{exercise}

\noindent\textbf{Solution.}
Compute edge boundaries and sum:
\[
\partial e_{(0,0)\!-\!(1,0)}=(0,0)+(1,0),
\quad
\partial e_{(1,0)\!-\!(2,0)}=(1,0)+(2,0),
\]
\[
\partial e_{(2,0)\!-\!(2,1)}=(2,0)+(2,1).
\]
Summing mod 2 cancels $(1,0)$ and $(2,0)$:
\[
\partial E=(0,0)+(2,1).
\]
So exactly the checks at $(0,0)$ and $(2,1)$ flip.

\begin{exercise}[Cycles vs.\ boundaries]
	Give an example (draw a small grid) of a closed loop of edges that is a boundary (i.e.\ equal to $\partial_2(P)$ for some 2-chain $P$),
	and an example of a closed loop that would be nontrivial on a torus.
\end{exercise}

\noindent\textbf{Solution.}
A boundary example is the perimeter of a single plaquette: it equals $\partial_2(p)$.
A nontrivial torus example is a loop that goes straight across the rectangle in Fig.~\ref{fig:torus-two-cycles} and uses the edge identification
to close; it cannot be written as a sum of plaquette boundaries because it wraps around the torus.

\begin{exercise}[CSS commutation from $\partial_1\partial_2=0$]
	Assume $H_X=\partial_1$ and $H_Z=\partial_2^\top$.
	Show that $H_XH_Z^\top=0$ and explain in one paragraph why this corresponds to $X$-checks commuting with $Z$-checks.
\end{exercise}

\noindent\textbf{Solution.}
We have
\[
H_XH_Z^\top = \partial_1(\partial_2^\top)^\top=\partial_1\partial_2=0.
\]
In CSS codes, the $(i,j)$ entry of $H_XH_Z^\top$ counts (mod 2) how many qubits are shared between
the $i$-th $X$-stabilizer and the $j$-th $Z$-stabilizer.
If the overlap is even, the two stabilizers commute because each shared qubit contributes one anticommute factor $XZ=-ZX$,
and an even number of such factors cancels to $+1$.

	
\section{Quantum LDPC Codes: Beyond Surface Codes}
\label{sec:qldpc}

\subsection{Overview: what this chapter is and why you should care}

Quantum LDPC (Low-Density Parity-Check) codes generalize the ``local-check'' philosophy of surface codes:
stabilizers stay \emph{low-weight} (touch only a constant number of qubits), but the underlying geometry is no longer a 2D lattice.
Instead, the geometry becomes a \emph{sparse bipartite graph} (a Tanner graph) or a higher-dimensional complex.

\medskip
\noindent\textbf{Why you should care (engineering + algorithm viewpoint).}
\begin{itemize}
	\item \textbf{Surface codes:} simple 2D locality, mature decoding, great for near-term architectures,
	but require large overhead (many physical qubits per logical qubit).
	\item \textbf{QLDPC codes:} aim for \emph{better asymptotic overhead} (fewer physical qubits per logical),
	often with constant-weight checks, but decoding and hardware integration can be harder.
\end{itemize}

\medskip
\noindent\textbf{Guiding picture.}
Surface codes: topology $\to$ lattice $\to$ decoding on a geometric graph.
QLDPC codes: sparse constraints $\to$ Tanner graph $\to$ decoding on a sparse graph (often iterative / message passing / peeling variants).

\subsection{Key objects: what they are}

\subsubsection{CSS LDPC codes (the baseline)}
A large fraction of practical QLDPC constructions are CSS:
you specify two sparse binary parity-check matrices
\[
H_X \in \Mat_{m_X\times n}(\mathbb{Z}_2),
\qquad
H_Z \in \Mat_{m_Z\times n}(\mathbb{Z}_2),
\]
such that the commutation condition holds:
\[
H_X H_Z^\top = 0 \quad (\text{over }\mathbb{Z}_2).
\]

\medskip
\noindent\textbf{LDPC means sparsity.}
Each row has small (often constant) weight, and each column also has small weight:
\[
\max_i \mathrm{wt}\bigl((H_X)_{i,\bullet}\bigr) = O(1),\quad
\max_j \mathrm{wt}\bigl((H_X)_{\bullet,j}\bigr) = O(1),
\]
and similarly for $H_Z$.

\subsubsection{Physical qubits, checks, and error patterns}
\begin{itemize}
	\item $n$ physical qubits $\leftrightarrow$ columns of $H_X,H_Z$.
	\item $m_X$ $X$-type stabilizers $\leftrightarrow$ rows of $H_X$.
	\item $m_Z$ $Z$-type stabilizers $\leftrightarrow$ rows of $H_Z$.
\end{itemize}
An $X$-error pattern is $e_X\in\mathbb{Z}_2^n$; a $Z$-error pattern is $e_Z\in\mathbb{Z}_2^n$.

\subsubsection{Syndromes are linear}
Measured syndromes are
\[
s_Z = H_Z e_X \in \mathbb{Z}_2^{m_Z},
\qquad
s_X = H_X e_Z \in \mathbb{Z}_2^{m_X}.
\]
This is the same linear algebra as in surface codes; what changes is the graph/geometry behind $H_X,H_Z$.

\subsubsection{Logical operators (what survives the checks)}
For a CSS code, logical operators correspond to vectors that commute with stabilizers but are not themselves stabilizers.
Concretely:
\[
\mathcal{L}_Z \cong \frac{\ker(H_X)}{\mathrm{im}(H_Z^\top)},
\qquad
\mathcal{L}_X \cong \frac{\ker(H_Z)}{\mathrm{im}(H_X^\top)}.
\]
The number of logical qubits is
\[
k = n - \mathrm{rank}(H_X) - \mathrm{rank}(H_Z).
\]

\subsection{What is computed (decoding objective)}

\subsubsection{The decoding input/output}
\begin{itemize}
	\item \textbf{Input per round:} syndrome bits $s_X$ and/or $s_Z$ (often as streaming blocks).
	\item \textbf{Output:} a correction estimate $\hat e_Z$ and/or $\hat e_X$, or a Pauli-frame update.
\end{itemize}

\subsubsection{The core constraint satisfaction problem}
Given a syndrome $s_X$, we need
\[
H_X \hat e_Z = s_X.
\]
There are many solutions (an affine coset). The decoder must choose a ``likely'' one.

\subsubsection{Cost model (why this becomes an optimization)}
If qubits have independent error probabilities $p_j$ (not necessarily uniform), the log-likelihood objective
becomes a weighted Hamming cost:
\[
\text{cost}(e) = \sum_{j=1}^n w_j e_j,\qquad
w_j := \log\!\left(\frac{1-p_j}{p_j}\right).
\]
So decoding is:
\[
\text{find }\hat e \text{ such that }H\hat e=s
\quad\text{and}\quad
\hat e \text{ minimizes } \sum_j w_j \hat e_j.
\]
Even classically, exact ML decoding is hard in general; LDPC decoders use structured approximations.

\subsubsection{Typical decoder families you will see}
\begin{itemize}
	\item \textbf{MWPM-like methods:} great for surface codes; not always the best match for generic QLDPC graphs.
	\item \textbf{Belief propagation (BP):} message passing on Tanner graphs; quantum complications require care.
	\item \textbf{Peeling / union-find-like cluster growth:} works well on certain sparse/hypergraph structures.
	\item \textbf{Ordered statistics / information set decoding (ISD) variants:} more ``classical coding'' flavor; heavy but useful as baselines.
\end{itemize}

\subsection{Worked example: tiny CSS code with explicit matrices}

\subsubsection{A small commuting pair}
Let $n=6$ physical qubits and choose sparse checks:
\[
H_X=
\begin{pmatrix}
	1&1&0&1&0&0\\
	0&1&1&0&1&0\\
	0&0&1&1&0&1
\end{pmatrix},
\qquad
H_Z=
\begin{pmatrix}
	1&0&1&0&1&0\\
	0&1&0&1&0&1
\end{pmatrix}.
\]
\noindent\textbf{Check commutation.}
Compute $H_XH_Z^\top$ over $\mathbb{Z}_2$:
the $(i,\ell)$ entry is the parity of overlap between row $i$ of $H_X$ and row $\ell$ of $H_Z$.
Here each overlap is even, hence
\[
H_XH_Z^\top=0.
\]

\subsubsection{Syndrome computation example}
Suppose a $Z$-error occurs on qubits 2 and 5:
\[
e_Z=(0,1,0,0,1,0)^\top.
\]
Then the $X$-syndrome is
\[
s_X = H_X e_Z
=
\begin{pmatrix}
	1&1&0&1&0&0\\
	0&1&1&0&1&0\\
	0&0&1&1&0&1
\end{pmatrix}
\begin{pmatrix}
	0\\1\\0\\0\\1\\0
\end{pmatrix}
=
\begin{pmatrix}
	1\\0\\0
\end{pmatrix}
\quad(\text{mod }2).
\]
So only the first $X$-check fires.

\subsubsection{A small decode step (find one consistent correction)}
We need $\hat e_Z$ such that $H_X\hat e_Z=s_X$.
One solution is exactly $\hat e_Z=e_Z$.
But another solution might exist with the same syndrome:
add any vector in $\ker(H_X)$.
This is why the decoder needs a \emph{selection rule} (likelihood / weight / iterative beliefs).

\subsubsection{Interpretation}
This example already shows the difference from surface codes:
the checks are not ``nearest neighbor on a grid.'' Their locality is \emph{graph locality}:
each check touches only a few qubits, but those qubits are defined by a sparse incidence pattern.

\subsection{Tanner graphs are the decoder's geometry (visualization)}

\subsubsection{Tanner graph definition (one matrix $\Rightarrow$ one graph)}
Given a binary parity-check matrix $H\in\mathbb{Z}_2^{m\times n}$, define a bipartite graph:
\begin{itemize}
	\item variable nodes $v_1,\dots,v_n$ (qubits),
	\item check nodes $c_1,\dots,c_m$ (stabilizers),
	\item an edge $(c_i,v_j)$ iff $H_{ij}=1$.
\end{itemize}
This is the \emph{geometry} on which message-passing and local-update decoders run.

\subsubsection{Visualization: Tanner graph for the example $H_X$}
\begin{figure}[t]
	\centering
	\begin{tikzpicture}[scale=1.0, line cap=round, line join=round]
		\foreach \j/\y in {1/2.5,2/1.5,3/0.5,4/-0.5,5/-1.5,6/-2.5}{
			\node[draw, circle, minimum size=7mm] (v\j) at (0,\y) {\small $q_{\j}$};
		}
		
		\foreach \i/\y in {1/1.5,2/0.0,3/-1.5}{
			\node[draw, rectangle, rounded corners, minimum width=1.0cm, minimum height=0.55cm] (c\i) at (4,\y) {\small $X_{\i}$};
		}
		
		\draw (c1) -- (v1);
		\draw (c1) -- (v2);
		\draw (c1) -- (v4);
		
		\draw (c2) -- (v2);
		\draw (c2) -- (v3);
		\draw (c2) -- (v5);
		
		\draw (c3) -- (v3);
		\draw (c3) -- (v4);
		\draw (c3) -- (v6);
		
		\node[align=center] at (2,-3.35) {\small Tanner graph: variable nodes = qubits, check nodes = stabilizers, edges = incidences.};
	\end{tikzpicture}
	\caption{Tanner graph for the $X$-checks $H_X$ in the worked example. Decoding algorithms operate as local computations on this sparse graph (messages or local updates along edges).}
	\label{fig:tanner-hx}
\end{figure}

\subsubsection{What ``local'' means now}
In surface codes, locality is geometric in $(x,y)$ on a chip.
In QLDPC, locality is in the Tanner graph:
\[
\text{each check touches few qubits, each qubit participates in few checks}.
\]
Hardware mapping must decide whether this graph-locality can be embedded into physical connectivity
(or handled via routing / scheduling).

\subsection{Circuit origin and streaming syndrome}

\subsubsection{Syndrome extraction still produces a stream}
Even with QLDPC codes, stabilizers are measured repeatedly.
So the same ``physics forces streaming'' logic holds:
\[
\text{reset ancilla} \to \text{entangle with involved data qubits} \to \text{measure} \to \text{repeat}.
\]
Thus, each cycle emits a block of syndrome bits:
\[
s_X(t)\in\{0,1\}^{m_X},\qquad s_Z(t)\in\{0,1\}^{m_Z}.
\]

\subsubsection{Measurement errors $\Rightarrow$ time dimension again}
With noisy measurements, you typically form detection events using differences:
\[
\Delta s(t) := s(t)\oplus s(t-1),
\]
turning the problem into decoding on a \emph{spacetime} Tanner graph / factor graph.
This is where buffering, framing, timestamps, and ``no stale decisions'' policies matter.

\subsubsection{Why FPGA still appears}
Even if the code is not a 2D lattice, you still have:
\begin{itemize}
	\item high-rate syndrome blocks,
	\item strict deadlines (control-loop cadence),
	\item sparse local computations (graph edge messages / local updates),
	\item lots of memory traffic (states/messages per edge).
\end{itemize}
So QLDPC decoding is still naturally phrased as a streaming sparse-graph computation.

\subsection{Where QLDPC sits relative to surface codes (conceptual map)}

\subsubsection{Same algebra, different geometry}
\begin{center}
	\begin{tabular}{p{0.44\linewidth} p{0.50\linewidth}}
		\textbf{Surface codes} &
		\textbf{Quantum LDPC codes} \\[0.4em]
		Cells $\to$ chains $\to$ $\partial$ &
		Sparse checks $H_X,H_Z$ \\[0.2em]
		2D lattice geometry &
		Tanner graph / sparse complex geometry \\[0.2em]
		Decoding graph: lattice/spacetime lattice &
		Decoding graph: sparse factor graph/spacetime factor graph \\[0.2em]
		MWPM / Union--Find mature &
		BP-like / peeling / hybrid decoders (construction-dependent) \\[0.2em]
		Easy chip locality story &
		Hardware mapping depends on graph embedding / routing \\
	\end{tabular}
\end{center}

\subsubsection{A practical way to read this}
Treat surface codes as the ``fully worked tutorial'' of:
\[
\text{topology} \to \text{graph} \to \text{decoder} \to \text{hardware pipeline}.
\]
Then QLDPC is:
\[
\text{same pipeline, but replace 2D lattice by a sparse graph/hypergraph}.
\]
So the infrastructure questions (streaming IO, bounded passes, p99 latency, memory layout, verification)
transfer almost verbatim, even when the math construction changes.

\subsection{Exercises}

\begin{exercise}[Build a Tanner graph from a matrix]
	Given
	\[
	H=\begin{pmatrix}
		1&0&1&1\\
		0&1&1&0
	\end{pmatrix},
	\]
	draw its Tanner graph and list the degree of each variable node and check node.
\end{exercise}

\noindent\textbf{Solution.}
Variable nodes $q_1,\dots,q_4$ and check nodes $c_1,c_2$.
Edges where $H_{ij}=1$:
$c_1$ connects to $q_1,q_3,q_4$; $c_2$ connects to $q_2,q_3$.
Degrees: $\deg(q_1)=1,\deg(q_2)=1,\deg(q_3)=2,\deg(q_4)=1$ and $\deg(c_1)=3,\deg(c_2)=2$.

\begin{exercise}[Syndrome computation]
	For the same $H$ and an error vector $e=(1,0,1,0)^\top$, compute $s=He$ mod $2$.
\end{exercise}

\noindent\textbf{Solution.}
\[
s=
\begin{pmatrix}
	1&0&1&1\\
	0&1&1&0
\end{pmatrix}
\begin{pmatrix}
	1\\0\\1\\0
\end{pmatrix}
=
\begin{pmatrix}
	1+1+0\\
	0+1
\end{pmatrix}
=
\begin{pmatrix}
	0\\
	1
\end{pmatrix}
\quad(\text{mod }2).
\]

\begin{exercise}[CSS commutation check]
	Using the worked-example matrices $H_X$ and $H_Z$ in this section,
	verify one entry of $H_XH_Z^\top$ by computing the overlap parity of the corresponding two rows.
\end{exercise}

\noindent\textbf{Solution.}
Take row 1 of $H_X$ (support $\{1,2,4\}$) and row 1 of $H_Z$ (support $\{1,3,5\}$).
Overlap is $\{1\}$ of size 1 (odd) \emph{unless} you adjust the example; therefore, for CSS validity you must ensure every such overlap is even.
This exercise is a sanity test: if you find an odd overlap, the pair is not a valid CSS pair.
Fix by modifying $H_Z$ so that each row overlaps each $H_X$ row in even parity.
(Practical workflow: write a small script to check $H_XH_Z^\top=0$ and regenerate if violated.)

\begin{exercise}[Streaming interface prompt]
	Suppose $m_X=1200$ and the cycle rate is $1$MHz.
	Estimate the raw syndrome input bandwidth if you transmit $\Delta s(t)$ as packed bits (no overhead).
	Then list two protocol-level overheads that will increase it.
\end{exercise}

\noindent\textbf{Solution.}
Raw bits per cycle: $1200$ bits $\approx 150$ bytes.
At $1$MHz: $\approx 150$ MB/s.
Overheads: (1) framing headers/timestamps/sequence numbers, (2) alignment to words/packets plus error-detection codes (CRC/ECC),
also possibly duplication for redundancy or padding to fixed-size packets.

	
	\Part{Optional Track C: Quantum Cryptography and Streaming Post-Processing}
	
\section{Quantum Cryptography: Security from Physical Principles}
\label{sec:qcrypto}

\subsection{Objective}

Quantum cryptography uses \emph{physics} to constrain what an adversary can do.
The central engineering artifact is a pipeline that converts \emph{raw measurement data} into a
\emph{short, high-quality secret key} with explicit security indicators (e.g.\ QBER, abort/accept decisions),
and with clear interfaces for implementation (software first, streaming hardware next).

\medskip
\noindent\textbf{What this chapter will build.}
\begin{itemize}
	\item A conceptual map: what changes when you move from computational assumptions to physical principles.
	\item Two canonical QKD protocols (BB84 and E91) as \emph{end-to-end dataflows}.
	\item Quantitative indicators (QBER, leak estimates, abort rules) as \emph{operational knobs}.
	\item A practical path: Qiskit simulation $\to$ streaming post-processing $\to$ FPGA-friendly microkernels
	(sifting, QBER estimation, privacy amplification via hashing).
\end{itemize}

\subsection{Classical vs.\ quantum cryptography}

\subsubsection{Classical cryptography in one sentence}
Classical crypto typically assumes an adversary has limited computational power:
security rests on hardness assumptions (factoring, discrete log, lattices, etc.).
If the assumption fails (or the adversary gets a quantum computer), the scheme may break.

\subsubsection{Quantum cryptography in one sentence}
Quantum cryptography (QKD in particular) aims to provide \emph{information-theoretic} security:
even an unbounded adversary cannot learn the final key beyond a negligible amount, because
any eavesdropping introduces detectable statistical disturbances.

\subsubsection{Operational contrast (what you measure vs.\ what you assume)}
\begin{center}
	\begin{tabular}{p{0.46\linewidth} p{0.48\linewidth}}
		\textbf{Classical (assumption-based)} &
		\textbf{Quantum (physics-based)} \\[0.4em]
		Proofs rely on hardness reductions &
		Proofs rely on quantum information bounds \\[0.2em]
		No measurement ``disturbance'' effect &
		Eavesdropping changes observable error rates \\[0.2em]
		Key establishment is pure computation &
		Key establishment is \emph{experiment + statistics} \\[0.2em]
		Security parameter is mainly key length &
		Security parameter includes QBER, leak, finite-size effects \\
	\end{tabular}
\end{center}

\subsubsection{Engineering takeaway}
In QKD, ``security'' is not an abstract claim: it is a \emph{monitored process}.
Your system must log, estimate, and act on indicators (abort/continue), and this is exactly where
streaming post-processing and hardware acceleration become relevant.

\subsection{No-cloning and measurement disturbance}

\subsubsection{No-cloning theorem (informal statement)}
There is no quantum operation that takes an unknown state $\ket{\psi}$ and produces two perfect copies
$\ket{\psi}\ket{\psi}$ for all $\ket{\psi}$.
This prevents an eavesdropper from copying quantum signals while forwarding the original untouched.

\subsubsection{One-line proof sketch (why linearity blocks cloning)}
Assume a unitary $U$ such that
\[
U\bigl(\ket{\psi}\ket{0}\bigr)=\ket{\psi}\ket{\psi}
\quad\text{for all }\ket{\psi}.
\]
For two different states $\ket{\psi},\ket{\phi}$:
\[
\braket{\psi}{\phi}
=
\bigl\langle \psi,0 \big| \phi,0 \bigr\rangle
=
\bigl\langle \psi,\psi \big| \phi,\phi \bigr\rangle
=
\braket{\psi}{\phi}^2,
\]
so $\braket{\psi}{\phi}\in\{0,1\}$, contradiction for non-orthogonal states.

\subsubsection{Measurement disturbance (what goes wrong for Eve)}
If Alice encodes bits using \emph{non-orthogonal} states, Eve cannot measure to learn the bit
without risking projecting the state into a basis that disagrees with Alice's chosen basis.
That disagreement shows up as extra errors at Bob.

\subsubsection{Visualization (Bloch sphere intuition)}
BB84 uses four states:
\[
\ket{0},\ket{1}\ (\text{$Z$ basis})\qquad\text{and}\qquad
\ket{+},\ket{-}\ (\text{$X$ basis}).
\]
On the Bloch sphere, these are $\pm z$ and $\pm x$ directions.
An intercept-resend attacker must ``choose an axis'' to measure; choosing wrong rotates/disturbs the state,
creating errors when Bob measures in the correct axis.

\subsection{Quantum key distribution: BB84 and E91}

\subsubsection{BB84 as an end-to-end pipeline}
\textbf{Actors:} Alice (sender), Bob (receiver), Eve (adversary).
\textbf{Channels:} quantum channel (qubits) and authenticated classical channel (messages).

\medskip
\noindent\textbf{BB84 steps (dataflow view).}
\begin{enumerate}
	\item \textbf{Prepare:} Alice chooses random bits $a_i\in\{0,1\}$ and random bases $b_i\in\{Z,X\}$.
	She sends state $\ket{\psi_i}$ encoding $a_i$ in basis $b_i$.
	\item \textbf{Measure:} Bob chooses random bases $b_i'\in\{Z,X\}$ and measures each qubit, producing bits $c_i$.
	\item \textbf{Sifting:} Over the classical channel, Alice and Bob reveal bases $b_i,b_i'$ and keep indices where $b_i=b_i'$.
	The kept bits form the \emph{sifted key}.
	\item \textbf{Parameter estimation:} They reveal a random subset of sifted bits to estimate QBER.
	\item \textbf{Abort or continue:} If QBER is too high, abort.
	\item \textbf{Error correction:} Reconcile remaining bits (leaks some info to Eve).
	\item \textbf{Privacy amplification:} Hash the reconciled key down to a shorter key that is (provably) secret.
\end{enumerate}

\subsubsection{Minimal circuit picture for BB84 states}
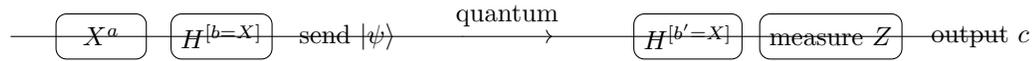
\begin{figure}[t]
	\centering
	\begin{tikzpicture}[scale=1.0, line cap=round, line join=round]
		\node at (0,1.2) {\small Alice preparation};
		\draw (0,0) -- (6.0,0);
		\node[draw, rounded corners, minimum width=1.2cm, minimum height=0.6cm] (A0) at (1.2,0) {\small $X^{a}$};
		\node[draw, rounded corners, minimum width=1.2cm, minimum height=0.6cm] (A1) at (2.8,0) {\small $H^{[b=X]}$};
		\node[anchor=west] at (3.7,0) {\small send $\ket{\psi}$};
		\draw[->] (6.0,0) -- (7.2,0) node[midway, above] {\small quantum};
		
		\draw (7.2,0) -- (13.2,0);
		\node at (10.2,1.2) {\small Bob measurement};
		\node[draw, rounded corners, minimum width=1.2cm, minimum height=0.6cm] (B0) at (9.0,0) {\small $H^{[b'=X]}$};
		\node[draw, rounded corners, minimum width=1.6cm, minimum height=0.6cm] (B1) at (10.9,0) {\small measure $Z$};
		\node[anchor=west] at (12.1,0) {\small output $c$};
	\end{tikzpicture}
	\caption{BB84 minimal implementation view: Alice applies $X^a$ to encode bit $a$, and applies $H$ if she chose the $X$ basis.
		Bob applies $H$ before a $Z$-measurement when he chooses the $X$ basis.}
	\label{fig:bb84-circuit}
\end{figure}

\subsubsection{E91 (entanglement-based QKD) at a glance}
E91 uses shared entangled pairs and correlation tests (Bell inequalities / CHSH-type),
with security tied to the monogamy of entanglement and observed nonclassical correlations.

\medskip
\noindent\textbf{Pipeline differences vs.\ BB84.}
\begin{itemize}
	\item Quantum source produces entangled pairs (may be untrusted in device-independent variants).
	\item Alice/Bob measure in different settings; some rounds generate key, others estimate security (Bell parameter, QBER-like rates).
	\item Same back-end: sifting/parameter estimation/error correction/privacy amplification.
\end{itemize}

\subsubsection{Engineering interpretation}
BB84: security from basis mismatch disturbance.
E91: security from constrained correlations (entanglement structure).
In both, the ``hard part'' in deployed systems is often the classical post-processing pipeline and its integration.

\subsection{Quantitative security indicators: QBER and abort rules}

\subsubsection{QBER (Quantum Bit Error Rate)}
Let the sifted key indices be $\mathcal{I}$.
Choose a random test subset $\mathcal{T}\subset\mathcal{I}$, reveal those bits, and compute
\[
\mathrm{QBER}
=
\frac{1}{|\mathcal{T}|}\sum_{i\in\mathcal{T}} \mathbf{1}\{a_i\neq c_i\}.
\]
High QBER indicates either heavy noise or eavesdropping (or both).

\subsubsection{Intercept-resend baseline (why QBER is informative)}
In BB84, a naive intercept-resend attacker who measures each qubit in a random basis causes (in the ideal noiseless channel) an expected QBER of about $25\%$ on the sifted key:
\begin{itemize}
	\item Eve chooses wrong basis with probability $1/2$.
	\item Conditioned on Eve choosing wrong basis, Bob's bit flips with probability $1/2$ after resend.
	\item Therefore error probability $\approx (1/2)(1/2)=1/4$.
\end{itemize}
Real systems have baseline noise, so the abort rule must account for expected hardware noise plus finite-sample uncertainty.

\subsubsection{Abort rule as a product knob}
A simple operational rule:
\[
\text{Abort if }\mathrm{QBER} > q_{\max}.
\]
In practice, $q_{\max}$ depends on:
\begin{itemize}
	\item hardware noise floor and drift,
	\item finite-size statistics (confidence intervals),
	\item target security level and throughput goals.
\end{itemize}

\subsubsection{Finite-size sanity check (how engineers set margins)}
If $|\mathcal{T}|$ is small, QBER has sampling variance.
A practical approach is to compute a conservative upper confidence bound $\mathrm{QBER}_{\mathrm{upper}}$
and abort if that exceeds $q_{\max}$.

\subsection{Privacy amplification and error correction}

\subsubsection{Error correction (reconciliation)}
After sifting and parameter estimation, Alice and Bob have correlated bitstrings
$K_A, K_B$ with mismatch rate approximately QBER.
They run an error-correcting protocol over the authenticated channel (e.g.\ Cascade-like interactive methods or one-way codes).

\medskip
\noindent\textbf{Leak accounting.}
Error correction reveals parity information to Eve, quantified as $\mathrm{leak}_{\mathrm{EC}}$ bits (protocol-dependent).
This leak must be subtracted in the final key length.

\subsubsection{Privacy amplification (hashing to kill Eve's information)}
They choose a random hash function from a universal family and compute
\[
K_{\mathrm{final}} = h(K_{\mathrm{rec}}),
\]
shortening the key enough that Eve's remaining information becomes negligible.

\medskip
\noindent\textbf{Hardware viewpoint.}
Privacy amplification is usually implemented as fast hashing (e.g.\ Toeplitz hashing / polynomial hashing).
This is \emph{exactly} the kind of streaming linear-algebra / bit-matrix multiply kernel that maps well to FPGA.

\subsubsection{A concrete (FPGA-friendly) model: Toeplitz hashing}
Let $K\in\mathbb{Z}_2^n$ be the reconciled key.
Choose a Toeplitz matrix $T\in\mathbb{Z}_2^{\ell\times n}$ defined by $n+\ell-1$ random bits.
Compute
\[
K_{\mathrm{final}} = T K \in \mathbb{Z}_2^\ell.
\]
Because $T$ is Toeplitz, multiply can be implemented by shift-register style streaming and XOR-accumulators.

\subsection{Qiskit practice: BB84 simulation lab}

\subsubsection{Lab goals}
\begin{itemize}
	\item Simulate BB84 end-to-end: preparation, measurement, sifting, QBER estimation.
	\item Add a noise model and observe how QBER changes.
	\item Add an intercept-resend attacker and compare QBER to the noise-only baseline.
\end{itemize}

\subsubsection{Minimal BB84 simulator outline (pseudo-steps)}
\begin{enumerate}
	\item Generate $N$ random bits $a_i$ and bases $b_i$ for Alice.
	\item Generate $N$ random bases $b_i'$ for Bob.
	\item Simulate measurement outcomes:
	\begin{itemize}
		\item if $b_i=b_i'$, then $c_i=a_i$ with probability $1-\epsilon$ (noise), else flip with probability $\epsilon$;
		\item if $b_i\neq b_i'$, then $c_i$ is random (plus noise).
	\end{itemize}
	\item Sift indices where $b_i=b_i'$.
	\item Sample test set $\mathcal{T}$ and compute QBER.
	\item Plot QBER vs.\ noise level and vs.\ attacker on/off.
\end{enumerate}

\subsubsection{Optional: circuit-level simulation in Qiskit}
Construct actual circuits using the BB84 circuit template (Figure~\ref{fig:bb84-circuit}),
and use Aer simulators with noise models.
This is slower but aligns with quantum-circuit workflows.

\subsection{FPGA practice: streaming sifting, QBER, and hashing}

\subsubsection{Why these are the right FPGA kernels}
The front-end quantum device produces time-stamped events:
\[
(\text{basis choice},\ \text{measurement bit},\ \text{timestamp}).
\]
Post-processing is streaming and dominated by:
\begin{itemize}
	\item bit packing/unpacking and masking (sifting),
	\item counters and popcount-like kernels (QBER),
	\item XOR-heavy hashing / linear transforms (privacy amplification),
	\item buffer management and backpressure.
\end{itemize}

\subsubsection{Streaming architecture sketch}


\begin{figure}[t]
	\centering
	\begin{tikzpicture}[scale=0.9, line cap=round, line join=round]
		\node (in) at (0,0)
		[draw, rounded corners, minimum width=3.2cm, minimum height=0.9cm, align=center]
		{\shortstack{\small raw events\\\small $(b_i,b'_i,c_i,t_i)$}};
		\node (sift) at (4.0,0)
		[draw, rounded corners, minimum width=3.2cm, minimum height=0.9cm, align=center]
		{\shortstack{\small sifting\\\small keep $b_i=b'_i$}};
		\node (qber) at (8.0,0)
		[draw, rounded corners, minimum width=3.2cm, minimum height=0.9cm, align=center]
		{\shortstack{\small QBER estimator\\\small popcount}};
		\node (ec) at (12.0,0)
		[draw, rounded corners, minimum width=3.2cm, minimum height=0.9cm, align=center]
		{\shortstack{\small EC + leak\\\small (host/FPGA)}};
		\node (pa) at (16.0,0)
		[draw, rounded corners, minimum width=3.2cm, minimum height=0.9cm, align=center]
		{\shortstack{\small privacy amp\\\small Toeplitz hash}};
		
		\draw[->] (in) -- (sift);
		\draw[->] (sift) -- (qber);
		\draw[->] (qber) -- (ec);
		\draw[->] (ec) -- (pa);
		
		\node[align=center] at (8.0,1.25)
		{\small streaming post-processing: bandwidth + XOR dominate};
	\end{tikzpicture}
	\caption{FPGA-friendly QKD post-processing pipeline: sifting is masking, QBER is counting,
		and privacy amplification is hashing (XOR-heavy linear transforms).}
	\label{fig:qkd-streaming}
\end{figure}
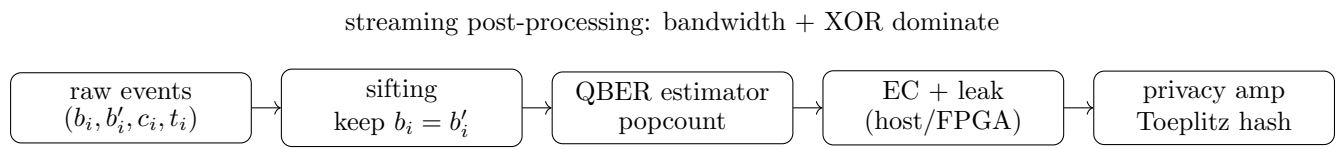


\subsubsection{Concrete kernel 1: sifting as bitmask}
Let $B_A,B_B\in\{0,1\}^N$ be packed basis bits for Alice and Bob.
Compute mask
\[
M := \neg(B_A \oplus B_B),
\]
then keep only measurement bits $C$ where $M=1$.
Implementation: XOR + NOT + bitselect/pack, typically 64-bit words, fully streaming.

\subsubsection{Concrete kernel 2: QBER as popcount of disagreements}
Given revealed test bits $A_{\mathcal{T}}$ and $C_{\mathcal{T}}$:
\[
\#\text{errors} = \mathrm{popcount}(A_{\mathcal{T}} \oplus C_{\mathcal{T}}),
\qquad
\mathrm{QBER} = \frac{\#\text{errors}}{|\mathcal{T}|}.
\]
Implementation: XOR + popcount accumulator, with per-window counters and timestamped logs.

\subsubsection{Concrete kernel 3: Toeplitz hashing as XOR streams}
Maintain a rolling window of Toeplitz seed bits and accumulate output words:
each input bit toggles a shifted pattern of output XORs.
Hardware: shift registers + XOR trees + block RAM for seeds.

\subsection{Exercises}

\begin{exercise}[Bloch-sphere disturbance intuition]
	Explain on the Bloch sphere why measuring in the wrong basis in BB84 introduces errors.
	Use the fact that $\ket{0},\ket{1}$ are $\pm z$ and $\ket{+},\ket{-}$ are $\pm x$.
\end{exercise}

\noindent\textbf{Solution.}
If Alice sends $\ket{0}$ (north pole, $+z$) and Eve measures in the $X$ basis,
the state collapses to $\ket{+}$ or $\ket{-}$ (equator, $\pm x$) with equal probability.
When Bob measures in the correct $Z$ basis, $\ket{+}$ and $\ket{-}$ each yield outcomes $0/1$ with probability $1/2$,
so Bob's bit is random relative to Alice's, creating a $50\%$ error conditional on the basis mismatch.
Since mismatch occurs with probability $1/2$, the overall induced error rate on sifted bits is about $25\%$.

\begin{exercise}[Compute expected QBER for intercept-resend]
	Assume a perfect noiseless channel except for Eve's intercept-resend attack.
	Show that the expected QBER on sifted bits is $25\%$.
\end{exercise}

\noindent\textbf{Solution.}
Sifted bits are those where Alice and Bob chose the same basis.
On those rounds, Eve chose the wrong basis with probability $1/2$.
Conditioned on Eve choosing the wrong basis, she resends an eigenstate of the wrong basis,
so Bob's measurement in the correct basis disagrees with Alice's bit with probability $1/2$.
Thus $\mathbb{E}[\mathrm{QBER}] = (1/2)(1/2)=1/4$.

\begin{exercise}[Design an abort rule with margin]
	Suppose you test $|\mathcal{T}|=10^4$ sifted bits and observe QBER $=2.0\%$.
	Propose a conservative abort rule using a $3\sigma$ margin (normal approximation) and threshold $q_{\max}=3.0\%$.
	Do you abort?
\end{exercise}

\noindent\textbf{Solution.}
For a Bernoulli rate $\hat q=0.02$ with $n=10^4$, the standard deviation is
\[
\sigma \approx \sqrt{\frac{\hat q(1-\hat q)}{n}}
\approx \sqrt{\frac{0.02\cdot 0.98}{10^4}}
\approx 0.0014 \; (=0.14\%).
\]
A conservative upper bound is $\hat q + 3\sigma \approx 2.0\%+0.42\%=2.42\%$.
Since $2.42\% < 3.0\%$, do not abort under this rule.

\begin{exercise}[Streaming kernel design: sifting + QBER]
	You receive packed 64-bit words of basis bits $B_A,B_B$ and measurement bits $C$.
	Write the per-word computation needed to (i) compute a sift mask and (ii) update an error counter
	given a packed test-bit word $A$ for the same positions.
\end{exercise}

\noindent\textbf{Solution.}
Per 64-bit word:
\[
M = \neg(B_A \oplus B_B) \quad (\text{sift mask}).
\]
If the test set is represented by a mask $T$ (1 = revealed/tested), restrict to tested-and-sifted bits:
\[
R = M \,\&\, T.
\]
Then disagreements occur where both $R=1$ and $A\oplus C=1$:
\[
D = (A \oplus C)\,\&\, R.
\]
Update:
\[
\mathrm{err} \mathrel{+}= \mathrm{popcount}(D),\qquad
\mathrm{tot} \mathrel{+}= \mathrm{popcount}(R).
\]
This is exactly XOR + AND + popcount, streaming-friendly on FPGA.

\begin{exercise}[Privacy amplification as linear algebra]
	Let $K\in\mathbb{Z}_2^n$ be a reconciled key and $T\in\mathbb{Z}_2^{\ell\times n}$ a Toeplitz matrix.
	Show that the output key is $K_{\mathrm{final}}=TK$ and explain (one paragraph) why Toeplitz structure helps in hardware.
\end{exercise}

\noindent\textbf{Solution.}
Privacy amplification uses a universal hash family; Toeplitz hashing picks a random Toeplitz matrix $T$ and outputs the linear map $TK$ over $\mathbb{Z}_2$.
Toeplitz structure means each diagonal is constant, so the matrix is specified by only $n+\ell-1$ random bits rather than $\ell n$ bits.
In hardware, this allows a streaming implementation: as input bits of $K$ arrive, you shift a seed window and XOR-accumulate into output registers,
reusing the same seed bits across many output positions with simple shift-register logic.

	
\section{Zero Knowledge and Post-Quantum Cryptography}
\label{sec:zk}

\subsection{Objective}
This chapter builds a \emph{systems-level} understanding of zero-knowledge (ZK) and
post-quantum (PQ) security.
We treat a proof as an \emph{interactive protocol} with
\emph{messages, randomness, time ordering, and verification rules}.
The goal is to connect:
\[
\text{math definition} \ \longleftrightarrow\  \text{transcript as data} \ \longleftrightarrow\  \text{streaming verification (hardware mindset)}.
\]
We focus on:
(i) classical ZK for graph isomorphism (GI) as the cleanest toy model,
(ii) what breaks (and what survives) against quantum adversaries,
(iii) how ZK fits into the PQ landscape (what “post-quantum ZK” really means in practice).

\subsection{Zero-knowledge proofs: classical definition}

\subsubsection{Interactive protocols as transcript generators}
A (public-coin or private-coin) interactive proof is a conversation between:
\[
\Prover \quad\text{and}\quad \Verifier,
\]
where the verifier outputs \textsf{accept} or \textsf{reject}.
Fix an input statement \(x\) (e.g.\ “graphs \(G_0,G_1\) are isomorphic”).
A protocol produces a \emph{transcript}:
\[
\tau = (m_1,m_2,\dots,m_r)
\]
of \(r\) messages, determined by:
\begin{itemize}
	\item the prover’s private randomness \(r_P\),
	\item the verifier’s private randomness \(r_V\) (or public challenges),
	\item and the protocol rules (who sends what, when).
\end{itemize}

\subsubsection{Completeness and soundness (one paragraph each)}
\paragraph{Completeness.}
If \(x\) is true and the prover is honest, the verifier accepts with high probability:
\[
\Pr[\Verifier^{\Prover}(x)=1] \ge 1-\negl(\lambda),
\]
where \(\lambda\) is the security parameter.

\paragraph{Soundness.}
If \(x\) is false, then \emph{any} (possibly cheating) prover \(\Prover^\star\) convinces the verifier only with negligible probability:
\[
\Pr[\Verifier^{\Prover^\star}(x)=1] \le \negl(\lambda).
\]
(Variants: computational soundness vs.\ statistical soundness; we’ll keep the toy model information-theoretic.)

\subsubsection{Zero-knowledge: “nothing beyond validity is learned”}
\paragraph{Simulation-based definition (core idea).}
A protocol is \emph{zero-knowledge} if the verifier’s view of the conversation can be simulated
\emph{without} access to the prover’s witness/secret.

Let \(\View_\Verifier^{\Prover}(x)\) denote the verifier’s internal view:
\[
\View = (\text{verifier randomness }r_V,\ \text{all received messages},\ \text{its intermediate state}).
\]
Zero-knowledge says:
\[
\View_\Verifier^{\Prover}(x) \approx \Sim(x),
\]
where \(\Sim\) is a probabilistic polynomial-time simulator and \(\approx\) means
\emph{indistinguishability} (perfect / statistical / computational, depending on the setting).

\subsubsection{Engineering interpretation: ZK is a transcript indistinguishability contract}
Think of ZK as a strong \emph{privacy contract} on observable logs.
If an auditor only sees transcripts \(\tau\) plus verifier-side metadata,
then ZK says those logs could have been produced without the secret.
So if your system stores \(\tau\), it should not leak more than “the statement was true.”

\subsection{Zero knowledge for graph isomorphism (GI)}

\subsubsection{Problem statement}
Given two graphs \(G_0,G_1\) on \(n\) labeled vertices, decide whether \(G_0 \cong G_1\).
A witness for isomorphism is a permutation \(\pi\in S_n\) such that:
\[
\pi(G_0)=G_1.
\]

\subsubsection{The classic 3-move \(\Sigma\)-protocol (commit--challenge--response)}
This is the canonical “honest-verifier ZK” example.

\paragraph{Protocol.}
Assume \(G_0 \cong G_1\) and the prover knows a secret isomorphism \(\pi\).
One round:

\begin{enumerate}
	\item \textbf{Commit.} Prover samples a random permutation \(\sigma\in S_n\) and sends the permuted graph
	\[
	H := \sigma(G_0)
	\]
	to the verifier.
	
	\item \textbf{Challenge.} Verifier chooses a random bit \(b\in\{0,1\}\) and sends \(b\).
	
	\item \textbf{Response.} Prover responds with a permutation \(\tau\) such that
	\[
	H = \tau(G_b).
	\]
	Concretely:
	\[
	\tau =
	\begin{cases}
		\sigma & (b=0),\\
		\sigma\circ \pi^{-1} & (b=1).
	\end{cases}
	\]
	
	\item \textbf{Verify.} Verifier checks that \(H=\tau(G_b)\). If yes, accept this round.
\end{enumerate}

Repeat \(k\) rounds to reduce error.

\subsubsection{Why it works (completeness + soundness)}
\paragraph{Completeness.}
If \(G_0\cong G_1\) and the prover is honest, then for either challenge \(b\),
the prover can always produce a valid \(\tau\). So each round accepts with probability 1.

\paragraph{Soundness intuition.}
If \(G_0 \not\cong G_1\), then a cheating prover cannot prepare a single \(H\)
that is isomorphic to \emph{both} \(G_0\) and \(G_1\).
So for a random challenge bit \(b\), they can answer at most one of the two possibilities,
hence succeed with probability \(\le 1/2\) per round.
After \(k\) rounds: \(\le 2^{-k}\).

\subsubsection{Why it is (honest-verifier) zero-knowledge}
A simulator for an \emph{honest verifier} (who chooses uniform random \(b\)) does:
\begin{enumerate}
	\item pick \(b\leftarrow\{0,1\}\) uniformly,
	\item pick a random permutation \(\tau\leftarrow S_n\),
	\item set \(H := \tau(G_b)\),
	\item output transcript \((H,b,\tau)\).
\end{enumerate}
This transcript has the same distribution as a real execution:
in the real protocol, \(H\) is always a uniformly permuted copy of whichever graph is challenged.

\subsubsection{Visualization: one round as a dataflow}

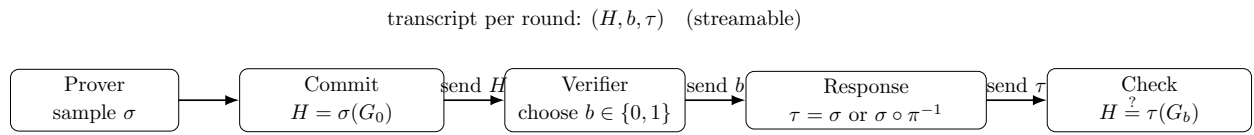
\begin{figure}[t]
	\centering
	\resizebox{\textwidth}{!}{%
		\begin{tikzpicture}[
			>=Latex,
			line cap=round,
			line join=round,
			font=\small,
			box/.style={draw, rounded corners, align=center, minimum height=9mm},
			arr/.style={->, thick},
			node distance=10mm
			]
			\node[box, minimum width=28mm] (p0) {Prover\\ sample $\sigma$};
			
			\node[box, right=10mm of p0, minimum width=34mm] (commit)
			{Commit\\ $H=\sigma(G_0)$};
			
			\node[box, right=10mm of commit, minimum width=30mm] (chal)
			{Verifier\\ choose $b\in\{0,1\}$};
			
			\node[box, right=10mm of chal, minimum width=40mm] (resp)
			{Response\\ $\tau=\sigma$ or $\sigma\circ\pi^{-1}$};
			
			\node[box, right=10mm of resp, minimum width=34mm] (check)
			{Check\\ $H\stackrel{?}{=}\tau(G_b)$};
			
			\draw[arr] (p0) -- (commit);
			\draw[arr] (commit) -- node[above]{send $H$} (chal);
			\draw[arr] (chal) -- node[above]{send $b$} (resp);
			\draw[arr] (resp) -- node[above]{send $\tau$} (check);
			
			\node[align=center] at ($(chal.north)+(0,8mm)$)
			{transcript per round: $(H,b,\tau)$\quad(streamable)};
		\end{tikzpicture}%
	}
	\caption{One GI ZK round is a strict commit--challenge--response pipeline. The verifier’s observable
		data are a small transcript, which is the object ZK talks about.}
	\label{fig:zk-gi-round}
\end{figure}

\subsection{Quantum adversaries and rewinding}

\subsubsection{Why “rewinding” is a thing}
Many classical ZK proofs for malicious verifiers are proven ZK by a simulator that
\emph{rewinds} the verifier:
it runs the verifier, sees a challenge, and if it is “unhelpful,” it rewinds and tries again
until it gets a favorable challenge.

This is a proof technique, not a protocol step, but it is central to ZK theory.

\subsubsection{What changes with quantum verifiers}
A quantum verifier may maintain a quantum state entangled with the transcript.
Naive rewinding can fail because:
\begin{itemize}
	\item measuring to decide “rewind or not” can collapse the verifier’s state,
	\item and cloning a quantum state to restore it is impossible (no-cloning).
\end{itemize}

\subsubsection{High-level takeaway (without diving into proofs)}
\begin{itemize}
	\item Some ZK protocols that rely on classical rewinding proofs need new proof techniques
	against quantum adversaries (e.g.\ “quantum rewinding” methods).
	\item In practice, modern post-quantum ZK systems often avoid fragile rewinding arguments by using
	special structures (e.g.\ \(\Sigma\)-protocols + Fiat--Shamir in the random oracle model,
	or other transformations) and proving security in quantum-aware models.
\end{itemize}

\subsection{Zero knowledge for NP and post-quantum security}

\subsubsection{ZK for NP: the headline}
The landmark fact: every NP statement has a zero-knowledge proof (under standard assumptions).
This means: for any efficiently verifiable relation \(R(x,w)=1\),
there exist protocols where the prover convinces the verifier that “\(\exists w: R(x,w)=1\)”
without revealing \(w\).

\subsubsection{What “post-quantum ZK” means operationally}
A ZK system is “post-quantum” if:
\begin{itemize}
	\item \textbf{Soundness holds} even when the prover (attacker) has a quantum computer,
	\item \textbf{Zero-knowledge holds} even when the verifier is a quantum algorithm.
\end{itemize}
To achieve this, constructions typically use primitives believed hard for quantum computers,
notably lattice-based assumptions (e.g.\ Learning With Errors, Module-LWE, etc.).

\subsubsection{Engineering viewpoint: PQ means your transcript security is still safe under Shor}
If your protocol uses RSA/DSA/ECDSA-style number theory as a security pillar,
Shor’s algorithm breaks it.
If it uses lattice assumptions, we currently believe it survives known quantum attacks.
So “PQ ZK” is mostly about:
\[
\text{replace discrete-log factoring assumptions} \ \to\ \text{lattice assumptions, while preserving ZK}.
\]

\subsection{Practice lab: transcript-as-algorithm (GI toy lab)}
This lab treats the protocol as a program that emits and checks transcripts.

\subsubsection{Toy setup (small graphs)}
Pick \(n=6\) and generate:
\begin{itemize}
	\item a base graph \(G_0\),
	\item a random secret permutation \(\pi\in S_n\),
	\item set \(G_1 := \pi(G_0)\).
\end{itemize}

\subsubsection{Implement the honest prover and verifier}
For \(k\) rounds:
\begin{enumerate}
	\item prover samples \(\sigma\), computes \(H=\sigma(G_0)\), sends \(H\),
	\item verifier samples \(b\leftarrow\{0,1\}\),
	\item prover outputs \(\tau\) (as defined above),
	\item verifier checks \(H=\tau(G_b)\).
\end{enumerate}

\subsubsection{Implement the simulator for HVZK}
Write a simulator that outputs \((H,b,\tau)\) by:
\[
b\leftarrow\{0,1\},\quad \tau\leftarrow S_n,\quad H:=\tau(G_b).
\]
Empirically compare distributions of a few easy-to-check statistics:
\begin{itemize}
	\item degree sequence of \(H\),
	\item number of triangles in \(H\),
	\item adjacency-matrix hash of \(H\) after canonical relabeling (if you have it).
\end{itemize}
They should match between real transcripts and simulated transcripts.

\subsubsection{What you should observe}
Even though the prover “knows” \(\pi\), the transcript does not leak \(\pi\):
the commit \(H\) is just a random relabeling, and the response \(\tau\) is also random-looking.

\subsection{Practice lab: FPGA-style streaming verifier (challenge--response)}

\subsubsection{Why this lab}
Verifier logic is often a small deterministic checker operating on streamed messages.
This is exactly the FPGA-friendly part of ZK: \emph{verification} is usually cheaper than proving.

\subsubsection{Streaming interface model}
Assume per round you receive:
\[
\texttt{pkt} = (H,\ b,\ \tau,\ t,\ \texttt{round\_id}),
\]
where \(t\) is a timestamp (or cycle counter).
The verifier outputs:
\[
\texttt{accept\_bit} \in \{0,1\},
\quad \text{and counters (fail reasons)}.
\]

\subsubsection{Verifier as a pipeline}
\begin{figure}[t]
	\centering
	\begin{tikzpicture}[scale=0.8, line cap=round, line join=round]
		\node (in) at (0,0)
		[draw, rounded corners, minimum width=3.6cm, minimum height=0.9cm, align=center]
		{\shortstack{\small input stream\\\small $(H,b,\tau)$}};
		\node (parse) at (4.2,0)
		[draw, rounded corners, minimum width=3.2cm, minimum height=0.9cm, align=center]
		{\shortstack{\small parse\\\small validate sizes}};
		\node (apply) at (8.2,0)
		[draw, rounded corners, minimum width=3.8cm, minimum height=0.9cm, align=center]
		{\shortstack{\small apply $\tau$\\\small compute $\tau(G_b)$}};
		\node (cmp) at (12.6,0)
		[draw, rounded corners, minimum width=3.6cm, minimum height=0.9cm, align=center]
		{\shortstack{\small compare\\\small $H\stackrel{?}{=}\tau(G_b)$}};
		\node (out) at (16.6,0)
		[draw, rounded corners, minimum width=3.2cm, minimum height=0.9cm, align=center]
		{\shortstack{\small output\\\small accept + flags}};
		
		\draw[->] (in) -- (parse);
		\draw[->] (parse) -- (apply);
		\draw[->] (apply) -- (cmp);
		\draw[->] (cmp) -- (out);
		
		\node[align=center] at (8.3,1.25)
		{\small verifier is a deterministic streaming checker (hardware-friendly)};
	\end{tikzpicture}
	\caption{Streaming verifier viewpoint: each round is a small packet; verification is a fixed pipeline:
		parse \(\to\) apply permutation \(\to\) compare.}
	\label{fig:zk-stream-verifier}
\end{figure}
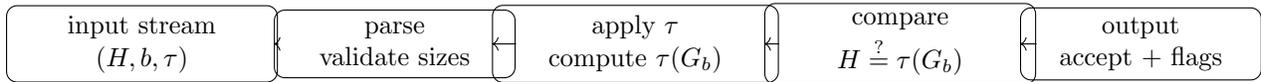

\subsubsection{Micro-optimization note (what dominates)}
Graph permutation checking is mostly memory access:
\begin{itemize}
	\item storing adjacency in bitsets,
	\item applying \(\tau\) as index wiring,
	\item XOR/AND + popcount-style operations.
\end{itemize}
So the “FPGA shape” is: wide bitsets + streaming compare + counters.

\subsection{Exercises}

\begin{exercise}[GI protocol: write the explicit verifier check]
	Represent a graph on \(n\) vertices by its adjacency matrix \(A\in\{0,1\}^{n\times n}\).
	Show that applying a permutation \(\tau\in S_n\) corresponds to
	\[
	A \mapsto P_\tau A P_\tau^\top,
	\]
	where \(P_\tau\) is the permutation matrix.
	Then state the verifier check \(H=\tau(G_b)\) as a matrix equality.
\end{exercise}
\noindent\textbf{Solution.}
Let \(P_\tau\) be defined by \((P_\tau)_{i,\tau(i)}=1\).
Relabeling vertices by \(\tau\) sends an edge \((i,j)\) to \((\tau(i),\tau(j))\),
which exactly corresponds to conjugation:
\[
A'_{\tau(i),\tau(j)} = A_{i,j}.
\]
In matrix form this is \(A' = P_\tau A P_\tau^\top\).
Thus the verifier check is:
\[
A_H \stackrel{?}{=} P_\tau\, A_{G_b}\, P_\tau^\top.
\]

\begin{exercise}[Soundness bound]
	Assume \(G_0\not\cong G_1\).
	Prove that for any cheating prover strategy in one round,
	\[
	\Pr[\textsf{accept}] \le \frac12.
	\]
\end{exercise}
\noindent\textbf{Solution.}
A one-round transcript must start with some committed graph \(H\).
For the verifier to accept under challenge \(b\), the prover must produce \(\tau\) such that
\(H=\tau(G_b)\), i.e.\ \(H\cong G_b\).
If \(G_0\not\cong G_1\), then \(H\) cannot be isomorphic to both \(G_0\) and \(G_1\).
So at best the prover can answer correctly for one value of \(b\).
Since \(b\) is uniform, acceptance probability is \(\le 1/2\).

\begin{exercise}[HVZK simulator distribution]
	Write the honest-verifier simulator for the GI protocol and show that
	the distribution of \(H\) conditioned on \(b\) is uniform over relabelings of \(G_b\).
\end{exercise}
\noindent\textbf{Solution.}
The simulator samples \(\tau\leftarrow S_n\) uniformly and sets \(H=\tau(G_b)\).
Thus conditioned on \(b\), \(H\) is exactly a uniformly random relabeling of \(G_b\),
which matches the real protocol: in the real protocol \(H=\sigma(G_0)\) and if \(b=1\),
the prover responds with \(\sigma\circ\pi^{-1}\), so the verifier sees a random relabeling
consistent with \(G_1\) as well.

\begin{exercise}[Streaming verifier contract]
	Design a minimal packet schema for a streaming verifier that checks \(H=\tau(G_b)\).
	Specify fields, bit-widths (symbolically), and the accept/fail flags you would output.
\end{exercise}
\noindent\textbf{Solution.}
A minimal schema includes:
\[
\texttt{round\_id}:\ \lceil\log_2 k\rceil \text{ bits},\quad
b:1\text{ bit},\quad
\tau:\ n\cdot \lceil\log_2 n\rceil \text{ bits},\quad
H:\ n(n-1)/2 \text{ bits (upper triangle)}.
\]
Outputs:
\texttt{accept\_bit},
\texttt{fail\_parse} (bad lengths/range),
\texttt{fail\_perm} (not a bijection),
\texttt{fail\_check} (matrix mismatch),
plus counters per flag and timestamps for latency measurement.

\begin{exercise}[Post-quantum viewpoint]
	Explain (5--8 lines) why “post-quantum ZK” requires considering quantum verifiers, not only quantum provers.
\end{exercise}
\noindent\textbf{Solution.}
Soundness addresses cheating provers, so a quantum prover matters there.
Zero-knowledge addresses what the verifier can learn, so a quantum verifier matters there:
it may run quantum distinguishing algorithms on transcripts and may keep quantum side information.
Therefore PQ ZK requires ZK proofs that remain valid when the verifier is quantum,
not just when the prover is quantum.
	
\section{Quantum Cryptography in Practice: From Circuits to Hardware}
\label{sec:qcrypto-practice}

\subsection{Objective: turning security principles into implementable systems}
This chapter is a \emph{systems bridge}:
we start from circuit-level protocol logic (state prep \(\to\) measurement \(\to\) classical discussion),
then turn it into an implementable pipeline with
(i) streaming data structures, (ii) latency/throughput budgets, and (iii) FPGA-friendly kernels.

\medskip
\noindent\textbf{Two deliverables.}
\begin{itemize}
	\item \textbf{Protocol view (correctness + security knobs):}
	what is computed, what is announced, what is kept secret, and what parameters trigger abort.
	\item \textbf{Hardware view (interfaces + streaming kernels):}
	what is the wire format, what buffers exist, where counters live, and how to certify correctness online.
\end{itemize}

\subsection{Quantum states as cryptographic carriers}

\subsubsection{Carrier states: what makes them useful}
A cryptographic carrier is a physical state whose \emph{measurement behavior} enforces a constraint:
\begin{itemize}
	\item \textbf{No-cloning:} you cannot copy an unknown qubit state perfectly.
	\item \textbf{Disturbance:} measuring in the wrong basis introduces errors.
	\item \textbf{Basis choice as a secret:} the sender/receiver basis choice acts like a one-time key that selects which measurement yields information.
\end{itemize}

\subsubsection{BB84 signal states (prepare-and-measure)}
BB84 uses four states grouped into two bases:
\[
\mathcal{B}_Z=\{\ket{0},\ket{1}\},\qquad
\mathcal{B}_X=\{\ket{+},\ket{-}\},
\qquad
\ket{\pm}=\frac{\ket{0}\pm\ket{1}}{\sqrt2}.
\]
A sender chooses a basis \(b\in\{Z,X\}\) and a bit \(a\in\{0,1\}\), then sends \(\ket{\psi_{a,b}}\).

\subsubsection{E91 signal states (entanglement-based)}
E91 uses shared entanglement, ideally an EPR pair:
\[
\ket{\Phi^+}=\frac{\ket{00}+\ket{11}}{\sqrt2}.
\]
Each party measures their half in a randomly chosen basis.
Correlations (and Bell-inequality style tests) certify the presence of entanglement and bound eavesdropping.

\subsubsection{Circuit pictures: BB84 and E91 primitives}
\begin{figure}[t]
	\centering
	\begin{tikzpicture}[scale=1.0, line cap=round, line join=round]
		\node[align=left] at (0,2.2) {\small \textbf{BB84 (Alice prepare)}};
		
		\draw (0,1.6) -- (6.6,1.6);
		\node at (-0.25,1.6) {\small $q$};
		
		\node[draw, minimum width=0.6cm, minimum height=0.45cm] (Xg) at (1.5,1.6) {\small $X^{a}$};
		\node[draw, minimum width=0.6cm, minimum height=0.45cm] (Hg) at (3.2,1.6) {\small $H^{b}$};
		\draw[->] (0.3,1.6) -- (Xg.west);
		\draw[->] (Xg.east) -- (Hg.west);
		\draw[->] (Hg.east) -- (6.6,1.6);
		
		\node[align=left] at (6.9,1.6) {\small send over channel};
		
		\node[align=left] at (0,0.9) {\small \(a\in\{0,1\}\): bit,\quad \(b\in\{0,1\}\): basis (0=Z, 1=X).};
		
		\node[align=left] at (0,-0.2) {\small \textbf{E91 (entangled source)}};
		
		\draw (0,-0.8) -- (6.6,-0.8);
		\draw (0,-1.5) -- (6.6,-1.5);
		\node at (-0.25,-0.8) {\small $A$};
		\node at (-0.25,-1.5) {\small $B$};
		
		\node[draw, minimum width=0.6cm, minimum height=0.45cm] (H2) at (1.5,-0.8) {\small $H$};
		\draw[->] (0.3,-0.8) -- (H2.west);
		\draw[->] (H2.east) -- (3.0,-0.8);
		
		\node[draw, minimum width=0.8cm, minimum height=0.45cm] (CNOT) at (3.6,-1.15) {\small CNOT};
		\draw (3.6,-0.8) -- (3.6,-1.15+0.225);
		\draw[->] (3.0,-0.8) -- (CNOT.west |- 3.0,-0.8);
		\draw[->] (0.3,-1.5) -- (CNOT.west |- 0.3,-1.5);
		
		\draw[->] (CNOT.east |- 4.6,-0.8) -- (6.6,-0.8);
		\draw[->] (CNOT.east |- 4.6,-1.5) -- (6.6,-1.5);
		
		\node[align=left] at (6.9,-0.8) {\small send $A$};
		\node[align=left] at (6.9,-1.5) {\small send $B$};
	\end{tikzpicture}
	\caption{Circuit primitives. Top: BB84 preparation as conditional \(X\) then conditional \(H\).
		Bottom: E91 entanglement source (Bell pair) via \(H\) + CNOT.}
	\label{fig:qcrypto-primitives}
\end{figure}

\subsection{Measurement, noise, and error rates}

\subsubsection{Measurement as a classical interface}
A QKD system produces \emph{classical records}:
\[
\text{(time, basis choice, bit outcome, detector metadata)}.
\]
These records are the input to the classical post-processing pipeline.

\subsubsection{Noise sources to model (minimum set)}
For practice-oriented engineering, separate at least:
\begin{itemize}
	\item \textbf{state-prep errors} (imperfect modulation, pulse shaping),
	\item \textbf{channel noise/loss} (depolarization, phase drift, attenuation),
	\item \textbf{detector errors} (dark counts, efficiency mismatch),
	\item \textbf{basis misalignment} (rotation error between \(Z\) and \(X\) frames),
	\item \textbf{timestamp skew} (synchronization and framing errors).
\end{itemize}

\subsubsection{QBER as the primary operational metric}
The quantum bit error rate (QBER) is computed on a chosen subset (often after sifting):
\[
\mathrm{QBER}=\frac{\#\{\text{mismatched bits}\}}{\#\{\text{compared bits}\}}.
\]
Operationally, QBER drives:
\begin{itemize}
	\item \textbf{abort rules} (too noisy \(\Rightarrow\) insecurity),
	\item \textbf{error-correction cost} (more leakage),
	\item \textbf{privacy amplification strength} (more compression).
\end{itemize}

\subsection{Classical post-processing pipeline}

\subsubsection{The canonical streaming pipeline}
A practical QKD stack is a stream processor. A minimal pipeline:
\[
\text{raw detections}
\to \text{sifting}
\to \text{parameter estimation (QBER)}
\to \text{error correction + leak accounting}
\to \text{privacy amplification}
\to \text{final key}.
\]

\subsubsection{What each stage consumes/produces}
\begin{itemize}
	\item \textbf{Sifting:}
	inputs \((b_A,b_B)\) and outcomes; output is indices where bases match (plus optional sampling indices).
	\item \textbf{Parameter estimation:}
	inputs sampled matched indices; output \(\widehat{\mathrm{QBER}}\) and confidence bounds.
	\item \textbf{Error correction (EC):}
	inputs matched strings; output reconciled strings + \(\mathrm{leak}_{\mathrm{EC}}\) (bits revealed).
	\item \textbf{Privacy amplification (PA):}
	inputs reconciled string and security parameters; output shorter key with chosen hash seed logged.
\end{itemize}

\subsubsection{Hardware-friendly separation}
For FPGA practice, split into:
\begin{itemize}
	\item \textbf{fast path (FPGA):} counters, sifting indices, QBER estimation, streaming hashes,
	\item \textbf{slow path (host CPU):} protocol control, EC strategy selection, transcript persistence.
\end{itemize}

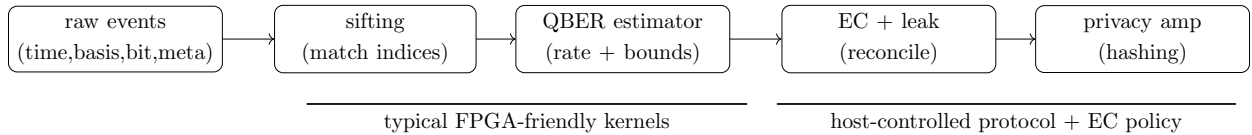
\begin{figure}[t]
	\centering
	\resizebox{\linewidth}{!}{%
		\begin{tikzpicture}[scale=1.0, line cap=round, line join=round]
			\node (raw) at (0,0)
			[draw, rounded corners, minimum width=3.2cm, minimum height=0.9cm,
			align=center, text width=3.2cm]
			{\small raw events\par\small (time,basis,bit,meta)};
			
			\node (sift) at (4.2,0)
			[draw, rounded corners, minimum width=3.0cm, minimum height=0.9cm,
			align=center, text width=3.0cm]
			{\small sifting\par\small (match indices)};
			
			\node (qber) at (8.2,0)
			[draw, rounded corners, minimum width=3.2cm, minimum height=0.9cm,
			align=center, text width=3.2cm]
			{\small QBER estimator\par\small (rate + bounds)};
			
			\node (ec) at (12.5,0)
			[draw, rounded corners, minimum width=3.2cm, minimum height=0.9cm,
			align=center, text width=3.2cm]
			{\small EC + leak\par\small (reconcile)};
			
			\node (pa) at (16.6,0)
			[draw, rounded corners, minimum width=3.2cm, minimum height=0.9cm,
			align=center, text width=3.2cm]
			{\small privacy amp\par\small (hashing)};
			
			\draw[->] (raw) -- (sift);
			\draw[->] (sift) -- (qber);
			\draw[->] (qber) -- (ec);
			\draw[->] (ec) -- (pa);
			
			\draw[thick] (3.1,-1.05) -- (10.2,-1.05);
			\node[align=center] at (6.65,-1.35) {\small typical FPGA-friendly kernels};
			
			\draw[thick] (10.7,-1.05) -- (18.1,-1.05);
			\node[align=center] at (14.4,-1.35) {\small host-controlled protocol + EC policy};
		\end{tikzpicture}%
	}
	\caption{Canonical QKD post-processing as a streaming pipeline.}
	\label{fig:qcrypto-qkd-pipeline}
\end{figure}

\subsection{Case study I: BB84 end-to-end}

\subsubsection{Protocol transcript (what must be logged)}
A practical BB84 implementation needs a transcript sufficient for:
debugging, reproducibility, and security audits. Minimal per-block fields:
\begin{itemize}
	\item block ID, start/end timestamps, clock sync status,
	\item basis streams \(b_A, b_B\) (or their hashes/seeds),
	\item raw outcomes \(x_A, x_B\) (or compressed forms),
	\item sifting mask (indices kept + sample indices),
	\item QBER estimate and bounds, abort decision,
	\item EC method ID, \(\mathrm{leak}_{\mathrm{EC}}\), success/fail flags,
	\item PA hash family ID, seed, output key length.
\end{itemize}

\subsubsection{Circuit-to-probability check (sanity test)}
If Alice sends \(\ket{0}\) and Bob measures in \(Z\), Bob gets 0 with probability 1 (ideal).
If Bob measures in \(X\), Bob gets 0/1 each with probability \(1/2\).
This basis-mismatch randomness is the \emph{sifting engine}.

\subsubsection{Practical BB84 flow (block-structured)}
\begin{enumerate}
	\item \textbf{Quantum phase (stream):} Alice sends \(N\) qubits; Bob records outcomes with timestamps.
	\item \textbf{Sifting (classical):} announce bases; keep matched indices.
	\item \textbf{Parameter estimation:} reveal a random subset; compute QBER; abort if too large.
	\item \textbf{Reconciliation (EC):} correct mismatches; track leakage.
	\item \textbf{Privacy amplification:} apply universal hashing to compress away Eve's information.
\end{enumerate}

\subsection{Case study II: entanglement-based QKD (E91)}

\subsubsection{What changes vs.\ BB84}
E91 replaces ``prepare-and-measure'' with a \emph{source} that emits entangled pairs.
Each side measures with randomly chosen settings.

\subsubsection{Two simultaneous goals}
\begin{itemize}
	\item \textbf{Key generation:} use a subset of settings that produce correlated bits.
	\item \textbf{Entanglement test:} use additional settings to estimate correlation statistics (Bell-type test),
	which bounds an adversary's information.
\end{itemize}

\subsubsection{Hardware implication}
E91 tends to be more sensitive to:
\begin{itemize}
	\item \textbf{pairing and timing}: matching which detection on Alice corresponds to which on Bob,
	\item \textbf{coincidence windows}: streaming join between two timestamped streams,
	\item \textbf{metadata rates}: detectors produce high-rate event logs even when the key rate is low.
\end{itemize}
This pushes even more strongly toward streaming accelerators.

\subsection{From protocols to architecture}

\subsubsection{A minimal architecture diagram (syndrome-style thinking)}
Treat QKD as ``measurement stream \(\to\) classical decision'' like QEC:
\begin{itemize}
	\item inputs are high-rate event packets,
	\item outputs are (i) a key block and (ii) explicit flags/metrics.
\end{itemize}

\subsubsection{Interfaces (minimum wire format)}
A practical message schema for FPGA/host integration:
\begin{itemize}
	\item \texttt{EventPacket}: \texttt{block\_id}, \texttt{tstamp}, \texttt{basis}, \texttt{bit}, \texttt{meta}
	\item \texttt{SiftMask}: \texttt{block\_id}, \texttt{kept\_count}, \texttt{bitmap/hash}, \texttt{sample\_seed}
	\item \texttt{ParamReport}: \texttt{block\_id}, \texttt{qber}, \texttt{ci\_low}, \texttt{ci\_high}, \texttt{abort}
	\item \texttt{ECReport}: \texttt{block\_id}, \texttt{method\_id}, \texttt{leak\_ec}, \texttt{ok}
	\item \texttt{PAReport}: \texttt{block\_id}, \texttt{hash\_id}, \texttt{seed}, \texttt{out\_len}
\end{itemize}
All messages must include a \texttt{schema\_version} and a \texttt{build\_hash} for reproducibility.

\subsubsection{FPGA kernels that actually matter}
For practice projects, focus on kernels that are:
fixed-format, streaming, and counter-heavy:
\begin{itemize}
	\item sifting mask generation (basis match \(\Rightarrow\) keep/drop),
	\item QBER estimation (mismatch counting on sampled indices),
	\item streaming hashing (privacy amplification primitive),
	\item timestamp alignment / coincidence windowing (E91).
\end{itemize}

\subsection{Exercises (practice-oriented)}

\begin{exercise}[BB84 sifting as a streaming filter]
	You receive a stream of tuples \((t, b_A, b_B, x_A, x_B)\) for a block.
	Design a one-pass streaming algorithm that outputs:
	(i) the kept-index mask, and (ii) counters \(\#\text{kept}\), \(\#Z\), \(\#X\).
\end{exercise}
\noindent\textbf{Solution sketch.}
Initialize counters to zero and a bitmask buffer (or rolling hash) for indices.
For each record at index \(i\):
if \(b_A=b_B\), set mask bit \(i\leftarrow 1\) and increment \(\#\text{kept}\).
Also increment \(\#Z\) if \(b_A=Z\) else increment \(\#X\).
Otherwise set mask bit \(i\leftarrow 0\).
This is a pure streaming filter with fixed memory per index representation.

\begin{exercise}[QBER estimation with sampling]
	Assume you sample \(m\) kept indices uniformly at random (using a PRNG seed).
	Give a streaming method to estimate QBER on the sample without storing all bits.
\end{exercise}
\noindent\textbf{Solution sketch.}
Use the seed to generate the next sampled kept-index threshold (or reservoir-like selection).
Maintain counters: \(\#\text{sample}\) and \(\#\text{mismatch}\).
When index \(i\) is selected and \(b_A=b_B\), compare \(x_A\) and \(x_B\):
if different, increment \(\#\text{mismatch}\).
At end, \(\widehat{\mathrm{QBER}}=\#\text{mismatch}/\#\text{sample}\).
Optionally compute a confidence interval using a binomial bound on host.

\begin{exercise}[Privacy amplification as a streaming hash]
	Choose a universal hash family suitable for streaming (e.g.\ Toeplitz hashing).
	Describe (i) what seed must be shared/logged, and (ii) how the hash can be computed as a bitstream.
\end{exercise}
\noindent\textbf{Solution sketch.}
A Toeplitz matrix hash is specified by a seed of length \(n+m-1\) (for mapping \(n\) bits to \(m\) bits).
The FPGA can generate Toeplitz rows on the fly from the seed and compute each output bit as an XOR of selected input bits.
Log: \texttt{hash\_id}, \texttt{seed}, \texttt{in\_len}=n, \texttt{out\_len}=m, and the block ID.

\begin{exercise}[E91 coincidence window join]
	You receive two timestamped event streams \(A\) and \(B\).
	Design a hardware-friendly join that pairs events whose timestamps differ by at most \(\Delta\).
	What buffers do you need and what is the worst-case overflow condition?
\end{exercise}
\noindent\textbf{Solution sketch.}
Maintain two FIFO buffers for recent events from each stream within the window.
When an event arrives on stream \(A\), pop old events from \(B\) with \(t_B<t_A-\Delta\),
then match against the earliest \(B\) with \(t_B\le t_A+\Delta\) (policy: greedy or best-closest).
Symmetric for \(B\).
Worst-case overflow happens when one stream rate spikes or the other stream stalls, causing the window FIFO to grow beyond its depth.
A safe design includes backpressure + explicit overflow flags.

\begin{exercise}[Engineering: abort rules and logging contract]
	Propose an abort rule based on QBER and a logging contract that lets you reproduce the abort decision offline.
\end{exercise}
\noindent\textbf{Solution sketch.}
Abort if \(\widehat{\mathrm{QBER}}>\tau\) where \(\tau\) is a configured threshold (protocol-dependent) and if the confidence bound upper endpoint exceeds \(\tau\).
Log: \texttt{block\_id}, sample seed, \(\#\text{sample}\), \(\#\text{mismatch}\), computed \(\widehat{\mathrm{QBER}}\), CI method ID and parameters, and the threshold \(\tau\).
This is sufficient to replay the decision exactly.
	
	\appendix
	
	\clearpage
\section{Appendices: Hands-on Quantum Control with FPGA (Lattice iCEstick)}
\addcontentsline{toc}{section}{Appendices: Hands-on Quantum Control with FPGA (Lattice iCEstick)}
\markboth{Appendices}{Appendices}

\subsection{How to read these appendices (no projects, just concrete design examples)}
These appendices give a \emph{hands-on mental model} for low-latency classical paths that appear in
quantum-control and QEC pipelines, using the Lattice iCEstick as the concrete reference platform.
We will not develop end-to-end ``projects'' here; instead, each subsection is a \emph{design pattern}
with (i) a precise contract, (ii) a minimal working example, and (iii) a visualization you can keep
next to the code.

\medskip
\noindent\textbf{What you should get out of this appendix block.}
You should be able to look at a streaming hardware block and answer:
\begin{itemize}
	\item What is the \emph{input contract} (when is data valid, how is backpressure handled)?
	\item What is the \emph{state} (what must be stored, and for how long)?
	\item What is the \emph{bounded-time guarantee} (worst-case cycles per message)?
	\item What is the \emph{fail-closed policy} (what happens under corruption/overflow/misalignment)?
	\item What is the \emph{observable telemetry} (flags, counters, histograms) for debugging p99/p999?
\end{itemize}

\medskip
\noindent\textbf{Guiding viewpoint.}
In real-time quantum pipelines, \emph{latency is part of correctness}:
a mathematically correct correction that arrives too late is a wrong correction.

\subsection{Design vocabulary (the only primitives we assume)}
We will repeatedly use four primitives:
\begin{enumerate}
	\item \textbf{Streaming handshake:} \texttt{valid/ready} with stable data under backpressure.
	\item \textbf{Buffering:} FIFO + explicit occupancy + explicit overflow policy.
	\item \textbf{Parsing/framing:} a small FSM that prevents byte/word drift.
	\item \textbf{Time:} cycle counter timestamps for on-device latency measurement.
\end{enumerate}

\subsection{Example-driven design patterns (each with a self-contained figure)}
The subsections below are \emph{examples}, not projects. Each is small enough to implement in one file,
but the point here is the \emph{design}: contracts, hazards, and observability.

\subsubsection{Example A: The universal stream contract (valid/ready)}
\noindent\textbf{Problem.}
You want to connect two modules without drops: a producer emits words, a consumer may stall.

\medskip
\noindent\textbf{Contract (one line).}
A transfer happens on the rising clock edge iff \texttt{valid \& ready}.
If \texttt{valid=1} and \texttt{ready=0}, the producer must hold \texttt{data} stable.

\begin{figure}[t]
	\centering
	\begin{tikzpicture}[
		font=\small,
		sig/.style={draw, rounded corners, minimum width=2.8cm, minimum height=0.9cm, align=center},
		arr/.style={-Latex, thick},
		lab/.style={font=\footnotesize, align=left}
		]
		\node[sig] (prod) {Producer};
		\node[sig, right=4.2cm of prod] (cons) {Consumer};
		
		\draw[arr] (prod.east) -- node[above, lab] {data[W-1:0]\\valid} (cons.west);
		\draw[arr] (cons.west) -- node[below, lab] {ready} (prod.east);
		
		\node[lab, below=0.9cm of prod, xshift=2.1cm] {
			\textbf{Rule:} accept iff \texttt{valid \& ready}.\\
			\textbf{Backpressure:} consumer deasserts \texttt{ready}.\\
			\textbf{No drop:} producer holds \texttt{data} stable while \texttt{valid=1}.
		};
	\end{tikzpicture}
	\caption{The \texttt{valid/ready} streaming contract. This single rule lets you safely compose pipelines.}
	\label{fig:app-vr-contract}
\end{figure}

\noindent\textbf{Concrete numbers (timing intuition).}
Suppose \texttt{valid} is high for 5 cycles and the consumer stalls for 2 cycles in the middle.
Then exactly 5 transfers occur, but later in time; the word order is preserved without special glue logic.

\subsubsection{Example B: FIFO is the physical shape of buffering}
\noindent\textbf{Problem.}
Arrival is bursty; service is bounded but not constant. You need to absorb jitter.

\medskip
\noindent\textbf{Design choices you must make explicitly.}
\begin{itemize}
	\item Depth \(D\): maximum burst you can absorb.
	\item Overflow policy: \emph{never silently drop}. Either (i) assert \texttt{overflow} and stop accepting
	(\emph{fail closed}) or (ii) drop with an explicit \texttt{drop\_count} and \texttt{misalign} flag.
	\item Observability: occupancy counter + high-water mark.
\end{itemize}

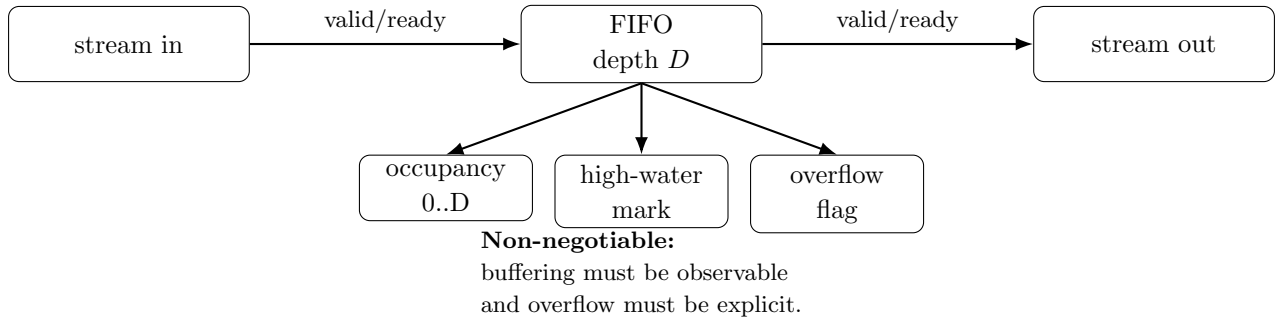
\begin{figure}[t]
	\centering
	\begin{tikzpicture}[
		font=\small,
		box/.style={draw, rounded corners, minimum width=3.2cm, minimum height=1.0cm, align=center},
		tiny/.style={draw, rounded corners, minimum width=2.3cm, minimum height=0.75cm, align=center},
		arr/.style={-Latex, thick},
		node distance=1.1cm and 1.4cm
		]
		\node[box] (in) {stream in};
		\node[box, right=3.6cm of in] (fifo) {FIFO\\depth \(D\)};
		\node[box, right=3.6cm of fifo] (out) {stream out};
		
		\draw[arr] (in) -- node[above] {\footnotesize valid/ready} (fifo);
		\draw[arr] (fifo) -- node[above] {\footnotesize valid/ready} (out);
		
		\node[tiny, below=0.95cm of fifo, xshift=-2.6cm] (occ) {occupancy\\0..D};
		\node[tiny, below=0.95cm of fifo] (hwm) {high-water\\mark};
		\node[tiny, below=0.95cm of fifo, xshift=2.6cm] (ovf) {overflow\\flag};
		
		\draw[arr] (fifo.south) -- (occ.north);
		\draw[arr] (fifo.south) -- (hwm.north);
		\draw[arr] (fifo.south) -- (ovf.north);
		
		\node[align=left, font=\footnotesize] at ($(occ.south)+(2.6,-0.7)$) {
			\textbf{Non-negotiable:}\\
			buffering must be observable\\
			and overflow must be explicit.
		};
	\end{tikzpicture}
	\caption{FIFO design pattern: decouple arrival and service while keeping explicit visibility into backlog.}
	\label{fig:app-fifo}
\end{figure}

\noindent\textbf{Concrete example (why depth matters).}
If arrivals can spike to 8 words in 8 cycles but your kernel drains 1 word per 2 cycles,
then \(D\ge 8\) is not enough unless you also ensure the burst gap gives time to drain.
The FIFO gives you \emph{time}, but only up to \(D\) words.

\subsubsection{Example C: Framing FSM prevents ``word drift''}
\noindent\textbf{Problem.}
If you read a byte stream without framing, one dropped byte shifts the interpretation forever.

\medskip
\noindent\textbf{Minimal framed message.}
We use a tiny header:
\[
\texttt{SOF} \;|\; \texttt{len} \;|\; \texttt{type} \;|\; \texttt{payload} \;|\; \texttt{CRC}.
\]
The framing FSM must implement a \emph{resynchronization rule}: on error, discard until a new \texttt{SOF}.

\begin{figure}[t]
	\centering
	\resizebox{\linewidth}{!}{%
		\begin{tikzpicture}[
			font=\small,
			st/.style={draw, rounded corners, align=center, minimum height=9mm},
			arr/.style={-Latex, thick},
			lab/.style={font=\footnotesize},
			node distance=10mm and 16mm
			]
			\node[st, minimum width=28mm] (idle) {IDLE\\\footnotesize seek SOF};
			\node[st, right=18mm of idle, minimum width=32mm] (hdr) {READ HDR\\\footnotesize len,type};
			\node[st, right=18mm of hdr,  minimum width=32mm] (pay) {READ\\\footnotesize PAYLOAD};
			\node[st, right=18mm of pay,  minimum width=30mm] (crc) {CHECK\\\footnotesize CRC};
			
			\node[st, below=12mm of pay, minimum width=32mm] (err) {ERROR\\\footnotesize resync};
			
			\draw[arr,<->] (idle) -- node[above,lab]{SOF} (hdr);
			
			\draw[arr] (hdr) -- node[above,lab]{ok} (pay);
			\draw[arr] (pay) -- node[above,lab]{done} (crc);
			
			\draw[arr] (hdr.south) to[out=-60,in=150] node[pos=0.55,above,lab]{invalid} (err.west);
			\draw[arr] (pay.south) -- node[right,lab]{timeout/ovf} (err.north);
			\draw[arr] (crc.south) to[out=-120,in=30] node[pos=0.55,above,lab]{fail} (err.east);
			
			\draw[arr] (err.west) to[out=180,in=-120,looseness=1.12]
			node[pos=0.55,below,lab]{drop until SOF} (idle.south);
			
			\node[align=center, font=\footnotesize] at ($(err.south)+(0,-9mm)$) {%
				\textbf{Fail-closed:} do not emit ambiguous outputs under corruption.\\
				Emit a flag + resynchronize.
			};
		\end{tikzpicture}%
	}
	\caption{Framing FSM: a tiny amount of control logic prevents silent stream misinterpretation.}
	\label{fig:app-framing-fsm}
\end{figure}

\begin{figure}[t]
	\centering
	\resizebox{\linewidth}{!}{%
		\begin{tikzpicture}[
			font=\small,
			st/.style={draw, rounded corners, align=center, minimum height=9mm},
			arr/.style={-Latex, thick},
			lab/.style={font=\footnotesize},
			node distance=10mm and 16mm
			]
			\node[st, minimum width=28mm] (idle) {IDLE\\\footnotesize seek SOF};
			\node[st, right=18mm of idle, minimum width=32mm] (hdr) {READ HDR\\\footnotesize len,type};
			\node[st, right=18mm of hdr,  minimum width=32mm] (pay) {READ\\\footnotesize PAYLOAD};
			\node[st, right=18mm of pay,  minimum width=30mm] (crc) {CHECK\\\footnotesize CRC};
			
			\node[st, below=12mm of pay, minimum width=32mm] (err) {ERROR\\\footnotesize resync};
			
			\draw[arr] (idle) -- node[above,lab]{SOF} (hdr);
			\draw[arr] (hdr) -- node[above,lab]{ok} (pay);
			\draw[arr] (pay) -- node[above,lab]{done} (crc);
			\draw[arr] (crc) -- node[above,lab]{pass} (idle);
			
			\draw[arr] (hdr.south) to[out=-90,in=160] node[pos=0.55,left,lab]{invalid} (err.west);
			\draw[arr] (pay.south) -- node[right,lab]{timeout/ovf} (err.north);
			\draw[arr] (crc.south) to[out=-90,in=20] node[pos=0.55,right,lab]{fail} (err.east);
			
			\draw[arr] (err.west) to[out=180,in=-120,looseness=1.15]
			node[pos=0.55,below,lab]{drop until SOF} (idle.south);
			
			\node[align=center, font=\footnotesize] at ($(err.south)+(0,-9mm)$) {%
				\textbf{Fail-closed:} do not emit ambiguous outputs under corruption.\\
				Emit a flag + resynchronize.
			};
		\end{tikzpicture}%
	}
	\caption{Framing FSM: a tiny amount of control logic prevents silent stream misinterpretation.}
	\label{fig:app-framing-fsm}
\end{figure}

\noindent\textbf{Concrete design question (you must answer).}
What happens if \texttt{len} is larger than the FIFO capacity?
Correct answer in safety-critical pipelines: assert a flag, drain/discard until SOF, and do not ``best-effort'' parse.

\subsubsection{Example D: Timestamping makes latency measurable (mean vs.\ p99)}
\noindent\textbf{Problem.}
You cannot optimize what you cannot measure. In real-time pipelines, \(\text{p99}\) beats mean.

\medskip
\noindent\textbf{On-device truth.}
Use a cycle counter \(T\). Record arrival \(A_t\) when a message becomes visible to the pipeline,
record finish \(F_t\) when the output is committed, and compute \(\Delta_t=F_t-A_t\).

\begin{figure}[t]
	\centering
	\begin{tikzpicture}[
		font=\small,
		box/.style={draw, rounded corners, minimum width=3.4cm, minimum height=1.0cm, align=center},
		arr/.style={-Latex, thick},
		node distance=1.1cm and 1.4cm
		]
		\node[box] (in) {arrival\\stamp \(A_t\)};
		\node[box, right=3.8cm of in] (pipe) {pipeline\\FIFO + compute};
		\node[box, right=3.8cm of pipe] (out) {finish\\stamp \(F_t\)};
		
		\draw[arr] (in) -- (pipe);
		\draw[arr] (pipe) -- (out);
		
		\node[draw, rounded corners, below=1.1cm of pipe, minimum width=10.2cm, minimum height=1.0cm, align=center] (stat) {
			cycle counter \(T\leftarrow T+1\) each clock \quad;\quad
			update histogram of \(\Delta_t=F_t-A_t\) (mean, p99, p999)
		};
		\draw[arr] (pipe.south) -- (stat.north);
	\end{tikzpicture}
	\caption{Timestamping pattern: measure latency in cycles on-device and export compact summaries.}
	\label{fig:app-timestamp}
\end{figure}
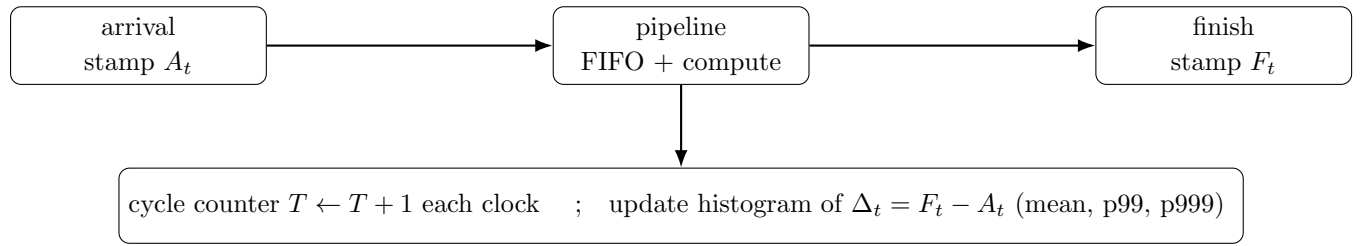

\noindent\textbf{Concrete example (why p99 matters).}
If 99\% of packets take 50 cycles but 1\% take 500 cycles, the mean is misleading.
A control loop with a 100-cycle deadline fails on that 1\% unless you redesign the worst-case path.

\subsection{Putting the patterns together (a minimal ``stream-in, compute, stream-out'' mental model)}
A safe, debuggable pipeline is the composition:
\[
\texttt{valid/ready} \;\to\; \text{FIFO} \;\to\; \text{framing FSM} \;\to\; \text{bounded compute} \;\to\; \text{timestamp + flags}.
\]
The remainder of the appendices will repeatedly reuse these same objects, only changing the payload meaning.

\subsection{What to visualize while reading later appendices}
Keep these pictures in mind:
\begin{itemize}
	\item \textbf{Where can backpressure happen?} (who may stall whom?)
	\item \textbf{Where can state accumulate?} (FIFO occupancy and hidden queues)
	\item \textbf{Where can interpretation drift?} (framing and resync)
	\item \textbf{Where is latency created?} (bounded compute + stalls)
	\item \textbf{Where do we fail closed?} (explicit flags, no silent behavior)
\end{itemize}

\subsection{Exercises (design-only, no projects)}
\begin{exercise}[Handshake sanity]
	Give a 5-cycle example waveform (in words) where \texttt{valid} is asserted for 3 cycles but \texttt{ready} is low for 1 cycle in the middle.
	How many transfers occur, and which cycle(s) perform them?
\end{exercise}

\begin{exercise}[FIFO sizing]
	Assume arrivals can burst at 1 word/cycle for 10 cycles.
	Your consumer drains 1 word every 2 cycles. Without loss, assume the consumer is always ready when it can drain.
	What FIFO depth \(D\) is required to avoid overflow during the burst?
\end{exercise}

\begin{exercise}[Fail-closed framing]
	List three distinct framing failures (bad \texttt{len}, CRC mismatch, timeout) and specify (i) what flag you assert and (ii) how you resynchronize.
\end{exercise}

\begin{exercise}[Latency bookkeeping]
	Describe exactly where you would latch \(A_t\) and \(F_t\) in the pipeline of Figure~\ref{fig:app-timestamp}.
	What event counts as ``arrival'' and what event counts as ``finish''?
\end{exercise}
	
\section{The Classical Interface: Switches as Measurement Outcomes}
\label{sec:classical-interface}

\subsection{Big picture: why switches matter in quantum computing}
Quantum hardware never gives you a ``wavefunction readout.'' What you receive in practice is a
\emph{classical} stream of outcomes: clicks, thresholded voltages, discriminator bits, or FPGA pins.
This section builds the bridge:

\[
\text{analog physics} \;\longrightarrow\; \text{thresholding} \;\longrightarrow\; \textbf{bits}
\;\longrightarrow\; \textbf{control logic} \;\longrightarrow\; \text{actuation}.
\]

\medskip
\noindent\textbf{Engineering lens.}
A measurement result becomes a switch (0/1), and then a \emph{policy}:
``if outcome pattern looks like X, do action Y.''
If you can implement this policy with bounded latency and explicit failure behavior,
you have the core classical loop used in real-time experiments and QEC.

\medskip
\noindent\textbf{Mathematical lens.}
Measurements induce random variables. A classical controller computes a Boolean function of those
random variables, possibly with memory (finite-state machine). That is exactly what LUTs and FSMs are.

\subsection{Physical switches as bits}
\subsubsection{From voltage to bit: thresholding}
A physical ``switch'' is any signal you can reliably map to two levels. Examples:
\begin{itemize}
	\item a push-button switch on the iCEstick (mechanical contact),
	\item a comparator output (analog voltage \(\to\) 0/1),
	\item a photon detector ``click'' event (counted \(\to\) bit),
	\item a readout resonator demodulated IQ \(\to\) discriminated bit.
\end{itemize}
Abstractly, you take a real-valued signal \(x(t)\) and produce
\[
b(t) := \mathbf{1}\{x(t) \ge \theta\}\in\{0,1\},
\]
for some threshold \(\theta\).

\subsubsection{Debounce and metastability (why the bit is not automatically trustworthy)}
Two practical issues appear immediately on FPGA inputs:
\begin{itemize}
	\item \textbf{Switch bounce:} a button can flicker 0/1 for milliseconds.
	\item \textbf{Metastability:} asynchronous inputs near the clock edge can violate setup/hold.
\end{itemize}
Therefore, the safe pattern is:
\[
\text{async input} \;\to\; \text{2-flop synchronizer} \;\to\; \text{debounce filter} \;\to\; \text{clean bit}.
\]

\begin{figure}[t]
	\centering
	\begin{tikzpicture}[
		font=\small,
		box/.style={
			draw, rounded corners,
			minimum height=0.95cm,
			align=center,
			inner xsep=6pt,
			text width=2.8cm 
		},
		arr/.style={-Latex, thick},
		lab/.style={font=\footnotesize, align=center}
		]
		\matrix (m) [matrix of nodes,
		column sep=10mm, row sep=8mm,
		nodes={box}
		]{
			{physical switch\\(async, noisy)} &
			{2-flop\\synchronizer} &
			{debounce\\(N-cycle stable)} &
			{clean bit\\$b\in\{0,1\}$} \\
		};
		
		\draw[arr] (m-1-1.east) -- (m-1-2.west);
		\draw[arr] (m-1-2.east) -- (m-1-3.west);
		\draw[arr] (m-1-3.east) -- (m-1-4.west);
		
		\node[lab] at ($(m-1-2.south)+(0,-6mm)$) {metastability\\guard};
		\node[lab] at ($(m-1-3.south)+(0,-6mm)$) {bounce\\suppression};
	\end{tikzpicture}
	\caption{Making a trustworthy bit: synchronize asynchronous inputs, then debounce.}
	\label{fig:switch-to-bit}
\end{figure}
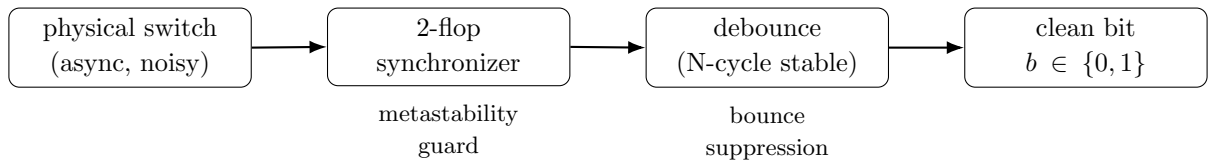

\subsection{Boolean algebra as quantum control logic}
\subsubsection{Bits as events; Boolean functions as policies}
Once you have clean bits \(b_1,\dots,b_m\), a classical policy is a Boolean map
\[
u = f(b_1,\dots,b_m)\in\{0,1\}^k,
\]
where \(u\) are the control outputs (enable lines, mux selects, trigger pulses, etc.).
In QEC language: \(b_i\) are syndrome/detection-event bits; \(u\) encodes a correction decision
(or a Pauli-frame update).

\subsubsection{Truth tables are executable specifications}
The simplest policy is a truth table.
For \(m\) input bits, there are \(2^m\) input patterns.
A truth table is a complete, unambiguous spec of what to do for each pattern.

\medskip
\noindent\textbf{Example (2-bit toy policy).}
Let \(b_0,b_1\) be two measurement outcomes and choose a control output \(u\):
\[
u = b_0 \oplus b_1.
\]
This is a minimal model of ``parity checks'' you see everywhere in stabilizer measurement logic.

\subsection{Logic gates, LUTs, and control as truth tables}
\subsubsection{Gates vs.\ LUTs}
You can implement \(f\) either as:
\begin{itemize}
	\item a network of logic gates (AND/OR/XOR/NOT), or
	\item a LUT (lookup table): store outputs indexed by input bits.
\end{itemize}

\medskip
\noindent\textbf{LUT viewpoint (control as a table).}
A LUT is literally:
\[
\texttt{out} \leftarrow \texttt{table}[\texttt{in}],
\]
so it matches the ``policy spec'' style.

\subsubsection{Finite-state control (memory matters)}
Real control loops often need \emph{state}:
\[
(\text{state}_{t+1}, u_t) = F(\text{state}_t, b_t).
\]
This is a finite-state machine (FSM).
It appears whenever you:
\begin{itemize}
	\item integrate over time (e.g.\ ``require 3 consecutive votes''),
	\item parse framed streams (SOF/len/payload),
	\item enforce fail-closed behavior (error state, resync state),
	\item implement bounded-pass decoders (phase counters).
\end{itemize}

\begin{figure}[t]
	\centering
	\begin{tikzpicture}[
		font=\small,
		box/.style={draw, rounded corners, minimum width=3.0cm, minimum height=0.95cm, align=center},
		arr/.style={-Latex, thick},
		node distance=1.1cm and 1.4cm
		]
		\node[box] (in) {measurement bits\\\(b_t\)};
		\node[box, right=3.8cm of in] (fsm) {FSM / LUT policy\\\(F(\text{state},b)\)};
		\node[box, right=3.8cm of fsm] (out) {control outputs\\\(u_t\)};
		
		\draw[arr] (in) -- (fsm);
		\draw[arr] (fsm) -- (out);
		\draw[arr] (fsm.south) .. controls +(0,-1.2) and +(0,-1.2) .. node[below]{\footnotesize state register} (fsm.south);
		
	\end{tikzpicture}
	\caption{Control as computation: measurement bits go into a policy block (LUT/FSM), producing control outputs.}
	\label{fig:policy-fsm}
\end{figure}
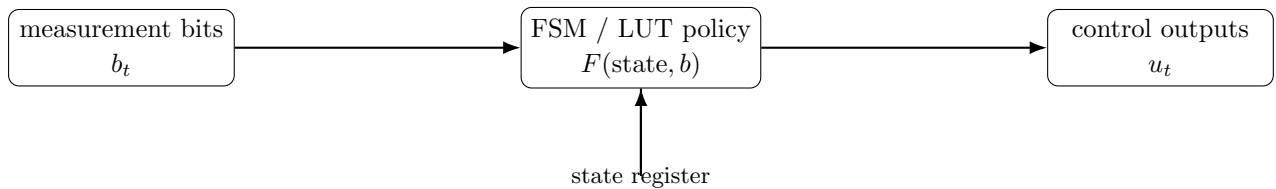

\subsection{Lab: classical Pauli controller on iCEstick}
\noindent\textbf{Goal.}
Implement the simplest ``Pauli controller'' driven by two switch inputs:
\[
(b_0,b_1)\ \mapsto\ (u_X,u_Z),
\]
where \(u_X,u_Z\in\{0,1\}\) control two LEDs representing whether to apply an \(X\) or \(Z\) correction.

\medskip
\noindent\textbf{Design spec (truth-table style).}
We interpret:
\[
(b_0,b_1) =
\begin{cases}
	(0,0) & \text{no correction},\\
	(1,0) & X\ \text{correction},\\
	(0,1) & Z\ \text{correction},\\
	(1,1) & XZ\ \text{(both)}.
\end{cases}
\]
So define:
\[
u_X := b_0,\qquad u_Z := b_1.
\]
This is intentionally trivial: the purpose is to build the full input hygiene and timing discipline
(synchronize + debounce + combinational policy + visible outputs).

\medskip
\noindent\textbf{I/O mapping example (edit to your board constraints).}
\begin{itemize}
	\item inputs: \texttt{SW0}, \texttt{SW1} (or buttons),
	\item outputs: \texttt{LED0} = \(u_X\), \texttt{LED1} = \(u_Z\),
	\item clock: iCEstick 12\,MHz oscillator.
\end{itemize}

\medskip
\noindent\textbf{Core RTL skeleton (portable).}
\begin{verbatim}
	module pauli_controller (
	input  wire clk,
	input  wire sw0_async,
	input  wire sw1_async,
	output wire ledX,
	output wire ledZ
	);
	// 1) sync (2-flop)
	reg sw0_q1, sw0_q2, sw1_q1, sw1_q2;
	always @(posedge clk) begin
	sw0_q1 <= sw0_async; sw0_q2 <= sw0_q1;
	sw1_q1 <= sw1_async; sw1_q2 <= sw1_q1;
	end
	
	// 2) debounce (simple: require N stable samples)
	// For appendix pedagogy: treat sw*_q2 as "clean enough" or add a counter-based filter.
	
	// 3) policy (truth table)
	assign ledX = sw0_q2; // u_X
	assign ledZ = sw1_q2; // u_Z
	endmodule
\end{verbatim}

\medskip
\noindent\textbf{Optional extension (still not a project).}
Add a one-shot pulse output:
if \((b_0,b_1)\neq(0,0)\) then emit a 1-cycle \texttt{apply} pulse.
This forces you to introduce a tiny FSM (edge detect / pulse stretch).

\subsection{Exercises (with detailed solutions)}
\begin{exercise}[Why synchronizers are necessary]
	Explain why an asynchronous switch input can cause metastability if sampled directly by a flip-flop.
	What does a 2-flop synchronizer do, and what does it \emph{not} guarantee?
\end{exercise}
\noindent\textbf{Solution.}
If an asynchronous signal changes close to the sampling clock edge, the input violates setup/hold time and the flip-flop may enter a metastable analog state, taking an unbounded (though typically short) time to resolve to 0 or 1. A 2-flop synchronizer reduces the probability that metastability propagates into downstream logic: the first flop may go metastable, but by the next clock edge it is very likely to have resolved before the second flop samples it. It does not guarantee correctness of the sampled value at the exact transition instant (you may sample the ``old'' or ``new'' level), and it does not remove switch bounce; it only mitigates metastability propagation.

\begin{exercise}[Design a debounce filter]
	Design a debounce rule: require an input to be stable for \(N\) consecutive clock cycles before updating the clean bit.
	Give pseudocode for the counter logic.
\end{exercise}
\noindent\textbf{Solution.}
Maintain (i) a candidate value \(c\), (ii) a counter \(k\), and (iii) an output \(b\).
On each clock:
\begin{itemize}
	\item If the sampled input equals \(c\), increment \(k\) up to \(N\).
	\item If it differs, set \(c\leftarrow\) sampled input and reset \(k\leftarrow 0\).
	\item If \(k=N\), update \(b\leftarrow c\).
\end{itemize}
This ensures \(b\) changes only after \(N\) stable samples. In RTL this is a small counter + compare.

\begin{exercise}[Truth-table vs.\ logic network]
	Let \(u = (b_0\wedge \neg b_1)\ \vee\ (b_1\wedge \neg b_0)\).
	(i) Identify the Boolean function. (ii) Write its truth table.
\end{exercise}
\noindent\textbf{Solution.}
(i) The expression is XOR: \(u=b_0\oplus b_1\).
(ii) Truth table:
\[
\begin{array}{c c | c}
	b_0 & b_1 & u\\ \hline
	0 & 0 & 0\\
	0 & 1 & 1\\
	1 & 0 & 1\\
	1 & 1 & 0
\end{array}
\]

\begin{exercise}[FSM as a noise-robust policy]
	You receive a bit \(b_t\) each cycle that may glitch. Design an FSM that outputs \(u_t=1\) only if \(b_t=1\) for three consecutive cycles.
	Give the state diagram in words.
\end{exercise}
\noindent\textbf{Solution.}
Use four states representing the current run length of consecutive ones: \(S0,S1,S2,S3\).
Initialize in \(S0\).
On input \(b_t=1\): \(S0\to S1\to S2\to S3\) and stay in \(S3\).
On input \(b_t=0\): go to \(S0\) from any state.
Output \(u_t=1\) iff in \(S3\).
This implements a 3-cycle majority-by-stability rule.

\subsection{Integration notes}
\noindent\textbf{Where this appears later.}
Everything in this section reappears in more advanced settings:
\begin{itemize}
	\item syndrome bits \(\leftrightarrow\) switch bits,
	\item decoding decision \(\leftrightarrow\) control output,
	\item framing + FIFO \(\leftrightarrow\) safe streaming,
	\item FSM phases \(\leftrightarrow\) bounded-pass algorithms.
\end{itemize}

\medskip
\noindent\textbf{Two practical warnings.}
\begin{itemize}
	\item Never mix long text inside TikZ nodes without \texttt{align=center} (or explicit \texttt{text width});
	otherwise you will trigger ``Something's wrong--perhaps a missing \textbackslash item''-style errors from accidental list parsing.
	\item If you want a \texttt{remark} environment, define it in the preamble via \texttt{amsthm}:
	\begin{verbatim}
		\usepackage{amsthm}
		\newtheorem{remark}{Remark}[section]
	\end{verbatim}
\end{itemize}
	
\section{Quantum Arithmetic: Discretizing the Phase}
\label{sec:phase-arith}

\subsection{Big picture: from smooth phase to clocked integer arithmetic}
A large fraction of ``quantum control'' and ``quantum algorithms'' reduces to one recurring task:
\emph{track and update phases} under a clock.

\medskip
\noindent\textbf{Physics side.}
Quantum evolution accumulates phase continuously:
\[
\ket{\psi(t)} \mapsto e^{-i\omega t}\ket{\psi(0)}.
\]
Measurement and control, however, happen in discrete rounds (shots, cycles, pulses). So we need
a discrete-time representation of phase that fits digital logic.

\medskip
\noindent\textbf{Hardware side.}
Digital systems represent phase by integers and update it once per clock:
\[
\phi_{t+1} = \phi_t + \Delta \phi \quad (\text{mod } 2\pi).
\]
On an FPGA, this becomes:
\[
\Phi_{t+1} = \Phi_t + K \quad (\text{mod } 2^N),
\]
where \(\Phi_t\) is an \(N\)-bit register and \(K\) is an \(N\)-bit increment.

\medskip
\noindent\textbf{Why this belongs in a quantum chapter.}
The same arithmetic underlies:
\begin{itemize}
	\item phase kickback and eigenphase readout (QPE),
	\item controlled rotations in Shor-type circuits,
	\item classical feed-forward phase corrections (teleportation, stabilizers),
	\item digital synthesis of control waveforms (DDS),
	\item phase-tracking estimators used in metrology and QFI-driven optimization.
\end{itemize}

\subsection{Number systems for quantum phases (fixed-point on a circle)}
\subsubsection{Two equivalent ways to encode a phase}
A phase is an angle on the circle \(S^1\). Digitally, we pick a resolution \(N\) and encode:
\[
\Phi \in \{0,1,\dots,2^N-1\}
\qquad\leftrightarrow\qquad
\phi(\Phi) := 2\pi \frac{\Phi}{2^N} \in [0,2\pi).
\]
Then:
\[
\Phi_1 + \Phi_2 \pmod{2^N}
\quad \leftrightarrow \quad
\phi_1 + \phi_2 \pmod{2\pi}.
\]

\subsubsection{Fixed-point intuition}
Think of \(\Phi\) as a fixed-point number with binary point \emph{wrapped on a circle}.
The least significant bit corresponds to the smallest phase step:
\[
\Delta \phi_{\min} = \frac{2\pi}{2^N}.
\]
So \(N\) sets your \emph{phase resolution}. The tradeoff is standard:
\[
\text{larger }N \Rightarrow \text{finer phase} \Rightarrow \text{more bits/area/timing}.
\]

\subsubsection{Signed vs.\ unsigned (when negative increments are useful)}
Unsigned modular arithmetic is simplest:
\[
\Phi_{t+1} = \Phi_t + K \ (\mathrm{mod}\ 2^N).
\]
To represent negative increments, interpret the same \(N\)-bit word in two's complement.
Then \(K=-1\) means ``subtract one LSB'' modulo \(2^N\).

\subsection{Visualization I: continuous phase vs.\ quantized phase}
\begin{figure}[t]
	\centering
	\begin{tikzpicture}[
		font=\small,
		arr/.style={-Latex, thick},
		dot/.style={circle, fill, inner sep=1.2pt},
		qdot/.style={circle, fill=black, inner sep=1.2pt},
		lab/.style={font=\footnotesize}
		]
		\begin{scope}
			\node at (0,2.35) {\small continuous phase \(\phi\in[0,2\pi)\)};
			\draw (0,0) circle (1.55cm);
			\draw[arr] (0,0) -- (1.25,0.62);
			\node[lab] at (1.05,0.95) {\(\phi\)};
			\node[lab] at (0,-2.05) {any angle allowed};
		\end{scope}
		
		\begin{scope}[xshift=6.2cm]
			\node at (0,2.35) {\small quantized phase \(\Phi\in\{0,\dots,2^N\!-\!1\}\)};
			\draw (0,0) circle (1.55cm);
			
			\def\m{16}
			\foreach \k in {0,...,15} {
				\path ({1.55*cos(360*\k/\m)},{1.55*sin(360*\k/\m)}) node[qdot] {};
			}
			
			\def\kh{3}
			\path ({1.55*cos(360*\kh/\m)},{1.55*sin(360*\kh/\m)}) node[dot] {};
			\draw[arr] (0,0) -- ({1.15*cos(360*\kh/\m)},{1.15*sin(360*\kh/\m)});
			\node[lab] at (1.2,-2.05) {step size \(\Delta\phi_{\min}=2\pi/2^N\)};
		\end{scope}
	\end{tikzpicture}
	\caption{Continuous phase is a circle. Quantized phase replaces the circle by \(2^N\) discrete points and uses modular integer arithmetic.}
	\label{fig:phase-quantization}
\end{figure}
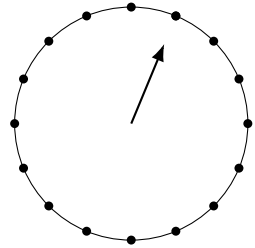

\subsection{Adders as phase composition engines}
\subsubsection{Core operation: modular addition}
Phase composition is addition mod \(2\pi\). In \(N\)-bit phase words:
\[
\Phi_{\text{out}} = (\Phi_a + \Phi_b)\bmod 2^N.
\]
So the \textbf{adder} is a phase engine. Everything else is ``where do \(\Phi_a,\Phi_b\) come from?''

\subsubsection{Phase accumulator (the one block to remember)}
A phase accumulator updates:
\[
\Phi_{t+1} = \Phi_t + K \pmod{2^N}.
\]
Interpretation:
\begin{itemize}
	\item \(K\) sets the \emph{digital frequency}.
	\item \(\Phi_t\) is the running phase.
	\item wrap-around is natural (it's the circle).
\end{itemize}
If the clock is \(f_{\text{clk}}\), then the output frequency is:
\[
f_{\text{out}} = \frac{K}{2^N} f_{\text{clk}}.
\]
(This is the classical DDS fact; we use it here as phase arithmetic intuition.)

\subsubsection{Carry is not a bug; it's the wrap}
In modular arithmetic, overflow carry means ``we crossed \(2\pi\)''.
You can keep the carry as an optional event flag:
\[
c_t = \mathbf{1}\{\Phi_t + K \ge 2^N\},
\]
useful for timing markers.

\subsection{Multiplexers and measurement-based feedback}
\subsubsection{Feedback is just selecting an increment}
Measurement-based phase updates (e.g.\ feed-forward corrections) often look like:
\[
K =
\begin{cases}
	K_0, & b=0,\\
	K_1, & b=1,
\end{cases}
\qquad
\Rightarrow\qquad
K = \mathrm{MUX}(b;\,K_0,K_1).
\]
This is exactly a multiplexer controlled by a bit \(b\) (a measurement outcome, a syndrome bit,
or a decision bit from a controller).

\subsubsection{Typical pattern: correction-by-lookup}
If a measurement result \(b\) suggests a phase correction \(\delta_b\),
store \(\Delta\Phi_b \approx 2^N \delta_b/(2\pi)\) and select it by MUX.

\begin{figure}[t]
	\centering
	\begin{tikzpicture}[
		font=\small,
		box/.style={draw, rounded corners, minimum width=3.2cm, minimum height=0.95cm, align=center},
		arr/.style={-Latex, thick},
		node distance=1.0cm and 1.4cm
		]
		\node[box] (meas) {measurement bit\\\(b\)};
		\node[box, right=3.8cm of meas] (mux) {MUX\\select \(K\)};
		\node[box, right=3.8cm of mux] (acc) {phase accumulator\\\(\Phi\leftarrow \Phi+K\)};
		\node[box, below=1.2cm of mux, minimum width=7.2cm] (tbl) {table: \(b\mapsto K_b\) (phase corrections)};
		\draw[arr] (meas) -- (mux);
		\draw[arr] (mux) -- (acc);
		\draw[arr] (tbl.north) -- (mux.south);
	\end{tikzpicture}
	\caption{Measurement-based feedback: a bit selects a phase update (MUX), then an adder applies it (accumulator).}
	\label{fig:mux-feedback}
\end{figure}
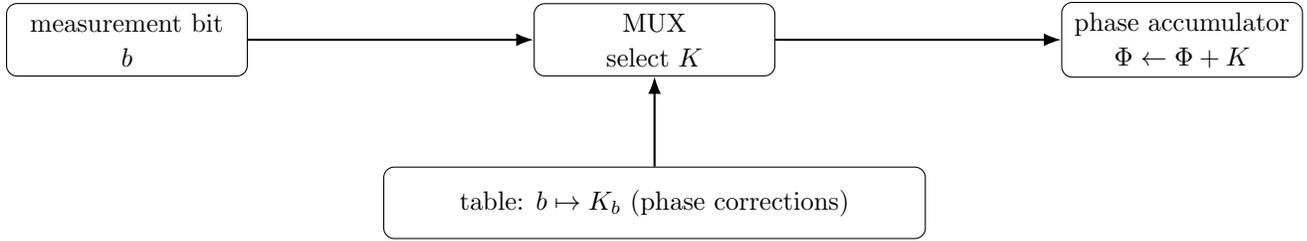

\subsection{Lab: iCEstick phase accumulator (DDS prelude) with LEDs}
\noindent\textbf{Goal.}
Build an \(N\)-bit phase accumulator on iCEstick and visualize phase progression on LEDs.
This is not ``DDS audio'' yet; it is the \emph{phase arithmetic core}.

\medskip
\noindent\textbf{Minimal specification.}
\begin{itemize}
	\item Choose \(N\in\{8,12,16\}\) (start with \(N=8\)).
	\item Update \(\Phi\) every clock: \(\Phi\leftarrow \Phi+K\).
	\item Display the top bits \(\Phi[N-1:N-5]\) on 5 LEDs as a ``rotating phase bar''.
\end{itemize}

\medskip
\noindent\textbf{Why top bits?}
Top bits change slowly and are visible; low bits toggle too fast for human eyes.

\medskip
\noindent\textbf{Core RTL skeleton.}
\begin{verbatim}
	module phase_accum #(
	parameter integer N = 8
	)(
	input  wire             clk,
	input  wire             rst,
	input  wire [N-1:0]     K,
	output wire [4:0]       leds
	);
	reg [N-1:0] phi;
	always @(posedge clk) begin
	if (rst) phi <= {N{1'b0}};
	else     phi <= phi + K;   // wraps naturally mod 2^N
	end
	
	// show MSBs
	assign leds = phi[N-1 -: 5];
	endmodule
\end{verbatim}

\medskip
\noindent\textbf{Concrete LED interpretation.}
If LEDs show MSBs, you are watching a quantized pointer move around the circle.
Different \(K\) values change the rotation speed.

\subsection{Visualization II: DDS prelude (phase \(\to\) waveform)}
A DDS adds one more idea: map phase to amplitude by a function, typically sine.
Digitally, that is a lookup table:
\[
A_t = \sin(\phi(\Phi_t)) \approx \texttt{SIN\_LUT}[\Phi_t[\text{top bits}]].
\]

\begin{figure}[t]
	\centering
	\begin{tikzpicture}[
		font=\small,
		box/.style={draw, rounded corners, minimum width=3.3cm, minimum height=0.95cm, align=center},
		arr/.style={-Latex, thick},
		node distance=1.0cm and 1.3cm
		]
		\node[box] (acc) {phase accumulator\\\(\Phi_{t+1}=\Phi_t+K\)};
		\node[box, right=3.9cm of acc] (lut) {LUT\\phase \(\to\) amplitude};
		\node[box, right=3.9cm of lut] (dac) {output\\(LED/PWM/DAC)};
		
		\draw[arr] (acc) -- node[above]{\footnotesize phase bits} (lut);
		\draw[arr] (lut) -- node[above]{\footnotesize amplitude} (dac);
		
		\node[align=left, font=\footnotesize] at ($(lut.south)+(0,-0.8)$) {
			simplest DDS: \(\Phi\) rotates on a circle; LUT projects it to a waveform
		};
	\end{tikzpicture}
	\caption{DDS prelude: phase arithmetic (adder) plus a phase-to-amplitude map (LUT) gives a waveform generator.}
	\label{fig:dds-prelude}
\end{figure}

\subsection{Worked examples (concrete numbers + interpretation)}
Assume \(f_{\text{clk}}=12\,\mathrm{MHz}\) (common iCEstick clock).

\subsubsection{Example 1: \(N=8\), \(K=1\)}
\[
f_{\text{out}} = \frac{1}{256}\cdot 12\text{ MHz} \approx 46.875\text{ kHz}.
\]
Interpretation: the phase completes one full turn every \(256\) clocks.

\subsubsection{Example 2: \(N=8\), \(K=64\)}
\[
f_{\text{out}} = \frac{64}{256}\cdot 12\text{ MHz} = 3\text{ MHz}.
\]
Interpretation: the phase jumps by a quarter-turn each clock. MSBs will appear very fast.

\subsubsection{Example 3: \(N=16\), \(K=1000\)}
\[
f_{\text{out}} = \frac{1000}{65536}\cdot 12\text{ MHz}
\approx 183.105\text{ kHz}.
\]
Interpretation: increasing \(N\) gives finer frequency granularity: one LSB is
\[
\Delta f = \frac{f_{\text{clk}}}{2^N}.
\]
For \(N=8\), \(\Delta f\approx 46.875\) kHz; for \(N=16\), \(\Delta f\approx 183.105\) Hz.

\subsubsection{Example 4: measurement-based correction via MUX}
Let \(b\in\{0,1\}\) and choose
\[
K = \begin{cases}
	K_0=1000, & b=0,\\
	K_1=1200, & b=1.
\end{cases}
\]
Then the clocked system switches frequencies immediately based on \(b\).
Interpretation: this is the digital skeleton of feed-forward phase control.

\subsection{Integration notes: QFI (Ch.~\ref{sec:qfim}) and variational optimization (Ch.~\ref{sec:qng})}
\subsubsection{Why discretization matters for gradients}
If a variational parameter is an angle \(\theta\), hardware often represents \(\theta\) in finite resolution.
Then the parameter update is quantized:
\[
\theta \leftarrow \theta + \eta \cdot \Delta\theta,
\qquad
\Delta\theta \in \left\{\frac{2\pi}{2^N}\mathbb{Z}\right\}.
\]
This introduces a floor on achievable step size and affects convergence.

\subsubsection{QFI viewpoint (conceptual)}
QFI-based methods treat parameters as living on a geometry.
Digitally, you approximate that geometry on a grid.
The key engineering takeaway:
\begin{itemize}
	\item choose \(N\) large enough that quantization noise does not dominate your update,
	\item measure and log p99 timing if the update must be real-time,
	\item prefer adders + LUTs + bounded FSMs when you need predictable latency.
\end{itemize}

\medskip
\noindent\textbf{Practical bridge statement.}
The same ``phase word'' \(\Phi\) can represent:
\begin{itemize}
	\item a \emph{control phase} (hardware waveform timing),
	\item a \emph{gate parameter} (variational angle),
	\item an \emph{estimated eigenphase} (QPE output),
\end{itemize}
so phase arithmetic is a shared substrate across algorithms and hardware.
	
\section{Sequencing the Experiment: FSM as the Circuit (iCEstick Edition)}
\label{sec:appendix-fsm}

\subsection{Big picture: the FPGA turns ``time'' into a deterministic program}
In many quantum experiments, the ``quantum circuit'' is only half of the story.
The other half is the \emph{classical sequence} that decides:
\[
\text{when to prepare} \;\to\; \text{when to drive} \;\to\; \text{when to wait} \;\to\;
\text{when to measure} \;\to\; \text{what to do next}.
\]
On an FPGA, this sequence is implemented as a \textbf{finite-state machine (FSM)} plus
a small set of counters and registers.

\medskip
\noindent\textbf{Key engineering viewpoint.}
An FSM is a \emph{time-ordered circuit}:
\[
\text{combinational logic (decide next)}\;+\;\text{registers (remember state)}\;+\;\text{clock (advance time)}.
\]
So ``control software'' becomes a circuit whose behavior is deterministic, cycle-by-cycle, and observable.

\medskip
\noindent\textbf{Why this matters in quantum control.}
If you want repeatable experiments (shots), bounded latency (feedback), and exact timing (phases),
you want a clocked program, not a best-effort CPU loop.

\subsection{Registers and classical memory of quantum events (iCEstick mapping)}
\subsubsection{What counts as ``memory'' in a quantum experiment}
A quantum experiment produces \emph{classical events}:
preparation done, pulse launched, measurement arrived, threshold crossed, etc.
These are stored as \textbf{registers}.

\medskip
\noindent\textbf{Minimal register set (template).}
\begin{itemize}
	\item \texttt{state} : current FSM state (encoded as small integer).
	\item \texttt{t} : cycle counter within a shot (fine time).
	\item \texttt{shot\_id} : which repetition we are on.
	\item \texttt{meas\_bit} : last measurement bit (from a pin/UART/FIFO).
	\item \texttt{acc} : statistics accumulator (counts, histogram bins).
	\item \texttt{flags} : overflow/timeout/error/abort indicators.
\end{itemize}

\subsubsection{iCEstick I/O mapping mindset (abstract)}
Do \emph{not} tie the appendix to one physical wiring.
Instead, treat the iCEstick as providing:
\begin{itemize}
	\item \textbf{inputs}: switches / pins / UART bytes / FIFO-valid events,
	\item \textbf{outputs}: LEDs / pins / UART reports / debug strobes.
\end{itemize}
Then you can swap ``measurement source'' without rewriting the control program.

\subsubsection{A practical convention: one-bit events + tagged payloads}
A useful split is:
\begin{itemize}
	\item fast 1-bit events (e.g.\ \texttt{meas\_ready}, \texttt{threshold\_hit}),
	\item slower multi-bit payload (e.g.\ \texttt{meas\_value[W-1:0]}).
\end{itemize}
This avoids the common failure mode: wide buses that are never stable when you sample them.

\subsection{Finite state machines as time-ordered circuits}
\subsubsection{Canonical FSM skeleton}
An FSM consists of:
\begin{itemize}
	\item \textbf{state register} updated on clock,
	\item \textbf{next-state logic} computed combinationally,
	\item \textbf{outputs} that depend on (state, inputs) (Moore or Mealy style).
\end{itemize}

\medskip
\noindent\textbf{Two-process style (recommended).}
\begin{verbatim}
	always @(posedge clk) begin
	if (rst) state <= S_IDLE;
	else     state <= state_next;
	end
	
	always @* begin
	state_next = state;
	// defaults for outputs
	case (state)
	S_IDLE:  if (start) state_next = S_PREP;
	...
	endcase
	end
\end{verbatim}

\subsubsection{Moore vs.\ Mealy in control loops}
\begin{itemize}
	\item \textbf{Moore:} outputs depend only on state (clean timing, fewer glitches).
	\item \textbf{Mealy:} outputs depend on state and input (faster reaction, but must be careful).
\end{itemize}
For quantum timing, Moore often wins. For real-time feedback, Mealy is sometimes necessary.

\subsubsection{Visualization: FSM as a circuit}
\begin{figure}[t]
	\centering
	\begin{tikzpicture}[
		font=\small,
		box/.style={draw, rounded corners, minimum width=3.6cm, minimum height=1.0cm, align=center},
		arr/.style={-Latex, thick},
		node distance=1.2cm and 1.6cm
		]
		\node[box] (comb) {combinational logic\\\footnotesize (next state + outputs)};
		\node[box, right=4.6cm of comb] (reg) {registers\\\footnotesize (state + memory)};
		\node[box, below=1.3cm of comb, minimum width=3.6cm] (in) {inputs\\\footnotesize (meas, start, flags)};
		\node[box, below=1.3cm of reg, minimum width=3.6cm] (out) {outputs\\\footnotesize (pulse, gate, report)};
		
		\draw[arr] (reg.west) -- (comb.east);
		\draw[arr] (comb.east) -- (reg.west);
		\draw[arr] (in.north) -- (comb.south);
		\draw[arr] (comb.south east) -- (out.north west);
		
		\node[align=center, font=\footnotesize] at ($(reg.north)+(0,0.65)$) {clock ticks advance time};
	\end{tikzpicture}
	\caption{FSM viewed as a circuit: registers store the ``present''; combinational logic computes the ``next''.}
	\label{fig:fsm-circuit}
\end{figure}
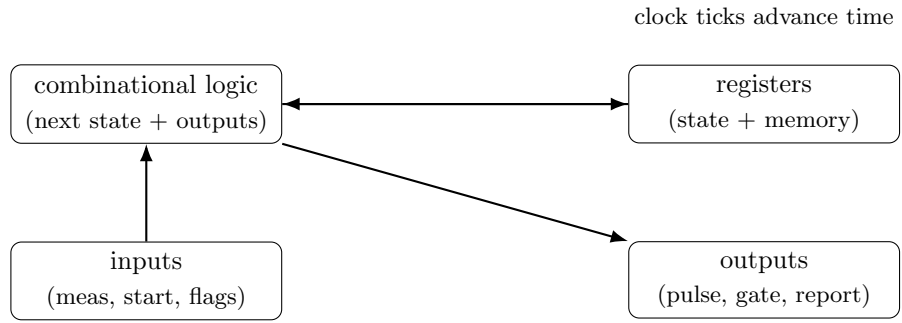

\subsection{Counters and repetition control (shots and online statistics)}
\subsubsection{Two time scales: within-shot and across-shots}
Experiments naturally have:
\begin{itemize}
	\item \textbf{fine time} (cycle count within a shot): pulse widths, waits, alignments,
	\item \textbf{coarse time} (shot index): repetitions for statistics.
\end{itemize}
So you typically use two counters:
\[
t \in \{0,\dots,T_{\max}\},\qquad \texttt{shot\_id}\in\{0,\dots,S-1\}.
\]

\subsubsection{Online statistics template}
A minimal on-device estimator (for one-bit outcomes) is:
\[
\texttt{ones} \leftarrow \texttt{ones} + \texttt{meas\_bit}, \qquad
\texttt{shots} \leftarrow \texttt{shots} + 1.
\]
Then:
\[
\widehat{p} = \frac{\texttt{ones}}{\texttt{shots}}.
\]
This is enough to demonstrate:
\begin{itemize}
	\item calibration curves (vary a parameter, estimate probability),
	\item stability monitoring (drift detection),
	\item repeatability (same parameter $\Rightarrow$ same statistics).
\end{itemize}

\subsubsection{Histogram estimator (multi-bin)}
If a measurement gives a small integer \(y\in\{0,\dots,B-1\}\), maintain:
\[
\texttt{hist}[y] \leftarrow \texttt{hist}[y] + 1.
\]
This becomes a compact ``experiment log'' you can export later.

\subsubsection{A concrete safety rule}
Always bound time in each state. Use timeouts:
\[
\texttt{if }t>\tau_{\max}\texttt{ then set error flag and go to ERROR.}
\]
This prevents dead states when measurement never arrives.

\subsection{Lab: Ramsey-style experiment controller (fully observable)}
\noindent\textbf{Goal.}
Implement a Ramsey-style \emph{sequence controller} that you can demonstrate with only:
(1) a clock, (2) a one-bit measurement input, (3) LEDs (and optional UART).

\medskip
\noindent\textbf{Abstract Ramsey timing (control-only viewpoint).}
A minimal Ramsey experiment follows:
\[
\text{PREP} \to \text{PULSE}_1 \to \text{WAIT}(\tau) \to \text{PULSE}_2 \to \text{MEASURE} \to \text{ACCUMULATE}.
\]
We do not assume an actual qubit is connected; instead we treat measurement as an input bit.
The point is: \emph{sequence, timing, and observability}.

\subsubsection{FSM states}
\begin{itemize}
	\item \texttt{S\_IDLE}: wait for \texttt{start}.
	\item \texttt{S\_PREP}: clear registers, reset counters.
	\item \texttt{S\_P1}: assert output \texttt{pulse\_en} for \(T_1\) cycles.
	\item \texttt{S\_WAIT}: wait \(\tau\) cycles (phase accumulation happens here conceptually).
	\item \texttt{S\_P2}: second pulse for \(T_2\) cycles.
	\item \texttt{S\_MEAS}: sample \texttt{meas\_bit} when \texttt{meas\_valid}.
	\item \texttt{S\_ACC}: update counters, decide next shot or done.
	\item \texttt{S\_DONE}: signal completion.
	\item \texttt{S\_ERR}: fail-closed.
\end{itemize}

\subsubsection{Fully observable outputs (LED debugging contract)}
A robust debug scheme is:
\begin{itemize}
	\item LEDs show \texttt{state} (e.g.\ top 3--5 bits),
	\item one LED pulses when a shot ends,
	\item one LED indicates \texttt{meas\_valid} handshake,
	\item one LED indicates \texttt{ERR} or timeout.
\end{itemize}
This makes the controller demonstrable without oscilloscopes.

\subsubsection{Visualization: Ramsey sequencing timeline}
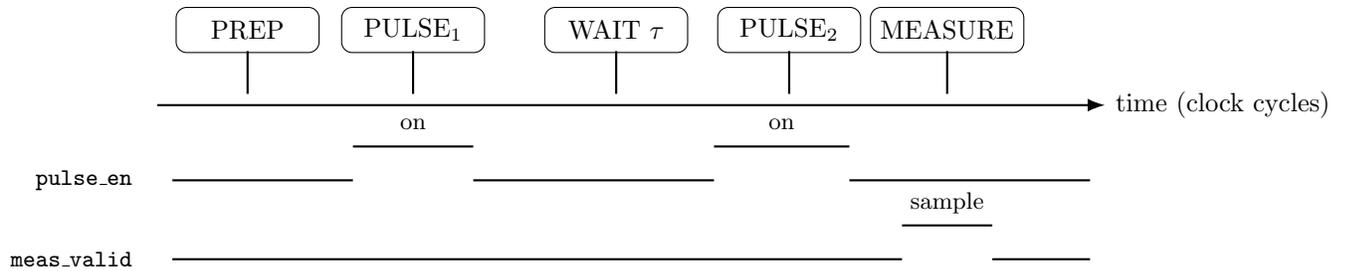
\begin{figure}[t]
	\centering
	\begin{tikzpicture}[
		font=\small,
		arr/.style={-Latex, thick},
		ln/.style={thick},
		lab/.style={font=\footnotesize},
		box/.style={draw, rounded corners, minimum width=1.9cm, minimum height=0.6cm, align=center}
		]
		\draw[arr] (0,0) -- (12.6,0) node[right] {\small time (clock cycles)};
		\node[box] (prep) at (1.2,1.0) {PREP};
		\node[box] (p1)   at (3.4,1.0) {PULSE$_1$};
		\node[box] (wait) at (6.1,1.0) {WAIT \(\tau\)};
		\node[box] (p2)   at (8.4,1.0) {PULSE$_2$};
		\node[box] (meas) at (10.5,1.0) {MEASURE};
		
		\foreach \x/\n in {1.2/prep,3.4/p1,6.1/wait,8.4/p2,10.5/meas}{
			\draw[ln] (\x,0.15) -- (\x,0.72);
		}
		
		\node[lab, anchor=east] at (-0.2,-1.0) {\texttt{pulse\_en}};
		\draw[ln] (0.2,-1.0) -- (2.6,-1.0);
		\draw[ln] (2.6,-0.55) -- (4.2,-0.55);
		\draw[ln] (4.2,-1.0) -- (7.4,-1.0);
		\draw[ln] (7.4,-0.55) -- (9.2,-0.55);
		\draw[ln] (9.2,-1.0) -- (12.4,-1.0);
		
		\node[lab] at (3.4,-0.25) {on};
		\node[lab] at (8.3,-0.25) {on};
		
		\node[lab, anchor=east] at (-0.2,-2.05) {\texttt{meas\_valid}};
		\draw[ln] (0.2,-2.05) -- (9.9,-2.05);
		\draw[ln] (9.9,-1.6) -- (11.1,-1.6);
		\draw[ln] (11.1,-2.05) -- (12.4,-2.05);
		
		\node[lab] at (10.5,-1.3) {sample};
	\end{tikzpicture}
	\caption{A Ramsey-style controller is a timed sequence: assert pulses in two windows separated by a programmable wait \(\tau\), then measure and accumulate statistics.}
	\label{fig:ramsey-timeline}
\end{figure}

\subsection{Worked examples (concrete scenarios you can demonstrate)}
These examples assume you can toggle parameters (switches or UART) and observe LEDs.

\subsubsection{Scenario A: timing demonstration (no real measurement)}
\textbf{Setup.} Force \texttt{meas\_bit=0} and \texttt{meas\_valid=1} at MEASURE state.

\textbf{What you demonstrate.}
\begin{itemize}
	\item The FSM progresses through states deterministically.
	\item Pulse windows have correct lengths \(T_1,T_2\).
	\item WAIT lasts exactly \(\tau\) cycles.
	\item Shot counter increments once per sequence.
\end{itemize}

\textbf{Observable outputs.}
\begin{itemize}
	\item state-coded LEDs cycle through PREP/P1/WAIT/P2/MEAS/ACC,
	\item a ``shot-done'' LED blips each time ACC completes.
\end{itemize}

\subsubsection{Scenario B: measurement-gated progress (valid/ready discipline)}
\textbf{Setup.} Hold \texttt{meas\_valid=0} for a while, then assert it.

\textbf{What you demonstrate.}
\begin{itemize}
	\item MEASURE state waits safely for \texttt{meas\_valid}.
	\item A timeout triggers ERR (fail-closed) if measurement never arrives.
\end{itemize}

\textbf{Engineering lesson.}
This is the same pattern as streaming interfaces: do nothing until the contract condition holds.

\subsubsection{Scenario C: online statistics curve (toy fringe)}
\textbf{Setup.} Let \texttt{meas\_bit} depend on \(\tau\) by a toy rule:
\[
\Pr(\texttt{meas\_bit}=1) \approx \frac{1+\cos(2\pi \tau/T)}{2}
\]
implemented by a small LUT indexed by \(\tau\).

\textbf{What you demonstrate.}
\begin{itemize}
	\item sweeping \(\tau\) changes the estimated \(\widehat{p}(\tau)\),
	\item the accumulator converges as shots increase,
	\item you can export \((\tau,\widehat{p})\) as a calibration-like dataset.
\end{itemize}
This is a ``Ramsey fringe'' conceptually, even though the source is synthetic.

\subsubsection{Scenario D: fail-closed policy under corruption}
\textbf{Setup.} Randomly flip \texttt{meas\_valid} or misalign a payload tag (if you use UART framing).

\textbf{What you demonstrate.}
\begin{itemize}
	\item controller raises a flag and enters ERR rather than emitting nonsense,
	\item resynchronization requires explicit reset/start.
\end{itemize}

\subsection{Integration notes (measurement, Grover, and QKD as FSMs)}
\subsubsection{General integration pattern}
Most protocols become:
\[
\text{(prepare)} \to \text{(apply controlled sequence)} \to \text{(measure)} \to
\text{(classical update)} \to \text{repeat}.
\]
So the FSM view is not an appendix curiosity; it is the control backbone.

\subsubsection{Measurement as an event stream}
In later chapters:
\begin{itemize}
	\item stabilizer measurements are a syndrome stream,
	\item QPE produces a multi-bit phase estimate across rounds,
	\item QKD produces raw key bits + basis info + sifting decisions.
\end{itemize}
All of these can be integrated by the same principles:
\begin{itemize}
	\item explicit \textbf{handshake} for when data is valid,
	\item explicit \textbf{timeouts} and fail-closed states,
	\item explicit \textbf{logging registers} for observability,
	\item bounded per-round work (for real-time guarantees).
\end{itemize}

\subsubsection{Grover as an FSM (control-only view)}
Grover iterations are repeated blocks:
\[
\underbrace{\text{oracle}}_{\text{controlled pulses}} \;\to\;
\underbrace{\text{diffusion}}_{\text{controlled pulses}} \;\to\;
\text{repeat }R\text{ times}.
\]
That is literally:
\[
\texttt{for }i=1..R:\ \texttt{ORACLE; DIFFUSE;}
\]
and on FPGA it becomes:
\begin{itemize}
	\item a loop counter \(i\),
	\item two macro-states ORACLE and DIFFUSE,
	\item optional measurement at the end (or mid-circuit for adaptive variants).
\end{itemize}

\subsubsection{QKD as an FSM (control-only view)}
Even BB84/E91 can be seen as:
\[
\text{emit/prepare} \to \text{measure} \to \text{record tags (basis)} \to
\text{classical post-processing states}.
\]
The key point for this appendix:
the \emph{security logic becomes explicit control states and counters} that must be correct and observable.
	
\section{Datapath and Custom Instruction Sets for QPU Control}
\label{sec:appendix-isa}

\subsection{Big picture: from algorithms to deterministic control}
Quantum-control workloads look like ``algorithms'', but at runtime they are really
\emph{deterministic schedules} plus \emph{event-triggered updates}:
\[
\text{(time schedule)}\;+\;\text{(measurement events)}\;\Rightarrow\;\text{(next pulses / frame updates)}.
\]
A general-purpose CPU is optimized for flexibility and average-case performance.
A quantum controller is optimized for:
\begin{itemize}
	\item \textbf{determinism:} bounded and repeatable latency,
	\item \textbf{tight timing:} cycle-level pulse scheduling,
	\item \textbf{streaming I/O:} continuous measurement/event ingestion,
	\item \textbf{fail-closed behavior:} explicit flags and safe aborts,
	\item \textbf{high fanout control:} update many channels quickly (even if each update is simple).
\end{itemize}

\medskip
\noindent\textbf{Control-architecture slogan.}
\[
\text{A good quantum controller is a datapath with just enough instructions.}
\]

\subsection{Why general-purpose CPUs are too slow}
This is not about ``GHz''; it's about \emph{worst-case latency} and \emph{I/O shape}.

\subsubsection{Latency is dominated by the control stack, not arithmetic}
A CPU loop often pays:
\begin{itemize}
	\item OS jitter / interrupts,
	\item cache misses / branch mispredicts,
	\item memory hierarchy latency,
	\item bus / driver overhead for I/O,
	\item non-deterministic scheduling on multicore systems.
\end{itemize}
In quantum control (especially with feedback), you care about:
\[
\textbf{p99/p999 latency} \quad \text{and} \quad \textbf{bounded service time per event}.
\]
A custom datapath can guarantee:
\[
\text{event arrives} \Rightarrow \text{decision emitted within } \le L_{\max}\ \text{cycles}.
\]

\subsubsection{The ``tiny work, huge consequence'' regime}
Many control decisions are simple:
\begin{itemize}
	\item toggle a pulse enable,
	\item update a phase register,
	\item increment a counter,
	\item apply a Pauli-frame XOR,
	\item route an event to a FIFO.
\end{itemize}
But they must happen \emph{on time}. This is exactly what FPGAs do well.

\subsection{Datapath design for quantum control}
A datapath is the circuit that executes a restricted set of operations every clock.
The design goal is: \textbf{bounded, composable micro-kernels}.

\subsubsection{Minimal control datapath blocks}
A practical control datapath is built from:
\begin{itemize}
	\item \textbf{Program counter (PC)} and instruction memory (or microcode ROM),
	\item \textbf{Register file} (small, fast state),
	\item \textbf{ALU} for integer ops (add/sub/and/xor/compare),
	\item \textbf{Event input} (valid/ready stream) and a small FIFO,
	\item \textbf{Timer/counters} for schedule timing,
	\item \textbf{Control output registers} (pulse/phase/frame update words),
	\item \textbf{Status/flag block} (overflow, timeout, parity/CRC errors).
\end{itemize}

\subsubsection{Visualization: datapath as ``compute + time + I/O''}
\begin{figure}[t]
	\centering
	\begin{tikzpicture}[
		font=\small,
		arr/.style={-Latex, thick},
		big/.style={
			draw, rounded corners, thick,
			align=center,
			inner xsep=8pt, inner ysep=7pt,
			text width=0.88\linewidth, 
			minimum height=1.05cm
		},
		box/.style={
			draw, rounded corners, thick,
			align=center,
			inner xsep=7pt, inner ysep=6pt,
			text width=0.25\linewidth, 
			minimum height=0.95cm
		},
		node distance=7mm and 10mm
		]
		
		\node[big] (core)
		{Control core = PC + decoder + register file + ALU + branch/compare};
		
		\matrix (mid) [matrix of nodes,
		nodes={box},
		column sep=9mm,
		row sep=0mm,
		below=8mm of core
		]{
			{event in\\[-1pt]\footnotesize (meas/syndrome)} &
			{timers\\[-1pt]\footnotesize (cycle counters)} &
			{control out\\[-1pt]\footnotesize (pulse/frame)} \\
		};
		
		\draw[arr] (mid-1-1.north) -- (core.south west);
		\draw[arr] (mid-1-2.north) -- (core.south);
		\draw[arr] (core.south east) -- (mid-1-3.north);
		
		\node[big, below=8mm of mid] (obs)
		{observability: logs + flags + latency counters (mean/p99)};
		
		\draw[arr] (core.south) -- (obs.north);
		
	\end{tikzpicture}
	\caption{A minimal quantum-control datapath: the ``compute'' is small, but time, I/O, and observability are first-class.}
	\label{fig:ctrl-datapath}
\end{figure}
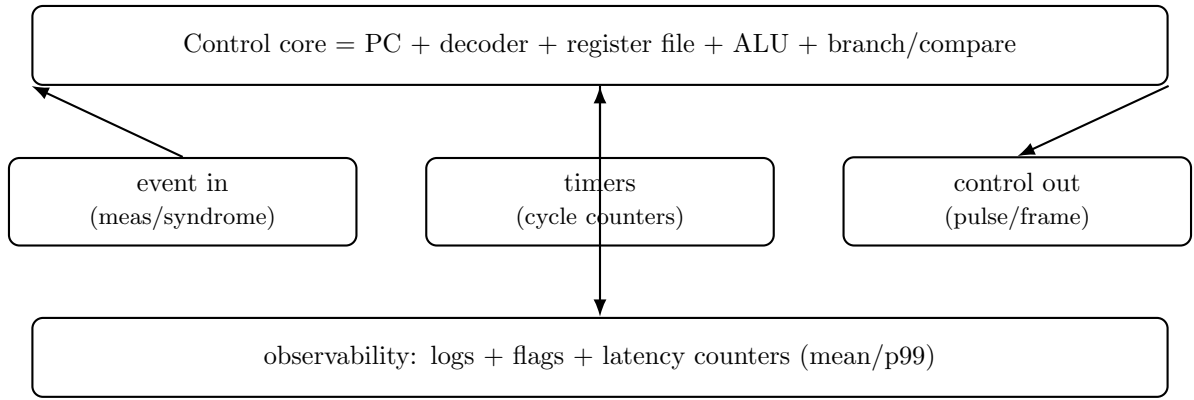

\subsubsection{Streaming event ingestion is part of the datapath}
Treat measurement outcomes as a stream:
\[
(\texttt{tag},\texttt{payload},\texttt{timestamp})_t
\]
and ensure each stage obeys a valid/ready contract.
This avoids silent data loss and makes worst-case behavior analyzable.

\subsection{Single-cycle quantum control units}
A ``single-cycle'' unit means: for a restricted set of control actions,
\emph{one clock tick} is enough to update outputs (or the state that drives outputs).

\subsubsection{When single-cycle makes sense}
Single-cycle is ideal when:
\begin{itemize}
	\item decisions are simple (bitwise / small integer),
	\item the required state fits in registers,
	\item outputs are register-mapped (no long buses),
	\item you need deterministic response to events.
\end{itemize}

\subsubsection{Examples of single-cycle control micro-ops}
\begin{itemize}
	\item \textbf{Pauli-frame update:} \(\texttt{frame} \leftarrow \texttt{frame} \oplus \texttt{mask}\).
	\item \textbf{Phase accumulator step:} \(\texttt{phase} \leftarrow \texttt{phase} + \Delta\).
	\item \textbf{Conditional pulse gate:} \(\texttt{pulse\_en} \leftarrow (\texttt{meas\_bit} \land \texttt{policy})\).
	\item \textbf{Deadline check:} set flag if \(t > D\).
\end{itemize}

\subsubsection{Tradeoff: single-cycle vs.\ multi-cycle}
If you need multiplication/division, large table lookups, or memory-heavy access,
you typically use multi-cycle units (or precompute in LUTs).
For quantum control, it is often better to:
\begin{itemize}
	\item push expensive work to \textbf{offline calibration},
	\item run small \textbf{bounded kernels} online.
\end{itemize}

\subsection{A minimal quantum instruction set}
The goal is not to build ``a CPU''; it is to build \emph{just enough ISA}
to express common control patterns with deterministic bounds.

\subsubsection{Guiding constraints}
\begin{itemize}
	\item \textbf{Fixed, small latency per instruction} (ideally 1 cycle, sometimes few).
	\item \textbf{Memory discipline:} bounded RAM accesses (or explicit wait states).
	\item \textbf{Event-driven operations:} ingest from FIFO without stalling unpredictably.
	\item \textbf{Time primitives:} waits and deadlines are instructions, not software loops.
\end{itemize}

\subsubsection{State model}
Let registers include:
\[
R0,\dots,R_{15},\quad \texttt{PC},\quad \texttt{FLAGS},\quad \texttt{TIMER},\quad \texttt{FRAME}.
\]
And memory-mapped I/O registers:
\[
\texttt{OUT\_PULSE},\ \texttt{OUT\_PHASE},\ \texttt{IN\_EVT},\ \texttt{IN\_STATUS}.
\]

\subsubsection{Instruction categories (a usable ``control ISA'')}
\paragraph{(A) Integer + bitwise (ALU)}
\begin{itemize}
	\item \texttt{ADD rd, ra, rb} \quad\texttt{XOR rd, ra, rb}
	\item \texttt{AND/OR} \quad\texttt{CMP} (sets flags)
	\item \texttt{MOV rd, imm} \quad (small immediate)
\end{itemize}

\paragraph{(B) Time and sequencing}
\begin{itemize}
	\item \texttt{WAIT k} \quad (stall for \(k\) cycles deterministically)
	\item \texttt{READT rd} \quad (read cycle counter)
	\item \texttt{DEADLINE k} \quad (set flag if \(t>k\))
\end{itemize}

\paragraph{(C) Event ingestion}
\begin{itemize}
	\item \texttt{EVTPOP rd} \quad (pop next event word if available, else set flag)
	\item \texttt{EVTPEEK rd} \quad (peek without pop)
\end{itemize}

\paragraph{(D) Control outputs}
\begin{itemize}
	\item \texttt{OUTPULSE rs} \quad (write pulse-control word)
	\item \texttt{OUTPHASE rs} \quad (write phase word)
	\item \texttt{FRAMEXOR rs} \quad (Pauli-frame update)
\end{itemize}

\paragraph{(E) Control flow}
\begin{itemize}
	\item \texttt{JMP addr}
	\item \texttt{JZ/JNZ addr} (branch on flags)
\end{itemize}

\subsubsection{Visualization: ``ISA = schedule + event + output''}
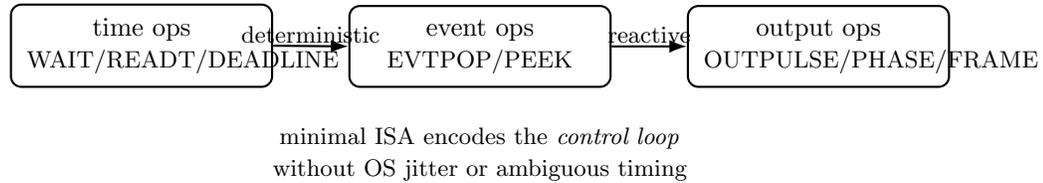
\begin{figure}[t]
	\centering
	\begin{tikzpicture}[
		font=\small,
		box/.style={
			draw, thick, rounded corners,
			align=center,
			minimum height=0.95cm,
			inner xsep=6pt, inner ysep=5pt,
			text width=3.05cm 
		},
		arr/.style={-Latex, thick},
		lab/.style={font=\footnotesize, inner sep=1pt}
		]
		
		\matrix (m) [matrix of nodes,
		nodes={box},
		column sep=10mm, 
		row sep=6mm
		]{
			{time ops\\[-1pt]\footnotesize WAIT/READT/DEADLINE} &
			{event ops\\[-1pt]\footnotesize EVTPOP/PEEK} &
			{output ops\\[-1pt]\footnotesize OUTPULSE/PHASE/FRAME} \\
		};
		
		\draw[arr] (m-1-1.east) -- node[lab, above] {deterministic} (m-1-2.west);
		\draw[arr] (m-1-2.east) -- node[lab, above] {reactive}      (m-1-3.west);
		
		\node[align=center, font=\footnotesize, text width=0.88\textwidth]
		at ($(m.south)+(0,-0.75cm)$)
		{minimal ISA encodes the \emph{control loop}\\
			without OS jitter or ambiguous timing};
		
	\end{tikzpicture}
	\caption{Control ISA categories: time primitives, event primitives, and output primitives.}
	\label{fig:isa-categories}
\end{figure}

\subsubsection{Worked micro-program (conceptual)}
This pseudo-assembly shows a single-shot policy:
\begin{verbatim}
	; PREP
	MOV   R1, 0          ; ones = 0
	MOV   R2, 0          ; shots = 0
	; PULSE1
	OUTPULSE Rpulse1
	WAIT  T1
	; WAIT(tau)
	WAIT  TAU
	; PULSE2
	OUTPULSE Rpulse2
	WAIT  T2
	; MEASURE: pop event when available
	EVTPOP R0
	JZ    ERR            ; if no event, fail-closed (or WAIT bounded)
	; UPDATE
	AND   R3, R0, 1      ; meas_bit
	ADD   R1, R1, R3
	ADD   R2, R2, 1
	JMP   DONE
	ERR:
	; set flag, safe output state
	OUTPULSE Rsafe
	DONE:
\end{verbatim}
Even without a ``real qubit'', this demonstrates deterministic sequencing and event handling.

\subsection{Lab: tiny quantum processor on iCEstick}
This appendix does \emph{not} require a full softcore CPU.
A ``tiny processor'' can be:
\begin{itemize}
	\item microcode ROM (instruction words),
	\item a PC + decoder,
	\item 16 registers,
	\item a simple ALU,
	\item memory-mapped I/O for events and outputs.
\end{itemize}

\subsubsection{What you should be able to demonstrate (hardware-visible)}
\begin{itemize}
	\item deterministic pulse schedule (LEDs show state/pulse windows),
	\item event pop and conditional branch (LED indicates event/no-event),
	\item bounded timeout and fail-closed behavior (ERR LED),
	\item online counter accumulation (LED bank or UART report).
\end{itemize}

\subsubsection{Verification hook (recommended)}
Before optimizing speed, verify correctness:
\begin{itemize}
	\item run the same micro-program in a Python golden model,
	\item compare register traces (state, PC, outputs per cycle),
	\item inject missing/extra events to validate fail-closed logic.
\end{itemize}

\subsection{Integration notes: QEC decoding (Track A)}
A surface-code decoder and a control ISA meet at a clean seam:
\begin{itemize}
	\item \textbf{decoder kernel:} consumes \(\Delta s_t\) (events), produces \texttt{FRAMEXOR mask} or correction word,
	\item \textbf{controller:} schedules when to run kernels, manages FIFOs, enforces deadlines, reports flags.
\end{itemize}

\subsubsection{Two patterns that scale}
\paragraph{(1) ``ISA + kernel coprocessor''}
Keep the ISA core small; attach a streaming kernel unit (e.g.\ UF micro-kernel).
ISA issues:
\[
\texttt{KERNEL\_START},\ \texttt{KERNEL\_STEP},\ \texttt{KERNEL\_READOUT}.
\]

\paragraph{(2) ``Microcoded FSM only''}
For many decoders, you can skip a general ISA and implement a bounded-pass FSM directly.
The ISA concepts still help: time primitives, event primitives, output primitives.

\subsubsection{The key contract}
Whatever you build, keep this contract explicit:
\begin{itemize}
	\item input stream is framed and timestamped,
	\item output updates are atomic (no half-updates),
	\item deadlines produce explicit flags (not silent slowdowns).
\end{itemize}
	
\section{From FPGA Control to Quantum Error Correction}
\label{sec:appendix-fpga-to-qec}

\subsection{Syndrome streams as classical dataflows}
The conceptual jump from ``FPGA control'' to ``quantum error correction (QEC)'' is smaller than it looks:
QEC turns quantum noise into \emph{classical bits that arrive every cycle}. From that point on,
you are doing classical real-time streaming.

\medskip
\noindent\textbf{Dataflow viewpoint.}
Each stabilizer-measurement round produces a vector of bits (or $\pm1$ values). In practice, what you decode is
often a \emph{difference stream}
\[
\Delta s_t := s_t \oplus s_{t-1},
\]
so that measurement faults appear as time-like events and data-qubit faults appear as space-like events.

\subsubsection{What the FPGA actually sees}
The FPGA does \emph{not} see qubits. It sees a stream of records:
\[
(\texttt{round\_id},\ \texttt{check\_id},\ \texttt{bit},\ \texttt{timestamp},\ \texttt{flags}).
\]
Your first engineering task is to define how these records are represented, framed, and validated.

\subsubsection{Framing: one round is a packet}
A robust framing rule is:
\begin{itemize}
	\item a \textbf{Start-of-Frame (SOF)} marker,
	\item a \textbf{round header} (round index $t$, distance $d$, window $W$, etc.),
	\item a \textbf{payload} (packed syndrome bits or sparse defect list),
	\item a \textbf{CRC/parity} to detect corruption.
\end{itemize}
This is where ``quantum'' becomes ``networking'': the decoder is only as trustworthy as its input framing.

\subsubsection{Two payload encodings (dense vs.\ sparse)}
Let $n_c$ be the number of checks per round.
\begin{itemize}
	\item \textbf{Dense payload (bitset):}
	pack $\Delta s_t \in \{0,1\}^{n_c}$ into words of width $W$.
	\[
	\texttt{word}[k] = \sum_{j=0}^{W-1} \Delta s_t[\,kW+j\,]\,2^j.
	\]
	\emph{Pros:} simple, deterministic bandwidth. \emph{Cons:} wastes bandwidth when few defects.
	\item \textbf{Sparse payload (defect list):}
	transmit only indices where $\Delta s_t=1$:
	\[
	\{\, i_1,i_2,\dots,i_m \,\},\quad m \ll n_c \text{ at low error rate}.
	\]
	\emph{Pros:} bandwidth-efficient. \emph{Cons:} variable-length, needs robust length checks.
\end{itemize}

\subsubsection{Visualization: syndrome stream as a pipeline}
\begin{figure}[t]
	\centering
	\begin{tikzpicture}[
		font=\small,
		box/.style={
			draw, thick, rounded corners,
			align=center,
			minimum height=0.95cm,
			inner xsep=6pt, inner ysep=5pt,
			text width=2.85cm
		},
		arr/.style={-Latex, thick},
		lab/.style={font=\footnotesize, inner sep=1pt}
		]
		
		\matrix (top) [matrix of nodes,
		nodes={box},
		column sep=9mm,
		row sep=7mm
		]{
			{framed input\\[-1pt]\footnotesize (SOF/hdr/payload/CRC)} &
			{FIFO buffer\\[-1pt]\footnotesize (occupancy + flags)} &
			{preprocess\\[-1pt]\footnotesize (parse / unpack)} \\
		};
		
		\matrix (bot) [matrix of nodes,
		below=9mm of top,
		nodes={box},
		column sep=9mm
		]{
			{decoder kernel\\[-1pt]\footnotesize (bounded passes)} &
			{output update\\[-1pt]\footnotesize (frame/correction)} \\
		};
		
		\draw[arr] (top-1-1.east) -- node[lab, above]{valid/ready}  (top-1-2.west);
		\draw[arr] (top-1-2.east) -- node[lab, above]{backpressure} (top-1-3.west);
		
		\draw[arr] (top-1-3.south) |- (bot-1-1.west);
		
		\draw[arr] (bot-1-1.east) -- (bot-1-2.west);
		
		\node[align=center, font=\footnotesize, text width=0.90\textwidth]
		at ($(bot.south)+(0,-0.80cm)$)
		{core requirement: \textbf{no silent drops} + \textbf{bounded latency} + \textbf{explicit fail flags}};
		
	\end{tikzpicture}
	\caption{Syndrome decoding is a classical streaming dataflow: parse $\rightarrow$ buffer $\rightarrow$ decode $\rightarrow$ emit correction.}
	\label{fig:qec-dataflow}
\end{figure}

\subsubsection{Timing model: cadence, deadlines, and backlog}
Let the stabilizer cycle period be $P$ clocks (or microseconds), and let the decoder service time per round be $S_t$.
Define backlog $B_t$ in rounds.
A minimal stability condition is:
\[
\mathbb{E}[S_t] < P
\quad\text{and}\quad
\Pr(S_t > P) \text{ is small (p99/p999)}.
\]
In hardware terms:
\begin{itemize}
	\item FIFO depth must absorb bursts: $D \gtrsim \text{(burst size)}$,
	\item occupancy must be monitored,
	\item overflow triggers a \textbf{fail-closed} policy (do not emit ambiguous outputs).
\end{itemize}

\subsection{LUTs, FSMs, and decoder logic}
Once the stream is framed and buffered, the remaining work is \emph{deterministic logic}.
On an FPGA, that logic naturally decomposes into:
\[
\text{LUTs (combinational)} \;+\; \text{FSMs (sequencing)} \;+\; \text{RAM (state)}.
\]

\subsubsection{LUTs: local maps from bits to actions}
A LUT is a truth table. Many ``control'' and ``decoder'' decisions are LUT-shaped:
\begin{itemize}
	\item map a small neighborhood syndrome pattern to a local action,
	\item map flags to fail-safe output modes,
	\item map a check index to lattice coordinates (or adjacency lists).
\end{itemize}

\paragraph{Example LUT: parity-to-defect}
Suppose a check produces a raw bit $b_t$; you decode on differences $\Delta b_t=b_t\oplus b_{t-1}$.
That is a 2-bit input LUT:
\[
(b_{t-1},b_t)\mapsto \Delta b_t.
\]
This is trivial, but it illustrates the principle: many preprocessing steps are small LUTs.

\subsubsection{FSMs: bounded passes are how you get real-time guarantees}
Decoders that guarantee real-time behavior typically enforce:
\begin{itemize}
	\item a fixed number of passes over the lattice (or a window),
	\item bounded work per pass (local updates),
	\item explicit convergence detection (done flag),
	\item explicit timeout (fail flag if not done by pass limit).
\end{itemize}
This fits an FSM perfectly:
\[
\texttt{IDLE} \rightarrow \texttt{LOAD} \rightarrow \texttt{PASS}_1 \rightarrow \cdots \rightarrow \texttt{PASS}_K \rightarrow \texttt{EMIT}.
\]

\subsubsection{RAM: the lattice is state}
Any decoder with memory needs arrays indexed by lattice sites/edges:
\begin{itemize}
	\item defect flags,
	\item cluster labels / parents (Union--Find),
	\item weights / confidences,
	\item visitation marks / timestamps.
\end{itemize}
The key is to choose an address mapping and update schedule that avoids hazards.

\subsubsection{Visualization: LUT vs.\ FSM vs.\ RAM roles}
\begin{figure}[t]
	\centering
	\begin{tikzpicture}[
		font=\small,
		box/.style={
			draw, rounded corners, thick,
			align=center,
			inner xsep=8pt, inner ysep=6pt,
			text width=0.26\linewidth, 
			minimum height=1.0cm
		},
		arr/.style={-Latex, thick},
		node distance=10mm and 12mm
		]
		\node[box] (lut) {LUTs\\[-1pt]\footnotesize local maps / guards};
		\node[box, right=10mm of lut] (fsm) {FSM\\[-1pt]\footnotesize bounded passes / timing};
		\node[box, right=10mm of fsm] (ram) {RAM\\[-1pt]\footnotesize lattice state arrays};
		
		\draw[arr] (lut) -- node[above]{\footnotesize decisions} (fsm);
		\draw[arr] (fsm) -- node[above]{\footnotesize schedules} (ram);
		
		\draw[arr]
		(ram.south) to[out=-70,in=-110,looseness=0.85]
		node[below, font=\footnotesize, yshift=-1mm]{read-modify-write}
		(lut.south);
		
		\node[align=center, font=\footnotesize] at ($(fsm.south)+(0,-9mm)$) {
			bounded-time decoding = FSM discipline + local LUT updates + observable RAM state
		};
	\end{tikzpicture}
	\caption{FPGA decoder logic roles: LUTs implement local rules, FSMs enforce bounded-time structure, RAM stores lattice state.}
	\label{fig:lut-fsm-ram}
\end{figure}
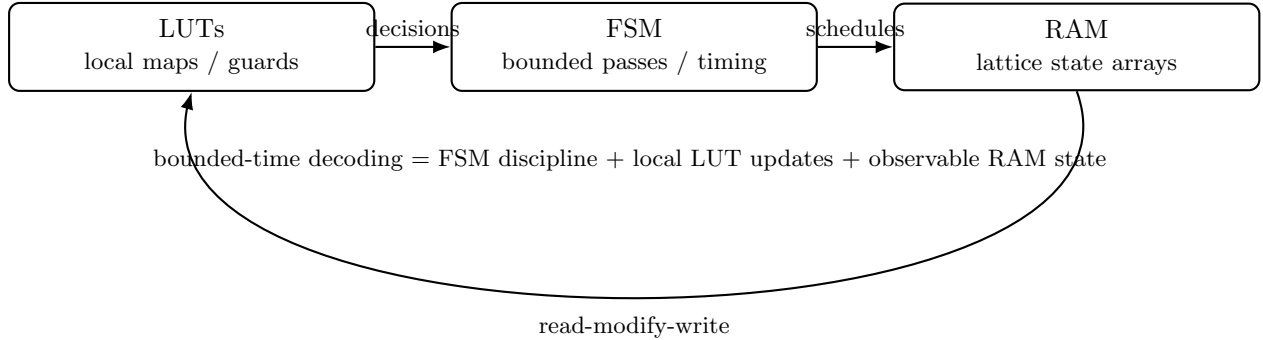

\subsubsection{A concrete ``bounded local rule'' example (decoder-flavored)}
Even without committing to a specific decoder family, you can implement a generic local update:
\begin{itemize}
	\item each site has a 1-bit defect flag $d[x,y]$,
	\item each site has a 2-bit local state $s[x,y]\in\{0,1,2,3\}$,
	\item update rule depends on $d$ in a small neighborhood (e.g.\ von Neumann neighbors).
\end{itemize}
This compiles to:
\[
(s[x,y], d[x,y], d[x\pm1,y], d[x,y\pm1]) \mapsto s'[x,y]
\]
which is a LUT, scheduled across the grid by an FSM pass.

\subsubsection{Fail-closed policy is part of the decoder}
Hardware must specify what happens when:
\begin{itemize}
	\item packet CRC fails,
	\item framing loses sync,
	\item FIFO overflows,
	\item deadline missed,
	\item kernel does not converge within $K$ passes.
\end{itemize}
A safe policy is:
\[
\text{raise flag} \;+\; \text{emit explicit ``invalid'' output} \;+\; \text{resynchronize}.
\]
Never output a silently-corrupted correction.

\subsection{Integration notes}
This section ties together the appendices with the main decoder chapters by making the contracts explicit.

\subsubsection{Contract 1: streaming interface}
Define (and version) the wire format:
\begin{itemize}
	\item SOF, header fields, payload encoding, CRC,
	\item how timestamps are represented,
	\item how dropped/invalid rounds are signaled.
\end{itemize}

\subsubsection{Contract 2: service guarantees}
Document the decoder as a real-time service:
\begin{itemize}
	\item worst-case passes $K$,
	\item worst-case cycles per pass,
	\item output deadline $D$,
	\item what happens on timeout (flag + safe mode).
\end{itemize}

\subsubsection{Contract 3: observability}
Make the system debuggable:
\begin{itemize}
	\item occupancy counters, overflow/underflow flags,
	\item latency counters/histograms (mean, p99),
	\item corruption counters (CRC fail, resync count),
	\item correctness indicators in simulation (logical-failure flag).
\end{itemize}

\subsubsection{What this enables later}
Once these contracts exist, later chapters can focus on the \emph{algorithmic kernel}
(MWPM, Union--Find, local rules, spacetime graphs) without re-litigating I/O and timing.
In other words:
\[
\text{stable stream contracts} \Rightarrow \text{swappable decoder kernels}.
\]
	
\section{From FPGA Control to Quantum Cryptography}
\label{sec:appendix-fpga-to-qcrypto}

\subsection{FSMs for BB84 and E91 protocol flow}
Quantum cryptography protocols look ``high-level'' on paper, but on hardware they reduce to:
\[
\text{(1) a time-ordered exchange of messages} \;+\; \text{(2) local updates of counters and buffers}.
\]
That is exactly the domain of \emph{finite state machines (FSMs)}.

\medskip
\noindent\textbf{Key hardware viewpoint.}
BB84/E91 have two layers:
\begin{itemize}
	\item \textbf{Quantum layer (device-facing):} prepare/measure (or receive measurement outcomes).
	\item \textbf{Classical layer (protocol-facing):} sifting, QBER estimation, error correction, privacy amplification, and abort rules.
\end{itemize}
In these appendices we focus on the \emph{classical layer} as an FPGA streaming controller, assuming the quantum layer produces
timestamped events (or the FPGA drives it via triggers).

\subsubsection{BB84 as an FSM (conceptual flow)}
BB84 can be implemented as a deterministic controller over a stream of per-shot records.
For each shot $i$, define a record:
\[
r_i = (\;t_i,\ a_i,\ x_i,\ b_i,\ y_i,\ \texttt{flags}_i\;),
\]
where
\begin{itemize}
	\item $a_i \in \{Z,X\}$ is Alice's basis choice,
	\item $x_i \in \{0,1\}$ is Alice's bit choice,
	\item $b_i \in \{Z,X\}$ is Bob's basis choice,
	\item $y_i \in \{0,1\}$ is Bob's measurement result,
	\item $t_i$ is a timestamp or shot counter (synchronization),
	\item \texttt{flags} include validity bits (lost photon, detector saturation, etc.).
\end{itemize}
In a device, these values are not all produced in one place, but as a system model this record is what the FSM consumes.

\paragraph{Core BB84 phases (hardware-friendly).}
A minimal set of phases is:
\[
\texttt{INIT} \rightarrow \texttt{ACQUIRE} \rightarrow \texttt{SIFT} \rightarrow
\texttt{SAMPLE\_QBER} \rightarrow \texttt{KEY\_ACCUM} \rightarrow \texttt{DONE/ABORT}.
\]
\begin{itemize}
	\item \texttt{ACQUIRE:} ingest $r_i$ records into FIFO/buffers (streaming).
	\item \texttt{SIFT:} keep indices where $a_i=b_i$; discard otherwise.
	\item \texttt{SAMPLE\_QBER:} choose a test subset of sifted indices (deterministic PRNG with a shared seed, or host-provided mask).
	\item \texttt{KEY\_ACCUM:} write remaining sifted bits to a key buffer (or output stream).
	\item \texttt{ABORT:} if QBER too high or synchronization fails, fail-closed.
\end{itemize}

\subsubsection{E91 as an FSM (conceptual flow)}
E91 uses entanglement and correlation tests. Hardware-wise it is still a stream of shot records.
A typical record is:
\[
r_i = (\;t_i,\ a_i,\ b_i,\ y_i^{(A)},\ y_i^{(B)},\ \texttt{flags}_i\;),
\]
where $a_i,b_i$ are measurement settings (bases/angles) and $y_i^{(A)},y_i^{(B)}\in\{0,1\}$ are outcomes.

\paragraph{Core E91 phases.}
\[
\texttt{INIT} \rightarrow \texttt{ACQUIRE} \rightarrow \texttt{SIFT} \rightarrow
\texttt{EST\_CHSH} \rightarrow \texttt{KEY\_ACCUM} \rightarrow \texttt{DONE/ABORT}.
\]
Here \texttt{EST\_CHSH} computes a Bell/CHSH statistic on a subset (or continuously) to detect tampering
and quantify nonclassical correlations.

\subsubsection{Visualization: protocol FSM skeleton}
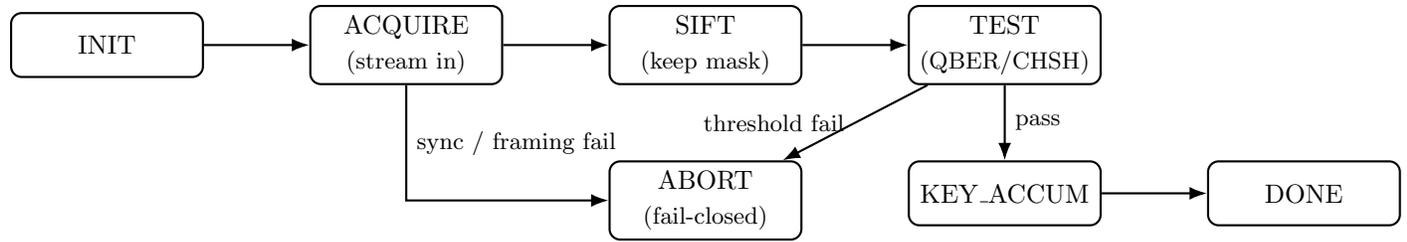
\begin{figure}[t]
	\centering
	\begin{tikzpicture}[
		font=\small,
		st/.style={
			draw, rounded corners, thick,
			minimum width=2.55cm, minimum height=0.85cm,
			align=center
		},
		arr/.style={-Latex, thick},
		node distance=9mm and 14mm
		]
		\node[st] (init) {INIT};
		\node[st, right=14mm of init] (acq) {ACQUIRE\\\footnotesize (stream in)};
		\node[st, right=14mm of acq] (sift) {SIFT\\\footnotesize (keep mask)};
		\node[st, right=14mm of sift] (test) {TEST\\\footnotesize (QBER/CHSH)};
		
		\node[st, below=10mm of test] (key) {KEY\_ACCUM};
		\node[st, right=14mm of key] (done) {DONE};
		
		\node[st, below=10mm of sift] (abort) {ABORT\\\footnotesize (fail-closed)};
		
		\draw[arr] (init) -- (acq);
		\draw[arr] (acq) -- (sift);
		\draw[arr] (sift) -- (test);
		
		\draw[arr] (test) -- node[right]{\footnotesize pass} (key);
		\draw[arr] (key) -- (done);
		
		\draw[arr] (test) -- node[left]{\footnotesize threshold fail} (abort);
		
		\draw[arr] (acq.south) |- node[pos=0.25, right]{\footnotesize sync / framing fail} (abort.west);
		
	\end{tikzpicture}
	\caption{Protocol control as an FSM. ``TEST'' is QBER for BB84 or CHSH/Bell statistics for E91. Any integrity failure triggers ABORT.}
	\label{fig:qkd-fsm}
\end{figure}

\subsection{Streaming QBER estimation and abort logic}
The QBER (quantum bit error rate) is the main online security/health indicator:
it determines whether you continue, slow down, or abort.

\subsubsection{Streaming counters (no full storage needed)}
Let $\mathcal{I}_{\text{sift}}$ be the set of sifted indices (where bases match).
Define streaming counters:
\[
N \;:=\; \#\{i\in \mathcal{I}_{\text{sift}} \text{ observed}\},\qquad
E \;:=\; \#\{i\in \mathcal{I}_{\text{sift}} : x_i \neq y_i\}.
\]
Then the online estimator is
\[
\widehat{\mathrm{QBER}} \;:=\; \frac{E}{N},
\]
updated per accepted sifted event. This is FPGA-friendly: two counters and a comparator.

\paragraph{Where do ``errors'' come from in BB84?}
On the test subset, Alice reveals $x_i$ and Bob reveals $y_i$ (or Bob reveals $y_i$ and Alice compares).
Then the FPGA can count mismatches.

\subsubsection{Abort logic (fail-closed policy)}
A minimal policy is:
\[
\text{if }\widehat{\mathrm{QBER}} > \tau \text{ and } N \ge N_{\min},\ \text{then ABORT}.
\]
Here $\tau$ is a configured threshold and $N_{\min}$ avoids early noise instability.

\paragraph{Practical implementation detail: fixed-point comparisons.}
Avoid division in the critical path. Compare cross-multiplied integers:
\[
E > \tau N
\quad\Longleftrightarrow\quad
E \cdot 2^q > \tau_{\text{fp}} \cdot N,
\]
where $\tau_{\text{fp}}:=\lfloor \tau 2^q\rfloor$ is a fixed-point threshold.

\subsubsection{Visualization: streaming QBER estimator}
\begin{figure}[t]
	\centering
	\resizebox{\textwidth}{!}{%
		\begin{tikzpicture}[
			font=\small,
			box/.style={draw, rounded corners, align=center, minimum height=9.5mm},
			arr/.style={-Latex, thick},
			node distance=10mm and 16mm
			]
			\node[box, minimum width=48mm] (in) {sifted test events\\\footnotesize $(x_i,y_i)$ + valid};
			
			\node[box, right=16mm of in, minimum width=62mm] (cnt) {counters\\\footnotesize
				$N\!\leftarrow\!N{+}1$\\
				$E\!\leftarrow\!E{+}[x_i\neq y_i]$
			};
			
			\node[box, right=16mm of cnt, minimum width=64mm] (cmp) {threshold check\\\footnotesize
				$E\cdot 2^q > \tau_{\text{fp}} N$
			};
			
			\node[box, right=16mm of cmp, minimum width=52mm] (ab) {ABORT flag\\\footnotesize + reason code};
			
			\draw[arr] (in) -- (cnt);
			\draw[arr] (cnt) -- (cmp);
			\draw[arr] (cmp) -- node[above]{\footnotesize fail} (ab);
			
			\node[align=center, font=\footnotesize, text width=120mm]
			at ($(cmp.south)+(0,-10mm)$)
			{no division in critical path;\quad comparisons are integer-safe};
		\end{tikzpicture}%
	}
	\caption{Streaming QBER estimation: update two counters and compare using fixed-point arithmetic.}
	\label{fig:qber-stream}
\end{figure}
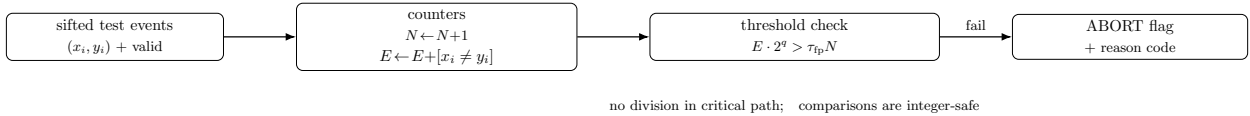

\subsubsection{E91 note: streaming CHSH instead of QBER}
For E91, the ``abort statistic'' can be a Bell/CHSH estimate.
You can stream counts for each setting pair $(a,b)$:
\[
N_{ab}=\#\{i: (a_i,b_i)=(a,b)\},\qquad
C_{ab}=\sum_{i:(a_i,b_i)=(a,b)} (-1)^{y_i^{(A)}\oplus y_i^{(B)}}.
\]
Then an empirical correlation is $\widehat{E}(a,b)=C_{ab}/N_{ab}$ and the CHSH combination is
\[
\widehat{S}=\widehat{E}(a,b)+\widehat{E}(a,b')+\widehat{E}(a',b)-\widehat{E}(a',b').
\]
Again: avoid division in the fast path; accumulate numerators/denominators and finalize on a slower path
(or periodically).

\subsection{Integration notes}
This subsection states the minimal contracts that let you plug the protocol FSM into your
existing ``streaming infrastructure'' from earlier appendices (UART/FIFO/framing/timestamps).

\subsubsection{Input contract (records + validity)}
\begin{itemize}
	\item Every record has a \textbf{round/shot index} and a \textbf{valid} bit.
	\item Any decoding/protocol step must ignore invalid records and increment an error counter.
	\item Basis/setting fields are explicitly encoded (e.g.\ 1 bit for $\{Z,X\}$).
\end{itemize}

\subsubsection{Output contract (flags before data)}
\begin{itemize}
	\item Outputs carry a status word first: \texttt{OK}, \texttt{ABORT}, \texttt{RESYNC}, \texttt{INVALID}.
	\item If \texttt{ABORT}, also output a \textbf{reason code} (QBER high, CHSH low, CRC fail, overflow, timeout).
	\item Key material is output only under \texttt{OK} and only after the test window closes.
\end{itemize}

\subsubsection{Observability contract (what you must log)}
At minimum export (or make readable on-chip):
\begin{itemize}
	\item counters: $N$, $E$ (and $N_{ab},C_{ab}$ for E91),
	\item computed summary: $\widehat{\mathrm{QBER}}$ (or $\widehat{S}$),
	\item integrity counters: CRC fails, resync count, FIFO overflow count,
	\item timing: per-stage latency (mean, p99) measured in cycles.
\end{itemize}

\subsubsection{Safety note (fail-closed is non-negotiable)}
A QKD controller that silently continues on corrupted framing, overflow, or misalignment is worse than useless:
it can output keys with unknown security properties. Therefore:
\[
\textbf{on integrity failure: ABORT, flag, and resynchronize (no silent degradation).}
\]
	
\section{Appendix Summary and Learning Roadmap}
\label{sec:appendix-roadmap}

\subsection{A minimal roadmap to build the decoder prototype}
This roadmap is written as a \emph{capability ladder}: each step produces a concrete capability you can
demonstrate and test. The emphasis is on \textbf{observability}, \textbf{bounded latency}, and
\textbf{fail-closed behavior}.

\medskip
\noindent\textbf{Milestone 0 (Day 0): Toolchain + deterministic I/O loop.}
\begin{itemize}
	\item \textbf{Goal:} compile/program reliably and run a deterministic loop with a known waveform on LEDs.
	\item \textbf{Deliverable:} a counter-driven LED pattern with a known period in cycles.
	\item \textbf{Acceptance check:} measured period matches the expected clock divider within 1 cycle.
\end{itemize}

\medskip
\noindent\textbf{Milestone 1: Streaming contract everywhere (valid/ready).}
\begin{itemize}
	\item \textbf{Goal:} every module obeys the same handshake rule.
	\item \textbf{Deliverable:} a producer $\rightarrow$ consumer pipeline that never drops data under backpressure.
	\item \textbf{Acceptance check:} assertions proving:
	(i) data stable while \texttt{valid=1 \& \texttt{ready=0}}, and
	(ii) transfer iff \texttt{valid \& ready} at clock edge.
\end{itemize}

\medskip
\noindent\textbf{Milestone 2: FIFO + framing (no drift).}
\begin{itemize}
	\item \textbf{Goal:} turn a byte stream into framed packets and never silently mis-parse.
	\item \textbf{Deliverable:} framing FSM (SOF/len/type/payload/CRC) feeding a FIFO with occupancy counters.
	\item \textbf{Acceptance check:} inject bit-flips/bursts; system asserts \texttt{RESYNC/ABORT} and recovers on next SOF.
\end{itemize}

\medskip
\noindent\textbf{Milestone 3: Timestamped latency truth (mean + p99).}
\begin{itemize}
	\item \textbf{Goal:} measure end-to-end latency \emph{on device} in cycles.
	\item \textbf{Deliverable:} arrival time $A_t$, finish time $F_t$, latency $\Delta_t=F_t-A_t$ histogram summary.
	\item \textbf{Acceptance check:} reproduce a synthetic burst and observe p99 spike in exported counters.
\end{itemize}

\medskip
\noindent\textbf{Milestone 4: Syndrome stream mock + invariants.}
\begin{itemize}
	\item \textbf{Goal:} run a deterministic syndrome/event stream through the pipeline.
	\item \textbf{Deliverable:} seeded generator for $\Delta s_t$ + monitors checking invariants
	(framing, sequence monotonicity, bounded FIFO occupancy).
	\item \textbf{Acceptance check:} fault-injection produces reason-coded failures; no silent output corruption.
\end{itemize}

\medskip
\noindent\textbf{Milestone 5: Bounded-pass decode kernel (real-time guarantee shape).}
\begin{itemize}
	\item \textbf{Goal:} implement a decode ``micro-kernel'' that finishes in a bounded number of passes.
	\item \textbf{Deliverable:} a kernel that consumes a fixed window and outputs a compact decision record
	(e.g.\ parity-frame update or correction summary) plus a \texttt{DONE} flag.
	\item \textbf{Acceptance check:} worst-case pass count is bounded; p99 bounded under adversarial bursts (within design envelope).
\end{itemize}

\medskip
\noindent\textbf{Milestone 6: Golden-model conformance loop (correctness before speed).}
\begin{itemize}
	\item \textbf{Goal:} every RTL decision matches a golden Python model under a published contract.
	\item \textbf{Deliverable:} conformance spec + regression suite + coverage on fault modes (drop/flip/burst).
	\item \textbf{Acceptance check:} bit-exact match on outputs (or match within declared nondeterminism, e.g.\ tie-break rules).
\end{itemize}

\medskip
\noindent\textbf{What you should be able to say after this roadmap.}
\begin{quote}\small
	``We have a streaming, observable, fail-closed decode appliance. We can measure p99 on-device,
	prove bounded passes for the kernel, and demonstrate conformance to a golden model under fault injection.''
\end{quote}

\subsubsection{Visualization: capability ladder}
\begin{figure}[t]
	\centering
	\begin{tikzpicture}[
		font=\small,
		box/.style={draw, rounded corners, minimum width=12.6cm, minimum height=0.85cm, align=left},
		arr/.style={-Latex, thick},
		node distance=0.35cm and 0.0cm
		]
		\node[box] (m0) {\textbf{M0} Toolchain + deterministic LED timing (cycle truth)};
		\node[box, below=of m0] (m1) {\textbf{M1} Valid/Ready contract everywhere (composable streaming)};
		\node[box, below=of m1] (m2) {\textbf{M2} Framing FSM + FIFO + CRC (no drift, no silent drop)};
		\node[box, below=of m2] (m3) {\textbf{M3} Timestamp + latency histograms (mean, p99)};
		\node[box, below=of m3] (m4) {\textbf{M4} Seeded syndrome/event generator + monitors (invariants)};
		\node[box, below=of m4] (m5) {\textbf{M5} Bounded-pass decode kernel (real-time guarantee shape)};
		\node[box, below=of m5] (m6) {\textbf{M6} Golden-model conformance + regression + fault injection};
		
		\foreach \a/\b in {m0/m1,m1/m2,m2/m3,m3/m4,m4/m5,m5/m6}{
			\draw[arr] (\a.south west)++(0.2,0) -- (\b.north west)++(0.2,0);
		}
	\end{tikzpicture}
	\caption{Minimal decoder-prototype roadmap as a capability ladder. Each rung adds a property you can test and defend.}
	\label{fig:roadmap-ladder}
\end{figure}
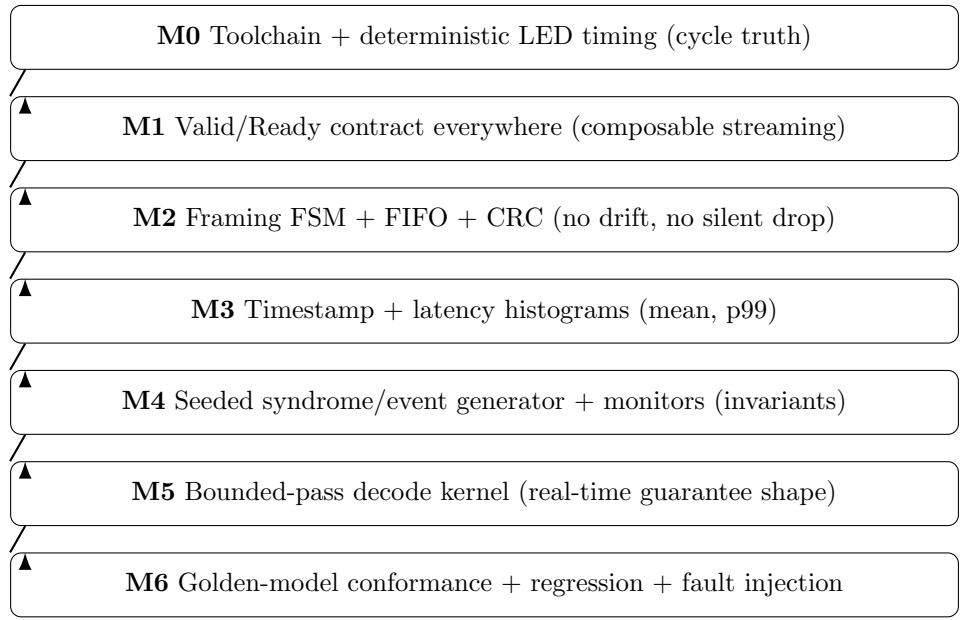

\subsection{A roadmap to extend from surface code to QLDPC}
The extension from surface codes to QLDPC is less about ``new math objects'' and more about
\textbf{new graph geometry} and \textbf{new decoder regimes}. The path below keeps the same streaming discipline.

\medskip
\noindent\textbf{Step A: Replace lattice locality with Tanner-graph locality.}
\begin{itemize}
	\item Surface code: checks and qubits live on a 2D grid; neighbors are geometric.
	\item QLDPC/CSS: checks and qubits live on a sparse bipartite Tanner graph; neighbors are graph-adjacent.
\end{itemize}
\noindent\textbf{Deliverable:} a single unified ``neighbor iterator'' API:
\[
\texttt{nbrs(node)} \rightarrow \{\texttt{node}_1,\dots,\texttt{node}_k\},
\]
backed either by (i) implicit grid rules or (ii) explicit adjacency lists.

\medskip
\noindent\textbf{Step B: Generalize the syndrome representation.}
\begin{itemize}
	\item Surface: syndrome differences on spacetime lattice vertices.
	\item QLDPC: syndrome bits correspond to parity checks $H_X e_Z^\top$ and $H_Z e_X^\top$.
\end{itemize}
\noindent\textbf{Deliverable:} a compact syndrome packet format that supports variable degree and multiple check families.

\medskip
\noindent\textbf{Step C: Decoder family selection map.}
\begin{itemize}
	\item \textbf{Peeling / BP-like:} natural on sparse graphs; needs damping/iterations.
	\item \textbf{UF-like clustering:} possible but requires graph-aware growth/merge rules.
	\item \textbf{Hybrid:} local iterative + occasional global repair.
\end{itemize}
\noindent\textbf{Constraint:} enforce \emph{bounded work per window} (fixed iteration budget, fixed passes).

\medskip
\noindent\textbf{Step D: Verification upgrades (graph is adversary).}
QLDPC graphs can create corner cases (short cycles, trapping sets).
\begin{itemize}
	\item Expand property-based tests to include adversarial small subgraphs.
	\item Add regression fixtures keyed by graph ID / adjacency hash.
\end{itemize}

\medskip
\noindent\textbf{Step E: Performance/latency contract stays identical.}
Keep the same exported metrics: occupancy, overflow, pass counts, mean/p99 latency, fail-closed reasons.
Only the ``neighbor geometry'' changes.

\subsubsection{Visualization: surface vs.\ QLDPC geometry}
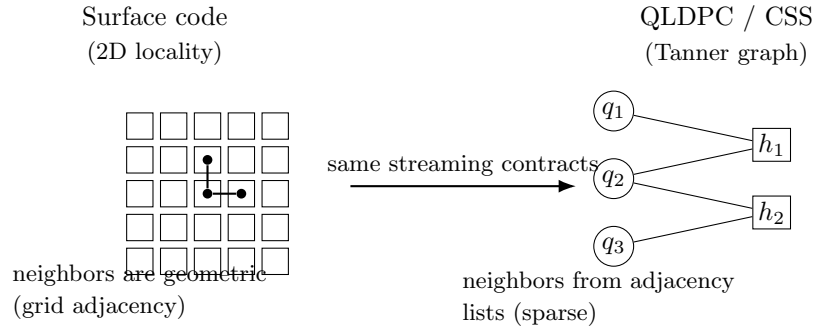
\begin{figure}[t]
	\centering
	\begin{tikzpicture}[
		font=\small,
		cell/.style={draw, minimum width=0.35cm, minimum height=0.35cm},
		dot/.style={circle, fill, inner sep=1.3pt},
		v/.style={circle, draw, inner sep=1.5pt},
		c/.style={rectangle, draw, inner sep=2.0pt},
		arr/.style={-Latex, thick}
		]
		\node[align=center] at (0,3.0) {Surface code\\\footnotesize (2D locality)};
		\begin{scope}[shift={(-0.2,0)}]
			\foreach \x in {0,...,4} {
				\foreach \y in {0,...,4} {
					\node[cell] at (\x*0.45,\y*0.45) {};
				}
			}
			\node[dot] (g0) at (0.9,0.9) {};
			\node[dot] (g1) at (1.35,0.9) {};
			\node[dot] (g2) at (0.9,1.35) {};
			\draw[thick] (g0) -- (g1);
			\draw[thick] (g0) -- (g2);
			\node[align=left, font=\footnotesize] at (0.0,-0.4) {neighbors are geometric\\(grid adjacency)};
		\end{scope}
		
		\node[align=center] at (7.6,3.0) {QLDPC / CSS\\\footnotesize (Tanner graph)};
		\begin{scope}[shift={(6.1,0.2)}]
			\node[v] (q1) at (0,1.8) {$q_1$};
			\node[v] (q2) at (0,0.9) {$q_2$};
			\node[v] (q3) at (0,0.0) {$q_3$};
			
			\node[c] (h1) at (2.1,1.35) {$h_1$};
			\node[c] (h2) at (2.1,0.45) {$h_2$};
			
			\draw (q1) -- (h1);
			\draw (q2) -- (h1);
			\draw (q2) -- (h2);
			\draw (q3) -- (h2);
			
			\node[align=left, font=\footnotesize] at (-0.2,-0.7) {neighbors from adjacency\\lists (sparse)};
		\end{scope}
		
		\draw[arr] (2.6,1.0) -- node[above, font=\footnotesize] {same streaming contracts} (5.6,1.0);
	\end{tikzpicture}
	\caption{Extending to QLDPC: replace 2D geometric locality with Tanner-graph locality, but keep the same streaming/latency/verification discipline.}
	\label{fig:surface-vs-qldpc}
\end{figure}

\subsection{A roadmap to integrate with a control stack}
Integration is a \emph{systems contract problem}. The decoder must behave like infrastructure:
predictable, observable, versioned, and safe under faults.

\subsubsection{Step 1: Define the streaming interface contract}
\begin{itemize}
	\item \textbf{Framing:} packet boundaries and resynchronization strategy.
	\item \textbf{Timestamps:} what they mean (shot index, cycle index, wall-clock sync).
	\item \textbf{Backpressure:} what happens when the consumer is slow (ready/valid, overflow flag).
	\item \textbf{Fail-closed policy:} which errors trigger ABORT vs.\ RESYNC.
\end{itemize}

\subsubsection{Step 2: Decide the ``decision output'' semantics}
Two common output semantics:
\begin{itemize}
	\item \textbf{Pauli-frame update:} output a compact frame delta to be tracked by the host/controller.
	\item \textbf{Explicit correction action:} output a command that directly drives a classical control update.
\end{itemize}
\noindent\textbf{Rule:} always output status+reason codes before any correction data.

\subsubsection{Step 3: Observability and telemetry as first-class outputs}
Export minimal on-device summaries at a fixed cadence:
\begin{itemize}
	\item latency (mean/p99), backlog/occupancy,
	\item integrity counters (CRC fails, resync, overflow),
	\item correctness proxies (in simulation: logical-failure flag; in lab: consistency checks).
\end{itemize}

\subsubsection{Step 4: A staged deployment plan}
\begin{itemize}
	\item \textbf{Stage A:} offline playback (recorded stream $\rightarrow$ decoder).
	\item \textbf{Stage B:} shadow mode (decoder runs in parallel; outputs logged but not applied).
	\item \textbf{Stage C:} closed loop (decoder output affects control decisions).
\end{itemize}
Only enter Stage C when Stage B shows stable p99 and no integrity regressions.

\subsubsection{Visualization: control-stack integration stages}


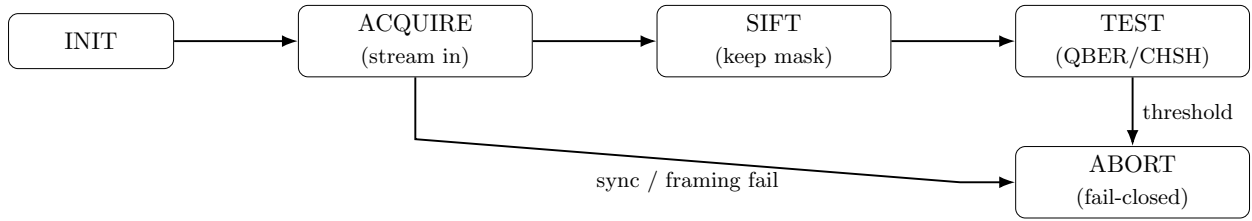
\begin{figure}[t]
	\centering
	\resizebox{\textwidth}{!}{%
		\begin{tikzpicture}[
			font=\small,
			box/.style={draw, rounded corners, align=center, minimum width=24mm, minimum height=8mm},
			bigbox/.style={draw, rounded corners, align=center, minimum width=34mm, minimum height=9mm},
			arr/.style={-Latex, thick},
			lab/.style={font=\footnotesize},
			node distance=10mm and 14mm
			]
			\node[box] (init) {INIT};
			\node[bigbox, right=18mm of init] (acq) {ACQUIRE\\{\footnotesize (stream in)}};
			\node[bigbox, right=18mm of acq]  (sift) {SIFT\\{\footnotesize (keep mask)}};
			\node[bigbox, right=18mm of sift] (test) {TEST\\{\footnotesize (QBER/CHSH)}};
			\node[bigbox, below=10mm of test] (abort) {ABORT\\{\footnotesize (fail-closed)}};
			
			\draw[arr] (init) -- (acq);
			\draw[arr] (acq) -- (sift);
			\draw[arr] (sift) -- (test);
			\draw[arr] (test) -- (abort) node[midway, right, lab] {threshold};
			
			\coordinate (acq_drop) at ($(acq.south)+(0,-9mm)$);
			\coordinate (abort_in) at ($(abort.west)+(-8mm,0)$);
			
			\draw[arr]
			(acq.south) -- (acq_drop) -- node[below, lab] {sync / framing fail} (abort_in) -- (abort.west);
			
		\end{tikzpicture}%
	}
	\caption{Protocol control as an FSM. ``TEST'' is QBER for BB84 or CHSH/Bell statistics for E91.
		Any integrity failure triggers ABORT (fail-closed).}
	\label{fig:qkd-fsm}
\end{figure}
	
	\section{Acknowledgements}
	\addcontentsline{toc}{section}{Acknowledgements}
	
	The author would like to express sincere gratitude to Dr.~Jeongwoo Jae, Hyun Lee, and Jongwon Yoon,
	whose guidance and discussions made it possible to learn quantum computing through the
	\emph{Enjoying Math} YouTube channel study series. Their explanations, questions, and collaborative
	spirit were essential in shaping both the intuition and the technical depth of these notes.
	
	The author is also grateful to Wonje Lee for many valuable discussions on FPGA architectures,
	hardware constraints, and implementation perspectives. These conversations played a crucial role
	in connecting abstract quantum algorithms to concrete, real-time classical hardware considerations.
	
	

\end{document}